\definecolor{green}{rgb}{0,0.8,0} % Redefines the color green.
\newtheorem{theorem}{Theorem}[section]
\newtheorem{corollary}[theorem]{Corollary}
\newtheorem{lemma}[theorem]{Lemma}
\newtheorem{proposition}[theorem]{Proposition}
\theoremstyle{definition}
\newtheorem{definition}[theorem]{Definition}
\newtheorem{example}[theorem]{Example}
\theoremstyle{remark}
\newtheorem{remark}[theorem]{Remark}
\newtheorem{conjecture}[theorem]{Conjecture}
\numberwithin{equation}{section}
\newcommand{\nrm}[1]{\Vert#1\Vert}
\newcommand{\abs}[1]{\vert#1\vert}
\newcommand{\brk}[1]{\langle#1\rangle}
\newcommand{\set}[1]{\{#1\}}
\newcommand{\ep}{\epsilon}
\def\beaa{\begin{eqnarray*}}
\def\eeaa{\end{eqnarray*}}
\def\bea{\begin{eqnarray}}
\def\eea{\end{eqnarray}}
\def\be{\begin{equation}}
\def\ee{\end{equation}}
\newcommand{\ud}{d}
\newcommand{\rd}{\partial}
\newcommand{\bb}{\Big}
\newcommand{\0}{\emptyset}
\newcommand{\alp}{\alpha}
\newcommand{\bt}{\beta}
\newcommand{\gmm}{\gamma}
\newcommand{\Gmm}{\Gamma}
\newcommand{\eps}{\epsilon}
\newcommand{\veps}{\varepsilon}
\newcommand{\kpp}{\kappa}
\newcommand{\lmb}{\lambda}
\newcommand{\sgm}{\sigma}
\newcommand{\Sgm}{\Sigma}
\newcommand{\Tht}{\Theta}
\newcommand{\omg}{\omega}
\newcommand{\om}{\omega}
\newcommand{\Omg}{\Omega}
\newcommand{\bfe}{{\bf e}}
\newcommand{\bbN}{\mathbb N}
\newcommand{\bbR}{\mathbb R}
\newcommand{\bbS}{\mathbb S}
\newcommand{\calA}{\mathcal A}
\newcommand{\calC}{\mathcal C}
\newcommand{\calD}{\mathcal D}
\newcommand{\calG}{\mathcal G}
\newcommand{\calH}{\mathcal H}
\newcommand{\calI}{\mathcal I}
\newcommand{\calL}{\mathcal L}
\newcommand{\calM}{\mathcal M}
\newcommand{\calQ}{\mathcal Q}
\newcommand{\calR}{\mathcal R}
\newcommand{\calS}{\mathcal S}
\newcommand{\calU}{\mathcal U}
\newcommand{\uC}{\underline{C}}
\newcommand{\dur}{\nu}
\newcommand{\dvr}{\lmb}
\newcommand{\f}{\frac}
\newcommand{\de}{\delta}
\newcommand{\nab}{\nabla}
\newcommand{\CH}{\calC \calH^{+}}		% Cauchy Horizon
\newcommand{\EH}{\calH^{+}}
\newcommand{\pfstep}[1]{\vspace{.5em} {\it \noindent #1.}}
\newcommand{\e}{\bfe}							% charge
\newcommand{\NI}{\calI^{+}}						% null infinity
\newcommand{\phibg}{\overline{\phi}}				% background $\phi$
\newcommand{\rbg}{\overline{r}}					% background $r$
\newcommand{\ebg}{\overline{\e}}					% background  $\e$
\newcommand{\fbg}{\overline{f}}
\newcommand{\hbg}{\overline{h}}
\newcommand{\ellbg}{\overline{\ell}}
\newcommand{\dphibg}{\overline{\dot{\phi}}}
\newcommand{\ub}{\underline{u}}
\newcommand{\Int}{\mathscr{B}}
\newcommand{\Ext}{\mathscr{E}}
\newcommand{\Abs}[1]{\left\vert #1 \right\vert}
\begin{document}

\title[]{Strong cosmic censorship in spherical symmetry for two-ended asymptotically flat initial data I. The interior of the black hole region}%: Title of the article
\author{Jonathan Luk}
\address{Department of Mathematics, Stanford University, Palo Alto, CA, USA}
\email{jluk@stanford.edu}

\author{Sung-Jin Oh}
\address{Korea Institute for Advanced Study, Seoul, Korea}
\email{sjoh@kias.re.kr}

%\thanks{}%
%\subjclass{}%
%\keywords{}%

%\date{\today}%
%\dedicatory{}%
%\commby{}%
% ----------------------------------------------------------------
\begin{abstract}
This is the first and main paper of a two-part series, in which we prove the $C^{2}$-formulation of the strong cosmic censorship conjecture for the Einstein--Maxwell--(real)--scalar--field system in spherical symmetry for two-ended asymptotically flat data. For this model, it is known through the works of Dafermos and Dafermos--Rodnianski that the maximal globally hyperbolic future development of any admissible two-ended asymptotically flat Cauchy initial data set possesses a non-empty Cauchy horizon, across which the spacetime is $C^{0}$-future-extendible (in particular, the $C^{0}$-formulation of the strong cosmic censorship conjecture is false). Nevertheless, the main conclusion of the present series of papers is that for a generic (in the sense of being open and dense relative to appropriate topologies) class of such data, the spacetime is future-inextendible with a Lorentzian metric of higher regularity (specifically, $C^{2}$).

In this paper, we prove that the solution is $C^{2}$-future-inextendible under the condition that the scalar field obeys an $L^{2}$-averaged polynomial lower bound along each of the event horizons.
This, in particular, improves upon a previous result of Dafermos, which required instead a pointwise lower bound. 
Key to the proof are appropriate stability and instability results in the interior of the black hole region, whose proofs are in turn based on ideas from the work of Dafermos--Luk on the stability of the Kerr Cauchy horizon (without symmetry) and from our previous paper on linear instability of Reissner--Nordstr\"om Cauchy horizon. 
In the second paper of the series \cite{LO.exterior}, which concerns analysis in the exterior of the black hole region, we show that the $L^2$-averaged polynomial lower bound needed for the instability result indeed holds for a generic class of admissible two-ended asymptotically flat Cauchy initial data. 
\end{abstract}

\maketitle

\tableofcontents

\section{Introduction}
This is the first of a series of two papers in which we prove the $C^2$-formulation of the strong cosmic censorship conjecture for the Einstein--Maxwell--(real)--scalar--field system in spherical symmetry for two-ended asymptotically flat initial data on $\mathbb R\times \mathbb S^2$. A solution to the Einstein--Maxwell--(real)--scalar--field system consists of $(\mathcal M,g,\phi,F)$, where $\mathcal M$ is a $4$-dimensional manifold, $g$ is a Lorentzian metric on $\mathcal M$, $\phi$ is a real-valued function on $\mathcal M$ and $F$ is a $2$-form on $\mathcal M$. The system of equations is given by
\begin{equation}\label{EMSFS}
\begin{cases}
Ric_{\mu\nu}-\f12 g_{\mu\nu} R=2(T^{(sf)}_{\mu\nu}+T^{(em)}_{\mu\nu}),\\
T^{(sf)}_{\mu\nu}=\rd_\mu\phi\rd_\nu\phi-\f 12 g_{\mu\nu} (g^{-1})^{\alp\beta}\rd_\alp\phi\rd_{\beta}\phi,\\
T^{(em)}_{\mu\nu}=(g^{-1})^{\alp\bt}F_{\mu\alp}F_{\nu\bt}-\f 14 g_{\mu\nu}(g^{-1})^{\alp\bt}(g^{-1})^{\gamma\sigma}F_{\alp\gamma}F_{\bt\sigma},
\end{cases}
\end{equation}
where $\phi$ and $F$ satisfy
$$\Box_g\phi=0,\,\quad dF=0,\,\quad (g^{-1})^{\alpha\mu}\nab_\alpha F_{\mu\nu}=0.$$
Here, $\Box_g$ and $\nab$ respectively denote the Laplace--Beltrami operator and the Levi--Civita connection associated to the metric $g$. 

An explicit spherically symmetric solution to this system with a vanishing scalar field is the Reissner--Nordstr\"om spacetime, whose metric is given in local coordinates by\footnote{We will use boldface ${\bf e}$ for the charge in this paper and reserve the notation $e$ for the Euler number.}
\begin{equation}\label{RN.metric}
g=-\left(1-\f {2M} r+\f{{\bf e}^2}{r^2}\right)dt^2+\left(1-\f {2M} r+\f{{\bf e}^2}{r^2}\right)^{-1} dr^2+r^2 d\sigma_{\mathbb S^2},
\end{equation}
where $d\sigma_{\mathbb S^2}$ denotes the standard round metric on the $2$-sphere with radius $1$ and the Maxwell field $F$ is given by\footnote{We remark on the well-known fact that there are different choices of (spherically symmetric) Maxwell field which together with the metric \eqref{RN.metric} solve the Einstein--Maxwell equations. Here, we are only giving one such example.} $F=\f{2{\bf e}}{r^2} \,dt\wedge dr$. Here, $M$ and ${\bf e}$ are real-valued constants representing the mass and the charge of the spacetime respectively.

The Reissner--Nordstr\"om spacetime is said to be subextremal with non-vanishing charge if $0<|{\bf e}|<M$. In this case, the maximal globally hyperbolic future development of Reissner--Nordstr\"om initial data has a Penrose diagram\footnote{In this work, we make an extensive use of Penrose diagrams to represent global causal properties of the spacetime. For an introduction, see \cite[Appendix~C]{DRPL}.} given by Figure \ref{fig:RN}. In particular, the spacetime possesses a smooth Cauchy horizon $\mathcal C\mathcal H^+$ and the maximal globally hyperbolic future development can be future-extended smoothly but non-uniquely as solutions to the Einstein--Maxwell system. 

\begin{figure}[h]
\begin{center}
\def\svgwidth{200px}
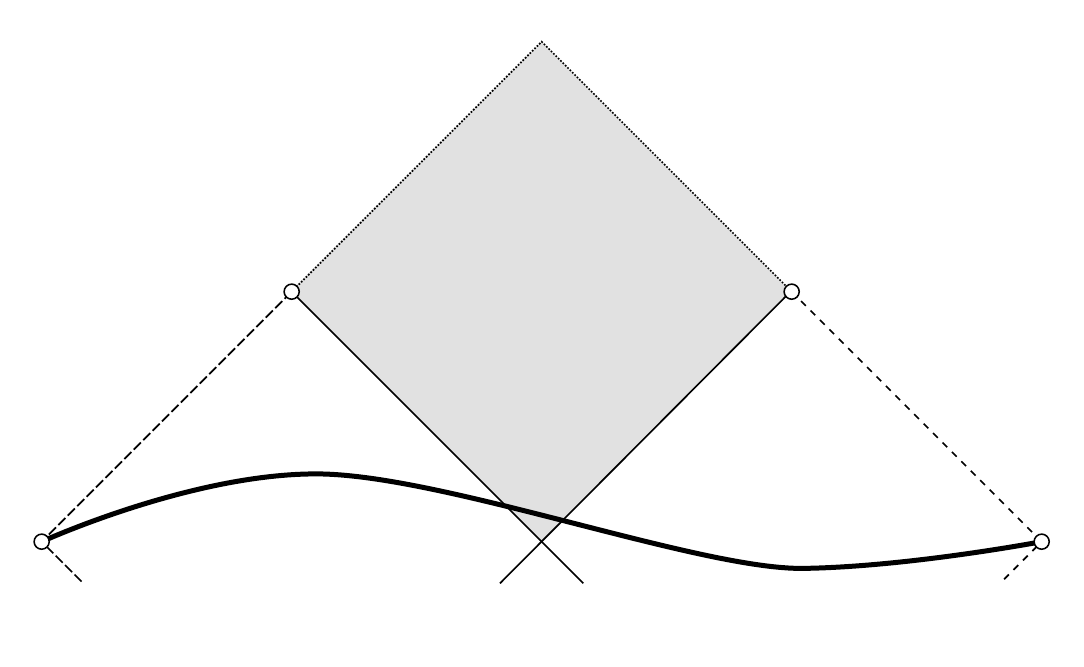 
\caption{Penrose diagram of Reissner--Nordstr\"om spacetime. For the notation, we refer to Theorem~\ref{thm:kommemi}.} \label{fig:RN}
\end{center}
\end{figure}

A priori, the non-uniqueness of possible future-extensions of the Reissner--Nordstr\"om spacetime challenges the deterministic nature of Einstein's theory. Nevertheless, it is widely expected that such a feature of the Reissner--Nordstr\"om spacetime is \underline{non-generic}. This non-genericity, understood from the point of view of the initial value problem, would follow from the celebrated strong cosmic censorship of Penrose \cite{Penrose.SCC}. Since we will be dealing with the Einstein--Maxwell--scalar field system in this paper, we state the strong cosmic censorship conjecture in the following form\footnote{One can of course also entertain the strong cosmic censorship conjecture for the Einstein--Maxwell system or the Einstein vacuum system. The conjecture can also be formulated in situations where the initial data are not asymptotically flat, for instance in cosmological settings; see for instance \cite{Ringstrom.book}. We will not discuss the conjecture in such generality in this paper.}:
\begin{conjecture}[Strong cosmic censorship conjecture]
Maximal globally hyperbolic future developments for the Einstein--Maxwell--(real)--scalar field system to generic asymptotically flat initial data are future-inextendible as suitably regular Lorentzian manifolds.
\end{conjecture}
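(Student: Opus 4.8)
The plan is to establish the $C^{2}$-formulation of the conjecture for this model by a careful analysis of the \emph{interior} of the black hole region of the maximal globally hyperbolic future development (MGHD), combining a stability result (the solution stays close to Reissner--Nordstr\"om up to the Cauchy horizon) with an instability result (a \emph{mass inflation} mechanism driven by the scalar field), and then showing that the hypotheses of the instability result hold for a generic, open and dense, class of data.

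\textbf{Step 1 (structure of the MGHD).} I would first invoke the structure theorems of Dafermos and Dafermos--Rodnianski: the MGHD of any admissible two-ended asymptotically flat Cauchy data for \eqref{EMSFS} has the Penrose diagram of Figure~\ref{fig:RN}, with two event horizons $\mathcal{H}_{1}^{+},\mathcal{H}_{2}^{+}$, a bifurcate Cauchy horizon $\mathcal{CH}_{1}^{+}\cup\mathcal{CH}_{2}^{+}$ along which the area radius $r$ is bounded away from $0$, and a $C^{0}$-extension across it. In particular the $C^{0}$-formulation fails, so the only available obstruction to a $C^{2}$-extension across $\mathcal{CH}^{+}$ (or across the various end points $i^{+}$ and the bifurcation sphere) is a genuine, \emph{unbounded} curvature singularity there, e.g. blow-up of a curvature invariant or of a parallel-propagated frame component. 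Since in spherical symmetry the Kretschmann scalar and the Hawking mass $\varpi$ are linked through $r$, the entire problem reduces to proving that $\varpi\to\infty$ at $\mathcal{CH}^{+}$ (with $r$ remaining bounded away from $0$ and $\infty$) for generic data.

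\textbf{Step 2 (stability in the interior).} Setting up the characteristic initial value problem with data on the event horizons obeying Price's-law \emph{upper} bounds (Dafermos--Rodnianski), I would run a bootstrap argument, in the spirit of Dafermos--Luk's proof of the $C^{0}$-stability of the Kerr Cauchy horizon but specialized to and sharpened in spherical symmetry, to show that in a neighborhood of $\mathcal{CH}^{+}$ the solution remains close to a subextremal Reissner--Nordstr\"om solution: $r$ stays bounded above and below, the renormalized Hawking mass is controlled from the ``stability side'', and $\phi$ together with its first derivatives enjoys quantitative decay/integrability in the appropriate null directions. The key analytic input is a weighted energy estimate for $\Box_{g}\phi=0$ in the blue-shift region, with exponential weights built out of the surface gravity of the Cauchy horizon, which must be tracked uniformly as one approaches $\mathcal{CH}^{+}$.

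\textbf{Step 3 (instability / mass inflation, and genericity).} Here I would impose, and then later in the series verify generically, an $L^{2}$-averaged polynomial lower bound for $\partial_{v}\phi$ along each event horizon. Upgrading the linear analysis of our previous paper on the Reissner--Nordstr\"om Cauchy horizon to the nonlinear setting, I would propagate this lower bound via the wave equation and the Raychaudhuri equations into the interior to conclude $\int(\partial_{v}\phi)^{2}\,dv=\infty$ along outgoing null rays terminating at $\mathcal{CH}^{+}$, and then feed this into the Hawking mass equation to obtain $\varpi\to\infty$ at $\mathcal{CH}^{+}$; combined with Step 1 this gives $C^{2}$-inextendibility. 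For genericity, I would show that the set of admissible data for which the $L^{2}$-averaged lower bound holds is open and dense: the failure of the lower bound corresponds to the vanishing of a definite Newman--Penrose-type coefficient computed from the data, a codimension-controlled (hence nowhere dense) condition, while openness follows from the stability of the upper bounds; this is the content of \cite{LO.exterior}.

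\textbf{Main obstacle.} The crux is reconciling stability and instability: one needs the solution to remain regular and Reissner--Nordstr\"om-like all the way up to $\mathcal{CH}^{+}$ (so that $r\not\to 0$ and the geometry is a priori well-behaved) while simultaneously extracting, from the mere $L^{2}$-averaged lower bound on $\partial_{v}\phi$, a robust quantitative growth of the scalar-field energy that survives to the Cauchy horizon and forces $\varpi\to\infty$. Getting the exponential blue-shift weights exactly right uniformly near $\mathcal{CH}^{+}$, and propagating the weak (only $L^{2}$-averaged, rather than pointwise) lower bound through the nonlinear coupling without it being absorbed into error terms, are the delicate points, and are precisely what is needed to improve Dafermos's earlier argument, which relied on a pointwise lower bound.
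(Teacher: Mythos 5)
Your Step 3 contains a genuine gap, and it is at the heart of what makes the paper's argument work. You propose to feed the $L^{2}$-averaged lower bound on $\partial_{v}\phi$ along $\EH$ into the Hawking mass equations and conclude $\varpi\to\infty$ at $\CH$, and you set up Step~1 so that the \emph{entire} problem reduces to proving mass inflation. But that reduction is exactly what the present paper avoids, and has to avoid: with only an $L^{2}$-averaged lower bound (as opposed to the pointwise lower bound of Dafermos's conditional instability theorem, Theorem~\ref{Dafermos.instability.thm}), one does \emph{not} obtain $\varpi\to\infty$ along $\CH$. The paper makes this explicit in Remark~\ref{rem.hawking.mass}: the proof of Theorem~\ref{main.theorem.both} does not show blow up of the Hawking mass. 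What the $L^{2}$-averaged lower bound actually yields (Theorem~\ref{final.blow.up.step} and Theorem~\ref{thm.nonpert}) is the weaker pair of statements that $\int_{0}^{1}\frac{(\partial_{V}\phi)^{2}}{\Omega^{2}}\,dV=\infty$ and $\lim_{V\to 1}\frac{\partial_{V}r}{\Omega^{2}}=-\infty$ along constant-$u$ curves; neither implies that the Hawking mass, and hence the Kretschmann scalar, diverges at $\CH$. The missing idea is a \emph{different} geometric $C^{2}$-inextendibility argument (Step~5 of Section~\ref{sec.main.structure}, proved in Section~\ref{sec.main.theorem.C2}): one reduces, following Dafermos--Rendall and Kommemi, to ruling out radial null geodesics that exit $\calM$ through $\CH_{1}$ or $\CH_{2}$ with $r>0$, and then observes via \eqref{EMSFS} that the component $\mathrm{Ric}(\dot\gamma,\dot\gamma)=2(\partial_{V}\phi)^{2}/\Omega^{4}$ of the Ricci curvature in a parallely propagated null frame is not bounded (indeed, not even integrable against the affine parameter) as one approaches $\CH$, contradicting the existence of a $C^{2}$ extension. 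Without this device, the $L^{2}$-averaged hypothesis simply does not close the argument as you have set it up.

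There is a secondary inaccuracy in your genericity discussion: the nonvanishing of $\mathfrak{L}_{(\omega_{0})\infty}$ is not (for $\omega_{0}\geq 3$) ``a Newman--Penrose-type coefficient computed from the data,'' because $\mathfrak{L}$ is a dynamically-defined integral along $\NI$; openness requires the full nonlinear asymptotic stability of the exterior (Theorem~\ref{thm:L-stability}), not merely ``stability of the upper bounds,'' and density requires the one-parameter perturbation construction of Theorem~\ref{thm:instability}, not a codimension count. Still, these are refinements; the essential missing step is the replacement of the mass-inflation route by the geodesic curvature-blow-up argument.
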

The strong cosmic censorship conjecture as stated above is not precise regarding the notion of regularity for the extension. One may entertain the following slightly more precise formulation of the conjecture:
\begin{conjecture}[$C^k$-formulation of the strong cosmic censorship conjecture]\label{conj.Ck}
Maximal globally hyperbolic future developments for the Einstein--Maxwell--(real)--scalar field system corresponding to generic asymptotically flat initial data are future-inextendible as time-oriented Lorentzian manifolds with $C^k$ metrics.
\end{conjecture}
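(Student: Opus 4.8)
The plan is to establish the $C^2$-formulation of Conjecture~\ref{conj.Ck} for the system \eqref{EMSFS} in spherical symmetry, under the hypothesis that $\phi$ obeys an $L^2$-averaged inverse-polynomial \emph{lower} bound along each event horizon (the matching \emph{upper} bounds being the Price-law decay, supplied for two-ended data by the companion paper \cite{LO.exterior}). All of the work is carried out in the black hole interior; in spherical symmetry and a double-null gauge $(u,v)$ the problem reduces to a $1{+}1$-dimensional system for the area radius $r$, the null lapse $\Omega^2$, the renormalized Hawking mass $\varpi$ and $\phi$, with characteristic data on $\mathcal H_1^+\cup\mathcal H_2^+$ and with the wave equation in the form $\rd_u\rd_v(r\phi)=(\rd_u\rd_v r)\,\phi$.

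First I would prove the \emph{stability} statement: a priori estimates showing that the maximal development exhausts the entire interior region bounded by $\mathcal H_1^+,\mathcal H_2^+$ and the bifurcate Cauchy horizon $\mathcal{CH}_1^+\cup\mathcal{CH}_2^+$, that $r$ stays uniformly bounded away from $0$ there, and that $r,\Omega^2,\varpi,\phi$ admit continuous extensions to $\mathcal{CH}^+$ in a double-null coordinate system in which $\mathcal{CH}^+$ sits at finite parameter and $\Omega^2$ is bounded above and below. Following the Dafermos--Luk scheme for the stability of the Kerr Cauchy horizon (and its spherically symmetric predecessors), this is a bootstrap on a hierarchy of weighted energies adapted to the blueshift rate $\kappa_-$ of the Cauchy horizon: one closes crude $L^\infty$ control of $r$ and $\varpi$, propagates polynomially-weighted $\rd_v\phi$-energy out of $\mathcal H_1^+$ and $\rd_u\phi$-energy out of $\mathcal H_2^+$, and then recovers $\Omega^2$ and the transversal derivatives, with appropriate care near the timelike infinities $i_1^+,i_2^+$ (approach along spacelike hypersurfaces) and near $\mathcal{CH}_1^+\cap\mathcal{CH}_2^+$. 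This already recovers the $C^0$-extendibility result of Dafermos.

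The heart of the matter is the \emph{instability} statement: under the $L^2$-averaged polynomial lower bound on $\mathcal H^+$, the field $\phi$ fails to lie in $H^1_{\mathrm{loc}}$ near $\mathcal{CH}^+$, i.e. in the continuous-extension coordinates
\[
\int\!\!\int_{\mathcal N} (\rd_v\phi)^2 \,\Omega^2\, du\,dv = +\infty
\]
for every neighborhood $\mathcal N$ of such a point (and symmetrically with $u\leftrightarrow v$ on the other component). To prove this I would propagate the lower bound from the event horizon inward: since $\rd_u\big(\rd_v(r\phi)\big)=(\rd_u\rd_v r)\,\phi$ and the right-hand side is integrable in $u$ with a small total contribution (by the stability step), $\rd_v(r\phi)$ is carried across the redshift region essentially intact; passing into the blueshift region near $\mathcal{CH}^+$, in a coordinate system regular at the Cauchy horizon the $v$-derivative of $\phi$ acquires an exponentially large factor in $\kappa_-$ while the event-horizon profile decays only polynomially, so the renormalized energy diverges. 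Since the hypothesis is only an $L^2$ \emph{average}, I would never argue pointwise: instead one tracks a lower bound for $\int (\rd_v(r\phi))^2\,dv$ over dyadic $v$-intervals, uses the stability estimates to absorb all error and nonlinear contributions (the coupling of $\phi$ to $r,\varpi$, and the ingoing radiation from the other end), and transports this dyadic lower bound up to $\mathcal{CH}^+$; this is the nonlinear incarnation of the linear instability of the Reissner--Nordstr\"om Cauchy horizon established in our earlier work. Finally, $C^2$-inextendibility follows by a soft argument: a $C^2$-extension of $g$ past $\mathcal{CH}^+$ would make $Ric$ continuous there, and since the area radius $r$ is determined by $g$ and $T^{(em)}$ is a smooth function of $r$ and the conserved charge, the trace-reversed equations, which for \eqref{EMSFS} read $Ric_{\mu\nu}=2\rd_\mu\phi\,\rd_\nu\phi+2T^{(em)}_{\mu\nu}$, then force $(\rd_u\phi)^2$ and $(\rd_v\phi)^2$ to be bounded near $\mathcal{CH}^+$, whence $\phi\in H^1_{\mathrm{loc}}$ there --- contradicting the instability. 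One then checks from the structure of the Penrose diagram that no extension can be anchored at the remaining boundary components either.

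The hard part will be the instability step. Upper bounds are robust under perturbation; lower bounds are not, and here a lower bound must survive (i) the passage through the nonlinear redshift region, where it competes with error terms controlled only in the averaged sense delivered by the stability estimates; (ii) the borderline polynomial decay rates, which leave essentially no slack in the weights; and (iii) the two-ended geometry, where ingoing radiation from $\mathcal H_2^+$ could a priori cancel the main term along part of $\mathcal{CH}_1^+$ --- which is precisely why the conclusion is phrased generically and why an $L^2$-averaged rather than a pointwise hypothesis is the natural quantity to propagate. Arranging the bootstrap so that a \emph{lower} bound (not merely an upper bound) is among the bootstrap assumptions, and closing it, is the crux of the argument.
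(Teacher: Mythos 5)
Your outline captures the overall architecture of the paper's interior analysis (stability, conditional $W^{1,2}_{\mathrm{loc}}$-instability, then $C^2$-inextendibility), but the instability step as you propose it is not what the paper does and I do not see how it would close. You plan to carry a dyadic lower bound for $\int(\rd_v(r\phi))^2\,dv$ forward from $\EH$ into the interior, absorbing error and nonlinear contributions via the stability estimates. The difficulty is that the hypothesis \eqref{final.blow.up.step.assumption} only asserts that the \emph{total} weighted $L^2$-integral along $\EH$ diverges; on any fixed dyadic $v$-interval the contribution could be arbitrarily small, while the error terms coming from the nonlinearity and from the second horizon's data are of comparable size, so nothing a priori prevents cancellation of the propagated signal at a given scale. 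The paper (Section~\ref{sec.blow.up}) circumvents this entirely by arguing the \emph{contrapositive}: one assumes $\int \log_+^{\alp_0}(1/\Omg)(\rd_v\phi)^2\,dv<\infty$ along some outgoing null ray near $\CH$, solves the wave equation in the direction of increasing $u+v$ towards $i^+$ --- where the blueshift at $\CH$ becomes a redshift --- and combines an almost-conserved energy (Proposition~\ref{prop:intr:energy}), an integrated local energy decay estimate with a degenerate $\log_+^{-\gamma}(1/\Omg)$ weight (Proposition~\ref{prop:intr:ILED}), redshift estimates near both $\EH$ and $\CH$ (Propositions~\ref{prop:intr:red-shift}, \ref{prop:intr:red-shift:EH}), and a dyadic pigeonhole iteration (Proposition~\ref{prop:intr:decay}) to produce a decay estimate $\int_{\EH\cap\{v\ge\tau\}}(\rd_v\phi)^2\lesssim \tau^{-\alp_0}$ along $\EH$, contradicting \eqref{final.blow.up.step.assumption}. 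In this direction one only ever propagates \emph{upper} bounds, which the stability theorem is designed to supply; this structural reformulation is the crucial ingredient absent from your sketch.

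Two further points. The $C^2$-inextendibility step is not the soft consequence you describe: from a $C^2$-extension one cannot immediately read off $\phi\in W^{1,2}_{\mathrm{loc}}$, because one does not a priori know that the putative extension is probed by the right curve. The paper (Section~\ref{sec.main.theorem.C2}) first uses Rademacher's theorem and the Dafermos--Rendall extension of the rotational Killing fields and of $r$ to $\rd\calM$ to show that the extension is reached either by a timelike geodesic exiting at $r=0$ or by a future-directed \emph{radial} null geodesic $\gmm$ exiting through $\CH$ with $r>0$ (ruling out $\NI^+$, $i^+$ and $\calS$ separately), and only then contradicts \eqref{eq:nonpert-blowup-phi} via $\mathrm{Ric}(\dot\gmm,\dot\gmm)$ in coordinates regular at $\CH$. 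Finally, the paper's stability proof does not use a hierarchy of weighted energies \`a la Dafermos--Luk but rather a weighted $L^\infty$ bootstrap on the difference quantity $\Psi=(\phi,\,r-r_{RN},\,\log(\Omg/\Omg_{RN}))$, with the interior partitioned into finitely many strips $\{A_i\le u+v\le A_{i+1}\}$ in place of Gr\"onwall; this yields sharp pointwise rates, notably the refined bound on $\rd_v r$ in Proposition~\ref{dvr.improved}, which are indispensable inputs to the redshift estimate near $\CH$ in the instability step.
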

Our purpose in this paper and \cite{LO.exterior} is to understand Conjecture \ref{conj.Ck} for generic \underline{spherically symmetric} solutions. However, when discussing the strong cosmic censorship conjecture under the assumption of spherical symmetry, one should be reminded that the set of spherically symmetric solutions is only a very small subset of solutions and their behavior may not be representative of the ``generic'' phenomena in general without symmetry assumptions. Nevertheless, in view of \cite{DL}, we hope that some of the methods we develop in this paper may be relevant for Conjecture \ref{conj.Ck} in some settings without symmetry assumptions.

In view of the fact that the explicit Schwarzschild solution is inextendible to a larger Lorentzian manifold with a continuous metric \cite{Sbie.C0}, one may conjecture a very strong form of the strong cosmic censorship conjecture, namely, the $C^0$-formulation of Conjecture~\ref{conj.Ck} \cite{Chr:CQG}. This would be consistent with the expectation, which is common in the physics literature, that the ``tidal deformation becomes infinite'' in the interior of black holes. Indeed, for spherically symmetric solutions such that in addition the Maxwell field is assumed to vanish, i.e., for solutions to the Einstein--(real)--scalar field system in spherical symmetry, the $C^0$-formulation of the strong cosmic censorship conjecture was proven by Christodoulou:
\begin{theorem}[Christodoulou \cite{Christodoulou:1991yfa, Chr.instab}]
The \underline{$C^0$-formulation} of the strong cosmic censorship conjecture for the Einstein--(real)--scalar field system in \underline{spherical symmetry} with either $1$-ended asymptotically flat initial data on $\mathbb R^3$ or $2$-ended asymptotically flat initial data on $\mathbb R\times \mathbb S^2$ is \underline{true}.
\end{theorem}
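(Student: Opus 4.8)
The plan is to reduce the spherically symmetric Einstein--(real)--scalar-field system \eqref{EMSFS} to a $1+1$-dimensional problem, to set up a global existence and structure theory for the maximal development in a bounded-variation class, to classify its terminal singular boundary, and then to prove $C^{0}$-future-inextendibility for all data outside a non-generic ``exceptional'' set. Concretely, I would work in a double-null gauge, $g=-\Omega^{2}\,du\,dv+r^{2}\,d\sigma_{\mathbb{S}^{2}}$ on a domain $\mathcal{U}\subseteq\mathbb{R}^{1+1}$, with $r$ the area radius and $\Omega^{2}>0$, so that \eqref{EMSFS} becomes a system for $(r,\Omega^{2},\phi)$ governed by $\Box_{g}\phi=0$, the Raychaudhuri equations $\partial_{u}(\Omega^{-2}\partial_{u}r)=-r\,\Omega^{-2}(\partial_{u}\phi)^{2}$ and its $v$-analogue, and evolution equations for $\partial_{u}\partial_{v}r$ and $\partial_{u}\partial_{v}\log\Omega^{2}$, with the Hawking mass defined by $1-\tfrac{2m}{r}=-4\,\Omega^{-2}\,\partial_{u}r\,\partial_{v}r$. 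The signs here give the monotonicity that drives everything: $\Omega^{-2}\partial_{u}r$ is non-increasing in $u$ and $\Omega^{-2}\partial_{v}r$ non-increasing in $v$; the sign pattern of $(\partial_{u}r,\partial_{v}r)$ separates regular, trapped and anti-trapped regions; and $m$ is monotone along the null directions with sign fixed by the region.

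The second step is the global existence and structure theory for admissible data of locally bounded variation, following Christodoulou \cite{Christodoulou:1991yfa}. In the dispersive case the development is future causally geodesically complete, hence future-inextendible, and there is nothing to prove. Otherwise a trapped surface forms; one then shows that future null infinity is complete (weak cosmic censorship in this setting), that a black-hole region bounded by a non-degenerate event horizon $\mathcal{H}^{+}$ appears, and that, after normalising the conformal gauge along the centre $\Gamma=\{r=0\}$ and along $\mathcal{H}^{+}$, the future boundary of the maximal domain $\mathcal{U}$ consists of: $\Gamma$ (regular), a terminal singular set that one proves is achronal (generically spacelike) and on which $r\to 0$, and --- only in exceptional configurations --- a null ray $\mathcal{CH}^{+}$ issuing from the ``first singular point'' $o$ where $\Gamma$ terminates, i.e.\ a Cauchy horizon visible from the centre (a naked singularity). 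The core analytic content is the a priori control of $r$, $m$ and $\partial\phi$ near the terminal boundary, which in particular determines exactly when $\mathcal{CH}^{+}$ is non-empty. For $2$-ended data on $\mathbb{R}\times\mathbb{S}^{2}$ the centre is absent; one has two asymptotically flat ends and two event horizons $\mathcal{H}^{+}_{1},\mathcal{H}^{+}_{2}$, but the interior terminal boundary has the same structure --- a $\{r=0\}$ set, generically spacelike, with a null Cauchy-horizon component only in the exceptional case.

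The third ingredient, which I expect to be the main obstacle, is the genericity: the exceptional set --- data whose development has a non-empty $\mathcal{CH}^{+}$ --- is non-generic, its complement being open and dense in the appropriate topology on admissible BV data. This is the instability argument of Christodoulou \cite{Chr.instab}. One performs a local analysis near the first singular point $o$: rescaling $(u,v)$ about $o$ so that the development of would-be Cauchy-horizon data looks approximately self-similar, one classifies the exceptional self-similar profiles (for which $\phi$ is trivial or special near $o$) and shows that they are unstable --- an arbitrarily small, generic perturbation of $\phi$ grows under the rescaling so that, instead of the data closing off into a null $\mathcal{CH}^{+}$, it drives $r\to 0$ along a spacelike boundary. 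Quantitatively one tracks a weighted total-variation functional of $\partial\phi$ (or of a BV-renormalised version of the field) and shows it cannot remain small up to $o$ unless the data lie in a thin set; openness of the complement follows from continuity of the relevant functionals in the BV topology together with the strictness of the instability estimate, and density from the fact that the instability is triggered by an open dense family of perturbations.

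Finally, for data outside the exceptional set I would prove $C^{0}$-future-inextendibility. Suppose for contradiction that $(\widetilde{\mathcal{M}},\widetilde{g})$ is a connected time-oriented $C^{0}$ Lorentzian manifold with $\mathcal{M}\subsetneq\widetilde{\mathcal{M}}$; pick $p\in\widetilde{\mathcal{M}}\setminus\mathcal{M}$ and a future-directed timelike curve $\gamma$ with $\gamma|_{[0,1)}\subset\mathcal{M}$ and $\gamma(1)=p$. Pulling $\gamma|_{[0,1)}$ back to $\mathcal{U}$, it must limit onto the terminal boundary; since $\mathcal{CH}^{+}=\emptyset$ and $\Gamma$ is regular, the only possibility is the spacelike singular set $\{r=0\}$. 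Using the continuity and non-degeneracy of $\widetilde{g}$ at $p$, in a chart about $p$ the null cones of $\widetilde{g}$ are uniformly pinched between two fixed flat cones, which --- as in Sbierski's treatment of Schwarzschild \cite{Sbie.C0} --- forces the $(u,v)$-coordinate increments of $\gamma$ near $p$ to be comparable to its proper time, so $\gamma$ reaches $\{r=0\}$ with bounded coordinate length and uniformly non-degenerate opening. But the structure theory together with Raychaudhuri forces $\Omega^{2}$ and $r$ to degenerate as $\{r=0\}$ is approached --- e.g.\ $-4\Omega^{-2}\partial_{u}r\,\partial_{v}r=\tfrac{2m}{r}-1\to\infty$ wherever $m$ stays bounded below, with an area-collapse and curvature-blow-up argument handling sequences along which $m\to 0$ --- which is incompatible with the uniform control of the cones. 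This contradiction rules out $p$, so no future extension exists; the $2$-ended case is identical, run at each of the two singular boundaries.
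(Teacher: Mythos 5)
This theorem is a cited background result of Christodoulou; the paper does not prove it, so there is no ``paper's own proof'' against which to compare yours. Evaluating your sketch on its own: the double-null reduction, the BV structure theory of the maximal development, and the genericity argument via instability of self-similar naked-singularity profiles (your steps 1--3) are all consistent in spirit with the programme of \cite{Christodoulou:1991yfa, Chr.instab}, though of course each is a deep multi-paper result rather than an outline that can be filled in a few lines.

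The genuine gap is in step 4, the $C^{0}$-inextendibility argument. First, ``Raychaudhuri forces $\Omega^{2}$ and $r$ to degenerate'' is not a $C^{0}$-geometric statement: $\Omega^{2}$ is a gauge-dependent lapse, and $-4\Omega^{-2}\partial_{u}r\,\partial_{v}r=\tfrac{2m}{r}-1$ constrains only a product, which could degenerate in several different ways; an arbitrary $C^{0}$ extension $(\widetilde{\mathcal M},\tilde g)$ carries no preferred $(u,v)$-chart, and you give no argument relating the double-null chart on $\mathcal M$ to the putative chart around $p\in\partial\mathcal M$ in which $\tilde g$ is continuous. Second, the cone-pinching scheme of \cite{Sbie.C0} is much more delicate than your paraphrase suggests: for Schwarzschild it leans on the explicit metric and the homogeneity of the $\set{r=0}$ hypersurface to propagate the pinching into sharp volume/length estimates, and to adapt it to the dynamical Einstein--scalar singularity one would first need quantitative asymptotics for the metric near $\set{r=0}$ --- none of which your structure-theory step supplies. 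Third, and most seriously, your fallback for the ``$m\to 0$'' sequences --- a ``curvature-blow-up argument'' --- fails exactly at the regularity level in question: curvature blow-up is an obstruction only to $C^{2}$ extensions (two derivatives are needed even to define it) and gives no information at $C^{0}$. This is precisely the distinction between the $C^{0}$- and $C^{2}$-formulations of strong cosmic censorship that motivates the present paper (cf.\ Theorem~\ref{DDR} vs.\ Theorem~\ref{main.theorem.both}), and it is the reason $C^{0}$-inextendibility is hard even for explicit spacetimes.
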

Nevertheless, Dafermos--Rodnianski showed that as long as the charge is non-zero\footnote{As we will see in Section \ref{sec.SS}, the charge is a constant for solutions to the Einstein--Maxwell--(real)--scalar field system in spherical symmetry. It therefore makes sense to discuss solutions with non-vanishing charge.} \footnote{Notice that the Schwarzschild solution is \underline{not} a solution to the Einstein--Maxwell--(real)--scalar field system in the case where the charge is required to be non-vanishing.}, all solutions arising from a suitable class of initial data are extendible with a $C^0$ metric. Hence, if one views the non-vanishing of charge as a ``generic'' condition, this implies that the $C^0$-formulation of strong cosmic censorship is false for the Einstein--\underline{Maxwell}--(real)--scalar field system in spherical symmetry:
\begin{theorem}[Dafermos \cite{D2}, Dafermos--Rodnianski \cite{DRPL}]\label{DDR}
The \underline{$C^0$-formulation} of the strong cosmic censorship conjecture for the Einstein--\underline{Maxwell}--(real)--scalar field system in \underline{spherical symmetry} with $2$-ended asymptotically flat initial data on $\mathbb R\times \mathbb S^2$ is \underline{false}.
\end{theorem}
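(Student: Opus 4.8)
\emph{Proof strategy.} The conjecture in question asserts $C^0$-inextendibility for a \emph{generic} (open and dense) set of developments, so to disprove it the plan is to exhibit an open---indeed dense---class of admissible two-ended asymptotically flat Cauchy data (for concreteness, all those with nonvanishing charge $\mathbf{e}$) whose maximal globally hyperbolic future development $(\mathcal{M},g,\phi,F)$ admits a nontrivial $C^0$ extension. I would construct this extension across a bifurcate Cauchy horizon $\mathcal{CH}_1^+\cup\mathcal{CH}_2^+$ emanating from the two future timelike infinities $i_1^+,i_2^+$ (cf.\ Figure~\ref{fig:RN}), working with the spherically symmetric Einstein--Maxwell--scalar-field system in the black hole interior in a double-null gauge $g=-\Omega^2\,du\,dv+r^2\,d\sigma_{\mathbb{S}^2}$.

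\emph{Setup and input from the exterior.} First I would record the global structure of the maximal development. For admissible two-ended data, the works of Dafermos \cite{D2} and Dafermos--Rodnianski \cite{DRPL} on the Einstein--Maxwell--scalar-field system in spherical symmetry show that the Penrose diagram is qualitatively that of subextremal Reissner--Nordstr\"om: each asymptotically flat end gives rise to a future null infinity $\mathcal{I}_a^+$ and an event horizon $\mathcal{H}_a^+$ ($a\in\{1,2\}$), the two event horizons bound the black hole interior region, the charge $\mathbf{e}$ is a nonzero constant, and along each $\mathcal{H}_a^+$ the area radius $r$ is bounded below by a positive constant while the renormalized Hawking mass $\varpi=m+\mathbf{e}^2/(2r)$ stays bounded. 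The second input is the Price-law decay of the scalar field along the event horizons---a variant of the Dafermos--Rodnianski argument in \cite{DRPL}---which yields, uniformly over the class of data, pointwise polynomial decay $|\phi|+|\partial_v\phi|\lesssim v^{-p}$ for some $p>1$ as one tends to $i_a^+$ along $\mathcal{H}_a^+$.

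\emph{The interior estimate.} This is the heart of the proof. I would pose the characteristic initial value problem in the black hole interior with data on segments of $\mathcal{H}_1^+$ and $\mathcal{H}_2^+$ and run a bootstrap/continuity argument propagating the solution up to, and then past, the Cauchy horizon. The analytic tools are the reduced system in the double-null gauge: the two Raychaudhuri equations, which express $\partial_u(\Omega^{-2}\partial_u r)$ and $\partial_v(\Omega^{-2}\partial_v r)$ through $-r\,\Omega^{-2}(\partial_u\phi)^2$ and $-r\,\Omega^{-2}(\partial_v\phi)^2$ (hence providing monotonicity along the two null directions); the wave equation $\partial_u\partial_v(r\phi)=\phi\,\partial_u\partial_v r$; the evolution equations for $r$ and $\log\Omega^2$; and the monotonicity of $\varpi$ in the interior. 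Feeding in the Price-law decay, the goal is to close a tiered hierarchy of bootstrap bounds showing that $r$ converges, uniformly in $u$, to a positive limit $r_-$; that $\Omega^2$ decays exponentially as $v\to\infty$ (comparably to the Reissner--Nordstr\"om conformal factor), so that $\int_v^\infty\Omega^2\,dv'<\infty$ and $\int_v^\infty|\partial_v r|\,dv'<\infty$ uniformly; and the analogous statements in the $u$-direction near the other horizon. Here one must treat the ``early'' interior region---where the solution is a small perturbation of Reissner--Nordstr\"om and the estimates are of stability type---separately from the ``late'' region adjacent to $i_a^+$, where the blue-shift is strongest.

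\emph{Extension and the main obstacle.} With these estimates in hand, the coordinate change $V(v)=-\int_v^\infty\Omega^2(u,v')\,dv'$ (and the analogue in $u$ near the opposite horizon) renormalizes the conformal factor so that in the new coordinates the metric reads $-du\,dV+r^2\,d\sigma_{\mathbb{S}^2}$ with $r$ extending continuously and strictly positively to $\{V=0\}$; using the integrated decay of $\partial_v\phi$, $\phi$ likewise extends continuously, so $(\mathcal{M},g,\phi)$ extends as a $C^0$ field across $\mathcal{CH}_1^+\cup\mathcal{CH}_2^+$, which contradicts $C^0$-inextendibility for this open class of data and thus falsifies the $C^0$-formulation. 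The main obstacle is the interior bootstrap, namely controlling the blue-shift: perturbations entering from $\mathcal{H}_a^+$ are amplified by a factor $\sim e^{\kappa_- v}$ (with $\kappa_->0$ the surface gravity of the Cauchy horizon) as $v\to\infty$, so one cannot hope to control the solution in $C^1$---indeed $\varpi$ generically blows up at the Cauchy horizon (mass inflation), which is compatible with and does not obstruct the $C^0$ extension---and the delicate point is that the mere polynomial decay $v^{-p}$ of $\phi$ along $\mathcal{H}_a^+$ is nonetheless just enough to keep $r$ away from $0$ and the integrals of $\Omega^2$ and $|\partial_v r|$ finite. Quantitatively the difficulty concentrates in the region of the interior adjacent to $i_a^+$, where the $v^{-p}$ decay of $\phi$ must be traded against the $e^{\kappa_- v}$ blue-shift growth, which is what forces the bootstrap hierarchy to be arranged with care.
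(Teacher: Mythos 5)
Your proposal correctly identifies the two essential inputs and combines them exactly as the paper does: the Price's law upper bound along the event horizons proved by Dafermos--Rodnianski (Theorem~\ref{thm:price-law-intro}/\ref{thm:DR-decay}) feeds into Dafermos's interior stability theorem (Theorem~\ref{Dafermos.thm}), which by a bootstrap argument in the double-null gauge (absorbing the $e^{\kappa_- v}$ blue-shift using the $v^{-p}$, $p>1$, decay and closing via the Raychaudhuri monotonicities) produces a $C^0$ extension of $r$, $\phi$, and the renormalized conformal factor across the Cauchy horizon near each $i^{+}_{a}$. Since this applies to every admissible data set with $\e\neq 0$, no generic subclass can be $C^{0}$-inextendible, which is precisely the paper's argument.
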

Our main result in this series of papers is that the $C^2$-formulation of the strong cosmic censorship conjecture remains true in this setting:
\begin{theorem}[Main theorem, rough version]\label{main.theorem.both}
The \underline{$C^2$-formulation} of the strong cosmic censorship conjecture for the Einstein--\underline{Maxwell}--(real)--scalar field system in \underline{spherical symmetry} with $2$-ended asymptotically flat initial data on $\mathbb R\times \mathbb S^2$ is \underline{true}.
\end{theorem}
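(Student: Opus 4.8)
The plan is to deduce Theorem~\ref{main.theorem.both} by combining the conditional $C^{2}$-inextendibility statement proved in the present paper with the genericity of its hypothesis, established in the companion paper \cite{LO.exterior}. Recall first that, by the works of Dafermos and Dafermos--Rodnianski \cite{D2,DRPL}, the maximal globally hyperbolic future development $(\M,g,\phi,F)$ of any admissible two-ended asymptotically flat Cauchy data has a Penrose diagram of the same qualitative type as Figure~\ref{fig:RN}: two asymptotically flat ends, a black hole region bounded by two event horizons $\EH_{1},\EH_{2}$, and --- crucially, since the charge is non-zero --- a non-empty bifurcate Cauchy horizon $\CH$ along which the area-radius $r$ is bounded away from $0$, so that in particular there is no ``first singularity'' where $r$ vanishes. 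Consequently the only boundary through which a $C^{2}$ extension could possibly proceed is $\CH$ (together with its bifurcation sphere), and a standard argument reduces matters further to proving that the Kretschmann scalar is unbounded along causal curves approaching $\CH$.

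The core of the work is a detailed study of the characteristic initial value problem in the interior of the black hole region, with data posed on $\EH_{1}\cup\EH_{2}$. The exterior theory supplies polynomial (Price-law) \emph{upper} bounds for $\phi$ and its derivatives along each $\EH_{i}$, while the hypothesis imported from \cite{LO.exterior} provides, in addition, an $L^{2}$-averaged polynomial \emph{lower} bound for $\partial_{v}\phi$ along each $\EH_{i}$. The first main step is a \textbf{stability} estimate, in the spirit of Dafermos--Luk's analysis of the Kerr Cauchy horizon: running a bootstrap that exploits the double-null (Raychaudhuri and mass) structure, the red-shift near the event horizons, and carefully weighted energy estimates to control the blue-shift in the region adjacent to $\CH$, one shows that the solution remains quantitatively close to the corresponding subextremal Reissner--Nordstr\"om solution up to and including $\CH$; in particular $r$ and the renormalized Hawking mass $\varpi$ stay in the expected ranges and the metric extends continuously across $\CH$, so that one recovers, with quantitative control, the $C^{0}$-extendibility underlying Theorem~\ref{DDR}.

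The second main step is an \textbf{instability} argument, based on our earlier treatment of the linearized problem on Reissner--Nordstr\"om: propagating the $L^{2}$-averaged lower bound from $\EH_{i}$ all the way up to $\CH$, one concludes that $\partial_{v}\phi$ fails to be square-integrable near $\CH$, and feeding this into the Raychaudhuri and Hawking-mass equations forces $\varpi\to\infty$ (mass inflation), whence the Kretschmann scalar diverges along causal curves reaching $\CH$. Combining the two steps shows that no $C^{2}$ extension across $\CH$ exists; the bifurcation sphere is handled by a separate and more elementary argument, the relevant decay rates being most favorable there. The genericity clause is then inherited from \cite{LO.exterior}, where the class of admissible two-ended asymptotically flat data for which the exterior theory applies and for which the $L^{2}$-averaged lower bound holds along both event horizons is shown to be open and dense with respect to the relevant topologies; on this class the interior analysis above applies and yields $C^{2}$-future-inextendibility, proving the theorem.

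The hard part is the instability step under the \emph{weakened}, $L^{2}$-averaged hypothesis. In contrast with Dafermos's argument from a pointwise lower bound, an averaged bound does not allow one to track pointwise values of $\partial_{v}\phi$ along the evolution; one must instead work throughout with integrated and energy quantities and extract the blow-up purely from the monotonicity built into the Einstein equations (Raychaudhuri and the mass formula), all the while keeping the resulting estimates consistent with --- since they couple to --- the stability bootstrap. Additional care is required near the bifurcation sphere and in the part of the interior where the decay rates degenerate, where the interplay between the upper and lower bounds is most delicate.
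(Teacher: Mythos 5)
Your proposal correctly captures the overall architecture --- Price-law decay and stability in the interior, conditional instability from an $L^{2}$-averaged lower bound on the event horizon, genericity from the companion paper, and a final geometric inextendibility argument --- but there is a genuine gap in the concluding step.

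You assert that the non-square-integrability of $\partial_{V}\phi$ near $\CH$, fed into the Raychaudhuri and Hawking-mass equations, forces $\varpi\to\infty$ (mass inflation) and hence divergence of the Kretschmann scalar. The paper explicitly does not argue this way; see Remark~\ref{rem.hawking.mass}: ``our proof of Theorem~\ref{main.theorem.both} does not show the blow up of the Hawking mass, but instead requires a different geometric argument.'' The obstruction is that the $L^{2}$-averaged hypothesis does not supply the pointwise control on $\partial_{v}\phi$ that drives Dafermos's mass-inflation mechanism in Theorem~\ref{Dafermos.instability.thm}, and moreover $\partial_{u}r$ degenerates as $V\to 1$ in the regular $(u,V)$ coordinates (since $\partial_{u}r_{RN}\to 0$ there), so the Hawking-mass identity $m=\frac{r}{2}\bigl(1+\frac{4(\partial_u r)(\partial_V r)}{\Omg^{2}}\bigr)$ does not straightforwardly diverge even when $\partial_{V}r/\Omg^{2}\to-\infty$. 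What the paper actually does in Section~\ref{sec.main.theorem.C2} is: first, reduce via Lemma~\ref{inextendibility.main.step} to ruling out either a timelike geodesic exiting through a boundary point with $r=0$ (handled by Kommemi's Kretschmann lower bound, Lemma~\ref{not.at.S}) or a \emph{radial null} geodesic $\gmm$ exiting through a point of $\CH_{1}$ or $\CH_{2}$ with $r>0$; second, observe that along such a radial null geodesic, by \eqref{EMSFS} the component $Ric(\dot\gmm,\dot\gmm)$ is proportional to $\Omg^{-4}(\partial_{V}\phi)^{2}$, and the non-square-integrability $\int_{0}^{1}\Omg^{-2}(\partial_{V}\phi)^{2}\,dV=\infty$ together with the boundedness of $\log\Omg$ on the $C^{0}$ extension (Theorem~\ref{thm.nonpert}) forces this quantity to be unbounded, contradicting the assumed $C^{2}$-regularity. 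The whole point of the $L^{2}$-averaged formulation is that it forces one to argue through integrated rather than pointwise quantities, and the Ricci-along-a-parallel-frame obstruction is precisely tailored to that; a Kretschmann-scalar or mass-inflation blow-up is not obviously available from the weaker hypothesis. Your phrase ``a standard argument reduces matters to proving that the Kretschmann scalar is unbounded'' also misstates the dichotomy: the Kretschmann blow-up only handles the $r=0$ boundary piece, not the Cauchy horizon.
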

In view of the above discussion, one only needs to understand the case where the charge is non-vanishing. We will make this assumption from now on.

\begin{remark}[$W^{1,2}_{loc}$ formulation of the strong cosmic censorship]
Theorems \ref{DDR} and \ref{main.theorem.both} leave open the question of the validity of some ``intermediate'' formulation of the strong cosmic censorship conjecture (for instance, the $C^1$ formulation of the strong cosmic censorship). At the same time, it is also of interest to consider a formulation of the strong cosmic censorship not in the class of $C^k$ metrics, but in terms of $W^{1,2}_{loc}$ metrics. This is particularly relevant to the problem of determinism since $C^0\cap W^{1,2}_{loc}$ is the minimal known requirement to define weak solutions to the Einstein--Maxwell--(real)--scalar--field system; see the discussions in the introduction of \cite{Chr}. 

In support of the $W^{1,2}_{loc}$ formulation of the conjecture, we show in Appendix~\ref{sec:christoffel} that in a particular $C^{0}$ extension of a generic solution whose existence is asserted by Theorem~\ref{DDR} (more precisely, see Theorems~\ref{main.theorem.C0.stability} and \ref{thm.nonpert} below), the Christoffel symbols, as well as the gradient of the scalar field, fail to be locally square-integrable\footnote{In fact, we prove a stronger blow up result for the Christoffel symbols in this extension, namely that they are not in $L^{p}_{loc}$ for all $p>1$, while for the gradient of the scalar field we only show the failure of the $L^{2}_{loc}$ condition.}. The $W^{1,2}_{loc}$ formulation of the strong cosmic censorship conjecture would follow if such blow up statements may be generalized to arbitrary nontrivial $C^{0}$ extensions. We do not pursue this issue in this paper, which remains an open problem.
\end{remark}

We will give a precise statement of Theorem~\ref{main.theorem.both} in Section \ref{sec.SCC}. In particular, we will define the notion of genericity, which very roughly is to be understood as being open in a (weighted) $C^1$ topology and being dense\footnote{In fact, a stronger statement is proven: for any element in the complement of the generic set, there exists a continuous (with respect to a weighted-$C^\infty$-topology) $1$-parameter family of initial data sets passing through it such that all other elements of the $1$-parameter family belong to the generic set.} in a (weighted) $C^\infty$ topology. The theorem will then be proven in Section~\ref{sec.pf.SCC}, using some results that are proven in the later parts of the paper, as well as some results which are proven in the companion paper \cite{LO.exterior}. We refer the reader to Section~\ref{sec.main.structure} below for the main elements of the proof. Theorem \ref{main.theorem.both} in particular implies that the smooth Cauchy horizon of Reissner--Nordstr\"om is unstable. In the special case of small perturbations of Reissner--Nordstr\"om, we in fact have more precise information regarding the maximal globally hyperbolic development, see Section~\ref{sec.instab.RN}.

Prior to the present paper, the best known result regarding the validity of the $C^2$-formulation of the strong cosmic censorship for this model in spherical symmetry was achieved in the seminal work of Dafermos \cite{D2} (see also \cite{D1, D3}), who proved a \emph{conditional} $C^2$-future-inextendibility result\footnote{In \cite{D2}, $C^1$-future-inextendibility within spherical symmetry was proven, but $C^2$-future-inextendibility without symmetry was not explicitly shown. This however follows easily from the mass inflation result.}. The required condition, however, remains difficult to verify\footnote{In fact, it is not known whether there exists a single regular solution such that this condition is satisfied.}. Part of our proof is to obtain a new and stronger conditional inextendibility result, so that we show moreover that the condition is satisfied for a generic set of data. See Section~\ref{sec.comparison} for further discussions.

We remark at this point that the class of $2$-ended initial data that we consider in this paper and \cite{LO.exterior} can be easily shown to have a complete future null infinity (with two connected components) \cite{DafTrapped}. Indeed, this is the main simplification that arises from studying the $2$-ended case so that we in particular do not need to handle potential singularities at the center of symmetry. For a more ``realistic'' model\footnote{Notice that there are \underline{no} regular $1$-ended solutions to the Einstein--Maxwell--(real)--scalar field system with data on $\mathbb R^3$ in the presence of charge.}, one may for instance study the Einstein--Maxwell--(charged)--scalar--field system with spherically symmetric initial data posed on $\mathbb R^3$, for which the strong cosmic censorship conjecture remains an open problem. In that case, the full resolution of the strong cosmic censorship conjecture seems to at least require an understanding of singularities arising at the centers of symmetry as well as Cauchy horizons emanating from them \cite{Kommemi}.

The remainder of the introduction will be structured as follows:
\begin{itemize}
\item In Section~\ref{sec.main.structure}, we explain the overall structure of the proof of Theorem~\ref{main.theorem.both}. The reader is encouraged to take this as a guide to our series of papers.
\item In Section~\ref{sec.previous.works}, we give a brief overview of some of the relevant previous results. 
In particular, we provide comparison of our proof with that of linear instability of Reissner--Nordstr\"om Cauchy horizon in \cite{LO.instab}, and with the possible alternative approach using the conditional instability theorem of Dafermos \cite{D2}.
\item Finally, we end the introduction with an outline of the remainder of the paper in Section~\ref{sec.outline}.
\end{itemize}

\subsection{Guide to the series: Structure of the proof of Theorem~\ref{main.theorem.both}}\label{sec.main.structure}

In very rough terms, the proof of Theorem \ref{main.theorem.both} proceeds by first showing that the maximal globally hyperbolic future developments of any ``admissible'' data approach Reissner--Nordstr\"om in a certain sense, and then using the ideas in \cite{LO.instab} (which were originally for linear instability on \emph{fixed} Reissner--Nordstr\"om; see Section~\ref{sec.previous.works} below) to prove nonlinear instability of the Cauchy horizon in the near-Reissner--Nordstr\"om region in the ``generic'' case. Finally, we use nonlinear methods specific to the Einstein--Maxwell--(real)--scalar--field system in spherical symmetry to derive the desired global $C^{2}$-future-inextendibility property from the aforementioned nonlinear instability. 

To discuss the main result and its proof in more detail, we begin with a brief description of the notions of ``admissible'' and ``generic'' initial data, which are necessary for a precise formulation of the strong cosmic censorship conjecture.

\pfstep{Definition of admissible initial data and genericity (Definitions~\ref{def:adm-data} and \ref{def:adm-top})}
Roughly speaking, we will consider \emph{admissible Cauchy initial data sets} which consists of data for the geometric quantities and the matter fields, which are spherically symmetric, two-ended asymptotically flat, future admissible, and have non-vanishing charge. Some remarks on these aspects are in order.
\begin{itemize}
\item By \underline{asymptotically flat}, we mean that the data for the metric, the Maxwell field and the scalar field approach that of the trivial solution (i.e., Minkowski metric with zero Maxwell and scalar fields) near each end at an appropriate inverse polynomial rate in $r$ (by definition, $r \to \infty$ near an asymptotically flat end).
\item It is in fact necessary for the initial hypersurface to have \underline{two} asymptotically flat ends in order to support a non-vanishing charge (or equivalently, a nontrivial spherically symmetric Maxwell field). 
\item The \underline{future admissibility} condition (see Definition~\ref{def:adm-data}.(5)), which was introduced in \cite{D3}, is a natural generalization of the ``no anti-trapped surfaces'' condition of Christodoulou to the 2-ended case. See Steps~1 and 2(a) below for further discussion of its significance.
\end{itemize}
It turns out that the precise nature of the strong cosmic censorship that holds depends on the rates for which the scalar field decays near the asymptotically ends. In the introduction, in order to simplify the exposition, we will only consider\footnote{As we will see in Theorem~\ref{cor:scc.small.omg}, our methods also apply to the case $2 < \omg_{0} < 3$, although the result will be qualitatively different. The significance of the number $3$ is that it is the sharp Price's law decay rate; assuming faster decay rates of the initial $\phi$ and its derivatives does not improve the decay rate of the future development in general.} the case where $\phi = O(r^{-\omg_{0}})$ near each asymptotically flat end, with $\omg_0\geq 3$.

On the space of admissible initial data sets, we define a scale of weighted-$C^{k}$-type distances (see Definition~\ref{def:adm-top}). The notion of genericity used in the precise version of Theorem~\ref{main.theorem.both} (see Theorem~\ref{cor:scc} and the ensuing statements) is defined (roughly) as being open relative to a weighted-$C^{1}$-type distance, and dense relative to a weighted-$C^{\infty}$-type distance. 

\vskip.5em
The proof of Theorem~\ref{main.theorem.both} can be roughly divided into the following five steps: 

\pfstep{Step 1: An a priori boundary characterization \cite{Kommemi} (Theorem~\ref{thm:kommemi})} Thanks to the future admissibility and asymptotic flatness conditions, we may apply a result of \cite{Kommemi} (or more precisely, by an adaptation of \cite{Kommemi} to the $2$-ended case in \cite{D3}) to show that the maximal globally hyperbolic future development $(\calM, g, \phi, F)$ of every admissible initial data set must have a black hole (interior) region $\Int = \Int_{i^{+}_{1}} \cup \Int_{i^{+}_{2}} \cup \Int_{nonpert}$\footnote{For the definition of $\Int_{i^{+}_{1}}$, $\Int_{i^{+}_{2}}$, see Step~2(b). Then we define $\Int_{nonpert} = \Int \setminus (\Int_{i^{+}_{1}} \cup \Int_{i^{+}_{2}})$.} and an exterior region $\Ext = \Ext_{1} \cup \Ext_{2}$ with $2$ connected components $\Ext_{1}$, $\Ext_{2}$ corresponding to the $2$ asymptotically flat ends of the initial hypersurface. Each connected component of the exterior region has a complete future null infinity (denoted by $\NI_{1}$ and $\NI_{2}$, respectively) and approaches a connected component of the exterior region of a subextremal Reissner--Nordstr\"om spacetime. Moreover, the future boundary of the black hole region can be characterized. In fact, using also the results of \cite{D2} and the fact that the charge is non-vanishing, the quotient $\calQ = \calM / SO(3)$ of the maximal globally hyperbolic future development must be given by one of the two Penrose diagrams in Figure~\ref{fig:main.structure}, where we refer the reader to Theorem~\ref{thm:kommemi} for the notation. 
\begin{figure}[h] 
\begin{center}
\def\svgwidth{450px}
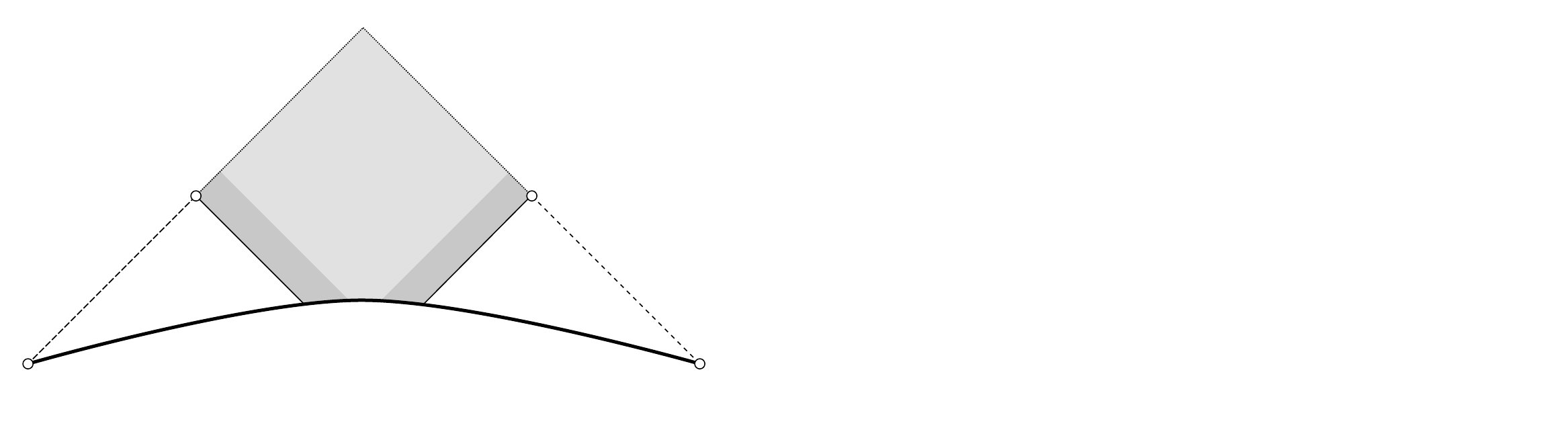 
\caption{Penrose diagram of the maximal globally hyperbolic future development of admissible initial data} \label{fig:main.structure}
\end{center}
\end{figure}

Combining this theorem with results in \cite{D2, DRPL}, it can be shown that the only way that the spacetime can be $C^2$-future-extendible is if there is an extension ``through the Cauchy horizons $\CH_{1} \cup \CH_{2}$''. This is what we will have to rule out in the generic scenario.

\pfstep{Step 2: Convergence to Reissner--Nordstr\"om} In order for the linear analysis in \cite{LO.instab} to be relevant to the nonlinear problem, it needs to be shown that the maximal globally hyperbolic development of every admissible initial data set in fact converges to Reissner--Nordstr\"om\footnote{Note that potentially, the solution approaches two \emph{different} Reissner--Nordstr\"om solutions (i.e., with different parameters of the final masses) ``towards two timelike infinities $i^{+}_{1}$, $i^{+}_{2}$.''} with sufficiently strong estimates. This step is further divided into two substeps. 
\begin{itemize}
\item \pfstep{Step~2(a): Price's law decay in the exterior region \cite{DRPL} (Theorem~\ref{thm:price-law-intro}, Theorem~\ref{thm:DR-decay} or \cite[Section~5]{LO.exterior})} In the exterior region $\Ext_{1} \cup \Ext_{2}$, since the event horizons are subextremal, the seminal work of Dafermos--Rodnianski shows that the spacetime approaches Reissner--Nordstr\"om and the scalar field (and its derivatives) decays with an inverse polynomial rate. In our setting, we will also need some refinements of the theorem of Dafermos--Rodnianski, for which we refer the reader to \cite[Section~5]{LO.exterior}.

\item \pfstep{Step~2(b): $C^0$-stability in the interior region (Theorem~\ref{main.theorem.C0.stability})} In view of the instability result we ultimately prove, in the interior we can only to hope\footnote{We will in fact show that the scalar field is not in $W^{1,2}_{loc}$ in a $C^0$ extension of the spacetime (cf.~Theorem~\ref{final.blow.up.step}). In view of the result for the linear wave equation in \cite{Gleeson}, one expects in general that the scalar field is not even in $W^{1,p}_{loc}$ for any $p>1$. The estimates we prove have to be consistent with this expectation.} to obtain estimates that degenerate at the Cauchy horizon, consistent with $C^0$-stability of the interior region. This is what we achieve in Theorem~\ref{main.theorem.C0.stability}. More precisely, given the scalar-field decay in Step~2(a) in the exterior region (specifically, on the event horizon), we show that the spacetime approaches Reissner--Nordstr\"om (with the same parameters as in the exterior) in some (non-empty!) interior regions $\Int_{i^{+}_{1}}, \, \Int_{i^{+}_{2}} \subset \Int$ sufficiently close to timelike infinity. 

Some of the estimates can be inferred from \cite{D2}, where the $C^0$-extendibility of the interior region was shown (cf.~Theorem~\ref{Dafermos.thm}), but for the later steps we will need a slightly stronger and more quantitative version. For the proof, we combine some ideas from Dafermos--Luk \cite{DL} on the stability of the Kerr Cauchy horizon (without symmetry) with weighted $L^\infty$ estimates that hold in the spherically symmetric setting. 
\end{itemize}

\pfstep{Step~3: The non-vanishing of $\mathfrak L_{(\omg_{0})\infty}$ and $\mathfrak L_{(\omg_{0})\infty}'$ implies $W^{1,2}_{loc}$ blow up of the scalar field near the Cauchy horizon} This can be viewed as a nonlinear version of the result in \cite{LO.instab}. Here we identify a real-valued quantity $\mathfrak L_{(\omg_{0})\infty}$ for an asymptotically flat end, and a corresponding $\mathfrak L_{(\omg_{0})\infty}'$ for the other asymptotically flat end, such that the non-vanishing of these quantities implies that the $W^{1,2}_{loc}$-norm of the scalar field blow up near each of the Cauchy horizons. 

For simplicity, we restrict our attention to one asymptotically flat end (specifically, the one on the right in Figure~\ref{fig:main.structure}); the case of the other asymptotically flat end is analogous. Roughly speaking, $\calL_{(\omg_{0}) \infty}$ measures the leading order coefficient of the expansion of the ``incoming'' part of $\rd \phi$ into powers of $r$ near $i^{+}_{1}$ along null infinity $\NI_{1}$. More precisely, introducing a double null coordinate system\footnote{Throughout this paper, our convention is that $\ud u$ and $\ud v$ are increasing to the future and null. See Section~\ref{sec.SS}.} $(u, v)$ on the region $\Ext_{1} \subset \calQ$ oriented so that the constant-$u$ curves $C_{u}$ are outgoing, we define
\begin{equation*}
	\mathfrak L_{(\omg_{0}) \infty} = \lim_{u \to u_{\EH_{1}}} (\lim_{r \to \infty} r^{3} (\rd_{v} r)^{-1} \rd_{v} (r \phi) \restriction_{C_{u}}),
\end{equation*}
where $u_{\EH_{1}}$ denotes the final $u$-value of null infinity $\NI_{1}$.\footnote{This definition requires $\omg_{0} \geq 3$. The power $r^{3}$ has the same root as the sharp Price's law rate. See Section~\ref{subsec:L-cauchy} for the definition in the case $2 < \omg_{0} < 3$.}

We divide the rest of this step into three substeps.
\begin{itemize}
\item \pfstep{Step~3(a): The non-vanishing of $\mathfrak L_{(\omg_{0}) \infty}$ implies an $L^2$-averaged lower bound of the scalar field on the event horizon (Theorem~\ref{thm:blowup} or \cite[Theorem~4.1]{LO.exterior})} 
In our previous paper \cite{LO.instab} on the linear wave equation on an exact Reissner--Nordstr\"om spacetime, it was shown that if (the suitable linear version of) $\mathfrak{L}_{(\omg_{0}) \infty}$ is nonzero, then the following $L^{2}$-averaged lower bound on the event horizon holds:
\begin{equation} \label{eq:L2-avg-lower}
	\int_{\EH_{1}} v^{\alp} (\rd_{v} \phi)^{2} \, \ud v = \infty \quad \hbox{ for any } \alp > 7,
\end{equation}
where $v$ is the Eddington--Finkelstein advanced null coordinate.

In this substep, we prove the analogous statement in our nonlinear context (see Theorem~\ref{thm:blowup}). Thanks to the Price's law decay estimates of Dafermos--Rodnianski in Step~2(a), we are able to use essentially the same strategy as in the linear case in \cite{LO.instab}.

\item \pfstep{Step~3(b): The lower bound in Step~3(a) implies $W^{1,2}_{loc}$ blow up of the scalar field on the Cauchy horizon near timelike infinity (Theorem~\ref{final.blow.up.step})} 
In \cite{LO.instab}, it was also proven that if the solution $\phi$ of the linear wave equation on exact Reissner--Nordstr\"om obeys \eqref{eq:L2-avg-lower}, then its $W^{1,2}$-norm on a neighborhood of any point on the Cauchy horizon (defined with respect to, say, the analytic extension of Reissner--Nordstr\"om) blows up.
By the decay estimates in our interior $C^{0}$-stability theorem in Step~2(b), we are again able to justify the strategy of proof from \cite{LO.instab} in our context, and establish the analogous $W^{1,2}_{loc}$ blow up\footnote{Here, the $W^{1,2}$ norm is defined with respect to the $C^{0}$ extension given by Step~2(b).} of the scalar field on $\CH_{1} \cap \Int_{i^{+}_{1}}$, provided that the $L^{2}$-averaged lower bound from Step~2(a) holds on $\EH_{1}$. Note that in view of the instability, the estimates we obtain in Step~2(b) are necessarily degenerate near the Cauchy horizon, but nonetheless turns out to be sufficient for our purposes.

\item \pfstep{Step~3(c): The lower bound in Step~3(a) implies $W^{1,2}_{loc}$ blow up of the scalar field on the entire Cauchy horizon (Theorem~\ref{thm.nonpert})} 
Finally, we propagate the $W^{1,2}_{loc}$ blow up of the scalar field on $\CH_{1} \cap \Int_{i^{+}_{1}}$ shown in Step~3(b) to the entire Cauchy horizon $\CH_{1}$. This requires analysis in the region $\Int_{nonpert}$, which is non-perturbative in the sense that the solution is not necessarily close to Reissner--Nordstr\"om. Nevertheless, an important a priori estimate still holds for the model under consideration, namely, any point on $\CH_{1}$ (with the exception of the endpoint) has a neighborhood with \emph{finite} spacetime volume (see Lemma~\ref{lem.nonpert-vol}). This a priori estimate, in turn, allows us to propagate both the $C^{0}$-extendibility and the $W^{1,2}_{loc}$ blow up statements on $\CH_{1} \cap \Int_{i^{+}_{1}}$ (cf. Steps~2(b) and 3(b), respectively) to the entire Cauchy horizon $\CH_{1}$.
\end{itemize}

\pfstep{Step~4: Generic non-vanishing of $\mathfrak L_{(\omg_{0}) \infty}$ and $\mathfrak L_{(\omg_{0}) \infty}'$} 
Let $\calG$ consist of admissible initial data sets whose maximal globally hyperbolic future developments obey $\mathfrak L_{(\omg_{0}) \infty} \neq 0$ and $\mathfrak L'_{(\omg_{0}) \infty} \neq 0$, so that the conclusion of Step~3 (namely, $W^{1,2}$ blow up of the scalar field on the Cauchy horizon) applies. We show that $\calG$ is \emph{generic} in the sense described earlier (see also the substeps below). This step is naturally divided into two substeps.
\begin{itemize}
\item \pfstep{Step~4(a): Nonlinear stability of $\mathfrak L_{(\omg_{0}) \infty}$ (Theorem~\ref{thm:L-stability} or \cite[Theorem~4.2]{LO.exterior})}
To establish openness of $\calG$, it suffices to show that $\mathfrak L_{(\omg_{0}) \infty}$ and $\mathfrak L'_{(\omg_{0}) \infty}$ are nonlinearly stable (or continuous) with respect to initial data perturbations. We focus only on the asymptotically flat end corresponding to $\mathfrak{L}_{(\omg_{0}) \infty}$, as the other case is similar. 

By performing an asymptotic analysis of the wave equation $\Box_{g} \phi = 0$ near null infinity (where ``$r = \infty$''), the quantity $\mathfrak{L}_{(\omg_{0}) \infty}$ can be decomposed into
\begin{equation*}
	\mathfrak L_{(\omg_{0}) \infty} = \mathfrak L + \mathfrak L_{(\omg_{0}) 0},
\end{equation*}
where $\mathfrak L_{(\omg_{0}) 0} = \lim_{r \to \infty} r^{3} (\rd_{v} r)^{-1} \rd_{v} (r \phi) \restriction_{\Sgm_{0}}$ is determined directly by the Cauchy initial data  and $\mathfrak L$ is an integral along null infinity:
\begin{equation} \label{eq:guide.L}
	\mathfrak L = \int_{\NI_{1}} 2 M(u) \Phi(u) \Gmm(u) \, d u.
\end{equation}
Here, $M(u)$, $\Phi(u)$ and $\Gmm(u)$ are limits of the Hawking mass (see Remark~\ref{rem.hawking.mass}), $r \phi$ and $-\frac{1}{4} \frac{\Omg^{2}}{\rd_{v} r}$ along the constant-$u$ curve towards null infinity. Since $\mathfrak L_{(\omg_{0}) 0}$ is clearly continuous with respect to initial data perturbations, it only remains to understand nonlinear stability of $\mathfrak L$. The key point, evident from \eqref{eq:guide.L}, is to show that under small perturbations, the \underline{integral} of the \underline{difference} of $M \Phi$ in an appropriate gauge (say, in which $\Gmm(u) \equiv -1$) is small.

In order to achieve this goal, we establish asymptotic stability of the maximal globally hyperbolic development of any admissible initial data set in the exterior region for initial data perturbations that are small in a suitably weighted $C^{1}$ topology (see Theorem~\ref{thm:L-stability} for the precise definition of the topology). This is, in a sense, the most technically involved part of the entire series. The ingredients of its proof include the Price's law decay theorem in Step~2(a) (to obtain quantitative information about the background solution), choice of suitable future-normalized double null coordinate systems (since decay is expected only in a well-chosen coordinate system), an interaction Morawetz estimate (to control the nonlinearity), $r^{p}$-weighted energy method of Dafermos--Rodnianski \cite{DRNM}, integration along characteristics method from \cite{LO1} (both for proving decay of the nonlinear perturbation) etc. We refer the reader to \cite[Section~8.1]{LO.exterior} for further discussions.

\item \pfstep{Step~4(b): Existence of a continuous one-parameter family of perturbations away from $\mathfrak{L}_{(\omg_{0}) \infty} = 0$ (Theorem~\ref{thm:instability} or \cite[Theorem~4.3]{LO.exterior})}
As before, we only discuss the proof of density of $\set{\mathfrak{L}_{(\omg_{0}) \infty} \neq 0}$, since the case of $\mathfrak{L}'_{(\omg_{0}) \infty}$ is analogous.

Suppose that we are given an admissible initial data set whose maximal globally hyperbolic future development satisfies $\mathfrak{L}_{(\omg_{0}) \infty} = 0$. The idea is to place a smooth compactly supported\footnote{Of course, the perturbation of the metric is \underline{not} compactly supported in general due to the constraint equation, but it is only the perturbation of the scalar field we arrange to be compactly supported.} \underline{outgoing} perturbation of the initial data for $\phi$ of size $\eps > 0$ in the region $\set{r \approx R_{\ast}}$ (near the end corresponding to $\mathfrak{L}_{(\omg_{0}) \infty}$), where $R_{\ast}$ is sufficiently large. On the one hand, by asymptotic flatness, we can perform an explicit calculation (essentially as in exact Reissner--Nordstr\"om) in the domain of dependence of the $\phi$-perturbation to ensure that the contribution to $\mathfrak{L}$ of the perturbation in this region is $\approx \eps$. 
On the other hand, since the $\phi$-perturbation is outgoing, the data on a fixed outgoing null hypersurface to the future of the domain of dependence of the $\phi$-perturbation become small as $R_{\ast} \to \infty$. Indeed, such data are of size $o(\eps)$ as $R_{\ast} \to \infty$, and therefore give negligible contribution to $\mathfrak L$ by the asymptotic stability theorem in Step~4(a).

Since $R_{\ast}$ can be chosen \underline{independent} of $\eps$, the above idea leads to construction of a one-parameter family of perturbations away from $\mathfrak{L}_{(\omg_{0}) \infty} = 0$, which is continuous in a weighted $C^{\infty}$ topology\footnote{More precisely, as regular as the admissible initial data set we started with.} (see Theorem~\ref{thm:instability} for the precise definition of the topology). This implies the desired density of the set $\set{\mathfrak{L}_{(\omg_{0}) \infty} \neq 0}$.
\end{itemize}

\pfstep{Step~5: $C^2$-future-inextendibility of the maximal globally hyperbolic future development (Theorem~\ref{main.theorem.C2})} In the final step, we prove that the generic blow up shown in Steps~3 and 4 in fact implies a \emph{geometric} statement that there does not exist any future extension of the maximal globally hyperbolic future development which has a $C^2$ Lorentzian metric. 
Here, the significance of the regularity $C^{2}$ is that it allows us to pointwisely make sense of the curvature tensor (which is a geometric invariant), whose possible blow up is directly connected with that of the scalar field through the Einstein--Maxwell--(real)--scalar--field system \eqref{EMSFS}. 

More precisely, recall from Step~1 that the goal is to rule out any $C^{2}$ future extensions through the Cauchy horizon. 
The generic $W^{1,2}_{loc}$ blow up of the scalar field on the Cauchy horizon implies, through \eqref{EMSFS}, that a certain component of the Ricci curvature in a frame parallely transported along a geodesic blows up on the Cauchy horizon, which is inconsistent with the $C^{2}$ future extension.

\vskip.5em

As we see from the above steps, the proof of Theorem \ref{main.theorem.both} relies on the analysis both in the interior and the exterior regions of the black hole. The analysis in the interior region $\Int = \Int_{i^{+}_{1}} \cup \Int_{i^{+}_{2}} \cup \Int_{nonpert}$, i.e., Steps 2(b), 3(b), 3(c) and\footnote{Strictly speaking, the proof of inextendibility in Step~5 requires information for both the interior and the exterior regions. Nevertheless, the most difficult step is to rule out the possibility of extending the spacetime through the boundary of the interior region.} 5 are carried out in this paper. The remaining steps, i.e., Steps 3(a), 4(a) and 4(b), which constitute analysis in the exterior region $\Ext = \Ext_{1} \cup \Ext_{2}$, are carried out in \cite{LO.exterior}. We refer the reader to Sections~\ref{sec.SCC}-\ref{sec.pf.SCC} for precise statements that are proven for each of these steps and how they fit together.

\subsection{Previous works}\label{sec.previous.works}
In this subsection, we provide a brief survey of some previous works to place our main result (Theorem~\ref{main.theorem.both}) and the ideas of its proof in context.

\subsubsection{Linear wave equation on Reissner--Nordstr\"om spacetime}

The simplest setting to study the stability and instability properties of the Cauchy horizon in the interior of Reissner--Nordstr\"om spacetime is to consider the \underline{linear} scalar wave equation $\Box_{g_{RN}}\phi=0$ where the background Reissner--Nordstr\"om metric $g_{RN}$ is \underline{fixed}. This problem has a long tradition in the physics literature and has attracted much renewed recent interest from the mathematical community. We refer the reader to \cite{CH, DafShl, Fra, Gleeson, GSNS, Hintz, LS, McN, SP, Sbi.2} for a sample of results and to the introduction of \cite{LO.instab} for further discussions. In the interior of the black hole, solutions to the linear wave equation exhibit both stability and instabilty properties. While $\phi$ itself is uniformly bounded and in fact decays along the Cauchy horizon \cite{Fra, Hintz}, the derivative of $\phi$ in a direction transversal to the Cauchy horizon blows up. Indeed, we have the following instability result:

\begin{theorem}[Linear instability on fixed Reissner--Nordstr\"om, Luk--Oh \cite{LO.instab}]\label{linear.thm}
Generic smooth and compactly supported initial data to the linear wave equation $\Box_{g_{RN}}\phi=0$ on a $2$-ended asymptotically flat complete Cauchy hypersurface of a fixed subextremal Reissner--Nordstr\"om spacetime with nonvanishing charge give rise to solutions that are not in $W^{1,2}_{loc}$ in a neighborhood of any point on the future Cauchy horizon $\CH$.
\end{theorem}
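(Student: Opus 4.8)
\emph{Proof sketch.} The plan is to locate the instability in the spherically symmetric part of $\phi$. Since the $W^{1,2}_{loc}$-norm of $\phi$ dominates the $W^{1,2}_{loc}$-norm of its projection $\phi_{0}$ onto the $\ell=0$ spherical harmonic (the latter carries no angular derivatives), and since perturbing $\phi_{0}$ alone is an admissible perturbation within smooth compactly supported data, it is enough to produce an open dense set of $\ell=0$ data for which $\phi_{0}\notin W^{1,2}$ on a neighborhood of every point of $\CH$; by the symmetry between the two ends I will treat one component $\CH_{1}$. Working in a double null coordinate system $(u,v)$ on the interior (with $\rd_{u},\rd_{v}$ future-directed) and setting $\psi=r\phi_{0}$, the equation $\Box_{g_{RN}}\phi=0$ reduces to the transport equation $\rd_{u}\rd_{v}\psi = \tfrac{\rd_{u}\rd_{v}r}{r}\,\psi$, where $\rd_{u}\rd_{v}r$ is an explicit function that vanishes exponentially, at the rate $\kpp_{-}>0$ of the surface gravity of the inner horizon, in the advanced null coordinate $v$ as $r\to r_{-}$ at $\CH_{1}$. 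Here $\EH_{1}$ is a constant-$u$ outgoing null segment of the past boundary of the interior and $\CH_{1}$ is approached as $v\to\infty$.

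First I would establish a conditional lower bound on the event horizon. Starting from Price's law decay for $\Box_{g_{RN}}\phi=0$ (the linear analogue of the theorem of Dafermos--Rodnianski \cite{DRPL}) together with an asymptotic analysis of the wave equation near spacelike and timelike infinity, one isolates a real number $\mathfrak{L}$ built from the initial data --- for compactly supported data a nontrivial continuous linear functional, essentially $\mathfrak{L}=2M\int_{\NI_{1}}\Phi\,\Gmm\,\ud u$, the $\Sgm_{0}$-contribution vanishing by compact support --- with the property that if $\mathfrak{L}\neq 0$ then
\[
	\int_{\EH_{1}} v^{\alp}\,(\rd_{v}\phi_{0})^{2}\,\ud v \;=\; +\infty \qquad\text{for every } \alp>7,
\]
where $v$ is the advanced Eddington--Finkelstein coordinate. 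One then checks that $\set{\mathfrak{L}\neq 0}$ is generic: openness follows from continuity of $\mathfrak{L}$ in the data, and density follows by adding to any datum with $\mathfrak{L}=0$ a small compactly supported \emph{outgoing} perturbation of $\phi_{0}$ concentrated in a far region $\set{r\approx R_{\ast}}$; an explicit computation near $\NI_{1}$ (essentially as on exact Reissner--Nordstr\"om) shows such a perturbation changes $\mathfrak{L}$ by an amount comparable to its size, while $R_{\ast}$ can be chosen independently of the size.

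Assuming $\mathfrak{L}\neq 0$, I would then propagate this lower bound through the interior. A Duhamel/Gr\"onwall argument for the transport equation --- using that $\tfrac{\rd_{u}\rd_{v}r}{r}$ is absolutely integrable over the interior and that the horizon data decay --- shows $\phi_{0}$, hence $\psi$, is bounded up to $\CH_{1}$ (the $C^{0}$-stability of \cite{D2}, classical on fixed subextremal Reissner--Nordstr\"om). Integrating $\rd_{u}\rd_{v}\psi=\tfrac{\rd_{u}\rd_{v}r}{r}\psi$ in $u$ from $\EH_{1}$ to a nearby constant-$u$ slice terminating on $\CH_{1}$, the change in $\rd_{v}\psi$ is $O(e^{-\kpp_{-}v})$, because $\rd_{u}\rd_{v}r$ is $O(e^{-\kpp_{-}v})$ on the intervening region and $\psi$ is bounded; hence $\rd_{v}\psi$ on the slice equals its $\EH_{1}$-data up to an exponentially small error and therefore still satisfies $\int e^{\kpp_{-}v}(\rd_{v}\psi)^{2}\,\ud v=+\infty$, since $e^{\kpp_{-}v}$ dominates every power $v^{\alp}$ and the error contributes $\int e^{\kpp_{-}v}\,O(e^{-2\kpp_{-}v})\,\ud v<\infty$. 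In a coordinate $V\sim e^{-\kpp_{-}v}$ that is regular across $\CH_{1}$ the transversal derivative satisfies $\rd_{V}\psi\sim e^{\kpp_{-}v}\rd_{v}\psi$, so $\int|\rd_{V}\psi|^{2}\,\ud V\sim \int e^{\kpp_{-}v}(\rd_{v}\psi)^{2}\,\ud v=+\infty$; as this holds uniformly in $u$ (the interior being exactly Reissner--Nordstr\"om), $\phi_{0}\notin W^{1,2}$ on a neighborhood of every point of $\CH_{1}$, and with the first paragraph this proves the theorem.

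The hard part will be the second step rather than the interior propagation. The instability mechanism really resides in the assertion that the exterior evolution produces, for \emph{generic} compactly supported data, not merely pointwise decay but a genuine averaged \emph{lower} bound on $\rd_{v}\phi_{0}$ along $\EH_{1}$ --- which is exactly what the functional $\mathfrak{L}$ and its continuity and density properties encode --- and in carrying out the asymptotic analysis near $i^{0}$ and $i^{+}$ precisely enough to read off $\mathfrak{L}$. Once the horizon lower bound is in hand, the interior argument is comparatively soft: the source in the transport equation is exponentially suppressed, so the polynomially decaying transversal derivative survives essentially intact to $\CH_{1}$, where it has no chance against the blueshift factor $e^{\kpp_{-}v}$.
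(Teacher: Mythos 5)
Your treatment of the exterior problem --- isolating the functional $\mathfrak{L}$ from the asymptotics near $\NI_{1}$, showing that $\mathfrak{L}\neq 0$ implies $\int_{\EH_{1}}v^{\alp}(\rd_{v}\phi_{0})^{2}\,\ud v=\infty$ for $\alp>7$, and proving genericity of $\{\mathfrak{L}\neq 0\}$ via far-field outgoing perturbations --- is a faithful summary of the strategy in \cite{LO.instab} and of what became Theorems~\ref{thm:blowup}--\ref{thm:instability} here. The interior step, however, contains a genuine error, and you have misjudged which part of the argument is hard.

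The claim that ``$\rd_{u}\rd_{v}r$ is $O(e^{-\kpp_{-}v})$ on the intervening region'' is false. In the interior $(u,v)$-coordinates of Section~\ref{sec.null.1}, $\EH_{1}=\{u=-\infty\}$, $\CH_{1}=\{v=+\infty\}$, and $\rd_{u}\rd_{v}r\sim\Omg_{RN}^{2}$ is a function of $r^{*}=u+v$ alone; it is exponentially small as $r^{*}\to-\infty$ (near $\EH_{1}$, rate $\kpp_{+}$) and as $r^{*}\to+\infty$ (near $\CH_{1}$, rate $\kpp_{-}$), but it is $O(1)$ in the middle range $r^{*}=O(1)$. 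When you integrate the transport equation in $u'$ from $-\infty$ (the event horizon) to a fixed $u_{0}$ along a constant-$v$ curve with $v$ large, the variable $u'+v$ sweeps through all of $(-\infty,u_{0}+v)$, including $u'+v\approx 0$ where the source coefficient is order one; the total mass $\int_{-\infty}^{u_{0}}\Omg_{RN}^{2}(u'+v)\,\ud u'=\int_{-\infty}^{u_{0}+v}\Omg_{RN}^{2}(\tilde u)\,\ud\tilde u$ tends to a positive constant, not to zero. With the best available interior bound $|\psi|\lesssim v^{-s}+|u'|^{-s+1}$ (so $|\psi|\lesssim v^{-s+1}$ where the source peaks, at $u'\approx-v$), the change in $\rd_{v}\psi$ across the interior is $O(v^{-s+1})$, \emph{not} $O(e^{-\kpp_{-}v})$. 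This is comparable to, indeed larger than, $\rd_{v}\psi$ on $\EH_{1}$ (which is $\lesssim v^{-s}$ by Price's law), so the naive perturbative comparison fails; and $\int e^{2\kpp_{-}v}\,(v^{-s+1})^{2}\,\ud v=\infty$, so such an error term cannot be absorbed in the blue-shifted norm either.

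This is precisely why the interior step cannot be carried out by a Gr\"onwall/transport argument and is, contrary to your assessment, the technically delicate part of \cite{LO.instab}. The correct argument is a contrapositive $L^{2}$ argument (cf.\ Theorem~\ref{final.blow.up.step} and Section~\ref{sec.blow.up}): assume $\int\log_{+}^{\alp_{0}}(\tfrac{1}{\Omg})(\rd_{v}\phi)^{2}\,\ud v<\infty$ on an interior outgoing slice, solve \emph{towards} the event horizon, and deduce $\int v^{\alp_{0}}(\rd_{v}\phi)^{2}\,\ud v<\infty$ on $\EH_{1}$. In this backward direction the blue-shift becomes a red-shift, and the decay is extracted by combining an almost-conserved energy, an integrated local energy decay estimate with a weight degenerating at both horizons, and red-shift estimates near each horizon --- it is not a pointwise transport estimate, and the source is not uniformly exponentially suppressed along the path from $\EH_{1}$ to $\CH_{1}$.
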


As we already discussed in Section~\ref{sec.main.structure}, the methods introduced in \cite{LO.instab} also play a crucial role in establishing instability for the nonlinear problem in spherical symmetry considered in this paper and \cite{LO.exterior}. Theorem \ref{linear.thm} is proven via showing that the spherically symmetric part of a solution arising from generic smooth and compactly supported initial data is not in $W^{1,2}_{loc}$ in a neighborhood of any point on the future Cauchy horizon. Its proof has the following three parts\footnote{These concern only one of the two asymptotically flat ends of Reissner--Nordstr\"om. An analogue of each part holds in the other end. In particular, the solution we construct in (3) can be made to vanish on the other component of the exterior region, and (by linearity) one can therefore easily add to it a solution that blows up on the ``outgoing'' part of the Cauchy horizon, constructed in essentially the same manner, so as to ensure that the solution blows up on the whole Cauchy horizon.}:
\begin{enumerate}
\item It is shown that in the interior of the black hole region if an $L^2$-averaged polynomial lower bound holds for the spherically symmetric part of the solution on the event horizon, then the solution is not in $W^{1,2}_{loc}$ in a neighborhood of any point on the future Cauchy horizon.
\item Moreover, a quantity\footnote{Notice that in the present paper, we introduce in addition the notation $\mathfrak L_{(\omg_{0}) \infty}$, which coincides with $\mathfrak L$ (in both this paper and \cite{LO.instab}) for solutions arising from compactly supported initial data.} $\mathfrak L$ at future null infinity associated to the spherically symmetric part of the solution is identified in \cite{LO.instab}. The quantity $\mathfrak L$, which is real-valued and depends linearly on $\phi$, moreover has the property that whenever $\mathfrak L\neq 0$, the $L^2$-averaged polynomial lower bound for the solution on the event horizon required in part (1) holds.
\item Finally, it is shown that $\mathfrak L$ is generically non-vanishing, by exhibiting a spherically symmetric solution to the linear wave equation with smooth compactly supported initial data on $\Sgm_{0}$ for which $\mathfrak{L} \neq 0$.
\end{enumerate}
In the language of Section~\ref{sec.main.structure}, Steps~3(b), 3(a) and 4 can be viewed as analogues of (1), (2) and (3) above, respectively, but in a \underline{nonlinear} setting.

\subsubsection{Nonlinear stability and instability of the Cauchy horizon in spherical symmetry} \label{sec.prev-works-nonlinear}

Going beyond the linear wave equation, the next simplest problem is to consider a nonlinear model but restricted to spherical symmetry\footnote{Recall that Reissner--Nordstr\"om is spherically symmetric.}. In part due to the fact that both stable and unstable features can be seen in the linear theory, the nature of the ``singularity'' that arises from perturbing Reissner--Nordstr\"om has been widely debated. In particular, it was speculated that nonlinear perturbations of Reissner--Nordstr\"om initial data may lead to a spacelike singularity. The study of the stability and instability of the Reissner--Nordstr\"om Cauchy horizon was initiated in the pioneering works of Hiscock \cite{Hiscock}, Poisson--Israel \cite{PI1, PI2} and Ori \cite{Ori}, who studied the Einstein equation coupled with null dusts in spherical symmetry. These works suggest that under nonlinear perturbations, the spacetime is regular up to the Cauchy horizon, which is in particular null, and the spacetime metric extends continuously to the Cauchy horizon. Nevertheless, generically, the metric ``blows up'' in the sense that the Hawking mass (described in Remark~\ref{rem.hawking.mass} below) is identically infinite on the Cauchy horizon.

The stability and instability of the Reissner--Nordstr\"om Cauchy horizon was finally settled mathematically in the seminal work\footnote{See also \cite{CGNS2, CGNS3 ,KomThe} for recent extensions of the results of Dafermos.} of Dafermos \cite{D2} (see also \cite{D1}) in the context of the characteristic initial value problem for the Einstein--Maxwell--(real)--scalar field system posed in the black hole interior. He showed that for all initial data on the event horizon approaching Reissner--Nordstr\"om sufficiently fast, in a neighborhood of timelike infinity, the spacetime has a null boundary such that the metric remains continuous. In particular, when sufficiently close to timelike infinity, there are no ``first singularities'' arising before the null boundary. The instability of the smooth Cauchy horizon as suggested by the strong cosmic censorship conjecture only manifests itself in that for a ``large subset'' of admissible initial data, the null boundary is also a null singularity such that the metric cannot be extended beyond in $C^1$ in spherical symmetry (see further discussions in \cite{D2}). We summarize the stability result in \cite{D2} as follows\footnote{In \cite{D2}, the decay rate for $\phi$ is not needed. Also, there is a version of the result requiring only $s>\f 12$. We state here only a version that is easy to compare with Theorem \ref{main.theorem.C0.stability} in which we do not optimize the necessary decay rate on the event horizon. This is in particular because the results in \cite{DRPL} show that the bound holds for some $s>1$ for solutions arising from asymptotically flat Cauchy data.}:
\begin{theorem}[Stability theorem, Dafermos \cite{D2}]\label{Dafermos.thm}
Fix $M$, ${\bf e}$ and $s$ such that $0<|{\bf e}|<M$ and $s>1$. Consider the characteristic initial value problem with smooth data given on $C_{-\infty}$ and $\underline{C}_1$ such that $C_{-\infty}$ approaches the event horizon of Reissner--Nordstr\"om with ${\bf e}$ and $M$ and such that in an ``Eddington--Finkelstein type'' coordinate system, we have
$$(|\phi|+|\rd_v\phi|)\restriction_{C_{-\infty}}(v)\leq E v^{-s}.$$
Then, by restricting to some nonempty, connected subset $\underline{C}'_1\subset \underline{C}_1$, the globally hyperbolic future development of the data on $\underline{C}'_1\cup C_{-\infty}$ has a Penrose diagram given by Figure \ref{fig:maintheorem}.
Moreover, the area-radius function $r$ and the scalar field $\phi$ extend continuously to the Cauchy horizon $\mathcal C\mathcal H^+$.
\end{theorem}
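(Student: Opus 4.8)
\emph{Setup.} The plan is to pass to the spherically symmetric reduction of \eqref{EMSFS} in a double null gauge and run a bootstrap (continuity) argument on a semi-infinite characteristic rectangle. Writing the quotient metric as $-\Omg^{2}\,\ud u\,\ud v + r^{2}\,\ud\sgm_{\bbS^{2}}$ with the charge $\bfe$ constant, and defining the renormalized Hawking mass $\varpi$ by $1-\mu := 1 - \tfrac{2\varpi}{r} + \tfrac{\bfe^{2}}{r^{2}} = -\tfrac{4\,\rd_{u}r\,\rd_{v}r}{\Omg^{2}}$, the system \eqref{EMSFS} reduces to: the two Raychaudhuri equations $\rd_{u}(\Omg^{-2}\rd_{u}r) = -r\Omg^{-2}(\rd_{u}\phi)^{2}$, $\rd_{v}(\Omg^{-2}\rd_{v}r) = -r\Omg^{-2}(\rd_{v}\phi)^{2}$; the wave equation, conveniently $\rd_{u}(r\rd_{v}\phi) = -\rd_{v}r\,\rd_{u}\phi$ (and its $u\leftrightarrow v$ mirror); the evolution equation for $\rd_{u}\rd_{v}r$, whose inhomogeneity is a Reissner--Nordstr\"om-type ``potential'' in $r$, $\varpi$, $\bfe$, $\Omg$; and the mass equations for $\rd_{u}\varpi$, $\rd_{v}\varpi$, quadratic in $\rd\phi$. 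After fixing a gauge on the two initial cones (e.g.\ Reissner--Nordstr\"om Eddington--Finkelstein coordinates along $C_{-\infty}$), the relevant domain is the rectangle $\calP = \{u_{\EH} \leq u \leq U_{s}\}\times\{v_{1} \leq v < \infty\}$, bordered by $C_{-\infty}$ and $\underline{C}'_{1}$, with $\CH$ appearing as the limiting edge $\{v = \infty\}$ and $i^{+}$ as the corner $\{u = u_{\EH},\, v = \infty\}$; the restriction $\underline{C}'_{1} \subset \underline{C}_{1}$ corresponds to taking $U_{s}$ close to $u_{\EH}$. The reduced system enjoys a continuation criterion --- the solution persists as long as $r$ stays bounded away from $0$ --- so it suffices to propagate a uniform lower bound $r \geq r_{1} > 0$ together with a closed set of estimates.

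\emph{The bootstrap.} The continuity argument would propagate, schematically: (a) $r \geq \tfrac12 r_{-}$, with $r_{\pm}$ the horizon radii of the relevant Reissner--Nordstr\"om solution; (b) two-sided control of $\kpp := -\tfrac{\Omg^{2}}{4\rd_{u}r} = \tfrac{\rd_{v}r}{1-\mu}$; (c) an exponential decay $|\rd_{v}r| + \Omg^{2} \lesssim e^{-c(v-u)}$ in the region where $r$ is near $r_{-}$, with rate $c$ close to twice the inner-horizon surface gravity $\kpp_{-} = \tfrac{r_{+}-r_{-}}{2 r_{-}^{2}}$; (d) boundedness of $\phi$ together with a $v^{-s}$-type bound for $r\rd_{v}\phi$; (e) a bound on $\varpi$. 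To improve these: the $u$-Raychaudhuri equation gives $\rd_{u}\kpp = \kpp\, r(\rd_{u}\phi)^{2}/\rd_{u}r \leq 0$ (since $\rd_{u}r < 0$ in the trapped region), so $\kpp$ is monotone in $u$, bounded above by its essentially Reissner--Nordstr\"om value on $C_{-\infty}$ and bounded below using the same equation together with control of the $\phi$-flux across incoming cones; this controls $\Omg^{2}$ in terms of $\rd_{u}r$, and inserting it into the $\rd_{u}\rd_{v}r$ equation, integrated inward from $C_{-\infty}$ (on which $\rd_{v}r$ is asymptotically trivial), recovers (c) --- the interior counterpart of the blue-shift focusing effect, which is the structural heart of the argument. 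Integrating $\rd_{u}(r\rd_{v}\phi) = -\rd_{v}r\,\rd_{u}\phi$ from $C_{-\infty}$, integrating by parts to trade the factor $\rd_{u}\phi$ for $\phi$, and using $|r\rd_{v}\phi|\restriction_{C_{-\infty}} \lesssim v^{-s}$ together with (c), propagates (d). The mass equations, integrated against these bounds, prevent $\varpi$ from degrading the Reissner--Nordstr\"om-type potential over the window of $v$ relevant for (c), closing the loop. Finally, $\rd_{v}r < 0$ makes $r(u,\cdot)$ decreasing, so $\int_{v_{1}}^{\infty}|\rd_{v}r|(u,v)\,\ud v$ converges; comparison with exact Reissner--Nordstr\"om --- legitimate precisely because $\underline{C}'_{1}$ is taken close to $C_{-\infty}$ --- shows this integral is close to $r(u,v_{1}) - r_{-}$, so $r(u,\infty)$ is close to $r_{-}$, improving (a) to $r \geq \tfrac34 r_{-}$. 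By the continuation criterion the solution then exists on all of $\calP$.

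\emph{Extension to $\CH$.} With the solution and these estimates on $\calP$: since $r(u,\cdot)$ is monotone and bounded below by $\tfrac34 r_{-} > 0$, the limit $r_{\CH}(u) := \lim_{v\to\infty} r(u,v) > 0$ exists; the uniform bound on $\rd_{u}r$ makes $r_{\CH}$ continuous in $u$, and Dini's theorem upgrades the monotone pointwise convergence to local uniform convergence, giving the continuous extension of $r$ to $\CH = \{v = \infty\}$. For $\phi$: from $|r\rd_{v}\phi| \lesssim v^{-s}$ with $s > 1$ (plus exponentially small corrections) and $r \geq \tfrac34 r_{-}$, one gets $\int_{v_{1}}^{\infty}|\rd_{v}\phi|(u,v)\,\ud v < \infty$ uniformly in $u$, so $\phi(u,v)$ converges as $v\to\infty$ uniformly on compact $u$-sets, giving the continuous extension of $\phi$. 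Tracking the signs $\rd_{u}r, \rd_{v}r < 0$ (trapped region) and the decay of $\Omg^{2}$ confirms that the future boundary of the development of $\underline{C}'_{1}\cup C_{-\infty}$ has the null boundary structure of Figure~\ref{fig:maintheorem}.

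\emph{The main difficulty.} The crux is the interplay, within a single bootstrap, between extracting the exponential decay (c) and controlling the mass $\varpi$ in (e): the rate $c$ is governed by the Reissner--Nordstr\"om-type potential, hence by $\varpi$, yet $\varpi$ is monotone increasing and generically inflates to $+\infty$ at $\CH$, so one must generate enough decay of $\Omg^{2}$ and $\rd_{v}r$ \emph{before} $\varpi$ grows enough to spoil the rate. This self-consistency only closes when the $v$-interval over which $\varpi$ remains controlled is long, which forces $U_{s}$ close to $u_{\EH}$ --- equivalently, the passage to a strict subset $\underline{C}'_{1}\subsetneq\underline{C}_{1}$ in the statement. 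A further subtlety, in view of the instability results elsewhere in the paper, is that all of these estimates must degenerate as $v\to\infty$, so the bootstrap norms have to be designed to capture $C^{0}$ --- but no better --- control at the Cauchy horizon.
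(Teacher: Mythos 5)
The theorem you set out to prove is not actually proved in this paper: Theorem~\ref{Dafermos.thm} is quoted directly from Dafermos \cite{D2} as background. What the paper proves is a closely related but strictly stronger statement, Theorem~\ref{main.theorem.C0.stability} (Section~\ref{sec.main.theorem.C0.stability}), whose proof is quite different both from the argument in \cite{D2} and from the one you sketch. Your proposal reconstructs, with reasonable fidelity at the level of a sketch, the scheme of \cite{D2}: bootstrap two-sided bounds on $\kpp = \rd_{v} r/(1-\mu)$, track $\varpi$, extract the exponential decay of $\Omg^{2}$ from the Reissner--Nordstr\"om-type potential, and manage the competition between that decay and mass inflation by shrinking $\underline{C}'_{1}$. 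That is a valid route to the stated $C^{0}$-extendibility conclusion.

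The paper's own approach (Section~\ref{sec.main.theorem.C0.stability}) is structurally different, adapted from \cite{DL}. It tracks the vector of differences from the fixed Reissner--Nordstr\"om background, $\Psi = (\phi,\ r - r_{RN},\ \log(\Omg/\Omg_{RN}))$, in $L^{\infty}$ norms weighted polynomially in $|u|$ and $v$; it uses no-loss calculus lemmas (Lemmas~\ref{lemma.1}, \ref{lemma.2}) to integrate against the exponential background weights; and it decomposes the interior into a red-shift region $\mathcal{R}$ and finitely many thin blue-shift regions $\mathcal{B}_{i}$ with small $r$-variation, which substitutes for Gr\"onwall. Crucially, as stated explicitly in Section~\ref{stab.ideas}, \emph{no estimate on the Hawking mass $\varpi$ is made at all}: $\varpi$ blows up on $\CH$ for a large subclass of data, so any bootstrap hypothesis of the kind in your item~(e) is avoided from the outset. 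This is also where your sketch is weakest: a uniform bound on $\varpi$ cannot be propagated to $\CH$ (it fails generically), and Dafermos's actual argument in \cite{D2} is correspondingly more delicate, controlling $\varpi$ (or quantities like $(1-\mu)/\rd_{u}r$) only region-by-region and only as far as the structure permits. Your final paragraph shows you are aware of the tension, but a rigorous version of your proposal would need to replace (e) by a region-localized, non-uniform estimate rather than a bootstrap bound. What the paper's difference-quantity route buys is the sharp weighted decay estimates on $\rd_{u}\Psi,\rd_{v}\Psi$ that are later fed into the instability argument of Theorem~\ref{final.blow.up.step}; the \cite{D2}-style argument you outline delivers continuous extendibility of $r$ and $\phi$ but not those quantitative refinements.
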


\begin{figure}[htbp]
\begin{center}
\def\svgwidth{150px}
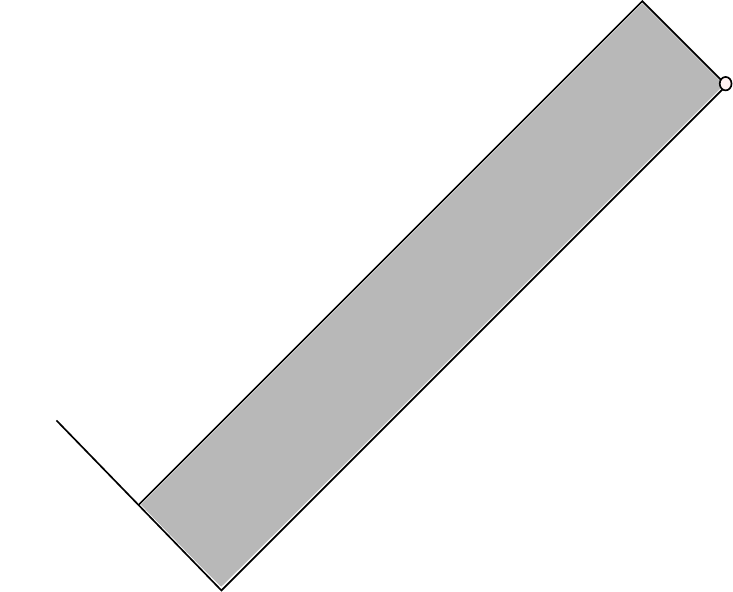 
\caption{Penrose diagram of the development of data on $\underline{C}'_1\cup C_{-\infty}$} \label{fig:maintheorem}
\end{center}
\end{figure}

The stability theorem in \cite{D2} is complemented by the following conditional instability result:
\begin{theorem}[Conditional instability theorem, Dafermos \cite{D2}]\label{Dafermos.instability.thm}
If, in addition to the assumptions in Theorem \ref{Dafermos.thm}, there exists $\epsilon>0$, $c>0$ and $v_*\geq 1$ such that the following \underline{pointwise} lower bound holds  :
$$ |\rd_v\phi|\restriction_{C_{-\infty}}(v)\geq c v^{-3s+\epsilon} $$
for all $v\geq v_*$ (with $s$ as in Theorem \ref{Dafermos.thm}),
then the Hawking mass is identically infinite along the Cauchy horizon $\mathcal C\mathcal H^+$.
\end{theorem}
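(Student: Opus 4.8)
The plan is to argue by contradiction, exploiting the monotonicity of the renormalized Hawking mass $\varpi$ in the outgoing direction together with the blueshift effect near the Cauchy horizon. Set up a double null coordinate system $(u,v)$ on the globally hyperbolic future development of the data on $\underline{C}_1'\cup C_{-\infty}$ with $\rd_u,\rd_v$ future-directed, normalized so that (by Theorem~\ref{Dafermos.thm}) the Cauchy horizon $\CH$ is approached as $v\to\infty$ along each outgoing ray $C_u$, and $r\to r_{\CH}(u)>0$ continuously. In the black hole interior $\rd_u r<0$ and $\rd_v r<0$, and the spherically symmetric Einstein--Maxwell--scalar-field equations give
$$
\rd_v\varpi = -\frac{2(\rd_u r)\,r^2 (\rd_v\phi)^2}{\Omg^2}\;\geq\;0, \qquad \rd_u\rd_v(r\phi) = \phi\,\rd_u\rd_v r = \frac{\Omg^2\phi}{2r^2}\Big(\frac{{\bf e}^2}{r}-\varpi\Big),
$$
together with the Raychaudhuri equation $\rd_u\!\big(\tfrac{\rd_u r}{\Omg^2}\big) = -\tfrac{r(\rd_u\phi)^2}{\Omg^2}\le 0$ and the mass identity $1-\tfrac{2\varpi}{r}+\tfrac{{\bf e}^2}{r^2} = -\tfrac{4(\rd_u r)(\rd_v r)}{\Omg^2}$. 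The first relation shows $\varpi$ is nondecreasing in $v$ along each $C_u$, so $\varpi_{\CH}(u):=\lim_{v\to\infty}\varpi(u,v)$ exists in $(-\infty,+\infty]$; an analogous computation gives monotonicity in $u$. Hence, if $\varpi$ were not identically $+\infty$ on $\CH$, there would be a value $u_0$ with $\varpi_{\CH}(u)<\infty$ for all $u\le u_0$, and by the quantitative estimates of Theorem~\ref{Dafermos.thm}, $\varpi$ would be uniformly bounded on $\{u\le u_0\}$ up to $\CH$, with the geometry there staying quantitatively close to a fixed subextremal Reissner--Nordström solution all the way to $\CH$. We derive a contradiction from this.

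Under the contradiction hypothesis, the near-Reissner--Nordström control yields on $\{u\le u_0\}$: first, that $|\rd_v r|$ degenerates rapidly toward $\CH$ — at least exponentially in $v$ — so $|\rd_v r|^{-1}$ grows faster than any power of $v$; and second, using the $\rd_u$-monotonicity of $\rd_u r/\Omg^2$, the value of this quantity near $C_{-\infty}$, and the mass identity, that $|\rd_u r|/\Omg^2$ is bounded below by a positive constant. Rewriting the mass equation as $\rd_v\varpi = \tfrac12\big(1-\tfrac{2\varpi}{r}+\tfrac{{\bf e}^2}{r^2}\big)(\rd_v r)^{-1}r^2(\rd_v\phi)^2$ and integrating along $C_{u_0}$,
$$
\varpi_{\CH}(u_0)-\varpi(u_0,1) = \int_1^\infty \rd_v\varpi\,\ud v \;\gtrsim\; \int_1^\infty \frac{r^2(\rd_v\phi)^2}{|\rd_v r|}\,\ud v .
$$
Thus it suffices to establish a polynomial lower bound $|\rd_v\phi(u_0,v)|\gtrsim v^{-N}$ for large $v$: together with the super-polynomial growth of $|\rd_v r|^{-1}$ and the bound $r\ge r_{\CH}(u_0)>0$, the integrand is then non-integrable, contradicting $\varpi_{\CH}(u_0)<\infty$.

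It remains to propagate the pointwise lower bound for $\rd_v\phi$ from $C_{-\infty}$ into the interior. On $C_{-\infty}$ one has $\rd_v r=0$, so $\rd_v(r\phi)|_{C_{-\infty}} = r_+\rd_v\phi|_{C_{-\infty}}$ has magnitude $\gtrsim c\,v^{-3s+\epsilon}$ by hypothesis. Integrating the transport equation $\rd_u(\rd_v(r\phi)) = \tfrac{\Omg^2\phi}{2r^2}(\tfrac{{\bf e}^2}{r}-\varpi)$ in $u$ from $C_{-\infty}$ to $u_0$, and bounding the correction term using the uniform bound on $\varpi$, the interior decay of $\phi$ (of the form $|\phi|\lesssim v^{-s}$, or faster in the near-Reissner--Nordström regime), and the control on $\int\Omg^2\,\ud u$ along ingoing rays afforded by the near-Reissner--Nordström comparison, one finds that $\rd_v(r\phi)(u_0,v)$, and hence $\rd_v\phi(u_0,v) = r^{-1}\big(\rd_v(r\phi)-\phi\,\rd_v r\big)$, still decays no faster than a fixed power of $v$. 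This is precisely where the exponent $3s$ in the hypothesis enters: it must dominate the loss incurred by the correction term in the interior. Taking $N$ to be the resulting power completes the contradiction.

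The main obstacle is the circular structure of the argument: one must first assume $\varpi$ bounded (on $\{u\le u_0\}$) in order to deduce, via the stability theorem, sharp uniform-up-to-$\CH$ estimates showing that the degeneration of $\rd_v r$ and $\Omg^2$ is governed by the Reissner--Nordström background — and it is exactly this control that powers the blueshift amplification $|\rd_v r|^{-1}\to\infty$ that forces the mass to diverge. Making this self-consistent, and simultaneously tracking the interior decay of $\phi$ carefully enough that the propagated lower bound on $\rd_v\phi$ along $C_{u_0}$ survives the inhomogeneous term in its transport equation, is the technical heart of the proof.
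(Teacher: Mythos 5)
This theorem is stated in the paper only as a recalled result, attributed to Dafermos \cite{D2}; the paper provides no proof of it (indeed, the authors deliberately bypass it by proving the stronger conditional instability result, Theorem~\ref{final.blow.up.step}, under an $L^2$-averaged lower bound). So there is no in-paper proof to compare against, and I will assess your sketch on its own merits.

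Your sketch has the right broad shape — argue by contradiction, use the sign of $\rd_v \varpi$ to reduce to showing divergence of $\int_1^\infty \rd_v \varpi\, \ud v$ along some $C_{u_0}$, and try to feed in a propagated lower bound on $\rd_v\phi$ together with an amplifying blueshift weight. However, there are two concrete gaps, and they are precisely at the points you flag as ``the technical heart.''

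First, the displayed estimate $\rd_v\varpi \gtrsim r^2 (\rd_v\phi)^2 / \abs{\rd_v r}$ is not justified. Starting from $\rd_v\varpi = \tfrac12 (1-\mu)(\rd_v r)^{-1} r^2 (\rd_v\phi)^2 = -2\Omg^{-2}(\rd_u r)\, r^2 (\rd_v\phi)^2$, your own observation that $\abs{\rd_u r}/\Omg^2 \geq c > 0$ (which follows from the Raychaudhuri equation and the normalization on $C_{-\infty}$) yields only $\rd_v\varpi \gtrsim r^2(\rd_v\phi)^2$. Passing from this to your claimed form requires $\abs{1-\mu}$ to be bounded below near $\CH$, but in the continuous extension of Theorem~\ref{Dafermos.thm} one has $1-\mu = -4\Omg^{-2}(\rd_u r)(\rd_v r) \to 0$ as $v\to\infty$ whenever $\abs{\rd_u r}/\Omg^2$ stays bounded and $\rd_v r \to 0$. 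With the bound $\rd_v\varpi \gtrsim r^2(\rd_v\phi)^2$ alone, a propagated lower bound $\abs{\rd_v\phi}(u_0,v) \gtrsim v^{-N}$ with $N > 1/2$ gives a \emph{convergent} integral, so no contradiction is reached. To close the argument one must show that the true amplification factor $\abs{\rd_u r}/\Omg^2$ (equivalently $\kpp^{-1}$ where $\kpp := -\Omg^2/(4\rd_u r)$, which starts at $1$ on $C_{-\infty}$ and decreases in $u$) grows \emph{super-polynomially} in $v$ at fixed $u_0$; this requires integrating the Raychaudhuri equation $\rd_u \log \kpp = r(\rd_u\phi)^2/\rd_u r$ and extracting growth, which your sketch does not do.

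Second, the propagation of the pointwise lower bound from $C_{-\infty}$ into the interior is asserted but not established, and the naive version fails. Integrating $\rd_u(\rd_v(r\phi)) = \tfrac{\Omg^2 \phi}{2r^2}(\tfrac{\e^2}{r}-\varpi)$ from $u'=-\infty$ to $u_0$ at fixed $v$, the boundary term is $\approx r_+ \rd_v\phi\restriction_{C_{-\infty}}(v) \gtrsim v^{-3s+\epsilon}$, but the bulk integral, estimated using $\abs{\phi}(u',v) \lesssim v^{-s} + \abs{u'}^{-s+1}$ and $\int_{-\infty}^{u_0} \Omg^2\, \ud u' \lesssim 1$, is only $O(v^{-s+1})$. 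Since $s > 1$, this error term is \emph{larger} than the main term for any small $\epsilon$, so no polynomial lower bound on $\rd_v\phi(u_0,\cdot)$ follows directly; the error could cancel or even dominate. Resolving this requires either a genuine sign or cancellation structure in the inhomogeneity that you have not identified, or a bootstrap exploiting the contradiction hypothesis $\varpi < \infty$ to suppress the error term — precisely the fine structure in Dafermos's argument in \cite{D2} that the $3s$ exponent encodes.
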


\begin{remark} \label{rem.hawking.mass}
The Hawking mass described above is a geometric invariant for spherically symmetric spacetimes on every orbit of the $SO(3)$ action (see \eqref{Hawking.def}). The infinitude of the Hawking mass is an obstruction to extending the spacetime in spherical symmetry with a $C^1$ metric. Moreover, it can be shown \cite{Kommemi} in this setting that the blow up of the Hawking mass implies the blow up of the Kretschmann scalar, which is an obstruction to extending the metric in $C^2$ without assuming spherical symmetry for the extension. Note that in contrast to Theorem~\ref{Dafermos.instability.thm}, our proof of Theorem~\ref{main.theorem.both} does not show the blow up of the Hawking mass, but instead requires a different geometric argument, cf. Step~5 in Section~\ref{sec.main.structure}.
\end{remark}

\subsubsection{Nonlinear Price's law and the disproof of the $C^{0}$-formulation of strong cosmic censorship in spherical symmetry}
The preceding two results of Dafermos in the black hole interior assume inverse polynomial decay rates of the scalar field along the event horizon as upper and lower bounds (in Theorems~\ref{Dafermos.thm} and \ref{Dafermos.instability.thm}, respectively). These assumptions are consistent with a well-established heuristics called \emph{Price's law}, which predicts that any asymptotically flat perturbation of Reissner--Nordstr\"om leads to a scalar field with a specific inverse-polynomial upper bound on the event horizon (namely, $\abs{\phi} \leq v^{-3}$ and $\abs{\rd_{v} \phi} \leq v^{-4}$ for an ``Eddington--Finkelstein type'' coordinate $v$), and that the same inverse-polynomial lower bound holds in the ``generic'' case.

Price's law was derived via a heuristic linear analysis, and recently there have been numerous works on its rigorous proof in the context of the linear wave equation on the exterior of a black hole spacetime. We refer the reader to \cite{DS, DSS1, DSS2, AAG, AAG2, Ta,MTT} for a sample of such results. 
Remarkably, in the nonlinear context of the Einstein--Maxwell--(real)--scalar--field system in spherical symmetry, Dafermos--Rodnianski \cite{DRPL} were able to establish the upper bound assertion of Price's law for the maximal development of \underline{any} admissible initial data, which may in principle be very far from a small perturbation of Reissner--Nordstr\"om. We summarize a part of their theorem which is relevant for the current discussion as follows: 
\begin{theorem} [Price's law, Dafermos--Rodnianski \cite{DRPL}]\label{thm:price-law-intro}
Consider smooth spherically symmetric 2-ended Cauchy initial data for \eqref{EMSFS}, which are future-admissible and asymptotically flat (cf. Definition~\ref{def:adm-data}). Then the maximal globally hyperbolic future development obeys
\begin{equation*}
	\abs{\phi} + \abs{\rd_{v} \phi} \restriction_{\EH_{1}} (v)\leq E_{s} v^{-s},
\end{equation*}
for any $s < 3$ and for some $E_{s} > 0$, where $v$ is an ``Eddington--Finkelstein type'' advanced null coordinate on the event horizon $\EH_{1}$ (cf. Definition~\ref{def.EH}). An analogous statement holds on the event horizon $\EH_{2}$ corresponding to the other asymptotically flat end. 
\end{theorem}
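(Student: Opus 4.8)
The statement is the nonlinear Price's law of Dafermos--Rodnianski \cite{DRPL}, and the plan is to reproduce the structure of their argument; by the symmetry between the two ends it suffices to treat the exterior component $\Ext_{1}$ bounded by $\EH_{1}$. First I would fix a double null coordinate system $(u,v)$ on $\Ext_{1} \subset \PD$ with outgoing curves $C_{u} = \set{u = \mathrm{const}}$, normalize $v$ along $\EH_{1}$ by an Eddington--Finkelstein-type condition, and write out the reduced spherically symmetric system: the two Raychaudhuri equations for $\rd_{u}(\Omg^{-2}\rd_{u} r)$ and $\rd_{v}(\Omg^{-2}\rd_{v} r)$, the propagation equations for the renormalized Hawking mass $\varpi$ (the charge $\e$ being constant, as forced by the Maxwell equations in spherical symmetry), and the wave equation for $r\phi$, which has the schematic form $\rd_{u}\rd_{v}(r\phi) = \f{\rd_{u} r\, \rd_{v} r}{r}\,\calO(r^{-1})\,\phi$. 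I would also invoke Theorem~\ref{thm:kommemi} to fix the global causal picture, so that $\Ext_{1}$ has a complete $\NI_{1}$, and introduce the timelike curve $\Gmm = \set{r = R}$, for $R$ large, as an intermediate barrier separating a ``near-$\NI_{1}$'' region from a ``near-$\EH_{1}$'' region.

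Second, I would establish the a priori bounds that make the equations behave perturbatively. Using the future admissibility hypothesis (Definition~\ref{def:adm-data}, the ``no anti-trapped surfaces'' condition) together with the monotonicity furnished by Raychaudhuri, one obtains $\rd_{v} r > 0$ and, in the region of interest, $\rd_{u} r < 0$; that $r$ is bounded away from $0$ and $\varpi$ is monotone and bounded along the relevant curves, hence the development has a well-defined final mass $M$; that $\Omg^{2}$ is comparable to its Reissner--Nordstr\"om value; and that the limiting parameters satisfy the subextremality $0 < |\e| < M$. One also records the crude uniform bound $|\phi| \aleq 1$ on $\Ext_{1}$.

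Third comes the decay hierarchy, which is the core of the argument. I would: (i) run the $r^{p}$-weighted multiplier estimates of Dafermos--Rodnianski for the wave equation to obtain first uniform boundedness and then polynomial-in-$u$ decay of the weighted fluxes $\int_{C_{u}} r^{p}(\rd_{v}(r\phi))^{2}\,\ud v$ and $\int_{\uC_{v}} (\rd_{u}(r\phi))^{2}\,\ud u$; (ii) upgrade flux decay to pointwise decay of $\rd_{v}(r\phi)$, hence of $\phi$, along $\NI_{1}$ and along $\Gmm$, by integrating the wave equation and $\phi$ itself along characteristics; (iii) propagate this decay from $\Gmm$ inward to the event horizon $\EH_{1}$, exploiting the red-shift effect (the favorable sign of the potential in the wave equation there); and (iv) bootstrap --- reinsert the decay of $\phi$ into the right-hand side of the wave equation, whose extra factor of $r^{-1}$ produces an improved decay rate, and iterate, gaining a fixed power of $v$ at each stage, until the rate saturates at any $s < 3$. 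The bound $|\rd_{v}\phi|\restriction_{\EH_{1}} \aleq v^{-s}$ then follows from $\rd_{v}\phi = r^{-1}(\rd_{v}(r\phi) - \phi\,\rd_{v} r)$ together with the bounds already obtained and the behavior of $\rd_{v} r$ on $\EH_{1}$ in this gauge.

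The principal obstacle is to close this hierarchy \emph{with no smallness assumption on the data}: the coefficients $r$, $\varpi$, $\Omg^{2}$ in the wave equation are themselves unknowns depending on $\phi$, so the whole scheme must be organized as one carefully ordered continuity/bootstrap argument propagating causally outward from $\Sgm_{0}$ and $\EH_{1}$; because the decay is merely polynomial, the multiplier identities must be exploited in sharp quantitative form (soft compactness arguments do not produce a rate); and the region near timelike infinity $i^{+}_{1}$, where $\Ext_{1}$ pinches onto $\EH_{1}$, requires separate care. For the present series one additionally needs the refinements recorded in \cite[Section~5]{LO.exterior} (decay of further $\rd$-derivatives and uniformity across families of data), but the statement as given is precisely the theorem of \cite{DRPL}.
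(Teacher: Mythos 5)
The paper does not prove this statement; it simply cites it as a theorem of Dafermos--Rodnianski \cite{DRPL} (and the more precise Theorem~\ref{thm:DR-decay} is likewise a citation, with the reduction of the Cauchy problem to the characteristic setup of \cite{DRPL} relegated to \cite[Section~5]{LO.exterior}). You correctly recognize this, and your sketch of the \cite{DRPL} argument has the right overall architecture: monotonicity from Raychaudhuri and future admissibility, an intermediate barrier $\set{r = R}$, a flux-to-pointwise-to-bootstrap decay hierarchy, the red-shift near $\EH_{1}$, and the large-data continuity argument. One inaccuracy worth flagging: \cite{DRPL} (2005) predates the $r^{p}$-weighted energy method \cite{DRNM} (2010) and does not use it; the actual mechanism in \cite{DRPL} is an integrated local energy/conformal-type multiplier estimate in the near region combined with characteristic/pigeonhole arguments near null infinity, and the $r^{p}$ method is instead invoked in \cite{LO.exterior} for Step~4(a) (nonlinear stability of $\mathfrak{L}$), not for Price's law itself. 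Beyond this misattribution of technique, your summary is a fair description of the cited argument.
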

For more a precise statement and further discussions, see Step~2(a) in Section~\ref{sec.main.structure}, Theorem~\ref{thm:DR-decay} and \cite[Section~5]{LO.exterior}.

Combining Theorem~\ref{thm:price-law-intro} with the interior stability result (Theorem~\ref{Dafermos.thm}), one reaches the striking conclusion that the $C^{0}$-formulation of the strong cosmic censorship conjecture for the Einstein--Maxwell--(real)--scalar--field system in spherical symmetry is \underline{false} (Theorem~\ref{DDR}). 

\subsubsection{Comparisons with an approach based on Theorem~\ref{Dafermos.instability.thm} in \cite{D2}} \label{sec.comparison}
Given the interior instability result (Theorem~\ref{Dafermos.instability.thm}) and Price's law, one obvious path to proving the $C^{2}$-formulation of the strong cosmic censorship conjecture (Theorem~\ref{main.theorem.both}) would be to establish the assumed pointwise lower bound in Theorem~\ref{Dafermos.instability.thm} for maximal globally hyperbolic future developments of generic data. In a recent work of Angelopoulos--Aretakis--Gajic \cite{AAG}, the authors proved an analogous lower bound\footnote{In fact, they obtained a much stronger result giving the precise asymptotics of the solution. In particular, they proved upper and lower bounds for $\rd_v\phi$ along the event horizon with the same rate predicted by Price's law.} for generic solutions to the \emph{linear} wave equation on a \emph{fixed} subextremal Reissner--Nordstr\"om spacetime with non-vanishing charge. 
In view of this, one may conjecture that the same lower bound holds for generic data in our nonlinear setting in spherical symmetry:

\begin{conjecture}
There exists a generic set of spherically symmetric admissible Cauchy initial data with non-vanishing charge (cf. Definition~\ref{def:adm-data}) and compactly supported initial scalar field such that the lower bound in Theorem~\ref{Dafermos.instability.thm} holds with $s=3$ on each of the event horizons in the maximal globally hyperbolic future developments.
\end{conjecture}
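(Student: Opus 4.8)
The plan is to reduce the conjecture to a \emph{sharp late-time asymptotics} statement for $\rd_{v} \phi$ on each event horizon, and to prove the latter by upgrading the exterior analysis already developed for Theorem~\ref{main.theorem.both} by one order of decay, in the spirit of the linear late-time asymptotics of Angelopoulos--Aretakis--Gajic \cite{AAG}. Concretely, one would aim to show that for the maximal globally hyperbolic future development of any admissible data set with compactly supported $\phi$,
\begin{equation*}
	\rd_{v} \phi \restriction_{\EH_{1}}(v) = \mathfrak{c} \, v^{-4} + o(v^{-4}) \qquad \text{as } v \to \infty,
\end{equation*}
where $\mathfrak{c}$ is a nonzero constant multiple of $\mathfrak{L}_{(\omg_{0}) \infty}$, the multiple depending only on the final Reissner--Nordstr\"om parameters, and similarly on $\EH_{2}$ with $\mathfrak{L}'_{(\omg_{0}) \infty}$. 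Since Step~4 already establishes that $\set{\mathfrak{L}_{(\omg_{0}) \infty} \neq 0,\ \mathfrak{L}'_{(\omg_{0}) \infty} \neq 0}$ is generic (Theorems~\ref{thm:L-stability} and~\ref{thm:instability}), such an expansion would immediately give $\abs{\rd_{v} \phi} \restriction_{\EH_{j}}(v) \gtrsim v^{-4}$ for large $v$, which is vastly stronger than the $s = 3$ lower bound $\abs{\rd_{v} \phi} \gtrsim v^{-9 + \eps}$ of Theorem~\ref{Dafermos.instability.thm} requested in the conjecture; combined with the Price's-law \emph{upper} bound of Theorem~\ref{thm:price-law-intro} (which supplies the $s < 3$ decay needed to invoke Theorem~\ref{Dafermos.instability.thm}), this would close the argument. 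Note that the $L^{2}$-averaged lower bound \eqref{eq:L2-avg-lower} of Step~3(a) is by itself \emph{insufficient}: it does not preclude $\rd_{v} \phi$ vanishing along a sequence $v_{n} \to \infty$, so some control of the oscillations of $\rd_{v} \phi$ is unavoidable, and the natural such control is exactly the leading-order asymptotics.

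For the asymptotics I would work entirely in the exterior region $\Ext_{1}$ and proceed in two stages. First, near null infinity, I would refine the $r^{p}$-weighted energy hierarchy of Dafermos--Rodnianski \cite{DRNM} together with the integration-along-characteristics method of \cite{LO1} (both already used in \cite{LO.exterior} for Step~4(a)) to extract, beyond the Price's-law upper bound, the leading-order behavior of $r^{3}(\rd_{v} r)^{-1} \rd_{v}(r \phi)$: one would show that it converges, uniformly as $u \to u_{\EH_{1}}$, to $\mathfrak{L}_{(\omg_{0}) \infty}$, with a remainder decaying at a quantitative polynomial rate in $u$. Here the decomposition $\mathfrak{L}_{(\omg_{0}) \infty} = \mathfrak{L} + \mathfrak{L}_{(\omg_{0}) 0}$ of Step~4(a) and the flux identity \eqref{eq:guide.L} furnish exactly the structure one needs: $\mathfrak{L}_{(\omg_{0}) 0}$ is read off the data, while the evolution of $r^{3}(\rd_{v} r)^{-1} \rd_{v}(r \phi)$ along $\NI_{1}$ is governed by the integrand $2 M(u) \Phi(u) \Gmm(u)$, whose integrability follows from the exterior decay estimates of Steps~2(a) and~4(a). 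Second, one would transport this asymptotic information from a late outgoing cone down to $\EH_{1}$ through the redshift region using the wave equation; it is this step that produces the $v^{-4}$ tail on the event horizon, with coefficient $\mathfrak{c}$ a universal (nonzero) function of the final Reissner--Nordstr\"om parameters times $\mathfrak{L}_{(\omg_{0}) \infty}$. In the linear, fixed-Reissner--Nordstr\"om setting this last step is carried out in \cite{AAG} via a time-inversion hierarchy and hyperboloidal energies; the task here is to reproduce it in the presence of the metric perturbation, which is quantitatively small thanks to the convergence to Reissner--Nordstr\"om from Step~2(a).

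The main obstacle is the \emph{remainder estimate}. Establishing the leading term with an $o(v^{-4})$ error (in fact any $o(v^{-3s + \eps})$ error would suffice) requires controlling the nonlinear back-reaction to one order of decay beyond what is needed for Theorems~\ref{thm:price-law-intro} and~\ref{thm:L-stability}. The dangerous contributions are (i) the difference between $\Box_{g}$ and $\Box_{g_{RN}}$, governed by the deviation of the Hawking mass from its final value, and (ii) the genuinely quadratic self-interaction of $\phi$ entering through \eqref{EMSFS}; both must be shown to decay strictly faster than $v^{-4}$ on $\EH_{1}$, which forces one to track two orders in the $r^{p}$-hierarchy rather than one and to quantify the rate of convergence to Reissner--Nordstr\"om more precisely than in Step~2(a). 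A secondary difficulty is that, unlike a genuine Newman--Penrose constant, the quantity $\mathfrak{L}_{(\omg_{0}) \infty}$ is \emph{not} a function of the Cauchy data alone---it is the initial value $\mathfrak{L}_{(\omg_{0}) 0}$ plus the flux $\mathfrak{L}$---so its generic non-vanishing cannot be read off the data but must be imported from the density and stability arguments of Step~4, and one must check that the same open-dense set of data serves the pointwise statement. I expect this refined asymptotic analysis to be of difficulty comparable to the exterior stability theorem (Theorem~\ref{thm:L-stability}) of \cite{LO.exterior}, which is precisely why the present series instead takes the $L^{2}$-averaged route and sidesteps the sharp asymptotics altogether.
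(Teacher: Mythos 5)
This statement is labeled a \emph{Conjecture} in the paper and is \emph{not} proved there. The surrounding discussion in Section~\ref{sec.comparison} makes clear that the series deliberately avoids it: what the paper actually proves is the $L^{2}$-averaged lower bound of Theorem~\ref{thm:blowup}, which is strictly weaker than the pointwise lower bound of the conjecture and — crucially — does not require sharp late-time asymptotics on the event horizon. There is therefore no proof in the paper against which your proposal can be compared.

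Read as a program rather than a proof, your sketch is a faithful and self-aware account of how one would expect to attack the conjecture, and of why it remains open. The load-bearing step — the nonlinear late-time asymptotics $\rd_{v}\phi\restriction_{\EH_{j}}(v) = \mathfrak{c}_{j} v^{-4} + o(v^{-4})$ with $\mathfrak{c}_{j}$ a nonzero multiple of $\mathfrak{L}_{(\omg_{0})\infty}$ (resp.\ $\mathfrak{L}'_{(\omg_{0})\infty}$); note that for compactly supported scalar field one has $\mathfrak{L}_{(\omg_{0})0}=0$, so $\mathfrak{L}_{(\omg_{0})\infty}=\mathfrak{L}$ — is exactly the result that is missing, and you have correctly identified the obstruction: the remainder estimate requires one additional order of decay beyond Theorems~\ref{thm:price-law-intro} and~\ref{thm:L-stability}, and hence sharp nonlinear upper bounds for a large-data solution, which is precisely the difficulty the paper points to when explaining why it takes the $L^{2}$-averaged route instead. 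Your observation that the $L^{2}$-averaged lower bound \eqref{eq:L2-avg-lower} cannot substitute for this (since it does not control the oscillations of $\rd_{v}\phi$) is also correct and explains why Theorem~\ref{thm:blowup} suffices for the conditional instability of Theorem~\ref{final.blow.up.step} but does not settle the present conjecture. In short: no error, but no proof — your argument reduces the conjecture to a nonlinear analogue of the late-time asymptotics of \cite{AAG}, and that reduction is sound, but that asymptotics problem is the open content of the conjecture itself, as you acknowledge in your closing remark.
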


As discussed in Section~\ref{sec.main.structure}, however, we take a different route and instead prove a stronger conditional instability result which requires an $L^2$-averaged, instead of a pointwise, lower bound. The upshot is that such an $L^2$-averaged lower bound is considerably easier (even though it is still highly technical) to prove. In particular, even to prove the lower bound in the linear setting in \cite{AAG}, it first requires sharp upper bound estimates. Such estimates would be even harder to obtain in the nonlinear setting in view of the fact that we are considering large data solutions. In contrast, with our approach, in terms of the decay rates of a fixed solution, it suffices to use the Dafermos--Rodnianski decay theorem in \cite{DRPL}, in which the decay rates are conjecturally\footnote{At the very least, it is known by \cite{AAG} that the upper bounds in \cite{DRPL} are not sharp for solutions to the linear wave equation.} not sharp.

While the main motivation for our improved conditional instability theorem is that it is easier to verify than the condition in \cite{D2}, it should be noted that our approach based on the new conditional instability theorem also provides a road map for proving instability for other matter models and for the vacuum equations without symmetry. This is because in the problem of, say, the instability of the Reissner--Nordstr\"om Cauchy horizon in spherical symmetry for the Einstein--Maxwell--charged--(complex)--scalar field model, or in the problem of the instability of the Kerr Cauchy horizon without symmetry condition for the Einstein vacuum equations, the generic solutions along the event horizon are expected to be oscillatory \cite{HodPiran1, HodPiran2, HodPiran3, Barack, OriSpin}. This is in contrast to the type of behavior that is required by \cite{D2}, but on the other hand is consistent with the condition that we require in our approach.

\subsection{Outline of the paper}\label{sec.outline} 
We end the introduction with an outline of the remainder of the paper. 
\begin{itemize}
\item In Section~\ref{sec.SS}, we introduce the setup for the Einstein--Maxwell--(real)--scalar--field system in spherical symmetry.
\item In Section~\ref{sec.SCC}, we give the precise formulation of strong cosmic censorship (Theorem~\ref{main.theorem.both}, cf. Theorems~\ref{cor:scc}, \ref{cor:scc.small.omg} and Corollary~\ref{cor:RN.instab}).
\end{itemize}
The next few sections are dedicated to the main steps of the proof of Theorem~\ref{main.theorem.both}:
\begin{itemize}
\item In Section~\ref{sec:mghd}, we discuss previous results of Kommemi \cite{Kommemi} and Dafermos--Rodnianski \cite{DRPL} regarding the maximal globally hyperbolic development. 
\item In Section~\ref{sec.precise}, we give the precise statements of the main theorems (Theorems~\ref{main.theorem.C0.stability}, \ref{final.blow.up.step}, \ref{thm.nonpert} and \ref{main.theorem.C2}) proved in this paper regarding the interior region. 
\item In Section~\ref{sec:main-thm}, we discuss the results in \cite{LO.exterior} regarding the exterior region. 
\item In Section~\ref{sec.pf.SCC}, combining the results of Sections~\ref{sec.precise} and \ref{sec:main-thm}, we obtain a proof of Theorems~\ref{cor:scc}, \ref{cor:scc.small.omg} and Corollary~\ref{cor:RN.instab}.
\end{itemize}
The remaining sections contain the proofs of the theorems:
\begin{itemize}
\item In Section~\ref{sec.main.theorem.C0.stability}, we prove the stability theorem (Theorem~\ref{main.theorem.C0.stability}). 
\item In Section \ref{sec.blow.up}, we prove the instability theorem (Theorem~\ref{final.blow.up.step}). 
\item In Section~\ref{sec.nonpert}, the stability and instability theorems are then ``globalized'' (Theorem~\ref{thm.nonpert}). 
\item In Section~\ref{sec.main.theorem.C2}, we prove the theorem on $C^2$ future-inextendibility (Theorem~\ref{main.theorem.C2}). 
\end{itemize}
Finally, we have an appendix with three sections.
\begin{itemize}
\item In Appendix~\ref{sec.subext.pf}, we give a proof of Kommemi's theorem that the limits along the event horizons are always subextremal in the model under consideration. In Appendix~\ref{sec:appendix.gauge}, we give a discussion regarding the gauge condition that we impose on the event horizon. In Appendix~\ref{sec:christoffel}, we study in more detail the blow-up behavior of the solution in the $C^{0}$ extension constructed in Theorems~\ref{main.theorem.C0.stability} and \ref{thm.nonpert}, and show that the Christoffel symbols fail to be locally $L^{p}$-integrable for every $p > 1$.
\end{itemize}

\subsection*{Acknowledgements} We thank Mihalis Dafermos, Jan Sbierski and Yakov Shlapentokh-Rothman for many insightful discussions. We are grateful for comments from Mihalis Dafermos and Maxime van de Moortel on an earlier version of the manuscript. Much of this work was carried out while J. Luk was at Cambridge University and S.-J. Oh was at UC Berkeley. S.-J. Oh thanks Cambridge University for hospitality during several visits. The authors also thank the Chinese University of Hong Kong for hospitality while some of this work was pursued.

J. Luk is supported in part by a Terman Fellowship. S.-J. Oh was supported by the Miller Research Fellowship from the Miller Institute, UC Berkeley and the TJ Park Science Fellowship from the POSCO TJ Park Foundation.

\section{Einstein--Maxwell--(real)--scalar--field system in spherical symmetry}\label{sec.SS}
The purpose of this preliminary section is to provide the precise setup for the model at hand, namely the Einstein--Maxwell--(real)--scalar--field system in spherical symmetry. We begin with the definition of spherical symmetry in our context.

\begin{definition}[Spherically symmetric solutions]\label{def.SS}
Let $(\mathcal M,g,\phi,F)$ be a suitably regular\footnote{The precise regularity is irrelevant here, since the notion of solutions we work with will be defined later with respect to the reduced system in spherical symmetry. See the well-posedness statements in Propositions~\ref{prop.LWP} and \ref{prop.CIVP.LWP}.} solution to the Einstein--Maxwell--(real)--scalar--field system \eqref{EMSFS}. We say that $(\mathcal M,g,\phi,F)$ is \emph{spherically symmetric} if the following properties hold:
\begin{enumerate}
\item The symmetry group $SO(3)$ acts on $(\mathcal M,g)$ by isometry with spacelike orbits.
\item The metric $g$ on $\mathcal M$ is given by
\begin{equation}\label{SS.metric.1}
g=g_{\mathcal Q}+r^2 d\sigma_{\mathbb S^2},
\end{equation}
where
\begin{equation}\label{SS.metric.2}
g_{\mathcal Q}=-\f{\Omg^2 }{2}(du\otimes dv+dv\otimes du)
\end{equation}
is a Lorentzian metric on the $2$-dimensional manifold $\mathcal Q=\mathcal M/SO(3)$ and $r$ is defined to be the area radius function of the group orbit, i.e.,
$$r=\sqrt{\f{\mbox{Area}({\boldsymbol \pi}^{-1}(p))}{4\pi}},$$
for every $p\in \mathcal Q$, where ${\boldsymbol \pi}$ is natural projection ${\boldsymbol \pi}:\mathcal M\to \mathcal Q$ taking a point to the group orbit it belongs to. Here, as in the introduction, $d\sigma_{\mathbb S^2}$ denotes the standard round metric on $\mathbb S^2$ with radius $1$.
\item The function $\phi$ at a point $x$ depends only on ${\boldsymbol \pi}(x)$, i.e., for $p\in \mathcal Q$ and $x,y\in {\boldsymbol \pi}^{-1}(p)$, it holds that $\phi(x)=\phi(y)$.
\item The Maxwell field $F$ is invariant under pullback by the action (by isometry) of $SO(3)$ on $\calM$.
Moreover, there exists ${\bf e}:\mathcal Q\to \mathbb R$ such that
$$F=\f{\bfe}{2({\boldsymbol \pi}^* r)^2}{\boldsymbol \pi}^*(\Omg^2\,du\wedge dv).$$
\end{enumerate}
\end{definition}

It is well-known that for this system, the real-valued function ${\bf e}$ is in fact a \underline{constant}.

In spherical symmetry, the Einstein-Maxwell-(real)-scalar-field system thus reduces to the following system of coupled wave equations for $(r,\phi,\Omega)$
\begin{equation}\label{WW.SS}
\left\{
\begin{aligned}
\rd_u\rd_v r=&-\f {\Omg^2}{4 r}-\f{\rd_u r\rd_v r}{r}+\f{\Omg^2 {\bf e}^2}{4r^3},\\
\rd_u\rd_v \phi=&-\f{\rd_v r\rd_u\phi}{r}-\f{\rd_u r\rd_v\phi}{r},\\
\rd_u\rd_v\log\Omg=&-\rd_u\phi\rd_v\phi-\f {\Omg^2 {\bf e}^2}{2 r^4}+\f{\Omg^2}{4r^2}+\f{\rd_u r\rd_v r}{r^2}.
\end{aligned}
\right.
\end{equation}
The solutions moreover satisfy the following Raychaudhuri equations:
\begin{equation}\label{eqn.Ray}
\left\{
\begin{aligned}
\rd_v\left(\f{\rd_v r}{\Omg^2}\right)=&-\f{r(\rd_v\phi)^2}{\Omg^2},\\
\rd_u\left(\f{\rd_u r}{\Omg^2}\right)=&-\f{r(\rd_u\phi)^2}{\Omg^2}.
\end{aligned}
\right.
\end{equation}
If one solves the characteristic initial value problem with initial data posed on two intersecting null hypersurface, then one can view the equations \eqref{eqn.Ray} as constraint equations for the initial data. It is easy to check that if the constraint equations \eqref{eqn.Ray} are initially satisfied, then they are propagated by \eqref{WW.SS}.

\subsection{Formulation in terms of Hawking mass}
It will be convenient for our discussion to introduce the following notations:
\begin{equation}\label{lambda.nu}
\lambda:=\rd_v r,\quad \nu:=\rd_u r.
\end{equation}
Define also the \emph{Hawking mass} $m:\mathcal Q\to \mathbb R$ by
\begin{equation}\label{Hawking.def}
m:=\f r2\left(1-g_{\mathcal Q}(\nabla r,\nabla r)\right)=\f r 2\left(1+\f{4\lambda\nu}{\Omg^2}\right),
\end{equation}
where $g_{\mathcal Q}$ is as defined in \eqref{SS.metric.2}, as well as the \emph{modified Hawking mass}
\begin{equation}\label{varpi.def}
\varpi = m + \frac{\e^{2}}{2 r}.
\end{equation}
As a consequence of \eqref{WW.SS}, \eqref{eqn.Ray}, \eqref{Hawking.def}, \eqref{varpi.def}, the following equations hold:
\begin{equation} \label{eq:EMSF-r-phi-m}
\left\{
\begin{aligned}
\rd_{u} \rd_{v} r = & \frac{2(\varpi - \frac{\e^{2}}{r})}{r^{2}} \frac{\rd_{u} r \rd_{v} r}{1-\mu}, \\
\rd_{u} \rd_{v} \phi = & - \frac{\rd_{v} r \rd_{u} \phi}{r} - \frac{\rd_{u} r \rd_{v} \phi}{r}, \\
	\rd_{v} \varpi =& \frac{1}{2} \frac{1-\mu}{\rd_{v} r} r^{2} (\rd_{v} \phi)^{2}, \\
	\rd_{u} \varpi =& \frac{1}{2} \frac{1-\mu}{\rd_{u} r} r^{2} (\rd_{u} \phi)^{2},
\end{aligned}
\right.
\end{equation}
where we denote
\begin{equation}\label{mu.def}
\mu:= \f{2m}{r}.
\end{equation}

\subsection{Cauchy problem formulation}
We give a brief discussion of Cauchy problem formulations of the Einstein--Maxwell--(real)--scalar--field system in spherical symmetry. 
\begin{definition}[Cauchy data]\label{def.Cauchy.data}
A \emph{Cauchy initial data set} for the Einstein--Maxwell--(real)--scalar--field system in spherical symmetry consists of a curve $\Sgm_{0}$ (without boundary), a collection of six real-valued functions $(r, f, h, \ell, \phi, \dot{\phi})$ on $\Sgm_{0}$ and a real number ${\bf e}$. We require $r\in C^2(\Sgm_0;\mathbb R)$, $f,\ell,\phi\in C^1(\Sgm_0;\mathbb R)$ and $h,\dot{\phi}\in C^0(\Sgm_0;\mathbb R)$. Moreover, $f, r$ are required to be strictly positive everywhere on $\Sgm_0$. For $\Sgm_0$ parametrized\footnote{At this point, we allow $\rho$ to have either finite or infinite range. We will require $\Sgm_0=\mathbb R$ later in Definition~\ref{def:adm-data}.} by $\rho$, the collection of functions together with ${\bf e}$ give rise to \emph{geometric data} consisting of the following:
\begin{enumerate}
\item The initial hypersurface $\underline{\Sgm_{0}} = \Sgm_{0} \times \bbS^{2}$ is endowed with the intrinsic Riemannian metric
\begin{equation*}
\hat{g} = f^{2}(\rho) \, \ud \rho^2+ r^{2}(\rho) \, \ud \sgm_{\bbS^{2}}.
\end{equation*}

\item The symmetric $2$-tensor $\hat{k}$ on the initial hypersurface $\underline{\Sgm_{0}}$ (which will be the second fundamental form of the solution) given by
\begin{equation*}
	\hat{k} = h(\rho) \, \ud \rho^{2} + \ell(\rho) \, \ud \sgm_{\bbS^{2}}.
\end{equation*}

\item The initial data on $\underline{\Sgm_{0}}$ for the matter fields\footnote{We abuse notation slightly here, where $\phi$ is used to both denote the scalar field in the spacetime and its restriction to the initial slice $\underline{\Sgm_{0}}$.}
$$(\phi, \underline{n} \phi)\restriction_{\underline{\Sgm_{0}}}=(\phi, \dot{\phi}),\quad F(\underline{n}, \rd_\rho)\restriction_{\underline{\Sgm_{0}}}=\f{{\bf e}f}{r^2},$$
where $\underline{n}$ denotes the unique future-directed unit normal to $\underline{\Sgm_0}$ in $\mathcal M$.
\end{enumerate}
Moreover, the following constraint equations are satisfied: 
\begin{gather}
R_{\hat{g}}-|\hat{k}|^2_{\hat{g}}+(\mbox{tr}_{\hat{g}}\hat{k})^2=4T(\underline{n}, \underline{n})=2\dot{\phi}^2+\f{2}{f^2}(\rd_\rho\phi)^2+\f{2{\bf e}^2}{r^4},\label{ham.con}\\
\quad(\mbox{div}_{\hat{g}} \hat{k})_\rho-\rd_{\rho}(\mbox{tr}_{\hat{g}} \hat{k})=2T(\underline{n}, \rd_\rho)=2\dot{\phi}(\rd_{\rho}\phi).\label{mom.con}
\end{gather}
Here, $T = T^{(sf)}_{\mu\nu}+T^{(em)}_{\mu\nu}$ (cf.~\eqref{EMSFS}) and $R_{\hat{g}}$ is the scalar curvature of $\hat{g}$.
\end{definition}

Any Cauchy initial data can be related to initial data for $(r, \phi, \log \Omg)$ in a double null coordinate system. We give the relation in the following lemma, but will omit the proof, which is a straightforward computation.
\begin{lemma} \label{lem:cauchy-to-char}
Consider a parametrization $\rho \mapsto \Sgm_{0}(\rho)$ of the initial curve $\Sgm_{0}$, and consider a double null coordinate system $(u, v)$ on $\calM$ normalized by the conditions
\begin{equation*}
\frac{\ud u}{\ud \rho} = -1, \quad 
\frac{\ud v}{\ud \rho} = 1 \quad \hbox{ on } \Sgm_{0}.
\end{equation*}
Then the following identities hold on $\Sgm_{0}$: 
\begin{gather*}
	\rd_{u} \restriction_{\Sgm_{0}} = \frac{1}{2} (-\rd_{\rho} + f n), \quad 
	\rd_{v} \restriction_{\Sgm_{0}} = \frac{1}{2} (\rd_{\rho} + f n), \\
	\rd_{u} r \restriction_{\Sgm_{0}} = - \frac{1}{2} \rd_{\rho} r + \frac{f}{2 r} \ell, \quad
	\rd_{v} r \restriction_{\Sgm_{0}} = \frac{1}{2} \rd_{\rho} r + \frac{f}{2 r} \ell, \\
	\rd_{u} \phi \restriction_{\Sgm_{0}} = \frac{1}{2} ( - \rd_{\rho} \phi + f \dot{\phi}), \quad
	\rd_{v} \phi \restriction_{\Sgm_{0}} = \frac{1}{2} ( \rd_{\rho} \phi + f \dot{\phi}), \\
	\rd_{u} \log \Omg \restriction_{\Sgm_{0}} = \frac{1}{2f} (- \rd_{\rho} f + h), \quad
	\rd_{v} \log \Omg \restriction_{\Sgm_{0}} = \frac{1}{2f} (\rd_{\rho} f + h),\\
	\Omg \restriction_{\Sgm_{0}}=f,
\end{gather*}
where $n$ denotes the unique future-directed unit normal to $\Sgm_0$ in $\mathcal Q$.
\end{lemma}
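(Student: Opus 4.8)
The plan is to first pin down the null coordinate vector fields $\rd_{u}$, $\rd_{v}$ restricted to $\Sgm_{0}$ in terms of the tangent field $\rd_{\rho}$ and the unit normal $\bfn$, and then to obtain every remaining identity by evaluating these vector fields on $r$, $\phi$ and $\log \Omg$. By the chain rule and the normalization $\frac{\ud u}{\ud \rho} = -1$, $\frac{\ud v}{\ud \rho} = 1$, one has $\rd_{\rho} = -\rd_{u} + \rd_{v}$ along $\Sgm_{0}$. Since $g_{\calQ}$ has the double null form \eqref{SS.metric.2} — so that $g_{\calQ}(\rd_{u}, \rd_{u}) = g_{\calQ}(\rd_{v}, \rd_{v}) = 0$ and $g_{\calQ}(\rd_{u}, \rd_{v}) = - \frac{\Omg^{2}}{2}$ — this gives $g_{\calQ}(\rd_{\rho}, \rd_{\rho}) = \Omg^{2}$, and matching with the $\ud \rho^{2}$-component $f^{2}$ of the induced metric $\hat{g}$ forces $\Omg \restriction_{\Sgm_{0}} = f$ (both being positive). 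Writing $\bfn = a \rd_{u} + b \rd_{v}$ and imposing $g_{\calQ}(\bfn, \bfn) = -1$ and $g_{\calQ}(\bfn, \rd_{\rho}) = 0$ yields $ab = \Omg^{-2}$ and $a = b$; as $\rd_{u}$ and $\rd_{v}$ are future-directed null, future-directedness of $\bfn$ selects $a = b = f^{-1}$, i.e.\ $f \bfn = \rd_{u} + \rd_{v}$ on $\Sgm_{0}$. Solving the two linear relations $\rd_{\rho} = -\rd_{u} + \rd_{v}$ and $f\bfn = \rd_{u} + \rd_{v}$ for $\rd_{u}$, $\rd_{v}$ produces the first two asserted identities $\rd_{u} \restriction_{\Sgm_{0}} = \frac{1}{2}(-\rd_{\rho} + f\bfn)$ and $\rd_{v} \restriction_{\Sgm_{0}} = \frac{1}{2}(\rd_{\rho} + f\bfn)$.

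It then remains to express $\bfn r$, $\bfn \phi$, $\bfn \log \Omg$ through the Cauchy data and substitute. For $\phi$ this is immediate from Definition~\ref{def.Cauchy.data}: $\bfn \phi \restriction_{\Sgm_{0}} = \dot{\phi}$, giving the $\rd_{u} \phi$ and $\rd_{v} \phi$ identities. For $r$ and $\log \Omg$ one relates the normal derivatives to the second fundamental form $\hat{k} = h \, \ud \rho^{2} + \ell \, \ud \sgm_{\bbS^{2}}$, with the sign convention for $\hat{k}$ compatible with the constraint equations \eqref{ham.con}--\eqref{mom.con}. Restricting $\hat{k}$ to an $SO(3)$-orbit (a warped-product sphere of area radius $r$) gives that this restriction equals $r (\bfn r) \, \ud \sgm_{\bbS^{2}}$, hence $\bfn r \restriction_{\Sgm_{0}} = \ell / r$, which produces the $\rd_{u} r$ and $\rd_{v} r$ identities. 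Restricting $\hat{k}$ to $\rd_{\rho}$ and using the Christoffel symbols of \eqref{SS.metric.2} — for which $\nb_{\rd_{u}} \rd_{u} = 2 (\rd_{u} \log \Omg) \rd_{u}$, $\nb_{\rd_{v}} \rd_{v} = 2 (\rd_{v} \log \Omg) \rd_{v}$ and $\nb_{\rd_{u}} \rd_{v} = 0$ — a short computation gives $\hat{k}(\rd_{\rho}, \rd_{\rho}) = \Omg \, \bfn \Omg = f \, \bfn \Omg$ on $\Sgm_{0}$, so $\bfn \log \Omg \restriction_{\Sgm_{0}} = h / f^{2}$. Substituting this, together with $\rd_{\rho} \log \Omg \restriction_{\Sgm_{0}} = f^{-1} \rd_{\rho} f$ (a consequence of $\Omg \restriction_{\Sgm_{0}} = f$), into $\rd_{u} = \frac{1}{2}(-\rd_{\rho} + f\bfn)$ and $\rd_{v} = \frac{1}{2}(\rd_{\rho} + f\bfn)$ yields the $\rd_{u} \log \Omg$ and $\rd_{v} \log \Omg$ identities.

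The only point requiring genuine care is the bookkeeping of sign and normalization conventions for the second fundamental form $\hat{k}$ (equivalently, the correct identification of $\bfn r$ and $\bfn \log \Omg$ with $\ell$ and $h$); once this is fixed to be consistent with \eqref{ham.con}--\eqref{mom.con}, the rest is elementary linear algebra together with the short Christoffel computation above.
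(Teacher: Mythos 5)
Your argument is correct; the paper omits the proof, calling it a ``straightforward computation,'' and what you have written is the natural way to carry it out. The decomposition into the two linear relations $\rd_{\rho} = -\rd_{u} + \rd_{v}$ and $f\bfn = \rd_{u} + \rd_{v}$ on $\Sgm_{0}$, followed by the identifications $\bfn r = \ell/r$, $\bfn\phi = \dot{\phi}$, $\bfn\log\Omg = h/f^{2}$, is exactly what is needed, and your Christoffel computation is legitimate because the quotient $\calQ$ is totally geodesic in the warped product $\calM = \calQ \times_{r} \bbS^{2}$, so the ambient connection applied to vectors in the $(u,v)$-plane coincides with the $g_{\calQ}$-connection.

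One small point of imprecision is the attribution of the sign convention for $\hat{k}$ (namely $\hat{k}(X,Y) = g(\nab_{X}\bfn, Y)$, which you use implicitly) to ``consistency with \eqref{ham.con}--\eqref{mom.con}.'' The Hamiltonian constraint \eqref{ham.con} is invariant under $\hat{k} \mapsto -\hat{k}$, so it carries no sign information at all, and whether the momentum constraint \eqref{mom.con} pins down the sign depends on Riemann-curvature and Gauss--Codazzi conventions that the paper never makes explicit. In practice the sign convention for $\hat{k}$ is fixed by the lemma statement itself, and one then checks it is compatible with the paper's later usage --- for instance with Remark~\ref{rmk:adm}, where $\rd_{u} r = -\tfrac{1}{2}\rd_{\rho} r + \tfrac{f\ell}{2r} < 0$ must be the trapping (rather than anti-trapping) condition, which requires $\bfn r = +\ell/r$. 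It would read more cleanly to say you \emph{take} the convention $\hat{k}(X,Y) = g(\nab_{X}\bfn, Y)$ and observe that it is the one under which the asserted identities hold, rather than to claim the constraint equations determine it.
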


By the Choquet-Bruhat--Geroch theorem \cite{CBG} (suitably adapted to the present setting with matter fields), the initial Cauchy data set gives rise to a unique maximal globally hyperbolic future development. This development also inherits the spherical symmetry of the data. In fact, it is easy to verify that the maximal globally hyperbolic future development of the initial data corresponds to the maximal future $(u,v)$ domain for which \eqref{WW.SS} and \eqref{eqn.Ray} are solved. We summarize the result below but omit the details.

\begin{proposition}[Local well-posedness in $C^1$ for the Cauchy problem]\label{prop.LWP}
Given a Cauchy initial data set as in Definition~\ref{def.Cauchy.data}, there exist a unique maximal future\footnote{Here, causality is to be understood with respect to the standard Minkowski metric $g_{\mathbb R^2} = -\f 12(\ud u\otimes \ud v + \ud v \otimes \ud u)$ in the $(u,v)$-plane.} domain $\mathcal Q\subset \mathbb R^2$ of the $(u,v)$-plane, a constant $\e$ and functions $(r,\phi,\Omg)\in C^2(Q)\times C^1(Q)\times C^1(Q)$ (with $r,\,\Omg>0$) such that both \eqref{WW.SS} and \eqref{eqn.Ray} are satisfied, and that the initial Cauchy data (as in Definition~\ref{def.Cauchy.data}) and the initial gauge conditions (as in Lemma~\ref{lem:cauchy-to-char}) are achieved.

Let $(\mathcal M, g, \phi, F)$ to be related to $(\mathcal Q, r, \phi, \Omg, \e)$ according Definition~\ref{def.SS}. Then, moreover, $(\mathcal M, g, \phi, F)$ the maximal globally hyperbolic future development of the given Cauchy initial data set for the system \eqref{EMSFS}.
\end{proposition}

\subsection{Characteristic initial value problem formulation}

As is well-known, \eqref{EMSFS} can also be solved via a \emph{characteristic initial value problem}. This is particularly relevant in this paper for the study of the interior region, where the question of the stability and instability of the Cauchy horizon is most conveniently phrased in terms of a characteristic initial value problem, see Section~\ref{sec.precise}. The analogue of Proposition~\ref{prop.LWP} in the setting of the characteristic initial value problem is given in the following proposition (whose proof we again omit):
 
\begin{proposition}[Local well-posedness in $C^1$ for the characteristic initial value problem]\label{prop.CIVP.LWP}
Let $\uC_{in}$ and $C_{out}$ be two transversally intersecting null curves parametrized by $\uC_{in}=\{(u, v_1): u\in [u_1, u_2 )\}$ and $C_{out}=\{(u_1, v): v\in [v_1, v_2 )\}$, where $u_1, v_1\in \mathbb R$ and $u_2, v_2\in \mathbb R\cup\{+\infty\}$ with $u_1<u_2$, $v_1<v_2$. Given \emph{characteristic initial data}
\begin{itemize}
\item a constant $\e$,
\item $(r,\phi,\Omg)\in C^2(\uC_{in})\times C^1(\uC_{in})\times C^1(\uC_{in})$ with $r>0$ and $\Omg>0$, and
\item $(r,\phi,\Omg)\in C^2(C_{out})\times C^1(C_{out})\times C^1(C_{out})$ with $r>0$ and $\Omg>0$
\end{itemize}
such that
\begin{itemize}
\item the values of $(r,\phi,\Omg)$ at $(u_1, v_1)$ coincide,
\item the Raychaudhuri equations \eqref{eqn.Ray} are satisfied on $\uC_{in}$ and $C_{out}$.
\end{itemize}
Then there exists a unique maximal globally hyperbolic future development $(r,\phi,\Omg,\e)$ in the $(u,v)$ coordinate system such that $\e$ is a constant and $(r,\phi,\Omg)\in C^2\times C^1\times C^1$ obeys the system of wave equations \eqref{WW.SS}. Moreover, the Raychaudhuri equations \eqref{eqn.Ray} are also satisfied. In particular, $(\mathcal M, g, \phi, F)$, which is related to $(\mathcal Q, r, \phi, \Omg, \e)$ as in Definition~\ref{def.SS}, is a solution to \eqref{EMSFS}.
\end{proposition}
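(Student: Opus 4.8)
The plan is the standard one for a semilinear wave system written in a double null gauge: set up a Picard iteration for the integral (Duhamel) reformulation of \eqref{WW.SS} on a small coordinate rectangle, bootstrap the extra $C^{2}$-regularity of $r$, verify that the Raychaudhuri constraints \eqref{eqn.Ray} propagate off the initial curves, and finally patch local solutions into the maximal development. Since the parametrizations of $\uC_{in}$ and $C_{out}$ fix the $(u,v)$-coordinate system once and for all, there is no residual gauge freedom, so \eqref{WW.SS} is a genuinely determined system.

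\emph{Local existence and uniqueness.} On $\mathcal{R}_{\delta} = [u_{1}, u_{1}+\delta] \times [v_{1}, v_{1}+\delta]$, I would use the identity
\begin{equation*}
	f(u,v) = f(u_{1}, v) + f(u, v_{1}) - f(u_{1}, v_{1}) + \int_{u_{1}}^{u} \int_{v_{1}}^{v} (\rd_{u} \rd_{v} f)(u',v') \, \ud v' \, \ud u',
\end{equation*}
valid for $f \in C^{1}$ with $\rd_{u}\rd_{v} f \in C^{0}$, together with the companion transport formulas for $\rd_{u} f$ and $\rd_{v} f$, to define a map $\mathcal{T}$ on a closed ball of $C^{1}(\mathcal{R}_{\delta}; \bbR^{3})$ around a fixed $C^{1}$ extension of the characteristic data into $\mathcal{R}_{\delta}$: given $(r, \phi, \log\Omg)$, plug it into the right-hand sides of \eqref{WW.SS} and integrate as above. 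Because $r$ and $\Omg$ are continuous and strictly positive at the corner $(u_{1}, v_{1})$, they stay bounded away from $0$ on a small enough rectangle, so on the ball the nonlinearities $r^{-1}$, $r^{-3}$, $\Omg^{2}$ are Lipschitz in $(r, \rd_{u} r, \rd_{v} r, \log\Omg)$; a routine estimate (the integral terms and their first derivatives pick up factors of $\delta$) then shows that $\mathcal{T}$ is a self-map of the ball and a contraction for $\delta$ small in terms of the data. The fixed point $(r, \phi, \log\Omg) \in C^{1}(\mathcal{R}_{\delta})$ solves \eqref{WW.SS}; finite speed of propagation is automatic since the integral formula only sees the subrectangle $[u_{1},u]\times[v_{1},v]$, and uniqueness of $C^{1}$ solutions follows from a Gr\"onwall argument.

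\emph{$C^{2}$-regularity of $r$ and propagation of constraints.} The fixed point only yields $r \in C^{1}$, so to reach $C^{2}$ I would differentiate the first equation of \eqref{WW.SS} in $u$: with the $C^{1}$ solution already in hand, $\rd_{u}^{2} r$ must satisfy a \emph{linear} transport equation $\rd_{v}(\rd_{u}^{2} r) = a\, \rd_{u}^{2} r + b$ with $a, b \in C^{0}(\mathcal{R}_{\delta})$, whose data on $\uC_{in} = \{v = v_{1}\}$ is the prescribed $\rd_{u}^{2} r \in C^{0}(\uC_{in})$ (this is precisely where $r \in C^{2}(\uC_{in})$ is used); solving this ODE and checking that the solution is the genuine second derivative gives $\rd_{u}^{2} r \in C^{0}$, and symmetrically $\rd_{v}^{2} r \in C^{0}$ from $r \in C^{2}(C_{out})$, while $\rd_{u}\rd_{v} r \in C^{0}$ directly from \eqref{WW.SS}; hence $r \in C^{2}(\mathcal{R}_{\delta})$. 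For the constraints, set $\mathcal{E}_{v} := \rd_{v}(\lambda \Omg^{-2}) + r (\rd_{v} \phi)^{2} \Omg^{-2}$; a direct computation from \eqref{WW.SS} shows that $\rd_{u} \mathcal{E}_{v}$ equals an explicit continuous multiple of $\mathcal{E}_{v}$, and since $\mathcal{E}_{v} \equiv 0$ on $C_{out}$ by assumption, uniqueness for this linear transport equation forces $\mathcal{E}_{v} \equiv 0$, i.e. the first equation of \eqref{eqn.Ray}; the second follows the same way from the vanishing of the corresponding defect on $\uC_{in}$. Consequently $(\calM, g, \phi, F)$ built from $(\calQ, r, \phi, \Omg, \e)$ as in Definition~\ref{def.SS} solves \eqref{EMSFS}.

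\emph{Maximal development and main obstacle.} Finally, because the $(u,v)$-coordinates are globally fixed, I would define the maximal globally hyperbolic future development as the union of all ``past-complete'' domains $\calQ' \subset [u_{1}, u_{2}) \times [v_{1}, v_{2})$ — i.e. containing $\uC_{in}\cup C_{out}$ and satisfying $[u_{1},u]\times[v_{1},v]\subset\calQ'$ whenever $(u,v)\in\calQ'$ — that carry a $C^{2}\times C^{1}\times C^{1}$ solution of \eqref{WW.SS}--\eqref{eqn.Ray}; local existence lets one extend the solution off the corner and off every already-solved point, while uniqueness guarantees consistency on overlaps, so the union is again such a solution, defined on the largest such domain. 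I expect the only genuinely delicate points to be the secondary transport argument for $\rd_{u}^{2} r$ and $\rd_{v}^{2} r$ (this regularity cannot be read off \eqref{WW.SS} alone) and the bookkeeping in the constraint-propagation computation; the contraction step itself is entirely routine once one observes that $r$ and $\Omg$ cannot degenerate on a sufficiently small rectangle.
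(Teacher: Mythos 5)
The paper gives no proof of Proposition~\ref{prop.CIVP.LWP}, deferring to the fact that local well-posedness for semilinear characteristic systems is standard; so there is no argument of the paper's to compare against, and your sketch is to be judged on its own. Your outline is the expected one — contraction on small rectangles in $C^{1}$, bootstrap of the second-order regularity of $r$, constraint propagation, then a union over past-complete domains — and it is correct. Two remarks on where the genuine content lies.

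First, the constraint-propagation identity is the one non-routine computation and should be carried out rather than asserted. With $\calE_{v} := \rd_{v}\lambda - 2\lambda\,\rd_{v}\log\Omg + r(\rd_{v}\phi)^{2}$ (which is $\Omg^{2}$ times your $\mathcal{E}_{v}$; note also the stray $\rd_{V}$ in your definition, which should be $\rd_{v}$), substituting \emph{all three} equations of \eqref{WW.SS} into $\rd_{u}\calE_{v}$ produces an exact cancellation of every term except a scalar multiple of $\calE_{v}$ itself, giving
\begin{equation*}
	\rd_{u}\calE_{v} = -\frac{\nu}{r}\,\calE_{v}, \qquad \rd_{v}\calE_{u} = -\frac{\lambda}{r}\,\calE_{u},
\end{equation*}
with $\calE_{u}$ defined symmetrically. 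This is the reduced Bianchi identity for the system, and the cancellation is delicate enough that it is worth exhibiting explicitly; a sign error here breaks the argument.

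Second, your $C^{2}$-bootstrap for $r$ correctly isolates the issue, but notice that even \emph{writing} the claimed linear transport equation $\rd_{v}(\rd_{u}^{2}r) = a\,\rd_{u}^{2}r + b$ presupposes that $\rd_{u}r$ is differentiable in $u$, which is exactly what is to be shown. One must instead define $w$ as the unique solution of this ODE along each outgoing line with data $\rd_{u}^{2}r\restriction_{\uC_{in}}$, and then prove $w = \rd_{u}^{2}r$ by forming the difference quotient of the $v$-integral representation of $\rd_{u}r$ and closing with Gr\"onwall. An alternative which avoids this secondary argument entirely is to promote $\rd_{u}^{2}r$ and $\rd_{v}^{2}r$ to independent unknowns from the start and run the Picard iteration in $C^{0}$ on the resulting closed first-order transport system; the $C^{2}$ hypothesis on $r$ along $\uC_{in}$ and $C_{out}$ is precisely what is needed to prescribe data for the enlarged system, and $r \in C^{2}$ then falls out without a bootstrap.
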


\section{Formulation of strong cosmic censorship}\label{sec.SCC}
In this section we give a precise formulation of strong cosmic censorship for maximal globally hyperbolic future development of admissible initial data (Theorem~\ref{cor:scc}, see also Theorem~\ref{cor:scc.small.omg}). Before that, we need to introduce three important notions:
\begin{itemize}
\item A class of admissible Cauchy data: This includes the assumptions of the regularity and asymptotic flatness of the initial data. We will impose an additional \emph{future admissibility condition}, see the last point in Definition~\ref{def:adm-data}, which can be thought of as an analogue of Christodoulou's ``no anti-trapped surface'' condition suitably adapted to the case of $2$-ended asymptotically flat data, see \cite{D3}. This will be introduced in Section~\ref{sec.adm.data}.
\item Topologies on the set of admissible Cauchy data: The various topologies will allow us to make precise the \emph{genericity} condition, which can be understood as openness and density statements in appropriate topologies. This will be introduced in Section~\ref{sec:topology}.
\item The constants $\mathfrak L_{(\omg)\infty}$ and $\mathfrak L'_{(\omg)\infty}$: For every admissible Cauchy data set, the constants $\mathfrak L_{(\omg)\infty}$ and $\mathfrak L_{(\omg)\infty}'$ are introduced to give an easy-to-check \emph{criterion} which, when verified, would guarantee that the maximal globally hyperbolic future development is $C^2$-future-inextendible. These constants will be featured in the statement of strong cosmic censorship and will be introduced in Section~\ref{subsec:L-cauchy}.
\end{itemize}

We will then give the main statement of strong cosmic censorship in Section~\ref{sec.statement}. This is followed in Section~\ref{SCC.small.omg} by a (much) simpler statement if a sufficiently slowly decaying inverse polynomial tail is allowed in the initial data. We will end the section with a discussion on the global stability and instability of the Reissner--Nordstr\"om Cauchy horizon in Section~\ref{sec.instab.RN}. \textbf{All of the results stated in this section will be proven in Section~\ref{sec.pf.SCC}.}

\subsection{Admissible Cauchy data}\label{sec.adm.data}
We now define the class of Cauchy initial data for which we will establish strong cosmic censorship.
\begin{definition} [Admissible Cauchy initial data] \label{def:adm-data}
Let $\omg_{0} > 2$. An \emph{$\omg_{0}$-future-admissible spherically symmetric $2$-ended asymptotically flat Cauchy initial data set with non-vanishing charge} (in short, an $\omg_{0}$-admissible initial data set) is a Cauchy initial data set $\Tht = (r, f, h, \ell, \phi, \dot{\phi}, \e)$ on $\Sgm_{0} = \bbR$ satisfying the following properties:
\begin{enumerate}
\item $\phi,\, f \in C^2(\Sigma_0;\mathbb R)$, $\dot{\phi},\, h \in C^1(\Sigma_0;\mathbb R)$ (i.e., are more regular\footnote{Note that while the regularity assumptions in Definition~\ref{def.Cauchy.data} are sufficient for the local existence and uniqueness result in Proposition~\ref{prop.LWP} (and it fact also for global decay result, see \cite{DRPL}), we need additional regularity in order to control the difference of the gauges in two different solutions  in \cite{LO.exterior}. This is of course unsurprising in view of the quasilinear nature of the problem.} than required in Definition~\ref{def.Cauchy.data}).
\item The following \emph{asymptotic flatness conditions}\footnote{These conditions are essentially the asymptotic flatness condition in \cite[Section~A.4]{DRPL} with $\alp = 1$. } hold as $\rho \to \pm\infty$ (i.e., towards each end):
\begin{equation} \label{eq:adm-id-af}
\begin{gathered}
	f(\rho) - 1 = O_{2}(\abs{\rho}^{-1}),\quad
	h(\rho) = O_{1}(\abs{\rho}^{-2}), \\
	r(\rho) - \abs{\rho} = O_{2}(\log \abs{\rho}), \quad
	\ell(\rho) = O_{1}(1)
\end{gathered}
\end{equation}
Here the notation $O_{i}(\abs{\rho}^{-n})$ denotes that the function on the LHS is $O(|\rho|^{-n})$ and the $j$-th derivative is $O(|\rho|^{-n-j})$ for all $j\leq i$. In the case $n = 0$, we simply write $O_{i}(1) = O_{i}(| \rho |^{0})$. The notation $O_{i}(\log \abs{\rho})$ is defined similarly. 

\item 
The following asymptotic flatness conditions hold for the scalar field: As $\rho \to \pm \infty$,
\begin{equation} \label{eq:adm-id-phi}
	\phi (\rho) = O_{2}(\abs{\rho}^{-\omg_{0}}), \quad 
	\dot{\phi}(\rho) = O_{1}(\abs{\rho}^{-\omg_{0}-1}). 
\end{equation}
Furthermore, the following limits exist:
\begin{equation} \label{eq:adm-id-limits}
\begin{aligned}
	\mathfrak{L}_{(\omg_{0}) 0}[\Tht] := & \lim_{\rho \to \infty} \frac{r^{\min\set{\omg_{0}, 3}}}{\rd_{\rho} r + \frac{f\ell}{r}} \left(\rd_{\rho} (r \phi) + \frac{f \ell}{r} \phi + f r \dot{\phi} \right), \\ 
	\mathfrak{L}_{(\omg_{0}) 0}'[\Tht] := & \lim_{\rho \to - \infty} \frac{r^{\min\set{\omg_{0}, 3}}}{-\rd_{\rho} r + \frac{f\ell}{r}} \left(-\rd_{\rho} (r \phi) + \frac{f \ell}{r} \phi + f r \dot{\phi} \right).
\end{aligned}
\end{equation}
\item $\e\neq 0$.

\item The following \emph{future admissibility condition} holds: There exist $\rho_1< \rho_2$ such that 
\begin{equation} \label{eq:adm-id-adm}
	\left(- \rd_{\rho} r + \frac{f \ell}{r}  \right) (\rho) < 0 \hbox{ for all } \rho \geq \rho_{1}, \quad \hbox{ and } \quad
	\left(\rd_{\rho} r + \frac{f \ell}{r}\right) (\rho) < 0 \hbox{ for all } \rho \leq \rho_{2}.
\end{equation}
\end{enumerate}
We denote the set of all $\omg_{0}$-admissible initial data sets by $\calA \calI \calD(\omg_{0})$.
\end{definition}

\begin{remark}[Future admissibility condition in double null coordinates]\label{rmk:adm}
By Lemma~\ref{lem:cauchy-to-char}, the future admissibility condition can be understood as requiring that $(\rd_u r)\restriction_{\Sigma_0}(\rho)<0$ for $\rho\geq \rho_{1}$ and $(\rd_v r)\restriction_{\Sigma_0}(\rho)<0$ for $\rho\leq \rho_{2}$. In particular, there exist \emph{trapped surfaces}\footnote{We say that a sphere of symmetry $S={\bf \pi}^{-1}(p)$ (for some $p\in \mathcal Q$) is a trapped surface if $(\rd_u r)(p)<0$ and $(\rd_v r)(p)<0$.} in $\Sigma_0$.
\end{remark}

\begin{remark}[Stability of future admissibility condition]\label{rmk:adm-stab}
The future admissibility condition is manifestly stable against small perturbations (say, with respect to the distances defined in Section~\ref{sec:topology}).
\end{remark}

\begin{example}[Reissner--Nordstr\"om initial data]\label{ex:RN}
An example of an $\omg_{0}$-future-admissible Cauchy initial data in the sense of Definition~\ref{def:adm-data} (for any $\omg_{0} > 2$) is the Cauchy initial data set of a subextremal Reissner--Nordstr\"om solution with $\e \neq 0$ on a suitable initial hypersurface $\underline{\Sgm_{0}}$. More precisely, we take $\underline{\Sgm_{0}}$ to be a (sufficiently smooth) spherically symmetric Cauchy hypersurface that intersects the black hole region (shaded region in Figure~\ref{fig:RNadm}). Note that $\phi = \dot{\phi} = 0$ in this case. We also note that, by (an extension of) Birkhoff's theorem for the Einstein--Maxwell system, any admissible initial data set satisfying Definition~\ref{def:adm-data} with $\phi = \dot{\phi} = 0$ is necessarily an embedded Cauchy hypersurface in some subextremal Reissner--Nordstr\"om solution.
\end{example}

\begin{figure}[h]
\begin{center}
\def\svgwidth{160px}
{\small 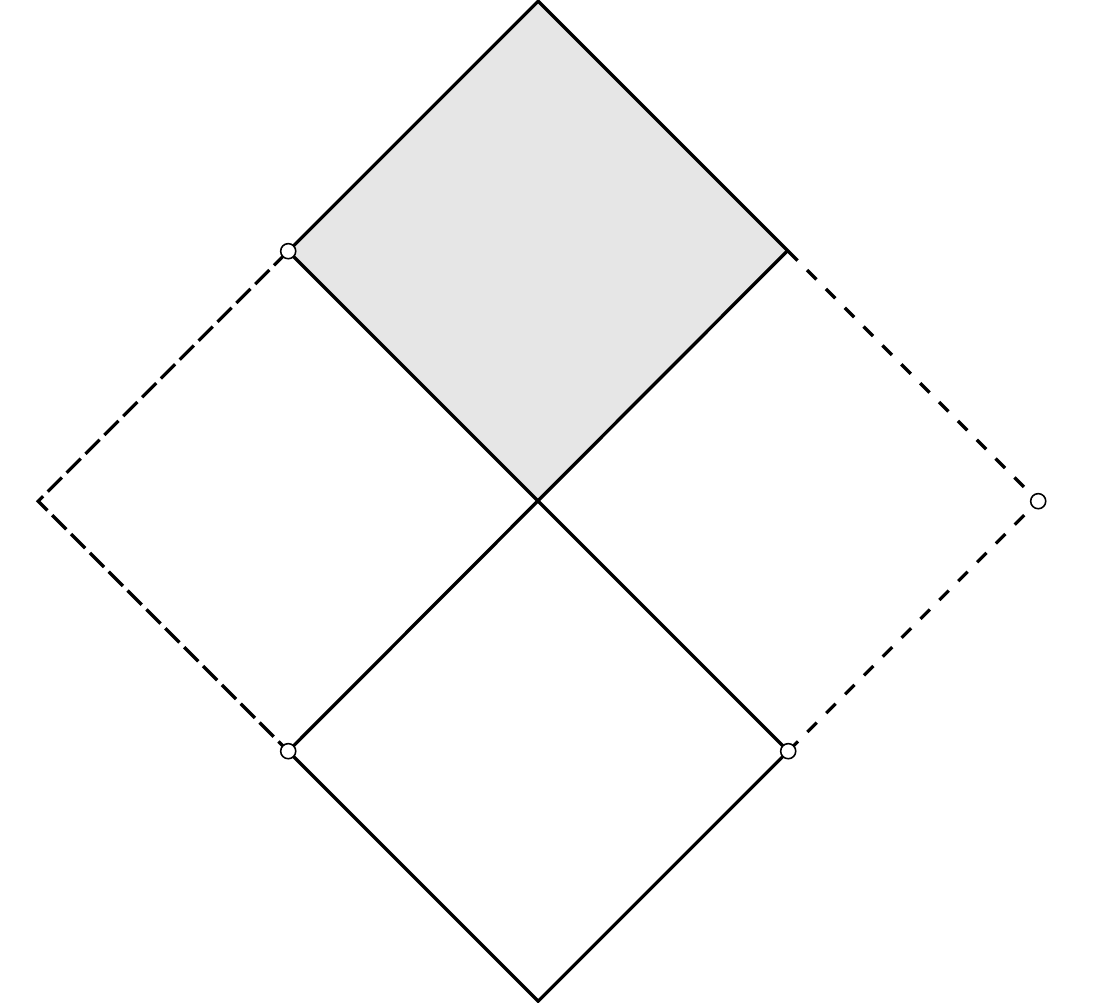 }
\caption{Penrose diagram of a subextremal Reissner--Nordstr\"om spacetime. A Cauchy hypersurface $\Sigma_0$, which intersects the black hole region (shaded) and hence gives rise to admissible Cauchy data, is depicted. For further description of the diagram in the future of $\Sgm_{0}$, we refer to Theorem~\ref{thm:kommemi}. The analogous structures in the past of $\Sgm_{0}$ are marked with the superscript ${}^{-}$. } \label{fig:RNadm} 
\end{center}
\end{figure}

\subsection{Topologies on the set of admissible Cauchy data}\label{sec:topology}

To formulate strong cosmic censorship, we next introduce various topologies on the set of admissible initial data $\calA \calI \calD(\omg_{0})$ to measure the size of initial data perturbations.
\begin{definition} [Distances $d_{k,\omg}$ on $\calA \calI \calD(\omg_{0})$] \label{def:adm-top}
Given any positive integer $k$ and real numbers $\omg, \omg_{0} > 2$, we define the distance $d_{k, \omg}$ on the set $\calA \calI \calD(\omg_{0})$ of $\omg_{0}$-admissible initial data with two asymptotic ends (cf. Definition~\ref{def:adm-data}) as follows (we allow\footnote{This in particular happens when the initial data are not $k$-times differentiable.} $d_{k, \omg} (\Tht, \overline{\Tht}) = \infty$): 
\begin{equation} \label{eq:d-k-omg}
\begin{aligned}
	d_{k, \omg} (\Tht, \overline{\Tht}) := 
	& \nrm{\brk{\rho} \log (f/\fbg)(\rho)}_{C^{0}} 
	+ \sum_{i = 1}^{k} \left(  \nrm{\brk{\rho}^{1+i} \rd_{\rho}^{i} \log (f /\fbg)(\rho)}_{C^{0}} 
					+ \nrm{\brk{\rho}^{1+i} \rd_{\rho}^{i-1} (h - \hbg)(\rho)}_{C^{0}} \right) \\
	& + \nrm{\log^{-1}(1+\brk{\rho}) (r - \rbg) (\rho)}_{C^{0}} 
	+ \sum_{i=1}^{k} \left(  \nrm{\brk{\rho}^{i} \rd_{\rho}^{i} (r - \rbg) (\rho)}_{C^{0}} 
					+ \nrm{\brk{\rho}^{i-1} \rd_{\rho}^{i-1} (f \ell - \overline{f \ell})(\rho)}_{C^{0}} \right) \\
	& + \nrm{\brk{\rho}^{\omg} (\phi - \phibg)(\rho)}_{C^{0}} 
	+ \sum_{i=1}^{k} \left(  \nrm{\brk{\rho}^{\omg+i} \rd_{\rho}^{i} (\phi - \phibg)}_{C^{0}} 
	+ \nrm{\brk{\rho}^{\omg+i} \rd_{\rho}^{i-1} (f \dot{\phi} - \overline{f \dot{\phi}})(\rho)}_{C^{0}} \right) \\
  & + |{\bf e}- \overline{\bf e}|												.
\end{aligned}\end{equation}
Here, $\brk{\rho} = (1 + \rho^{2})^{1/2}$ and $\rho_{\pm} = \max\set{0, \pm \rho}$.
\end{definition}
 
\begin{remark} \label{rem:adm-top}
Some simple remarks concerning the distances $d_{k, \omg}$ are in order.
\begin{enumerate}
\item Note that $d_{k, \omg} \leq d_{k', \omg}$ for $k \leq k'$, and $d_{k, \omg} \leq d_{k, \omg'}$ for $\omg \leq \omg'$.
\item For $k \geq 1$ and $\omg \geq \omg_{0}$, we have
\begin{equation*}
	\abs{\mathfrak{L}_{(\omg_{0}) 0}[\Tht] - \mathfrak{L}_{(\omg_{0}) 0}[\overline{\Tht}]}
	+ \abs{\mathfrak{L}_{(\omg_{0}) 0}'[\Tht] - \mathfrak{L}_{(\omg_{0}) 0}'[\overline{\Tht}]}
	\leq C d_{k, \omg} (\Tht, \overline{\Tht}).
\end{equation*}
In fact, if $\omg > \omg_{0}$, then $d_{k, \omg} (\Tht, \overline{\Tht}) < \infty$ implies 
$\mathfrak{L}_{(\omg_{0}) 0}[\Tht] = \mathfrak{L}_{(\omg_{0}) 0}[\overline{\Tht}]$ and $\mathfrak{L}'_{(\omg_{0}) 0}[\Tht] = \mathfrak{L}'_{(\omg_{0}) 0}[\overline{\Tht}]$.
\end{enumerate}
\end{remark}

With the help of Definition~\ref{def:adm-top}, we introduce the subclass of $C^{k}_{\omg}$ initial data sets in $\calA \calI \calD(\omg_{0})$.
\begin{definition}[$C^k_{\omg}$ initial data]\label{Ck.def}
For $k\in \mathbb N$ with $k\geq 2$ and $\omg, \omg_0>2$, we say that $\Theta\in \calA \calI \calD(\omg_{0})$ is $C^k_\omg$ if 
$d_{k,\omg}(\Theta,\Theta_{RN, M,\e})<\infty$ for some $\Theta_{RN, M,\e}$ which is an admissible smooth Cauchy initial data set for a fixed Reissner--Nordstr\"om solution with parameter $0<|\e|< M$ such that outside a compact set $[-R,R]$, $(r,f,h,\ell,\phi,\dot{\phi},\e)=(|\rho|,(1-\f{2M}{|\rho|}+\f{\e^2}{\rho^2})^{-\f 12},0,0,0,0,\e)$.
\end{definition}

\begin{remark}
Some remarks regarding this definition are in order.
\begin{enumerate}
\item By definition, any $\Theta \in \calA \calI \calD(\omg_{0})$ is automatically $C^2_{\omg_{0}}$.
\item If $d_{k,\omg}(\Theta,\Theta_{RN, M_1,\e})<\infty$ for some $M_1>|\e|$, then $d_{k,\omg}(\Theta,\Theta_{RN, M_2,\e})<\infty$ for any $M_2>|\e|$.
\end{enumerate}
\end{remark}

\subsection{Definition of $\mathfrak{L}_{(\omg_{0}) \infty}$ for developments of $\omg_{0}$-admissible Cauchy data}\label{subsec:L-cauchy}
Next, we define a quantity, which we denote by $\mathfrak{L}_{(\omg_{0}) \infty}$, that seeks to capture the asymptotic size of the incoming scalar field radiation along future null infinity $\NI$ towards timelike infinity. In order to discuss the definition, we will necessarily consider the maximal globally hyperbolic future development of initial data. Here, we only need to use the fact that the maximal globally hyperbolic future development has two components of future null infinity. We refer the reader to Theorem~\ref{thm:kommemi} for further discussions about the relevant geometric notions.

Let $(\mathcal M, g, \phi, F)$ be the maximal globally hyperbolic future development of an $\omg_{0}$-admissible initial data $\Tht$ (cf. Definition~\ref{def:adm-data}; we take $\omg_{0} > 2$), and consider the asymptotically flat end where $\rho \to \infty$ on the initial hypersurface $\Sgm_{0}$. Denote the component of future null infinity corresponding to this asymptotically flat end by $\NI$ and let it be parametrized by the $u$-coordinate.

\begin{definition}[Definition of $\mathfrak{L}_{(\omg_{0}) \infty}$]\label{def.L}
Let $(u, v)$ be a double null coordinates normalized as in Lemma~\ref{lem:cauchy-to-char}. To define $\mathfrak{L}_{(\omg_{0}) \infty}$ in the case $\omg_{0} \geq 3$, we first introduce
\begin{align} \label{eq:L-def}
	\mathfrak{L} :=& \int_{\NI} 2 M(u) \Phi(u) \Gmm(u) \, \ud u,
\end{align}
where $M(u) = \lim_{v \to \infty} \varpi(u, v)$, $\Phi(u) = \lim_{v \to \infty} r \phi(u, v)$ and $\Gmm(u) = \lim_{v \to \infty} \frac{\rd_{u} r}{1- \mu}(u, v)$.
We then define
\begin{equation} \label{eq:Linfty-def}
\mathfrak{L}_{(\omg_{0}) \infty} := \begin{cases}\mathfrak{L}_{(\omg_{0}) 0} + \mathfrak{L}\quad &\mbox{if }\omg_{0}\geq 3, \\
\mathfrak{L}_{(\omg_{0}) 0}\quad &\mbox{if }\omg_{0} \in (2,3),
\end{cases}
\end{equation}
where $\mathfrak{L}_{(\omg_{0}) 0} = \mathfrak{L}_{(\omg_{0}) 0}[\Tht]$ depends only on the initial data and is defined as in \eqref{eq:adm-id-limits}. 
\end{definition}

\begin{remark}[Well-definedness of $\mathfrak L$]
We first note that the definition of $\mathfrak L$ is independent of the choice of the $u$-coordinate. Moreover, the quantity $\mathfrak{L}$ is in fact well-defined as a \emph{finite} real number: Indeed, Theorem~\ref{thm:DR-decay} below shows that the limits as $v\to \infty$ exist and that the quantity $2 M(u) \Phi(u) \Gmm(u)$ is integrable towards the future. The fact that it is also past-integrable follows from asymptotic flatness and will be proven in \cite{LO.exterior}.
\end{remark}

\begin{definition}[Definition of $\mathfrak L_{(\omg)\infty}'$]
Corresponding to the other asymptotically flat end (where $\rho \to - \infty$ on $\Sgm_{0}$), we analogously define $\mathfrak{L}'$ and $\mathfrak{L}_{(\omg) \infty}'$ (switching the roles of $u$ and $v$). 
\end{definition}
 
\subsection{Statement of the strong cosmic censorship theorem in spherical symmetry}\label{sec.statement}

Before we give the statement of strong cosmic censorship, we need to make precise the notion of $C^2$-future-inextendibility:
\begin{definition}[$C^2$-future-inextendibility]\label{def.FE}
A $C^{2}$ Lorentzian $4$-manifold $(\mathcal M,g)$ is said to be \emph{future-extendible with a $C^2$ Lorentzian metric} if there exists a time-oriented, connected $C^2$ Lorentzian $4$-manifold $(\widetilde{\mathcal M}, \tilde{g})$ and an isometric embedding $\iota: \mathcal M\to \widetilde{\mathcal M}$ such that $\iota(\mathcal M)$ is a proper subset of $\widetilde{\mathcal M}$ and moreover for every $p\in \widetilde{\mathcal M}\setminus \mathcal M$, $I^+(p)\cap \mathcal M=\emptyset$. $(\mathcal M,g)$ is said to be \emph{future-inextendible with a $C^2$ Lorentzian metric} (in short, $C^2$-future-inextendible) otherwise.
\end{definition}

\begin{definition}[$C^2$-future-inextendibility of solutions to \eqref{EMSFS}] We say that a solution $(\mathcal M, g, \phi, F)$ to \eqref{EMSFS} is $C^2$-future-inextendible if the underlying Lorentzian manifold $(\mathcal M, g)$ is $C^2$-future-inextendible (as in Definition~\ref{def.FE}).
\end{definition}

The following is the strong cosmic censorship theorem for $\omg_{0}$-admissible initial data, with $\omg_0 \geq 3$.
\begin{theorem}[Strong cosmic censorship in spherical symmetry for $2$-ended asymptotically flat data] \label{cor:scc}
Let $\omg_{0} \geq 3$, and let $\calG$ be the subset of $\calA \calI \calD(\omg_{0})$ (cf. Definition~\ref{def:adm-data}) consisting of elements whose maximal globally hyperbolic future developments satisfy $\mathfrak{L}_{(\omg_{0}) \infty} \neq 0$ and $\mathfrak{L}_{(\omg_{0}) \infty}' \neq 0$ (cf. Definition~\ref{def.L}). Such a set $\calG$ obeys the following properties:
\begin{enumerate}
\item ($C^2$-future-inextendibility of solutions arising from the generic set) The maximal globally hyperbolic future development of any element in $\mathcal G$ is $C^2$-future-inextendible (cf. Definition~\ref{def.FE}).
\item (Openness of the generic set with respect to $d_{1, 2+}$) If $\overline{\Tht} \in \mathcal G$, then for every $\om>2$, there exists $\ep>0$ such that all $\Tht \in \calA \calI \calD(\omg_{0})$ with $d_{1, \omg}(\Tht, \overline{\Tht}) + \abs{\mathfrak{L}_{(\omg_{0}) 0}[\Tht] - \mathfrak{L}_{(\omg_{0}) 0}[\overline{\Tht}]} + \abs{\mathfrak{L}'_{(\omg_{0}) 0}[\Tht] - \mathfrak{L}'_{(\omg_{0}) 0}[\overline{\Tht}]} <\epsilon$ are in $\mathcal G$. 
\item (Density of the generic set) 
Let $\overline{\Tht} \in \calA \calI \calD (\omg_0) \setminus \calG$. Then there exists a one-parameter family $(\Tht_{\eps})_{\eps \in (-\eps_{\ast}, \eps_{\ast})} \subseteq \calA \calI \calD (\omg_0)$ of admissible initial data sets (for some $\eps_{\ast} = \eps_{\ast}(\overline{\Tht}) > 0$) such that 
\begin{itemize}
\item $\Tht_{0} = \overline{\Tht}$, 
\item $\Tht_{\eps} \in \calG$ for all $\eps \in (-\eps_{\ast}, \eps_{\ast})\setminus\{0\}$,
\item if $\overline{\Theta}\in C^k_{\omg_0}$ for $k\in\mathbb N$, $k\geq 2$, then $\eps \mapsto \Tht_{\eps}$ is continuous with respect to $d_{k, \omg}$ for all $\omg > 2$, 
\item $\mathfrak{L}_{(\omg_{0}) 0}[\Tht_{\eps}] = \mathfrak{L}_{(\omg_{0}) 0}[\overline{\Tht}]$ and $\mathfrak{L}_{(\omg_{0}) 0}'[\Tht_{\eps}] = \mathfrak{L}_{(\omg_{0}) 0}'[\overline{\Tht}]$ for all $\eps \in (-\eps_{\ast}, \eps_{\ast})$.
\end{itemize}
In particular, $\calG\cap \left(\cap_{k\geq 2} C^k_{\omg}\right)$ is dense in $\calA \calI \calD (\omg_0)\cap C^\ell_{\omg_0}$ with respect to $d_{\ell, \omg}$ for any $\ell\geq 2$ and $\omg > 2$.
\end{enumerate}
\end{theorem}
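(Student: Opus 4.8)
The plan is to obtain Theorem~\ref{cor:scc} by assembling the component results outlined in Section~\ref{sec.main.structure} and stated precisely in Sections~\ref{sec.precise} and \ref{sec:main-thm}; throughout one works with the decomposition $\mathfrak{L}_{(\omg_{0}) \infty} = \mathfrak{L}_{(\omg_{0}) 0} + \mathfrak{L}$, valid for $\omg_{0} \geq 3$ (cf.~\eqref{eq:Linfty-def}), where $\mathfrak{L} = \int_{\NI} 2 M(u) \Phi(u) \Gmm(u) \, \ud u$ is the null-infinity integral. For part (1), fix $\Tht \in \calG$, so its maximal globally hyperbolic future development satisfies $\mathfrak{L}_{(\omg_{0}) \infty} \neq 0$ and $\mathfrak{L}'_{(\omg_{0}) \infty} \neq 0$. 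First I would invoke Theorem~\ref{thm:blowup} (Step~3(a)) to upgrade these to the $L^{2}$-averaged lower bounds \eqref{eq:L2-avg-lower} on each event horizon $\EH_{1}$, $\EH_{2}$; then Theorem~\ref{thm.nonpert} (Step~3(c), whose proof relies on the interior $C^{0}$-stability Theorem~\ref{main.theorem.C0.stability} and the near-timelike-infinity blow up Theorem~\ref{final.blow.up.step}) yields, relative to the $C^{0}$ extension it simultaneously produces, that $\phi \notin W^{1,2}_{loc}$ near every point of $\CH_{1} \cup \CH_{2}$. Finally, since by Theorem~\ref{thm:kommemi} together with \cite{D2, DRPL} any $C^{2}$ future extension must be one through $\CH_{1} \cup \CH_{2}$, Theorem~\ref{main.theorem.C2} (Step~5) — which translates the $W^{1,2}_{loc}$ blow up, via \eqref{EMSFS}, into the blow up of a Ricci-curvature component in a parallely transported frame on the Cauchy horizon — rules out any such extension, giving $C^{2}$-future-inextendibility.

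For part (2), fix $\overline{\Tht} \in \calG$ and $\omg > 2$, and write $\mathfrak{L}_{(\omg_{0}) \infty}[\Tht] - \mathfrak{L}_{(\omg_{0}) \infty}[\overline{\Tht}] = \big(\mathfrak{L}_{(\omg_{0}) 0}[\Tht] - \mathfrak{L}_{(\omg_{0}) 0}[\overline{\Tht}]\big) + \big(\mathfrak{L}[\Tht] - \mathfrak{L}[\overline{\Tht}]\big)$. The first difference is exactly one of the terms appearing in the hypothesis; for the second, one invokes the nonlinear exterior stability Theorem~\ref{thm:L-stability} (Step~4(a)), which asserts precisely that $\Tht \mapsto \mathfrak{L}[\Tht]$ is continuous with respect to $d_{1, \omg}$. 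Hence, once $d_{1, \omg}(\Tht, \overline{\Tht}) + \abs{\mathfrak{L}_{(\omg_{0}) 0}[\Tht] - \mathfrak{L}_{(\omg_{0}) 0}[\overline{\Tht}]} + \abs{\mathfrak{L}'_{(\omg_{0}) 0}[\Tht] - \mathfrak{L}'_{(\omg_{0}) 0}[\overline{\Tht}]}$ is small, $\mathfrak{L}_{(\omg_{0}) \infty}[\Tht]$ is close to $\mathfrak{L}_{(\omg_{0}) \infty}[\overline{\Tht}] \neq 0$ and hence nonzero; running the same argument with $\mathfrak{L}'_{(\omg_{0}) \infty}$ gives the other condition, so $\Tht \in \calG$.

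For part (3), let $\overline{\Tht} \in \calA \calI \calD(\omg_{0}) \setminus \calG$, so $\mathfrak{L}_{(\omg_{0}) \infty}[\overline{\Tht}] = 0$ or $\mathfrak{L}'_{(\omg_{0}) \infty}[\overline{\Tht}] = 0$. For whichever quantity vanishes, I would apply the perturbation Theorem~\ref{thm:instability} (Step~4(b)) at the corresponding end: it takes a smooth compactly supported \emph{outgoing} $\phi$-perturbation of size $\eps$ placed in $\set{r \approx R_{\ast}}$ with $R_{\ast}$ fixed (depending only on $\overline{\Tht}$) and large, and produces a family $\Tht_{\eps}$ with $\Tht_{0} = \overline{\Tht}$ whose contribution to $\mathfrak{L}$ is $\approx \eps$ — the explicit near-Reissner--Nordstr\"om computation in the domain of dependence of the perturbation gives the $\approx \eps$ term, while the outgoing data on a cone beyond it, being $o(\eps)$ as $R_{\ast} \to \infty$, contributes negligibly by Theorem~\ref{thm:L-stability} — and which therefore has $\mathfrak{L}_{(\omg_{0}) \infty}[\Tht_{\eps}] \neq 0$ for $\eps \neq 0$; moreover $\mathfrak{L}_{(\omg_{0}) 0}$ and $\mathfrak{L}'_{(\omg_{0}) 0}$ are unaffected because the $\phi$-perturbation sits at a fixed finite $r$. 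If only one of the two quantities vanishes, part (2) ensures the nonvanishing of the other persists under this small perturbation; if both vanish, one performs the construction at the two distinct ends with a common parameter $\eps$, using that a perturbation supported near one end leaves the data on the other end, and hence that end's $\mathfrak{L}$-quantity, essentially undisturbed. Either way $\Tht_{\eps} \in \calG$ for $0 < \abs{\eps} < \eps_{\ast}$, and continuity of $\eps \mapsto \Tht_{\eps}$ in $d_{k, \omg}$ when $\overline{\Tht} \in C^{k}_{\omg_{0}}$ is part of Theorem~\ref{thm:instability}. The final density assertion follows by a standard two-step approximation: given $\overline{\Tht} \in \calA \calI \calD(\omg_{0}) \cap C^{\ell}_{\omg_{0}}$, first approximate it in $d_{\ell, \omg}$ by a smooth $\widetilde{\Tht} \in \cap_{k \geq 2} C^{k}_{\omg}$, then either $\widetilde{\Tht} \in \calG$ already, or part (3) applied to $\widetilde{\Tht}$ (with the resulting $\Tht_{\eps}$ smooth and converging to $\widetilde{\Tht}$ in every $d_{k, \omg}$) produces elements of $\calG \cap \big(\cap_{k \geq 2} C^{k}_{\omg}\big)$ arbitrarily $d_{\ell, \omg}$-close to $\overline{\Tht}$.

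The main obstacle is not in this assembly but in the imported inputs; within the assembly itself, the one structural point that does real work — underlying both the openness in part (2) and the ``$\approx \eps$ with negligible error'' claim in part (3) — is the continuity of the null-infinity functional $\mathfrak{L}$ in the weighted $C^{1}$-type distance $d_{1, \omg}$, i.e., the nonlinear asymptotic stability of the exterior region (Theorem~\ref{thm:L-stability}), which the excerpt itself flags as ``the most technically involved part of the entire series'' and which I would treat as a black box here. Everything else is bookkeeping: verifying the hypotheses of Theorems~\ref{thm:blowup}, \ref{thm.nonpert}, \ref{main.theorem.C2} and \ref{thm:instability}, and keeping the two asymptotically flat ends — together with the two families of Cauchy-horizon components $\CH_{1}$, $\CH_{2}$ — separate throughout.
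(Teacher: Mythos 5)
Your proof is correct and follows essentially the same route as the paper's Section~\ref{sec.pf.SCC}: part (1) combines Theorems~\ref{thm:blowup} and \ref{main.theorem.C2} (with Theorem~\ref{thm.nonpert}'s blow-up already absorbed into the proof of the latter, so your extra invocation of it is redundant but harmless), part (2) uses the decomposition $\mathfrak{L}_{(\omg_{0})\infty} = \mathfrak{L}_{(\omg_{0})0} + \mathfrak{L}$ together with Theorem~\ref{thm:L-stability}, and part (3) invokes Theorem~\ref{thm:instability} at each end. One small technical point is glossed over in part (2): Theorem~\ref{thm:L-stability} gives continuity of $\mathfrak{L}$ with respect to $d_{1,\omg_{0}}^{+}$, not $d_{1,\omg}$ for arbitrary $\omg>2$, so when $\omg<\omg_{0}$ you must first regard the $\omg_{0}$-admissible data as $\min(\omg_{0},\omg)$-admissible and apply Theorem~\ref{thm:L-stability} with that smaller decay exponent --- precisely the adjustment the paper makes explicit.
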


\begin{remark}[Topologies for the initial data] \label{rem:scc-top}
As a quick consequence of Theorem~\ref{cor:scc}, observe that $\calG$ is open and dense with respect to $d_{2, \omg_{0}}$, with respect to which $\calA \calI \calD(\omg_{0})$ is an open subset of a complete metric space (we omit the obvious proof). Theorem~\ref{cor:scc}, however, is much stronger in that it allows for different topologies for the openness and density statements.
\end{remark}

\begin{remark}[$\omg$-weights in the density statement] 
Notice that the density statement holds for an \emph{arbitrary} $\omg>2$. In particular, $\overline{\Theta}$ is \underline{not} required to be in $C^k_\omg$. In such a case, the perturbations $\Tht_\eps$ can nonetheless still be constructed so that $d_{k,\omg}(\overline{\Theta}, \Theta_\eps)\to 0$ as $\eps\to 0$.
\end{remark}

\begin{remark}[Density of the generic set in $C^\infty_{\omg_0}$]\label{rmk.Cinfty}
Theorem~\ref{cor:scc} in fact implies that $\calG$ (when restricted to appropriately weighted $C^\infty$ data sets) is dense with respect to a weighted $C^\infty$ topology defined as follows. Let $\omg_0 \geq 3$ and define $C^\infty_{\omg_0}$ as the following space:
$$C^\infty_{\omg_0}:=\cap_{k\geq 2} C^k_{\omg_0}, $$
where $C^k_{\omg_0}$ is as in Definition~\ref{Ck.def}.
For every $\omg >2$, define a distance on $C^\infty_{\omg_0}$ by
\begin{equation}\label{Cinfty.dist}
d_\omg(\Theta, \overline{\Theta})= \sum_{k=1}^{\infty}2^{-k}\f{{d}_{k,\omg}(\Theta, \overline{\Theta})}{1+{d}_{k,\omg}(\Theta, \overline{\Theta})}
\end{equation}
In particular, when $\omg=\omg_0$, the distance $d_{\omg_0}$ (which is obviously complete) makes $C^\infty_{\omg_0}$ a Fr\'echet space. It follows from Theorem~\ref{cor:scc} that in fact for every $\omg > 2$, $\mathcal G\cap C^\infty_\omg$ is dense in $\calA \calI \calD (\omg_0)\cap C^\infty_{\omg_0}$ with respect to the distance defined by \eqref{Cinfty.dist}.
\end{remark}

\subsection{Strong cosmic censorship in the case $2 < \omg_{0} < 3$}\label{SCC.small.omg}

Our strategy also gives a strong cosmic censorship theorem for $\omg_{0}$-admissible initial data with $2 < \omg_{0} < 3$. In this case, however, the nature of the density statement is qualitatively different, since the backscattering effect (captured by $\mathfrak{L}$) is too weak to modify the leading order tail of $\rd_{v}(r \phi)$ near the future endpoint of $\NI$ (captured by $\mathfrak{L}_{(\omg_{0}) \infty}$).

\begin{theorem}[Strong cosmic censorship in the case $2 < \omg_{0} < 3$]\label{cor:scc.small.omg}
Let $\omg_{0} \in (2,3)$, and let $\calG$ be the subset of $\calA \calI \calD(\omg_{0})$ (cf. Definition~\ref{def:adm-data}) consisting of elements whose maximal globally hyperbolic future developments satisfy $\mathfrak{L}_{(\omg_{0}) \infty} \neq 0$ and $\mathfrak{L}_{(\omg_{0}) \infty}' \neq 0$ (cf. Definition~\ref{def.L}). Such a set $\calG$ obeys the same properties as (1) and (2) in Theorem~\ref{cor:scc}, but instead of $(3)$, it satisfies the following:
{\it \begin{enumerate}
\item [(3')] Let $\overline{\Tht} \in \calA \calI \calD (\omg_0) \setminus \calG$. Then there exists a one parameter family $(\Tht_{\eps})_{\eps \in (-\eps_{\ast}, \eps_{\ast})} \subseteq \calA \calI \calD (\omg_0)$ of admissible initial data sets (for some $\eps_{\ast} = \eps_{\ast}(\overline{\Tht}) > 0$) such that 
\begin{itemize}
\item $\Tht_{0} = \overline{\Tht}$, 
\item $\Tht_{\eps} \in \calG$ for all $\eps \in (-\eps_{\ast}, \eps_{\ast}) \setminus \{0\}$,
\item if $\overline{\Theta}\in C^k_{\omg_0}$ for $k\in\mathbb N$, $k\geq 2$, $\eps \mapsto \Tht_{\eps}$ is continuous with respect to $d_{k, \omg_{0}}$.
\end{itemize} 
In particular, $\calG\cap \left(\cap_{k\geq 2} C^k_{\omg}\right)$ is dense in $\calA \calI \calD (\omg_0)\cap C^\ell_{\omg_0}$ with respect to $d_{\ell, \omg_0}$ for any $\ell\geq 2$.
\end{enumerate}
}
\end{theorem}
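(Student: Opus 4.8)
The plan is to deduce Theorem~\ref{cor:scc.small.omg} as a variant of Theorem~\ref{cor:scc}, reusing essentially all of the interior and exterior machinery and modifying only the density argument. The structural simplification in the range $2 < \omg_{0} < 3$ is that, by \eqref{eq:Linfty-def}, one has $\mathfrak{L}_{(\omg_{0}) \infty} = \mathfrak{L}_{(\omg_{0}) 0}$ and $\mathfrak{L}'_{(\omg_{0}) \infty} = \mathfrak{L}'_{(\omg_{0}) 0}$, so the defining condition for $\calG$ is a condition \emph{directly on the Cauchy data} $\Tht$ (through the limits \eqref{eq:adm-id-limits}) rather than on its maximal globally hyperbolic future development. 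For part~(1), I would then argue exactly as in the proof of Theorem~\ref{cor:scc}: for $\Tht \in \calG$ the hypothesis $\mathfrak{L}_{(\omg_{0}) \infty} \neq 0 \neq \mathfrak{L}'_{(\omg_{0}) \infty}$ (with the $r^{\omg_{0}}$-weighted definition of Section~\ref{subsec:L-cauchy}) feeds, via the conditional instability results of Step~3 (the $L^{2}$-averaged lower bound along $\EH_{1}$, $\EH_{2}$ together with Theorems~\ref{thm:blowup}, \ref{final.blow.up.step}, \ref{thm.nonpert}, which in turn use the interior $C^{0}$-stability of Theorem~\ref{main.theorem.C0.stability}), into the conclusion that $\phi \notin W^{1,2}_{loc}$ near any point of $\CH_{1} \cup \CH_{2}$; this together with the $C^{2}$-inextendibility criterion of Theorem~\ref{main.theorem.C2} (Step~5), the boundary characterization of Theorem~\ref{thm:kommemi} (Step~1), and the exterior asymptotics of \cite{D2, DRPL} rules out every candidate $C^{2}$ future extension. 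The only point requiring attention is that all of these results are now invoked in the regime $2 < \omg_{0} < 3$, where the Price's-law decay rates along the event horizons are slower and the weight in $\mathfrak{L}_{(\omg_{0}) \infty}$ is $r^{\omg_{0}}$ rather than $r^{3}$; since Definition~\ref{def:adm-data} and the cited interior theorems are formulated for all $\omg_{0} > 2$, this is a matter of tracking exponents rather than of new ideas.

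Part~(2) (openness with respect to $d_{1, \omg}$) is in fact easier than in Theorem~\ref{cor:scc} and requires none of the nonlinear stability of $\mathfrak{L}$ of Step~4(a): since $\mathfrak{L}_{(\omg_{0}) \infty}[\Tht] = \mathfrak{L}_{(\omg_{0}) 0}[\Tht]$ depends only on $\Tht$, if $\overline{\Tht} \in \calG$ then $\abs{\mathfrak{L}_{(\omg_{0}) 0}[\overline{\Tht}]} > 0$ and $\abs{\mathfrak{L}'_{(\omg_{0}) 0}[\overline{\Tht}]} > 0$, so any $\Tht$ with $d_{1, \omg}(\Tht, \overline{\Tht}) + \abs{\mathfrak{L}_{(\omg_{0}) 0}[\Tht] - \mathfrak{L}_{(\omg_{0}) 0}[\overline{\Tht}]} + \abs{\mathfrak{L}'_{(\omg_{0}) 0}[\Tht] - \mathfrak{L}'_{(\omg_{0}) 0}[\overline{\Tht}]}$ small enough still has $\mathfrak{L}_{(\omg_{0}) 0}[\Tht] \neq 0 \neq \mathfrak{L}'_{(\omg_{0}) 0}[\Tht]$, i.e.\ $\Tht \in \calG$; future admissibility is preserved by Remark~\ref{rmk:adm-stab}.

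The genuinely new content is part~(3'). Because $\mathfrak{L}_{(\omg_{0}) 0}$ is a weighted limit as $\rho \to \pm\infty$ and is therefore insensitive to compactly supported changes of the data, one cannot use the compactly-supported outgoing perturbation of Step~4(b); instead one must perturb the \emph{slowly decaying tail} of the initial scalar field. Given $\overline{\Tht} \in \calA \calI \calD(\omg_{0}) \setminus \calG$ (so $\mathfrak{L}_{(\omg_{0}) 0}[\overline{\Tht}] = 0$ or $\mathfrak{L}'_{(\omg_{0}) 0}[\overline{\Tht}] = 0$), I would fix a smooth cutoff $\chi$ supported near the relevant end(s), set $\phi_{\eps} = \phibg + \eps\, \chi(\rho) \abs{\rho}^{-\omg_{0}}$, choose $\dot{\phi}_{\eps}$ so that the combination entering \eqref{eq:adm-id-limits} acquires a contribution $\approx c \eps$ with $c \neq 0$, and then solve the Hamiltonian and momentum constraints \eqref{ham.con}--\eqref{mom.con} for the metric coefficients $(f_{\eps}, h_{\eps})$ (leaving $r$, $\ell$, $\e$ unchanged), verifying that the resulting data remain in $\calA \calI \calD(\omg_{0})$ --- i.e.\ obey the asymptotic-flatness bounds \eqref{eq:adm-id-af}, \eqref{eq:adm-id-phi}, admit the limits \eqref{eq:adm-id-limits}, and satisfy future admissibility. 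This produces $\Tht_{\eps}$ with $\Tht_{0} = \overline{\Tht}$ and $\mathfrak{L}_{(\omg_{0}) \infty}[\Tht_{\eps}] = \mathfrak{L}_{(\omg_{0}) 0}[\Tht_{\eps}] = c \eps + o(\eps) \neq 0$ (and likewise for the primed quantity) for small $\eps \neq 0$, hence $\Tht_{\eps} \in \calG$. Since the tail carries exactly the weight $\abs{\rho}^{-\omg_{0}}$, one has $d_{k, \omg_{0}}(\Tht_{\eps}, \overline{\Tht}) \to 0$ as $\eps \to 0$ for every $k$, but $d_{k, \omg}(\Tht_{\eps}, \overline{\Tht}) = \infty$ for $\omg > \omg_{0}$; this explains why continuity is claimed only in $d_{k, \omg_{0}}$ and why the invariance $\mathfrak{L}_{(\omg_{0}) 0}[\Tht_{\eps}] = \mathfrak{L}_{(\omg_{0}) 0}[\overline{\Tht}]$ of statement~(3) must be dropped --- this is precisely the ``qualitatively different'' phenomenon anticipated before the theorem.

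The main obstacle is the constraint-solving-with-prescribed-tail step inside part~(3'): one must show that the perturbed $(f_{\eps}, h_{\eps}, \dot{\phi}_{\eps})$ exist, stay in the admissible class with the sharp weights of Definition~\ref{def:adm-data} (and, in particular, do not disturb the asymptotics at the other end), and depend on $\eps$ so that $\eps \mapsto \mathfrak{L}_{(\omg_{0}) 0}[\Tht_{\eps}]$ is genuinely non-degenerate at leading order. This is an ODE and asymptotic-expansion bookkeeping exercise near each asymptotically flat end, closely parallel to --- though not identical with --- the construction of Step~4(b) carried out in \cite{LO.exterior}; the slower rate $\omg_{0} < 3$ makes the backscattering quantity $\mathfrak{L}$ irrelevant to the leading tail and so, somewhat paradoxically, simplifies the analysis relative to the $\omg_{0} \geq 3$ case treated there.
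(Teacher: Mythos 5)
Your proposal is correct and follows essentially the same route as the paper's own (brief, sketched) proof: both use that $\mathfrak{L}_{(\omg_0)\infty}=\mathfrak{L}_{(\omg_0)0}$ is determined by the initial data alone in the range $\omg_0\in(2,3)$, so that part~(1) follows directly from Theorems~\ref{main.theorem.C2} and \ref{thm:blowup}, part~(2) is immediate (no need for Theorem~\ref{thm:L-stability}), and part~(3') is obtained by adding an $\eps\,\rho^{-\omg_0}$ polynomial tail to the incoming part of the scalar-field data and solving the constraints as in the proof of Theorem~\ref{thm:instability} in \cite{LO.exterior}. The one caution worth noting is that it is the combination $\rd_\rho(r\phi)+\frac{f\ell}{r}\phi+fr\dot\phi$ appearing in the numerator of \eqref{eq:adm-id-limits} (i.e.\ the $\rd_v$-derivative, evaluated as $\rho\to\infty$) whose $\rho^{-\omg_0}$ tail must be modified nontrivially --- your ansatz $\phi_\eps=\phibg+\eps\chi(\rho)|\rho|^{-\omg_0}$ with an appropriate $\dot\phi_\eps$ does accomplish this, but one should verify after solving for $(f_\eps,h_\eps)$ that the leading-order contribution is not cancelled, which is the bookkeeping that both you and the paper defer.
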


\begin{remark}[$\mathfrak L_{(\omg_0)\infty}$ depends \emph{only} on the initial data for $\omg_0\in (2,3)$]
The proof of Theorem~\ref{cor:scc.small.omg} is much simpler than its counterpart (Theorem~\ref{cor:scc}) when $\omg_0\geq 3$. In fact, the philosophies of the proofs for Theorem~\ref{cor:scc.small.omg} and Theorem~\ref{cor:scc} are rather different. For Theorem~\ref{cor:scc.small.omg}, since when $\omg_0 \in (2,3)$, $\mathfrak L_{(\omg_0)\infty}=\mathfrak L_{(\omg_0)0}$ and $\mathfrak L_{(\omg_0)\infty}'=\mathfrak L_{(\omg_0)0}'$ depend only on the initial data, it suffices to perturb the \emph{incoming part} of the data with a (precise) polynomial tail to modify $\mathfrak L_{(\omg_0)0}$ and $\mathfrak L_{(\omg_0)0}'$. In contrast, in the proof of Theorem~\ref{thm:instability} in \cite{LO.exterior}, we add a (smooth) compactly supported perturbation to the \emph{outgoing part} of the data to perturb the \emph{dynamically defined} quantity $\mathfrak{L}$, rather than $\mathfrak{L}_{(\omg_{0}) 0}$ on $\Sgm_{0}$. On a technical level, in the case $\omg_0\in (2,3)$, we can completely bypass Theorem~\ref{thm:L-stability}, which is in some sense the most technically involved step of the proof.
\end{remark}

\begin{remark}[Topology of the initial data]
We note explicitly that the difference between the statements (3) in Theorem~\ref{cor:scc} and (3') in Theorem~\ref{cor:scc.small.omg} is that in (3'), the perturbations are only close to $\overline{\Tht}$ with respect to $d_{k, \omg_{0}}$, instead of with respect to $d_{k,\omg}$ for arbitrary $\omg>0$. This is because in the case $\omg_0\in (2,3)$, we are forced to add in a polynomial tail to make $\mathfrak L_{(\omg_0)0}$ and $\mathfrak L'_{(\omg_0)0}$ non-zero. Any perturbation with a faster decay rate as $r \to \infty$ \emph{cannot} alter $\mathfrak{L}_{(\omg_{0}) \infty}$ and $\mathfrak L'_{(\omg_0)0}$, which restricts the topology in (3') to $d_{k, \omg_{0}}$.
\end{remark}

\subsection{Specializing to a (small) neighborhood of Reissner--Nordstr\"om}\label{sec.instab.RN}
Since there exist Cauchy hypersurfaces in subextremal Reissner--Nordstr\"om spacetimes with non-vanishing charge for which the induced data are future-admissible (cf. Example~\ref{ex:RN}) and the admissibility condition is a stable property (cf. Remark~\ref{rmk:adm-stab}), it follows a fortiori from Theorem~\ref{cor:scc} that there exist arbitrarily small perturbations of Reissner--Nordstr\"om data such that the maximal globally hyperbolic future developments are $C^2$-future-inextendible.

In fact, more can be said in the case of solutions arising from small perturbations of Reissner--Nordstr\"om data. First, a result of Dafermos \cite{D3} shows that the interior of the black hole (and similarly the white hole) has a global bifurcate Cauchy horizon and the boundary does not contain any spacelike portion. (In the language of Theorem~\ref{thm:kommemi} in Section~\ref{sec:mghd} below, this means that $\mathcal S=\emptyset$.) Second, though perhaps not so physically relevant, one can use the main theorem together with Cauchy stability to obtain the following result on the instability of Reissner--Nordstr\"om, which is global to the future and to the past:

\begin{corollary}[Instability of Reissner--Nordstr\"om]\label{cor:RN.instab}
Given $(M, \e)$ with $0<|\e|<M$, there exists a sequence $(\Tht_i)_{i\in \mathbb N}$ of smooth perturbations of Reissner--Nordstr\"om data with parameters $M$ and $\e$ such that
\begin{itemize}
\item $\Tht_i\in \calA \calI \calD (\omg)\cap C^\infty_{\omg}$ for all $\omg > 2$, for all $i\in \mathbb N$ (cf. Definition~\ref{def:adm-data} and Remark~\ref{rmk.Cinfty}).
\item The initial data for $\phi$ and $\dot{\phi}$ for $\Tht_i$ are compactly supported for all $i\in \mathbb N$. Moreover, ``the support is uniformly bounded for all $i$'' in the sense that $\displaystyle\sup_{i\in \mathbb N}\sup_{supp (\phi)\cup supp(\dot{\phi})} r<\infty$.
\item As $i\to \infty$, $\Tht_i\to \Tht_{RN, M, \e}$ with respect to the distance $d_{\omg}$ for any $\omg > 2$ (cf. Definition~\ref{Ck.def} and \eqref{Cinfty.dist}). 
\item (Dafermos \cite{D3}) For each $i \in \bbN$, the maximal globally hyperbolic future-and-past development arising from $\Tht_i$ has the identical Penrose diagram as a subextremal Reissner--Nordstr\"om spacetime, as depicted in Figure~\ref{fig:RNadm}. In particular, it has global bifurcate Cauchy horizons both to the future and to the past.
\item For every $i\in \mathbb N$, the maximal globally hyperbolic future-and-past development are $C^2$-\underline{future}-and-\underline{past}-inextendible.
\end{itemize}
\end{corollary}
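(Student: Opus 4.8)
The backbone is Theorem~\ref{cor:scc} applied to Reissner--Nordstr\"om data, supplemented by Cauchy stability, the result of Dafermos~\cite{D3}, and a time-reversed version of the entire argument for the past. Fix $\omg_{0} \geq 3$ throughout the application of Theorem~\ref{cor:scc}. By Example~\ref{ex:RN} choose a spherically symmetric Cauchy hypersurface $\Sgm_{0}$ crossing the black hole region of the subextremal Reissner--Nordstr\"om solution with parameters $(M, \e)$, arranged to coincide with a $t = \mathrm{const}$ exterior slice outside a compact set, and let $\Tht_{RN, M, \e}$ be the induced admissible data set (admissibility is stable by Remark~\ref{rmk:adm-stab}); then $\Tht_{RN, M, \e} \in \calA \calI \calD(\omg) \cap C^{\infty}_{\omg}$ for all $\omg > 2$. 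Since $\phi \equiv \dot{\phi} \equiv 0$, the maximal development has $\phi \equiv 0$, so $\mathfrak{L} = \mathfrak{L}_{(\omg_{0}) 0} = 0$ and hence $\mathfrak{L}_{(\omg_{0}) \infty} = \mathfrak{L}'_{(\omg_{0}) \infty} = 0$; thus $\Tht_{RN, M, \e} \in \calA \calI \calD(\omg_{0}) \setminus \calG$. Apply the density statement Theorem~\ref{cor:scc}(3): there is a one-parameter family $(\Tht_{\eps})$ through $\Tht_{RN, M, \e}$ with $\Tht_{\eps} \in \calG$ for $\eps \neq 0$ and $\eps \mapsto \Tht_{\eps}$ continuous in $d_{k, \omg}$ for every $k \geq 2$, $\omg > 2$ (using $\Tht_{RN, M, \e} \in C^{k}_{\omg_{0}}$ for all $k$). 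By inspection of the construction behind Theorem~\ref{cor:scc}(3), which perturbs only the scalar-field data by smooth functions supported in a region $\set{r \approx R_{\ast}}$ with $R_{\ast}$ independent of $\eps$, each perturbed $(\phi, \dot{\phi})$ is smooth and compactly supported in a fixed set $\set{r \leq R_{\max}}$, so $\Tht_{\eps} \in \calA \calI \calD(\omg) \cap C^{\infty}_{\omg}$ for all $\omg > 2$. Taking $\eps_{i} \to 0$ and $\Tht_{i} := \Tht_{\eps_{i}}$ gives $\Tht_{i} \to \Tht_{RN, M, \e}$ in $d_{\omg}$ for every $\omg > 2$ (Remark~\ref{rmk.Cinfty}), the uniform support bound, and --- by Theorem~\ref{cor:scc}(1) --- $C^{2}$-future-inextendibility of the maximal development of each $\Tht_{i}$. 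For $\eps_{i}$ small, $\Tht_{i}$ is close enough to $\Tht_{RN, M, \e}$ that Cauchy stability for \eqref{WW.SS} (cf. Propositions~\ref{prop.LWP}, \ref{prop.CIVP.LWP}) together with Dafermos~\cite{D3} yields the Penrose diagram of Figure~\ref{fig:RNadm}, with bifurcate Cauchy horizons $\CH_{1} \cup \CH_{2}$ (future) and $\calC\calH^{-}_{1} \cup \calC\calH^{-}_{2}$ (past).

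It remains to rule out $C^{2}$ extensions through $\calC\calH^{-}_{1} \cup \calC\calH^{-}_{2}$. Since \eqref{EMSFS} is time-reversal invariant, the past development of $\Tht_{i}$ is governed by the time-reversed analogues of Steps~2(b), 3 and 5 of Section~\ref{sec.main.structure} --- equivalently of Theorems~\ref{main.theorem.C0.stability}, \ref{final.blow.up.step}, \ref{thm.nonpert}, \ref{main.theorem.C2} together with, for the exterior input, the time-reversed analogue of \cite{LO.exterior}. By the previous paragraph the spacetime is globally close to Reissner--Nordstr\"om, so the past exterior regions $\Ext_{1} \cap J^{-}(\Sgm_{0})$, $\Ext_{2} \cap J^{-}(\Sgm_{0})$ converge (after time reversal) to Reissner--Nordstr\"om exteriors, with the past white-hole horizons $\calH^{-}_{1}, \calH^{-}_{2}$ and past null infinities $\calI^{-}_{1}, \calI^{-}_{2}$ playing the roles of the event horizons and $\NI_{1}, \NI_{2}$, and $\phi$ decaying along $\calH^{-}_{1}, \calH^{-}_{2}$ at a Price-law rate (a time-reversed Theorem~\ref{thm:price-law-intro}). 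One thus obtains past analogues $\mathfrak{L}^{-}_{(\omg_{0}) \infty}$ and $(\mathfrak{L}^{-})'_{(\omg_{0}) \infty}$ (integrals along $\calI^{-}_{1}, \calI^{-}_{2}$) whose non-vanishing implies an $L^{2}$-averaged polynomial lower bound of $\phi$ on $\calH^{-}_{1}, \calH^{-}_{2}$ near $i^{-}_{1}, i^{-}_{2}$, hence $W^{1,2}_{loc}$ blow-up of $\phi$ on $\calC\calH^{-}_{1}, \calC\calH^{-}_{2}$, hence --- by the time-reversed Theorem~\ref{main.theorem.C2} --- $C^{2}$-past-inextendibility. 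To make these past quantities generically nonvanishing, enlarge the perturbation of Theorem~\ref{cor:scc}(3): superpose finitely many smooth compactly supported scalar-field perturbations of size $\eps$, both an ``outgoing'' and an ``incoming'' one near each asymptotically flat end, still supported in a fixed $\set{r \approx R_{\ast}}$. Each contributes, at leading order in $\eps$, a nonzero amount to the corresponding member of $\set{\mathfrak{L}_{(\omg_{0}) \infty}, \mathfrak{L}'_{(\omg_{0}) \infty}, \mathfrak{L}^{-}_{(\omg_{0}) \infty}, (\mathfrak{L}^{-})'_{(\omg_{0}) \infty}}$ --- the same explicit near-Reissner--Nordstr\"om computation as in \cite{LO.instab} and Step~4(b), applied in the domain of influence of the perturbation --- so by the stability statement Theorem~\ref{cor:scc}(2) and its time-reversed version all four quantities are nonzero for $0 < \abs{\eps} < \eps_{\ast}$, after shrinking $\eps_{\ast}$. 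Choosing $\eps_{i} \to 0$ within this range finishes the proof.

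The genuinely new content, and the main obstacle, is the past development. For the future, $C^{2}$-inextendibility is a black-box consequence of Theorem~\ref{cor:scc}. For the past, the subtlety is that the past Cauchy development of $\Sgm_{0}$-data corresponds, upon time reversal, to \emph{white-hole-type} (anti-trapped) rather than admissible Cauchy data, so Theorem~\ref{cor:scc} does not apply verbatim; one must instead verify that every structural ingredient of the interior and exterior arguments survives time reversal --- convergence of the past exteriors to Reissner--Nordstr\"om and decay of $\phi$ along $\calH^{-}_{1}, \calH^{-}_{2}$, the finite-volume property of neighborhoods of points on $\calC\calH^{-}$ used in Theorem~\ref{thm.nonpert}, and the blow-up/inextendibility mechanism of Section~\ref{sec.main.theorem.C2}. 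All of this is morally a consequence of the time-reversal symmetry of \eqref{EMSFS} combined with global closeness to Reissner--Nordstr\"om, but careful bookkeeping of the coordinate conventions (in particular of which asymptotic regions each perturbation influences) and of the mutual interaction of the several $\eps$-perturbations is where the real work lies.
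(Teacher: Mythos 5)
Your first paragraph (future-inextendibility, compact support, convergence, Penrose diagram) is essentially the paper's argument, and is correct. The second paragraph, on ruling out $C^{2}$-extensions through $\calC\calH^{-}_{1} \cup \calC\calH^{-}_{2}$, takes a genuinely different route from the paper's, and this is where your proposal has a real gap.

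You correctly identify the obstruction — the $\Sgm_{0}$-data is \emph{future}-admissible, not past-admissible, so Theorem~\ref{cor:scc} cannot be applied to $\Sgm_{0}$-data in the backward direction. But your proposed remedy is to re-derive the entire past machinery (a time-reversed Price's law along $\calH^{-}$, time-reversed interior stability/instability, a new quantity $\mathfrak{L}^{-}_{(\omg_{0})\infty}$, time-reversed openness), and then to superpose an ``incoming'' perturbation to each end so that all four of $\mathfrak{L}_{(\omg_{0})\infty}, \mathfrak{L}'_{(\omg_{0})\infty}, \mathfrak{L}^{-}_{(\omg_{0})\infty}, (\mathfrak{L}^{-})'_{(\omg_{0})\infty}$ are simultaneously nonzero. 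Neither step is carried out: you have not defined $\mathfrak{L}^{-}$ nor established that a ``time-reversed Theorem~\ref{cor:scc}(2)'' applies to $\Sgm_{0}$-data (it cannot, for exactly the admissibility reason you raise); and the claim that each of the four $\eps$-perturbations contributes ``at leading order'' to exactly one of the four target quantities without destroying the nonvanishing of the others is a nontrivial quantitative assertion about the interaction of incoming and outgoing wave packets across the bifurcation spheres, which you only gesture at. You yourself flag this as ``where the real work lies,'' which is another way of saying the proof is not complete.

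The paper's proof sidesteps all of this by re-anchoring the data: if the maximal past development of $\tilde{\Tht}_{i}$ is $C^{2}$-past-extendible, solve \emph{backwards} via Cauchy stability to a past hypersurface $\Sgm_{-}$ on which the induced data \emph{is} (upon time reversal) future-admissible — such a slice exists because the solution is globally close to Reissner--Nordstr\"om and hence has a white-hole interior by \cite{D3}. Then Theorem~\ref{cor:scc} (both the density part, to make the past development inextendible, and the openness part, to preserve $\mathfrak{L}_{(\omg_{0})\infty}, \mathfrak{L}'_{(\omg_{0})\infty} \neq 0$ on $\Sgm_{0}$ after transferring the perturbation forward by Cauchy stability) applies as a black box; no past Price's law, no $\mathfrak{L}^{-}$, and no four-fold perturbation bookkeeping are needed. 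You should adopt this re-anchoring step: without it, your second paragraph does not close.
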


\section{Maximal globally hyperbolic future development of admissible Cauchy data} \label{sec:mghd}
We now begin discussing the ingredients of the proof of Theorem~\ref{cor:scc} (and Theorem~\ref{cor:scc.small.omg} and Corollary~\ref{cor:RN.instab}). In the next few sections, we will build up to the proof in Section~\ref{sec.pf.SCC}. In this section, we collect some known results regarding the maximal globally hyperbolic future development of admissible Cauchy data. This corresponds to Steps~1 and 2(a) in the discussion in Section~\ref{sec.main.structure} in the introduction.

We begin with a preliminary characterization of the future boundary of the maximal globally hyperbolic future development of an admissible data set can be obtained\footnote{The original theorem of Kommemi applies in the case with one asymptotically flat end. The ideas, however, can be applied in our setting, see discussions in \cite{D3}.} from the work of Kommemi \cite{Kommemi}. 
\begin{theorem} \label{thm:kommemi}
Let $(\calM, g, \phi, F)$ be the maximal globally hyperbolic future development of an admissible Cauchy initial data set (with arbitrary $\omg_{0} >2$, cf. Definition~\ref{def:adm-data}), and denote by $(\calQ = \calM / SO(3), g_{\calQ})$ the quotient Lorentzian manifold. Then the following statements hold:
\begin{enumerate}
\item $(\mathcal Q,g_{\mathcal Q})$ can be conformally embedded\footnote{Note that this is equivalent to $(\mathcal Q,g_{\mathcal Q})$ having a global system of double null coordinates, cf. \eqref{SS.metric.2}.} into a bounded subset of $\mathbb R^{1+1}$. 
\item Let $\mathcal Q^+$ be the closure of $\mathcal Q$ with respect to the topology induced by the conformal embedding described in part (1). Then the boundary\footnote{We abuse notation slightly to name the image of $\mathcal Q$ under the conformal embedding also as $\mathcal Q$. We will similarly do this for subsets of $\mathcal Q$, such as $\Sigma_0$.} of $\mathcal Q$ in $\mathcal Q^+$ has the following components: 
\begin{enumerate}
\item The initial hypersurface $\Sigma_0$.
\item Spatial infinities $i^0_1$ and $i^0_2$ which are the end-points of $\Sigma_0$ in $\mathcal Q^+$, with the convention that $i^0_1$ is the end-point with $\rho\to \infty$ and $i^0_2$ is the end-point with $\rho\to -\infty$.
\item Two connected components of null infinity, denoted by $\mathcal I^+_1$ and $\mathcal I^+_2$ respectively, each of which is an open null segment\footnote{The fact that it is open and that $r$ does not diverge to $\infty$ along $\EH_1$ and $\EH_2$ (see Definition~\ref{def.EH} below), follows from \cite{DafTrapped}.}, defined as the part of the boundary such that the $r$ diverges to $\infty$ along a transversal null curve towards $\mathcal I^+_1$ and $\NI_2$. 
\item Timelike infinities $i^+_1$ and $i^+_2$, which are defined to be future end-points of $\mathcal I^+_1$ and $\mathcal I^+_2$ respectively.
\item The Cauchy horizons \footnote{In the general setting of \cite{Kommemi}, $\CH_1$ and $\CH_2$ may be empty. Nevertheless, it is non-empty in our setting thanks to Theorem \ref{main.theorem.C0.stability}.} $\CH_1$ and $\CH_2$, which are defined to be half open\footnote{Both $\CH_1$ and $\CH_2$ are chosen to include their future endpoints. Therefore, in the case where $\mathcal S$ is empty (or contains only a single point of $\mathcal Q$), by our convention the bifurcation sphere is part of both $\CH_1$ and $\CH_2$.} null segments emanating from future null infinities $\mathcal I^+_1$ and $\mathcal I^+_2$ respectively such that the area-radius function $r$ extends continuously to $\CH_1\cup \CH_2$ and is strictly positive except possibly at the future endpoints of $\CH_1$ or $\CH_2$.
\item A (possibly empty) achronal set\footnote{In \cite{Kommemi}, Kommemi further distinguishes the sets for which $r$ extends to $0$ into null segments emanating from the endpoints of $\CH_1$ or $\CH_2$ and another piece which does not intersect any null rays emanating from future null infinity. We do not need such distinction here and will simply consider one achronal set $\mathcal S$ on which $r$ extends to $0$.} $\mathcal S$ which is defined to be the subset of the boundary on which $r$ extends continuously to $0$.
\end{enumerate}
Moreover, $\mathcal Q^+$ can be given by the Penrose diagram in Figure~\ref{fig:Kommemi}.
\end{enumerate}
\end{theorem}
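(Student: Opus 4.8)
The plan is to follow the a priori boundary characterization of Kommemi \cite{Kommemi}, adapted to the two-ended asymptotically flat setting as in Dafermos \cite{D3}, supplemented by the completeness-of-null-infinity result of \cite{DafTrapped}. The argument rests on three ingredients: (i) a global double null gauge on $\calQ$ together with a conformal embedding of $\calQ$ into a bounded subset of $\bbR^{1+1}$; (ii) the monotonicity furnished by the Raychaudhuri equations \eqref{eqn.Ray} and the mass equations in \eqref{eq:EMSF-r-phi-m}, together with the propagation to the future of the signs of $\lmb = \rd_v r$ and $\nu = \rd_u r$; and (iii) the extension principle (the ``first singularity'' dichotomy) for the reduced system \eqref{WW.SS}, which trades boundedness of $r$ (away from $0$ and $\infty$) and non-degeneracy of $\Omg$ for extendibility of the solution, thereby constraining the behavior of $r$ along $\partial\calQ$.

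For (i), I would use that $(\calQ, g_\calQ)$, being the quotient of a maximal globally hyperbolic development with Cauchy curve $\Sgm_0 \cong \bbR$, is a simply connected, globally hyperbolic $1+1$-dimensional Lorentzian manifold; integrating the two null foliations and normalizing as in Lemma~\ref{lem:cauchy-to-char} produces global null coordinates $(u,v)$, and composing with a bounded reparametrization (e.g.\ $\arctan$ in each variable) yields a conformal, hence causal, diffeomorphism onto a bounded open $\calU \subset \bbR^{1+1}$. One sets $\calQ^+ = \overline{\calU}$; this gives part (1) and provides the topological setting for analyzing $\partial\calQ$, the initial curve $\Sgm_0$ itself being trivially a boundary segment.

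For the boundary decomposition in part (2), the core observations are: from \eqref{eqn.Ray}, $\lmb/\Omg^2$ is non-increasing in $v$ and $\nu/\Omg^2$ is non-increasing in $u$, so a negative sign of $\lmb$ (resp.\ $\nu$) persists toward the future; in particular the future admissibility condition \eqref{eq:adm-id-adm} (equivalently $\nu < 0$ for $\rho \geq \rho_1$ and $\lmb < 0$ for $\rho \leq \rho_2$ on $\Sgm_0$, cf.\ Remark~\ref{rmk:adm}) rules out anti-trapped spheres in the development and places the trapped region to the future of a regular region. Combining this with monotonicity of $r$ along null rays, monotonicity of $\varpi$ from \eqref{eq:EMSF-r-phi-m}, and the extension principle, each boundary point is classified by the limiting value of $r$: where $r \to \infty$ one obtains (pieces of) the open null segments $\NI_1$, $\NI_2$, whose completeness and whose property that $r$ does not diverge along the terminal null rays $\EH_1$, $\EH_2$ are exactly the content of the two-ended version of \cite{DafTrapped}; the future endpoints of $\NI_1$, $\NI_2$ are $i^+_1$, $i^+_2$ by definition; where $r \to 0$ one obtains $\calS$, which is achronal since $r$ is strictly decreasing in both null directions in the trapped region to the future of $\EH_1 \cup \EH_2$; the endpoints of $\Sgm_0$ (where $r \to \infty$ by \eqref{eq:adm-id-af}) are $i^0_1$, $i^0_2$; and the remaining null boundary segments, along which $r$ extends continuously and positively, are $\CH_1$, $\CH_2$. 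Reassembling these with the causal constraints ($\NI_i$ achronal and terminal, $\CH_i$ generated by a single null ray from $i^+_i$, $\calS$ achronal) yields the Penrose diagram of Figure~\ref{fig:Kommemi}. Non-emptiness of $\CH_1$, $\CH_2$ is not asserted by the Kommemi-type argument and is supplied later by Theorem~\ref{main.theorem.C0.stability}, while subextremality of the horizon limits is deferred to Appendix~\ref{sec.subext.pf}.

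The main obstacle is ingredient (iii) together with the completeness of null infinity. The extension principle requires propagating quantitative control — a bounded-variation or energy bound on $\phi$ and $\log\Omg$ — all the way up to the corner of a characteristic rectangle, via an energy estimate plus characteristic iteration in the style of \cite{Kommemi, D3}; and the assertion that $\NI_1$, $\NI_2$ are \emph{open} null segments with $r$ bounded along $\EH_1$, $\EH_2$ — i.e.\ that the two asymptotically flat ends neither collide nor terminate prematurely — is the delicate point established in \cite{DafTrapped}. I would cite both of these rather than reprove them, so that the present theorem is genuinely an adaptation of known results.
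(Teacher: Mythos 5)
Your proposal is correct and takes essentially the same approach as the paper: the theorem is stated in the paper without proof, as a direct citation to Kommemi \cite{Kommemi}, adapted to the two-ended setting following \cite{D3}, with the completeness of null infinity and the non-divergence of $r$ along $\EH_1,\EH_2$ coming from \cite{DafTrapped}. Your sketch correctly identifies the key ingredients — global null gauge and conformal embedding into $\bbR^{1+1}$, monotonicity from the Raychaudhuri and mass equations together with propagation of the signs of $\lmb$ and $\nu$ via the future admissibility condition, and the extension principle — and correctly defers to the cited references for the delicate arguments rather than reproving them, which is exactly what the paper does.
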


\begin{figure}[h] 
\begin{center}
\def\svgwidth{250px}
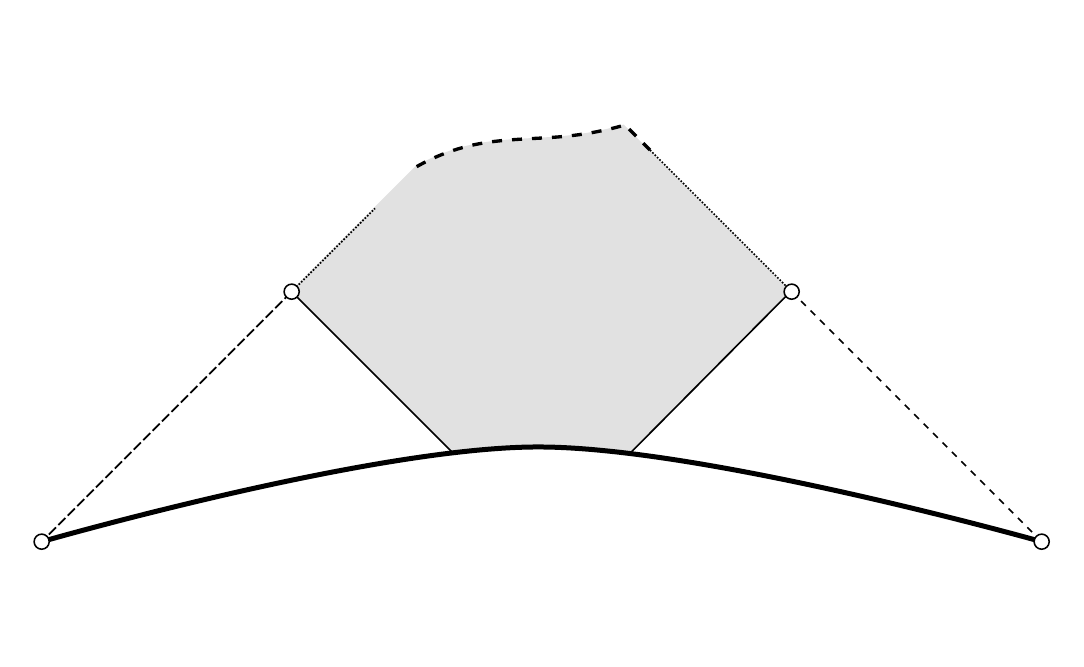 
\caption{Penrose diagram of the maximal globally hyperbolic future development. The black hole interior is the shaded region. The black hole exterior consists of two connected components in white.}  \label{fig:Kommemi}
\end{center}
\end{figure}

\subsection{Subextremality of the event horizons}
Given the boundary characterization, one can define the event horizons, see Definition~\ref{def.EH}. We will recall a result of Kommemi that under the admissibility condition, the event horizons are \emph{subextremal}, see Proposition~\ref{prop.subextremality}. The most important consequence of the subextremality statement is that the Price's law decay of Dafermos--Rodnianski applies. We will discuss this in the next subsection.

We now begin with the definition of the event horizons. For this, introduce the notation $\ub(v)$ and $\underline{v}(u)$ so that $(u,\underline{v}(u)), (\ub(v),v)\in \Sigma_0$.
\begin{definition}[Event horizons]\label{def.EH}
Given the maximal globally hyperbolic future development of an admissible Cauchy initial data set (with arbitrary $\omg>2$, cf.~Definition~\ref{def:adm-data}), define the \emph{event horizon} $\EH_1:=\{(u,v): u=u_{\EH_1},\, v\geq \underline{v}(u_{\EH}) \}$, where $u_{\EH_1}:=\sup \{u: \sup r(u', \cdot)=\infty \mbox{ for all }u' < u\}$ and $\underline{v}(u_{\EH})$ is such that $(u_{\EH},\underline{v}(u_{\EH}))\in \Sigma_0$. Abusing notation, we will also refer to the set $\EH_1\times \mathbb S^2\subset \mathcal M$ as the \emph{event horizon}. Notice that this is well-defined by Theorem~\ref{thm:kommemi} and can be viewed as the past-directed null curve emanating from $i^+_1$.

We also define the event horizon $\EH_2$ emanating from $i^+_2$ in a completely analogous manner, namely, $\EH_2:=\{(u,v): v=v_{\EH_2},\, u\geq \ub(v_{\EH_2}) \}$, where $v_{\EH_{2}}:=\sup \{v: \sup r(\cdot, v')=\infty \mbox{ for all }v' < v\}$ and $\ub(v_{\EH_2})$ is such that $(\ub(v_{\EH_2}),v_{\EH_2})\in \Sigma_0$.
\end{definition}

Importantly, the event horizons $\EH_1$ and $\EH_2$ are \emph{subextremal}, in the sense that $\sup \varpi(u_{\EH_1}, \cdot) >|{\bf e}|$ and $\sup{ \varpi(\cdot,v_{\EH_2})} >|{\bf e|}$. This is a result in the PhD thesis of Kommemi \cite{KomThe}. We include a proof in  Appendix~\ref{sec.subext.pf} for the convenience of the reader.
\begin{proposition}[Subextremality of the event horizons, Kommemi \cite{KomThe}]\label{prop.subextremality}
The following strict inequalities hold:
$$\sup \varpi(u_{\EH_1}, \cdot)>|{\bf e}|,\quad \sup \varpi(\cdot, v_{\EH_{2}})>|{\bf e}|.$$
\end{proposition}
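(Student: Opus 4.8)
The plan is to reduce the statement to the behaviour of $\varpi$ along one event horizon, say $\EH_{1}$ (the case of $\EH_{2}$ follows by interchanging $u$ and $v$), and to combine the monotonicity of $\varpi$ with a simple algebraic inequality and a ``settling-down'' analysis at $i^{+}_{1}$. First I would record that $\lambda := \rd_{v} r \geq 0$ and $\nu := \rd_{u} r \leq 0$ on $\EH_{1}$, obtained by passing to the limit from the exterior $\Ext_{1}$, where the strict inequalities hold: if $\lambda \leq 0$ at a point of $\Ext_{1}$, then by Raychaudhuri ($\rd_{v}(\lambda/\Omg^{2}) = -r(\rd_{v}\phi)^{2}/\Omg^{2} \leq 0$) one has $\lambda \leq 0$ along the entire outgoing ray through that point, so $r$ is non-increasing there and cannot diverge, contradicting that the ray meets $\NI_{1}$ (Theorem~\ref{thm:kommemi}); the sign of $\nu$ is obtained symmetrically, using the other Raychaudhuri equation and the sign $\nu < 0$ imposed on the portion of $\Sgm_{0}$ adjacent to the first end by future admissibility (Remark~\ref{rmk:adm}). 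Hence $1 - \mu = -4\lambda\nu/\Omg^{2} \geq 0$ on $\EH_{1}$, and since $\tfrac{1-\mu}{\rd_{v} r} = -\tfrac{4\nu}{\Omg^{2}}$, the equation for $\rd_{v}\varpi$ in \eqref{eq:EMSF-r-phi-m} reads $\rd_{v}\varpi = -\tfrac{2\nu}{\Omg^{2}} r^{2}(\rd_{v}\phi)^{2} \geq 0$ on $\EH_{1}$. Thus $\varpi$ is non-decreasing along $\EH_{1}$ towards $i^{+}_{1}$, and it suffices to show that $M_{f} := \sup_{\EH_{1}}\varpi$ satisfies $M_{f} > |\e|$; we may assume $M_{f} < \infty$.

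\emph{The algebraic input and the non-strict bound.} The elementary fact I would use repeatedly is: wherever $1 - \mu \leq 0$ one has $r^{2} - 2\varpi r + \e^{2} \leq 0$, hence $\varpi \geq \tfrac12(r + \e^{2}/r) \geq |\e|$ by AM--GM, with equality only if $r = |\e|$ (and likewise for a limit along a sequence with $1 - \mu \to 0$). Now, since $r$ stays bounded along $\EH_{1}$ (a consequence of \cite{DafTrapped} recorded in Theorem~\ref{thm:kommemi}) and $\lambda \geq 0$, we have $r \nearrow r_{1} \in (0,\infty)$; together with $\varpi \nearrow M_{f}$ this gives $1 - \mu \to \iota := 1 - 2M_{f}/r_{1} + \e^{2}/r_{1}^{2} \geq 0$ along $\EH_{1}$. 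If one can show $\iota = 0$ --- i.e. that the spacetime approaches a Reissner--Nordstr\"om event horizon at $i^{+}_{1}$ --- then $r_{1}^{2} - 2M_{f}r_{1} + \e^{2} = 0$, and the equality case above gives $M_{f} = \tfrac12(r_{1} + \e^{2}/r_{1}) \geq |\e|$ with $r_{1} = M_{f} + \sqrt{M_{f}^{2} - \e^{2}}$.

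\emph{The main obstacle: $\iota = 0$, and strictness.} Proving $\iota = 0$ is the heart of the matter, and the part I expect to be hardest. It must be carried out \emph{without} invoking the Dafermos--Rodnianski decay estimates (Theorem~\ref{thm:price-law-intro}), which already presuppose the subextremality at issue; instead I would argue from the boundedness of $r$ (so $\lambda = \rd_{v} r$ is integrable along $\EH_{1}$), the Raychaudhuri monotonicity (hence convergence) of $\lambda/\Omg^{2}$, the identity $\rd_{v}\nu = \rd_{u}\lambda = -\tfrac12 r^{-2}(\varpi - \e^{2}/r)\Omg^{2}$ on $\EH_{1}$ (from the first equation of \eqref{eq:EMSF-r-phi-m} together with $\nu\lambda/(1-\mu) = -\Omg^{2}/4$), and the global structure from Theorem~\ref{thm:kommemi} --- notably the completeness of $\NI_{1}$ and the fact (a propagation property of future admissibility, cf. \cite{D3}) that the region to the future of $\EH_{1}$ is trapped --- which together pin down the limiting behaviour along $\EH_{1}$ and force $\iota = 0$. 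It then remains to upgrade $\geq$ to $>$, i.e. to exclude $M_{f} = |\e|$; this would force $r_{1} = |\e|$ and $\varpi < |\e|$, $r < |\e|$ on all of $\EH_{1}$, an extremal limiting state. Here I would invoke future admissibility once more to produce a trapped sphere $q_{*} \in \Sgm_{0}$; since $\lambda/\Omg^{2}$ and $\nu/\Omg^{2}$ are Raychaudhuri-monotone, the entire causal future of $q_{*}$ is trapped, where $1 - \mu < 0$ and hence $\varpi > |\e|$, whereas the interior of an extremal Reissner--Nordstr\"om black hole contains no trapped spheres (there $1 - \mu = (r - |\e|)^{2}/r^{2} > 0$); arguing that this trapped region cannot coexist with an extremal limiting state along $\EH_{1}$ (again using the global structure of $\calQ^{+}$ and the monotonicity of $\varpi$) yields the desired contradiction, so $M_{f} > |\e|$.
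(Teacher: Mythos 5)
Your proposal follows the same two-stage architecture as the paper's argument in Appendix~\ref{sec.subext.pf} (monotonicity of $\varpi$ along $\EH_1$ and the polynomial identity $r_{\EH_1}^2 - 2\varpi_{\EH_1} r_{\EH_1} + \e^2 = 0$ give $\varpi_{\EH_1} \geq |\e|$; ruling out equality then requires a separate argument exploiting trapped spheres), and you are right to flag that $1-\mu \to 0$ along $\EH_1$ is the first serious point that cannot appeal to Theorem~\ref{thm:price-law-intro}. However, there are two genuine gaps.

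First, two deferred technical steps. Your sketch of ``$\iota=0$'' lists plausible ingredients but does not supply the argument; the paper's proof is a specific bootstrap showing that if $(1-\mu)\restriction_{\EH_1}\geq c>0$, then $r$ stays within $O(\veps)$ of $r\restriction_{\EH_1}$ along nearby outgoing rays $\set{u=u_0}$ with $u_0<u_{\EH_1}$, contradicting $\sup r(u_0,\cdot)=\infty$ from Definition~\ref{def.EH}. In addition, your reduction assumes without proof that $r_1 = M_f + \sqrt{M_f^2-\e^2}$ rather than the smaller root; this too needs a short argument (if $r_1$ were the minus root, one would deduce that $r \leq \varpi - \sqrt{\varpi^2-\e^2} \leq \sup_{\Sgm_0\cap\{u\leq u_{\EH_1}\}}\varpi < \infty$ in a whole characteristic rectangle to the past of $\EH_1$, again contradicting Definition~\ref{def.EH}).

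Second, and more seriously, the strictness step. You correctly note that future admissibility gives a trapped sphere on $\Sgm_0$ and that the extremal Reissner--Nordstr\"om interior has none, but the inference ``this trapped region cannot coexist with an extremal limiting state along $\EH_1$'' is precisely the missing mechanism, and the heuristic does not survive scrutiny: the trapped region lies strictly to the future of $\EH_1$, and there is no immediate contradiction between $\varpi\leq|\e|$ on $\EH_1$ and $\varpi>|\e|$ in the trapped region (continuity of $\varpi$ is consistent with both). What the paper actually does is quite different: locate the past boundary of the trapped region on the slice $\set{v=v_*}$ (i.e.\ the apparent horizon $u=u_{**}$ where $\rd_v r$ first turns negative), use the monotone decrease of $r$ and $\varpi$ along a spacelike/null curve from $\EH_1\cap\Sgm_0$ to $(u_{**},v_*)$ to conclude $r(u_{**},v_*)\leq\e$ and $\varpi(u_{**},v_*)\leq\e$, and then analyse the sign of $\rd_u\rd_v r = -\tfrac{\Omg^2(\varpi-\e^2/r)}{2r^2}$ at $(u_{**},v_*)$: the contradiction assumption makes $\rd_u\rd_v r\geq 0$ there, with the borderline case $\rd_u\rd_v r = 0$ forcing $r=\varpi=\e$, which in turn makes $\rd_u\rd_v r>0$ just to the future in $u$ (since $\rd_u r<0$ strictly pushes $r$ below $\e$); either alternative violates the definition of $u_{**}$. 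This careful local analysis at the apparent horizon is the crux, and it is not recoverable from your sketch.
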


\subsection{Price's law decay theorem of Dafermos--Rodnianski}
By the subextremality assertion (Proposition~\ref{prop.subextremality}), the Price's law decay theorem due to Dafermos--Rodnianski is applicable. For our purposes in this paper, it suffices to only state a subset of the estimates that are proven in \cite{DRPL}.
\begin{theorem}[Price's law decay, Dafermos--Rodnianski \cite{DRPL}] \label{thm:DR-decay}
Let $(\mathcal M, g, \phi, F)$ be the maximal globally hyperbolic future development of an admissible Cauchy initial data set with some $\omg_{0} >2$ (cf. Definition~\ref{def:adm-data}). Let $(u,v)$ be a coordinate system defined in connected component of the exterior region bounded in the future by $\EH_1$ and $\NI_1$ such that $v=\infty$ at\footnote{This statement, and the statement $\EH_1=\{(u,v):u=\infty\}$ is to be understood in terms of the conformal embedding of $\mathcal Q$ given in Theorem~\ref{thm:kommemi}, i.e., for every fixed $u$ and a sequence $v_i\to \infty$, $(u,v_i)$ corresponds to a sequence of points converging to $\NI_1$ with respect to the topology induced by the conformal embedding.} $\NI_1$ and $u=\infty$ at $\EH_1$ with the normalizations
\begin{equation}\label{PL.normal}
\lim_{v\to \infty}\f{(-\rd_u r)}{1-\mu}(u,v)=1,\quad C_0^{-1}\leq \lim_{u\to \infty}\f{\rd_v r}{1-\mu}(u,v)\leq C_0, 
\end{equation}
for some constants $C_0$.
Then, for every $\eta>0$, there exists some $B>0$ depending on $C_0$, $\eta$, as well as the solution, such that the following (future) decay estimates hold along the event horizon $\EH_1$:
$$|\phi|(\infty,v)+|\rd_v\phi|(\infty,v)\leq B v^{-\min\{\omg,3\}+\eta}\quad\mbox{for }v\geq 1$$
and the following (future) decay estimates hold along future null infinity $\NI_1$:
$$r|\phi|(u,\infty)\leq B u^{-\min\{\omg, 3\}+1}\quad\mbox{for }u\geq 1.$$
\end{theorem}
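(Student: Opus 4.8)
The statement is essentially the Price's law theorem of Dafermos--Rodnianski \cite{DRPL}, so the plan is to verify that its hypotheses are met in each connected component of the exterior region and then invoke it; the adaptation from the one-ended setting of \cite{DRPL} to the two-ended setting here is carried out as in \cite{D3}, and the minor refinements we need beyond the bare statement are collected in \cite[Section~5]{LO.exterior}. Concretely, I would restrict attention to the connected component $\Ext_{1}$ of the exterior region, which is bounded to the future by $\EH_{1}$ and $\NI_{1}$, with a segment of $\Sgm_{0}$ on its past boundary. By Theorem~\ref{thm:kommemi}, $\Ext_{1}$ admits a global double null coordinate system $(u, v)$ with $\NI_{1} = \set{v = \infty}$ and $\EH_{1} = \set{u = \infty}$, normalized as in \eqref{PL.normal}; and the asymptotic flatness conditions \eqref{eq:adm-id-af}--\eqref{eq:adm-id-phi} on $\Sgm_{0}$ translate, via Lemma~\ref{lem:cauchy-to-char}, into the asymptotic flatness hypotheses used in \cite[Section~A.4]{DRPL}. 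The crucial structural input is Proposition~\ref{prop.subextremality}: it is the subextremality of $\EH_{1}$ that makes the Dafermos--Rodnianski machinery applicable to a possibly large-data exterior.

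For completeness I sketch the mechanism of the proof in \cite{DRPL}, which I would not reproduce. One first exploits the monotonicity built into \eqref{eqn.Ray} and \eqref{eq:EMSF-r-phi-m}: $\partial_{v}(\partial_{v} r / \Omg^{2}) \leq 0$, $\partial_{u}(\partial_{u} r / \Omg^{2}) \leq 0$, and the modified Hawking mass $\varpi$ is monotone along each null direction in $\Ext_{1}$. Combined with the asymptotic flatness of the data, this forces $\varpi$ to remain bounded and to converge to a final value $M$; by Proposition~\ref{prop.subextremality} one has $M > |{\bf e}|$, so $\Ext_{1}$ settles down to the exterior of a subextremal Reissner--Nordstr\"om spacetime with parameters $(M, {\bf e})$. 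Using the positivity of the surface gravity of $\EH_{1}$, a red-shift estimate then yields uniform boundedness of $\phi$ throughout $\Ext_{1}$ together with decay of $\partial_{v} \phi$ in a neighborhood of $\EH_{1}$. Next, one runs a hierarchy of $r$-weighted energy (flux) estimates for the reduced field $r \phi$, which satisfies $\partial_{u} \partial_{v} (r\phi) = \frac{\partial_{u} \partial_{v} r}{r} (r \phi)$ with a slowly decaying potential whose leading behaviour is governed by the mass term (decaying like $r^{-3}$), to conclude that the weighted fluxes of $\partial_{v}(r\phi)$ on the outgoing cones decay in powers of $u$. Finally, integrating the reduced wave equation along characteristics and feeding in both this flux decay and the initial decay $\phi = O(\abs{\rho}^{-\omg_{0}})$ from \eqref{eq:adm-id-phi}, one obtains the pointwise bounds $\abs{\phi} + \abs{\partial_{v}\phi} \lesssim v^{-s}$ on $\EH_{1}$ and $r\abs{\phi} \lesssim u^{-s+1}$ on $\NI_{1}$ for any $s$ slightly below $\min\set{\omg_{0}, 3}$, and iterating the hierarchy improves $s$ to $\min\set{\omg_{0}, 3} - \eta$ for every $\eta > 0$. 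The saturation of the rate at $3$ is due to backscattering off the $r^{-3}$ potential tail, whereas for $\omg_{0} \in (2, 3)$ the slower initial-data tail is what dominates. Running the same argument on $\Ext_{2}$, with the roles of $u$ and $v$ interchanged, yields the analogous estimates on $\EH_{2}$ and $\NI_{2}$.

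The genuine difficulty, which I would defer entirely to \cite{DRPL}, is that the exterior data need not be small, so no global linearization off Reissner--Nordstr\"om is available and the decay has to be extracted from the nonlinear monotonicity structure alone. That is precisely the role played by the boundedness and monotonicity of $\varpi$ and of $r$: they substitute for smallness and force the convergence to Reissner--Nordstr\"om, after which the red-shift and weighted-energy estimates behave as in the fixed-background problem. Since, after restricting to a single component, the exterior region of an admissible initial data set is exactly the configuration treated in \cite{DRPL} (cf. \cite{D3} for the two-ended bookkeeping), the theorem follows.
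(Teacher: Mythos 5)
Your proposal is correct and takes essentially the same route as the paper: Theorem~\ref{thm:DR-decay} is stated as a citation of Dafermos--Rodnianski \cite{DRPL}, with the verification of its applicability (reduction of the Cauchy problem to the characteristic initial value problem, existence of a coordinate system satisfying \eqref{PL.normal}, completeness of $\NI_{1}$ and $\EH_{1}$) deferred to \cite[Section~5]{LO.exterior}, and with subextremality (Proposition~\ref{prop.subextremality}) identified as the structural input that makes the machinery apply. One small imprecision worth flagging: the existence of a double null gauge satisfying the normalization \eqref{PL.normal} does not follow directly from the conformal embedding in Theorem~\ref{thm:kommemi} alone, but is part of what the paper defers to \cite[Section~5]{LO.exterior}; since you already invoke that reference at the outset, this is a phrasing issue rather than a gap.
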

An analogous statement obviously holds in the exterior region bounded in the future by $\EH_2$ and $\NI_2$ with obvious modifications.

\begin{remark}[Completeness of null infinity and event horizon]\label{rmk.completeness}
It is implicit in Theorem~\ref{thm:DR-decay} that both $\NI_1$ and $\EH_1$ are complete\footnote{See \cite{DafTrapped} for a definition of the completeness of null infinity}. This follows from the existence of double null coordinates $u, v$ which satisfy the normalization \eqref{PL.normal} and have infinite range, as required by Theorem~\ref{thm:DR-decay}.
\end{remark}

\begin{remark}[Applicability of Theorem~\ref{thm:DR-decay} in our setting]
At this point, it may not be entirely obvious that the results in \cite{DRPL}, which are originally stated for a characteristic initial value problem, are applicable in our setting where we start with a Cauchy initial data set. However, using asymptotic flatness, one can indeed reduce the problem to the characteristic initial value problem considered in \cite{DRPL}. Moreover, the existence of a double null coordinate system $(u, v)$ satisfying the conditions of Theorem~\ref{thm:DR-decay} (and hence completeness of $\NI_{1}$ and $\EH_{1}$ by Remark~\ref{rmk.completeness}) can be proven. Detailed proofs of these statements can be found in \cite[Section~5]{LO.exterior}. 
\end{remark}

\section{Main theorems for the interior region proven in this paper}\label{sec.precise}
In this section, we state the main theorems for the interior region. These are the precise versions of the results corresponding to Steps~2(b), 3(b), 3(c) and 5 in Section~\ref{sec.main.structure}. \textbf{These results will be proven in Section~\ref{sec.main.theorem.C0.stability}, \ref{sec.blow.up}, \ref{sec.nonpert} and \ref{sec.main.theorem.C2} respectively.} We also refer the reader to the beginning of each of those sections for brief discussions of the ideas of the proofs.

\subsection{Geometry of Reissner--Nordstr\"om interior} \label{sec.geometry}
In this subsection, we briefly digress to discuss the geometry of the interior of Reissner--Nordstr\"om. This will provide the language and notations for some of the statements in the remainder of the paper.

Let ${\bf e}$ and $M$ be real numbers such that $0<|{\bf e}|<M$. \emph{The interior of the Reissner--Nordstr\"om black hole} is the manifold $\mathbb R^2\times \mathbb S^2$ together with the metric \eqref{RN.metric}:
\begin{equation*}
g=- \left( 1-\f {2M} r+\f{{\bf e}^2}{r^2} \right) \,dt^2+ \left( 1-\f {2M} r+\f{{\bf e}^2}{r^2}\right)^{-1} \, dr^2+r^2 d\sigma_{\mathbb S^2},
\end{equation*}
where $r$ ranges over $(r_-,r_+)$ with $r_\pm:=M\pm\sqrt{M^2-{\bf e}^2}$ and $t$ ranges over $(-\infty,\infty)$. We will attach the event horizon $\mathcal H^+_{total} = \mathcal{H}^{+}_{1} \cup \mathcal{H}^{+}_{2}$ and Cauchy horizon $\mathcal{CH}^+_{total} = \mathcal{CH}^{+}_{1} \cup \mathcal{CH}^{+}_{2}$ as boundaries of the interior of the Reissner--Nordstr\"om black hole such that $(\mathbb R^2\times \mathbb S^2)\cup \mathcal{H}_{total}^{+} \cup \mathcal{CH}_{total}^{+}$ is a manifold with corners (see Section \ref{sec.horizons} and the shaded region in Figure \ref{fig:RN}).

In Section~\ref{sec.null.1}, we will introduce a set of null coordinates $(u,v)$ and put the metric in the form \eqref{SS.metric.1}, \eqref{SS.metric.2} as discussed in Section~\ref{sec.SS}. We then introduce the notions of the event horizon and the Cauchy horizon in Section~\ref{sec.horizons}. In Sections~\ref{sec.null.2} and \ref{sec.null.3}, we will introduce two more systems of null coordinates, namely, the $(U,v)$ and $(u,V)$ coordinate system, which are regular at the event horizon and Cauchy horizon respectively. In the paper (especially in the proof of Theorem \ref{main.theorem.C0.stability}), we will use all of these coordinate systems to compare the metric of the spacetime solution in question with that of the Reissner--Nordstr\"om spacetime.
 
\subsubsection{The $(u,v)$ coordinate system}\label{sec.null.1}
We first define the $r^*$ coordinate in the interior of the Reissner--Nordstr\"om black hole:
$$r^*=r+ \left( M+\frac{2M^2-{\bf e}^2}{2\sqrt{M^2-{\bf e}^2}} \right) \log (r_+-r) + \left( M-\frac{2M^2-{\bf e}^2}{2\sqrt{M^2-{\bf e}^2}} \right)\log (r-r_-).$$
Define then the null coordinates
$$v=\frac 12(r^*+t),\quad u=\frac 12(r^*-t),$$
which implies
$$\frac{\rd}{\rd v}= \frac{\rd}{\rd r^*}+\frac{\rd}{\rd t},\quad\frac{\rd}{\rd u}= \frac{\rd}{\rd r^*}-\frac{\rd}{\rd t}.$$
According to \eqref{RN.metric}, in this coordinate system, the Reissner--Nordstr\"{o}m metric takes the form
$$g_{RN}=-\f{\Omg_{RN}^2}{2} (du\otimes dv+dv\otimes du)+ r_{RN}^2 d\sigma_{\mathbb S^2},$$
where $\Omg_{RN}^2=-4\left(1-\frac{2M}{r_{RN}}+\frac{{\bf e}^2}{r_{RN}^2}\right)$. Here and below, we use the notation $r_{RN}(u,v)$ to denote the Reissner--Nordstr\"om area radius function $r$ in the $(u,v)$ coordinates that we just defined. The following holds:
\begin{equation}\label{int.r}
\rd_v r_{RN}=\rd_u r_{RN}=1-\f{2M}{r_{RN}}+\f{{\bf e}^2}{r_{RN}^2}.
\end{equation}
Define $\kappa_+>0$ and $\kappa_->0$ to be\footnote{Note that this is different from the definition in \cite{LO.instab} in which $\kappa_-$ is taken to be negative.} 
$$\kappa_+=\f{r_+-r_-}{2 r_+^2},\quad \kappa_-=\f{r_+-r_-}{2 r_-^2}.$$
The coordinate $r^*$ can then be alternatively written as
$$r^*=r_{RN}+\f{1}{2\kappa_+}\log (r_+-r_{RN}) -\f{1}{2\kappa_-}\log (r_{RN}-r_-).$$
We compute that when $r_{RN}$ is close to $r_+$, we have
$$r_+-r_{RN}=e^{-2\kappa_+ r_+}(r_+-r_-)^{\f{\kappa_+}{\kappa_-}}e^{2\kappa_+ r^*}(1+O(r_+-r_{RN})).$$
In other words, as $r_{RN}\to r_+$,
\begin{equation} \label{eq:Omg-H}
\f 14\Omg_{RN}^2=-\rd_u r_{RN}=-\rd_v r_{RN}=\f{e^{-2\kappa_+ r_+}(r_+-r_-)^{1+\f{\kappa_+}{\kappa_-}}}{r_+^2}e^{2\kappa_+ r^*}(1+O(r_+-r_{RN})).
\end{equation}
On the other hand, when $r_{RN}$ is close to $r_-$, we have
$$r_{RN}-r_-=e^{2\kappa_-r_-}(r_+-r_-)^{\f{\kappa_-}{\kappa_+}}e^{-2\kappa_-r^*}(1+O(r_{RN}-r_-)).$$
As a consequence, as $r_{RN}\to r_-$,
\begin{equation} \label{eq:Omg-CH}
\f 14\Omg_{RN}^2=-\rd_u r_{RN}=-\rd_v r_{RN}=\f{e^{-2\kappa_-r_-}(r_+-r_-)^{1+\f{\kappa_-}{\kappa_+}}}{r_-^2}e^{-2\kappa_-r^*}(1+O(r_{RN}-r_-)).
\end{equation}
Moreover, for every $r_1$, $r_2$ such that $r_-<r_1<r_2<r_+$, there exists $C_{r_1,r_2,M,{\bf e}}$ depending on $r_1$, $r_2$, $M$ and ${\bf e}$ such that the following holds whenever $r_{RN}(u,v)\in [r_1,r_2]$:
$$\f 14\Omg_{RN}^2(u,v)=-\rd_u r_{RN}(u,v)=-\rd_v r_{RN}(u,v)\leq C_{r_1,r_2,M,{\bf e}}.$$

\subsubsection{The event horizon and the Cauchy horizon}\label{sec.horizons}
In this subsection, we introduce the event horizon and the Cauchy horizon. For this purpose, we will need to introduce a few additional sets of null coordinates. Define the functions $U_{\EH}(u)$, $U_{\CH}(u)$, $V_{\EH}(v)$ and $V_{\CH}(v)$ which are smooth and strictly increasing functions of their arguments and satisfy the following ODEs:
\begin{equation}\label{U.EH.def}
\f{dU_{\EH}}{du}=e^{2\kappa_+ u} \mbox{ and }\, U_{\EH}(u)\to 0 \mbox{ as }u\to -\infty;
\end{equation}
\begin{equation}\label{U.CH.def}
\f{dU_{\CH}}{du}=e^{-2\kappa_- u} \mbox{ and }\, U_{\CH}(u)\to 1 \mbox{ as }u\to +\infty;
\end{equation}
\begin{equation}\label{V.EH.def}
\f{dV_{\EH}}{dv}=e^{2\kappa_+ v}\mbox{ and }V_{\EH}(v)\to 0\mbox{ as }v\to -\infty;
\end{equation}
\begin{equation}\label{V.CH.def}
\f{dV_{\CH}}{dv}=e^{-2\kappa_- v}\mbox{ and }V_{\CH}(v)\to 1\mbox{ as }v\to +\infty.
\end{equation}
In the $(U_{\EH},V_{\EH})$ coordinate system, we attach the boundaries $\EH_1:=\{U_{\EH}=0\}$ and $\EH_2:=\{V_{\EH}=0\}$. Denote also the \emph{event horizon} as $\EH_{total}=\EH_1\cup\EH_2$. In the $(U_{\CH},V_{\CH})$ coordinate system, we attach the boundaries $\CH_1:=\{V_{\CH}=1\}$ and $\CH_2 :=\{U_{\CH}=1\}$. Denote also the \emph{Cauchy horizon} as $\CH_{total}=\CH_1\cup\CH_2$. Notice that in our convention, the \emph{bifurcation sphere of $\EH_{total}$}, which is given by $\{(U_{\EH},V_{\EH}):U_{\EH}=V_{\EH}=0\}$ belongs to both $\EH_1$ and $\EH_2$ (in fact, it is precisely $\calH_{1}^{+} \cap \calH_{2}^{+}$). Similarly for the the \emph{bifurcation sphere of $\CH_{total}$}. Notice that the metric extends smoothly to the boundaries. We refer the reader to the shaded region of Figure~\ref{fig:RN} for a depiction of these boundaries.

\subsubsection{The $(U,v)$ coordinate system}\label{sec.null.2}
Here and in Section \ref{sec.null.3}, we will compute the metric in two more systems of null coordinates. In particular, these coordinate systems will be regular near $\EH_1$ and $\CH_{1}$ respectively. In view of the fact that Theorems \ref{main.theorem.C0.stability} and \ref{final.blow.up.step} are stated and proved in the neighbourhood of one componenent of the event horizon and one component of the Cauchy horizon, without loss of generality, we will consider a neighbourhood of $\EH_1$ and $\CH_1$. {\bf In particular, it will be convenient to introduce the convention that $\EH$ and $\CH$ (without subscripts) refer to $\EH_1$ and $\CH_1$ respectively and that $U=U_{\EH}$ and $V=V_{\CH}$.} We will often abuse terminology to also call $\EH$ and $\CH$ the \emph{event horizon} and the \emph{Cauchy horizon} respectively.

Using the above convention, $U=U_{\EH}$ is given by \eqref{U.EH.def}, i.e.,
\begin{equation}\label{U.def}
\f{dU}{du}=e^{2\kappa_+ u} \mbox{ and }\, U(u)\to 0 \mbox{ as }u\to -\infty.
\end{equation}
To distinguish the metric in this coordinate system to that in the $(u,v)$ coordinate, we use an extra index $\mathcal H$. The metric in the coordinate system $(U,v)$ takes the form
$$g_{RN,\mathcal H}=-\f{\Omg_{RN,\mathcal H}^2}{2} (dU\otimes dv+dv\otimes dU)+ r_{RN}^2 d\sigma_{\mathbb S^2},$$
where 
$$\Omg_{RN,\mathcal H}^2=-\frac{4}{2 \kpp_{+}}\left(1-\f{2M}{r_{RN}}+\f{{\bf e}^2}{r_{RN}^2}\right)U^{-1}.$$ 
Notice that for every fixed $v$, $\lim_{U\to 0} \Omg_{RN,\mathcal H}^2(U,v)$ is non-zero, i.e., $(U,v)$ is a regular coordinate system near the event horizon. In fact, we have
$$\lim_{U\to 0} \Omg_{RN,\mathcal H}^2(U,v)=\f{4e^{-2\kappa_+ r_+}(r_+-r_-)^{1+\f{\kappa_+}{\kappa_-}}}{r_+^2}e^{2\kappa_+ v}.$$
Moreover, for every $r_0\in (r_-,r_+)$, we have the following estimates if $r_{RN}(U,v)\in [r_{0}, r_{+}]$:
\begin{equation}\label{RN.H.bound}
\Omg_{RN,\mathcal H}^2+|\rd_U r_{RN}|\leq C_{r_0, M,{\bf e}}e^{2\kappa_+ v},\quad |\rd_v r_{RN}|\leq C_{r_0, M,{\bf e}}Ue^{2\kappa_+ v}
\end{equation}
where $C_{r_0,M,{\bf e}}>0$ is a constant depending only on $r_0$ and the parameters of the Reissner--Nordstr\"om spacetime $M$ and ${\bf e}$.
\subsubsection{The $(u,V)$ coordinate system}\label{sec.null.3}
Finally, we introduce the coordinate system $(u,V)$ by defining $V=V_{\CH}$ as in \eqref{V.CH.def}, i.e.,  
\begin{equation}\label{V.def}
\f{dV}{dv}=e^{-2\kappa_- v}\mbox{ and }V(v)\to 1\mbox{ as }v\to \infty.
\end{equation}
We use the index $\mathcal C\mathcal H$ to denote the metric in this coordinate system, i.e.
$$g_{RN,\mathcal C\mathcal H}=-\f{\Omg_{RN,\mathcal C\mathcal H}^2}{2} (du\otimes dV+dV\otimes du)+ r_{RN}^2 d\sigma_{\mathbb S^2},$$
where 
$$\Omg_{RN,\mathcal C\mathcal H}^2=-\frac{4}{2 \kpp_{-}}\left(1-\f{2M}{r_{RN}}+\f{{\bf e}^2}{r_{RN}^2}\right) (1-V)^{-1}.$$ 
In this coordinate system, the $\CH$ is given by $\{V=1\}$. Notice that in Reissner--Nordstr\"om, $(u,V)$ is a smooth coordinate system up to the Cauchy horizon.

In the proofs of Theorems \ref{main.theorem.C0.stability} and \ref{thm.nonpert}, we will also show that the metric extends continuously to $\{V=1\}$ in the $(u,V)$ coordinate system. On the other hand, we will of course not show that the metric extends smoothly as in the Reissner--Nordstr\"om case - indeed, in view of Theorems \ref{final.blow.up.step} and \ref{main.theorem.C2}, this is false for a large class of initial data.

\subsection{Stability and instability of the Cauchy horizon}

We first state the stability theorem (cf. Step~2(b) in Section~\ref{sec.main.structure}), which, except for the precise quantitative rates, was first proven by Dafermos in \cite{D2}. The theorem is formulated below in a gauge that is most convenient for deriving estimates in this paper. We will then compare this with an ``equivalent gauge''\footnote{Here, we say that two null coordinates $v$ and $v'$ are equivalent if $c_0^{-1}<\f{dv}{dv'}(v')<c_0$ for some $c_0>0$.} in Remark~\ref{gauge.equiv}, which is more convenient for applying Theorem~\ref{thm:DR-decay}. 

To give the statement of the theorem, we will use three coordinate systems: $(u,v)$, $(U,v)$ and $(u,V)$. They should be thought of as analogues of the corresponding coordinate systems on the Reissner--Nordstr\"om spacetime introduced in Sections~\ref{sec.null.1}, \ref{sec.null.2} and \ref{sec.null.3} respectively.

We will choose our coordinates as follows. First, we pick the coordinate system $(U,v)$, where $U$ and $v$ are normalized by the gauge conditions \eqref{gauge.1} and \eqref{gauge.2} below. We then define $u(U)$ and $V(v)$ which relate to $U$ and $v$ according to \eqref{U.def} and \eqref{V.def} respectively.

\begin{theorem}\label{main.theorem.C0.stability}
Fix $M$, ${\bf e}$ and $s$ such that $0<|{\bf e}|<M$ and $s>1$. Consider the characteristic initial value problem (cf. Proposition~\ref{prop.CIVP.LWP}) with $\e\neq 0$ and data given on $C_{-\infty}$ and $\underline{C}_1$ satisfying the constraint equations \eqref{eqn.Ray} such that the following hold for some $E>0$: 
\begin{enumerate}
\item $C_{-\infty}:=\{(U,v): U=0,\, v\geq 1\}$ (which will be viewed as the event horizon $\EH$) is an affine complete null hypersurface approaching a subextremal Reissner--Nordstr\"om event horizon with $0<|{\bf e}|<M$. More precisely, in the gauge\footnote{Here, $r_\pm$ and $\kappa_\pm$ are defined with respect to the fixed parameters $M$ and ${\bf e}$.}\footnote{To see that one can indeed choose such a gauge, see Remark~\ref{gauge.equiv} and Appendix~\ref{sec:appendix.gauge}.} 
\begin{equation}\label{gauge.1}
\Omg^2_{\mathcal H}=\f{4e^{-2\kappa_+ r_+}(r_+-r_-)^{1+\f{\kappa_+}{\kappa_-}}}{r_+^2}e^{2\kappa_+ v},
\end{equation}
we have
\begin{enumerate}
\item $r\to M+\sqrt{M^2-{\bf e}^2}$ as $v\to \infty$;
\item the following decay rate holds for all $v\geq 1$ for the scalar field and its derivative:
\begin{equation}\label{phi.EH.decay}
|\phi|(0,v)+|\rd_v\phi|(0,v)\leq E v^{-s};
\end{equation}
\end{enumerate}
\item On $\underline{C}_1:=\{(U,v): v=1,\, U\leq U_0\}$ (where $U_0>0$), after normalizing $U$ by 
\begin{equation}\label{gauge.2}
\rd_U r=-1,
\end{equation}
the following holds for all $U\leq U_0$:
\begin{equation}\label{rdUphi.initial}
|\rd_U\phi|(U,1)\leq E.
\end{equation}
\end{enumerate}
Then, by restricting to some nonempty, connected subset $\underline{C}'_1:=\{(U,v):v=1,\,U\leq U_s\}\subset \underline{C}_1$ for some $U_s>0$ sufficiently small, the globally hyperbolic future development of the data on $\underline{C}'_1\cup C_{-\infty}$ has a Penrose diagram given by Figure \ref{fig:maintheorem} (Section~\ref{sec.prev-works-nonlinear}). Moreover, in the $(u,V)$ coordinate system\footnote{We remind the reader again that once the $(U,v)$ coordinate system is defined, the $(u,v)$ and $(u,V)$ coordinate systems are given by the conditions \eqref{U.def} and \eqref{V.def}.}, we can attach the null boundary $\CH:=\{V=1\}$ to the spacetime such that the metric extends continuously to $\CH$. 

In addition, for $u_s=u(U_s)$, the following estimates hold in the $(u,v)$ coordinate system for $u<u_s$ and for some constant $C>0$ depending on $M$, ${\bf e}$, $E$ and $s$:
$$|\phi|(u,v)+|r-r_{RN}|(u,v)+|\log\Omg-\log\Omg_{RN}|(u,v) \leq C(v^{-s}+|u|^{-s+1}),$$
$$|\rd_v\phi|(u,v)+|\rd_v(r-r_{RN})|(u,v)+|\rd_v(\log\Omg-\log\Omg_{RN})|(u,v) \leq Cv^{-s}.$$
Furthermore, for every $A\in \mathbb R$, there exists $C>0$ depending on $A$, $M$, ${\bf e}$, $E$, $U_0$ and $s$ such that the following estimates hold in the $(u,v)$ coordinate system for $u<u_s$:
$$|\rd_u\phi|(u,v)+|\rd_u(r-r_{RN})|(u,v)+|\rd_u(\log\Omg-\log\Omg_{RN})|(u,v) \leq 
\begin{cases}
C\Omg_{RN}^2 v^{-s} &\mbox{for }u+v\leq A\\
C|u|^{-s} &\mbox{for }u+v\geq A.
\end{cases}
$$
\end{theorem}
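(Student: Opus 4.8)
The plan is to prove Theorem~\ref{main.theorem.C0.stability} by a continuity/bootstrap argument in the characteristic rectangle $\mathcal{R} = \{(U, v) : 0 \le U \le U_{s},\ v \ge 1\}$, with $U_{s} > 0$ chosen small, following the scheme of Dafermos \cite{D2} and the Dafermos--Luk framework \cite{DL} but tracking precise polynomial weights in $v$ (and, away from the event horizon, in $|u|$). On a truncated region $\mathcal{R} \cap \{v \le v_{2}\}$ one posits bootstrap assumptions: (i) $r$ stays in a fixed interval $[r_{-} - \delta, r_{+}]$ with $\delta$ small, in particular bounded away from $0$; (ii) $\Omg^{2}$ (in the $(U,v)$ gauge \eqref{gauge.1}) is comparable to $\Omg_{RN}^{2}$ and the modified Hawking mass $\varpi$ stays within a bounded interval around the limiting mass $M$; (iii) $\phi$, $\rd_{v}\phi$ and $\rd_{u}\phi$ obey the weighted bounds asserted in the theorem, and likewise for $r - r_{RN}$ and $\log\Omg - \log\Omg_{RN}$. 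One shows that the set of $v_{2}$ on which these hold with strictly improved constants is nonempty, open and closed, hence all of $[1,\infty)$; combined with a local existence argument (using the lower bound on $r$ and the control of $\Omg^{2}$) this shows $\mathcal{R}$ is contained in the maximal development, so $v \to \infty$ is reached, and after changing to the $(u,V)$ coordinates of Section~\ref{sec.null.3} the locus $\{V = 1\}$ becomes a genuine null boundary $\CH$.

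For the scalar-field bounds the engine is the pair of renormalized wave equations derived from the second line of \eqref{WW.SS}: $\rd_{U}(r\, \rd_{v}\phi) = -(\rd_{v} r)\, \rd_{U}\phi$ and $\rd_{v}(r\, \rd_{U}\phi) = -(\rd_{U} r)\, \rd_{v}\phi$. The $\rd_{v}\phi$-estimate with weight $v^{-s}$ is obtained by integrating the first equation from the event horizon $U = 0$, where $|\rd_{v}\phi|(0,v) \le E v^{-s}$ is given, using the control on $\rd_{v} r$ (which, via Raychaudhuri $\rd_{v}(\rd_{v}r/\Omg^{2}) = -r(\rd_{v}\phi)^{2}/\Omg^{2}$ and the near-event-horizon geometry, is comparable to $\Omg^{2}$) together with the smallness of $U_{s}$. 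The transversal derivative $\rd_{U}\phi$ is controlled by integrating the second equation from $\underline{C}_{1}' = \{v = 1\}$, where $|\rd_{U}\phi| \le E$, against the Price's-law weight $v'^{-s}$ (here the condition $s > 1$ enters to make the $v'$-integral converge); re-expressing $\rd_{u}\phi = e^{2\kappa_{+} u}\, \rd_{U}\phi$ then produces the $\Omg_{RN}^{2}$-degeneracy near the event horizon in the regime $u + v \le A$, where $\Omg_{RN}^{2} \sim e^{2\kappa_{+}(u+v)}$ is exponentially small, while in the bulk $u + v \ge A$ the weaker bound $|u|^{-s}$ is obtained by absorbing $\Omg_{RN}^{2}$ against the summable weight. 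Finally $\phi$ itself is recovered by integrating $\rd_{v}(r\phi)$ or $\rd_{U}(r\phi)$: integration from the event horizon inherits the $v^{-s}$ decay already present there, whereas integration from $\underline{C}_{1}'$ --- where only boundedness of $\phi$, not decay, is known --- costs a power of $|u|$ when the $|u|^{-s}$ bound on the transversal derivative is summed, yielding the $|u|^{-s+1}$ term.

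For the metric quantities one subtracts from the first and third equations of \eqref{WW.SS} their Reissner--Nordstr\"om counterparts and integrates the resulting equations for $\rd_{U}\rd_{v}(r - r_{RN})$ and $\rd_{U}\rd_{v}(\log\Omg - \log\Omg_{RN})$ from $U = 0$ and from $v = 1$. The source terms are schematically $\rd_{u}\phi\,\rd_{v}\phi$ together with differences of the potential/curvature terms $\Omg^{2}/r$, $\rd_{u}r\,\rd_{v}r / r$ and $\Omg^{2}\e^{2}/r^{3}$; feeding in the scalar-field bounds from the previous step, the smallness of $U_{s}$, and a Gr\"onwall argument in the rectangle shows these are integrable up to $v = \infty$ and yields the asserted bounds for $r - r_{RN}$ and $\log\Omg - \log\Omg_{RN}$ with exactly the same weights as for $\phi$, thereby closing the bootstrap once $U_{s}$ is small enough (and $C$ is allowed to depend on $A$ in the $\rd_{u}$-estimates). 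The continuous extension across $\CH$ then follows: since $r_{RN}$ and $\Omg_{RN,\mathcal{C}\mathcal{H}}^{2} = e^{2\kappa_{-}v}\Omg_{RN}^{2}$ extend smoothly to $\{V = 1\}$ and the solution stays uniformly close to the Reissner--Nordstr\"om background in these variables, $r$, $\phi$ and $\Omg_{\mathcal{C}\mathcal{H}}^{2}$ extend continuously, giving the Penrose diagram of Figure~\ref{fig:maintheorem}.

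\noindent\textbf{The principal difficulty} is that, precisely because $\Omg^{2}$ in the $(u,v)$ gauge decays exponentially in $v$ towards $\CH$ while $\phi$ and $\rd_{v}\phi$ decay only polynomially, the quantities one would \emph{naively} wish to close the estimates on are not bounded: Raychaudhuri forces $\rd_{v} r / \Omg^{2}$ (and, in the regular $(u,V)$ frame, the transversal second derivatives of $\phi$) to diverge towards the Cauchy horizon --- this is the very instability exploited later in Theorems~\ref{final.blow.up.step} and \ref{main.theorem.C2}. The estimates must therefore be arranged so that only quantities with a favorable sign or weight structure --- $\varpi$ and $r$, $\Omg^{2}$ renormalized to the $(U,v)$ or $(u,V)$ gauges, and $\phi$ together with its \emph{tangential} null derivatives --- are actually propagated, accepting the degeneration of everything else; this is what forces the $\Omg_{RN}^{2}$ weight into the transversal-derivative bounds and the two-regime split at $u + v = A$. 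Getting this weight bookkeeping to be simultaneously strong enough to close the bootstrap and consistent with the expected $W^{1,2}_{loc}$ blow-up of $\phi$ is the technical heart of the argument; by comparison, the analysis of the event-horizon data itself (an essentially one-dimensional ODE argument using the Reissner--Nordstr\"om-approach hypothesis and the subextremality of Proposition~\ref{prop.subextremality}, which ensures $M > |\e|$) is routine.
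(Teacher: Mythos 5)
Your proposal is broadly in the spirit of the paper's argument (bootstrap on differences from the Reissner--Nordstr\"om background, then $C^{0}$ extension to the Cauchy horizon), but there is a genuine gap in your bootstrap scheme, and you also omit the paper's key structural device.

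\textbf{The bootstrap assumption on $\varpi$ cannot close.} You posit in item (ii) that the modified Hawking mass $\varpi$ ``stays within a bounded interval around the limiting mass $M$,'' and you later list $\varpi$ among the ``favorable'' quantities to be propagated. This is false: for data obeying the hypotheses of Theorem~\ref{main.theorem.C0.stability} and an additional pointwise lower bound on $\rd_{v}\phi$ along the event horizon (as in Theorem~\ref{Dafermos.instability.thm}), the Hawking mass --- hence $\varpi$ --- is \emph{identically infinite} on the Cauchy horizon. Equivalently, \eqref{blow.up.interior.dvr} says $\lambda/\Omg^{2} \to -\infty$, and $m = \frac{r}{2}(1 + 4\lambda\nu/\Omg^{2})$. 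Thus a bootstrap that requires the bound on $\varpi$ to be uniformly improvable as $v_{2}\to\infty$ cannot close, and a ``nonempty, open and closed'' continuity argument for this set of $v_{2}$ must fail. The paper is explicit about avoiding exactly this trap: in contrast to \cite{D2}, no estimates on the Hawking mass are derived at all, and the bootstrap variables are only the vector $\Psi = (\phi,\ r-r_{RN},\ \log(\Omg/\Omg_{RN}))$ and its tangential derivatives, with weights in $v^{-s}$ and $|u|^{-s}$. You should drop the $\varpi$-bound entirely; the remaining assumptions (i) and (iii) are the ones that actually propagate.

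\textbf{You also omit the red-shift/blue-shift split and the finite partition of the blue-shift region.} The paper does not run a single Gr\"onwall in the whole rectangle $\set{0\leq U\leq U_s,\ v\geq 1}$: it first proves estimates in the red-shift region $\set{U e^{2\kappa_+ v}\leq\delta}$ in $(U,v)$ coordinates (Theorem~\ref{RS.prop}), then switches to $(u,v)$ coordinates and partitions the blue-shift region $\set{u+v\geq A_1}$ into finitely many strips $\calB_i = \set{A_i\leq u+v\leq A_{i+1}}$ with small $r$-variation each (Theorem~\ref{BS.main}). This partition serves as a two-variable substitute for Gr\"onwall --- it makes the linear coefficients integrate to a controlled small number $\delta_B$ on each strip, so the bootstrap constant only accumulates multiplicatively across finitely many strips --- and it is on each $\calB_i$ that the no-loss polynomial weight lemmas (Lemmas~\ref{lemma.1}, \ref{lemma.2}) and the trick of bounding $e^{-\kappa_-(u+v)}\Psi$ rather than $\Psi$ itself are deployed. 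Your claim that one can integrate $\rd_{v}(r\rd_{U}\phi) = -(\rd_{U}r)\rd_{v}\phi$ ``from $\uC_{1}'$ against the Price's-law weight'' with the $v'$-integral ``converging because $s>1$'' is also imprecise: near the event horizon $|\rd_{U} r| \sim e^{2\kappa_+ v}$ (cf.~\eqref{RN.H.bound}), so the resulting integral grows like $e^{2\kappa_+ v}v^{-s}$; the finiteness comes only after converting back to $\rd_u\phi = e^{2\kappa_+ u}\rd_U\phi$. Relatedly, your assertion that $\rd_v r$ is ``comparable to $\Omg^2$'' is not correct near the Cauchy horizon: Proposition~\ref{dvr.improved} gives only $|\rd_v r|\leq C(\Omg_{RN}^{2}+v^{-2s})$, so $\rd_v r$ can be polynomially larger than $\Omg^{2}$, and proving even this requires the full machinery above.
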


\begin{remark}[Alternative formulation of the gauge condition]\label{gauge.equiv}
While the gauge condition on the event horizon as in Theorem~\ref{main.theorem.C0.stability} is suitable for proving estimates, it is less convenient to apply the theorem with this gauge condition. For that purpose, we note that one can alternatively consider the gauge condition $\f{\rd_v r}{1-\mu}=1$ on the set $C_{-\infty}$ (with the conditions on $\underline{C}_1$ remain unchanged). This will be discussed in detail in Proposition~\ref{prop:gauge} in Appendix~\ref{sec:appendix.gauge}.
\end{remark}

\begin{remark}[Assumptions in Theorem~\ref{main.theorem.C0.stability} \emph{always} hold in the global setting]\label{C0.stab.assumptions}
In fact, the assumptions in Theorem~\ref{main.theorem.C0.stability} \emph{always} hold for $C_{-\infty}=\EH_1$ (or for $\EH_2$, with appropriate changes of $u$ and $v$) with some $s>2$, and an arbitrary choice of a transversal initial null curve $\underline{C}_{1}$. We check this for $\EH_1$, as $\EH_2$ is similar. 

We first normalize $U$ by the gauge condition \eqref{gauge.2} on $\underline{C}_{1}$ (which will later coincide with $\set{(U, v): v=1, \, U \leq U_{0}}$). Note that this is possible because the future-admissibility condition in Definition~\ref{def:adm-data} guarantees that for any point on $\EH_1$, $r$ is decreasing towards the increasing $u$ direction; see Lemma~\ref{no.trapped} in Appendix~\ref{sec.subext.pf}.

Next, instead of (1), it suffices to verify the conditions in Proposition~\ref{prop:gauge}. Note that
\begin{enumerate}
\item $\EH_1$ is affine complete in view of Remark~\ref{rmk.completeness}.
\item $r$ and $\varpi$ have limits as in (b) and (c) in Remark~\ref{gauge.equiv} thanks to \eqref{r.varpi.limits} and \eqref{r.varpi.poly} in Appendix~\ref{sec.subext.pf}. Moreover, $\rd_{v} r \geq 0$ on $\EH_{1}$ (see Lemma~\ref{no.trapped}) and $\inf_{\EH_{1}} r > 0$ since $r$ is initially positive.
\item By Theorem~\ref{thm:DR-decay}, \eqref{phi.EH.decay} holds in the gauge $\f{\rd_v r}{1-\mu}=1$ .
\end{enumerate}
Moreover, by the $C^{1}$ regularity of the initial data, \eqref{rdUphi.initial} follows from standard local existence after a suitable translate of $v$.

Thus, as pointed out in celebrated work of Dafermos--Rodnianski \cite{DRPL}, one has \emph{unconditionally}  that the boundary of the maximal globally hyperbolic development has non-empty $\CH_1$ and $\CH_2$.

\end{remark}

Next, we state our instability theorem (cf. Step~3(b) in Section~\ref{sec.main.structure}):
\begin{theorem}\label{final.blow.up.step}
Assume that the assumptions of Theorem \ref{main.theorem.C0.stability} hold with $s>2$. If, moreover, there exists a \underline{non-integer}\footnote{This is simply to guarantee that $\alp_0>\alp'$.} \footnote{Notice that $\alp'$ cannot be too small, as it would be inconsistent with the assumptions in Theorem \ref{main.theorem.C0.stability}.} $\alpha'$ with $\alp_0:=\lceil \alp' \rceil\in [3, 4s-2)$ such that on $C_{-\infty}$
\begin{equation}\label{final.blow.up.step.assumption}
\int_{1}^\infty v^{\alpha'}(\rd_v\phi\restriction_{C_{-\infty}})^2(v)\, dv =\infty,
\end{equation}
then, after taking $u_s$ more negative if necessary, the following holds for all $u<u_s$:
\begin{equation}\label{blow.up.interior}
\int_{-u_s}^{\infty} \log_{+}^{\alpha_0}(\f 1{\Omg}) (\rd_v\phi)^2(u,v) dv=\infty,
\end{equation}
where $\log_+ x=\begin{cases}
\log x\quad\mbox{if }x\geq e\\
1\quad\mbox{otherwise}.
\end{cases}
$
Moreover, there exists $u_\lambda<u_s$ such that for every $u<u_\lambda$, the following blow up holds:
\begin{equation}\label{blow.up.interior.dvr}
\lim_{v\to \infty}\f{\lambda}{\Omg^2}(u,v)=-\infty.
\end{equation}
In particular, \eqref{blow.up.interior} implies that the scalar field is not in $W^{1,2}_{loc}$ in the $C^0$ extension constructed in Theorem \ref{main.theorem.C0.stability} and \eqref{blow.up.interior.dvr} implies that the metric is not in $C^1$ in the $C^0$ extension constructed in Theorem \ref{main.theorem.C0.stability}.
\end{theorem}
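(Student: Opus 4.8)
The plan is to transplant the linear argument of \cite{LO.instab} (on a fixed Reissner--Nordstr\"om interior) to the present nonlinear setting, with Theorem~\ref{main.theorem.C0.stability} playing the role of the exact Reissner--Nordstr\"om background. All estimates are in the development region of Theorem~\ref{main.theorem.C0.stability}, in which $\EH = C_{-\infty} = \set{U = 0}$ is the event horizon and $\CH = \set{V = 1}$ (i.e.\ $v \to \infty$) the Cauchy horizon, and I will use the coordinate systems $(u, v)$, $(U, v)$ and $(u, V)$. The structural inputs are: the wave equation in the form $\rd_{U}(r \rd_{v}\phi) = -(\rd_{v} r)(\rd_{U}\phi)$, equivalently $\rd_{U} \rd_{v}(r\phi) = -\tfrac{\varpi - \e^{2}/r}{2 r^{3}}\, \Omg^{2}_{\mathcal H}\, (r\phi)$, whose integration in $U$ from $\EH$ expresses $r \rd_{v}\phi$ (or $\rd_{v}(r\phi)$) on an interior cone $C_{u}$ as its value on $\EH$ plus an error integral; and the Raychaudhuri equation $\rd_{v}(\lmb \Omg^{-2}) = - r (\rd_{v}\phi)^{2} \Omg^{-2}$. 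I also record that, by the $C^{0}$-stability estimates ($\Omg \approx \Omg_{RN}$, $r \approx r_{RN}$) and $\Omg_{RN}^{2} \sim e^{-2\kpp_{-} v}$ near $\CH$, for each fixed $u < u_{s}$ the weight $\log_{+}^{\alp_{0}}(1/\Omg)$ is comparable to $v^{\alp_{0}}$ as $v \to \infty$; hence \eqref{blow.up.interior} is equivalent to the divergence of $\int v^{\alp_{0}}(\rd_{v}\phi)^{2}(u, \cdot) \, \ud v$, which is what I would prove.

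\emph{The main step} is to transfer the weighted lower bound from $\EH$ into the interior. Since $\alp_{0} = \lceil \alp' \rceil \geq \alp'$, hypothesis \eqref{final.blow.up.step.assumption} already gives $\int v^{\alp_{0}}(\rd_{v}\phi)^{2}\restriction_{\EH} = \infty$. As this event-horizon norm is already infinite, one is forced to work with truncated weights $w_{k} \uparrow v^{\alp_{0}}$: combining the transport identity above (for a lower bound) with a Gronwall estimate in $U$ (absorbing the cross term $\iint r\, w_{k}(\rd_{v}\phi)^{2}$), one obtains on $C_{u}$, for $u < u_{s}$ and $U_{s}$ shrunk, a bound $\int w_{k}(\rd_{v}\phi)^{2}(u, \cdot) \geq c \int w_{k}(\rd_{v}\phi)^{2}\restriction_{\EH} - C$ with $C$ \emph{independent of $k$}, and then $k \to \infty$ gives \eqref{blow.up.interior}. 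The uniform constant $C$ reduces to an error estimate of the type $\iint v^{\alp_{0}}(\rd_{v} r)^{2}(\rd_{U}\phi)^{2} \, \ud v \, \ud U < \infty$ over $\set{0 \le U \le U_{s}}$, and this is where essentially all the work goes — and where I expect the main difficulty. The competing phenomena are the blue shift, which by the $C^{0}$-stability estimates makes $\rd_{U}\phi$ grow exponentially in $v$ along $\EH$ near $i^{+}$, against the strong decay of $\rd_{v} r$ deep in the interior: integrating the Raychaudhuri equation and using $|\rd_{v}\phi| \lesssim v^{-s}$ improves the $C^{0}$-stability bound to $|\rd_{v} r| \lesssim v^{-2s}$ near $\CH$, so $(\rd_{v} r)^{2} \lesssim v^{-4 s}$; it is this gain against the weight $v^{\alp_{0}}$ that makes the hypothesis $\alp_{0} < 4 s - 2$ natural (the transition region near $i^{+}$ and the ``bulk'' being handled with their own splittings), while the non-integrality of $\alp'$, i.e.\ $\alp_{0} > \alp'$, provides the bit of extra room needed to close the estimate. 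This is the nonlinear counterpart of the key step of \cite{LO.instab}, with Theorem~\ref{main.theorem.C0.stability} in place of the fixed-background (Price's law) decay.

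\emph{The remaining assertions are comparatively soft.} For \eqref{blow.up.interior.dvr}: integrating the Raychaudhuri equation along $C_{u}$ and using $\Omg^{-2} \gtrsim e^{2\kpp_{-} v} \gg v^{\alp_{0}} \sim \log_{+}^{\alp_{0}}(1/\Omg)$ near $\CH$ (which holds uniformly once $u < u_{\lmb}$ is taken sufficiently negative, so that the $C^{0}$-stability comparison applies), \eqref{blow.up.interior} yields $\int^{\infty} r(\rd_{v}\phi)^{2}\Omg^{-2} \, \ud v = \infty$, hence $\lmb \Omg^{-2}(u, v) \to -\infty$ as $v \to \infty$. For the geometric conclusions: in the $(u, V)$ chart, which is regular up to $\CH$ for the $C^{0}$-extension of Theorem~\ref{main.theorem.C0.stability}, the $W^{1,2}$ norm of $\phi$ on a neighbourhood of a point of $\CH$ controls $\iint (\rd_{V}\phi)^{2} \, \ud U \, \ud V = \iint e^{2\kpp_{-} v}(\rd_{v}\phi)^{2} \, \ud v \, \ud U$, which diverges by \eqref{blow.up.interior}, so $\phi \notin W^{1,2}_{loc}$; and since $\rd_{V} r$ equals $\lmb \Omg^{-2}$ times the conformal factor in the $(u, V)$ chart, which extends continuously and positively to $\CH$, \eqref{blow.up.interior.dvr} forces the metric coefficient $r^{2}$ to fail to be $C^{1}$ at $\CH$.
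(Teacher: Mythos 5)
Your proposed main step has a genuine gap. The error term you identify --- $\iint_{\set{0 \le U \le U_{s}}} v^{\alp_{0}}(\rd_{v} r)^{2}(\rd_{U}\phi)^{2}\, \ud v\, \ud U$ --- is \emph{not} finite, and the obstruction comes from the red-shift region near $\EH$, not from the region near $\CH$ where your $(\rd_{v} r)^{2} \lesssim v^{-4s}$ observation is relevant. On the red-shift region $\set{U e^{2\kpp_{+} v} \le \dlt}$, Theorem~\ref{RS.prop} only gives $\abs{\rd_{U}\phi} \lesssim e^{2\kpp_{+} v} v^{-s}$ and $\abs{\rd_{v} r} \lesssim U e^{2\kpp_{+} v} + v^{-s}$, and the exponential factor in the first bound cannot be removed --- it records a genuine blue-shift oscillation of $\phi$ across $\EH$ (this is precisely why the blow-up theorem is nontrivial). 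Plugging these in, the inner $U$-integral over $[0, \dlt e^{-2\kpp_{+} v}]$ of $(\rd_{v} r)^{2}(\rd_{U}\phi)^{2}$ is of order $\dlt^{3} e^{2\kpp_{+} v} v^{-2s}$, so the weighted $v$-integral $\int_{1}^{\infty} v^{\alp_{0}} \cdot \dlt^{3} e^{2\kpp_{+} v} v^{-2s}\, \ud v$ diverges for \emph{every} $\alp_{0}$. Bypassing the Cauchy--Schwarz and keeping the genuine Gr\"onwall error $\int_{0}^{U_{s}} \bigl(\int v^{\alp_{0}}(\rd_{v} r)^{2}(\rd_{U}\phi)^{2}\, \ud v\bigr)^{1/2}\, \ud U$ does not help: the bound on the inner integral is $O(U^{-2})$ up to polylogarithms as $U \to 0$, and the outer integral still diverges. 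Moreover, even if the $\EH$-contribution could somehow be tamed, the constraint you describe from the region near $\CH$ would only yield $\alp_{0} < 2s - 1$, far short of the range $[3, 4s - 2)$ and, more to the point, insufficient for the intended application: Theorem~\ref{thm:blowup} forces $\alp_{0} \ge 8$ for $\omg_{0} \ge 3$, while Price's law gives $s$ only slightly above $2$, so $\alp_{0}$ sits in exactly the window $(2s - 1, 4s - 2)$ that a direct transport argument cannot reach.

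The paper proceeds instead by arguing the contrapositive --- as does \cite{LO.instab}, contrary to your description of the linear argument as a direct transfer: assume \eqref{blow.up.interior} fails on some late cone $\Gmm^{(1)}_{\tau_{0}'}$ and derive a decay rate $\int_{\EH(\tau, \infty)}(\rd_{v}\phi)^{2} \lesssim \tau^{-\alp_{0}}$ that contradicts \eqref{final.blow.up.step.assumption} (the non-integrality of $\alp'$ is used precisely to absorb a $\log^{2}$ loss here). Propagating backwards towards $i^{+}$ turns the blue-shift into a red-shift, and the assumed smallness is cascaded through an almost-conserved energy (Proposition~\ref{prop:intr:energy}), an integrated local energy decay estimate with a nonsmooth bulk weight $\log_{+}^{-\gmm}(1/\Omg)$ (Proposition~\ref{prop:intr:ILED}), and separate red-shift estimates near $\CH$ and near $\EH$ (Propositions~\ref{prop:intr:red-shift}, \ref{prop:intr:red-shift:EH}), iterated via the pigeonhole argument of Proposition~\ref{prop:intr:decay}. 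The decisive difference from your route is the choice of weights: near $\EH$ the estimate for $\rd_{u}\phi$ carries the weight $\Omg^{-2}$ (Proposition~\ref{prop:intr:red-shift:EH}), which is exactly what absorbs the exponential growth of $\rd_{U}\phi$ that derails the direct transport argument. Your derivations of \eqref{blow.up.interior.dvr} from \eqref{blow.up.interior} via Raychaudhuri, and of the $W^{1,2}_{loc}$/$C^{1}$ conclusions in the $(u, V)$ chart, are fine as sketched.
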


Theorems~\ref{main.theorem.C0.stability} and \ref{final.blow.up.step} give a fairly complete picture of nonlinear stability and instability properties of the Cauchy horizon in a region that is \emph{perturbative} in the sense that it is $C^{0}$ close to a Reissner--Nordstr\"om spacetime. Our next result combines these results with Theorem~\ref{thm:kommemi} to extend this picture to a suitable global non-perturbative setting (cf. Step~3(c) in Section~\ref{sec.main.structure}). For simplicity, we formulate the theorem only for $\CH_{1}$; an entirely symmetric statement holds for $\CH_{2}$.

\begin{theorem} \label{thm.nonpert}
Let $(\calM, g, \phi, F)$ be the maximal globally hyperbolic future development of an admissible Cauchy initial data set (with arbitrary $\omg_{0}>2$, cf. Definition~\ref{def:adm-data}). In a neighborhood of $\EH_1$ in the interior of the black hole, consider the null coordinates $(u, V)$ characterized by \eqref{gauge.1}, \eqref{V.EH.def} on $\EH_{1}$ for $V$, and \eqref{gauge.2}, \eqref{U.EH.def} for $u$. Denote by $p_{\CH_{1}} = (u_{\CH_{1}}, 1) \in \calQ^{+}$ the future endpoint\footnote{The existence of such a  $u_{\CH_{1}} \in (-\infty, \infty]$ is a straightforward consequence of the coordinate choice; see Lemma~\ref{lem.u-reg}.} of $\CH_{1}$. Then the metric components $\Omg^{2}(u, V)$ and $r(u, V)$, as well as the scalar field $\phi(u, V)$, extend continuously to $\CH_{1} \setminus \set{p_{\CH_{1}}} = \set{(u, V) : -\infty < u < u_{\CH_{1}}, V = 1}$.

Moreover, if the lower bound \eqref{final.blow.up.step.assumption} holds on $\EH_{1}$, then for every $u \in (-\infty, u_{\CH_{1}})$, the following blow up of $\rd_{V} \phi$ and $\rd_{V} r$ hold:
\begin{align} 
	\int_{0}^{1} \frac{(\rd_{V} \phi)^{2}}{\Omg^{2}}(u, V) d V = \infty, \label{eq:nonpert-blowup-phi} \\
	\lim_{V \to 1} \frac{\rd_{V} r}{\Omg^{2}}(u, V) = - \infty.		\label{eq:nonpert-blowup-dvr}
\end{align}
In particular, the scalar field is not in $W^{1,2}_{loc}$ and the metric is not in $C^1$ in the above $C^0$ extension obtained by adjoining $\set{(u, V) : -\infty < u < u_{\CH_{1}}, V = 1}$.
\end{theorem}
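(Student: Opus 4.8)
The plan is to establish the two assertions --- the \emph{unconditional} continuous extendibility of $\Omg^{2}$, $r$, $\phi$ to $\CH_{1} \setminus \set{p_{\CH_{1}}}$, and the \emph{conditional} blow up \eqref{eq:nonpert-blowup-phi}--\eqref{eq:nonpert-blowup-dvr} --- by a single continuity argument along $\CH_{1}$ in the $u$-coordinate of the theorem. Let $u^{\sharp}$ be the supremum of all $u_{\ast} \in (-\infty, u_{\CH_{1}}]$ for which the desired conclusions hold on $\set{(u, V) : u < u_{\ast}} \cap (\Int \cup \CH_{1})$. By Theorem~\ref{main.theorem.C0.stability} and Theorem~\ref{final.blow.up.step}, combined with Remark~\ref{C0.stab.assumptions} (which verifies that their hypotheses hold along $\EH_{1}$ with some $s > 2$, and that \eqref{final.blow.up.step.assumption} is exactly the assumed lower bound), this set is a non-empty interval, so $u^{\sharp}$ is well defined; the goal is to prove $u^{\sharp} = u_{\CH_{1}}$. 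Assuming for contradiction $u^{\sharp} < u_{\CH_{1}}$, so that $q^{\sharp} := (u^{\sharp}, 1) \in \CH_{1} \setminus \set{p_{\CH_{1}}}$, one must extend the conclusions slightly past $u^{\sharp}$.

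The key a priori input at $q^{\sharp}$ is the finite spacetime volume of a neighborhood of $q^{\sharp}$ (Lemma~\ref{lem.nonpert-vol}). I would prove this by picking $\delta > 0$ small and $v_{0}$ large so that the rectangle $\mathcal{R} = \set{(u, v) : \abs{u - u^{\sharp}} \leq \delta, \ v \geq v_{0}}$ lies in the trapped interior ($\rd_{v} r < 0$, $\rd_{u} r < 0$). By Theorem~\ref{thm:kommemi}, $r$ extends continuously and strictly positively to $\CH_{1}$ away from its endpoint, so $0 < c \leq r \leq C$ on $\mathcal{R}$; by the Raychaudhuri equation \eqref{eqn.Ray}, $\rd_{v} r / \Omg^{2}$ is non-increasing and negative in $v$, hence $\abs{\rd_{v} r}/\Omg^{2} \geq (\abs{\rd_{v} r}/\Omg^{2})(u, v_{0}) =: c_{1}(u) > 0$, with $c_{1}$ bounded below on $[u^{\sharp} - \delta, u^{\sharp} + \delta]$ by continuity. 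Then $\Omg^{2} \leq c_{1}(u)^{-1}\abs{\rd_{v} r}$, so $\int_{v_{0}}^{\infty} \Omg^{2}(u, v) \, \ud v \leq c_{1}(u)^{-1} r(u, v_{0}) \lesssim 1$, and the volume of $\mathcal{R}$, comparable to $\int \int_{\mathcal{R}} \Omg^{2} r^{2} \, \ud u \, \ud v$, is finite.

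With this bound, I would propagate the continuous extendibility past $u^{\sharp}$ by a Gr\"onwall/bootstrap argument on $\mathcal{R}$ driven by the evolution equations \eqref{WW.SS}, \eqref{eq:EMSF-r-phi-m}: in the $(u, V)$ gauge one has $\rd_{u} \rd_{V}(r\phi) = (\rd_{u}\rd_{V} r)\phi$, the equation for $\rd_{u}\rd_{V} r$, the $\log \Omg$ equation, and the monotonicity identities $\rd_{v}(\rd_{v} r / \Omg^{2}) \leq 0$, $\rd_{u}(\rd_{u} r / \Omg^{2}) \leq 0$, $\rd_{u}\varpi \geq 0$, $\rd_{v}\varpi \geq 0$ in the interior; integrating in $u$ from $u^{\sharp} - \delta$ (where the conclusions, including auxiliary $L^{1}_{v}$-bounds on first derivatives along outgoing rays, are already known) and absorbing error terms via $\int_{v_{0}}^{\infty} \Omg^{2} \, \ud v \lesssim 1$ and $\int_{v_{0}}^{\infty} \abs{\rd_{v} r} \, \ud v \leq r(u, v_{0}) \lesssim 1$ yields continuous extensions of $r\phi$ (hence $\phi$), of $\Omg^{2}$ in $(u, V)$ coordinates, and of $r$, contradicting the maximality of $u^{\sharp}$. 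For the conditional blow up, the crucial point is that the final mass $M(u) := \lim_{v \to \infty} \varpi(u, v)$ (which exists in $(\abs{\e}, +\infty]$ by monotonicity of $\varpi$ in $v$) is \emph{also non-decreasing in $u$} in the interior, since $\rd_{u}\varpi \geq 0$; by Theorem~\ref{final.blow.up.step} (specifically \eqref{blow.up.interior.dvr} together with the perturbative bounds of Theorem~\ref{main.theorem.C0.stability}) one has $M(u) = +\infty$ near $i^{+}_{1}$, hence $M(u) = +\infty$ for \emph{all} $u \in (-\infty, u_{\CH_{1}})$ trivially by monotonicity. It then remains to convert $M(u) = +\infty$, together with the continuous positive extension of $r$ and $\Omg^{2}$, into \eqref{eq:nonpert-blowup-dvr} --- via $\frac{\rd_{V} r}{\Omg^{2}} = \frac{r^{2} - 2 \varpi r + \e^{2}}{4 r^{2} \abs{\rd_{u} r}}$ and the equation $\rd_{V}(\rd_{u} r) = -\frac{(\varpi - \e^{2}/r)\Omg^{2}}{2 r^{2}}$ governing the growth of $\abs{\rd_{u} r}$ --- and into \eqref{eq:nonpert-blowup-phi}, by integrating $\rd_{V}\varpi = \frac{2 \abs{\rd_{u} r} r^{2}}{\Omg^{2}}(\rd_{V}\phi)^{2}$ and comparing with $\int_{0}^{1} \frac{(\rd_{V}\phi)^{2}}{\Omg^{2}} \, \ud V$ using the near-$\CH_{1}$ behaviour of $\abs{\rd_{u} r}\, r^{2}$.

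The main obstacle is precisely this propagation step through the \emph{non-perturbative} region $\Int_{nonpert}$, where no closeness to Reissner--Nordstr\"om is available: one must extract enough quantitative control on $\Omg^{2}$ and on the first-order quantities $\rd_{u} r$, $\rd_{v} r$, $\rd \phi$ from \emph{only} the finite-volume bound of Lemma~\ref{lem.nonpert-vol} and the sign and monotonicity structure, and in particular handle the mass-inflation scenario in which $\varpi$ and $\abs{\rd_{u} r}$ themselves diverge at $\CH_{1}$ while $\Omg^{2}$ (in the $(u, V)$ gauge) may degenerate --- so that both the continuous extension of $\Omg^{2}$ and the precise comparison underlying \eqref{eq:nonpert-blowup-phi}--\eqref{eq:nonpert-blowup-dvr} are delicate. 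This is where the ideas of Dafermos \cite{D2} enter, now localized near an arbitrary point of $\CH_{1}$ by means of the finite-volume estimate rather than relying on global smallness. One also has to set up the continuity argument so that the inductive hypothesis (including the auxiliary weighted $L^{1}$-norms of first derivatives) both follows from Theorems~\ref{main.theorem.C0.stability} and~\ref{final.blow.up.step} in the perturbative region and is propagated by the above scheme.
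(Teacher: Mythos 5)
Your overall framework --- seed with Theorems~\ref{main.theorem.C0.stability} and~\ref{final.blow.up.step} in the perturbative region, establish a finite-volume a priori estimate, propagate $C^{0}$ control along $\CH_{1}$, then propagate the blow up --- is the same as the paper's, and your proof of the finite-volume bound (Lemma~\ref{lem.nonpert-vol}) via the Raychaudhuri monotonicity of $\rd_{v} r / \Omg^{2}$ is a correct and slightly different argument from the paper's (which integrates $\tfrac{1}{2}\rd_{u}\rd_{V} r^{2} = -\tfrac{\Omg^{2}}{4}(1-\e^{2}/r^{2})$ and must therefore arrange $r \leq R < |\e|$ on the rectangle; your version avoids that constraint). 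However, there are two genuine gaps.

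First, your route to the conditional blow up is not closed. You propagate $M(u) := \lim_{V\to 1}\varpi(u,V)=+\infty$ via $\rd_{u}\varpi \geq 0$ (correct), but the conversion back to \eqref{eq:nonpert-blowup-dvr}, which you flag as ``delicate,'' does not follow from your formula $\frac{\rd_{V} r}{\Omg^{2}} = \frac{r^{2}-2\varpi r + \e^{2}}{4 r^{2}|\rd_{u} r|}$: since $1-\mu = -4(\rd_{V} r/\Omg^{2})\rd_{u} r$, the divergence $\varpi\to\infty$ is a priori compatible with $\rd_{V} r/\Omg^{2}$ bounded provided $|\rd_{u} r|$ blows up instead. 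Ruling this out requires an extra step, e.g.\ assume for contradiction $\int_{0}^{1}(\rd_{V}\phi)^{2}\,dV<\infty$, combine $\rd_{V}\varpi = \frac{2|\rd_{u} r| r^{2}}{\Omg^{2}}(\rd_{V}\phi)^{2}$ with $|\rd_{u} r|(u,V) \lesssim 1 + \int_{V_{1}}^{V}\varpi$, and run a Gr\"onwall in $V$ to force $\varpi$ bounded. None of this appears in your write-up. The paper's Lemma~\ref{lem.nonpert-blowup} bypasses mass inflation altogether: once $C^{0}$ extendibility gives $\Omg^{2}\leq C$, one has the \emph{uniform} bound $|\rd_{u}(r\rd_{V} r)| = \tfrac{1}{4}|1-\e^{2}/r^{2}|\Omg^{2}\leq C$, and the fundamental theorem of calculus in $u$ transfers $\rd_{V} r(u_{1},V)\to-\infty$ (known at a fixed $u_{1}\leq u_{s}$ from \eqref{blow.up.interior.dvr} and the $C^{0}$ bounds of Theorem~\ref{main.theorem.C0.stability}) to every $u$ in the rectangle; \eqref{eq:nonpert-blowup-phi} then follows at once from Raychaudhuri. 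This is shorter and consistent with the paper's stated philosophy of avoiding mass inflation (cf.\ Remark~\ref{rem.hawking.mass}).

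Second, the $C^{0}$ propagation (``Gr\"onwall/bootstrap $\ldots$ absorbing error terms via $\int\Omg^{2}\lesssim 1$'') is under-specified. The coefficients $\rd_{u} r/r$, $\rd_{V} r/r$ in the scalar-field equation are $L^{1}$ along each null direction but \emph{not small}, so a one-pass Gr\"onwall in either variable does not close. The paper's Lemma~\ref{lem.nonpert-L1} resolves this by subdividing the rectangle into \emph{finitely many} characteristic sub-rectangles on which the $L^{1}_{u}L^{\infty}_{V}$ and $L^{1}_{V}L^{\infty}_{u}$ norms of $\rd_{u} r/r$, $\rd_{V} r/r$ \emph{are} small (precisely what the finite-volume bound buys you), and then propagating $L^{1}L^{\infty}$ bounds on $r\rd\phi$ and $\rd\log\Omg$ sub-rectangle by sub-rectangle. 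That smallness mechanism needs to be made explicit in your argument.
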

\begin{remark} \label{rem.nonpert-bfsph}
According to Theorem~\ref{thm:kommemi}, if $\calS = \emptyset$ then $\CH_{1} \cap \CH_{2}$ consists of a bifurcation sphere with positive area radius. In this case, the metric components and the scalar field extend continuously in the $(U_{\CH_{2}}, V_{\CH_{1}})$ coordinates to the bifurcate sphere $(U_{\CH_{2}}, V_{\CH_{1}}) = (1, 1)$, where $V_{\CH_{1}}$ coincides with $V$ in Theorem~\ref{thm.nonpert} and $U_{\CH_{2}}$ is defined analogously to $V$ but with respect to $\EH_{2}$, i.e., $v$ in \eqref{gauge.1} is replaced by $u$, and \eqref{V.CH.def} is replaced by \eqref{U.CH.def}. We refer to Remark~\ref{rem.nonpert-bfsph-pf} for a proof.
\end{remark}

\subsection{$C^2$-future-inextendibility}

The inextendibility statement for the interior region in Theorem~\ref{thm.nonpert} depends on the particular $C^{0}$ extension in the $(u, V)$ coordinates. Our final result asserts a geometric formulation of inextendibility of the whole maximal globally hyperbolic future development, namely \emph{$C^{2}$-future-inextendibility}, which is independent of such a choice (cf. Step~5 in Section~\ref{sec.main.structure}). Notice that since by Remark~\ref{C0.stab.assumptions}, the assumptions in Theorem~\ref{main.theorem.C0.stability} always hold with $s>2$ when considering the maximal globally hyperbolic future development, we do not need to state those assumptions explicitly in Theorem~\ref{main.theorem.C2}.
\begin{theorem}\label{main.theorem.C2}
Let $(\calM, g, \phi, F)$ be the maximal globally hyperbolic future development of an admissible Cauchy initial data set (with arbitrary $\omg_{0} > 2$, cf. Definition~\ref{def:adm-data}). Assume furthermore that the lower bound \eqref{final.blow.up.step.assumption} holds on $\EH_{1}$, as well as on $\EH_{2}$ (with $v$ replaced by $u$).
Then $(\calM, g)$ is future-inextendible with a $C^2$ Lorentzian metric.
\end{theorem}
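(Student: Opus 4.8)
The plan is to argue by contradiction, combining the a priori boundary characterization (Theorem~\ref{thm:kommemi}) with the $W^{1,2}_{loc}$ blow up of Theorem~\ref{thm.nonpert} and a curvature computation along a single, carefully chosen null geodesic. Suppose $(\calM, g)$ admits a time-oriented $C^{2}$ future extension $\iota : \calM \hookrightarrow (\widetilde{\calM}, \tilde{g})$ with $\iota(\calM) \subsetneq \widetilde{\calM}$. Pick $p \in \widetilde{\calM} \setminus \iota(\calM)$ with $I^{+}(p) \cap \iota(\calM) = \emptyset$ and an affinely parametrized future-directed causal geodesic $\gamma : [0,1) \to \calM$ whose image under $\iota$ limits to $p$; then $\gamma$ is necessarily incomplete and, by maximality of the development, exits every compact subset of $\calQ = \calM/SO(3)$. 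The first task is to pin down where $\gamma$ goes: by Theorem~\ref{thm:kommemi} together with \cite{D2, DRPL} (cf.\ the discussion following Figure~\ref{fig:main.structure}), the projection of $\gamma$ to $\calQ$ must converge to an interior point of $\CH_{1} \setminus \set{p_{\CH_{1}}}$ or of $\CH_{2} \setminus \set{p_{\CH_{2}}}$ --- it cannot limit to $\NI_{1} \cup \NI_{2}$ (complete null infinity, cf.\ Remark~\ref{rmk.completeness}) nor to $i^{+}_{1} \cup i^{+}_{2}$ (which, since the solution there is close to Reissner--Nordstr\"om, lie at infinite affine distance), while limiting to $\calS$ or to the endpoints of $\CH_{1}, \CH_{2}$ (where $r \to 0$) would force blow up of the Kretschmann scalar by \cite{Kommemi} (cf.\ Remark~\ref{rem.hawking.mass}), contradicting $C^{2}$-regularity of $\tilde{g}$. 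Assume without loss of generality the limit point is $q \in \CH_{1} \setminus \set{p_{\CH_{1}}}$, so that in the $(u, V)$ coordinates of Theorem~\ref{thm.nonpert} it has the form $q = (u_{q}, 1)$ with $-\infty < u_{q} < u_{\CH_{1}}$; in particular $\tilde{g}$ is $C^{2}$ on a neighborhood of a lift of $q$ in $\widetilde{\calM}$.

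Next I would replace $\gamma$ by the \emph{radial} null geodesic through $q$. Fixing a lift $\tilde{q}$ of $q$, the radial null curve $\gamma_{0} = \set{u = u_{q}}$ (at a fixed angle) is a pregeodesic of $g$ --- indeed $\nabla_{\rd_{V}} \rd_{V} = (\rd_{V} \log \Omg^{2}) \rd_{V}$ in the $(u,V)$ chart --- converging in that chart to $\tilde{q}$, and hence, being a $\tilde{g}$-geodesic limiting to an interior point of $\widetilde{\calM}$, it extends past $\tilde{q}$. After affine reparametrization one obtains $\gamma_{0} : [0, \lambda_{\ast}] \to \widetilde{\calM}$ with $\gamma_{0}([0, \lambda_{\ast})) \subset \iota(\calM)$, with $\gamma_{0}(\lambda) \to q$ in the $(u,V)$ chart, and with $\gamma_{0}'$ continuous and non-vanishing at $\lambda_{\ast}$; the affine parameter satisfies $\frac{\ud V}{\ud \lambda} = c\, \Omg^{-2}(u_{q}, V)$ for some constant $c > 0$, and since by Theorem~\ref{thm.nonpert} the factor $\Omg^{2}(u_{q}, \cdot)$ extends continuously and positively to $V = 1$ (being comparable there to the Reissner--Nordstr\"om value, cf.\ \eqref{eq:Omg-CH} and Theorem~\ref{main.theorem.C0.stability}), this is consistent with $\lambda_{\ast} < \infty$. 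Because $\tilde{g}$ is $C^{2}$, its Ricci tensor is continuous on $\widetilde{\calM}$, so $\lambda \mapsto Ric[\tilde{g}](\gamma_{0}'(\lambda), \gamma_{0}'(\lambda))$ is bounded on the compact interval $[0, \lambda_{\ast}]$; in particular $\int_{0}^{\lambda_{\ast}} Ric[\tilde{g}](\gamma_{0}', \gamma_{0}')\, \ud \lambda < \infty$.

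On the other hand, on $\iota(\calM)$ the trace of \eqref{EMSFS} (using that $T^{(em)}_{\mu\nu}$ is trace-free in $4$ dimensions and $\mathrm{tr}\, T^{(sf)} = -\abs{\nabla\phi}^{2}$) yields the reduced Einstein equation $Ric_{\mu\nu} = 2\, \rd_{\mu} \phi\, \rd_{\nu} \phi + 2\, T^{(em)}_{\mu\nu}$, and for the radial null vector $\gamma_{0}' \propto \rd_{V}$ the electromagnetic contribution $T^{(em)}_{VV}$ vanishes (since $F$ is proportional to $\Omg^{2}\, \ud u \wedge \ud V$ and $(g^{-1})^{VV} = 0$), so $Ric[\tilde g](\gamma_{0}', \gamma_{0}') = 2 (\gamma_{0}' \phi)^{2}$ along $\gamma_{0}|_{[0, \lambda_{\ast})}$. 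Using $\gamma_{0}' \phi = \frac{\ud V}{\ud \lambda}\, \rd_{V} \phi$ and $\ud \lambda = c^{-1} \Omg^{2}\, \ud V$, this integrates to
\[
	\tfrac{1}{2} \int_{0}^{\lambda_{\ast}} Ric[\tilde g](\gamma_{0}', \gamma_{0}')\, \ud \lambda
	= c \int_{V_{0}}^{1} \frac{(\rd_{V} \phi)^{2}}{\Omg^{2}}(u_{q}, V)\, \ud V
	= \infty,
\]
the last equality being precisely \eqref{eq:nonpert-blowup-phi} of Theorem~\ref{thm.nonpert}, which applies because \eqref{final.blow.up.step.assumption} holds on $\EH_{1}$ by hypothesis (had $q$ been on $\CH_{2}$, one uses instead the hypothesis on $\EH_{2}$). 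This contradicts the previous paragraph, and the theorem follows.

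The short part is the second half: the reduced Einstein equation, the vanishing of $T^{(em)}_{VV}$ in a radial null direction, and the elementary integral identity. \emph{The main obstacle is the first half} --- rigorously showing that a putative $C^{2}$ extension produces an incomplete causal geodesic limiting to an \emph{interior} point of $\CH_{1} \setminus \set{p_{\CH_{1}}}$ (or the $\CH_{2}$ analogue), and that one may then pass to the radial null geodesic there. This requires carefully combining Theorems~\ref{thm:kommemi} and \ref{thm.nonpert} with the structure near $i^{+}$ and $\NI$ from \cite{D2, DRPL}, and (for the passage to $\gamma_{0}$) the fact that extensions through points of $\CH_{1} \setminus \set{p_{\CH_{1}}}$ can be taken compatible with the spherically symmetric structure there, so that the radial null curve $\set{u = u_{q}}$ genuinely limits to an interior point of $\widetilde{\calM}$. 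It is worth emphasizing that, in contrast with the mass-inflation argument of \cite{D2} (cf.\ Remark~\ref{rem.hawking.mass}), the Hawking mass need \emph{not} blow up on $\CH_{1}$ here --- only the weaker, integrated blow up \eqref{eq:nonpert-blowup-phi}--\eqref{eq:nonpert-blowup-dvr} is available --- so a scalar-invariant obstruction is out of reach, and one must instead exploit that \emph{every} parallelly propagated curvature component is bounded along a geodesic lying in a $C^{2}$ extension.
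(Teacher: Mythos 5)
Your second half --- the Ricci-curvature computation along the radial null geodesic, the observation that $T^{(em)}_{VV}=0$ so $Ric(\rd_V,\rd_V)=2(\rd_V\phi)^2$, and the change of variables $\ud\lambda=c^{-1}\Omg^2\,\ud V$ yielding $\int (\rd_V\phi)^2/\Omg^2\,\ud V=\infty$ --- is essentially identical to Step~3 of the paper's Lemma~\ref{not.at.CH} and is correct. The place where your outline diverges from a complete proof is exactly where you flag ``the main obstacle,'' and it is worth being precise about what goes wrong, because the paper's fix is structurally different from what you sketch.

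Your plan is to pick some point $p\in\rd\calM$, find \emph{some} causal geodesic in $\calM$ limiting to $p$, establish that it exits through the interior of $\CH_1$ or $\CH_2$, and only then ``replace $\gamma$ by the radial null geodesic through $q$.'' This replacement is not justified: even granting that some causal geodesic $\gamma$ has a limit point $\tilde q\in\rd\calM$ with quotient $q\in\CH_1\setminus\{p_{\CH_1}\}$, there is no a priori reason that the \emph{radial null curve} $\{u=u_q\}$ (through the same quotient point but with a possibly very different angular behavior and approach rate) has a limit in the manifold topology of $\widetilde{\calM}$ at all, let alone at an interior point. The paper does not replace the geodesic --- it constructs the radial null geodesic \emph{directly}. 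Concretely (Lemmas~\ref{lemma.Lipschitz}--\ref{inextendibility.main.step}): the boundary $\rd\calM$ is an achronal Lipschitz hypersurface, hence differentiable a.e.\ by Rademacher, so one can pick $p\in\rd\calM$ where $T_p(\rd\calM)$ exists. By Dafermos--Rendall \cite{DafRen}, the rotations $\calO_1,\calO_2,\calO_3$ extend continuously to $\rd\calM$, and (Lemma~\ref{span.O}) at any $p$ with $r(p)>0$ they span a $2$-dimensional spacelike plane in $T_p\widetilde{\calM}$. The two null directions $X,Y$ normal to that plane cannot both lie in $T_p(\rd\calM)$, for otherwise $X+Y$ would be a timelike tangent, contradicting achronality. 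Taking the $\tilde g$-geodesic through $p$ in the transversal null direction then produces --- transversality being exactly what is needed --- a future-directed null geodesic with $\gamma((-\eps,0))\subset\calM$ and $\gamma(0)=p$, and a conservation-law argument ($g(\dot\gamma,\calO_i)$ constant along $\gamma$, continuous up to $p$, zero at $p$) shows it is \emph{radial}. The case $r(p)=0$ is handled separately (Lemma~\ref{not.at.S}) via the Kretschmann scalar bound from \cite{Kommemi}, exactly as you propose. So the resolution of your ``main obstacle'' is not a compatibility-with-symmetry statement about extensions, but a direct geodesic construction at a differentiable boundary point using extended Killing fields.

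Two smaller points. First, your initial step ``pick $p\in\widetilde{\calM}\setminus\iota(\calM)$ \ldots\ and an affinely parametrized future-directed causal geodesic $\gamma$ whose image limits to $p$'' is not self-justifying: arbitrary boundary points need not be endpoints of causal geodesics from inside, which is precisely why the paper works at a differentiable boundary point and shows the constructed geodesic enters $\calM$ (the claim ``$\gamma((-\eps,0))\subset\calM$'' in Lemma~\ref{inextendibility.main.step} is proved, not assumed). Second, your exclusion of $i^+_1,i^+_2$ via ``infinite affine distance, since the solution there is close to Reissner--Nordstr\"om'' is not the argument the paper uses and is harder to make rigorous for a general causal geodesic; once one knows the geodesic is \emph{radial null}, the exclusion is immediate because the only such curves reaching $i^+_{1,2}$ are the event horizons $\EH_1,\EH_2$, which are future affine complete (Remark~\ref{rmk.completeness}). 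This is another place where reducing to the radial null case up front, rather than afterwards, genuinely simplifies matters.
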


\section{Main theorems for the exterior region proven in \cite{LO.exterior}}  \label{sec:main-thm}

We now discuss the main results for the exterior region, which correspond to Steps~3(a) and 4 in Section~\ref{sec.main.structure}. \textbf{All of the results in this section are proven in our companion paper \cite{LO.exterior}.}

In all the results in this section, the quantities $\mathfrak{L}_{(\omg_{0}) 0}$, $\mathfrak{L}$ and $\mathfrak{L}_{(\omg_{0}) \infty}$ play an important role. \textbf{All statements in this section concerning $\mathfrak{L}_{(\omg_{0}) 0}$, $\mathfrak{L}$ and $\mathfrak{L}_{(\omg_{0}) \infty}$ on one asymptotically flat end applies equally well to $\mathfrak{L}'_{(\omg_{0}) 0}$, $\mathfrak{L}'$ and $\mathfrak{L}'_{(\omg_{0}) \infty}$ (respectively) on the other asymptotically flat end.} We focus on the first asymptotically flat end (i.e., towards which $v$ and $-u$ increase) for the sake of concreteness.

Our first main theorem states that nonvanishing of $\mathfrak{L}_{(\omg_{0}) \infty}$ implies some integrated lower bound for the incoming radiation $\rd_{v} \phi$ along $\EH$. 

\begin{theorem} \label{thm:blowup}
For $\omg_{0} > 2$, let $\Tht = (r, f, h, \ell, \phi, \dot{\phi}, \e)$ be an $\omg_{0}$-admissible data set, and let $(\calM, g, \phi, F)$ be the corresponding maximal globally hyperbolic future development. Suppose that
\begin{equation*}
	\mathfrak{L}_{(\omg_{0}) \infty} \neq 0.
\end{equation*}
Then for an advanced null coordinate $v$ such that
\begin{equation*}
	C^{-1} < \inf_{\EH} \frac{\rd_{v} r}{1-\mu} \leq \sup_{\EH} \frac{\rd_{v} r}{1-\mu} < C
\end{equation*}
for some $C > 0$, we have
\begin{equation*}
	\int_{\EH} v^{\alp} (\rd_{v} \phi)^{2} \, \ud v = \infty
\end{equation*}
for every $\alp >  \min\{ 2 \omg_{0} + 1 , 7 \}$. 
\end{theorem}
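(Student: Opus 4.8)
The plan is to argue by contraposition: assuming $\int_{\EH} v^{\alp} (\rd_{v} \phi)^{2} \, \ud v < \infty$ for some $\alp$ with $\min\{2 \omg_{0} + 1, 7\} < \alp$, we show that $\mathfrak{L}_{(\omg_{0}) \infty} = 0$. The whole argument is meant to be a nonlinear perturbation of the corresponding step for the linear wave equation on a fixed subextremal Reissner--Nordstr\"om spacetime carried out in \cite{LO.instab}, with the Price's law decay of Dafermos--Rodnianski (Theorem~\ref{thm:DR-decay}) playing a double role: it provides the quantitative convergence of the background to Reissner--Nordstr\"om (with final mass $M_{f} = \sup_{\EH} \varpi$), and it supplies the a priori bounds that let one treat the non-Reissner--Nordstr\"om part of the equation as a controllable error. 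So \emph{first} I would fix a double null coordinate system $(u, v)$ on the exterior component $\Ext_{1}$ adapted to $\EH_{1}$ and $\NI_{1}$: an advanced coordinate $v$ normalized on $\EH_{1}$ so that $C^{-1} < \frac{\rd_{v} r}{1 - \mu} < C$ (as in Remark~\ref{gauge.equiv}), and $u$ parametrizing $\NI_{1}$ with $u \to u_{\EH_{1}}$ at $i^{+}_{1}$. On this region, Theorem~\ref{thm:DR-decay} together with the subextremality of the event horizon (Proposition~\ref{prop.subextremality}) yields quantitative convergence $r \to r_{+}(M_{f}, \e)$ and $\varpi \to M_{f}$ along $\EH_{1}$, the decay $|\phi| + |\rd_{v} \phi| \lesssim v^{-\min\{\omg_{0}, 3\} + \eta}$ on $\EH_{1}$, and $r |\phi| \lesssim |u|^{-\min\{\omg_{0}, 3\} + 1}$ on $\NI_{1}$.

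Setting $\psi = r \phi$, the reduced system \eqref{eq:EMSF-r-phi-m} gives the effectively linear equation $\rd_{u} \rd_{v} \psi = \frac{\rd_{u} r \, \rd_{v} r}{1 - \mu} \cdot \frac{2 (\varpi - \e^{2} / r)}{r^{3}} \, \psi$, whose coefficient converges, at the Price's law rate, to the corresponding potential on the Reissner--Nordstr\"om background of mass $M_{f}$. The hypothesis $\int_{\EH_{1}} v^{\alp} (\rd_{v} \phi)^{2} \, \ud v < \infty$ (together with $\phi \to 0$ on $\EH_{1}$ from Price's law) then gives, by Cauchy--Schwarz on dyadic intervals, $\int_{v}^{\infty} |\rd_{v} \phi| \, \ud v' = o(v^{(1 - \alp)/2})$ and hence $|\phi|(v) = o(v^{(1 - \alp)/2})$ along $\EH_{1}$; since $\alp > \min\{2 \omg_{0} + 1, 7\} = 1 + 2 \min\{\omg_{0}, 3\}$ we have $\frac{1 - \alp}{2} < - \min\{\omg_{0}, 3\}$, so this strictly beats the Price's law rate on $\EH_{1}$, and by the reduced equation the same improvement holds for $\psi$ and $\rd_{v} \psi$.

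The core of the proof is to transport this faster-than-Price's-law decay off $\EH_{1}$. Integrating the reduced equation for $\psi$ along the incoming and outgoing directions, and using the background decay to bound both the bulk (potential) contribution and the contribution from a fixed outgoing cone near $\Sgm_{0}$ (where the initial-data quantity $\mathfrak{L}_{(\omg_{0}) 0}$ is encoded), one controls $r^{3} (\rd_{v} r)^{-1} \rd_{v}(r \phi)$ along each cone $C_{u}$ as $u \to u_{\EH_{1}}$ and concludes
\[
	\lim_{u \to u_{\EH_{1}}} \Big( \lim_{v \to \infty} r^{3} (\rd_{v} r)^{-1} \rd_{v}(r \phi)(u, v) \Big) = 0 .
\]
For $\omg_{0} \geq 3$ this is exactly the statement $\mathfrak{L}_{(\omg_{0}) \infty} = \mathfrak{L}_{(\omg_{0}) 0} + \mathfrak{L} = 0$ (using the characterization of $\mathfrak{L}_{(\omg_{0}) \infty}$ at null infinity and its decomposition established in \cite{LO.exterior}). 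For $2 < \omg_{0} < 3$ the same transport analysis instead forces $\mathfrak{L}_{(\omg_{0}) 0} = 0$: the order-$\omg_{0}$ polynomial tail of the initial data propagates rigidly (there is no backscattering correction below the Price's law threshold) to a nonvanishing, $L^{2}$-averaged $v^{-(\omg_{0}+1)}$ lower bound for $\rd_{v} \phi$ on $\EH_{1}$ unless it is absent, and hence again $\mathfrak{L}_{(\omg_{0}) \infty} = \mathfrak{L}_{(\omg_{0}) 0} = 0$. This is the required contradiction.

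I expect the main obstacle to be this transport/matching step: carefully following the order-$v^{-3}$ (resp.\ order-$v^{-\omg_{0}}$) ``Price's law tail coefficient'' of $r \phi$ through the entire exterior region up to the corner $i^{+}_{1}$, making the limits $v \to \infty$ and $u \to u_{\EH_{1}}$ uniform (or legitimately interchanging them), and bounding the error produced by the difference between the true coefficient $\frac{\rd_{u} r \, \rd_{v} r}{1 - \mu} \cdot \frac{2 (\varpi - \e^{2} / r)}{r^{3}}$ and its Reissner--Nordstr\"om counterpart. As stressed in \cite{LO.instab}, the upper bounds of Theorem~\ref{thm:DR-decay}---though conjecturally not sharp---are precisely what makes these error terms affordable, so no improvement of Price's law is needed; what remains is the (still delicate) bookkeeping of $r$- and $v$-weights near $\NI_{1}$ and $i^{+}_{1}$, carried out in detail in \cite{LO.exterior}.
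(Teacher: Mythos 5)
Theorem~\ref{thm:blowup} is stated in Section~6 with the explicit note that all results there are proved in the companion paper \cite{LO.exterior}, so this paper contains only the high-level description of Step~3(a) in Section~1.1, not the proof. Against that outline, your proposal has the right ingredients: you identify the reduced equation $\rd_u\rd_v(r\phi) = \frac{\rd_u r\,\rd_v r}{1-\mu}\cdot\frac{2(\varpi - \e^2/r)}{r^3}\,(r\phi)$ (which is indeed what \eqref{eq:EMSF-r-phi-m} gives), invoke the subextremality of $\EH$ (Proposition~\ref{prop.subextremality}) to apply the Dafermos--Rodnianski decay (Theorem~\ref{thm:DR-decay}) as the only a priori input on the background, and your Cauchy--Schwarz computation on dyadic $v$-intervals, together with the verification $\min\{2\omg_0 + 1, 7\} = 1 + 2\min\{\omg_0, 3\}$ and hence $(1-\alp)/2 < -\min\{\omg_0, 3\}$, is correct.

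However, the step the whole theorem turns on --- the mechanism by which a bound on the weighted $L^2$ flux of $\rd_v\phi$ on $\EH$ forces $\mathfrak L_{(\omg_0)\infty} = 0$, a quantity defined as a double limit at null infinity near $i^+_1$ --- is only asserted, not argued. You say that ``integrating the reduced equation for $\psi$ along the incoming and outgoing directions'' lets one ``control $r^3(\rd_v r)^{-1}\rd_v(r\phi)$ along each cone $C_u$ as $u\to u_{\EH_{1}}$'' and conclude the limit vanishes, but you never write an identity in which the $\EH$ flux actually appears; as phrased (integrating forward from a cone near $\Sgm_0$ and bounding the bulk with background decay), the manipulation never touches $\EH$ at all. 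The precise identity or propagation estimate that makes the $\EH$ hypothesis bite --- whether a conservation-law-type flux balance over a characteristic region bounded by a cone $C_u$, a portion of $\EH$, and a constant-$r$ curve, or a backwards-in-$u$ transport of the improved decay, with error terms carefully weighted near $i^+_1$ --- is exactly the nontrivial content of \cite[Theorem~4.1]{LO.exterior}, and your closing appeal to that paper for the bookkeeping is circular. You have correctly identified the shape of the proof and the role of Price's law (no sharp late-time asymptotics are needed), but the central idea connecting the two boundaries is not supplied.
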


Our next theorem asserts stability of the quantity $\mathfrak{L}$, which is the dynamically defined part of $\mathfrak{L}_{(\omg_{0}) \infty}$ in the case $\omg_{0} \geq 3$ (observe that $\mathfrak{L}_{(\omg_{0}) 0}$ is determined by the initial data). 
\begin{theorem} \label{thm:L-stability}
Fix $\omg_0>2$. Let $\Tht = (r, f, h, \ell, \phi, \dot{\phi}, \e)$ and $\overline{\Tht} = (\rbg, \fbg, \hbg, \ellbg, \phibg, \dphibg, \ebg)$ be $\omg_0$-admissible data sets (cf. Definition~\ref{def:adm-data}) such that $d_{1, \omg_0}^{+}(\Tht, \overline{\Tht}) < \eps$, where 
\begin{equation*}
\begin{aligned}
	d_{1, \omg_0}^{+} (\Tht, \overline{\Tht}) := 
	& \nrm{\brk{\rho_{+}} \log (f / \fbg)(\rho)}_{C^{0}} 
	+ \nrm{\brk{\rho_{+}}^{2} \rd_{\rho} \log (f / \fbg)(\rho)}_{C^{0}} 
	+ \nrm{\brk{\rho_{+}}^{2}  (h - \hbg) (\rho)}_{C^{0}} \\
	& + \nrm{\log^{-1}(1+\brk{\rho_{+}}) (r - \rbg)(\rho)}_{C^{0}} 
	+ \nrm{\brk{\rho_{+}} \rd_{\rho} (r - \rbg)(\rho)}_{C^{0}} 
	+ \nrm{( f \ell - \overline{f}\overline{ \ell})(\rho)}_{C^{0}}  \\
	& + \nrm{\brk{\rho_{+}}^{\omg_0} (\phi - \phibg)(\rho)}_{C^{0}} 
	+ \nrm{\brk{\rho_{+}}^{\omg_0+1} \rd_{\rho} (\phi- \phibg)(\rho)}_{C^{0}} 
	+ \nrm{\brk{\rho_{+}}^{\omg_0+1} (f \dot{\phi} - \overline{f} \overline{\dot{\phi}})(\rho)}_{C^{0}} 
	+ |{\bf e} - \ebg|.
%	& + \Nrm{\rho_{+}^{\omg_0} \left( \left(\rd_{\rho}(r \phi) + f r \dot{\phi} + \frac{f \ell \phi}{r}\right)
%						 - \left( \overline{\rd_{\rho}(r \phi) + f r \dot{\phi} + \frac{f \ell \phi}{r}} \right) \right)}_{C^{0}} 
%	+ \Nrm{\rho_{-}^{\omg} \left(- \rd_{\rho}(r \phi) + f r \dot{\phi} + \frac{f \ell \phi}{r} \right)}_{C^{0}} 
\end{aligned}
\end{equation*}
Here, $\brk{\rho_{+}} := (1 + \rho_{+}^{2})^{1/2}$ and $\rho_{+} := \max\set{0, \rho}$.

Then, for $\mathfrak L:=\mathfrak L[\Tht]$ and $\overline{\mathfrak L}:=\mathfrak L[\overline{\Tht}]$, there exists a constant $C_{\overline{\Tht}}$, which depends only on $\overline{\Tht}$, such that
\begin{equation*}
	\Abs{\mathfrak{L} - \overline{\mathfrak{L}}} \leq C_{\overline{\Tht}} \, \eps.
\end{equation*}
\end{theorem}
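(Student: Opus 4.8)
The plan is to deduce the stability of $\mathfrak L$ from a quantitative \emph{asymptotic stability theorem} for the maximal globally hyperbolic future development in the exterior region $\Ext_1$, combined with the Price's law decay of Theorem~\ref{thm:DR-decay} applied to the background $\overline{\Tht}$; this is the nonlinear stability step (Step~4(a) in Section~\ref{sec.main.structure}). First I would fix a \emph{common double null gauge} for the two developments in which decay towards $i^+_1$ is manifest: since such decay holds only in a carefully chosen future-normalized coordinate system, I would normalize the outgoing coordinate $u$ on each development so that $\Gmm(u) \equiv -1$ along $\NI$, i.e.\ $\lim_{v\to\infty}\frac{\rd_u r}{1-\mu}(u,v) = -1$, and fix the $u$-origin by matching with $\Sgm_0$ near spatial infinity. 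In this gauge $\mathfrak L = -\int_{\NI} 2 M(u)\Phi(u)\,\ud u$ with $M(u) = \lim_{v\to\infty}\varpi(u,v)$ and $\Phi(u) = \lim_{v\to\infty} r\phi(u,v)$, so the task reduces to bounding the integral of the difference of $M\Phi$ (after matching the gauges) by $O(\eps)$. I would also record, from Theorem~\ref{thm:DR-decay}, the quantitative background decay $\overline\Phi(u) = O(|u|^{-\min\{\omg_0,3\}+1})$ together with convergence of $\overline M(u)$ at an integrable rate, which makes the integrand of $\overline{\mathfrak L}$ absolutely integrable and supplies the background quantities entering the linearized analysis.

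The core is then the asymptotic stability theorem: if $d_{1, \omg_0}^{+}(\Tht,\overline\Tht) < \eps$, then in the common gauge the difference of $(r, \phi, \log\Omg)$ and their first derivatives, weighted by powers of $r$ and $|u|$ matched to the background decay rates, is $O(\eps)$ throughout $\Ext_1$. I would prove this by a bootstrap/continuity argument:
\begin{itemize}
\item near $\Sgm_0$ the difference is $O(\eps)$ by hypothesis, and Cauchy stability controls it in a neighborhood of $\Sgm_0$;
\item propagate an $r^p$-weighted energy hierarchy in the spirit of \cite{DRNM} for the difference of the scalar fields, whose inhomogeneities are the differences of the geometric coefficients multiplying $\rd\phi$ in the wave equation in \eqref{eq:EMSF-r-phi-m};
\item control the genuinely nonlinear error terms --- the two solutions need not be close to Reissner--Nordstr\"om, so they cannot be treated perturbatively about the explicit metric --- via an interaction Morawetz estimate, which absorbs the products of first derivatives of the difference;
\item upgrade energy decay to pointwise decay of $r(\phi - \phibg)$, $(\varpi - \varpibg)$ and $\Gmm - \overline{\Gmm}$ along $\NI$ by integrating the wave and mass equations in \eqref{eq:EMSF-r-phi-m} along characteristics, following \cite{LO1};
\item track the diffeomorphism relating the two gauges and show it differs from the identity by $O(\eps)$ with the correct weights, which is legitimate precisely because of the normalization $\Gmm \equiv -1$ and the choice of $u$-origin.
\end{itemize}

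With pointwise $O(\eps)$ bounds on $M - \overline M$ and $\Phi - \overline\Phi$ carrying integrable weights, I would then estimate $|\mathfrak L - \overline{\mathfrak L}| = |\int_{\NI} 2(M\Phi - \overline M\,\overline\Phi)\,\ud u|$ using the triangle inequality $|M\Phi - \overline M\,\overline\Phi| \ls |M - \overline M||\Phi| + |\overline M||\Phi - \overline\Phi| \ls \eps\, w(u)$, where $w$ is integrable over $\NI$ --- by the background decay $|u|^{-\min\{\omg_0,3\}+1}$ (with $\min\{\omg_0,3\}-1>1$) towards $i^+_1$ and by asymptotic flatness together with the weights in $d_{1,\omg_0}^{+}$ towards $i^0_1$. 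This yields $|\mathfrak L - \overline{\mathfrak L}| \le C_{\overline\Tht}\,\eps$.

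The main obstacle is the asymptotic stability theorem itself, and within it two intertwined difficulties: (i) closing the estimates for the \emph{nonlinear} perturbation with no smallness assumption on the background, which is why the $r^p$-weighted energy method must be coupled with an interaction Morawetz estimate rather than run as a naive iteration; and (ii) setting up and controlling the difference of the two double null gauges, since $\mathfrak L$ is gauge-covariant only after the normalization $\Gmm \equiv -1$ and a fixed choice of $u$-origin, and a poor gauge choice destroys the very decay on which the whole argument rests.
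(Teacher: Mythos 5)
Your proposal reproduces the authors' stated plan (cf.\ Step~4(a) of Section~\ref{sec.main.structure}) essentially verbatim: pass to a future-normalized gauge with $\Gmm\equiv -1$, reduce the problem to bounding the integral over $\NI$ of the difference of $M\Phi$, prove a quantitative nonlinear asymptotic stability theorem for the exterior region by combining an $r^{p}$-weighted energy hierarchy with an interaction Morawetz estimate and integration along characteristics, keep track of the diffeomorphism between the two gauges, and use the Price's law decay of Theorem~\ref{thm:DR-decay} for the background to make the weight integrable. The actual detailed argument is deferred to the companion paper \cite{LO.exterior}, so the paper at hand only contains this sketch; relative to it, your plan takes the same route and correctly flags the two genuine difficulties (closing the nonlinear estimates with no smallness of the background, and the gauge comparison on which the decay rests).
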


Our final theorem concerns instability of the condition $\mathfrak{L}_{(\omg_{0}) \infty} = 0$. 
\begin{theorem} \label{thm:instability}
For $\omg_{0} \geq 3$, let $\overline{\Tht} = (\rbg, \fbg, \hbg, \ellbg, \phibg, \dphibg, \ebg)$ be an $\omg_{0}$-admissible data set. 
Suppose that
\begin{equation*}
	\mathfrak{L}_{(\omg_{0}) \infty} [\overline{\Tht}] = 0.
\end{equation*}
Then for some $\eps_{\ast} = \eps_{\ast}(\overline{\Tht}) > 0$, there exists a one parameter family $(\Tht_{\eps})_{\eps \in (-\eps_{\ast}, \eps_{\ast})}$ of $\omg_{0}$-admissible initial data sets such that 
\begin{itemize}
\item $\Tht_{0} = \overline{\Tht}$, 
\item $\mathfrak{L}_{(\omg_{0}), \infty}[\Tht_{\eps}] \neq 0$ for all $\eps \in (-\eps_{\ast}, \eps_{\ast})\setminus\{0\}$,
\item if $\overline{\Theta}\in C^k_{\omg_0}$ for $k\in\mathbb N$, $k\geq 2$, then $\eps \mapsto \Tht_{\eps}$ is continuous with respect to $d_{k, \omg}$ for all $\omg > 2$, 
\item $\mathfrak{L}_{(\omg_{0}) 0}[\Tht_{\eps}] = \mathfrak{L}_{(\omg_{0}) 0}[\overline{\Tht}]$ and $\mathfrak{L}_{(\omg_{0}) 0}'[\Tht_{\eps}] = \mathfrak{L}_{(\omg_{0}) 0}'[\overline{\Tht}]$ for all $\eps \in (-\eps_{\ast}, \eps_{\ast})$.
\end{itemize}
In fact, there exist $\bar{\rho}_{2} > \bar{\rho}_{1} \gg 1$ such that 
\begin{itemize}
\item $\Tht_{\eps} = \overline{\Tht}$ in $\set{\rho \in \Sgm_{0} : \rho < \bar{\rho}_{1}}$ for all $\eps \in (-\eps_{\ast}, \eps_{\ast})$,
\item denoting $\Tht_\eps = (r_\eps, f_\eps, h_\eps, \ell_\eps, \phi_\eps, \dot{\phi}_\eps, \e_\eps)$, it holds that
$$\overline{\phi}=\phi_\eps,\quad \fbg \overline{\dot{\phi}}=f_\eps \dot{\phi}_\eps $$
in $\set{\rho \in \Sgm_{0} : \rho > \bar{\rho}_{2}}$ for all $\eps \in (-\eps_{\ast}, \eps_{\ast})$.
\end{itemize}
\end{theorem}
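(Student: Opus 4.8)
The plan is to perturb $\overline{\Tht}$ by a smooth, compactly supported, purely \emph{outgoing} perturbation of the scalar field concentrated in the region $\set{r \approx R_{\ast}}$ near the $\rho \to +\infty$ end of $\Sgm_{0}$, where $R_{\ast}$ is a large parameter to be fixed later, and to compute the resulting change in $\mathfrak{L}_{(\omg_{0}) \infty}$. Since $\omg_{0} \geq 3$ we have $\mathfrak{L}_{(\omg_{0}) \infty}[\Tht] = \mathfrak{L}_{(\omg_{0}) 0}[\Tht] + \mathfrak{L}[\Tht]$, and because the perturbation will be supported away from $\rho = \pm \infty$ on $\Sgm_{0}$ (and the induced metric perturbation, while not compactly supported, will have fast-decaying derivatives), the robust limits $\mathfrak{L}_{(\omg_{0}) 0}$ and $\mathfrak{L}'_{(\omg_{0}) 0}$ defined in \eqref{eq:adm-id-limits} remain unchanged; moreover the perturbation lies entirely in $\Ext_{1}$, which cannot causally influence $\NI_{2}$, so the entire $\rho \to -\infty$ end, in particular $\mathfrak{L}'$, is untouched. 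Using the hypothesis $\mathfrak{L}_{(\omg_{0}) \infty}[\overline{\Tht}] = 0$, i.e.\ $\mathfrak{L}[\overline{\Tht}] = - \mathfrak{L}_{(\omg_{0}) 0}[\overline{\Tht}]$, the problem therefore reduces to producing, for each small $\eps \neq 0$, an admissible $\Tht_{\eps}$ with the stated support/regularity properties and with $\mathfrak{L}[\Tht_{\eps}] \neq \mathfrak{L}[\overline{\Tht}]$.

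First I would construct $\Tht_{\eps}$. Fix a smooth bump $\chi$ supported in a fixed interval $[\bar{\rho}_{1}, \bar{\rho}_{2}]$ with $\bar{\rho}_{i} \approx R_{\ast}$ and $\int \chi \, \ud \rho \neq 0$; set $\phi_{\eps} = \overline{\phi} + \eps \chi$ and choose $\dot{\phi}_{\eps}$ so that the perturbation is outgoing, i.e.\ $\rd_{u}(\phi_{\eps} - \overline{\phi}) \restriction_{\Sgm_{0}} = 0$ in $[\bar{\rho}_{1}, \bar{\rho}_{2}]$ (cf.\ Lemma~\ref{lem:cauchy-to-char}); then determine the remaining (metric) quantities by solving the Hamiltonian and momentum constraints \eqref{ham.con}, \eqref{mom.con} as ODEs in $\rho$, integrating outward from $\rho = \bar{\rho}_{1}$ with the data of $\overline{\Tht}$ as initial condition. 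This keeps $\Tht_{\eps} = \overline{\Tht}$ for $\rho < \bar{\rho}_{1}$; the metric quantities pick up a correction inside $[\bar{\rho}_{1}, \bar{\rho}_{2}]$ and then evolve as for $\overline{\Tht}$ but with a small offset for $\rho > \bar{\rho}_{2}$, so that $\phi_{\eps} = \overline{\phi}$ and $f_{\eps} \dot{\phi}_{\eps} = \fbg \overline{\dot{\phi}}$ there. Since $\overline{\phi}, \overline{\dot{\phi}} = O(R_{\ast}^{-\omg_{0}})$ near $r \approx R_{\ast}$, the part of the metric perturbation that is linear in $\eps$ is a small multiple of $\eps$ (tending to $0$ as $R_{\ast}\to\infty$), and the admissibility conditions of Definition~\ref{def:adm-data} — including the future-admissibility condition \eqref{eq:adm-id-adm}, which is stable — persist; the preservation of $\mathfrak{L}_{(\omg_{0}) 0}$ and $\mathfrak{L}'_{(\omg_{0}) 0}$ is checked directly from the support properties and the decay of the derivatives of the metric correction. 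Because the construction is smooth in $\eps$, the map $\eps \mapsto \Tht_{\eps}$ is continuous with respect to each $d_{k, \omg}$ (with $d_{k,\omg}(\Tht_{\eps}, \overline{\Tht}) \to 0$ as $\eps \to 0$), as $\chi$ is fixed and compactly supported.

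For the key estimate I would compute $\delta \mathfrak{L}(\eps) := \mathfrak{L}[\Tht_{\eps}] - \mathfrak{L}[\overline{\Tht}]$ by splitting $\NI_{1}$ at a \emph{fixed} outgoing cone $C_{u_{\ast}}$ lying to the future of the support of the perturbation (so $u_{\ast}$ is independent of $R_{\ast}$ once $R_{\ast}$ is large), working throughout in the invariant normalization $\Gmm \equiv -1$ on $\NI_{1}$ for both solutions (so that $\delta \Gmm = 0$). On the past part $\set{u < u_{\ast}}$: by finite speed of propagation the two solutions agree for $u$ below the perturbation's $u$-range, and across it the outgoing pulse registers in the radiation field as a localized $O(\eps)$ bump $\delta \Phi(u) = \Phi_{\eps}(u) - \overline{\Phi}(u)$ whose integral is, by the near-Minkowskian structure of $\Ext_{1}$ at $r \approx R_{\ast}$ (asymptotic flatness together with Theorem~\ref{thm:DR-decay}), a nonzero multiple of $\eps \int \chi \, \ud \rho$ up to relative errors tending to $0$ as $R_{\ast}\to\infty$, while $\delta M(u) = O(\eps^{2}) + o_{R_{\ast}}(\eps)$ there (the pulse radiates $O(\eps^{2})$ mass and interacts with the $O(R_{\ast}^{-\omg_{0}})$ background field to order $o_{R_{\ast}}(\eps)$). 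Using $\overline{M}(u) \to M_{ADM} > |\e|$ as $u \to -\infty$ and $\overline{\Gmm} \equiv -1$, the past contribution to $\delta \mathfrak{L}$ is thus
\begin{equation*}
	\int_{-\infty}^{u_{\ast}} \left( 2 M_{\eps} \Phi_{\eps} \Gmm_{\eps} - 2 \overline{M}\, \overline{\Phi}\, \overline{\Gmm} \right) \ud u = c(R_{\ast}) \, \eps + o_{R_{\ast}}(\eps) + O(\eps^{2}), \qquad c(R_{\ast}) \neq 0 .
\end{equation*}
On the future part $\set{u > u_{\ast}}$: since the outgoing perturbation's support lies to the past of $C_{u_{\ast}}$, the two solutions restricted to $C_{u_{\ast}}$ differ only by the backscattered residue, of size $o_{R_{\ast}}(\eps) + O(\eps^{2})$ in the relevant weighted norms, while the data on $\Sgm_{0} \cap \set{\rho \leq -u_{\ast}}$ are identical; feeding this into the exterior asymptotic stability theorem (Theorem~\ref{thm:L-stability}, or its mixed Cauchy--characteristic analogue obtained by re-foliating the future region by spacelike slices) bounds the corresponding change in $\int_{u_{\ast}}^{u_{\EH_{1}}} 2 M \Phi \Gmm \, \ud u$ by $C_{\overline{\Tht}}(o_{R_{\ast}}(\eps) + O(\eps^{2}))$. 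Altogether $\delta \mathfrak{L}(\eps) = c(R_{\ast}) \eps + o_{R_{\ast}}(\eps) + O(\eps^{2})$; choosing $R_{\ast}$ large enough to absorb the $o_{R_{\ast}}(\eps)$-term into half of $c(R_{\ast})\eps$, and then $\eps_{\ast}$ small enough (depending on this $R_{\ast}$) to absorb the $O(\eps^{2})$-term, gives $\delta \mathfrak{L}(\eps) \neq 0$, hence $\mathfrak{L}_{(\omg_{0}) \infty}[\Tht_{\eps}] \neq 0$, for all $0 < |\eps| < \eps_{\ast}$.

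The main obstacle is the honest separation of the genuine $O(\eps)$ contribution from the lower-order terms: one must show (i) that an outgoing, compactly supported $\phi$-perturbation at $r \approx R_{\ast}$ produces precisely a localized $O(\eps)$ change in the $\NI_{1}$-radiation field $\Phi$ near $u \approx -R_{\ast}$ — and neither an $O(\eps)$ change in $\Phi$ away from there nor an $O(\eps)$ change in the Bondi mass $M$ — which requires making the near-Minkowskian approximation quantitative via asymptotic flatness and the decay rates for $\overline{\phi}$; and (ii) that the residue on the fixed cone $C_{u_{\ast}}$ is $o_{R_{\ast}}(\eps)$, so that the technically heavy exterior stability theorem of Step~4(a) renders the far contribution negligible. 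Keeping the cut $u_{\ast}$ and the profile $\chi$ fixed while first sending $R_{\ast} \to \infty$, and only afterwards letting $\eps \to 0$, is exactly what makes the two distinct smallness mechanisms compatible. The constraint-solving with the precise support and decay properties claimed in the theorem is elementary but must be carried out with enough care that $\mathfrak{L}_{(\omg_{0}) 0}$, $\mathfrak{L}'_{(\omg_{0}) 0}$ and admissibility are undisturbed.
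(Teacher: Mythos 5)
Your plan follows the paper's sketch in Step~4(b) of Section~\ref{sec.main.structure} (the statement itself is proven in the companion paper \cite{LO.exterior}): a compactly supported \emph{outgoing} $\phi$-perturbation near $r\approx R_*$, an explicit near-flat computation of its contribution to $\mathfrak L$ in its domain of dependence, and the exterior asymptotic stability theorem to dominate the residual beyond a fixed outgoing cone $C_{u_*}$, with $R_*$ chosen before $\eps$.

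There is a gap, however, in your choice of ``outgoing'' condition $\rd_u(\phi_\eps - \overline\phi)\restriction_{\Sgm_0} = 0$: it does not give the two properties the scheme needs (localized $\delta\Phi$ on $\NI_1$, and smallness of the residual data on $C_{u_*}$). To leading Minkowski order near $r\approx R_*$, write $r\delta\phi = \delta F(u) + \delta G(v)$ with radiation field $\delta\Phi(u) = \delta F(u) + \delta G(\infty)$. Your condition gives $\rd_u(r\delta\phi)\restriction_{\Sgm_0} = (\rd_u r)\,\delta\phi \approx -\tfrac12\eps\chi(\rho)$, hence $\delta F$ (normalized so $\delta F(-\infty)=0$) limits to $-\tfrac12\eps\int\chi\,\ud\rho \neq 0$ as $u\to +\infty$; thus $\delta\Phi(u)$ is a \emph{ramp} equal to $\approx -\tfrac12\eps\int\chi$ for all $u>-\bar\rho_1$ (uniformly in $R_*$), not a localized bump near $u\approx-R_*$. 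Indeed $\rd_u\delta\phi=0$ makes $\delta\phi$ a function of $v$, i.e.\ constant along \emph{incoming} rays. Consequently the data on $C_{u_*}$ has $r\delta\phi(u_*,v)\to -\tfrac12\eps\int\chi = O(\eps)$ as $v\to\infty$, which is not $o_{R_*}(\eps)$; the corresponding $\delta\phi\sim\eps/\rho$ is not even finite in the $\brk{\rho_+}^{\omg_0}$-weighted norm entering Theorem~\ref{thm:L-stability}, so the asymptotic-stability step collapses and you cannot show the future contribution to $\delta\mathfrak L$ is subdominant. The condition that makes the argument work is, to leading order, $\rd_v(r\delta\phi)\restriction_{\Sgm_0} = 0$ — the \emph{incoming} part $\delta G$ of the radiation vanishes; equivalently $\rd_\rho(r\delta\phi) + \frac{f\ell}{r}\delta\phi + rf\delta\dot\phi = 0$, precisely the expression whose limit defines $\mathfrak L_{(\omg_0)0}$ in \eqref{eq:adm-id-limits}. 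With this, $\delta G\equiv 0$ and $\delta F(-\rho) = r\delta\phi\restriction_{\Sgm_0}(\rho)$ is compactly supported in $[\bar\rho_1,\bar\rho_2]$, so $\delta\Phi(u)=\delta F(u)$ is compactly supported near $u\approx-R_*$, the Minkowski-order data on $C_{u_*}$ vanishes, and only the genuinely $o_{R_*}(\eps)$ backscattered residue remains.
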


\section{Proof of strong cosmic censorship (Theorem~\ref{cor:scc}, Theorem~\ref{cor:scc.small.omg} and Corollary~\ref{cor:RN.instab})}\label{sec.pf.SCC}

Combining the results in Sections~\ref{sec:mghd}, \ref{sec.precise} and \ref{sec:main-thm}, we now give the proof of strong cosmic censorship. We first prove the main strong cosmic censorship theorem, which holds for $\omg_0\geq 3$.

\begin{proof}[Proof of Theorem~\ref{cor:scc}]
\pfstep{Step 1} The inextendibility assertion (first statement) follows from Theorems~\ref{main.theorem.C2} and \ref{thm:blowup} (suitably applied to both asymptotically flat ends).

\pfstep{Step 2} The openness assertion (second statement) follows from Theorem~\ref{thm:L-stability} (where we take $\omg_{0}$ to be the smaller number among $\omg_{0}$ and $\omg$ in Theorem~\ref{cor:scc}.(2)) applied to both asymptotically flat ends. Indeed, Theorems~\ref{thm:L-stability} implies that $\mathfrak L$ and $\mathfrak L'$ are stable, while by assumption $\mathfrak L_{(\omg_0) 0}$ and $\mathfrak L_{(\omg_0)0}'$ are only allowed to be modified by a small perturbation, hence (recalling \eqref{eq:Linfty-def}) $\mathfrak L_{(\omg_0) \infty}$ and $\mathfrak L_{(\omg_0)\infty}'$ are also stable.

\pfstep{Step 3} The density assertion (third statement) is a mere restatement of Theorem~\ref{thm:instability} applied to both asymptotically flat ends.
\end{proof}

Next, we turn to the case where $\omg_0\in (2,3)$.
\begin{proof}[Sketch of proof of Theorem~\ref{cor:scc.small.omg}]
This is much simpler than the proof of Theorem~\ref{cor:scc} since for $\omg_0\in (2,3)$, $\mathfrak L_{(\omg_0)\infty}$ and $\mathfrak L'_{(\omg_0)\infty}$ are \underline{not} dynamical, i.e., $\mathfrak L_{(\omg_0)\infty}=\mathfrak L_{(\omg_0)0}$, $\mathfrak L'_{(\omg_0)\infty}=\mathfrak L'_{(\omg_0)0}$ can be computed from initial data alone.

The inextendibility assertion again follows immediately from Theorems~\ref{main.theorem.C2} and \ref{thm:blowup}. The openness assertion is obvious, in view of the fact $\mathfrak L_{(\omg_0)\infty}=\mathfrak L_{(\omg_0)0}$, $\mathfrak L'_{(\omg_0)\infty}=\mathfrak L'_{(\omg_0)0}$. Finally, for the density assertion, say, for $\mathfrak L_{(\omg_0)0}$, it suffices to construct a one-parameter family of initial data such that the incoming part $\rd_u(r\phi)=-\rd_\rho(r\phi) + f r \dot{\phi} + \frac{f \ell \phi}{r}$ (cf. Lemma \ref{lem:cauchy-to-char}) has an $\eps \rho^{-\omg_0}$ tail as $\rho\to \infty$. Such a one-parameter family of initial data (satisfying constraint) is easily seen to exist if one follows the ideas for solving the constraint equations in the proof of Theorem~\ref{thm:instability} in \cite{LO.exterior}. We omit the details.
\end{proof}

Finally, we prove the instability of both the future and past smooth Cauchy horizons of Reissner--Nordstr\"om.
\begin{proof}[Proof of Corollary~\ref{cor:RN.instab}]
Given $i\in \mathbb N$, we construct $\Tht_i$ as follows. By Theorem~\ref{thm:instability} (and considerations in Remark~\ref{rmk.Cinfty}), one can perturb a future-admissible hypersurface $\Sigma_0$ (i.e., one on which the induced data are future-admissible) in Reissner--Nordstr\"om to obtain a perturbation $\tilde{\Tht}_i\in \cap_{\omg>2} (\calA \calI \calD (\omg)\cap C^\infty_\omg)$ with compactly supported scalar field such that $d_3(\tilde{\Tht}_i,\Tht_{RN,M,\e})<2^{-i}$ and $\mathfrak L_{(3)\infty}[\tilde{\Tht}_i]\neq 0$ and $\mathfrak L_{(3)\infty}'[\tilde{\Tht}_i]\neq 0$. If the maximal globally hyperbolic future-and-past development is $C^2$-past-extendible, we take $\Tht_i=\tilde{\Tht}_i$. Otherwise, one can, by Cauchy stability, solve towards the past to obtain a past-admissible Cauchy data set. Both the openness and density assertions in Theorem~\ref{cor:scc} and Cauchy stability imply the existence of a $\Tht_i\in \cap_{\omg>2} (\calA \calI \calD (\omg)\cap C^\infty_\omg)$, again with compactly supported scalar field, such that 
\begin{itemize}
\item on $\Sigma_0$, $d_3(\Tht_i,\tilde{\Tht}_i)<2^{-i}$,
\item $\mathfrak L_{(3)\infty}[{\Tht}_i]\neq 0$ and $\mathfrak L_{(3)\infty}'[{\Tht}_i]\neq 0$
\item the maximal globally hyperbolic future-and-past development arising from $\Tht_i$ is $C^2$-future-and-past-inextendible.
\end{itemize}
We now show that the sequence $(\Tht_i)_{i\in \mathbb N}$ satisfies the desired properties.
\begin{itemize}
\item \emph{Assertion 1: $\Tht_i\in \cap_{\omg>2} (\calA \calI \calD (\omg)\cap C^\infty_\omg)$ for all $i\in \mathbb N$.} This holds by construction.
\item \emph{Assertion 2: Compact support of initial scalar field.} Note that the final assertion in Theorem~\ref{thm:instability} (when suitably applied to both asymptotically flat ends) implies that the support of the perturbations of $\phi$ and $\dot{\phi}$ is contained in a fixed large interval $[-\bar{\rho_2}, \bar{\rho}_2]$. Since the scalar field vanishes on Reissner--Nordstr\"om, the assertion follows.
\item \emph{Assertion 3: $\Tht_i\to \Tht_{RN, M, \e}$ as $i\to \infty$ with respect to the metric $d_\omg$.} For $\omg=3$, it follows from the construction (and triangle inequality) that $d_3(\Tht_i,\Tht_{RN,M,\e})<2^{-i+1}\to 0$. That this also holds for all $\omg>3$ follows from the (uniform) compact support of the scalar field.
\item \emph{Assertion 4: Existence of future-and-past bifurcate Cauchy horizons.} This is proven in \cite{D3}.
\item \emph{Assertion 5: Future-and-past inextendibility.} This holds by construction. \qedhere
\end{itemize}
\end{proof}

This concludes the proof of strong cosmic censorship, assuming the results in Sections~\ref{sec.precise} and \ref{sec:main-thm}. The remainder of this paper will be devoted to the proofs of the theorems in Sections~\ref{sec.precise}.

\section{Stability of the Cauchy horizon: Proof of Theorem \ref{main.theorem.C0.stability}}\label{sec.main.theorem.C0.stability}

In this section, we carry out the proof of Theorem \ref{main.theorem.C0.stability}. First, we begin with a brief discussion on the ideas of the proof in Section \ref{stab.ideas}. We then introduce the quantity $\Psi$ that we will bound in Section \ref{sec.data.bound}. In the same section, we derive the bounds for the metric components on the initial hypersurface. Then, in Section \ref{sec.strategy}, we will discuss the strategy of the proof, in which we will in particular introduce a partition of the spacetime into the red-shift and blue-shift regions\footnote{We comment on the nomenclature of these regions. The naming of the regions as the red-shift and blue-shift regions originates in \cite{D1, D2}. In these works, Dafermos estimated the gauge invariant derivatives of $\phi$ including $\f{\rd_u\phi}{\rd_u r}$ and $\f{\rd_v\phi}{\rd_v r}$ and these quantities tend to decay or grow in the red-shift or blue-shift region respectively. In the present work, since we bound coordinate derivatives of the scalar field, such growth or decay is not manifest. Nevertheless, it should be noted that our choice of the norms is inspired in part by the red-shift/blue-shift estimates in \cite{D1, D2}.}. We will then derive estimates for $\Psi$ separately in the red-shift and blue-shift regions in Sections \ref{sec.RS} and \ref{sec.BS} respectively. We will then prove the $C^0$ stability statement in Section \ref{sec.C0} and conclude the proof of Theorem \ref{main.theorem.C0.stability}. Finally, we end with Section \ref{add.est} in which we prove some additional estimates for $\rd_v r$ which will be used in the proof of the instability theorem.

\subsection{Ideas of the proof}\label{stab.ideas}

Our proof is strongly inspired by the recent work \cite{DL} on the stability of the Kerr Cauchy horizon. More precisely, it is based on the following ideas:
\begin{enumerate}
\item We view the problem as a stability problem and control only the quantities that remain close to the background Reissner--Nordstr\"om spacetime. We bound the difference of these quantities with their background Reissner--Nordstr\"om value and call these differences $\Psi$. These quantities include the scalar field, the metric components, as well as their appropriately weighted derivatives\footnote{Notice that we control for instance $\rd_v\Psi$, but $\rd_v$ is in fact a degenerate derivative near the Cauchy horizon.}. In particular, unlike in \cite{D2}, no estimates are derived for the Hawking mass, which according to \cite{D2} blows up for a large subclass of initial data.
\item In spherical symmetry, instead of working in $L^2$ based spaces as in \cite{DL}, one can directly control $\Psi$ and its derivatives in $L^\infty$ with weights in $u$ and $v$. The key is to prove estimates for $\Psi$ and its derivatives which are weighted in $|u|^s$ or $v^s$ with no loss in the polynomial power $s$ compared to initial data. To this end, we first observe that the background values of\footnote{See definitions in Sections \ref{sec.SS} and \ref{sec.geometry}.} $\Omg$, $\rd_u r$ and $\rd_v r$ in the $(u,v)$ coordinate system are exponential functions in $u+v$ and one can use basic calculus facts such as Lemmas \ref{lemma.1} and \ref{lemma.2} to ensure that the estimates for $\rd_u\Psi$ and $\rd_v\Psi$ do not lose in the polynomial weights. For $\Psi$ itself, we will bound a degenerate quantity $e^{-\kappa_- (u+v)}\Psi$ in order to to ensure that we have no loss in the polynomial weight.
\item Finally, we point out the source of smallness that we can exploit in this problem. By restricting ourselves to the future development of $\underline{C}'_1\cup C_{-\infty}$, we create a smallness parameter which is sufficient to deal with all terms that are nonlinear in $\Psi$. On the other hand, to deal with the linear terms, we face the problem that we need to prove estimates in a ``large'' region of spacetime and this largeness is reflected by the fact that the $r$ difference between two points in this region need not be small. To handle this, we divide the spacetime into finitely many regions each of which has small $r$ difference. This is a way of exploiting the linearity of these terms and can be viewed as a naive substitute of Gr\"onwall's inequality in this setting with two variables\footnote{A similar idea was exploited to derive the BV estimates for the scalar field in Section 13 of \cite{D2}.}.
\end{enumerate}

\subsection{Setup and bounds for the initial data}\label{sec.data.bound}

As mentioned before, we will work with all of the following coordinate systems: $(u,v)$, $(U,v)$ and $(u,V)$. The $(U,v)$ coordinates are chosen on the initial hypersurfaces by the gauge conditions \eqref{gauge.1} and \eqref{gauge.2}. We then define the $u$ and $V$ coordinates by \eqref{U.def} and \eqref{V.def} respectively.

For any fixed choice of null coordinate system, we write the metric in the form given by \eqref{SS.metric.1} and \eqref{SS.metric.2}. We will label the metric components\footnote{Notice that the values of $r$ are independent of the choice of the null coordinates.} by $(\Omg^2_{\mathcal H}, r)$, $(\Omg^2, r)$ and $(\Omg^2_{\mathcal C\mathcal H}, r)$ in the $(U,v)$, $(u,v)$ and $(u,V)$ coordinate systems respectively.

In the proof of the main theorem, we will estimate the scalar field as well as the difference of the metric components with the corresponding Reissner--Nordstr\"om spacetime. We find it convenient to introduce that quantity $\Psi$ which we will estimate in the proof of Theorem \ref{main.theorem.C0.stability}. In the coordinate system $(U,v)$, define
\begin{equation}\label{Psi.def}
\Psi=\begin{bmatrix}
\phi\\r-r_{RN}\\
\log\f{\Omg_\mathcal H}{\Omg_{RN,\mathcal H}}.
\end{bmatrix}
\end{equation}
Notice that here we have taken the functions $r_{RN}$ and $\Omg_{RN,\mathcal H}$ to be the metric components of a Reissner--Nordstr\"om spacetime with parameters $M$ and ${\bf e}$ equal to that in Theorem \ref{main.theorem.C0.stability}. In the $(u,v)$ coordinate system, we also define
$$\Psi=\begin{bmatrix}
\phi\\r-r_{RN}\\
\log\f{\Omg}{\Omg_{RN}}
\end{bmatrix}$$
We make a similar definition in the $(u,V)$ coordinate system. 

The functions $r$ and $\phi$ are manifestly independent of the choice of the null coordinates. Notice that we also have $\log\f{\Omg_\mathcal H}{\Omg_{RN,\mathcal H}}=\log\f{\Omg}{\Omg_{RN}}=\log\f{\Omg_{\mathcal C\mathcal H}}{\Omg_{RN,\mathcal C\mathcal H}}$. We will therefore refer to $\Psi$ without mentioning explicitly which coordinate system we are considering.

To proceed, we further introduce the notation that
$$|\Psi|^2:=|\phi|^2+|r-r_{RN}|^2+ \left\vert \log\f{\Omg_\mathcal H}{\Omg_{RN,\mathcal H}} \right\vert^{2}.$$
We also define $|\rd_v\Psi|$, $|\rd_V\Psi|$, $|\rd_u\Psi|$ and $|\rd_U\Psi|$ in the obvious manner.

Before we end this subsection, we prove some bounds for $\Psi$ and its derivatives \emph{on the initial hypersurfaces}. This will be the starting point of the later subsections in which we bound $\Psi$ in the spacetime solution. 
\begin{proposition}\label{initial.bound}
Under the assumptions of Theorem \ref{main.theorem.C0.stability}, there exists constant $D=D(M,{\bf e},s,E)>0$ such that on $\{U=0\}$, we have
\begin{equation}\label{initial.horizon}
|\Psi|(0,v)+|\rd_v\Psi|(0,v)\leq Dv^{-s};
\end{equation}
and on $\{v=1\}$, we have
\begin{equation}\label{initial.1}
|\rd_U\Psi|(U,0)\leq D.
\end{equation}
Moreover, by choosing $D=D(M,{\bf e},s,E)>0$ larger if necessary, we have the additional estimate on $\{U=0\}$,
\begin{equation}\label{initial.horizon.2}
|\rd_U\Psi|(0,v)\leq De^{2\kappa_+ v}v^{-s};
\end{equation}
and on $\{v=1\}$, we also have
\begin{equation}\label{initial.1.2}
|\Psi|(U,1)+|\rd_v\Psi|(U,1)\leq D.
\end{equation}
\end{proposition}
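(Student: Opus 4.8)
The plan is to bound each of the three components of $\Psi$ --- the scalar field $\phi$, the area-radius difference $r - r_{RN}$, and the lapse difference $\log(\Omg_{\mathcal H}/\Omg_{RN,\mathcal H})$, cf.\ \eqref{Psi.def} --- separately along $C_{-\infty}$ and $\underline{C}_1$. The main inputs are: (i) the gauge conditions \eqref{gauge.1} and \eqref{gauge.2}, together with the observation, evident from the computations in Section~\ref{sec.null.2}, that the right-hand side of \eqref{gauge.1} equals $\lim_{U \to 0} \Omg_{RN,\mathcal H}^{2}(U,v)$ and that $\rd_{U} r_{RN} = -\f{1}{4} \Omg_{RN,\mathcal H}^{2}$ identically; (ii) the Raychaudhuri equations \eqref{eqn.Ray}, which hold on $C_{-\infty}$ and $\underline{C}_1$ as constraints on the characteristic data; and (iii) the wave equations \eqref{WW.SS}, which, when restricted to one of the two initial null hypersurfaces, become linear transport ODEs \emph{along} that hypersurface for the transversal first derivatives of $\phi$, $r$ and $\log \Omg$, with the corner value at $(U,v) = (0,1)$ as datum. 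Throughout, one uses that on the relevant portions of $C_{-\infty}$ and $\underline{C}_1$ the Reissner--Nordstr\"om quantities $r_{RN}$, $\Omg_{RN,\mathcal H}^{2}$ and their derivatives are bounded (the former away from $r_{\pm}$, the latter being regular across the event horizon), which is ensured once $U_{0}$ is taken small.

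On $C_{-\infty} = \{ U = 0 \}$ I would argue as follows. By \eqref{gauge.1} and Section~\ref{sec.null.2}, the lapse component of $\Psi$ vanishes identically there, as does its $\rd_{v}$-derivative (both $\Omg_{\mathcal H}^{2}$ and $\Omg_{RN,\mathcal H}^{2}$ have the same $v$-dependence $e^{2\kappa_{+} v}$, so $\rd_{v} \log \Omg_{\mathcal H} \equiv \kappa_{+}$). Next, the Raychaudhuri equation $\rd_{v} \left( \f{\rd_{v} r}{\Omg_{\mathcal H}^{2}} \right) = - \f{r (\rd_{v} \phi)^{2}}{\Omg_{\mathcal H}^{2}}$, together with \eqref{phi.EH.decay}, $r \to r_{+}$, and the exponential growth of $\Omg_{\mathcal H}^{2}$, forces $\f{\rd_{v} r}{\Omg_{\mathcal H}^{2}} \geq 0$ and $\f{\rd_{v} r}{\Omg_{\mathcal H}^{2}} \to 0$ as $v \to \infty$ (otherwise $r \to - \infty$); integrating from $v$ to $\infty$ and using $\int_{v}^{\infty} (v')^{-2s} e^{-2\kappa_{+} v'} \, \ud v' \lesssim v^{-2s} e^{-2\kappa_{+} v}$ gives $|\rd_{v} r|(0,v) \lesssim v^{-2s}$, hence $|r - r_{+}|(0,v) \lesssim v^{-2s+1} \lesssim v^{-s}$; since $r_{RN}(0,v) = r_{+}$, this together with \eqref{phi.EH.decay} yields \eqref{initial.horizon}. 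For \eqref{initial.horizon.2} one solves, in this order, the transport ODEs in $v$ along $\{ U = 0 \}$ obtained by restricting the three equations of \eqref{WW.SS}: the one for $\rd_{v}(\rd_{U} r)$ gives $|\rd_{U} r|(0,v) \lesssim e^{2\kappa_{+} v}$ and, after subtracting $-\f{1}{4} \Omg_{\mathcal H}^{2}$ (whose $\rd_{v}$-derivative $-\f{\kappa_{+}}{2} \Omg_{\mathcal H}^{2}$ matches the $r = r_{+}$ value of the forcing), $|\rd_{U} r + \f{1}{4} \Omg_{\mathcal H}^{2}|(0,v) \lesssim e^{2\kappa_{+} v} v^{-s}$; since $\rd_{U} r_{RN} = -\f{1}{4} \Omg_{RN,\mathcal H}^{2} = -\f{1}{4} \Omg_{\mathcal H}^{2}$ on $\{ U = 0 \}$, this is exactly $|\rd_{U}(r - r_{RN})|(0,v)$. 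The equation for $\rd_{v}(\rd_{U} \phi)$ then gives $|\rd_{U} \phi|(0,v) \lesssim e^{2\kappa_{+} v} v^{-s}$ (the datum $|\rd_{U} \phi|(0,1) \leq E$ is \eqref{rdUphi.initial}, and the homogeneous coefficient $- \f{\rd_{v} r}{r}$ is integrable by the previous step), and the equation for $\rd_{v}(\rd_{U} \log \Omg_{\mathcal H})$, taken in difference against its Reissner--Nordstr\"om analogue, gives $|\rd_{U} \log(\Omg_{\mathcal H}/\Omg_{RN,\mathcal H})|(0,v) \lesssim e^{2\kappa_{+} v} v^{-s}$; the datum at $v = 1$ here is bounded via the Raychaudhuri constraint on $\underline{C}_1$ discussed next.

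On $\underline{C}_1 = \{ v = 1 \}$, the gauge \eqref{gauge.2} gives $\rd_{U} r \equiv -1$, hence $|\rd_{U}(r - r_{RN})|(U,1) \lesssim 1$; the Raychaudhuri constraint \eqref{eqn.Ray} then reduces to $\rd_{U} \log \Omg_{\mathcal H}^{2} = - r (\rd_{U} \phi)^{2}$, which, with \eqref{rdUphi.initial} and the boundedness of $r$ on $\underline{C}_1$, bounds $|\rd_{U} \log(\Omg_{\mathcal H}/\Omg_{RN,\mathcal H})|(U,1)$; together with \eqref{rdUphi.initial} this proves \eqref{initial.1}. The bound for $|\Psi|(U,1)$ in \eqref{initial.1.2} follows by integrating $\rd_{U} \Psi$ from the corner $(0,1)$, using \eqref{initial.1} and the values of $\Psi$ at $(0,1)$ already controlled on $C_{-\infty}$ ($|\phi| \leq E$, $|r - r_{+}| \lesssim 1$, and the lapse component vanishes). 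For $|\rd_{v} \Psi|(U,1)$ one again solves transport ODEs, now in $U$ along $\{ v = 1 \}$, coming from the equations in \eqref{WW.SS} for $\rd_{U}(\rd_{v} r)$, $\rd_{U}(\rd_{v} \phi)$ and $\rd_{U}(\rd_{v} \log \Omg_{\mathcal H})$; all have bounded coefficients and forcing on $\underline{C}_1$ (using the already-established bounds and that $\Omg_{\mathcal H}^{2}$ is bounded on $\underline{C}_1$), and their data at $(0,1)$ are controlled on $C_{-\infty}$ (e.g.\ $|\rd_{v} \phi|(0,1) \leq E$ and $\rd_{v} \log \Omg_{\mathcal H}(0,1) = \kappa_{+}$), whence \eqref{initial.1.2}. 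Enlarging $D$ finitely many times absorbs all implicit constants.

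The step I expect to be the main obstacle is \eqref{initial.horizon.2}: along $C_{-\infty}$ the lapse $\Omg_{\mathcal H}^{2}$ grows like $e^{2\kappa_{+} v}$, so $\rd_{U} r$, $\rd_{U} \phi$ and $\rd_{U} \log \Omg_{\mathcal H}$ are genuinely of size $e^{2\kappa_{+} v}$, and one must check (a) that the polynomial weight $v^{-s}$ is not degraded under the transport integration --- this relies on the forcing terms carrying $e^{2\kappa_{+} v}$ as a prefactor with only an integrable or faster-decaying polynomial remainder, and on $(\rd_{v} \phi)^{2}/\Omg_{\mathcal H}^{2}$ being exponentially small --- and (b) that for the $r$- and $\log \Omg$-components the leading $e^{2\kappa_{+} v}$ part of the transversal derivative coincides with the Reissner--Nordstr\"om value, so that the \emph{difference} $\rd_{U} \Psi$ genuinely decays at rate $e^{2\kappa_{+} v} v^{-s}$ rather than merely being $O(e^{2\kappa_{+} v})$. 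The remaining bookkeeping --- distinguishing the freely prescribed characteristic data from the quantities obtained by solving along the hypersurfaces, and verifying the boundedness of the Reissner--Nordstr\"om reference quantities --- is routine.
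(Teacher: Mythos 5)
Your proof is correct and follows essentially the same route as the paper: on $C_{-\infty}$ the gauge \eqref{gauge.1} kills the lapse component of $\Psi$, the Raychaudhuri constraint together with $r\to r_+$ gives $|\rd_v r|\lesssim v^{-2s}$ and $|r-r_+|\lesssim v^{-2s+1}$, and the wave equations \eqref{WW.SS} restricted to the two initial null hypersurfaces are integrated as transport ODEs to produce \eqref{initial.horizon.2} and \eqref{initial.1.2}. The paper states this last step in one sentence (``follow directly from integration''); you have simply spelled out the bookkeeping, in particular the cancellation $\rd_U r_{RN}=-\tfrac14\Omg^2_{RN,\calH}$ and the matching of the forcing at $r=r_+$ with $\rd_v(-\tfrac14\Omg^2_{\calH})$, which is the key reason the $e^{2\kappa_+ v}$-growth does not degrade the polynomial weight. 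Your treatment of $\underline{C}_1$ (gauge $\rd_U r\equiv-1$, Raychaudhuri reducing to $\rd_U\log\Omg^2_{\calH}=-r(\rd_U\phi)^2$) likewise matches the paper.
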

\begin{proof}
On $\{U=0\}$, we recall the gauge condition \eqref{gauge.1}:
$$\Omg_{\mathcal H}^2=\f{4e^{-2\kappa_+ r_+}(r_+-r_-)^{1+\f{\kappa_+}{\kappa_-}}}{r_+^2}e^{2\kappa_+ v}.$$
Hence, by definition, for $\Omg_{\mathcal H}$ in the $(U,v)$ coordinate system, we have 
\begin{equation}\label{Omg.initial.horizon}
\left(\log\Omg_{\mathcal H}-\log\Omg_{RN,\mathcal H}\right)(0,v)=0.
\end{equation}
Since $\rd_v\phi$ is given, $r$ can be computed by \eqref{eqn.Ray}, namely
$$\rd_v(e^{-2\kappa_+ v}\rd_v r)=-e^{-2\kappa_+ v}r(\rd_v\phi)^2.$$
Integrating this equation from $v=1$ to $v=\infty$, we obtain
$$\lim_{v\to \infty} (e^{-2\kappa_+ v}\rd_v r(0,v) + \int_1^v e^{-2\kappa_+ v'}r(\rd_v\phi)^2(0,v')\, \ud v') = e^{-2\kappa_+}\rd_v r(0,1).$$
We claim that 
$$\lim_{v\to \infty} \int_1^v e^{-2\kappa_+ v'}r(\rd_v\phi)^2(0,v')\, \ud v' = e^{-2\kappa_+}\rd_v r(0,1).$$
If not, since we also know that $\rd_v r(0,v)\geq 0$ (by Lemma~\ref{no.trapped}), it then follows that there exists $c>0$ so that for $v$ sufficiently large, $\rd_v r(0,v) \geq ce^{2\kappa_+ v}$. This then contradicts the fact that $r\to M+\sqrt{M^2-{\bf e}^2}$ as $v\to \infty$. Now the claim also implies that $\lim_{v\to \infty} e^{-2\kappa_+ v}\rd_v r(0,v) = 0$. Therefore, integrating the $\rd_v(e^{-2\kappa_+ v}\rd_v r)$ equation above, we obtain
\begin{equation}\label{r.initial.horizon}
|\rd_v r|(0,v)\leq C_{M,{\bf e},s} {\bf e}^2 v^{-2s},\quad |r-r_{RN}|=|r-(M+\sqrt{M^2-{\bf e}^2})|(0,v)\leq C_{M,{\bf e},s} {\bf e}^2 v^{-2s+1}.
\end{equation}
We then note that by the assumptions of Theorem \ref{main.theorem.C0.stability}, $\phi$ and $\rd_v\phi$ obey the desired estimate. This fact, together with \eqref{Omg.initial.horizon} and \eqref{r.initial.horizon}, prove \eqref{initial.horizon}.

We turn to the proof of \eqref{initial.1}.
On $\{v=1\}$, we set the gauge by the condition 
$\rd_U r=-1.$
Combined with the bound \eqref{RN.H.bound} for $\rd_{U} r_{RN}$, \eqref{initial.1} follows for $\rd_{U} (r - r_{N})$.
Moreover, by \eqref{eqn.Ray}, we have
$$\rd_U \left( \f{1}{\Omg^2_{\mathcal H}} \right)=\f{r(\rd_U\phi)^2}{\Omg_{\mathcal H}^2},$$
which implies that for $U\leq U_s\leq 1$,
$$|\rd_U\log\Omg_{\mathcal H}|(U,1)\leq C_{M,{\bf e}} (\rd_{U} \phi)^2(U, 1).$$
These observations, together with the assumption of Theorem \ref{main.theorem.C0.stability} on $\rd_U\phi$, imply \eqref{initial.1}.

Once we have proven \eqref{initial.horizon} and \eqref{initial.1}, it is easy to see that \eqref{initial.horizon.2} and \eqref{initial.1.2} follow directly from integration and the use of the equations \eqref{WW.SS}.
\end{proof}

\subsection{Strategy of the proof}\label{sec.strategy}

From this point onwards, we work under the assumptions of Theorem \ref{main.theorem.C0.stability}. In order to prove Theorem \ref{main.theorem.C0.stability}, we will estimate $\Psi$ (as defined in \eqref{Psi.def}) and its derivatives. Since we are allowed to restrict to the future development of a subset $\underline{C}_1'$ of $\underline{C}_1$, we only need to prove the bounds in $0\leq U\leq U_s$ for some $U_s$ sufficiently small (or equivalently $-\infty<u\leq u_s$ for some $u_s$ sufficiently negative). These estimates will be proven in the following two steps:
\begin{enumerate}
\item First, we prove estimates in a neighborhood of the event horizon, $\{(u,v):v\geq 1,\, u+v\leq A_1\}$ for $A_1$ sufficiently negative. This will be called the red-shift region $\mathcal R$. The choice of $A_1$ depends on the size of the initial data (in particular on $D$ in Proposition \ref{initial.bound}).
\item Second, after $A_1$ is fixed, we further divide the remaining region. Define $\mathcal B_i:=\{(u,v): A_i\leq u+v< A_{i+1}, u\leq u_i\}$ for $i=1,\ldots, n$, where $A_1<A_2<\dots<A_n<A_{n+1}:=\infty$, $u_1>u_2>\dots >u_n$ (all of these parameters will be chosen in the course of the proof). The choice of $A_i$ depends only on $A_1$, the parameters of the background Reissner--Nordstr\"om spacetime $M$, ${\bf e}$, and also $s$. In particular it does not depend explicitly on the size of the data. We then show that as long as $u_i$ is sufficient negative and $u_i\leq u_{i-1}$ (in a way that can depend on the size of the data), one can obtain good estimates for the solution in each of these regions. Finally, choosing $u_s\leq u_n$, we conclude the proof of the main theorem.

We will call $\mathcal B:=(\cup_{i=1}^n \mathcal B_i)\cap \{(u,v): u\leq u_s\}$ the blue-shift region.  
\end{enumerate}

A depiction of the partition of the spacetime and the various parameters involved is given in Figure \ref{fig:setup}.
\begin{figure}[htbp]
\begin{center}
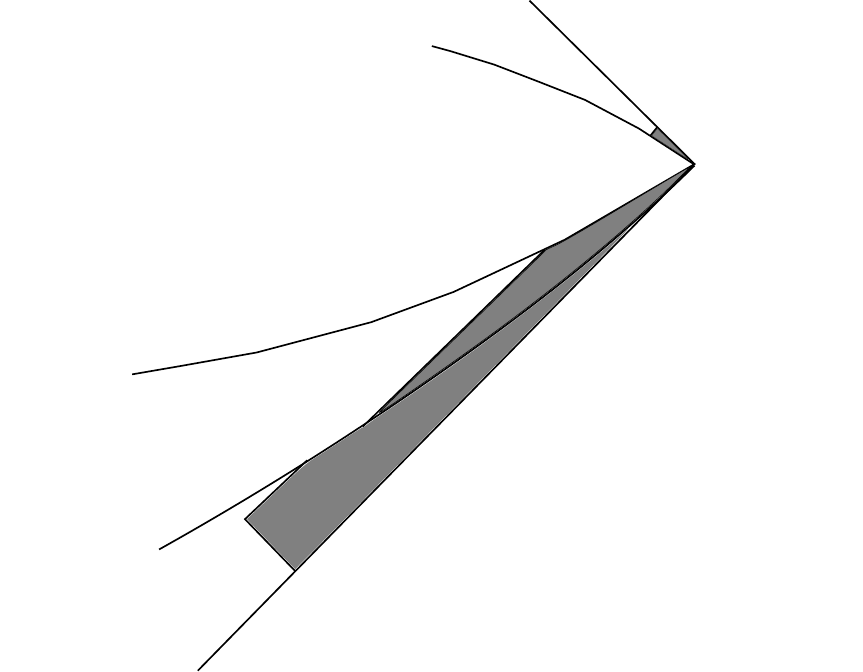 
\caption{Setup for the proof of main theorem} \label{fig:setup}
\end{center}
\end{figure}

We will carry out the estimates in the two steps above in the next two subsections. We then combine these bounds and conclude the proof of the main theorem in Section \ref{sec.C0}.

\subsection{The red-shift region}\label{sec.RS}

In this subsection, we work with the $(U,v)$ coordinate system since it is regular in the red-shift region. Define the red-shift region as follows:
$$\mathcal R:=\{(U,v): U e^{2 \kappa_+ v}\leq \de, \,v\geq 1,\, U\geq 0\},$$
where $\de>0$ is a small constant to be chosen later. Moreover, as we will show later, the choice of $\de$ depends only on $M$, ${\bf e}$, $s$ and $D$. Notice that this definition of $\mathcal R$ is the same as that in Section \ref{sec.strategy} if $A_1:=\f{1}{2\kappa_+}\log (2\kappa_+\de)$.

We proceed by writing down a schematic wave equation $\Psi$ under certain bootstrap assumptions. Assume that the following bound holds throughout $\mathcal R$:
\begin{equation}\label{BA.C0}
|\Psi|\leq 4D.
\end{equation}
We also need the following bootstrap assumptions:
\begin{equation}\label{BA.Omg}
|\log\Omg_{\mathcal H}-\log \Omg_{RN,\mathcal H}|\leq \f 1{100},
\end{equation}
and
\begin{equation}\label{BA.RS}
|\rd_U\Psi|(u,v)\leq \Delta e^{2\kappa_+ v}v^{-s},
\end{equation}
where $\Delta$ is a large constant to be chosen later.

Subtracting the equations for the Reissner--Nordstr\"om solution from that for $(\mathcal M,g,\phi,F)$ and using \eqref{BA.C0}, we obtain
\begin{equation}\label{RS.eqn}
|\rd_U\rd_v\Psi|\leq C_{M,{\bf e},s,D}\left( e^{2\kappa_+ v}(|\rd_v\Psi|+|\Psi|)+ |U|e^{2\kappa_+ v} |\rd_U\Psi|+|\rd_U\Psi\rd_v\Psi|\right),
\end{equation}
where $C_{M,{\bf e},s,D}$ is a constant depending only on $M$, ${\bf e}$, $s$ and $D$. Notice that in the above we have used the estimates \eqref{RN.H.bound} on Reissner--Nordstr\"om spacetime and also the bound
\begin{equation}\label{Omg.diff.bound}
\f{\Omg^2_{\mathcal H}-\Omg^2_{RN,\mathcal H}}{\Omg^2_{RN,\mathcal H}}\leq C\left(\log\Omg_{\mathcal H}-\log\Omg_{RN,\mathcal H}\right)
\end{equation}
which holds under the bootstrap assumption \eqref{BA.Omg}.

Our goal in the remainder of this subsection is to use the equation \eqref{RS.eqn} to obtain estimates for $\Psi$, $\rd_v\Psi$ and $\rd_u\Psi$. In particular, we will recover the bootstrap assumptions \eqref{BA.C0}, \eqref{BA.Omg} and \eqref{BA.RS}. We can immediately observe that in fact we can obtain a bound for $\Psi$ using \eqref{BA.RS}. In particular, since $v\geq 1$, the estimate in the following proposition improves the bootstrap assumptions \eqref{BA.C0} and \eqref{BA.Omg}:
\begin{proposition}\label{RS.P}
Under the bootstrap assumptions \eqref{BA.C0}, \eqref{BA.Omg} and \eqref{BA.RS}, for $\de=\de(\Delta,D)>0$ sufficiently small, the following estimates hold for $(U,v)\in \mathcal R$:
$$|\Psi|(U,v)\leq 2D v^{-s}$$
and 
$$|\log\Omg_{\mathcal H}-\log \Omg_{RN,\mathcal H}|(U,v)\leq \f 1{200}.$$
\end{proposition}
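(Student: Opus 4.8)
The plan is to integrate the quantity $\rd_U\Psi$ in $U$, starting from the initial hypersurface $\{U=0\}$, and use the bootstrap assumption \eqref{BA.RS} to bound the integral. First I would write, for a point $(U,v)\in\calR$,
\[
	\Psi(U,v) = \Psi(0,v) + \int_{0}^{U} \rd_{U'}\Psi(U', v)\, dU',
\]
so that by the initial bound \eqref{initial.horizon} in Proposition~\ref{initial.bound} and the bootstrap assumption \eqref{BA.RS},
\[
	|\Psi|(U,v) \leq D v^{-s} + \int_{0}^{U} \Delta e^{2\kappa_+ v} v^{-s}\, dU' = D v^{-s} + \Delta U e^{2\kappa_+ v} v^{-s}.
\]
Since $(U,v)\in\calR$ means $U e^{2\kappa_+ v}\leq\de$, the second term is at most $\Delta\de\, v^{-s}$. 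Choosing $\de=\de(\Delta,D)>0$ small enough that $\Delta\de\leq D$ then yields $|\Psi|(U,v)\leq 2D v^{-s}$, which is the first claimed estimate.

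The second estimate follows immediately: the third component of $\Psi$ is precisely $\log\f{\Omg_{\mathcal H}}{\Omg_{RN,\mathcal H}}$, so $|\log\Omg_{\mathcal H}-\log\Omg_{RN,\mathcal H}|(U,v)\leq|\Psi|(U,v)\leq 2D v^{-s}\leq 2D$ for $v\geq1$. To get the sharp constant $\f{1}{200}$, one should instead shrink $\de$ further (or, if needed, also restrict to $v$ bounded below, but here $v\geq 1$ suffices once $\de$ is chosen appropriately): picking $\de$ with $\Delta\de\leq\min\{D,\,\f{1}{400}\}$ gives $|\log\Omg_{\mathcal H}-\log\Omg_{RN,\mathcal H}|(U,v)\leq D v^{-s}+\f{1}{400}$; since the first estimate shows $|\Psi|(U,v)$ is controlled and we only need the bound along the red-shift region, one argues that the contribution from the initial data term $D v^{-s}$ is itself absorbed because on $\{U=0\}$ we have the \emph{exact} gauge normalization \eqref{Omg.initial.horizon}, i.e.\ $\log\Omg_{\mathcal H}-\log\Omg_{RN,\mathcal H}$ \emph{vanishes identically} on $\{U=0\}$. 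Hence in fact
\[
	|\log\Omg_{\mathcal H}-\log\Omg_{RN,\mathcal H}|(U,v) = \Bigl| \int_{0}^{U} \rd_{U'}\Bigl(\log\Omg_{\mathcal H}-\log\Omg_{RN,\mathcal H}\Bigr)(U',v)\, dU' \Bigr| \leq \Delta U e^{2\kappa_+ v} v^{-s} \leq \Delta\de,
\]
and choosing $\de\leq\f{1}{200\Delta}$ finishes the proof.

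I do not expect any genuine obstacle here: the proposition is an elementary consequence of integrating the bootstrap bound \eqref{BA.RS} over the thin region $\calR$ (whose thinness is exactly $U e^{2\kappa_+ v}\leq\de$) together with the favorable initial data. The only point requiring a little care is the order of quantifiers — $\de$ must be chosen \emph{after} $\Delta$ (and depending on $D$), which is consistent with the statement $\de=\de(\Delta,D)$; and one must remember that the $\Omg$-component enjoys the exact initial normalization \eqref{Omg.initial.horizon}, which is what allows the constant $\f{1}{200}$ (strictly better than the generic bound $2D$) to be reached. The real work of the red-shift region analysis lies in subsequently recovering the \emph{derivative} bootstrap assumption \eqref{BA.RS} itself from the wave equation \eqref{RS.eqn}, but that is the content of the next proposition, not this one.
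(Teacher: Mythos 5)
Your proof is correct and follows essentially the same two-step argument as the paper: integrate the bootstrap bound on $\rd_U\Psi$ over the thin region $U e^{2\kappa_+ v}\leq\de$ and combine with the initial bound from Proposition~\ref{initial.bound} for the first estimate, then exploit the exact vanishing of $\log\Omg_{\mathcal H}-\log\Omg_{RN,\mathcal H}$ on $\{U=0\}$ (equation \eqref{Omg.initial.horizon}) to drop the $D v^{-s}$ contribution and obtain the sharp $1/200$ bound. The brief detour in your second paragraph considering the case without the exact normalization is unnecessary, but you correctly converge on the paper's argument.
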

\begin{proof}
We apply the bound for the initial data in Proposition \ref{initial.bound}, the bootstrap assumption \eqref{BA.RS} and the fact that $Ue^{2\kappa_+ v}\leq \de$ to get
\begin{equation*}
\begin{split}
|\Psi|(U,v)\leq &|\Psi|(0,v)+\int_0^U |\rd_U\Psi|(U',v) dU' \\
\leq &Dv^{-s}+\int_0^U \Delta e^{2\kappa_+ v} v^{-s}\,dU' \leq (D+\Delta\de) v^{-s}\leq 2Dv^{-s},
\end{split}
\end{equation*}
after choosing $\de$ to be sufficiently small. To show the second bound stated in the proposition, notice that by \eqref{Omg.initial.horizon}, $\log\Omg_{\mathcal H}-\log \Omg_{RN,\mathcal H}$ is vanishing for $U=0$. We can therefore repeat the argument above without the $Dv^{-s}$ term, i.e., we have
$$|\log\Omg_{\mathcal H}-\log \Omg_{RN,\mathcal H}|(U,v)\leq \int_0^U \Delta e^{2\kappa_+ v} v^{-s}\,dU' \leq \Delta\de v^{-s}\leq \f{1}{200}$$
for $\de$ sufficiently small.
\end{proof}
Using this we get a bound for $\rd_v\Psi$:
\begin{proposition}\label{RS.rvP}
Under the bootstrap assumptions \eqref{BA.C0}, \eqref{BA.Omg} and \eqref{BA.RS}, for $\de=\de(M,{\bf e},\Delta,D)>0$ sufficiently small, the following estimate holds for $(U,v)\in \mathcal R$:
$$|\rd_v\Psi|(U,v)\leq 2D v^{-s}.$$
\end{proposition}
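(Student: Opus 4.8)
The statement to prove is the $v$-derivative estimate $|\rd_v \Psi| \le 2D v^{-s}$ throughout the red-shift region $\mathcal R$. The plan is to integrate the schematic wave equation \eqref{RS.eqn} for $\rd_U \rd_v \Psi$ in the $U$-direction, starting from the event horizon $\{U = 0\}$, on which the data bound \eqref{initial.horizon} supplies $|\rd_v \Psi|(0, v) \le D v^{-s}$. The essential mechanism is that the ``large'' factors $e^{2\kappa_+ v}$ appearing on the right-hand side of \eqref{RS.eqn} are compensated by the defining constraint $U e^{2\kappa_+ v} \le \delta$ of $\mathcal R$: integrating in $U$ turns every such factor into a gain $\int_0^U e^{2\kappa_+ v}\, dU' = U e^{2\kappa_+ v} \le \delta$, which is small once $\delta$ is chosen small (depending on $M$, ${\bf e}$, $s$, $D$ and the constant $\Delta$ from the bootstrap assumption \eqref{BA.RS}).

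Concretely, I would fix $v \ge 1$, set $F(U) := |\rd_v \Psi|(U, v)$, and integrate \eqref{RS.eqn} in $U$. For the $|\Psi|$ term I would invoke the already-established decay $|\Psi| \le 2D v^{-s}$ from Proposition \ref{RS.P} — this is precisely where that proposition is used, and is the reason the crude bound $|\Psi| \le 4D$ of \eqref{BA.C0} would not suffice, since the output must decay like $v^{-s}$; for the two $\rd_U \Psi$ terms I would use \eqref{BA.RS} together with $v^{-s} \le 1$. This produces an integral inequality of the shape
\begin{equation*}
	F(U) \le D v^{-s} + C \bigl( 2D + \delta \Delta \bigr) \delta \, v^{-s} + C(1 + \Delta) \int_0^U e^{2\kappa_+ v} F(U')\, dU',
\end{equation*}
to which I would apply Gr\"onwall's inequality in $U$ on the compact interval $[0, U] \subseteq [0, \delta e^{-2\kappa_+ v}]$. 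Using $\int_0^U e^{2\kappa_+ v}\, dU' \le \delta$ once more, the Gr\"onwall exponential is bounded by $e^{C(1+\Delta)\delta} \le 1 + 2C(1+\Delta)\delta$ for $\delta$ small, so that $F(U) \le \bigl( D + O(\delta) \bigr) v^{-s} \le 2D v^{-s}$ after shrinking $\delta$. Equivalently, one can run this as a continuity argument in $U$ with $|\rd_v \Psi| \le 4D v^{-s}$ adjoined to the list of bootstrap assumptions on $\mathcal R$, which the above computation then improves to $2D v^{-s}$.

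I do not expect a genuine obstacle here; this proposition is one of the elementary links in the red-shift hierarchy. The only point requiring care is quantitative: the Gr\"onwall (or continuity) step introduces a multiplicative constant strictly above $1$, so the additive contributions must be comfortably below $2D v^{-s}$ before that multiplication — hence the necessity of the sharp $v^{-s}$ decay of $\Psi$ from Proposition \ref{RS.P} and of choosing $\delta$ small in a manner depending on $\Delta$ (and therefore ultimately on $M$, ${\bf e}$, $s$, $D$). A secondary bookkeeping point is to note that all the mixed derivatives $\rd_U \rd_v \Psi$ entering \eqref{RS.eqn} are legitimately controlled by the reduced system \eqref{WW.SS}, which is available given the $C^2 \times C^1 \times C^1$ regularity of $(r, \phi, \log \Omg)$.
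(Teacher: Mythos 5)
Your argument is correct and follows essentially the same route as the paper: integrate \eqref{RS.eqn} in $U$ from the event horizon, use Proposition \ref{RS.P} for the $|\Psi|$ term and \eqref{BA.RS} for the $|\rd_U\Psi|$ terms, exploit $Ue^{2\kappa_+ v}\le\de$, and close with Gr\"onwall. The only (cosmetic) difference is that the paper keeps $|\rd_U\Psi|$ inside the Gr\"onwall kernel and substitutes \eqref{BA.RS} afterwards, whereas you substitute it before; both yield the same smallness in the exponent and work equally well.
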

\begin{proof}
Integrating \eqref{RS.eqn} and using Proposition \ref{initial.bound}, the bootstrap assumption \eqref{BA.RS} and Proposition \ref{RS.P}, we obtain
\begin{equation*}
\begin{split}
&|\rd_v\Psi|(U,v)\leq |\rd_v\Psi|(0,v)+\int_0^U |\rd_U\rd_v\Psi|(U',v)dU'\\
\leq &D v^{-s}+C_{M,{\bf e},s,D}\int_0^U \left(e^{2\kappa_+ v}(|\rd_v\Psi|+|\Psi|)+ |U'|e^{2\kappa_+ v} |\rd_U\Psi|+|\rd_U\Psi||\rd_v\Psi|\right)(U',v)\, dU'\\
\leq & (D+2 C_{M,{\bf e},s,D} U e^{2\kappa_+ v}+C_{M,{\bf e},s,D}\Delta U^2 e^{4\kappa_+ v}) v^{-s}\\
&+C_{M,{\bf e},s,D}\int_0^U \left(e^{2\kappa_+ v}+|\rd_U\Psi|(U',v)\right)|\rd_v\Psi|(U',v)\, dU'.
\end{split}
\end{equation*}
Recalling that in the red-shift region $U e^{2\kappa_+ v}\leq \de$, we can thus choose $\de$ to be sufficiently small such that the first term is $\leq \f{3D}{2} v^{-s}$. We then apply Gr\"onwall's inequality to get
$$|\rd_v\Psi|(U,v)\leq \f{3D}2 v^{-s}e^{C_{M,{\bf e},s,D}Ue^{2\kappa_+ v}+C_{M,{\bf e},s,D}\int_0^U|\rd_U\Psi|(U',v)\, dU'}\leq \f{3D}2 v^{-s}e^{C_{M,{\bf e},s,D}\de+C_{M,{\bf e},s,D}\Delta\de v^{-s}}.$$
Choosing $\de$ to be sufficiently small depending on $M$, ${\bf e}$, $D$ and $\Delta$, we thus have obtained the desired conclusion.
\end{proof}
Our goal is then to use the above bounds to control $\rd_U\Psi$ and to improve the bootstrap assumption \eqref{BA.RS}. Before we proceed to the estimates for $\rd_U\Psi$, we first prove a simple lemma:
\begin{lemma}\label{lemma.1}
For every real numbers $s\geq 0$ and $v\geq 1$, we have
$$\int_1^v e^{2\kappa_+ v'} (v')^{-s} \,dv'\leq C_{\kappa_+,s}e^{2\kappa_+ v} v^{-s},$$
for some $C_{\kappa_+,s}>0$ depending only on $\kappa_+$ and $s$.
\end{lemma}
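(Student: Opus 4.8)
The plan is elementary: the exponential weight $e^{2\kappa_{+} v'}$ grows fast enough that the integral over $[1,v]$ is controlled, up to a multiplicative constant, by the value of the integrand at the upper endpoint $v'=v$. The only external input is the bound
\[
	K := \sup_{v \geq 1} v^{s} e^{-\kappa_{+} v} < \infty,
\]
which holds because $v \mapsto v^{s} e^{-\kappa_{+} v}$ is continuous on $[1,\infty)$ and tends to $0$ as $v \to \infty$.

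The first step is to dispose of the range $1 \leq v \leq 2$. There the left-hand side is at most $\int_{1}^{2} e^{2\kappa_{+} v'} \, dv' \leq \frac{1}{2\kappa_{+}} e^{4\kappa_{+}}$, while the right-hand side is bounded below by $C_{\kappa_{+}, s} e^{2\kappa_{+}} 2^{-s}$ (using $e^{2\kappa_{+} v} \geq e^{2\kappa_{+}}$ and $v^{-s} \geq 2^{-s}$), so the inequality holds once $C_{\kappa_{+}, s}$ is chosen large enough. The main step treats $v \geq 2$ by splitting $\int_{1}^{v} = \int_{1}^{v/2} + \int_{v/2}^{v}$. On $[1, v/2]$ we bound $(v')^{-s} \leq 1$ and integrate the exponential, getting $\int_{1}^{v/2} e^{2\kappa_{+} v'} (v')^{-s} \, dv' \leq \frac{1}{2\kappa_{+}} e^{\kappa_{+} v}$. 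On $[v/2, v]$ we use that $(v')^{-s}$ is decreasing to pull out $(v/2)^{-s} = 2^{s} v^{-s}$, yielding $\int_{v/2}^{v} e^{2\kappa_{+} v'} (v')^{-s} \, dv' \leq \frac{2^{s}}{2\kappa_{+}} v^{-s} e^{2\kappa_{+} v}$. Finally, the first contribution is reabsorbed into the claimed form via $e^{\kappa_{+} v} = \bigl( v^{s} e^{-\kappa_{+} v} \bigr) v^{-s} e^{2\kappa_{+} v} \leq K v^{-s} e^{2\kappa_{+} v}$, so the integral is at most $\frac{K + 2^{s}}{2\kappa_{+}} v^{-s} e^{2\kappa_{+} v}$. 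Taking $C_{\kappa_{+}, s}$ to be the larger of the two constants produced above completes the proof.

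There is no genuine obstacle here; the only point needing slight care is that the midpoint split degenerates for small $v$, which is why the range $1 \leq v \leq 2$ is handled separately. An alternative argument avoids the split altogether: since $\frac{d}{dv'} \bigl( e^{2\kappa_{+} v'} (v')^{-s} \bigr) = e^{2\kappa_{+} v'} (v')^{-s} \bigl( 2\kappa_{+} - \tfrac{s}{v'} \bigr) \geq \kappa_{+} e^{2\kappa_{+} v'} (v')^{-s}$ once $v' \geq s/\kappa_{+}$, for $v \geq v_{0} := \max\{1, s/\kappa_{+}\}$ one gets $\int_{v_{0}}^{v} e^{2\kappa_{+} v'} (v')^{-s} \, dv' \leq \kappa_{+}^{-1} e^{2\kappa_{+} v} v^{-s}$ directly, with the remaining (bounded) piece $\int_{1}^{v_{0}}$ absorbed as before using that $e^{2\kappa_{+} v} v^{-s}$ is bounded below on $[1,\infty)$.
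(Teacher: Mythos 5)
Your proof is correct, but it follows a genuinely different route from the paper. The paper integrates by parts twice, producing boundary terms of the form $e^{2\kappa_+ v} v^{-s}$ and $e^{2\kappa_+ v} v^{-s-1}$ plus a remainder integral $\int_1^v e^{2\kappa_+ v'} (v')^{-s-2}\,dv'$, and then crudely bounds that remainder by pulling out the decreasing factor $(v')^{-s}$ and using $\int_1^v (v')^{-2}\,dv' \leq 1$. Your first argument instead splits the domain at $v/2$: on the far piece you pull out $(v')^{-s}$ directly, and on the near piece you observe that the exponential only reaches $e^{\kappa_+ v}$, which is beaten by the $e^{2\kappa_+ v} v^{-s}$ target once the elementary bound $v^s e^{-\kappa_+ v} \leq K$ is invoked. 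The paper's approach packages the constant more explicitly (you can read off $C_{\kappa_+,s} \leq \frac{1}{2\kappa_+} + \frac{s}{4\kappa_+^2} + \frac{s(s+1)}{4\kappa_+^2}$, say), which is why they state a companion Lemma~\ref{lemma.2} for the blue-shift region in the same style; your split is more elementary but leaves the constant implicit. Your alternative argument — observing that $\frac{d}{dv'}\bigl(e^{2\kappa_+ v'}(v')^{-s}\bigr) \geq \kappa_+\, e^{2\kappa_+ v'}(v')^{-s}$ for $v' \geq s/\kappa_+$ and integrating — is arguably the cleanest of the three and closest in spirit to the standard way of bounding integrals against a function that is eventually log-convex with slope bounded below. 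All three arguments are sound; there is no gap.
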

\begin{proof}
Integrating by parts twice, we obtain
\begin{equation*}
\begin{split}
\int_1^v e^{2\kappa_+ v'} (v')^{-s} \,dv'\leq &\f{1}{2\kappa_+}e^{2\kappa_+ v} v^{-s}+\f{s}{2\kappa_+}\int_1^v e^{2\kappa_+ v'} (v')^{-s-1} \,dv'\\
\leq &\f{1}{2\kappa_+}e^{2\kappa_+ v} v^{-s}+\f{s}{4\kappa_+^2}e^{2\kappa_+ v} v^{-s-1}+\f{s(s+1)}{4\kappa_+^2}\int_1^v e^{2\kappa_+ v'} (v')^{-s-2} \,dv'.
\end{split}
\end{equation*}
The last term can be bounded by
$$\leq \f{s(s+1)}{4\kappa_+^2}e^{2\kappa_+ v} v^{-s}\int_1^v (v')^{-2} \,dv'\leq \f{s(s+1)}{4\kappa_+^2}e^{2\kappa_+ v} v^{-s}.$$
Combining all the bounds above gives the desired conclusion.
\end{proof}
Finally, we use the equation \eqref{RS.eqn} and the above lemma to estimate $\rd_U\Psi$ and close the bootstrap assumption \eqref{BA.RS}.
\begin{proposition}\label{RS.ruP}
Under the bootstrap assumptions \eqref{BA.C0}, \eqref{BA.Omg} and \eqref{BA.RS}, for $\de=\de(M,{\bf e},\Delta,D)>0$ sufficiently small, the following estimate holds for $(U,v)\in \mathcal R$:
$$|\rd_U\Psi|(U,v)\leq C_{M,{\bf e},s,D} e^{2\kappa_+ v} v^{-s},$$
where $C_{M,{\bf e},s,D}$ is a constant depending only on $M$, ${\bf e}$, $s$ and $D$.
\end{proposition}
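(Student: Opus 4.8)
The plan is to integrate the schematic wave equation \eqref{RS.eqn} in the $v$-direction, starting from the bound for $\rd_U\Psi$ on $\{v=1\}$ proven in Proposition~\ref{initial.bound} (which gives $|\rd_U\Psi|(U,1)\leq D$). For fixed $U$ with $(U,v)\in\mathcal R$, the whole segment $\set{(U,v'):1\leq v'\leq v}$ lies in $\mathcal R$ (since $|U|e^{2\kappa_+ v'}$ is increasing in $v'$), so one may write
\[
	|\rd_U\Psi|(U,v)\leq |\rd_U\Psi|(U,1)+\int_1^v |\rd_U\rd_v\Psi|(U,v')\,dv'
\]
and substitute \eqref{RS.eqn} together with the already-established bounds $|\Psi|(U,v')\leq 2D(v')^{-s}$ and $|\rd_v\Psi|(U,v')\leq 2D(v')^{-s}$ from Propositions~\ref{RS.P} and~\ref{RS.rvP}.

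The resulting integral inequality splits into a source part and a part still proportional to $|\rd_U\Psi|$. For the source part I would use Lemma~\ref{lemma.1}: the dominant term $C_{M,{\bf e},s,D}\int_1^v e^{2\kappa_+ v'}(v')^{-s}\,dv'$ is bounded by $C_{M,{\bf e},s,D}\, e^{2\kappa_+ v}v^{-s}$ with \emph{no} loss in the polynomial power, and since $e^{2\kappa_+ v}v^{-s}$ is bounded below by a positive constant on $v\geq 1$, the constant term $D$ is absorbed into the same expression. For the remaining terms $|U|e^{2\kappa_+ v'}|\rd_U\Psi|$ and $|\rd_U\Psi||\rd_v\Psi|$ I would apply Gr\"onwall's inequality to $v'\mapsto |\rd_U\Psi|(U,v')$; the key observation is that the Gr\"onwall exponent
\[
	\int_1^v C_{M,{\bf e},s,D}\bigl(|U|e^{2\kappa_+ v'}+|\rd_v\Psi|(U,v')\bigr)\,dv'
	\leq C_{M,{\bf e},s,D}\Bigl(\frac{|U|e^{2\kappa_+ v}}{2\kappa_+}+\int_1^\infty 2D(v')^{-s}\,dv'\Bigr)
\]
stays bounded: the first summand is $\leq C_{M,{\bf e},s,D}\de$ because $|U|e^{2\kappa_+ v}\leq\de$ in $\mathcal R$, and the second is finite precisely because $s>1$. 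Hence the Gr\"onwall factor is bounded by a constant depending only on $M$, ${\bf e}$, $s$, $D$, and multiplying through yields $|\rd_U\Psi|(U,v)\leq C_{M,{\bf e},s,D}\, e^{2\kappa_+ v}v^{-s}$. Crucially, this constant does \emph{not} depend on $\Delta$, so afterwards one fixes $\Delta$ larger than it — and only then chooses $\de$ small in terms of $\Delta$ (as already done in Propositions~\ref{RS.P}--\ref{RS.rvP}) — thereby strictly improving the bootstrap assumption \eqref{BA.RS}.

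This computation is essentially routine once the weights are set up correctly; the only genuine subtlety, and the reason the argument closes so cleanly, is the interaction between the exponential weight $e^{2\kappa_+ v}$ and the polynomial weight $v^{-s}$. Lemma~\ref{lemma.1} is exactly what guarantees that integrating the exponentially growing source against $(v')^{-s}$ loses nothing in the polynomial power, and the hypothesis $s>1$ (rather than merely $s>0$) is what makes the $(v')^{-s}$ contribution to the Gr\"onwall exponent integrable at infinity. The nonlinear term $|\rd_U\Psi||\rd_v\Psi|$, by contrast, is tame: since $\rd_v\Psi$ already decays, it contributes only a bounded multiplicative factor, and no smallness (beyond that of $\de$) is needed for it here.
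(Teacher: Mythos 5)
Your proof is correct and follows essentially the same route as the paper's: integrate \eqref{RS.eqn} in $v$ from the data on $\{v=1\}$, bound the source integral with Lemma~\ref{lemma.1}, and run Gr\"onwall on the terms still linear in $|\rd_U\Psi|$ using $|U|e^{2\kappa_+ v}\leq\de$ and the integrability of $(v')^{-s}$ for $s>1$. You also make explicit a point the paper leaves implicit — that the resulting constant is independent of $\Delta$, which is what allows the bootstrap closure in Theorem~\ref{RS.prop} to succeed.
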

\begin{proof}
Integrating \eqref{RS.eqn} in the $v$-direction, applying Proposition \ref{initial.bound} and using Propositions \ref{RS.P} and \ref{RS.rvP} together with Lemma \ref{lemma.1}, we get
\begin{equation*}
\begin{split}
|\rd_U\Psi|(U,v)\leq & D+C_{M,{\bf e},s,D}\int_1^v \left(e^{2\kappa_+ v'}(|\rd_v\Psi|+|\Psi|)+ |U|e^{2\kappa_+ v} |\rd_U\Psi|+|\rd_U\Psi||\rd_v\Psi|\right)\, dv'\\
\leq &D+C_{M,{\bf e},s,D} \int_1^v e^{2\kappa_+ v'} (v')^{-s}\, dv'+C_{M,{\bf e},s,D}\int_1^v (|U|e^{2\kappa_+ v}+|\rd_v\Psi|)|\rd_U\Psi|\, dv'\\
\leq &D+C_{M,{\bf e},s,D} e^{2\kappa_+ v} v^{-s} +C_{M,{\bf e},s,D}\int_1^v (|U|e^{2\kappa_+ v}+|\rd_v\Psi|)|\rd_U\Psi|\, dv'.
\end{split}
\end{equation*}
Finally, we apply Gr\"onwall's inequality to get the desired conclusion.
\end{proof}
Choosing $\Delta>C_{M,{\bf e},s,D}$, we have thus improved the bootstrap assumption \eqref{BA.RS} and completed the proof of the following result:
\begin{theorem}\label{RS.prop}
There exist constants $\de=\de(M,{\bf e},s,D)>0$ and $C_{M,{\bf e},s,D}>0$ such that the following estimate holds for $(U,v)\in \mathcal R$:
$$|\Psi|(U,v)+|\rd_v\Psi|(U,v)+e^{-2\kappa_+ v}|\rd_U\Psi|(U,v)\leq C_{M,{\bf e},s,D}  v^{-s}.$$
\end{theorem}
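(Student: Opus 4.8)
The plan is to derive the stated estimate as the output of the standard continuity (bootstrap) argument, feeding on Propositions~\ref{RS.P}, \ref{RS.rvP} and \ref{RS.ruP} (which are conditional on the bootstrap assumptions \eqref{BA.C0}, \eqref{BA.Omg}, \eqref{BA.RS}) together with the initial-data bounds of Proposition~\ref{initial.bound}. First I would fix the large constant $\Delta = \Delta(M,{\bf e},s,D)$ so that $\Delta > \max\{D, C_{M,{\bf e},s,D}\}$, where $C_{M,{\bf e},s,D}$ is the constant appearing in Proposition~\ref{RS.ruP}, and then fix $\de = \de(M,{\bf e},s,D) > 0$ small enough that the conclusions of all three of Propositions~\ref{RS.P}, \ref{RS.rvP}, \ref{RS.ruP} hold. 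Since $\mathcal{R}$ is non-compact in the $v$-direction, I would run the continuity argument, for each fixed $v_0 \ge 1$, on the truncated region $\mathcal{R}^{v_0} := \mathcal{R} \cap \{v \le v_0\}$, observe that the resulting bounds are independent of $v_0$, and then let $v_0 \to \infty$.

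Concretely, fix $v_0$ and let $\mathcal{U}$ be the set of $U_\ast \in [0, \de e^{-2\kappa_+}]$ such that the (maximal) solution exists on $\mathcal{R}^{v_0} \cap \{U \le U_\ast\}$ with $r \in (r_-, r_+)$ and $\Omg > 0$ there, and such that \eqref{BA.C0}, \eqref{BA.Omg}, \eqref{BA.RS} hold there. The set $\mathcal{U}$ is \emph{nonempty}: on $\{U = 0\}$, Proposition~\ref{initial.bound} gives $|\Psi|(0,v) \le D v^{-s} < 4D$ and $|\rd_U \Psi|(0,v) \le D e^{2\kappa_+ v} v^{-s} < \Delta e^{2\kappa_+ v} v^{-s}$, while $\log \Omg_{\mathcal{H}} - \log \Omg_{RN,\mathcal{H}}$ vanishes identically by \eqref{Omg.initial.horizon}, so all three bootstrap assumptions hold with strict inequality there and persist on $\mathcal{R}^{v_0}\cap\{U\le U_\ast\}$ for small $U_\ast$ by local existence (Proposition~\ref{prop.CIVP.LWP}) and continuity. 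The set $\mathcal{U}$ is \emph{closed}: the quantities entering \eqref{BA.C0}, \eqref{BA.Omg}, \eqref{BA.RS} depend continuously on $(U,v)$ on the region of existence and the bootstrap inequalities are non-strict, so they pass to the limit, and the two-sided control $|\log \Omg_{\mathcal{H}} - \log \Omg_{RN,\mathcal{H}}| \le \frac{1}{100}$, combined with the explicit upper and lower bounds for $r_{RN}$ and $\Omg_{RN,\mathcal{H}}$ on the compact set $\mathcal{R}^{v_0}$, feeds a continuation criterion guaranteeing the solution extends onto $\{U = U_\ast\}$. The set $\mathcal{U}$ is \emph{open}: here Propositions~\ref{RS.P}, \ref{RS.rvP}, \ref{RS.ruP} yield the strictly improved bounds $|\Psi| \le 2D v^{-s} \le 2D < 4D$, $|\log \Omg_{\mathcal{H}} - \log\Omg_{RN,\mathcal{H}}| \le \frac{1}{200} < \frac{1}{100}$ and $|\rd_U \Psi| \le C_{M,{\bf e},s,D} e^{2\kappa_+ v} v^{-s} < \Delta e^{2\kappa_+ v} v^{-s}$, which together with a short-time continuation extends the bootstrap region slightly. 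Hence $\mathcal{U} = [0, \de e^{-2\kappa_+}]$; letting $v_0 \to \infty$ shows \eqref{BA.C0}, \eqref{BA.Omg}, \eqref{BA.RS} hold on all of $\mathcal{R}$, and feeding this back into Propositions~\ref{RS.P}, \ref{RS.rvP}, \ref{RS.ruP} once more gives $|\Psi| + |\rd_v\Psi| \le 4D v^{-s}$ and $e^{-2\kappa_+ v}|\rd_U\Psi| \le C_{M,{\bf e},s,D} v^{-s}$ on $\mathcal{R}$, i.e.\ the asserted estimate with constant $4D + C_{M,{\bf e},s,D}$.

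I expect the main obstacle to be organizational rather than analytic, since the three preceding propositions already did the hard work — controlling the nonlinear terms through the smallness of $\de$ and, crucially, avoiding any loss in the polynomial weight $v^{-s}$ via Lemma~\ref{lemma.1} and the fact that the Reissner--Nordstr\"om metric components are exponential in $u+v$. What remains is to set up the continuity argument carefully on the non-compact region $\mathcal{R}$: one must formulate and invoke a continuation criterion ensuring the solution genuinely exists throughout $\mathcal{R}$ and does not terminate at an interior point (for which the upper and lower bounds on $\Omg$ coming from \eqref{BA.Omg} are the key input), and one must verify the consistency of the hierarchy of constant choices ($D$ from the data, then $\Delta$, then $\de$) so that $\de$ and the final constant depend only on $M$, ${\bf e}$, $s$ and $D$ as claimed.
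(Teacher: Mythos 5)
Your argument is correct and follows the same approach as the paper: verify the bootstrap assumptions \eqref{BA.C0}, \eqref{BA.Omg}, \eqref{BA.RS} on the initial hypersurfaces via Proposition~\ref{initial.bound}, fix $\Delta$ and then $\de$ so that Propositions~\ref{RS.P}--\ref{RS.ruP} strictly improve the bootstrap constants, and close via a continuity argument. The paper's own proof is simply more terse, compressing the nonempty/closed/open discussion (including the truncation to $\{v \le v_0\}$) into the phrase ``a standard continuity argument,'' whereas you spell out those details explicitly.
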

\begin{proof}
By \eqref{initial.horizon}, \eqref{initial.1}, \eqref{initial.horizon.2}, \eqref{initial.1.2} and \eqref{Omg.initial.horizon}, the bootstrap assumptions \eqref{BA.C0}, \eqref{BA.Omg} and \eqref{BA.RS} hold on the initial hypersurfaces. We then choose $\Delta>C_{M,{\bf e},s,D}$, where $C_{M,{\bf e},s,D}$ is the constant appearing in the statement of Proposition \ref{RS.ruP}. After choosing\footnote{Notice that since the choice of $\Delta$ depends only on $M$, ${\bf e}$, $s$ and $D$, we can choose $\de$ to depend only on $M$, ${\bf e}$, $s$ and $D$.} $\de=\de(M,{\bf e},s,D)>0$ sufficiently small,  Propositions \ref{RS.P}, \ref{RS.rvP} and \ref{RS.ruP} then imply that the bootstrap assumptions \eqref{BA.C0}, \eqref{BA.Omg} and \eqref{BA.RS} can in fact be improved and hold with a better constant. A standard continuity argument then shows that the desired estimates hold.
\end{proof}

We now fix a $\de$ so that Theorem~\ref{RS.prop} holds.

\subsection{The blue-shift region}\label{sec.BS}
We now turn to the blue-shift region. In this region, it is convenient to use the $(u,v)$ coordinate system. Recall that $U$ and $u$ are related by \eqref{U.def}. The blue-shift region is given in the $(u,v)$ coordinates by
$$\mathcal B:=\{(u,v) :u+v\geq \f{1}{2\kappa_+}\log (2\kappa_+\de)=:A_1,\, u\leq u_s,\,v\geq 1\}. $$
As mentioned before, we will define a partition $\mathcal B_i$ so that $\mathcal B=(\cup_{i=1}^n \mathcal B_i)\cap \{(u,v): u\leq u_s\}$. Given $\de_B>0$ (which will be chosen below; see Theorem~\ref{BS.main}) and $A_1=\f{1}{2\kappa_+}\log (2\kappa_+\de) \in (-\infty,\infty)$ (which is fixed by Theorem~\ref{RS.prop}), define a sequence $A_1< A_2<...< A_n$ such that 
$$\min\{e^{-\kappa_- A_i}-e^{-\kappa_- A_{i+1}},e^{-2\kappa_- A_i}-e^{-2\kappa_- A_{i+1}}\}=\de_B.$$
We require $n$ to be such that 
$$e^{-2\kappa_- A_n}\leq \de_B.$$
Moreover, we will use the convention that 
$$A_{n+1}=\infty.$$
Define now $\mathcal B_i$ by
$$\mathcal B_i:=\{(u,v):A_i\leq u+v\leq A_{i+1},\,u\leq u_i,\, v\geq 1\},$$
where $u_i$ is a decreasing sequence to be chosen below (see Theorem~\ref{BS.main}).

We now state the following main estimates of this subsection:
\begin{theorem}\label{BS.main}
Consider the following statements
\begin{equation}\label{BS.main.data}
\sup_{u+v=A_i,\, u\leq u_i,\, v\geq 1}|u|^s (|\rd_u\Psi|(u,v)+|\rd_v\Psi|(u,v)+|\Psi|(u,v))\leq B_i,
\end{equation}
and
\begin{equation}\label{BS.main.est}
\begin{split}
\sup_{(u,v)\in\mathcal B_i} \bb( |u|^s(|\rd_u\Psi|(u,v)+|e^{-\kappa_- (v+u)}\Psi|(u,v)) &\\
+(v-A_i)^s|\rd_v\Psi|(u,v)+|u|^{s-1}|\Psi|(u,v) \bb) &\leq C_{M, {\bf e},s,A_1} B_i
\end{split}
\end{equation}
on the hypersurface $\{u+v=A_i\}$ and in the region $\mathcal B_i$ respectively.

Then, given $A_1\in (-\infty,\infty)$ and $B_i\geq 0$, there exist $\de_B=\de_B(M,{\bf e},s,A_1)>0$ sufficiently small\footnote{We emphasize that the choice of $\de_B$ is independent of $B_i$.} and $u_i=u_i(M,{\bf e},s,B_i)$ sufficiently negative such that \eqref{BS.main.data} implies \eqref{BS.main.est} with a constant $C_{M,{\bf e},s,A_1}$ depending only on $M$, ${\bf e}$, $s$ and $A_1$.
\end{theorem}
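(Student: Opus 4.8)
The plan is to prove Theorem~\ref{BS.main} by a continuity/bootstrap argument carried out inside the single strip $\mathcal B_{i}$, taking the hypothesis \eqref{BS.main.data} on the past boundary $\{u+v=A_i\}$ as the seed; the induction on $i$ (chaining the $\mathcal B_{i}$ together with $\mathcal R$) is deferred to Section~\ref{sec.C0}. The geometry of the blue-shift region is governed by the Reissner--Nordstr\"om quantities $\Omg_{RN}^{2}$, $\rd_u r_{RN}$, $\rd_v r_{RN}$, all of which are $\lesssim e^{-2\kappa_-(u+v)}$ on $\{u+v\geq A_1\}$ and comparable to it near $\CH$ by \eqref{eq:Omg-CH}. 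The decisive feature of the partition $\{A_i\}$ is that for any $v$-slab or $u$-slab inside $\mathcal B_{i}$ one has $\int e^{-2\kappa_-(u'+v)}\,du'\leq C_{M,{\bf e}}\,\delta_B$ and $\int e^{-\kappa_-(u+v')}\,dv'\leq C_{M,{\bf e}}\,\delta_B$, \emph{uniformly in $i$}; this is exactly how the $A_i$ were chosen (via the two quantities $e^{-\kappa_-A_i}-e^{-\kappa_-A_{i+1}}$ and $e^{-2\kappa_-A_i}-e^{-2\kappa_-A_{i+1}}$), and it is what will make the constant in \eqref{BS.main.est} independent of $i$ and of $B_i$.

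First I would set up bootstrap assumptions in $\mathcal B_{i}$: $|\log\Omg-\log\Omg_{RN}|\leq\tfrac1{100}$ (so that $\Omg^{2}\sim\Omg_{RN}^{2}\sim e^{-2\kappa_-(u+v)}$), together with the weighted bounds on $\rd_u\Psi$, $\rd_v\Psi$, $\Psi$ and the degenerate quantity $e^{-\kappa_-(u+v)}\Psi$ appearing on the left of \eqref{BS.main.est}, but with a large constant $\mathcal K B_i$ to be improved. Subtracting the Reissner--Nordstr\"om equations from \eqref{WW.SS} (equivalently \eqref{eqn.Ray}) and using the bootstrap assumptions gives a schematic equation $|\rd_u\rd_v\Psi|\lesssim e^{-2\kappa_-(u+v)}(|\Psi|+|\rd_u\Psi|+|\rd_v\Psi|)+|\rd_u\Psi|\,|\rd_v\Psi|$ throughout $\mathcal B_{i}$. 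I would then recover the bootstrap assumptions in order: (i) close $|\log\Omg-\log\Omg_{RN}|\leq\tfrac1{100}$ by integrating the corresponding component of $\rd_v\Psi$ in $v$ from $\{u+v=A_i\}$, using that $u_i$ can be taken negative enough (depending on $B_i$) to make the boundary value $\lesssim B_i|u|^{-s}$ small; (ii) estimate $\rd_u\Psi$ by integrating the schematic equation in $v$ from $\{u+v=A_i\}$, where the linear sources integrated against $e^{-2\kappa_-(u+v)}$ pick up a factor $C\delta_B$ and are absorbed once $\delta_B=\delta_B(M,{\bf e},s,A_1)$ is small, the $\rd_u\Psi$-self-interaction is handled by Gr\"onwall with the same small coefficient, and the nonlinearity $|\rd_u\Psi||\rd_v\Psi|$ is absorbed since its $v$-integral carries an extra factor $\mathcal KB_i|u|^{1-s}$, small for $u\leq u_i$; (iii) estimate $\rd_v\Psi$ by integrating in $u$ from the past-boundary point $(A_i-v,v)$, which lies in the data range because $v\geq A_i-u_i$ on $\mathcal B_{i}$, again using $\int e^{-2\kappa_-(u'+v)}\,du'\leq C\delta_B$; (iv) estimate $\Psi$ with the weaker weight $|u|^{s-1}$ and the degenerate $e^{-\kappa_-(u+v)}\Psi$ with the full weight $|u|^{s}$ by integrating the $\rd_v\Psi$-bound in $v$, the point being that $e^{-\kappa_-(u+v)}$ is exponentially small and its $v$-integral essentially returns the lower-endpoint value $e^{-\kappa_-A_i}$ (absorbed into $C_{M,{\bf e},s,A_1}$ since $A_i\geq A_1$), whereas $\Psi$ itself necessarily loses one power. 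Integrals of the type $\int e^{-2\kappa_-(u+v')}(v'-A_i)^{-s}\,dv'$ are handled by the elementary bound of Lemma~\ref{lemma.1} (and its $\kappa_-$-weighted analogue), which is precisely what ensures there is no loss in the polynomial power $s$. A standard continuity argument then upgrades $\mathcal KB_i$ to $C_{M,{\bf e},s,A_1}B_i$, exactly as in the proof of Theorem~\ref{RS.prop}.

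The main obstacle will be the weight bookkeeping in steps (ii)--(iv): one has to verify that each integration reproduces \emph{exactly} the weights $|u|^{-s}$, $(v-A_i)^{-s}$, $|u|^{-(s-1)}$ of \eqref{BS.main.est} with constants that do not degenerate as $i$ ranges over $1,\dots,n$, and in particular that the semi-infinite last strip $\mathcal B_{n}$ (where $e^{-2\kappa_-(u+v)}\leq\delta_B$ is uniformly small) is treated on the same footing as the finite strips. This is where the precise choice of the $A_i$ and the device of estimating $e^{-\kappa_-(u+v)}\Psi$ rather than $\Psi$ are essential; the nonlinear terms, by contrast, are comparatively harmless, since the additional smallness needed to absorb them is bought simply by taking $u_i$ sufficiently negative (at the cost of $u_i$ depending on $B_i$, which is allowed).
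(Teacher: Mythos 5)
Your proposal correctly identifies the macro-strategy: a single-strip bootstrap on $\mathcal B_i$, seeded by \eqref{BS.main.data}, closed via Gr\"onwall using the smallness $\int_{\mathcal B_i} e^{-2\kappa_-(u+v')}\,dv',\ \int_{\mathcal B_i} e^{-2\kappa_-(u'+v)}\,du' \lesssim \delta_B$ encoded in the partition, and with the nonlinearity absorbed by taking $u_i$ negative. But there is a genuine circularity in your ordering (ii)--(iv) that the smallness of $\delta_B$ does not resolve. When you estimate $\rd_v\Psi$ in step (iii) by integrating in $u$, the linear source $\int_{-v+A_i}^{u} e^{-2\kappa_-(v+u')}\abs{\Psi}\,du'$ \emph{cannot} be controlled by a factor $C\delta_B$ times a decaying weight; writing it as $\int e^{-\kappa_-(v+u')}\cdot\bigl(e^{-\kappa_-(v+u')}\abs{\Psi}\bigr)\,du'$ and invoking Lemma~\ref{lemma.2} produces (with your bootstrap constant $\mathcal K$ on $e^{-\kappa_-(u+v)}\Psi$) a contribution of size $C\mathcal K\,e^{-\kappa_-A_i}B_i(v-A_i)^{-s}$. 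You then plan to recover $e^{-\kappa_-(u+v)}\Psi$ in step (iv) \emph{from} $\rd_v\Psi$, which feeds the $\rd_v\Psi$ bootstrap constant back with another factor $e^{-\kappa_-A_i}$. Tracing through, the two closing inequalities read $\mathcal K_v \gtrsim 1 + \mathcal K_\Psi e^{-2\kappa_-A_1}$ and $\mathcal K_\Psi \gtrsim 1 + \mathcal K_v$, which admit a solution only when $e^{-2\kappa_-A_1}$ is small enough -- but $A_1 = \tfrac{1}{2\kappa_+}\log(2\kappa_+\delta)$ is very negative precisely in the regime where the red-shift stability analysis needs $\delta$ small, so this is not a restriction one can impose. (Also note that in step (ii), bounding the $\Psi$ source by the weaker bootstrap $\abs{\Psi}\lesssim B_i\abs{u}^{-s+1}$ gives $\delta_B \abs{u}^{-s+1}$ rather than $\abs{u}^{-s}$, which diverges relative to the target for $\abs{u}$ large; so there too you are forced to rely on the full-weight $e^{-\kappa_-(u+v)}\Psi$ bound, exposing the same circularity.)

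The paper breaks this cycle in Proposition~\ref{BS.P}: the weighted zeroth-order bound $\bigl|e^{-\kappa_-(u+v)}\Psi\bigr| \leq C_{M,\mathbf e,s}\,B_i\, e^{-\kappa_-A_i}\abs{u}^{-s}$ is derived \emph{before} any estimate on $\rd_v\Psi$, using only the $\rd_u\Psi$ bootstrap \eqref{BA.1}. The key device, which your proposal is missing, is the damping identity
$\tfrac12\rd_u\bigl(e^{-\kappa_-(u+v)}\Psi\bigr)^2 = e^{-2\kappa_-(u+v)}\Psi\,\rd_u\Psi - \kappa_-\bigl(e^{-\kappa_-(u+v)}\Psi\bigr)^2$,
in which the second term is favorably signed and is dropped; dividing by $\sup e^{-\kappa_-(u+v')}\abs{\Psi}$ then reduces the estimate to $\int_{-v+A_i}^u e^{-\kappa_-(v+u')}\abs{\rd_u\Psi}\,du'$, which Lemma~\ref{lemma.2} bounds by $C B_i e^{-\kappa_-A_i}(v-A_i)^{-s}$. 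Note that this gains a power over naive $u$-integration of $\rd_u\Psi$ (which only returns $\abs{u}^{-s+1}$): the exponential factor inside the integral localizes it near $u' = -v+A_i$, where $\abs{u'}^{-s}\sim(v-A_i)^{-s}$. Because the resulting constant $C_{M,\mathbf e,s}$ carries no bootstrap constant and no $A_1$-dependence, the ensuing $e^{-2\kappa_-A_i}\leq e^{-2\kappa_-A_1}$ factors in the $\rd_v\Psi$ and $\rd_u\Psi$ estimates are harmlessly absorbed into the final $C_{M,\mathbf e,s,A_1}$. With this change, the remaining steps in your plan (Gr\"onwall with small coefficient $C\delta_B$ for the first-derivative self-interactions, smallness of the nonlinearity via $u_i$ negative, and the one-power loss for $\Psi$ itself in Proposition~\ref{Psi.C0}) proceed as you describe and match the paper.
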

In order to prove the estimates, we make the following bootstrap assumptions:
\begin{equation}\label{BS.C0}
\sup_{(u,v)\in \mathcal B_i}|\Psi|(u,v)\leq \f{1}{100}
\end{equation}
and
\begin{equation}\label{BA.1}
\sup_{(u,v)\in \mathcal B_i}|u|^s|\rd_u\Psi|(u,v)\leq 2B_i.
\end{equation}
The bound \eqref{BA.1} obviously holds initially by \eqref{BS.main.data}. Notice also that by \eqref{BS.main.data}, \eqref{BS.C0} holds initially on $\{(u,v):u+v=A_i,\,u\leq u_i,\,v\geq 1\}$ if $u_i$ is chosen to be sufficiently negative.

The equations for $\Psi$ together with \eqref{BS.C0} imply that
\begin{equation}\label{BS.eqn}
|\rd_u\rd_v\Psi|\leq C_{M,{\bf e},A_1} \bb( e^{-2\kappa_-(v+u)} (|\Psi|+|\rd_v\Psi|+|\rd_u\Psi|)+|\rd_u\Psi\rd_v\Psi| \bb).
\end{equation}
Here, we have used an analogue of \eqref{Omg.diff.bound} which holds in this setting due to \eqref{BS.C0}.

Our goal is to show that under the bootstrap assumptions \eqref{BS.C0} and \eqref{BA.1}, we can use the equation \eqref{BS.eqn} to control $\Psi$ and its derivatives. Before we proceed to the estimates, we need a calculus lemma, which can be viewed as an analogue of Lemma \ref{lemma.1} in the blue-shift region:
\begin{lemma}\label{lemma.2}
For $1\leq \alp\leq 2$ and $u \leq -1$, we have
$$\int_{-v+A_i}^u e^{-\alp\kappa_-(v+u')}|u'|^{-s}\, du'\leq Ce^{-\alpha\kappa_- A_i}(v-A_i)^{-s},$$
where the constant $C$ depends on $M$, ${\bf e}$ and $s$.
\end{lemma}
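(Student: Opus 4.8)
The plan is to reduce the two-variable integral to an elementary one-variable estimate by a change of variables, and then split the domain of integration according to which of the two competing weights dominates. First I would substitute $w = v + u'$; since on the range of integration $u' \leq u \leq -1 < 0$, we have $|u'| = -u' = v - w$, so that
\[
	\int_{-v+A_i}^{u} e^{-\alp\kappa_-(v+u')}|u'|^{-s}\,du' = \int_{A_i}^{v+u} e^{-\alp\kappa_- w}(v-w)^{-s}\,dw .
\]
After the further shift $x = w - A_i$ and writing $T := v - A_i$ (note that in the region of interest $u + v \geq A_i$, so $T \geq |u| \geq 1$), this becomes $e^{-\alp\kappa_- A_i}\int_0^{T-|u|} e^{-\alp\kappa_- x}(T-x)^{-s}\,dx$, so it suffices to prove
\[
	\int_0^{T-|u|} e^{-\alp\kappa_- x}(T-x)^{-s}\,dx \leq C(M,{\bf e},s)\, T^{-s} .
\]

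For this I would split the $x$-domain at $x = T/2$. On $[0,\min\{T/2,\, T-|u|\}]$ one has $T - x \geq T/2$, hence $(T-x)^{-s} \leq 2^{s} T^{-s}$, while $\int_0^{T/2} e^{-\alp\kappa_- x}\,dx \leq \frac{1}{\alp\kappa_-} \leq \frac{1}{\kappa_-}$ (using $\alp \geq 1$); so this part contributes at most $\frac{2^{s}}{\kappa_-} T^{-s}$. If $T - |u| > T/2$, on the remaining interval $[T/2, T-|u|]$, whose length is at most $T/2$, we bound $e^{-\alp\kappa_- x} \leq e^{-\kappa_- T/2}$ and $(T-x)^{-s} \leq |u|^{-s} \leq 1$, so this part is at most $\frac{T}{2} e^{-\kappa_- T/2} = \frac{1}{2} T^{s+1} e^{-\kappa_- T/2}\cdot T^{-s}$; since $\sup_{T \geq 1} T^{s+1} e^{-\kappa_- T/2}$ is a finite constant depending only on $\kappa_-$ and $s$ (an exponential beats any polynomial), this is $\leq C(M,{\bf e},s) T^{-s}$. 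Adding the two contributions and undoing the substitutions gives the lemma, with a constant uniform in $\alp \in [1,2]$.

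I do not expect any genuine obstacle: this is a routine calculus estimate, entirely parallel to Lemma~\ref{lemma.1}, the essential structural point being that $e^{-\alp\kappa_-(v+u')}$ attains its maximum exactly on the slice $\{u'+v = A_i\}$, which produces the prefactor $e^{-\alp\kappa_- A_i}$, while the polynomial weight $|u'|^{-s}$ is largest at the opposite end of the range but is there killed by exponential smallness. The only minor care needed is to record that $|u'| = v - w \geq |u| \geq 1$ throughout the range (so the polynomial weight is bounded on the far piece), and to treat separately the degenerate case $T - |u| \leq T/2$, in which the far piece is empty and the near-piece bound alone suffices.
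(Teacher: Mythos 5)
Your proof is correct; the change of variables and the midpoint split at $x = T/2$ are clean and the bookkeeping (uniformity in $\alp \in [1,2]$ via $\alp \geq 1$, the constraint $T \geq |u| \geq 1$ coming from $u+v \geq A_i$) is handled properly. The approach is, however, different from the paper's. The paper first performs a single integration by parts to peel off the boundary term $\frac{1}{\alp\kappa_-}e^{-\alp\kappa_- A_i}(v-A_i)^{-s}$ (the term at $u'=u$ having a favorable sign), leaving an error integral with the improved weight $|u'|^{-s-1}$; it then splits that residual integral at $u' = -v + A_i + \frac{s}{\alp\kappa_-}\log(v-A_i)$ so that on the ``far'' subinterval the exponential is $\leq e^{-\alp\kappa_- A_i}(v-A_i)^{-s}$ and on the ``near'' subinterval the interval length $\sim \log(v-A_i)$ is absorbed into a polynomial gain. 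Your version dispenses with integration by parts entirely and uses a cruder split at the midpoint, closing the far piece with the elementary fact that $T^{s+1}e^{-\kappa_- T/2}$ is bounded on $T \geq 1$. What your route buys is transparency and uniformity (no need to track how the logarithmic split point interacts with the endpoint $u$, which the paper handles via a footnote); what the paper's route buys is that it is a verbatim adaptation of the argument already laid out for Lemma~\ref{lemma.1}, keeping the two lemmas structurally parallel. Both yield the same constant-dependence.
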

\begin{proof}
Integrating by parts, we obtain\footnote{Notice that the boundary term at $u'=u$ has a good sign.}
\begin{equation*}
\begin{split}
&\int_{-v+A_i}^u e^{-\alp\kappa_-(v+u')}|u'|^{-s}\, du' \\
\leq & \f 1{\alp\kappa_-} e^{-\alp\kappa_-A_i}(v-A_i)^{-s}+\f{s}{\alp\kappa_-} \int_{-v+A_i}^u e^{-\alp\kappa_-(v+u')}|u'|^{-s-1}\, du'.
\end{split}
\end{equation*}
The second term can be treated as an error term as follows\footnote{In the case $u\leq -v+A_i+\f s{\alp\kappa_-} \log (v-A_i)$, we can ignore the first integral below and obtain a better bound.}:
\begin{equation*}
\begin{split}
&\int_{-v+A_i}^u e^{-\alp\kappa_-(v+u')}|u'|^{-s-1}\, du' \\
= & \int_{-v+A_i+\f s{\alp\kappa_-} \log (v-A_i)}^u e^{-\alp\kappa_-(v+u')}|u'|^{-s-1}\, du'+\int_{-v+A_i}^{-v+A_i+\f s{\alp\kappa_-} \log (v-A_i)} e^{-\alp\kappa_-(v+u')}|u'|^{-s-1}\, du'\\
\leq & \f{1}{s} e^{-\alp\kappa_-A_i}(v-A_i)^{-s}|u|^{-s}+\f s{\alp\kappa_-} e^{-\alp\kappa_- A_i} (v-A_i)^{-s-1} \log (v-A_i) .
\end{split}
\end{equation*}
Combining the bounds above and using $u+v\geq A_i$ give the desired conclusion.
\end{proof}

We now turn to the estimates for $\Psi$ and its derivatives. Using Lemma \ref{lemma.2}, the bootstrap assumption \eqref{BA.1} immediately implies the following bound on $\Psi$:
\begin{proposition}\label{BS.P}
Under the bootstrap assumptions \eqref{BS.C0} and \eqref{BA.1}, there exist $\de_B=\de_B(M,{\bf e},s,A_1)>0$ sufficiently small and $u_i=u_i(M,{\bf e},s,B_i)$ sufficiently negative such that the following estimate holds for $(u,v)\in \mathcal B_i$:
$$|e^{-\kappa_-(v+u)}\Psi|(u,v)\leq C_{M,{\bf e},s} B_i e^{-\kappa_- A_i} |u|^{-s}.$$
\end{proposition}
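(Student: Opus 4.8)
The plan is to obtain the bound purely by integrating $\rd_u\Psi$ along the constant-$v$ ray back to the hypersurface $\{u+v=A_i\}$ that bounds $\mathcal B_i$ to the past; the wave equation \eqref{BS.eqn} is not needed for this step. Given $(u,v)\in\mathcal B_i$, note that $u+v\geq A_i$ and $u\leq u_i<0$ force $v>A_i$, and that the point $(A_i-v,v)$ has $u$-coordinate $A_i-v\leq u\leq u_i$; moreover $v\geq A_i-u\geq A_i-u_i$, so once $u_i$ is chosen negative enough that $A_i-u_i\geq1$ we also have $v\geq1$, and $(A_i-v,v)$ lies on the portion $\{u+v=A_i,\ u\leq u_i,\ v\geq1\}$ of the hypersurface on which the data bound \eqref{BS.main.data} is assumed. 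Then
\begin{equation*}
	e^{-\kappa_-(v+u)}\Psi(u,v)=e^{-\kappa_-(v+u)}\Psi(A_i-v,v)+e^{-\kappa_-(v+u)}\int_{A_i-v}^u\rd_u\Psi(u',v)\,\ud u'.
\end{equation*}

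For the boundary term, \eqref{BS.main.data} gives $|\Psi|(A_i-v,v)\leq B_i|A_i-v|^{-s}=B_i(v-A_i)^{-s}$; since $u+v\geq A_i$ we have both $e^{-\kappa_-(v+u)}\leq e^{-\kappa_- A_i}$ and $v-A_i\geq-u=|u|$, hence this term is at most $B_i e^{-\kappa_- A_i}|u|^{-s}$. For the integral term, the bootstrap assumption \eqref{BA.1} gives $|\rd_u\Psi|(u',v)\leq 2B_i|u'|^{-s}$, and on $u'\in[A_i-v,u]$ one has $|u'|\geq|u|$, so $|u'|^{-s}\leq|u|^{-s}$ and the integral is bounded by $2B_i|u|^{-s}(u+v-A_i)$. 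Writing $x=u+v-A_i\geq0$, one has $e^{-\kappa_-(v+u)}(u+v-A_i)=e^{-\kappa_- A_i}\,xe^{-\kappa_- x}\leq(e\kappa_-)^{-1}e^{-\kappa_- A_i}$, so the integral term is at most $\frac{2}{e\kappa_-}B_i e^{-\kappa_- A_i}|u|^{-s}$. Adding the two contributions gives the claim with $C_{M,{\bf e},s}=1+\frac{2}{e\kappa_-}$.

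I do not expect a genuine obstacle in this proposition. The one point needing care — and the reason one estimates the degenerate weighted quantity $e^{-\kappa_-(v+u)}\Psi$ rather than $\Psi$ itself — is that a naive estimate of $\int\rd_u\Psi$ loses a power of $|u|$, which is recovered precisely because the exponential weight absorbs the length $u+v-A_i$ of the $u$-integration, leaving the sharp decay $|u|^{-s}$. This is the blue-shift counterpart of the role played by Lemma~\ref{lemma.1} in the red-shift region, and is where the elementary fact $\sup_{x\geq0}xe^{-\kappa_- x}=(e\kappa_-)^{-1}$ (a companion to Lemma~\ref{lemma.2}) enters. For this particular bound only $u_i$ sufficiently negative (ensuring $A_i-u_i\geq1$) is used; smallness of $\de_B$, which controls the width $A_{i+1}-A_i$ of $\mathcal B_i$, and further negativity of $u_i$ depending on $B_i$ are needed only downstream, e.g.\ to recover the bootstrap assumptions \eqref{BS.C0} and \eqref{BA.1} from the resulting estimates.
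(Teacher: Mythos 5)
Your proof is correct, and it takes a genuinely different and more elementary route than the paper's. The paper's argument is more roundabout: rather than integrating $\rd_u\Psi$ directly, it considers $\rd_u\big((e^{-\kappa_-(v+u)}\Psi)^2\big) = 2e^{-2\kappa_-(v+u)}\Psi\rd_u\Psi - 2\kappa_-(e^{-\kappa_-(v+u)}\Psi)^2$, drops the second term because it has a good sign, then divides by $\sup_{u'}e^{-\kappa_-(v+u')}|\Psi|(u',v)$ (with a case split recorded in a footnote) to descend from the squared quantity back to the linear one; it then bounds the resulting integral $\int e^{-\kappa_-(v+u')}|\rd_u\Psi|\,du'$ using Lemma~\ref{lemma.2} with $\alp=1$, obtaining the intermediate rate $(v-A_i)^{-s}$ before relaxing to $|u|^{-s}$. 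The motivation for the squaring there is that $\rd_u(e^{-\kappa_-(v+u)}\Psi)$ contains a $-\kappa_-e^{-\kappa_-(v+u)}\Psi$ term that one cannot simply estimate without a Gr\"onwall-type loop; squaring converts that term into a definite-sign contribution that can be discarded. You sidestep the whole issue by applying the fundamental theorem of calculus to $\Psi$ \emph{before} attaching the weight $e^{-\kappa_-(v+u)}$, so you never need to differentiate the weighted quantity, and you replace Lemma~\ref{lemma.2} with the cruder but sufficient bound $|u'|^{-s}\leq|u|^{-s}$ plus $\sup_{x\geq0}xe^{-\kappa_-x}=(e\kappa_-)^{-1}$. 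Your resulting constant $1+\f{2}{e\kappa_-}$ does not depend on $s$ (the paper's does, via Lemma~\ref{lemma.2}), and the intermediate $(v-A_i)^{-s}$ gain of the paper's Lemma~\ref{lemma.2} is not needed for this proposition. One small point: your remark that $u_i$ must be taken negative enough so that $A_i-u_i\geq1$ is superfluous, since $v\geq1$ is built into the definition of $\mathcal B_i$; as you otherwise correctly observe, this particular proposition needs neither smallness of $\de_B$ nor $u_i$-dependence on $B_i$.
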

\begin{proof}
Using the bootstrap assumption \eqref{BA.1} and Lemma \ref{lemma.2}, we first derive the following bound on a weighted integral of $|\rd_u\Psi|$:
\begin{equation}\label{BS.P.1}
\begin{split}
\int_{-v+A_i}^u e^{-\kappa_-(v+u')}|\rd_u\Psi|(u',v)\, du'\leq& 2 B_i\int_{-v+A_i}^u e^{-\kappa_-(v+u')}|u'|^{-s}\, du'\\
\leq & C_{M,{\bf e},s}B_i e^{-\kappa_- A_i} (v-A_i)^{-s}\leq C_{M,{\bf e},s}B_i e^{-\kappa_- A_i}|u|^{-s}.
\end{split}
\end{equation}
Now, to estimate $e^{-\kappa_-(v+u)}|\Psi|$, notice that
$$\f 12\rd_u(e^{-\kappa_-(v+u)}\Psi)^2=e^{-2\kappa_-(v+u)}\Psi\rd_u\Psi-\kappa_-(e^{-\kappa_- (v+u)}\Psi)^2.$$
The second term has a good sign and therefore by integrating along a constant $v$ curve, we obtain
$$(e^{-\kappa_-(v+u)}\Psi)^2 (u,v)\leq e^{-2\kappa_- A_i}B_i^2(v-A_i)^{-2s}+C\int_{-v+A_i}^u e^{-2\kappa_-(v+u')}|\Psi\rd_u\Psi | (u',v) du'.$$
This implies, after dividing by\footnote{Note that we only apply this division in the case $\sup_{u'\in [v+A_i,u]}e^{-\kappa_-(v+u')}|\Psi| (u',v)\geq e^{-\kappa_- A_i}B_i(v-A_i)^{-s}$ since the estimate is trivial otherwise.} $\sup_{u'\in [v+A_i,u]}e^{-\kappa_-(v+u')}|\Psi| (u',v)$ on both sides, that
\begin{equation}\label{BS.P.2}
e^{-\kappa_-(v+u)}|\Psi| (u,v)\leq Ce^{-\kappa_- A_i}B_i|u|^{-s}+C\int_{-v+A_i}^u e^{-\kappa_-(v+u')}|\rd_u\Psi(u',v)| du'.
\end{equation}
The conclusion follows from combining \eqref{BS.P.1} and \eqref{BS.P.2}.
\end{proof}
We then turn to the estimate for $\rd_v\Psi$:
\begin{proposition}\label{BS.rvP}
Under the bootstrap assumptions \eqref{BS.C0} and \eqref{BA.1}, there exist $\de_B=\de_B(M,{\bf e},s,A_1)>0$ sufficiently small and $u_i=u_i(M,{\bf e},s,B_i)$ sufficiently negative such that the following estimate holds for $(u,v)\in\mathcal B_i$:
$$|\rd_v\Psi|(u,v)\leq C_{M,{\bf e},s,A_1}B_i(v-A_i)^{-s}.$$
\end{proposition}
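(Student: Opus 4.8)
The plan is to integrate the schematic equation \eqref{BS.eqn} for $\rd_u \rd_v \Psi$ along the constant-$v$ curves of $\mathcal B_i$, starting from the incoming boundary $\{u + v = A_i\}$. On that boundary one has $|u| = v - A_i$, so the hypothesis \eqref{BS.main.data} supplies the initial bound $|\rd_v \Psi|(A_i - v, v) \le B_i (v - A_i)^{-s}$; the backward segment from a point $(u,v) \in \mathcal B_i$ stays in $\mathcal B_i$ (where the bootstrap assumptions \eqref{BS.C0} and \eqref{BA.1} hold) and reaches that boundary. Integrating \eqref{BS.eqn} in $u'$ therefore yields, for $(u,v) \in \mathcal B_i$,
\begin{equation*}
	|\rd_v \Psi|(u,v) \le B_i (v - A_i)^{-s} + C_{M, {\bf e}, A_1} \int_{A_i - v}^{u} \left( e^{-2 \kappa_- (v + u')} (|\Psi| + |\rd_v \Psi| + |\rd_u \Psi|) + |\rd_u \Psi| |\rd_v \Psi| \right)(u', v) \, d u'.
\end{equation*}

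The contributions of $|\Psi|$ and $|\rd_u \Psi|$ are genuine source terms. For the first we write $e^{-2\kappa_-(v+u')}|\Psi| = e^{-\kappa_-(v+u')} \cdot \big(e^{-\kappa_-(v+u')}|\Psi|\big)$ and insert Proposition~\ref{BS.P}; for the second we use the bootstrap assumption \eqref{BA.1}. In both cases the calculus estimate Lemma~\ref{lemma.2} (with $\alp = 1$, resp.\ $\alp = 2$) carries out the $u'$-integration \emph{with no loss in the polynomial weight $s$}, producing a bound $\le C B_i e^{-2\kappa_- A_i} (v - A_i)^{-s}$. Since $A_i \ge A_1$, the prefactor $e^{-2\kappa_- A_i}$ is dominated by $e^{-2\kappa_- A_1}$, a constant depending only on $M, {\bf e}, A_1$; hence these terms are $\le C_{M, {\bf e}, s, A_1} B_i (v - A_i)^{-s}$. (Lemma~\ref{lemma.2} plays here the role that Lemma~\ref{lemma.1} played in the red-shift region.)

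The two remaining terms are linear in $\rd_v \Psi$ and are handled by absorption. Write $X_i := \sup_{\mathcal B_i} (v - A_i)^{s} |\rd_v \Psi|$. Using $|\rd_v \Psi|(u', v) \le X_i (v - A_i)^{-s}$ along the segment (on which $v$ is fixed), the term $e^{-2\kappa_-(v+u')}|\rd_v\Psi|$ contributes at most
\begin{equation*}
	C_{M, {\bf e}, A_1} X_i (v - A_i)^{-s} \int_{A_i - v}^{u} e^{-2\kappa_-(v+u')} \, d u' \le C_{M, {\bf e}, A_1} \left( e^{-2\kappa_- A_i} - e^{-2\kappa_- A_{i+1}} \right) X_i (v - A_i)^{-s} ,
\end{equation*}
and the partition is designed precisely so that $e^{-2\kappa_- A_i} - e^{-2\kappa_- A_{i+1}} \le C_{M, {\bf e}, A_1} \de_B$ (it is here that $\de_B$ must be allowed to depend on $A_1$), so this term is $\le \frac{1}{4} X_i (v - A_i)^{-s}$ once $\de_B = \de_B(M, {\bf e}, s, A_1)$ is small. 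Similarly, by \eqref{BA.1} and $|u'| \ge |u_i|$ on $\mathcal B_i$ together with $s > 1$, the quadratic term $|\rd_u\Psi||\rd_v\Psi|$ is $\le C_{M, {\bf e}, s, A_1} B_i |u_i|^{1-s} X_i (v - A_i)^{-s} \le \frac{1}{4} X_i (v - A_i)^{-s}$ once $u_i = u_i(M, {\bf e}, s, B_i)$ is taken sufficiently negative. Multiplying through by $(v - A_i)^{s}$, taking the supremum over $\mathcal B_i$, and running a standard continuity argument to absorb the resulting $\frac{1}{2} X_i$, one gets $X_i \le C_{M, {\bf e}, s, A_1} B_i$, which is the claim.

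The reason $\mathcal B$ must be subdivided at all — and the only genuinely delicate point — is that a single estimate across all of $\mathcal B$ would involve the weight $\int e^{-2\kappa_-(v+u)} \, d u \sim e^{-2\kappa_- A_1}$, which is \emph{not} small, so Gr\"onwall would lose an uncontrollable exponential factor; cutting $\mathcal B$ into the pieces $\mathcal B_i$ makes the weight over each piece at most $C_{M,{\bf e},A_1}\de_B$ and hence negligible, at the harmless cost of a constant that may degrade from one piece to the next during the later assembly. The bookkeeping one must be careful about is to keep the two independent sources of smallness separate — $\de_B$ (depending only on $M, {\bf e}, s, A_1$, never on $B_i$ or $i$) for the linear term, and $|u_i|^{-1}$ (allowed to depend on $B_i$) for the quadratic term — so that the final constant in \eqref{BS.main.est} depends only on $M, {\bf e}, s, A_1$.
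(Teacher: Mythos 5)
Your proof is correct and follows essentially the same route as the paper: integrate \eqref{BS.eqn} along constant-$v$ characteristics from $\{u+v=A_i\}$, control the $|\Psi|$ and $|\rd_u\Psi|$ contributions as sources via Proposition~\ref{BS.P}, \eqref{BA.1}, and Lemma~\ref{lemma.2} (with $\alp=1$ and $\alp=2$ respectively), and then use $\de_B$-smallness and $|u_i|^{1-s}$-smallness for the linear and quadratic terms in $\rd_v\Psi$. The only difference is that the paper closes the last step by applying Gr\"onwall's inequality along the $u'$-segment (with data $\le C_{M,\bfe,s,A_1}B_i(v-A_i)^{-s}$ and exponent $C_{M,\bfe}\de_B + C_s B_i|u_i|^{-s+1}$), whereas you introduce the global quantity $X_i = \sup_{\mathcal B_i}(v-A_i)^{s}|\rd_v\Psi|$ and absorb $\tfrac12 X_i$; these are the same estimate in the end, but Gr\"onwall sidesteps the need to argue (via truncation or a bootstrap-in-$X_i$ continuity argument, which you gesture at somewhat briefly) that $X_i < \infty$ a priori over the possibly unbounded region $\mathcal B_i$.
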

\begin{proof}
Integrating the equation \eqref{BS.eqn} and using the initial data bound \eqref{BS.main.data}, we have
\begin{equation}\label{BS.rvP.1}
\begin{split}
&|\rd_v\Psi|(u,v)\leq |\rd_v\Psi|(-v+A_i,v)+\int_{-v+A_i}^u |\rd_u\rd_v\Psi| du'\\
\leq &B_i(v-A_i)^{-s}+C_{M,{\bf e},A_1}\int_{-v+A_i}^u \left(e^{-2\kappa_-(v+u')}(|\rd_u\Psi|+|\rd_v\Psi|+|\Psi|)+|\rd_u\Psi||\rd_v\Psi|\right)(u',v) du'.
\end{split}
\end{equation}
Combining the bootstrap assumption \eqref{BA.1} and Lemma \ref{lemma.2} with $\alp=2$ , we get
$$\int_{-v+A_i}^u e^{-2\kappa_-(v+u')}|\rd_u\Psi|(u',v) du' \leq C_{M,{\bf e},s} B_i e^{-2\kappa_- A_i} (v-A_i)^{-s}.$$
Similarly, using the bound in Proposition \ref{BS.P} together with Lemma \ref{lemma.2} with $\alp=1$, we obtain
$$\int_{-v+A_i}^u e^{-2\kappa_-(v+u')}|\Psi|(u',v) du' \leq C_{M,{\bf e},s} B_i e^{-2\kappa_- A_i} (v-A_i)^{-s}.$$
Therefore, we have
\begin{equation}\label{BS.rvP.2}
|\rd_v\Psi|(u,v)\leq C_{M,{\bf e},s,A_1}B_i(v-A_i)^{-s}+C_{M,{\bf e},A_1}\int_{-v+A_i}^u \big(e^{-2\kappa_-(v+u')}|\rd_v\Psi|+|\rd_u\Psi||\rd_v\Psi|\big)(u',v) du'.
\end{equation}
To conclude, notice that
$$\int_{-v+A_i}^u e^{-2\kappa_-(v+u')} du'\leq C_{M,{\bf e}}(e^{-2\kappa_i A_i}-e^{-2\kappa_i A_{i+1}})\leq C_{M,{\bf e}}\delta_B$$
and
$$\int_{-v+A_i}^u |\rd_u\Psi|(u',v) du'\leq C_s B_i|u|^{-s+1}\leq C_s B_i|u_i|^{-s+1}.$$
Therefore, applying Gr\"onwall's inequality to \eqref{BS.rvP.2}, we obtain
$$|\rd_v\Psi|(u,v)\leq C_{M,{\bf e},s,A_1}B_i(v-A_i)^{-s} e^{C_{M,{\bf e}}\delta_B+C_sB_i|u_i|^{-s+1}}\leq C_{M,{\bf e},s,A_1}B_i(v-A_i)^{-s},$$
where in the last inequality, we have chosen $\delta_B=\delta_B(M,{\bf e},s)$ to be sufficiently small and $u_i=u_i(M,{\bf e},s,B_i)$ to be sufficiently negative such that $e^{C_{M,{\bf e}}\delta_B+C_sB_i|u_i|^{-s+1}}\leq 2$.
\end{proof}
This allows us to estimate $\rd_u\Psi$ and recover the bootstrap assumption \eqref{BA.1}.
\begin{proposition}\label{BS.ruP}
Under the bootstrap assumptions \eqref{BS.C0} and \eqref{BA.1}, there exist $\de_B=\de_B(M,{\bf e},s,A_1)>0$ sufficiently small and $u_i=u_i(M,{\bf e},s,B_i)$ sufficiently negative such that the following estimate holds in $\mathcal B_i$:
$$|\rd_u\Psi|(u,v)\leq \f 32 B_i |u|^{-s}.$$
\end{proposition}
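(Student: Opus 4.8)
The plan is to run the same scheme as in the red-shift estimate (Proposition~\ref{RS.ruP}), now in the blue-shift region: I would integrate the schematic wave inequality \eqref{BS.eqn} for $\Psi$ in the $v$-direction along each constant-$u$ slice of $\mathcal B_i$, starting from the incoming null segment $\{u+v=A_i\}$, on which the data bound \eqref{BS.main.data} is available. Concretely, fixing $(u,v)\in\mathcal B_i$ with $u\leq\min\{u_i,u_s\}$ and integrating $\rd_v(\rd_u\Psi)$ from $v'=A_i-u$ (the point where the slice meets $\{u+v=A_i\}$) to $v'=v$,
\begin{equation*}
|\rd_u\Psi|(u,v)\leq|\rd_u\Psi|(u,A_i-u)+\int_{A_i-u}^{v}|\rd_u\rd_v\Psi|(u,v')\,dv'.
\end{equation*}
The boundary term is controlled by \eqref{BS.main.data}, giving $|\rd_u\Psi|(u,A_i-u)\leq B_i|u|^{-s}$ \emph{with constant exactly $1$}; the whole point of the proposition is that the spacetime integral contributes no more than $\tfrac12 B_i|u|^{-s}$, which then recovers \eqref{BA.1} with the improved constant $\tfrac32$.

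For the spacetime integral I would insert \eqref{BS.eqn} and treat the four resulting terms one at a time. The linear terms $e^{-2\kappa_-(v'+u)}|\Psi|$ and $e^{-2\kappa_-(v'+u)}|\rd_v\Psi|$ are estimated using the \emph{already-established} bounds of Propositions~\ref{BS.P} and \ref{BS.rvP}: crucially, for $\Psi$ one must use the \emph{degenerate} form $|\Psi|(u,v')\lesssim B_i e^{\kappa_-(v'+u-A_i)}|u|^{-s}$ of Proposition~\ref{BS.P}, so that after the weight $e^{\kappa_-(v'+u)}$ is absorbed the remaining exponential $e^{-\kappa_-(v'+u)}$ integrates (elementarily, or via a constant-$u$ analogue of Lemma~\ref{lemma.2}) to a factor $\lesssim e^{-\kappa_- A_i}$, instead of costing a power of $|u|$. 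Combined with the two defining properties of the partition --- the exponentially weighted widths $\int e^{-\alpha\kappa_-(v'+u)}\,dv'$ over a constant-$u$ slice of $\mathcal B_i$ being $\lesssim e^{-\kappa_- A_i}\bigl(e^{-\kappa_- A_i}-e^{-\kappa_- A_{i+1}}\bigr)\lesssim e^{-\kappa_- A_i}\de_B$ (and $\lesssim e^{-2\kappa_- A_n}\leq\de_B$ in the terminal region $\mathcal B_n$) --- these terms are bounded by $C_{M,{\bf e},s}\,e^{-\kappa_- A_i}\,\de_B\,B_i|u|^{-s}$. The remaining two terms, $e^{-2\kappa_-(v'+u)}|\rd_u\Psi|$ and the nonlinear $|\rd_u\Psi\rd_v\Psi|$, I would keep on the right for a Gr\"onwall argument; the Gr\"onwall weight $\int_{A_i-u}^{v}\bigl(e^{-2\kappa_-(v'+u)}+|\rd_v\Psi|(u,v')\bigr)\,dv'$ is bounded, by the partition and Proposition~\ref{BS.rvP}, by $C_{M,{\bf e}}\de_B+C_{M,{\bf e},s,A_1}B_i|u_i|^{-(s-1)}$, which can be forced below $\log 2$.

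Gr\"onwall then yields $|\rd_u\Psi|(u,v)\leq 2\bigl(B_i|u|^{-s}+\text{errors}\bigr)$, where each error carries a factor $e^{-\kappa_- A_i}\de_B\leq e^{-\kappa_- A_1}\de_B$ (from the weighted widths) or $|u_i|^{-(s-1)}$ (from integrating $|\rd_u\Psi|\lesssim|u'|^{-s}$), the nonlinear one carrying in addition a factor $B_i$. Since $s>1$, first choosing $\de_B=\de_B(M,{\bf e},s,A_1)>0$ small enough that $C_{M,{\bf e},s}e^{-\kappa_- A_1}\de_B$ is tiny, and then $u_i$ sufficiently negative, makes the total error $\leq\tfrac12 B_i|u|^{-s}$, hence $|\rd_u\Psi|(u,v)\leq\tfrac32 B_i|u|^{-s}$, improving \eqref{BA.1}; a standard continuity argument then closes the bootstrap. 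Combined with Propositions~\ref{BS.P} and \ref{BS.rvP}, this completes the induction over $i$ giving \eqref{BS.main.est} on all of $\mathcal B$, with the input constant $B_{i+1}$ on $\{u+v=A_{i+1}\}$ read off from \eqref{BS.main.est} on $\mathcal B_i$.

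The hard part is exactly the bookkeeping of the exponential weights so that \emph{no power of $|u|$ is lost}: the crude bound $|\Psi|\lesssim B_i|u|^{-s+1}$ against the possibly large weight $e^{-2\kappa_-(v'+u)}$ would produce a hopeless factor $|u|$, which is why the degenerate $\Psi$-bound of Proposition~\ref{BS.P} and the sharp weighted calculus estimate are essential, and why the partition is built around the two quantities $e^{-\kappa_- A_i}-e^{-\kappa_- A_{i+1}}$ and $e^{-2\kappa_- A_i}-e^{-2\kappa_- A_{i+1}}$. The second delicate point is the terminal region $\mathcal B_n$: there $u+v$ is unbounded, the weighted-width bound in terms of $e^{-\kappa_- A_i}-e^{-\kappa_- A_{i+1}}$ is unavailable, and one must instead invoke the defining requirement $e^{-2\kappa_- A_n}\leq\de_B$ built into the choice of $n$, together with $\int_{A_n-u}^{\infty}e^{-\kappa_-(v'+u)}\,dv'=\kappa_-^{-1}e^{-\kappa_- A_n}$ being finite and small.
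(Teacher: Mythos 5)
Your proposal is essentially the paper's proof: integrate \eqref{BS.eqn} in $v$ from $\{u+v=A_i\}$, use Propositions~\ref{BS.P} (crucially, in its degenerate form) and \ref{BS.rvP} together with the $\de_B$-smallness furnished by the partition to absorb the $\Psi$- and $\rd_v\Psi$-terms, and close with Gr\"onwall on the $\rd_u\Psi$-containing terms, exactly as in the paper. Two small calibration slips are worth flagging, though neither is fatal. First, the partition gives $e^{-\kappa_- A_i}-e^{-\kappa_- A_{i+1}}\geq\de_B$ for $i<n$ (the \emph{minimum} of the two exponential differences equals $\de_B$), not $\lesssim\de_B$, so your chain $\lesssim e^{-\kappa_- A_i}(e^{-\kappa_- A_i}-e^{-\kappa_- A_{i+1}})\lesssim e^{-\kappa_- A_i}\de_B$ fails when $e^{-\kappa_- A_i}+e^{-\kappa_- A_{i+1}}<1$; the correct route is $e^{-\kappa_- A_i}(e^{-\kappa_- A_i}-e^{-\kappa_- A_{i+1}})\leq e^{-2\kappa_- A_i}-e^{-2\kappa_- A_{i+1}}\leq C_{A_1}\de_B$, which is exactly what the two-difference definition of the partition is engineered to produce and suffices here. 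Second, forcing the Gr\"onwall exponent only below $\log 2$ gives $2\bigl(B_i|u|^{-s}+\text{errors}\bigr)$, which overshoots the target $\frac{3}{2}B_i|u|^{-s}$ no matter how small the errors are; since the exponent is $\leq C\de_B+C B_i|u_i|^{-(s-1)}$ you may instead force it below, say, $\log\frac{5}{4}$ and make the remaining errors $\leq\frac{1}{5}B_i|u|^{-s}$, giving $\frac{5}{4}\cdot\frac{6}{5}B_i|u|^{-s}=\frac{3}{2}B_i|u|^{-s}$ as required.
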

\begin{proof}
Using \eqref{BS.eqn} and the initial data bound \eqref{BS.main.data}, we have
\begin{equation}\label{BS.ruP.1}
\begin{split}
&|\rd_u\Psi|(u,v)\leq |\rd_u\Psi|(u,-u+A_i)+\int_{-u+A_i}^v |\rd_v\rd_u\Psi| dv'\\
\leq &B_i|u|^{-s}+C_{M,{\bf e},A_1}\int_{-u+A_i}^v \left(e^{-2\kappa_-(v'+u)}(|\rd_u\Psi|+|\rd_v\Psi|+|\Psi|)+|\rd_u\Psi||\rd_v\Psi|\right)(u,v') dv'.
\end{split}
\end{equation}
By Propositions \ref{BS.P} and \ref{BS.rvP} and $u+v\geq A_i$, we have
\begin{equation*}
\begin{split}
&\int_{-u+A_i}^v \left(e^{-2\kappa_-(v'+u)}(| \Psi |+|\rd_v\Psi|)\right)(u,v') dv'\\
\leq &C_{M,{\bf e},s,A_1}B_i|u|^{-s}\int_{-u+A_i}^v e^{-2\kappa_- (v'+u)} dv'\leq C_{M,{\bf e},s,A_1}B_i\de_B |u|^{-s}\leq \f{B_i}{5}|u|^{-s},
\end{split}
\end{equation*}
after choosing $\delta_B=\delta_B(M,{\bf e},s,A_1)$ to be sufficiently small.
Returning to \eqref{BS.ruP.1}, we get
\begin{equation}\label{BS.ruP.2}
\begin{split}
|\rd_u\Psi|(u,v)\leq &\f{6B_i}{5}|u|^{-s}
+C_{M,{\bf e},A_1}\int_{-u+A_i}^v \left(e^{-2\kappa_-(v'+u)}|\rd_u\Psi|+|\rd_u\Psi||\rd_v\Psi|\right)(u,v') dv'.
\end{split}
\end{equation}
In order to apply Gr\"onwall's inequality, notice that
$$\int_{-u+A_i}^v e^{-2\kappa_-(v'+u)} dv'\leq C_{M,{\bf e}}\de_B$$
and
$$\int_{-u+A_i}^v |\rd_v\Psi|(u,v') dv'\leq C_{M,{\bf e},s,A_1}B_i\int_{-u+A_i}^v (v'-A_i)^{-s} d v' \leq C_{M,{\bf e},s,A_1}B_i|u_i|^{-s+1}$$
where the latter bound is obtained using Proposition \ref{BS.rvP}. Substituting these bounds into \eqref{BS.ruP.2}, we then obtain
\begin{equation*}
\begin{split}
|\rd_u\Psi|(u,v)\leq &\f{6B_i}{5}|u|^{-s}\times e^{C_{M,{\bf e}}\de_B+C_{M,{\bf e},s,A_1}B_i|u_i|^{-s+1}}\leq \f{3B_i}{2}|u|^{-s},
\end{split}
\end{equation*}
after choosing $\delta_B=\delta_B(M,{\bf e},s,A_1)$ to be sufficiently small and $u_i=u_i(M,{\bf e},s,B_i)$ to be sufficiently negative.
\end{proof}
Now, notice that the bound for $\Psi$ derived in Proposition \ref{BS.P} is not sufficiently strong to improve the bootstrap assumption \eqref{BS.C0}. Nonetheless, we also have the following estimate:
\begin{proposition}\label{Psi.C0}
Under the bootstrap assumptions \eqref{BS.C0} and \eqref{BA.1}, there exist $\de_B=\de_B(M,{\bf e},s,A_1)>0$ sufficiently small and $u_i=u_i(M,{\bf e},s,B_i)$ sufficiently negative such that the following estimate holds in $\mathcal B_i$:
$$|\Psi|\leq C_sB_i|u|^{-s+1}.$$
\end{proposition}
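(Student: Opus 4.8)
The plan is to derive the bound on $\Psi$ itself by integrating the already-established pointwise control of $\rd_u \Psi$ in the $u$-direction along curves of constant $v$, exploiting that $s > 1$ so that the resulting $u'$-integral converges and gains exactly one power of $|u|$. Concretely, for fixed $(u,v) \in \mathcal B_i$ I would integrate from the point $(-v+A_i, v)$, which lies on the ``past'' null boundary $\{u+v = A_i\}$ of $\mathcal B_i$, to $(u,v)$. The whole segment stays in $\overline{\mathcal B_i}$: along it $u'$ runs from $-v+A_i$ to $u$, so $u' \leq u \leq \min\{u_i, u_s\}$ and $A_i \leq u'+v \leq u+v \leq A_{i+1}$, while $v \geq 1$ is unchanged. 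Thus the fundamental theorem of calculus gives
\[
|\Psi|(u,v) \leq |\Psi|(-v+A_i, v) + \int_{-v+A_i}^{u} |\rd_u\Psi|(u', v) \, du'.
\]

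For the boundary term I would invoke the hypothesis \eqref{BS.main.data} on $\{u+v = A_i\}$ --- valid here since $-v+A_i \leq u \leq u_i$ and $v \geq 1$ --- which yields $|\Psi|(-v+A_i, v) \leq B_i (v - A_i)^{-s}$; since $v - A_i \geq |u| \geq |u_i| \geq 1$ for $u_i$ chosen sufficiently negative, this is at most $B_i |u|^{-s+1}$. For the integral I would use the bootstrap assumption \eqref{BA.1} (or, if one prefers, the improved bound of Proposition~\ref{BS.ruP}) to replace $|\rd_u\Psi|(u',v)$ by $2 B_i |u'|^{-s}$, and then substitute $t = -u'$ (so $t$ ranges over $[|u|, v - A_i]$) to obtain
\[
\int_{-v+A_i}^{u} |\rd_u\Psi|(u', v) \, du' \leq 2 B_i \int_{|u|}^{v-A_i} t^{-s}\, dt \leq \frac{2 B_i}{s-1}\, |u|^{-s+1},
\]
where the last bound uses $s > 1$. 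Adding the two contributions gives $|\Psi|(u,v) \leq C_s B_i |u|^{-s+1}$ with $C_s = 1 + \frac{2}{s-1}$. Note that no further shrinking of $\de_B$ or $u_i$ is needed for this step beyond what was already required in Propositions~\ref{BS.P}--\ref{BS.ruP} (one only uses $|u_i| \geq 1$).

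There is no real obstacle here: the entire content of the proposition is this single convergent integration, together with the mild gain of one power of $|u|$ relative to the $|u|^{-s}$ control of $\Psi$ on $\{u+v=A_i\}$ and of $\rd_u \Psi$ in $\mathcal B_i$. The only points requiring (minor) care are verifying that the line of integration lies in $\mathcal B_i$ so that \eqref{BA.1} applies, and that its endpoint on $\{u+v=A_i\}$ lies in the range where \eqref{BS.main.data} holds; both follow automatically from $(u,v) \in \mathcal B_i$ together with $u+v \geq A_i$.
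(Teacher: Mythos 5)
Your proof is correct and follows exactly the paper's argument: integrate $\rd_u\Psi$ along constant $v$ from $\{u+v=A_i\}$, bound the boundary term by \eqref{BS.main.data} and the integrand by \eqref{BA.1}, and combine the resulting terms using $v-A_i\geq|u|\geq 1$. The only addition over the paper's one-line proof is the (correct, and worth stating) verification that the integration segment lies in $\overline{\mathcal B_i}$.
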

\begin{proof}
Using the assumption on the data in Theorem \ref{BS.main} and the bootstrap assumption \eqref{BA.1}, we have
\begin{equation*}
\begin{split}
|\Psi|(u,v)\leq &|\Psi|(-v+A_i, v)+\int_{-v+A_i}^u|\rd_u\Psi|(u',v)\,du'\\
\leq &B_i(v-A_i)^{-s}+2B_i\int_{-v+A_i}^u\f{du'}{|u'|^s}\leq C_sB_i\left((v-A_i)^{-s}+|u|^{-s+1}\right).
\end{split}
\end{equation*}
Finally, we conclude by noting that $v-A_i\geq |u|$.
\end{proof}

According to Proposition \ref{Psi.C0}, we have thus improved the bootstrap assumption \eqref{BS.C0} after choosing $u_i$ to be sufficiently negative depending on $B_i$ and $s$. Moreover, we have also improved the bootstrap assumption \eqref{BA.1} in Proposition \ref{BS.ruP}. Together with Propositions \ref{BS.P} and \ref{BS.rvP}, we conclude the proof of Theorem \ref{BS.main}. Finally, by iterating the estimate in Theorem \ref{BS.main}, we obtain the following bounds in the region $\mathcal B$.
\begin{theorem}\label{BS.prop}
Given the estimates in the red-shift region $\mathcal R$ in Theorem \ref{RS.prop}, there exists $u_s=u_s(M,{\bf e},s,D)$ sufficiently negative such that the following estimates hold for $(u,v)\in\mathcal B$:
$$|\rd_u\Psi|(u,v)\leq C_{M,{\bf e},s,D}|u|^{-s},\quad |\Psi|(u,v)\leq C_{M,{\bf e},s,D}|u|^{-s+1},\quad |\rd_v\Psi|(u,v)\leq C_{M,{\bf e},s,D}v^{-s}.$$
\end{theorem}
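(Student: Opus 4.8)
Theorem~\ref{BS.prop} should follow by induction on $i$, applying Theorem~\ref{BS.main} in each strip $\mathcal B_i$ and propagating the bounds from strip to strip. The base case is the interface between the red-shift region $\mathcal R$ and $\mathcal B_1$: the hypersurface $\{u+v = A_1\}$. There Theorem~\ref{RS.prop} gives $|\Psi| + |\rd_v \Psi| + e^{-2\kappa_+ v}|\rd_U\Psi| \leq C_{M,\e,s,D} v^{-s}$. Since on $\{u + v = A_1\}$ the coordinates $U$ and $u$ are related by \eqref{U.def} (so $\rd_u = e^{2\kappa_+ u}\rd_U$ and $e^{2\kappa_+ u} = e^{2\kappa_+(A_1 - v)} = C_{A_1} e^{-2\kappa_+ v}$), we get $|\rd_u\Psi| \leq C_{M,\e,s,D} v^{-s}$ there as well; and on that hypersurface $v = A_1 - u$, so $v^{-s} \leq C_{A_1}|u|^{-s}$ for $u$ sufficiently negative. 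Hence \eqref{BS.main.data} holds on $\{u + v = A_1\}$ with $B_1 = C_{M,\e,s,D}$.

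\textbf{The inductive step.} Suppose \eqref{BS.main.data} holds on $\{u + v = A_i\}$ with some constant $B_i$. By Theorem~\ref{BS.main}, after fixing $\de_B = \de_B(M,\e,s,A_1)$ (note this is the \emph{same} $\de_B$ for all $i$, which is why the number $n$ of strips and the sequence $A_1 < \dots < A_n$ are determined once and for all by $M,\e,s,A_1$) and taking $u_i = u_i(M,\e,s,B_i)$ sufficiently negative, the estimate \eqref{BS.main.est} holds in $\mathcal B_i$ with constant $C_{M,\e,s,A_1}B_i$. Restricting \eqref{BS.main.est} to the outgoing interface $\{u + v = A_{i+1}\}$ and using $v - A_i = A_{i+1} - A_i - u \sim |u|$ there (so that $(v-A_i)^{-s} \leq C_{A_1}|u|^{-s}$ and, crucially, $e^{-\kappa_-(v+u)}$ is the constant $e^{-\kappa_- A_{i+1}}$, converting the bound on $e^{-\kappa_-(v+u)}\Psi$ into a bound on $\Psi$ of the form $C|u|^{-s}$), we obtain \eqref{BS.main.data} on $\{u + v = A_{i+1}\}$ with $B_{i+1} = C_{M,\e,s,A_1} B_i$. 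Since the implied constant $C_{M,\e,s,A_1}$ is independent of $i$, after $n$ steps we have $B_n \leq (C_{M,\e,s,A_1})^{n-1} B_1 =: C_{M,\e,s,D}$ (recall $n$ depends only on $M,\e,s,A_1$, and $A_1$ in turn depends only on $M,\e,s,D$ through $\de$ of the red-shift region). Setting $u_s = \min\{u_1, \dots, u_n\}$ --- which then depends only on $M,\e,s,D$ since each $B_i$ does --- all the strip estimates hold, and the union $\mathcal B = \cup_{i=1}^n \mathcal B_i$ is covered.

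\textbf{Assembling the final bounds.} It remains to read off the three asserted inequalities from the collection of strip estimates \eqref{BS.main.est}. The bound $|\rd_u\Psi| \leq C_{M,\e,s,D}|u|^{-s}$ is immediate. The bound $|\Psi| \leq C_{M,\e,s,D}|u|^{-s+1}$ is the $|u|^{s-1}|\Psi|$ term in \eqref{BS.main.est}. For $|\rd_v \Psi| \leq C_{M,\e,s,D} v^{-s}$: in $\mathcal B_i$ the estimate gives $|\rd_v\Psi| \leq C_{M,\e,s,A_1} B_i (v - A_i)^{-s}$; since $v \geq 1$ and $A_i \leq A_n$ is bounded by a constant depending only on $M,\e,s,A_1$, one has $v - A_i \geq c v$ for a fixed $c > 0$ once $v$ is large (and for bounded $v$ the bound is trivial), so $(v-A_i)^{-s} \leq C v^{-s}$. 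Absorbing all the $i$-dependent constants into a single $C_{M,\e,s,D}$ completes the proof.

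\textbf{Main obstacle.} The only genuinely delicate point is bookkeeping of the constants and parameters: one must check that $\de_B$, hence the partition $\{A_i\}$ and the number $n$, can be chosen uniformly (independent of the $B_i$), so that the geometric accumulation $B_{i+1} = C B_i$ terminates after finitely many steps with a constant depending only on $M, \e, s$ and the initial data size $D$ (via $A_1$, via $\de$). This is exactly the structure guaranteed by Theorem~\ref{BS.main} --- the statement is carefully phrased so that $\de_B$ and $C_{M,\e,s,A_1}$ do not depend on $B_i$, only $u_i$ does --- so the induction closes, but verifying that $u_s = \min_i u_i$ ultimately depends only on $(M,\e,s,D)$ requires tracing the dependence $B_i \rightsquigarrow u_i$ through all $n$ steps. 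Everything else is a routine conversion between the $(U,v)$ and $(u,v)$ gauges on the interfaces and elementary comparisons of the weights $|u|$, $v$, $v - A_i$ on the lines $u + v = \mathrm{const}$.
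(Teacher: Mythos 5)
Your proposal is correct and follows essentially the same route as the paper: establish the data bound on $\{u+v=A_1\}$ from Theorem~\ref{RS.prop} (via the gauge conversion $\rd_u = e^{2\kappa_+ u}\rd_U$ and the identity $e^{2\kappa_+(u+v)}=e^{2\kappa_+ A_1}$ there), iterate Theorem~\ref{BS.main} across the finitely many strips $\mathcal B_i$ using the uniformity of $\de_B$ and the resulting geometric accumulation of constants, and read off the three stated bounds. The only difference is that the paper absorbs the $e^{\kappa_- A_{i+1}}$ conversion factors from $e^{-\kappa_-(u+v)}\Psi$ to $\Psi$ into an explicit cumulative factor $e^{2i\kappa_- A_n}$ in the iterate, while you account for the same quantity implicitly inside $C_{M,\e,s,A_1}$; both are fine.
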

\begin{proof}
To prove the estimates, we apply Theorem \ref{BS.main}. First, notice that according to Theorem \ref{BS.main}, $\de_B$ depends only on $M$, ${\bf e}$, $s$ and $A_1$. Therefore, given $A_1$ (which depends on $M$, ${\bf e}$, $s$ and $D$), we can choose $\de_B$ and partition the blue-shift region in $\mathcal B_i$. We now repeatedly apply Theorem \ref{BS.main} for each of the $\mathcal B_i$ to show that the following holds for every $i$, 
\begin{equation}\label{BS.final}
\begin{split}
\sup_{(u,v)\in\mathcal B_i} \bb( |u|^s(|\rd_u\Psi|(u,v)+|e^{-\kappa_- (v+u)}\Psi|(u,v)) &\\
+(v-A_i)^s|\rd_v\Psi|(u,v)+|u|^{s-1}|\Psi|(u,v) \bb)
\leq &\left(C_{M, {\bf e},s,A_1}^*\right)^i e^{2i\kappa_-A_n} C_{M,{\bf e},s,D}^{**},
\end{split}
\end{equation}
where $C_{M, {\bf e},s,A_1}^*$ is the constant in the conclusion of Theorem \ref{BS.main} and $C_{M,{\bf e},s,D}^{**}$ is a constant that exists by the conclusion of Theorem \ref{RS.prop}. From this we can easily obtain the conclusion. First, note that $(v-A_i)^{-s}\leq C_{M,{\bf e},s,D}v^{-s}$. Finally, we observe that the right hand side of \eqref{BS.final} can be bounded uniformly by a constant depending on $M$, ${\bf e}$, $s$ and $D$.
\end{proof}

\subsection{$C^0$ stability and conclusion of proof of Theorem \ref{main.theorem.C0.stability}}\label{sec.C0}

The estimates in the previous two subsections already imply that the spacetime remains regular in the region $u\leq u_s$, $v\geq 1$. In this subsection, we will further show that the metric and the scalar field can be extended in $C^0$ and moreover the solution approaches Reissner--Nordstr\"om in $C^0$ in the sense given by the last statement in Theorem \ref{main.theorem.C0.stability}. 

First, we have the following $C^0$ extendibility statement:
\begin{proposition}\label{C0.1}
In the $(u,V)$ coordinate system, one can attach the boundary $\CH:=\{V=1\}$ such that $r$, $\phi$ and $\log\Omg_{\mathcal C\mathcal H}$ extend continuously to $\CH$.
\end{proposition}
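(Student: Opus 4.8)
The plan is to prove Proposition~\ref{C0.1} by combining the quantitative decay estimates already obtained for $\Psi$ and its derivatives (Theorems~\ref{RS.prop} and \ref{BS.prop}) with the relation $\ud V/\ud v = e^{-2\kappa_- v}$ from \eqref{V.def}, and then invoking the Cauchy criterion for continuous extension. First I would observe that, since $\Psi = (\phi, r - r_{RN}, \log(\Omg_{\mathcal C\mathcal H}/\Omg_{RN,\mathcal C\mathcal H}))$ and the Reissner--Nordstr\"om quantities $r_{RN}$, $\Omg_{RN,\mathcal C\mathcal H}$ are manifestly smooth up to $\{V = 1\}$ in the $(u, V)$ coordinate system (cf. Section~\ref{sec.null.3}), it suffices to show that $\Psi$, viewed as a function of $(u, V)$, extends continuously to $\CH = \{V = 1\}$; continuity of $r$, $\phi$ and $\log \Omg_{\mathcal C\mathcal H}$ then follows immediately.

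The core estimate is the bound on $\rd_v \Psi$ in the blue-shift region. From Theorem~\ref{BS.prop} (together with Theorem~\ref{RS.prop} near the event horizon) we have $|\rd_v \Psi|(u,v) \leq C_{M, {\bf e}, s, D}\, v^{-s}$ for $(u,v)$ in the future development with $u \leq u_s$, $v \geq 1$. Converting to the $V$-variable via $\rd_V r = e^{2\kappa_- v} \rd_v r$ (and similarly for the other components of $\Psi$), we get, for fixed $u$ and $1 \leq v_1 \leq v_2$ along the corresponding values $V_1 \leq V_2 < 1$,
\begin{equation*}
	|\Psi(u, V_2) - \Psi(u, V_1)| \leq \int_{v_1}^{v_2} |\rd_v \Psi|(u, v)\, \ud v \leq C_{M, {\bf e}, s, D} \int_{v_1}^{\infty} v^{-s}\, \ud v = \frac{C_{M, {\bf e}, s, D}}{s-1}\, v_1^{-(s-1)}.
\end{equation*}
Since $s > 1$ the right-hand side tends to $0$ as $v_1 \to \infty$, i.e., as $V_1 \to 1^-$. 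Hence for each fixed $u$, the map $V \mapsto \Psi(u, V)$ is uniformly Cauchy as $V \to 1$, so $\lim_{V \to 1} \Psi(u, V)$ exists; call it $\Psi(u, 1)$. To upgrade this to joint continuity at $\CH$, I would combine the $\rd_V \Psi$-bound just used with the $\rd_u \Psi$-bound $|\rd_u \Psi|(u,v) \leq C_{M, {\bf e}, s, D} |u|^{-s}$ from Theorem~\ref{BS.prop}: integrating $\rd_u \Psi$ along a constant-$V$ curve controls the $u$-variation uniformly in $V$, and a standard triangle-inequality argument (estimate the $u$-difference at fixed $V < 1$, then let $V \to 1$, using the uniform-in-$u$ control of the $V$-variation) gives continuity of $\Psi$ as a function of $(u, V)$ on the closed region including $\CH$. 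One must also check that the limiting function $r(u,1)$ remains strictly positive, so that $\CH$ is a genuine null boundary and not a piece of $\calS$; this is where one uses that $r$ stays close to $r_{RN}$, which is bounded below by (something close to) $r_- > 0$, thanks to the smallness of $|\Psi|$ established in the bootstrap (e.g. $|r - r_{RN}| \leq C_{M, {\bf e}, s, D} |u|^{-s+1}$, which is small for $u \leq u_s$ with $u_s$ sufficiently negative).

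The main obstacle I anticipate is not any single estimate — all the needed bounds are already in hand — but rather the careful bookkeeping of coordinate changes and the uniformity of the Cauchy estimates across the boundary $A_1$ between the red-shift and blue-shift regions, and across the finitely many sub-regions $\mathcal B_i$. In particular one must be sure that the constant $C_{M, {\bf e}, s, D}$ in the $\rd_v \Psi$ bound is uniform over all of $\{u \leq u_s, v \geq 1\}$ (which is exactly the content of Theorem~\ref{BS.prop}, where the geometric-series factors are absorbed into a single constant), and that the $v^{-s}$ weight — as opposed to a weight like $(v - A_i)^{-s}$ that degenerates near each sub-region boundary — is genuinely available globally; this is ensured by the last step of the proof of Theorem~\ref{BS.prop}. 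A secondary, more technical point is to phrase the continuous extension precisely in the sense of the conformal embedding of Theorem~\ref{thm:kommemi}, i.e., to check that the $(u, V)$ coordinate domain together with $\{V = 1\}$ really does correspond to adjoining the boundary segment $\CH_1$; but this is a routine consequence of the coordinate definitions in Sections~\ref{sec.null.2}--\ref{sec.null.3} and the fact that $V < 1$ precisely on $\calQ$ while $V = 1$ is approached along every constant-$u$ curve.
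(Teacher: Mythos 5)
Your proof is correct and takes essentially the same approach as the paper: both reduce the claim to showing that $\Psi$ is continuous up to $\{V=1\}$ (since $r_{RN}$ and $\Omg_{RN,\mathcal C\mathcal H}$ are smooth there), and both establish this by integrating the $|\rd_u\Psi|\lesssim |u|^{-s}$ and $|\rd_v\Psi|\lesssim v^{-s}$ bounds of Theorems~\ref{RS.prop} and \ref{BS.prop}, the latter being integrable towards $\CH$ precisely because $s>1$. The paper phrases the $V$-direction estimate as $|\rd_V\Psi|\lesssim |1-V|^{-1}\log^{-s}(\frac{1}{1-V})$ and verifies a Cauchy criterion for sequences converging to a boundary point in a single step, whereas you keep the $v$-variable and split the argument into a fixed-$u$ limit plus a $u$-variation estimate; these are equivalent, and your additional remark about $r>0$ on $\CH$ is not required for the statement but is a correct observation.
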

\begin{proof}
Since the metric components of Reissner--Nordstr\"om $r_{RN}$ and $\log\Omg_{RN,\mathcal C\mathcal H}$ are smooth up to $\{V=1\}$, it suffices to show that $\Psi$ is continuous up to $\{V=1\}$.

First, we note that by the conclusions in Theorems \ref{RS.prop} and \ref{BS.prop}, the following holds in the $(u,V)$ coordinate system if $u\leq u_s$, $V\geq \f 12$:
\begin{equation}\label{newcoord.bound}
|\rd_u\Psi|(u,V)\leq C_{M,{\bf e},s,D}|u|^{-s},\quad |\rd_V\Psi|(u,V)\leq C_{M,{\bf e},s,D}|1-V|^{-1}\log^{-s}\left(\f{1}{1-V}\right),
\end{equation}

Given a sequence $u_i\to u$ and $V_i\to 1$, it then suffices to show that $\Psi(u_i,V_i)$ is a Cauchy sequence. Fix $\ep>0$. Notice that there exists $N$ such that for $i,j\geq N$, we have
\begin{equation*}
\begin{split}
&|\Psi(u_i,V_i)-\Psi(u_j,V_j)|\\
\leq &\left|\int_{u_i}^{u_j}|\rd_u\Psi|(u',V_i) \,du'\right|+\left|\int_{V_i}^{V_j}|\rd_V\Psi|(u_j,V')\,d V'\right| \\
\leq &C_{M,{\bf e},s,D}\left(\left|\int_{u_i}^{u_j}|u'|^{-s} \,du'\right|+\left|\int_{V_i}^{V_j}|1-V'|^{-1}\log^{-s}\left(\f{1}{1-V'}\right)\,d V'\right| \right)\leq \ep.
\end{split}
\end{equation*}
This is because for $s>1$, the integrands in both integrals are integrable.
\end{proof}

We now have all the ingredients to conclude the proof of Theorem \ref{main.theorem.C0.stability}.
\begin{proof}[Proof of Theorem \ref{main.theorem.C0.stability}]
By Theorems \ref{RS.prop} and \ref{BS.prop} and standard local existence results, the solution remains regular in the region $\{(u,v):u\leq u_s,\,v\geq 1\}$ as long as $u_s$ is sufficiently negative. In particular, the spacetime indeed has a Penrose diagram given by Figure \ref{fig:maintheorem}. The statement on the continuity of the metric and the scalar field up to $\CH$ is given by Proposition \ref{C0.1}. Finally, the desired estimates are direct consequences of Theorems \ref{RS.prop} and \ref{BS.prop}.
\end{proof}

\subsection{Refined bounds for $\rd_v r$}\label{add.est}

In this final subsection, we record an easy estimate for $\rd_v r$. This estimate is not necessary to close the argument for the stability theorem, but will be useful in the next section.

\begin{proposition}\label{dvr.improved}
For $u_s$ sufficiently negative, we have the following bound to the future of $\{u+v=A_1,\,u\leq u_s\}$:
$$|\rd_v r|(u,v)\leq C_{M,{\bf e},s,D}(\Omg_{RN}^2+v^{-2s}).$$
\end{proposition}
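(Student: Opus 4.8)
The plan is to work in the $(u,v)$ coordinate system throughout and to estimate the Raychaudhuri quantity $\frac{\lambda}{\Omg^{2}}$ (where $\lambda=\rd_{v}r$) along curves of constant $u$, then convert this into a pointwise bound on $\lambda$. Note first that, by the definition of $A_{1}$ and the restriction to $\{u\leq u_{s}\}$, the region ``to the future of $\{u+v=A_{1},\,u\leq u_{s}\}$'' is precisely the blue-shift region $\mathcal B=\{(u,v):u+v\geq A_{1},\,u\leq u_{s},\,v\geq 1\}$, so it suffices to prove the bound in $\mathcal B$. I would begin by recording the inputs. On the Reissner--Nordstr\"om side, \eqref{int.r} gives $\rd_{v}r_{RN}=-\frac14\Omg_{RN}^{2}$, and \eqref{eq:Omg-CH} together with continuity of $\Omg_{RN}^{2}e^{2\kappa_{-}(u+v)}$ on $\{u+v\geq A_{1}\}$ and its positive finite limit as $r^{\ast}=u+v\to\infty$ shows that $\Omg_{RN}^{2}e^{2\kappa_{-}(u+v)}$ is bounded above and below by positive constants depending only on $M,\e,A_{1}$; on the solution side, Theorem~\ref{BS.prop} gives in $\mathcal B$ that $|\Psi|\leq C|u|^{-s+1}\leq C$ and $|\rd_{v}\Psi|\leq Cv^{-s}$, from which the $\log\Omg$-component yields $c\,\Omg_{RN}^{2}\leq\Omg^{2}\leq C\,\Omg_{RN}^{2}$, the $(r-r_{RN})$-component yields $r\leq r_{+}+C=:C'$, and the $\phi$-component of $\rd_{v}\Psi$ yields $|\rd_{v}\phi|(u,v)\leq Cv^{-s}$.

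The main step is then a one-dimensional integration. By the Raychaudhuri equation \eqref{eqn.Ray}, $\rd_{v}\bigl(\frac{\lambda}{\Omg^{2}}\bigr)=-\frac{r(\rd_{v}\phi)^{2}}{\Omg^{2}}\leq 0$, so $\frac{\lambda}{\Omg^{2}}$ is non-increasing along each curve $\{u=\mathrm{const}\}$. On the curve $\{u+v=A_{1}\}\cap\{u\leq u_{s}\}$ one has $|\lambda|\leq\frac14\Omg_{RN}^{2}+|\rd_{v}(r-r_{RN})|\leq C$ and $\Omg^{2}\geq c\,\Omg_{RN}^{2}>0$ (with $\Omg_{RN}^{2}$ a fixed positive constant there), hence $\bigl|\frac{\lambda}{\Omg^{2}}\bigr|(u,A_{1}-u)\leq C$. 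Integrating up in $v$,
\[
\Bigl|\tfrac{\lambda}{\Omg^{2}}\Bigr|(u,v)\ \leq\ C+\int_{A_{1}-u}^{v}\frac{r(\rd_{v}\phi)^{2}}{\Omg^{2}}(u,v')\,dv'.
\]
Bounding the integrand by $r\leq C'$, $|\rd_{v}\phi|\leq Cv'^{-s}$, and $\Omg^{-2}\leq C\Omg_{RN}^{-2}\leq Ce^{2\kappa_{-}(u+v')}$, the integral is $\leq Ce^{2\kappa_{-}u}\int_{1}^{v}v'^{-2s}e^{2\kappa_{-}v'}\,dv'$, and the integration-by-parts argument of Lemma~\ref{lemma.1} (with $\kappa_{+},s$ replaced by $\kappa_{-},2s$) bounds this by $Cv^{-2s}e^{2\kappa_{-}(u+v)}$. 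Thus $\bigl|\frac{\lambda}{\Omg^{2}}\bigr|(u,v)\leq C\bigl(1+v^{-2s}e^{2\kappa_{-}(u+v)}\bigr)$; multiplying by $\Omg^{2}\leq C\Omg_{RN}^{2}$ and using $\Omg_{RN}^{2}e^{2\kappa_{-}(u+v)}\leq C$ yields $|\rd_{v}r|(u,v)\leq C\,\Omg_{RN}^{2}(u,v)+Cv^{-2s}$, with all constants depending only on $M,\e,s,D$ (recall $A_{1}$ is itself determined by these).

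I expect no serious obstacle here. The only point worth emphasizing is why one gains the improvement from the naive weight $v^{-s}$ (which would follow directly from $|\rd_{v}(r-r_{RN})|\leq Cv^{-s}$ in Theorem~\ref{BS.prop}) to the stated $v^{-2s}$: it comes from the \emph{quadratic} source $r(\rd_{v}\phi)^{2}$ in Raychaudhuri, with the exponential blow-up of $\Omg_{RN}^{-2}$ toward the Cauchy horizon exactly absorbed by the exponential gain in the Lemma~\ref{lemma.1}-type estimate, and then canceled again against $\Omg^{2}\sim\Omg_{RN}^{2}$ when passing back from $\frac{\lambda}{\Omg^{2}}$ to $\lambda$. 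A minor care-point is keeping $\Omg^{2}$ comparable to $\Omg_{RN}^{2}$ throughout $\mathcal B$, which is precisely the content of the $\log\Omg$-component of the $C^{0}$-stability estimate already established in Theorem~\ref{BS.prop}.
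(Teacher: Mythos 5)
Your proposal is correct and follows essentially the same line as the paper's proof: the Raychaudhuri equation $\rd_{v}(\lambda/\Omg^{2})=-r(\rd_{v}\phi)^{2}/\Omg^{2}$ is integrated from the boundary curve $\{u+v=A_{1}\}$, the integrand is bounded using the $C^{0}$-stability estimates ($r$ bounded, $|\rd_{v}\phi|\lesssim v^{-s}$, $\Omg^{2}\sim\Omg_{RN}^{2}$) together with the Lemma~\ref{lemma.1}-type integration by parts with $\kappa_{+},s$ replaced by $\kappa_{-},2s$, and the result is multiplied by $\Omg^{2}\lesssim\Omg_{RN}^{2}\lesssim e^{-2\kappa_{-}(u+v)}$. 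The only superficial difference is that you invoke Theorem~\ref{BS.prop} rather than Theorem~\ref{RS.prop} to get the initial bound for $\lambda/\Omg^{2}$ on $\{u+v=A_{1}\}$, but both give the needed control on that boundary curve.
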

\begin{proof}
We use the equation 
\begin{equation}\label{R.eqn}
\rd_v \left( \f{\rd_v r}{\Omg^2} \right) =-\f{r}{\Omg^2}(\rd_v\phi)^2.
\end{equation}
By the estimates in Theorem \ref{RS.prop}, we have $\f{\rd_v r}{\Omg^2}\leq C_{M,{\bf e},s,D}$ on $\{u+v=A_1\}$. Thus by integrating \eqref{R.eqn}, we get
\begin{equation*}
\begin{split}
\left\vert \f{\rd_v r}{\Omg^2} \right\vert (u,v) \leq &C_{M,{\bf e},s,D}+\int_{-u+A_1}^v \f{r}{\Omg^2}(\rd_v\phi)^2(u,v')\,dv'\\
\leq &C_{M,{\bf e},s,D}+C_{M,{\bf e},s,D}\int_{-u+A_1}^{v} e^{2\kappa_-(v'+u)} (v')^{-2s}\, dv' \leq C_{M,{\bf e},s,D}+C_{M,{\bf e},s,D} e^{2\kappa_-(v+u)} v^{-2s},
\end{split}
\end{equation*}
where for the last inequality we have used\footnote{with $\kappa_+$ replaced by $\kappa_-$.} Lemma \ref{lemma.1}. The final conclusion can be derived after multiplying by $\Omg^2$, which using the estimates in Proposition \ref{Psi.C0}, can be bounded up to a constant by $e^{-2\kappa_-(v+u)}\leq C_{M,{\bf e},s,D}\Omg^2_{RN}$ in this region.
\end{proof}

\section{Blow up on the Cauchy horizon: Proof of Theorem \ref{final.blow.up.step}}\label{sec.blow.up}

The goal of this subsection is to prove Theorem \ref{final.blow.up.step}. We will first briefly describe the main ideas of the proof in Section \ref{idea.instab}. In Section \ref{sec.instab.setup}, we will describe the notations and the setup of the proof. The three main types of $L^2$ estimates,  namely the almost energy conservation, the integrated local energy decay and the red-shift estimates, are proven in Sections \ref{sec.almost.energy}, \ref{sec.ILED} and \ref{sec.red.shift} respectively. In Section \ref{sec.instab.together}, we then put together all the estimates to prove the blow up statement \eqref{blow.up.interior}. Finally, in Section \ref{sec.blow.up.dvr}, we use the blow up \eqref{blow.up.interior} to derive the blow up \eqref{blow.up.interior.dvr}.

\subsection{Idea of the proof}\label{idea.instab}

We prove \eqref{blow.up.interior} in Theorem \ref{final.blow.up.step} by showing its contrapositive, namely, we assume that \eqref{blow.up.interior} fails for some $u<u_s$ and deduce that the condition \eqref{final.blow.up.step.assumption} must also fail on the event horizon. (Once \eqref{blow.up.interior} is proved, \eqref{blow.up.interior.dvr} follows straightforwardly - see Section \ref{sec.blow.up.dvr}.) The main idea behind the proof is to view the \underline{nonlinear} \underline{unknown} spacetime as a perturbation of Reissner--Nordstr\"om and to apply the argument\footnote{Recall that the proof Theorem \ref{linear.thm} consists of an argument in the interior of the black hole region and an argument in the exterior of the black hole region. More precisely, we apply here the argument in \cite{LO.instab} relevant to the interior region: see point (1) in the paragraph after the statement of Theorem \ref{linear.thm}.} in the proof of Theorem \ref{linear.thm}. Here, we in particular use the bounds we obtained in the previous section to show that the spacetime in question is indeed close to Reissner--Nordstr\"om in an appropriate sense. 

We quickly recall here our argument in \cite{LO.instab} for proving Theorem \ref{linear.thm}. The main idea is to recast this as a decay problem. More precisely, we show that if (the linear analogue of) the conclusion of Theorem \ref{final.blow.up.step} is assumed to be false for some $u$, then we can solve the wave equation towards $i^+$ starting from the black hole interior and prove a strong enough decay bound on the event horizon which contradicts (the linear analogue of) the assumption of Theorem \ref{final.blow.up.step}. Here, a key observation is that 
the original ``blue-shift'' effect from the point of view of the forward problem (which is the source of the instability at the first place) becomes a ``red-shift'' effect when the wave equation is solved in this direction. We then combine the following three types of $L^2$ estimates: (1) an energy identity, (2) an integrated local energy decay estimate and (3) the red-shift estimates near both horizons to show the desired decay bounds on the event horizon.

%The main idea is to capture the fact that as we solve the wave equation towards $i^+$ starting from the black hole interior, there is a ``red-shift'' effect along both the event horizon and Cauchy horizon. In particular, the original ``blue-shift'' effect viewed from the point of view of the forward problem which is the source of the instability becomes a ``red-shift'' effect when the wave equation is solved in this direction. By combining (1) an energy identity, (2) an integrated local energy decay estimate and (3) the red-shift estimates near both horizons, we show that if the conclusion of Theorem \ref{final.blow.up.step} is assumed to be false for some $u$, then we can iterate the estimates to prove a decay bound on the event horizon which contradicts the assumption of Theorem \ref{final.blow.up.step}.

On the other hand, for each of the above estimates, one faces the following challenges in applying the argument in \cite{LO.instab} to the nonlinear setting at hand:
\begin{enumerate}
\item (Almost energy conservation) Unlike in \cite{LO.instab}, an exact conservation law does not hold in our setting. Instead, we only have an ``almost conservation law'' with error terms that decays according to the stability results proved in the previous section. In particular, in order to close the estimate for the almost energy conservation law, we must couple it with both the integrated local energy decay estimate and the red-shift estimates.
\item (Integrated local energy decay estimate) Unlike in exact Reissner--Nordstr\"om, in our setting $\lambda$ and $\nu$ \underline{cannot} be controlled by $\Omg^2$. Instead, the differences $\lambda-\lambda_{RN}$ and $\nu-\nu_{RN}$ only decay polynomially in either $|u|$ or $v$. To deal with this, we in particular use a stronger integrated local energy decay estimate with a weight which is not smooth at the event horizon and the Cauchy horizon.\footnote{This estimate is reminiscent of the ``irregular red-shift vector field'' of Dafermos-Rodnianski \cite{DRS}. }
\item (Red-shift estimates) Moreover, in the proof of the ``red-shift'' estimates near the Cauchy horizon (Proposition \ref{prop:intr:red-shift}), the weaker bounds that we have for $\lambda$ compared to the Reissner--Nordstr\"om case give much less room for the argument. Here, it is crucial that we have obtained the improved estimate in Proposition \ref{dvr.improved}. (It is also for this estimate that we need to impose the condition $\alpha_0<4s-2$.)
\end{enumerate}

\subsection{Setting up the contradiction argument and notations used in this section}\label{sec.instab.setup}

%We now turn to the final part of the proof of Theorem \ref{decay.thm}, i.e., we prove Theorem \ref{final.blow.up.step}. We work in the interior of the black hole. The main idea is that since we are in spherical symmetry, we can solve the wave equation in the spacelike direction toward future timelike infinity $i^+$ and control the ``initial data term'' using the assumption of Theorem \ref{decay.thm}. In this context, the \emph{blue shift} effect which is at the first place the source of the instability is seen in the analysis as a \emph{red shift} effect, and we can therefore perform an iteration to prove the decay of $\phi$ along $\mathcal H^+$ (see related analysis in \cite{DRS, DRdS}).

Before we proceed, we set up some notation for this section. For $\tau\geq \tau_0$, where $\tau_0$ is a large parameter to be chosen later, let
\begin{equation*}
	\Gmm_{\tau} := \Gmm^{(1)}_{\tau} \cup \Gmm^{(2)}_{\tau},
\end{equation*}
where
\begin{equation*}
	\Gmm^{(1)}_{\tau} = 	\set{(-\tau, v) : v \geq \tau}, \quad
	\Gmm^{(2)}_{\tau} = \set{(u, \tau) : u \leq -\tau}.
\end{equation*}

We denote by $\EH$ the set $C_{-\infty}$ as in Theorem \ref{main.theorem.C0.stability} and denote by $\CH$ the boundary $\{V=1\}$ as in Proposition \ref{C0.1}. We also define
\begin{equation*}
	\CH(\tau_{1}, \tau_{2}) = \CH \cap \set{-\tau_{2} \leq u \leq -\tau_{1}}, \quad
	\EH(\tau_{1}, \tau_{2}) = \EH \cap \set{\tau_{1} \leq v \leq \tau_{2}}.
\end{equation*}
Denote by $\calD(\tau_{1}, \tau_{2})$ the region bounded by $\Gmm^{(1)}_{\tau_{1}}$, $\Gmm^{(2)}_{\tau_{1}}$, $\CH(\tau_{1}, \tau_{2})$, $\EH(\tau_{1}, \tau_{2})$, $\Gmm^{(1)}_{\tau_{2}}$, $\Gmm^{(2)}_{\tau_{2}}$. A Penrose diagram representation of these objects is provided in Figure~\ref{fig:interior}. Note that we will be integrating on the sets $\calD(\tau_{1}, \tau_{2})$, $\Gmm^{(1)}_{\tau_{1}}$, $\Gmm^{(2)}_{\tau_{1}}$, $\CH(\tau_{1}, \tau_{2})$, $\EH(\tau_{1}, \tau_{2})$, $\Gmm^{(1)}_{\tau_{2}}$, $\Gmm^{(2)}_{\tau_{2}}$. 

\begin{figure}[h]
\begin{center}
\def\svgwidth{220px}
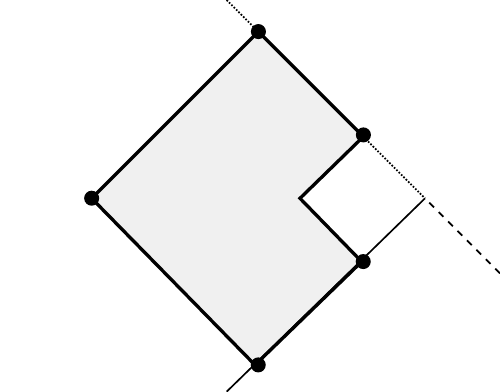 
\caption{} \label{fig:interior}
\end{center}
\end{figure}

We introduce the following conventions for integration: On the null hypersurfaces $\Gmm^{(1)}_\tau$ and $\EH$, we integrate with respect to the measure $dv$; on the null hypersurfaces $\Gmm^{(2)}_\tau$ and $\CH$, we integrate with respect to the measure $du$; in the spacetime region $\calD(\tau_1,\tau_2)$, we integrate with respect to the measure $du\, dv$. Notice that the measure $du\, dv$ is \underline{not} the volume form induced by the metric. To improve readability, we will use the convention that $\int$ denotes an integral along a null hypersurface while $\iint$ denotes an integral in a spacetime region.

We will prove \eqref{blow.up.interior} in Theorem \ref{final.blow.up.step} by contradiction. To show \eqref{blow.up.interior}, it suffices to prove that there exists $\tau_0$ sufficiently large such that $\int_{\Gmm^{(1)}_{\tau_0'}} \log_+^{\alp_{0}}(\f{1}{\Omg}) (\rd_{v} \phi)^{2}=\infty$ for every $\tau_0'\geq \tau_0$. Assume for the sake of contradiction that for every $\tau_0$, there exists $\tau_0'\geq \tau_0$ such that the following holds:
\begin{equation} \label{contra.2}
	E_{\tau_0'}:=\int_{\Gmm^{(1)}_{\tau_0'}} \log_+^{\alp_{0}}(\f{1}{\Omg}) (\rd_{v} \phi)^{2} <\infty
\end{equation}
for $\alp_0$ as in Theorem \ref{final.blow.up.step}.

Our goal now is to show that \eqref{contra.2} together with the estimates derived in Theorem \ref{main.theorem.C0.stability} lead to the bound
\begin{equation}\label{contra.2.goal}
	\int_{\EH\cap\{v\geq \tau_0'\}} \f{v^{\alp_0}}{\log^2(1+v)} (\rd_{v} \phi)^{2} < \infty,
\end{equation}
along $\mathcal H^+$ towards $i^{+}$, which would in particular imply that \eqref{final.blow.up.step.assumption} does not hold. Notice that for $\tau_0$ sufficiently large, $\mathcal D(\tau_0,\infty)$ is indeed in the region where the estimates in Theorem \ref{main.theorem.C0.stability} apply. In particular, near $\CH$, the weight $\log (\frac{1}{\Omg})$ is comparable to $u+v$ (with constants depending on $M, \bfe, E, s$) thanks to \eqref{eq:Omg-CH} and Theorem~\ref{main.theorem.C0.stability}.

Before we proceed, we make a notational convention in the rest of this subsection. {\bf In the remainder of this subsection, we will use $C$ to denote a general (large) constant that depends on $M$, ${\bf e}$, $E$, $s$ and $\alp_0$.} We will use the notation $C_{A_+,A_-}$ to denote a constant that depends on $A_-$, $A_+$, $M$, ${\bf e}$, $E$, $s$ and $\alp_0$ for $A_-$ and $A_+$ that will be defined below (see \eqref{chi1.def}, \eqref{chi2.def} and Propositions \ref{prop:intr:red-shift} and \ref{prop:intr:red-shift:EH}). We will also use $C^{-1}$ to denote a small constant.

In Propositions \ref{prop:intr:energy}-\ref{prop:intr:red-shift:final}, we will derive some estimates for the scalar field in the interior of the black hole. Then, in Proposition \ref{prop:intr:decay}, we will use the assumption \eqref{contra.2} to derive \eqref{contra.2.goal}. 

In the estimates for the scalar field, we will need to define some cutoff functions. Let $\chi_{-}$ and $\chi_{+}$ be smooth positive functions depending only on $u+v$ such that
\begin{equation}\label{chi1.def}
	\chi_{-}(u,v) = 
	\left\{
\begin{array}{cl}	
	1 & \hbox{ for } u+v\geq A_- \\
	0 & \hbox{ for } u+v\leq A_- -1,
\end{array} \right.
\end{equation}
and
\begin{equation}\label{chi2.def}
	\chi_{+}(u,v) = 
	\left\{
\begin{array}{cl}	
	1 & \hbox{ for } u+v\leq A_+ \\
	0 & \hbox{ for } u+v \geq A_+ +1,
\end{array}, \right.
\end{equation}
where $A_- >A_+$ are constants to be determined later. The supports of $\chi_{+}$ and $\chi_{-}$ are depicted in Figure~\ref{fig:interior}.

\subsection{Almost conserved energy}\label{sec.almost.energy}

We begin with an ``almost conserved energy'' in the direction of $i^{+}$. To derive this, we notice that solutions to the linear wave equation on the Reissner--Nordstr\"om spacetime obeys a conservation law and show that in the coupled setting, the deviation from exact Reissner--Nordstr\"om can be controlled.\footnote{In the coupled setting, there is also a conservation law associated to the renormalized Hawking mass. We do not apply this in our setting in particular because the renormalized Hawking mass may be infinite at the Cauchy horizon (see Remark \ref{rem.hawking.mass}).}
\begin{proposition} \label{prop:intr:energy}
Let $\gamma:=\min\{2, s-1, 4s-\alp_0-1\}$. The following holds for any $A_-, A_+\in \mathbb R$ with a constant $C_{A_-,A_+}>0$ depending on $A_-$ and $A_+$: There exists $\tau_0$ sufficiently large such that for every $\tau_{1}, \tau_{2}$ satisfying $\tau_0 \leq \tau_{1} \leq \tau_{2}$, we have\footnote{Notice that the for $\rd_u\phi$, the weight grows towards the event horizon and is degenerate towards the Cauchy horizon. For $\rd_v\phi$, the weight has the opposite behavior, i.e., it grows towards the Cauchy horizon and is degenerate towards the event horizon. As we will see below (see Propositions \ref{prop:intr:red-shift} and \ref{prop:intr:red-shift:EH}), these terms can be controlled since we have stronger bounds for $\rd_u\phi$ near the event horizon and for $\rd_v\phi$ near the Cauchy horizon.}
\begin{equation} \label{eq:intr:energy}
\begin{aligned}
&\hskip-2em
	\int_{\Gmm_{\tau_{2}}^{(1)}} r^{2} (\rd_{v} \phi)^{2} + \int_{\Gmm_{\tau_{2}}^{(2)}} r^{2} (\rd_{u} \phi)^{2} 
	+ \int_{\CH(\tau_{1}, \tau_{2})} r^{2} (\rd_{u} \phi)^{2} + \int_{\EH(\tau_1,\tau_2)} r^{2} (\rd_{v} \phi)^{2} \\
	\leq & \int_{\Gmm^{(1)}_{\tau_{1}}} r^{2} (\rd_{v} \phi)^{2} + \int_{\Gmm^{(2)}_{\tau_{1}}} r^{2} (\rd_{u} \phi)^{2} +C_{A_+,A_-}\tau_1^{-s}\iint_{\calD(\tau_{1}, \tau_{2})} \left( \log_+^{-\gamma}(\f{1}{\Omg})+\chi_+\log_+^\gamma(\f{1}{\Omg}) \right)(\rd_u\phi)^2\\
	&+C_{A_+,A_-}\tau_1^{-s}\iint_{\calD(\tau_{1}, \tau_{2})}\left( \log_+^{-\gamma}(\f{1}{\Omg})+\chi_-\log_+^\gamma(\f{1}{\Omg}) \right)(\rd_v\phi)^2.
\end{aligned}
\end{equation}
\end{proposition}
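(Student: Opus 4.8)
\textbf{Proof strategy for Proposition~\ref{prop:intr:energy}.} The plan is to derive the identity \eqref{eq:intr:energy} as an ``almost conservation law'' associated to the multiplier $\rd_t = \rd_v - \rd_u$ (the stationary Killing field of the background Reissner--Nordstr\"om interior), applied to the wave equation $\Box_g \phi = 0$ in the form \eqref{WW.SS}, i.e., $\rd_u \rd_v (r \phi) = (\rd_u \rd_v r) \phi$. First I would multiply this equation by $r \rd_v \phi$ (respectively $r \rd_u \phi$) and integrate by parts over $\calD(\tau_1, \tau_2)$. On exact Reissner--Nordstr\"om, this produces an \emph{exact} conservation law: the flux of $r^2 (\rd_v \phi)^2$ through outgoing segments plus the flux of $r^2 (\rd_u \phi)^2$ through ingoing segments is conserved (this is precisely the energy identity used in \cite{LO.instab}). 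In the coupled nonlinear setting, the error terms come from (a) the difference between $\rd_u \rd_v r$ and its Reissner--Nordstr\"om value $\rd_u \rd_v r_{RN}$, and (b) the fact that $r$, $\rd_u r$, $\rd_v r$, $\Omg^2$ deviate from their background values. The key point is that by Theorem~\ref{main.theorem.C0.stability} (with $s > 2$), all these deviations decay: $|\Psi| \lesssim |u|^{-s+1}$, $|\rd_v \Psi| \lesssim v^{-s}$, and $|\rd_u \Psi| \lesssim |u|^{-s}$ (away from the event horizon) or $\lesssim \Omg_{RN}^2 v^{-s}$ (near it). Thus each error term carries at least a factor $\tau_1^{-s}$ in $\calD(\tau_1, \tau_2)$ once $\tau_1$ is large.

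The main technical work is bookkeeping the weights in the error terms. When I integrate by parts, the error terms are spacetime integrals of the schematic form $\iint (\text{deviation}) \cdot (\rd_u \phi)(\rd_v \phi) \, du\, dv$ and $\iint (\text{deviation}) \cdot (\rd_u \phi)^2$, $\iint (\text{deviation}) \cdot (\rd_v \phi)^2$. Using \eqref{eq:Omg-CH} (so that $\log(1/\Omg) \sim u+v$ near $\CH$) and the corresponding relation near $\EH$, together with the quantitative decay of the metric coefficients, I would split each error integral into a region near $\EH$ (where $u+v$ is bounded, say $u+v \le A_+$, on which $\chi_+ \equiv 1$), a region near $\CH$ (where $u+v$ is large, say $u+v \ge A_-$, on which $\chi_- \equiv 1$), and a middle region. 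In the middle and near-$\EH$ regions I want to absorb the $(\rd_u\phi)^2$ terms with a growing weight $\log_+^\gamma(1/\Omg)$ (harmless because $\Omg^2$ is bounded below there and the weight is controlled by $\chi_+ \log_+^\gamma$), and symmetrically the $(\rd_v\phi)^2$ terms near $\CH$ with $\chi_- \log_+^\gamma$; in the remaining region I can afford the \emph{degenerate} weight $\log_+^{-\gamma}(1/\Omg)$. The choice $\gamma = \min\{2, s-1, 4s - \alp_0 - 1\}$ is dictated by exactly how much of a weight gain/loss the decay rates in Theorem~\ref{main.theorem.C0.stability} (and the refined bound Proposition~\ref{dvr.improved}, which controls $\rd_v r$ near $\CH$) can tolerate: the term $s-1$ comes from the polynomial decay of $\Psi$, the $2$ from $\rd_v r$-type deviations, and $4s - \alp_0 - 1$ from the $\Omg^2 v^{-2s}$ factor in Proposition~\ref{dvr.improved} balanced against the $\log_+^{\alp_0}$ weight needed later. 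The non-smoothness of the weights at $\EH$ and $\CH$ (as flagged in Section~\ref{idea.instab}) does not cause a genuine difficulty here because at this stage we are only establishing an \emph{inequality}: we throw away good (negative-signed) bulk terms and retain the fluxes with positive sign.

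In more detail, the boundary terms of the integration by parts are: on the outgoing pieces $\Gmm^{(1)}_{\tau_i}$ and on $\EH(\tau_1,\tau_2)$, a flux $\sim \int r^2 (\rd_v \phi)^2$ (plus lower-order corrections controlled by the same decay estimates, which can be absorbed into the $\iint$ error terms or into the main fluxes with slightly adjusted constants); on the ingoing pieces $\Gmm^{(2)}_{\tau_i}$ and on $\CH(\tau_1, \tau_2)$, a flux $\sim \int r^2 (\rd_u \phi)^2$. One must be slightly careful that the fluxes on $\Gmm^{(1)}_{\tau_2}$, $\Gmm^{(2)}_{\tau_2}$, $\CH(\tau_1, \tau_2)$ and $\EH(\tau_1, \tau_2)$ appear with the correct (good) sign so they can be moved to the left-hand side; this is exactly the statement that $\rd_t$ is timelike-or-null-future-directed in the relevant region after the perturbation, which follows from $\Omg^2 > 0$ and $|\Psi|$ small. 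The coefficient $r^2$ is comparable to $r_{RN}^2 \in [r_-^2, r_+^2]$ by the $C^0$-closeness, so it may be freely inserted or removed up to harmless constants. Finally, I would remark that the cutoffs $\chi_\pm$ appear only through their \emph{support} in the error terms (not their derivatives), because they are introduced merely to record which region each term lives in; genuine commutator terms involving $\rd(\chi_\pm)$ will be handled in the subsequent integrated-local-energy-decay and red-shift propositions, not here.

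\textbf{Expected main obstacle.} The hardest part is not any single estimate but the simultaneous balancing of all the weights: one must verify that every error term that arises — there are many, coming from the five or so distinct deviations of the geometry from Reissner--Nordstr\"om and from the three types of quadratic expressions in $\rd\phi$ — can indeed be bounded by the right-hand side of \eqref{eq:intr:energy} with the \emph{single} exponent $\gamma = \min\{2, s-1, 4s - \alp_0 - 1\}$ and the \emph{single} power $\tau_1^{-s}$, using only the quantitative rates of Theorem~\ref{main.theorem.C0.stability} and Proposition~\ref{dvr.improved}. In particular, the near-$\CH$ analysis of terms involving $\rd_v r = \lambda$ is delicate because, unlike on Reissner--Nordstr\"om, $\lambda$ is \emph{not} controlled by $\Omg^2$ (cf.\ the discussion in Section~\ref{idea.instab}), and it is precisely the refined bound $|\rd_v r| \lesssim \Omg_{RN}^2 + v^{-2s}$ from Proposition~\ref{dvr.improved} — together with the constraint $\alp_0 < 4s - 2$ — that makes the $4s - \alp_0 - 1$ entry in $\gamma$ positive and hence the scheme closeable.
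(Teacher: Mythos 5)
Your plan is conceptually the same as the paper's: derive an energy identity from a $\rd_t$-type multiplier and bound the error, which decays by Theorem~\ref{main.theorem.C0.stability}, using the three-region decomposition tied to $\chi_\pm$. Three points of divergence from the paper's argument are worth flagging. First, the paper does not use the $r\phi$-form of the equation; it multiplies the form $\rd_u\rd_v\phi + \frac{\dvr}{r}\rd_u\phi + \frac{\dur}{r}\rd_v\phi = 0$ by $r^2(\rd_u\phi - \rd_v\phi)$. This choice is carefully engineered so that the \emph{only} surviving bulk term is the single cross term $\iint r(\dvr - \dvr_{RN} - \dur + \dur_{RN})\rd_u\phi\rd_v\phi$, which vanishes identically on exact Reissner--Nordstr\"om because $\dvr_{RN} = \dur_{RN}$ in the $(u,v)$ gauge; there are no $(\rd_u\phi)^2$ or $(\rd_v\phi)^2$ bulk terms to absorb. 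With the $r\phi$-form and the multiplier you describe, the natural fluxes are $\int(\rd_v(r\phi))^2$ rather than $\int r^2(\rd_v\phi)^2$, seeing the exact conservation on the background already requires an additional integration by parts (the bulk term $\iint (\rd_u\rd_v r_{RN})\phi(\rd_v(r\phi) - \rd_u(r\phi))$ cancels only after one more pass), and the error terms will carry undifferentiated $\phi$ — all manageable given the decay of $\phi$, but genuine extra bookkeeping that the paper's multiplier sidesteps. Second, the sign of the fluxes is automatic: they are $\int r^2(\rd_v\phi)^2 \geq 0$ and $\int r^2(\rd_u\phi)^2 \geq 0$ manifestly, so your appeal to timelikeness of $\rd_t$ plus smallness of $\Psi$ is unnecessary (and in any case would be circular at this stage, since smallness of $\Psi$ would still leave the timelikeness of $\rd_t$ marginal at the horizons). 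Third, Proposition~\ref{dvr.improved} is \emph{not} used in the proof of this Proposition; the paper needs only $|\dvr - \dvr_{RN}| + |\dur - \dur_{RN}| \lesssim \tau_1^{-s}$ from Theorem~\ref{main.theorem.C0.stability}. Proposition~\ref{dvr.improved} (and the constraint $\alp_0 < 4s - 2$ encoded in the $4s-\alp_0-1$ entry of $\gamma$) enters only later, in the red-shift estimate near $\CH$ (Proposition~\ref{prop:intr:red-shift}), where the absence of pointwise control of $\dvr$ by $\Omg^2$ is the real issue.
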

\begin{proof} 
Integrating by parts over $\calD(\tau_{1}, \tau_{2})$ the identity
\begin{equation*}
	0 = r^{2} (\rd_{u} \phi - \rd_{v} \phi) \bb( \rd_{u} \rd_{v} \phi + \frac{\dvr}{r} \rd_{u} \phi + \frac{\dur}{r} \rd_{v} \phi \bb),
\end{equation*} 
and noticing that $\dvr_{RN} = \dur_{RN}$ in our coordinates, we get
\begin{equation*} 
\begin{aligned}
&\hskip-2em
	\int_{\Gmm_{\tau_{2}}^{(1)}} r^{2} (\rd_{v} \phi)^{2} + \int_{\Gmm_{\tau_{2}}^{(2)}} r^{2} (\rd_{u} \phi)^{2} 
	+ \int_{\CH(\tau_{1}, \tau_{2})} r^{2} (\rd_{u} \phi)^{2} + \int_{\EH} r^{2} (\rd_{v} \phi)^{2} \\
	= &  \int_{\Gmm^{(1)}_{\tau_{1}}} r^{2} (\rd_{v} \phi)^{2} + \int_{\Gmm^{(2)}_{\tau_{1}}} r^{2} (\rd_{u} \phi)^{2} + \iint_{\calD(\tau_{1}, \tau_{2})} r(\dvr-\dvr_{RN}-\dur+\dur_{RN} )(\rd_u\phi)(\rd_v\phi).
\end{aligned}
\end{equation*}
To handle the last term, we divide the integral into three pieces:
$$\iint_{\calD(\tau_{1}, \tau_{2})}=\underbrace{\iint_{\calD(\tau_{1}, \tau_{2})\cap\{u+v\leq A_+\}}}_{=:I}+\underbrace{\iint_{\calD(\tau_{1}, \tau_{2})\cap\{A_+<u+v< A_-\}}}_{=:II}+\underbrace{\iint_{\calD(\tau_{1}, \tau_{2})\cap\{u+v\geq A_-\}}}_{=:III}.$$
For each of these pieces, we use the fact that $r$ is bounded, as well as
$$\sup_{(u,v)\in \calD(\tau_1,\infty)}\left(|\dvr-\dvr_{RN}|+|\dur-\dur_{RN}|\right)(u,v)\leq C\tau_1^{-s}$$
(both of which follow from Theorem~\ref{main.theorem.C0.stability}), and apply the Cauchy-Schwarz inequality. Firstly, for the term $I$ localized near the event horizon, we have the bound
\begin{equation*}
\begin{split}
&\iint_{\calD(\tau_{1}, \tau_{2})\cap\{u+v\leq A_+\}} r(|\dvr-\dvr_{RN}|+|\dur-\dur_{RN}| )|\rd_u\phi||\rd_v\phi|\\
\leq &C_{A_+,A_-}\tau_1^{-s}\left(\iint_{\calD(\tau_{1}, \tau_{2})} \chi_+\log_+^{\gamma}(\f{1}{\Omg})(\rd_u\phi)^2+\iint_{\calD(\tau_{1}, \tau_{2})} \log_+^{-\gamma}(\f{1}{\Omg})(\rd_v\phi)^2\right).
\end{split}
\end{equation*}
For the second term $II$, since we have upper and lower bounds for $r$ and $\Omg$ (depending on $A_+$ and $A_-$) thanks to the Reissner--Nordstr\"om computation in Section~\ref{sec.geometry} and Theorem~\ref{main.theorem.C0.stability}, we can put in weights that degenerate to obtain
\begin{equation*}
\begin{split}
&\iint_{\calD(\tau_{1}, \tau_{2})\cap\{A_+<u+v< A_-\}} r(|\dvr-\dvr_{RN}|+|\dur-\dur_{RN}| )|\rd_u\phi||\rd_v\phi|\\
\leq &C_{A_+,A_-}\tau_1^{-s}\left(\iint_{\calD(\tau_{1}, \tau_{2})} \log_+^{-\gamma}(\f{1}{\Omg})(\rd_u\phi)^2+\iint_{\calD(\tau_{1}, \tau_{2})} \log_+^{-\gamma}(\f{1}{\Omg})(\rd_v\phi)^2\right).
\end{split}
\end{equation*}
Finally, the third integral $III$, i.e., the term localized near the Cauchy horizon, we have 
\begin{equation*}
\begin{split}
&\iint_{\calD(\tau_{1}, \tau_{2})\cap\{u+v\geq A_-\}} r(|\dvr-\dvr_{RN}|+|\dur-\dur_{RN}| )|\rd_u\phi||\rd_v\phi|\\
\leq &C_{A_+,A_-}\tau_1^{-s}\left(\iint_{\calD(\tau_{1}, \tau_{2})} \log_+^{-\gamma}(\f{1}{\Omg})(\rd_u\phi)^2+\iint_{\calD(\tau_{1}, \tau_{2})} \chi_-\log_+^\gamma(\f{1}{\Omg})(\rd_v\phi)^2\right).
\end{split}
\end{equation*}
Combining, we obtain
\begin{equation}\label{intr.cross.bulk}
\begin{split}
&\iint_{\calD(\tau_{1}, \tau_{2})} r(|\dvr-\dvr_{RN}|+|\dur-\dur_{RN}| )|\rd_u\phi||\rd_v\phi|\\
\leq &C_{A_+,A_-}\tau_1^{-s}\iint_{\calD(\tau_{1}, \tau_{2})} \left( \log_+^{-\gamma}(\f{1}{\Omg})+\chi_+\log_+^{\gamma}(\f{1}{\Omg}) \right)(\rd_u\phi)^2\\
	&+C_{A_+,A_-}\tau_1^{-s}\iint_{\calD(\tau_{1}, \tau_{2})} \left( \log_+^{-\gamma}(\f{1}{\Omg})+\chi_-\log_+^{\gamma}(\f{1}{\Omg}) \right)(\rd_v\phi)^2.
\end{split}
\end{equation}
This concludes the proof of the proposition.
\end{proof}

\subsection{Integrated local energy decay estimate}\label{sec.ILED}

Using the energy inequality, we now establish an integrated local energy decay estimate.
\begin{proposition} \label{prop:intr:ILED}
Let $\gamma:=\min\{2, s-1, 4s-\alp_0-1\}$. The following holds for any $A_-, A_+\in \mathbb R$ with a constant $C_{A_-,A_+}>0$ depending on $A_-$ and $A_+$: For $\tau_0$ sufficiently large, and for every $\tau_{1}, \tau_{2}$ such that $\tau_0 \leq \tau_{1} \leq \tau_{2}$, we have
\begin{equation} \label{eq:intr:ILED}
\begin{split}
	&\iint_{\calD(\tau_{1}, \tau_{2})} \log_+^{-\gamma}(\f 1{\Omg}) \bb( (\rd_{u} \phi)^{2} + (\rd_{v} \phi)^{2} \bb)\\
	\leq &C_{A_+,A_-} \bb( \int_{\Gmm^{(1)}_{\tau_{1}}} (\rd_{v} \phi)^{2} + \int_{\Gmm^{(2)}_{\tau_{1}}} (\rd_{u} \phi)^{2} \bb)+C_{A_+,A_-}\tau_1^{-s}\iint_{\calD(\tau_{1}, \tau_{2})} \chi_+\log_+^{\gamma}(\f{1}{\Omg})(\rd_u\phi)^2\\
	&+C_{A_+,A_-}\tau_1^{-s}\iint_{\calD(\tau_{1}, \tau_{2})} \chi_-\log_+^{\gamma}(\f{1}{\Omg})(\rd_v\phi)^2.
\end{split}
\end{equation}
\end{proposition}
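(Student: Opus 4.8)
\textbf{Proof proposal for Proposition~\ref{prop:intr:ILED}.}

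The plan is to run a multiplier (vector field) argument for the wave equation $\rd_u\rd_v\phi = -\frac{\dvr}{r}\rd_u\phi - \frac{\dur}{r}\rd_v\phi$ with a multiplier of the form $f(u,v)\,(\rd_u\phi - \rd_v\phi)$ (a ``$K$''-type or Morawetz-type current adapted to the interior), where $f$ is a suitable function of $u+v$ that is \emph{not} smooth at $\EH$ and $\CH$ — concretely something like $f \aeq \log_+^{-\gamma}(\tfrac{1}{\Omg})$ (equivalently, polynomially weighted in $(u+v)$ near the Cauchy horizon and bounded near the event horizon), mirroring the ``irregular red-shift'' weights of Dafermos--Rodnianski. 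On exact Reissner--Nordstr\"om in the $(u,v)$ coordinates one has $\dvr_{RN}=\dur_{RN}$ and the corresponding current produces a spacetime bulk term with a definite sign controlling $\int\!\!\int \log_+^{-\gamma}(\tfrac{1}{\Omg})\big((\rd_u\phi)^2+(\rd_v\phi)^2\big)$. First I would carry out this computation on the background, choosing $f$ and its derivatives so that the bulk term has the right weight and sign and so that the boundary contributions on $\EH$ and $\CH$ have favorable signs (or are controlled by the flux terms $\int_{\EH}r^2(\rd_v\phi)^2$, $\int_{\CH}r^2(\rd_u\phi)^2$ already bounded in Proposition~\ref{prop:intr:energy}). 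The boundary terms on $\Gmm^{(i)}_{\tau_2}$ likewise get absorbed by Proposition~\ref{prop:intr:energy}, while those on $\Gmm^{(i)}_{\tau_1}$ produce the first term $C_{A_+,A_-}\big(\int_{\Gmm^{(1)}_{\tau_1}}(\rd_v\phi)^2 + \int_{\Gmm^{(2)}_{\tau_1}}(\rd_u\phi)^2\big)$ on the right-hand side (after using the uniform bounds on $r$ and $\Omg$ near the corners to convert $r^2$ weights into unweighted fluxes, which costs a constant depending on $A_+,A_-$).

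The second step is to treat the error terms coming from the fact that the true solution is only \emph{close} to Reissner--Nordstr\"om. Writing $\dvr = \dvr_{RN} + (\dvr - \dvr_{RN})$ and $\dur = \dur_{RN} + (\dur-\dur_{RN})$ in the wave equation, the multiplier identity picks up extra spacetime integrals of the schematic form $\int\!\!\int f\,|\dvr-\dvr_{RN}|\,|\rd_u\phi - \rd_v\phi|\,|\rd_u\phi|$ and similarly with $\rd_v\phi$. By Theorem~\ref{main.theorem.C0.stability}, $|\dvr-\dvr_{RN}| + |\dur-\dur_{RN}| \leq C\tau_1^{-s}$ on $\calD(\tau_1,\infty)$, so each such error is bounded by $C\tau_1^{-s}$ times a weighted $\int\!\!\int$ of $(\rd_u\phi)^2 + (\rd_v\phi)^2$. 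Near $\CH$ the weight $f$ is large, so we split the region exactly as in Proposition~\ref{prop:intr:energy}: in $\{u+v\le A_+\}$ and near $\EH$ the $(\rd_v\phi)^2$ term gets the degenerate weight $\log_+^{-\gamma}$ while the $(\rd_u\phi)^2$ term carries the growing $\chi_+\log_+^\gamma$ weight; symmetrically near $\CH$; in the compact middle region $\{A_+ < u+v < A_-\}$ both get $\log_+^{-\gamma}$. The $\log_+^{-\gamma}(\rd_u\phi)^2$ and $\log_+^{-\gamma}(\rd_v\phi)^2$ pieces of these errors are then absorbed into the left-hand side by taking $\tau_0$ large enough (since they come with a $\tau_1^{-s}$ prefactor), and what remains are precisely the two terms $C_{A_+,A_-}\tau_1^{-s}\int\!\!\int \chi_+\log_+^{\gamma}(\tfrac1\Omg)(\rd_u\phi)^2$ and $C_{A_+,A_-}\tau_1^{-s}\int\!\!\int \chi_-\log_+^{\gamma}(\tfrac1\Omg)(\rd_v\phi)^2$ on the right. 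I would also need to control the nonlinear cross term involving $(\dvr-\dvr_{RN})-(\dur-\dur_{RN})$ in the same way; this is identical in structure to the bulk term handled in \eqref{intr.cross.bulk} of Proposition~\ref{prop:intr:energy}, so it re-uses that estimate verbatim.

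The role of the definition $\gamma := \min\{2, s-1, 4s - \alp_0 - 1\}$ is to guarantee that the weighted bulk quantities appearing on both sides are finite and that the degeneracy is strong enough that all ``lower order'' error integrals (in particular those coming from $\dvr-\dvr_{RN}$ near $\CH$, where the extra polynomial growth of the weight competes against the $\tau_1^{-s}$ gain and against the finiteness of $\int_\CH r^2(\rd_u\phi)^2$) actually converge; the constraint $\gamma \le 4s - \alp_0 - 1$ is what forces the restriction $\alp_0 < 4s-2$ in Theorem~\ref{final.blow.up.step}. I expect the \textbf{main obstacle} to be the precise bookkeeping of the weights at the two horizons: one must choose $f$ so that the background bulk term is coercive with weight exactly $\log_+^{-\gamma}$, the boundary terms on $\EH$ and $\CH$ do not have the wrong sign (this is where the non-smoothness of $f$ at the horizons is essential — a smooth weight would fail, just as in the red-shift estimates), and simultaneously the perturbative errors only generate the two $\chi_{\pm}\log_+^{\gamma}$ integrals and nothing worse. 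Once the background multiplier and its derivative are pinned down, the perturbative part is routine Cauchy--Schwarz plus the decay from Theorem~\ref{main.theorem.C0.stability}, exactly as in Proposition~\ref{prop:intr:energy}. The $\chi_{\pm}\log_+^\gamma$ terms left on the right-hand side are not yet controlled here; they will be absorbed later by coupling \eqref{eq:intr:ILED} with the red-shift estimates near $\EH$ and $\CH$ (Propositions~\ref{prop:intr:red-shift} and~\ref{prop:intr:red-shift:EH}) and closing the whole system in Section~\ref{sec.instab.together}.
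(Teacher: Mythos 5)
Your outline correctly identifies the overall strategy — a Morawetz-type multiplier in the $u+v$ direction, non-smooth at the two horizons, with perturbative errors handled by the $\chi_\pm$ decomposition from Proposition~\ref{prop:intr:energy} and absorbed into the left-hand side using the $\tau_1^{-s}$ prefactor. Two of your concrete choices, however, do not work as stated, and the second one in particular is what the paper's proof is designed to get around.

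First, the multiplier is $w^N(\rd_u\phi + \rd_v\phi)$ in the paper, not $f(\rd_u\phi - \rd_v\phi)$. With a weight that is a function of $u+v$ only (so $\rd_u f = \rd_v f$), the multiplier $\rd_u\phi - \rd_v\phi$ produces, after integrating the $\rd_u\rd_v\phi$ term by parts, a bulk term $-\tfrac12 f'(u+v)\big[(\rd_u\phi)^2 - (\rd_v\phi)^2\big]$, which has \emph{opposite} signs on the two gradient pieces and is therefore not coercive. Likewise the $\dvr_{RN}=\dur_{RN}$ lower-order terms give $\iint f\,\tfrac{\dvr_{RN}}{r}\big[(\rd_u\phi)^2 - (\rd_v\phi)^2\big]$, again indefinite. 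You have the energy multiplier (used with $r^2(\rd_u\phi - \rd_v\phi)$ in Proposition~\ref{prop:intr:energy}) where the Morawetz/$\rd_{r^*}$ multiplier $\rd_u\phi + \rd_v\phi$ is required.

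Second, taking the weight itself to be $f \aeq \log_+^{-\gamma}(\tfrac1\Omg)$ does not give a coercive bulk term with weight $\log_+^{-\gamma}$. Since $\log_+(\tfrac1\Omg) \aeq |u+v|$, the coercive contribution is $-\rd f \aeq |u+v|^{-\gamma-1} \aeq \log_+^{-\gamma-1}(\tfrac1\Omg)$, one power worse than what \eqref{eq:intr:ILED} asserts. The paper instead uses $f = w^N$ where $w$ is a \emph{bounded} $C^1$ function with $1 \leq w \leq 2$ whose derivative has the target weight $\rd_u w = \rd_v w \aeq -\log_+^{-\gamma}(\tfrac1\Omg)$ (this is where $\gamma > 1$ enters: the defining integral of $w$ must converge). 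Then $\rd(w^N) = N w^{N-1}\rd w$ gives coercive bulk at weight $\log_+^{-\gamma}$, and — crucially — the free parameter $N$ provides the absorption mechanism you are missing: the lower-order terms coming from $\dvr_{RN}, \dur_{RN} \aeq -\Omg^2$ contribute a sign-indefinite bulk $\iint w^N\Omg^2\big((\rd_u\phi)^2 + (\rd_v\phi)^2\big)$, which is dominated by the coercive term only because the latter carries an extra factor of $N$ (and because $\Omg^2 \lesssim \log_+^{-\gamma}(\tfrac1\Omg)$). This is not mere bookkeeping; without a device like the power $N$, the $\Omg^2$-weighted background bulk and the $\log_+^{-\gamma}$-weighted coercive bulk are comparable in the middle region $A_+ < u+v < A_-$ and the estimate does not close. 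Once the multiplier and the weight $w^N$ are in place, the remainder of your argument — Cauchy--Schwarz with the $\chi_\pm$ decomposition, the pointwise bounds $|\lambda - \lambda_{RN}|, |\nu - \nu_{RN}| \lesssim \tau_1^{-s}$ from Theorem~\ref{main.theorem.C0.stability}, absorbing the $\log_+^{-\gamma}$-weighted errors by taking $\tau_0$ large, controlling the boundary fluxes by Proposition~\ref{prop:intr:energy}, and deferring the $\chi_\pm\log_+^\gamma$ terms to the red-shift estimates — proceeds exactly as in the paper.
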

\begin{proof} 
The assumption implies that $\gamma>1$. We can thus define $w(u,v)$ to be a positive $C^1$ function as follows:
$$w(u,v)=2-\f{\gamma-1}{2}\int_{-\infty}^{u+v} \f{dx}{(1+|x|)^{\gamma}}.$$
Consequently, $w$ satisfies
$$1\leq w\leq 2,\quad \rd_u w\leq -C_{w}^{-1}\log_+^{-\gamma}(\f 1{\Omg}),\quad \rd_v w\leq -C_{w}^{-1}\log_+^{-\gamma}(\f 1{\Omg})$$
for some positive constant $C_w$.

Given $N \geq 0$ to be chosen, consider
\begin{equation}\label{intr.ILED.bulk.main}
\begin{split}
	0 =& \iint_{\calD(\tau_{1}, \tau_{2})}	
			w^{N} (\rd_{u} \phi + \rd_{v} \phi) \bb( \rd_{u} \rd_{v} \phi + \frac{\dvr}{r} \rd_{u} \phi + \frac{\dur}{r} \rd_{v} \phi \bb) \, \ud u \ud v \\
	=& \iint_{\calD(\tau_{1}, \tau_{2})} - \frac{N}{2} (\rd_{v} w) w^{N-1} (\rd_{u} \phi)^{2} - \frac{N}{2} (\rd_{u} w) w^{N-1} (\rd_{v} \phi)^{2}\\
		& + \iint_{\calD(\tau_{1}, \tau_{2})} \f{w^{N}}{r} \bb( \lambda(\rd_{u} \phi)^{2} + \nu(\rd_{v} \phi)^{2} + (\dvr+\dur) \rd_{u} \phi \rd_{v} \phi \bb) + (\hbox{boundary terms})\\
	\geq & \f{C_{w}^{-1} N}2 \iint_{\calD(\tau_{1}, \tau_{2})} w^{N-1} \log_+^{-\gamma}(\f 1{\Omg}) \left((\rd_{u} \phi)^{2} + (\rd_{v} \phi)^{2}\right)-C \iint_{\calD(\tau_{1}, \tau_{2})} w^N \Omg^2\left( (\rd_u\phi)^2+(\rd_v\phi)^2\right)\\
	& -C \iint_{\calD(\tau_{1}, \tau_{2})} \f{w^{N}}{r} \bb( |\lambda-\lambda_{RN}|(\rd_{u} \phi)^{2} + |\nu-\nu_{RN}|(\rd_{v} \phi)^{2}\bb)\\
	&-C\iint_{\calD(\tau_{1}, \tau_{2})} (|\dvr-\dvr_{RN}|+|\nu-\nu_{RN}|) |\rd_{u} \phi \rd_{v} \phi| - |\hbox{boundary terms}|,
\end{split}
\end{equation}
where in the last line we have used that $\lambda_{RN}=\nu_{RN}=-\Omg^2_{RN}$ and that $\Omg_{RN}$ and $\Omg$ are comparable. 
Choosing $N$ sufficiently large, we see that the first line on the right hand side controls (up to a constant factor) the following:
\begin{equation}\label{intr.ILED.bulk.1}
\begin{split}
&\f{C_{w}^{-1} N}2 \iint_{\calD(\tau_{1}, \tau_{2})} w^{N-1} \log_+^{-\gamma}(\f 1{\Omg}) \left((\rd_{u} \phi)^{2} + (\rd_{v} \phi)^{2}\right)-C \iint_{\calD(\tau_{1}, \tau_{2})} w^N \Omg^2\left( (\rd_u\phi)^2+(\rd_v\phi)^2\right)\\
&\geq C^{-1}	N\iint_{\calD(\tau_{1}, \tau_{2})} w^{N} \log_+^{-\gamma}(\f 1{\Omg}) \left( (\rd_{u} \phi)^{2} + (\rd_{v} \phi)^{2} \right).
\end{split}
\end{equation}
It remains to bound the bulk terms containing $|\dvr-\dvr_{RN}|$ and $|\dur-\dur_{RN}|$ and the boundary terms. We consider each of the bulk terms. Firstly, by Theorem \ref{main.theorem.C0.stability},
\begin{equation}\label{ILED.dvrdiff.1}
|\lambda-\lambda_{RN}|(u,v)\leq C v^{-s}
\end{equation}
and that for $u+v\geq A_+$ and $v \geq \tau_{1}$, we additionally have
\begin{equation}\label{ILED.dvrdiff.2}
|\lambda-\lambda_{RN}|(u,v)\leq C_{A_+} \tau_{1}^{-(s - \gmm)} \log_+^{-\gmm}(\f 1{\Omg})(u,v),
\end{equation}
since $\gmm \leq 2 < s$.
Therefore,
\begin{equation}\label{intr.ILED.bulk.2}
\begin{split}
\iint_{\calD(\tau_{1}, \tau_{2})} \f{w^{N}}{r}  |\lambda-\lambda_{RN}|(\rd_{u} \phi)^{2} 
\leq &C_{A_+}\iint_{\calD(\tau_{1}, \tau_{2})} w^N (\tau_{1}^{-(s-\gmm)} \log_+^{-\gmm}(\f{1}{\Omg})+\tau_1^{-s}\chi_+) (\rd_{u} \phi)^{2} .
\end{split}
\end{equation}
Here, we have used \eqref{ILED.dvrdiff.1} and $v\geq \tau_1$ for $u+v< A_+$ and used \eqref{ILED.dvrdiff.2} for $u+v\geq A_+$.
Similarly, by Theorem \ref{main.theorem.C0.stability}, $|\nu-\nu_{RN}|(u,v)\leq C|u|^{-s}$. Hence, for $u+v\leq A_-$  and $-u \geq \tau_{1}$, we have
$$|\nu-\nu_{RN}|(u,v)\leq C_{A_-} \tau_{1}^{-(s- \gmm)} \log_+^{-\gmm}(\f 1{\Omg})(u,v).$$
Consequently,
\begin{equation}\label{intr.ILED.bulk.3}
\begin{split}
\iint_{\calD(\tau_{1}, \tau_{2})} \f{w^{N}}{r}  |\nu-\nu_{RN}|(\rd_{v} \phi)^{2} 
\leq &C_{A_-}\iint_{\calD(\tau_{1}, \tau_{2})} w^N (\tau_{1}^{-(s-\gmm)} \log_+^{-\gmm}(\f{1}{\Omg})+\tau_1^{-s}\chi_-) (\rd_{v} \phi)^{2} .
\end{split}
\end{equation}
Finally, for the remaining bulk term, we estimate it in a similar manner as \eqref{intr.cross.bulk} to get
\begin{equation}\label{intr.ILED.bulk.4}
\begin{split}
&\iint_{\calD(\tau_{1}, \tau_{2})} \f{w^{N}}{r}  (|\dvr-\dvr_{RN}|+|\nu-\nu_{RN}|) |\rd_{u} \phi \rd_{v} \phi| \\
\leq &C_{A_+,A_-}\tau_1^{-s}\iint_{\calD(\tau_{1}, \tau_{2})} w^N \left( \log_+^{-\gamma}(\f{1}{\Omg})+\chi_+\log_+^{\gamma}(\f{1}{\Omg})\right)(\rd_u\phi)^2\\
	&+C_{A_+,A_-}\tau_1^{-s}\iint_{\calD(\tau_{1}, \tau_{2})} w^N \left( \log_+^{-\gamma}(\f{1}{\Omg})+\chi_-\log_+^{\gamma}(\f{1}{\Omg}) \right)(\rd_v\phi)^2.
\end{split}
\end{equation}
Without loss of generality, we may assume that $1 \leq \tau_{0} \leq \tau_{1}$, which implies $\tau_{1}^{-s} \leq \tau_{1}^{-(s - \gmm)}$.
Combining the estimates in \eqref{intr.ILED.bulk.1}, \eqref{intr.ILED.bulk.2}, \eqref{intr.ILED.bulk.3} and \eqref{intr.ILED.bulk.4}, we then obtain
\begin{equation*}
\begin{split}
	&|\hbox{boundary terms}| \\
	\geq & C^{-1} N \iint_{\calD(\tau_{1}, \tau_{2})} w^{N-1} \log_+^{-\gamma}(\f 1{\Omg}) \left((\rd_{u} \phi)^{2} + (\rd_{v} \phi)^{2}\right)\\
	&-C_{A_+,A_-}\iint_{\calD(\tau_{1}, \tau_{2})} w^N \left( \tau_1^{-(s-\gmm)}\log_+^{-\gamma}(\f{1}{\Omg})+ \tau_1^{-s} \chi_+\log_+^{\gamma}(\f{1}{\Omg}) \right)(\rd_u\phi)^2\\
	&-C_{A_+,A_-}\iint_{\calD(\tau_{1}, \tau_{2})} w^N \left( \tau_1^{-(s-\gmm)} \log_+^{-\gamma}(\f{1}{\Omg})+ \tau_1^{-s} \chi_-\log_+^{\gamma}(\f{1}{\Omg}) \right)(\rd_v\phi)^2.
\end{split}
\end{equation*}
On the other hand, the boundary terms\footnote{One can easily check that the boundary integrals on $\Gamma^{(1)}_{\tau_2}$ and $\EH(\tau_1,\tau_2)$ only have $(\rd_v\phi)^2$ terms and the boundary integrals on $\Gamma^{(2)}_{\tau_2}$ and $\CH(\tau_1,\tau_2)$ only have $(\rd_u\phi)^2$ terms.} can be controlled by Proposition~\ref{prop:intr:energy}, the bound $1 \leq w \leq 2$, and the upper and lower bounds of $r$ so that we have
\begin{equation}\label{interior:ILED:last}
\begin{split}
&N\iint_{\calD(\tau_{1}, \tau_{2})} w^{N} \log_+^{-\gamma}(\f 1{\Omg}) \bb( (\rd_{u} \phi)^{2} + (\rd_{v} \phi)^{2} \bb)\\
\leq &C 2^N\left(\int_{\Gmm^{(1)}_{\tau_{1}}} r^{2} (\rd_{v} \phi)^{2} + \int_{\Gmm^{(2)}_{\tau_{1}}} r^{2} (\rd_{u} \phi)^{2}\right)\\
& +C_{A_+,A_-} 2^N \iint_{\calD(\tau_{1}, \tau_{2})} w^N \left( \tau_1^{-(s-\gmm)} \log_+^{-\gamma}(\f{1}{\Omg})+ \tau_1^{-s} \chi_+\log_+^{\gamma}(\f{1}{\Omg}) \right) (\rd_u\phi)^2\\
	&+C_{A_+,A_-} 2^N \iint_{\calD(\tau_{1}, \tau_{2})} w^N \left( \tau_1^{-(s-\gmm)} \log_+^{-\gamma}(\f{1}{\Omg})+ \tau_1^{-s} \chi_+\log_+^{\gamma}(\f{1}{\Omg}) \right) (\rd_v\phi)^2.
\end{split}
\end{equation}
Since $s>2 \geq \gmm$, we can take $\tau_0$ to be sufficiently large after fixing $N>0$ such that for $C_{A_+,A_-}$ as in \eqref{interior:ILED:last}, we have
$$ C_{A_+,A_-} 2^N \tau_0^{-(s-\gmm)} \leq \f{N}2.$$
After subtracting
$$\f N2\iint_{\calD(\tau_{1}, \tau_{2})} w^{N} \log_+^{-\gamma}(\f 1{\Omg}) \bb( (\rd_{u} \phi)^{2} + (\rd_{v} \phi)^{2} \bb)$$
on both sides of \eqref{interior:ILED:last}, we obtain
\begin{equation*}
\begin{split}
&\frac{N}{2} \iint_{\calD(\tau_{1}, \tau_{2})} w^{N} \log_+^{-\gamma}(\f 1{\Omg}) \bb( (\rd_{u} \phi)^{2} + (\rd_{v} \phi)^{2} \bb)\\
\leq &C 2^N\left(\int_{\Gmm^{(1)}_{\tau_{1}}} r^{2} (\rd_{v} \phi)^{2} + \int_{\Gmm^{(2)}_{\tau_{1}}} r^{2} (\rd_{u} \phi)^{2}\right)\\
& +C_{A_+,A_-} 2^N\tau_1^{-s}\iint_{\calD(\tau_{1}, \tau_{2})} w^N \chi_+\log_+^{\gamma}(\f{1}{\Omg})(\rd_u\phi)^2\\
	&+C_{A_+,A_-} 2^N\tau_1^{-s}\iint_{\calD(\tau_{1}, \tau_{2})} w^N \chi_-\log_+^{\gamma}(\f{1}{\Omg})(\rd_v\phi)^2.
\end{split}
\end{equation*}
Since $N$ is now fixed, we can absorb it into the constant $C_{A_+,A_-}$ to obtain the desired conclusion.
\end{proof}
An immediate corollary is that we can improve the estimate in Proposition \ref{prop:intr:energy} by controlling the bulk term with the estimate \eqref{eq:intr:ILED} in Proposition \ref{prop:intr:ILED}:
\begin{proposition} \label{prop:intr:energy.1}  
There exists $\tau_0$ sufficiently large such that for every $\tau_{1}, \tau_{2}$ satisfying $\tau_0 \leq \tau_{1} \leq \tau_{2}$, we have
\begin{equation} \label{eq:intr:energy.1}
\begin{aligned}
&\hskip-2em
	\int_{\Gmm_{\tau_{2}}^{(1)}} r^{2} (\rd_{v} \phi)^{2} + \int_{\Gmm_{\tau_{2}}^{(2)}} r^{2} (\rd_{u} \phi)^{2} 
	+ \int_{\CH(\tau_{1}, \tau_{2})} r^{2} (\rd_{u} \phi)^{2} + \int_{\EH(\tau_1,\tau_2)} r^{2} (\rd_{v} \phi)^{2} \\
	\leq &C_{A_+,A_-} \bb( \int_{\Gmm^{(1)}_{\tau_{1}}} (\rd_{v} \phi)^{2} + \int_{\Gmm^{(2)}_{\tau_{1}}} (\rd_{u} \phi)^{2} \bb)+C_{A_+,A_-}\tau_1^{-s}\iint_{\calD(\tau_{1}, \tau_{2})} \chi_+\log_+^{\gamma}(\f{1}{\Omg})(\rd_u\phi)^2\\
	&+C_{A_+,A_-}\tau_1^{-s}\iint_{\calD(\tau_{1}, \tau_{2})} \chi_-\log_+^{\gamma}(\f{1}{\Omg})(\rd_v\phi)^2.
\end{aligned}
\end{equation}
\end{proposition}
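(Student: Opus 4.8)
\textbf{Proof proposal for Proposition~\ref{prop:intr:energy.1}.} The plan is to combine the ``almost conserved energy'' estimate of Proposition~\ref{prop:intr:energy} with the integrated local energy decay estimate of Proposition~\ref{prop:intr:ILED} in order to absorb the spacetime bulk terms on the right-hand side of \eqref{eq:intr:energy} that carry the non-degenerate weight $\log_{+}^{-\gmm}(\tfrac{1}{\Omg})$. The starting point is the inequality \eqref{eq:intr:energy}: its bulk terms split into two types, those with the ``good'' weight $\log_{+}^{-\gmm}(\tfrac{1}{\Omg})$ (applied to both $(\rd_u\phi)^2$ and $(\rd_v\phi)^2$), and those with the dangerous ``bad'' weights $\chi_{+}\log_{+}^{\gmm}(\tfrac{1}{\Omg})(\rd_u\phi)^2$ and $\chi_{-}\log_{+}^{\gmm}(\tfrac{1}{\Omg})(\rd_v\phi)^2$, which are localized near the event horizon and the Cauchy horizon respectively. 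The first type is precisely what is estimated by the left-hand side of \eqref{eq:intr:ILED}; the second type is left untouched at this stage (it will be handled later via the red-shift estimates of Sections~\ref{sec.red.shift}, as the footnote in the proposition block suggests).

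Concretely, first I would apply Proposition~\ref{prop:intr:ILED} to bound
\[
	C_{A_+,A_-}\tau_1^{-s}\iint_{\calD(\tau_{1}, \tau_{2})} \log_+^{-\gamma}(\tfrac{1}{\Omg}) \bb( (\rd_{u} \phi)^{2} + (\rd_{v} \phi)^{2} \bb)
\]
by
\[
	C_{A_+,A_-} \tau_1^{-s} \bb( \int_{\Gmm^{(1)}_{\tau_{1}}} (\rd_{v} \phi)^{2} + \int_{\Gmm^{(2)}_{\tau_{1}}} (\rd_{u} \phi)^{2} \bb)
	+ C_{A_+,A_-}\tau_1^{-2s}\iint_{\calD(\tau_{1}, \tau_{2})} \bb( \chi_+\log_+^{\gamma}(\tfrac{1}{\Omg})(\rd_u\phi)^2 + \chi_-\log_+^{\gamma}(\tfrac{1}{\Omg})(\rd_v\phi)^2 \bb).
\]
Then I substitute this into the right-hand side of \eqref{eq:intr:energy}. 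Since $\tau_1 \geq \tau_0$, the factor $\tau_1^{-s}$ in front of the flux terms can be absorbed into $C_{A_+,A_-}$ (it is in fact $\leq 1$ for $\tau_0$ large), and the doubly-small factor $\tau_1^{-2s}$ in front of the bad-weight bulk terms can be bounded by $\tau_1^{-s}$. All remaining $\log_{+}^{-\gmm}$-weighted bulk terms have now been replaced by flux terms on $\Gmm^{(1)}_{\tau_1}$, $\Gmm^{(2)}_{\tau_1}$ plus bad-weight bulk terms; after collecting constants one arrives at exactly \eqref{eq:intr:energy.1}. One should take $\tau_0$ large enough that the hypotheses of both Proposition~\ref{prop:intr:energy} and Proposition~\ref{prop:intr:ILED} hold simultaneously.

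There is no serious obstacle here: the proposition is a purely formal consequence of the two preceding estimates, so the only things to be careful about are bookkeeping — keeping track of which weighted fluxes appear where, verifying that $\tau_1^{-s}$ can always be absorbed into the (now fixed) constant $C_{A_+,A_-}$, and noting that the weight $r^2$ on the left-hand side fluxes is comparable to $1$ by the bounds on $r$ from Theorem~\ref{main.theorem.C0.stability}, so replacing $\int (\rd_v\phi)^2$ by $\int r^2 (\rd_v \phi)^2$ (and similarly for $\rd_u\phi$) on either side costs only a constant. The ``main obstacle'', to the extent there is one, is conceptual rather than technical: one must resist trying to also absorb the $\chi_{\pm}\log_{+}^{\gmm}$ bulk terms at this stage, since the good weight $\log_{+}^{-\gmm}$ in \eqref{eq:intr:ILED} cannot dominate the bad weight $\log_{+}^{\gmm}$ near the horizons — those terms genuinely require the separate red-shift arguments and are deliberately carried along unchanged.
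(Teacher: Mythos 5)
Your proposal is correct and follows essentially the same route as the paper, which simply remarks that Proposition~\ref{prop:intr:energy.1} is an immediate corollary of Propositions~\ref{prop:intr:energy} and \ref{prop:intr:ILED}: substitute the ILED bound on the $\log_+^{-\gamma}$-weighted bulk terms into the right-hand side of \eqref{eq:intr:energy}, observe that $\tau_1^{-s}\leq 1$ and $\tau_1^{-2s}\leq\tau_1^{-s}$ absorb all extra factors, and use the uniform bounds on $r$ to pass between weighted and unweighted fluxes. Your bookkeeping and the observation about deliberately carrying the $\chi_\pm\log_+^{\gamma}$ terms forward unabsorbed both match what the paper does.
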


\subsection{Red-shift estimates}\label{sec.red.shift}

In the next proposition, we prove an estimate that is localized near the Cauchy horizon. Since we are solving the wave equation ``backwards'' near the Cauchy horizon, the \underline{blue-shift} effect becomes a \underline{red-shift} effect as we approach $i^{+}$. This estimate in particular gives a good bulk term near the Cauchy horizon which has a better weight than that in \eqref{eq:intr:ILED}. 
\begin{proposition} \label{prop:intr:red-shift}  
There exists $A_{-,0}>0$ sufficiently large such that the following holds if $A_-\geq A_{-,0}$ with a constant $C_{A_-,A_+}>0$ depending on $A_-$ and $A_+$: For $\tau_0$ sufficiently large, for every $\alp \in [0,\alp_0]$ and for $\tau_0 \leq \tau_{1} \leq \tau_{2}$, we have
\begin{equation} \label{eq:intr:red-shift}
\begin{aligned}
& \hskip-2em
	\int_{\Gmm^{(1)}_{\Gmm_{\tau_{2}}}} \chi_{-} \log_+^{\alp} (\frac{1}{\Omg}) (\rd_{v} \phi)^{2}
	+ \alp \iint_{\calD(\tau_{1}, \tau_{2})} \chi_{-} \log_+^{\alp-1} (\frac{1}{\Omg}) (\rd_{v} \phi)^{2} \\
	\leq & C_{A_+,A_-} \bb( \int_{\Gmm^{(1)}_{\tau_{1}}} \big( 1 + \chi_{-} \log_+^{\alp}(\frac{1}{\Omg}) \big) (\rd_{v} \phi)^{2}
	+ \int_{\Gmm^{(2)}_{\tau_{1}}} (\rd_{u} \phi)^{2} \bb)	\\
	&+C_{A_+,A_-}\tau_1^{-s}\iint_{\calD(\tau_{1}, \tau_{2})} \chi_+\log_+^{\gamma}(\f{1}{\Omg})(\rd_u\phi)^2\\
	&+C_{A_+,A_-}\tau_1^{-s}\iint_{\calD(\tau_{1}, \tau_{2})} \chi_-\log_+^{\gamma}(\f{1}{\Omg})(\rd_v\phi)^2.
\end{aligned}
\end{equation}
\end{proposition}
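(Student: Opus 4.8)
The plan is to prove the red-shift estimate \eqref{eq:intr:red-shift} by multiplying the wave equation for $\phi$ by a cleverly chosen weight that is supported near the Cauchy horizon and exploits the favorable sign of the $\dvr/r$ coefficient there. Concretely, I would multiply the identity $0 = r^{2} \chi_{-} \log_{+}^{\alp}(\frac 1\Omg) \, \rd_{v} \phi \, \big( \rd_{u} \rd_{v} \phi + \frac{\dvr}{r} \rd_{u} \phi + \frac{\dur}{r} \rd_{v} \phi \big)$ by the weight and integrate over $\calD(\tau_{1}, \tau_{2})$. The key structural feature near $\CH$ is that $\dur < 0$ (since $r$ is decreasing in $u$) and $\log_{+}(\frac 1\Omg) \approx u+v$ is large; the term $\frac{\dur}{r} r^{2} \chi_{-} \log_{+}^{\alp}(\frac 1\Omg) (\rd_{v}\phi)^{2}$ appearing after integration by parts in $u$ (or the term generated by differentiating $\log_{+}^{\alp}$) will have the good sign and will produce the coercive bulk term $\alp \iint \chi_{-} \log_{+}^{\alp-1}(\frac 1\Omg) (\rd_{v}\phi)^{2}$ on the left-hand side, with the boundary term $\int_{\Gmm^{(1)}_{\tau_{2}}} \chi_{-} \log_{+}^{\alp}(\frac 1\Omg) (\rd_{v}\phi)^{2}$. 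This is exactly the red-shift mechanism: running the estimate towards $i^{+}$ converts the blue-shift instability into a stabilizing weight.

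First I would record the relevant background facts: in the Reissner--Nordstr\"om case $\dur_{RN} = \dvr_{RN} = -\frac 14 \Omg_{RN}^{2} < 0$, and by \eqref{eq:Omg-CH} together with Theorem~\ref{main.theorem.C0.stability}, on the support of $\chi_{-}$ we have $\log_{+}(\frac 1\Omg) \sim \kappa_{-}(u+v)$ up to additive constants (depending on $A_{\pm}, M, \bfe, E, s$), with $\rd_{u} \log_{+}(\frac 1\Omg), \rd_{v} \log_{+}(\frac 1\Omg) \sim \kappa_{-}$, and $\dur, \dvr < 0$ with $|\dur|, |\dvr| \sim \Omg^{2}$. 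The choice $A_{-} \geq A_{-,0}$ large is precisely to guarantee that $\log_{+}$ is genuinely in the logarithmic regime $x \geq e$ there and that the Reissner--Nordstr\"om structure dominates — i.e., $\dur$ and $\dvr$ are strictly negative and the derivatives of the weight have a definite sign. Then I would carry out the integration by parts, collecting: (i) the good spacetime term $\sim \alp \iint \chi_{-} \log_{+}^{\alp-1}(\frac 1\Omg)(\rd_{v}\phi)^{2}$ coming from $\rd_{u}$ hitting $\log_{+}^{\alp}$ (using $|\dur| \sim \Omg^{2}$ so $\frac{\dur}{r}$ pairs with $\rd_{v} \log_{+}$ which is $\sim \kappa_{-}$); (ii) the good boundary term on $\Gmm^{(1)}_{\tau_{2}}$; (iii) boundary terms on $\Gmm^{(1)}_{\tau_{1}}$ and $\Gmm^{(2)}_{\tau_{1}}$ controlled by the right-hand side (the $\chi_{-}$-weighted $(\rd_{v}\phi)^{2}$ term on $\Gmm^{(1)}_{\tau_{1}}$ is exactly why the factor $1 + \chi_{-} \log_{+}^{\alp}$ appears); (iv) terms where $\rd_{u}\phi$ appears, which I bound by Cauchy--Schwarz splitting into $\Omg^{2}(\rd_{u}\phi)^{2}$ pieces absorbed using $\chi_{+} \log_{+}^{\gamma}$ weights and $\log_{+}^{-\gamma}(\rd_{u}\phi)^{2}$ pieces controlled by the integrated local energy decay estimate Proposition~\ref{prop:intr:ILED}; (v) error terms from the differences $\dur - \dur_{RN}$, $\dvr - \dvr_{RN}$, which by Theorem~\ref{main.theorem.C0.stability} decay like $\tau_{1}^{-s}$ (or $|u|^{-s}$, $v^{-s}$) and produce the $C_{A_{+},A_{-}} \tau_{1}^{-s} \iint \chi_{\pm} \log_{+}^{\gamma}$ terms, after noting that $\alp \leq \alp_{0}$ and the relation $\gamma = \min\{2, s-1, 4s - \alp_{0} - 1\}$ ensures all these weighted quantities appear on the right; and (vi) the $\EH$ and $\CH$ boundary integrals, which have the good sign (the $\CH$ integral is manifestly nonnegative and can be dropped, while the $\EH$ integral is multiplied by $\chi_{-}$ which vanishes near $\EH$ once $A_{-} > A_{+}$, hence contributes nothing).

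The main obstacle, I expect, will be handling the $\rd_{u}\phi$ cross terms of the form $\iint \chi_{-} \log_{+}^{\alp}(\frac 1\Omg) \frac{\dvr}{r} \rd_{u}\phi \, \rd_{v}\phi$ and $\iint \chi_{-} \log_{+}^{\alp}(\frac 1\Omg) (\rd_{u}\rd_{v}\phi)(\ldots)$ without losing weights: naively Cauchy--Schwarz would put $\chi_{-} \log_{+}^{\alp}$ on an $(\rd_{u}\phi)^{2}$ term, which is too strong a weight near $\CH$ to be controlled by the integrated local energy decay (which only gives $\log_{+}^{-\gamma}$). The resolution is that on $\mathrm{supp}\,\chi_{-}$ we have $|\dvr| \sim \Omg^{2}$, which is exponentially small in $u+v$, so $\chi_{-} \log_{+}^{\alp}(\frac 1\Omg) |\dvr| \lesssim \Omg \cdot \log_{+}^{\alp}(\frac 1\Omg) \cdot \Omg \lesssim \Omg^{2-\epsilon}$ is bounded, allowing these cross terms to be absorbed into $\iint \Omg^{2}((\rd_{u}\phi)^{2} + (\rd_{v}\phi)^{2})$, which in turn is dominated by the coercive bulk term (since $\Omg^{2} \ll \log_{+}^{\alp - 1}(\frac 1\Omg)$ near $\CH$) and the red-shift term, after choosing the implicit constant in the multiplier (or an auxiliary power of a cutoff-adapted weight $w^{N}$ as in Proposition~\ref{prop:intr:ILED}) large enough. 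A secondary technical point is keeping the dependence of constants on $A_{\pm}$ and $\tau_{0}$ transparent so that the final bootstrap in Proposition~\ref{prop:intr:red-shift:final} — where $\tau_{0}$ is taken large last, after $A_{-}$, $N$ are fixed — goes through; this is exactly the same ordering of quantifiers already used in Propositions~\ref{prop:intr:energy.1} and \ref{prop:intr:ILED}, so I would mirror that structure.
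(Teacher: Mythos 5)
Your multiplier $\chi_{-}\log_{+}^{\alp}(\frac{1}{\Omg})\rd_{v}\phi$, the integration by parts in $u$, the identification of the coercive bulk term from $\rd_{u}$ hitting the logarithmic weight, and the sign observation $M - \frac{\bfe^{2}}{r_{-}} < 0$ (which fixes $A_{-,0}$) all match the paper's argument. The genuine gap is in your treatment of the cross term $\iint \chi_{-}\log_{+}^{\alp}(\frac{1}{\Omg})\frac{\dvr}{r}\rd_{u}\phi\,\rd_{v}\phi$, which you propose to absorb by claiming that on $\operatorname{supp}\chi_{-}$ one has $|\dvr| \sim \Omg^{2}$, so that $\chi_{-}\log_{+}^{\alp}(\frac{1}{\Omg})|\dvr| \lesssim \Omg^{2-\epsilon}$ is exponentially small. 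That bound is \emph{false} in this nonlinear setting: one only has $|\dvr - \dvr_{RN}| \lesssim v^{-s}$ from Theorem~\ref{main.theorem.C0.stability}, which decays \emph{polynomially} (not exponentially) in $u+v$ near $\CH$, while $\Omg^{2} \sim e^{-2\kappa_{-}(u+v)}$ decays exponentially. Near $\CH$ the polynomial tail dominates, so $|\dvr| \not\sim \Omg^{2}$; this is precisely the obstruction the authors flag in Section~\ref{idea.instab} as the main technical subtlety of this proposition.

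The paper's resolution is different from what you wrote. After Cauchy--Schwarz one faces $\eps^{-1}\iint \chi_{-}\frac{\lambda^{2}\log_{+}^{\alp+1}(\frac{1}{\Omg})}{r^{2}}(\rd_{u}\phi)^{2}$, and to push this into the degenerate ILED budget of Proposition~\ref{prop:intr:ILED} one needs $\lambda^{2}\log_{+}^{\alp+1}(\frac{1}{\Omg}) \lesssim \log_{+}^{-\gamma}(\frac{1}{\Omg})$ on $\operatorname{supp}\chi_{-}$. This requires the \emph{improved} bound $|\lambda| \leq C(\Omg_{RN}^{2} + v^{-2s})$ from Proposition~\ref{dvr.improved} (the Raychaudhuri estimate, stronger than the raw $v^{-s}$ from Theorem~\ref{main.theorem.C0.stability}) together with $v \gtrsim u+v \gtrsim \log_{+}(\frac{1}{\Omg})$ on $\operatorname{supp}\chi_{-}$, yielding $\lambda^{2}\log_{+}^{\alp+1}(\frac{1}{\Omg}) \lesssim \log_{+}^{-4s+\alp+1}(\frac{1}{\Omg})$; to conclude one then needs $\gamma \leq 4s - \alp_{0} - 1$, which is exactly the reason the hypothesis $\alp_{0} < 4s - 2$ appears in Theorem~\ref{final.blow.up.step} and $\gamma = \min\set{2, s-1, 4s - \alp_{0} - 1}$ is defined as it is. Your argument, as written, would close without invoking Proposition~\ref{dvr.improved} or the constraint on $\alp_{0}$, which is a strong signal that it cannot be correct; you should check at each such point whether every standing hypothesis has actually been used.
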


\begin{proof} 
Since $\Omg$ and $\Omg_{RN}$ are comparable (according to Theorem \ref{main.theorem.C0.stability}), it suffices to derive the desired estimate using $\Omg_{RN}$ in the weight. This has the advantage that we have better control over the derivatives of $\Omg_{RN}$ then that of $\Omg$.

When $\alp = 0$, \eqref{eq:intr:red-shift} follows from \eqref{eq:intr:energy}. Hence it suffices to consider the case $\alp > 0$. We begin with
 \begin{align}
	0 =& \iint_{\calD(\tau_{1}, \tau_{2})} \chi_{-} \log_+^{\alp} \frac{1}{\Omg_{RN}} \rd_{v} \phi 
				\bb( \rd_{u} \rd_{v} \phi + \frac{\dvr}{r} \rd_{u} \phi + \frac{\nu}{r} \rd_{v} \phi \bb) \notag \\
	=& \frac{1}{2} \int_{\Gmm^{(1)}_{\tau_{1}}} \chi_{-} \log_+^{\alp} \frac{1}{\Omg_{RN}} (\rd_{v} \phi)^{2}
		- \frac{1}{2} \int_{\Gmm^{(1)}_{\tau_{2}}} \chi_{-} \log_+^{\alp} \frac{1}{\Omg_{RN}} (\rd_{v} \phi)^{2} \label{eq:intr:red-shift:1} \\
	& - \frac{\alp}{4} \iint_{\calD(\tau_{1}, \tau_{2})} \chi_{-} \log_+^{\alp-1} \frac{1}{\Omg_{RN}} \frac{(- \rd_{u} \Omg_{RN}^{2})}{\Omg_{RN}^{2}} (\rd_{v} \phi)^{2} \label{eq:intr:red-shift:2}\\
	&  - \frac{1}{2} \iint_{\calD(\tau_{1}, \tau_{2})} (\rd_u\chi_{-}) \log_+^{\alp} \frac{1}{\Omg_{RN}} (\rd_{v }\phi)^{2} \label{eq:intr:red-shift:3}\\
	 &  + \iint_{\calD(\tau_{1}, \tau_{2})} \f{\chi_{-}}{r} \log_+^{\alp} \frac{1}{\Omg_{RN}} \lambda(\rd_{v} \phi)(\rd_{u} \phi) \label{eq:intr:red-shift:4}\\
	&  + \iint_{\calD(\tau_{1}, \tau_{2})} \f{\chi_{-}}{r} \log_+^{\alp} \frac{1}{\Omg_{RN}} \nu(\rd_{v} \phi)^2. \label{eq:intr:red-shift:5}
\end{align}
For each of these terms, we either show that it is bounded by the right hand side of \eqref{eq:intr:red-shift} or we show that it has a good sign, i.e., the same sign as the boundary integral on $\Gamma^{(1)}_{\tau_2}$ in \eqref{eq:intr:red-shift:1}.

Recall that on Reissner--Nordstr\"om, we have
\begin{equation*}
	- \rd_{u} \Omg_{RN}^{2} = \rd_{u} 4 \left( 1- \frac{2M}{r_{RN}} + \frac{\bfe^{2}}{r_{RN}^{2}} \right)  = - 2 \frac{\Omg_{RN}^{2}}{r_{RN}^{2}} \left( M - \frac{\bfe^{2}}{r_{RN}} \right).
\end{equation*}
The crucial observation here is that
\begin{equation*}
	M - \frac{\bfe^{2}}{r_{-}} < 0
\end{equation*}
and hence by choosing $A_{-,0}$ to be sufficiently large, so that the support of $\chi_{-}$ is close enough to $\CH$, we have $- \rd_{u} \Omg_{RN}^{2} \geq C^{-1} \Omg_{RN}^{2}$ for some $C > 0$ on the support of $\chi_{-}$ and the space-time integral in \eqref{eq:intr:red-shift:2} has the same sign as the boundary integral on $\Gmm^{(1)}_{\tau_{2}}$ in \eqref{eq:intr:red-shift:1}. (Notice that this term also gives the good bulk term on the left hand side of \eqref{eq:intr:red-shift}.)

The term \eqref{eq:intr:red-shift:3} can be controlled using Proposition~\ref{prop:intr:ILED}, as it is safely localized away from $\CH$. For \eqref{eq:intr:red-shift:4}, we first use the Cauchy-Schwarz inequality to write
\begin{equation*}
	\abs{\eqref{eq:intr:red-shift:4}}
	\leq \frac{\varepsilon}{2} \iint_{\calD(\tau_{1}, \tau_{2})} \chi_{-} \log_+^{\alp-1} (\frac{1}{\Omg}) (\rd_{v} \phi)^{2} + \frac{1}{2 \varepsilon} \iint_{\calD(\tau_{1}, \tau_{2})} \chi_{-} \, \frac{\lambda^2 \log_+^{\alp+1} (\frac{1}{\Omg})}{r^{2} } (\rd_{u} \phi)^{2} .
\end{equation*}
Choosing $\varepsilon > 0$ sufficiently small and using the fact that $\Omg_{RN}^2\sim \Omg^2$ (by Theorem \ref{main.theorem.C0.stability}), the first term can be bounded by \eqref{eq:intr:red-shift:2}. For the second term, recall from Proposition \ref{dvr.improved} that 
$$|\lambda|\leq C(\Omg^2_{RN}+v^{-2s}).$$
In particular, this implies that in the support of $\chi_-$, i.e., when $u+v\geq A_{-}-1$, we have
$$|\lambda^2 \log_+^{\alp+1} (\frac{1}{\Omg}) |\leq C\log_+^{-4s+\alp+1}(\f{1}{\Omg})\leq C\log_+^{-\gamma}(\f{1}{\Omg}).$$
Together with the lower bound on $r$, we thus have
\begin{equation*}
\begin{split}
&\iint_{\calD(\tau_{1}, \tau_{2})} \chi_{-} \, \frac{\lambda^2 \log_+^{\alp+1} (\frac{1}{\Omg})}{r^{2} } (\rd_{u} \phi)^{2}\leq C\iint_{\calD(\tau_{1}, \tau_{2})} \log_+^{-\gamma}(\f{1}{\Omg}) (\rd_{u} \phi)^{2}
\end{split}
\end{equation*}
and the right hand side can be bounded using Proposition~\ref{prop:intr:ILED}. Finally, for the term \eqref{eq:intr:red-shift:5}, notice that since $|\nu-\nu_{RN}|\leq C|u|^{-s}$, by choosing $\tau_0$ sufficiently large, we have $\nu<0$ on $\{u+v=A_{-}-1\}\cap\{\tau\geq \tau_0\}$. By \eqref{eqn.Ray}, $\f{\nu}{\Omg^2}$ is monotonically decreasing and we thus have $\nu\leq 0$ on the support of $\chi_-$. Therefore, \eqref{eq:intr:red-shift:5} has the same sign as the boundary integral on $\Gmm^{(1)}_{\tau_{2}}$ in \eqref{eq:intr:red-shift:1}. 

Combining all these estimates and dropping the term \eqref{eq:intr:red-shift:5} which has a good sign, we obtain the desired conclusion. \qedhere
\end{proof}

Our next proposition is an analogue of Proposition~\ref{prop:intr:red-shift}, but instead localized near the event horizon. As in Proposition~\ref{prop:intr:red-shift}, we capture red-shift effect along the event horizon $\calH^{+}$ as we approach $i^{+}$. In particular, we have a good bulk term for $\rd_u\phi$ in this region which has a better weight compared to that in \eqref{eq:intr:ILED}.

\begin{proposition} \label{prop:intr:red-shift:EH}  
There exists $A_{+,0}<0$ sufficiently negative such that the following holds if $A_+\leq A_{+,0}$ with a constant $C_{A_-,A_+}>0$ depending on $A_-$ and $A_+$: For $\tau_0$ sufficiently large, and for every $\tau_{1}, \tau_{2}$ such that $\tau_0 \leq \tau_{1} \leq \tau_{2}$, we have
\begin{equation} \label{eq:intr:red-shift:EH}
\begin{split}
	&\int_{\Gmm^{(2)}_{\tau_{2}}} \chi_{+} \Omg^{-2}(\rd_{u} \phi)^{2}
	+ \iint_{\calD(\tau_{1}, \tau_{2})} \chi_{+} \Omg^{-2} (\rd_{u} \phi)^{2} \\
	\leq &C_{A_+,A_-} \bb( \int_{\Gmm^{(1)}_{\tau_{1}}} (\rd_{v} \phi)^{2} + \int_{\Gmm^{(2)}_{\tau_{1}}} (1 + \chi_{+} \Omg^{-2}) (\rd_{u} \phi)^{2} \bb)+C_{A_+,A_-}\tau_1^{-s}\iint_{\calD(\tau_{1}, \tau_{2})} \chi_+\log_+^{\gamma}(\f{1}{\Omg})(\rd_u\phi)^2\\
	&+C_{A_+,A_-}\tau_1^{-s}\iint_{\calD(\tau_{1}, \tau_{2})} \chi_-\log_+^{\gamma}(\f{1}{\Omg})(\rd_v\phi)^2.
\end{split}
\end{equation}
\end{proposition}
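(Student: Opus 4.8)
The proposition is an analogue near $\EH$ of Proposition~\ref{prop:intr:red-shift} (which is near $\CH$), and the plan is to mirror that argument, exploiting the red-shift effect at the event horizon as one solves backward toward $i^{+}$. First I would work with $\Omg_{RN}$ rather than $\Omg$ in the weight, since on Reissner--Nordstr\"om we have good control of its derivatives; by Theorem~\ref{main.theorem.C0.stability}, $\Omg^{2} \sim \Omg_{RN}^{2}$, so the two weights are interchangeable up to constants. The starting point is the identity
\begin{equation*}
	0 = \iint_{\calD(\tau_{1}, \tau_{2})} \chi_{+} \Omg_{RN}^{-2} \, \rd_{u} \phi \, \bb( \rd_{u} \rd_{v} \phi + \frac{\dvr}{r} \rd_{u} \phi + \frac{\nu}{r} \rd_{v} \phi \bb),
\end{equation*}
and I would integrate by parts in $v$ on the term $\chi_{+} \Omg_{RN}^{-2} \rd_{u}\phi\, \rd_{u}\rd_{v}\phi = \frac12 \rd_{v}(\chi_{+}\Omg_{RN}^{-2}(\rd_{u}\phi)^{2}) - \frac12 \rd_{v}(\chi_{+}\Omg_{RN}^{-2})(\rd_{u}\phi)^{2}$. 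This produces: a boundary term $\int_{\Gmm^{(2)}_{\tau_{2}}} \chi_{+}\Omg_{RN}^{-2}(\rd_{u}\phi)^{2}$ (and the corresponding one on $\Gmm^{(2)}_{\tau_{1}}$, which goes to the right-hand side); a bulk term $-\frac12 \iint (\rd_{v}\chi_{+})\Omg_{RN}^{-2}(\rd_{u}\phi)^{2}$; the key bulk term $\frac12 \iint \chi_{+}\Omg_{RN}^{-4}(\rd_{v}\Omg_{RN}^{2})(\rd_{u}\phi)^{2}$; and the two cross/transport terms $\iint \frac{\chi_{+}}{r}\Omg_{RN}^{-2}(\dvr \rd_{u}\phi\rd_{v}\phi + \nu (\rd_{u}\phi)^{2})$.

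The sign considerations are the heart of the matter. On Reissner--Nordstr\"om, $\rd_{v}\Omg_{RN}^{2} = \rd_{v} 4(1 - \frac{2M}{r_{RN}} + \frac{\bfe^{2}}{r_{RN}^{2}}) = -2\frac{\Omg_{RN}^{2}}{r_{RN}^{2}}(M - \frac{\bfe^{2}}{r_{RN}})\lambda_{RN}/\lambda_{RN}$ — more precisely $\rd_{v}\Omg_{RN}^{2}$ is, using $\rd_{v} r_{RN} = 1 - \frac{2M}{r_{RN}} + \frac{\bfe^{2}}{r_{RN}^{2}} = \frac14\Omg_{RN}^{2}$ in these coordinates, negative when $r_{RN}$ is near $r_{+}$ because $M - \frac{\bfe^{2}}{r_{+}} > 0$ (as $r_{+} > M > \bfe^{2}/M$). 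Hence, choosing $A_{+,0}$ sufficiently negative so that $\mathrm{supp}\,\chi_{+}$ lies in $\{r_{RN} \text{ close to } r_{+}\}$, the factor $\Omg_{RN}^{-4}(\rd_{v}\Omg_{RN}^{2})$ is $\leq -C^{-1}\Omg_{RN}^{-2}$ there, so the key bulk term has the same (good) sign as the boundary term on $\Gmm^{(2)}_{\tau_{2}}$ and moreover produces exactly the good bulk term $\iint \chi_{+}\Omg^{-2}(\rd_{u}\phi)^{2}$ on the left of \eqref{eq:intr:red-shift:EH}. The term $-\frac12\iint(\rd_{v}\chi_{+})\Omg_{RN}^{-2}(\rd_{u}\phi)^{2}$ is supported where $A_{+} \leq u+v \leq A_{+}+1$, i.e.\ away from $\EH$ where $\Omg_{RN}^{-2}$ is bounded, so it is controlled by the integrated local energy decay estimate, Proposition~\ref{prop:intr:ILED}. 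For the transport term $\iint \frac{\chi_{+}}{r}\Omg_{RN}^{-2}\nu(\rd_{u}\phi)^{2}$: we have $\nu = \nu_{RN} + (\nu - \nu_{RN})$ with $\nu_{RN} = -\Omg_{RN}^{2} < 0$, contributing a further good-sign term $-\iint \frac{\chi_{+}}{r}(\rd_{u}\phi)^{2}$, while $|\nu - \nu_{RN}| \leq C|u|^{-s} \leq C\tau_{1}^{-s}$ (or, better, $\leq C_{A_{+}}\tau_{1}^{-(s-\gamma)}\log_{+}^{-\gamma}(\frac1\Omg)$ away from $\EH$) on $\calD(\tau_{1},\infty)$, so the error it generates is absorbed either by the key bulk term (after taking $\tau_{0}$ large) or by Proposition~\ref{prop:intr:ILED} — this is where the $\tau_{1}^{-s}\chi_{+}\log_{+}^{\gamma}$ bulk error on the right of \eqref{eq:intr:red-shift:EH} comes from, once one notes $\Omg_{RN}^{-2}\sim\log_+^{\text{(something)}}$ is too crude and one instead keeps the $\Omg^{-2}$ weight and uses $\Omg^{-2}\lesssim\tau_1^{\gamma-s}\cdot\Omg^{-2}\cdot(\ldots)$; more carefully, one writes $|\nu-\nu_{RN}|\Omg^{-2}(\rd_u\phi)^2 \le C\tau_1^{-s}\Omg^{-2}(\rd_u\phi)^2$ and absorbs a small multiple into the good bulk term, the rest into the $\chi_+\log_+^\gamma$-weighted integral via $\Omg^{-2} \gtrsim \log_+^\gamma(\frac1\Omg)$ on $\mathrm{supp}\,\chi_+$.

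The remaining cross term $\iint \frac{\chi_{+}}{r}\Omg_{RN}^{-2}\dvr\,\rd_{u}\phi\,\rd_{v}\phi$ is the most delicate, and I expect it to be the main obstacle (as in Proposition~\ref{prop:intr:red-shift}, where the analogous term required the refined bound of Proposition~\ref{dvr.improved} and the constraint $\alpha_0 < 4s-2$). Here I would split $\dvr = \dvr_{RN} + (\dvr - \dvr_{RN})$. For the $\dvr_{RN}$ piece, $\dvr_{RN} = -\Omg_{RN}^{2}$, so after cancellation one has $-\iint \frac{\chi_{+}}{r}\rd_{u}\phi\,\rd_{v}\phi$, which by Cauchy--Schwarz is $\leq \varepsilon\iint\chi_{+}\Omg^{-2}(\rd_u\phi)^2 + \frac{C}{\varepsilon}\iint\chi_{+}\Omg^{2}(\rd_v\phi)^2$; the first is absorbed into the good bulk term for $\varepsilon$ small, and $\iint\chi_+\Omg^2(\rd_v\phi)^2 \leq C\iint\chi_+\log_+^{-\gamma}(\frac1\Omg)(\rd_v\phi)^2$ is controlled by Proposition~\ref{prop:intr:ILED}. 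For the $(\dvr - \dvr_{RN})$ piece, use $|\dvr - \dvr_{RN}|\leq C\tau_{1}^{-s}$ together with Cauchy--Schwarz to split into $\tau_1^{-s}\iint\chi_+\Omg^{-2}(\rd_u\phi)^2$ (absorbed after taking $\tau_0$ large) plus $\tau_1^{-s}\iint\chi_+\Omg^{-2}(\rd_v\phi)^2$; on $\mathrm{supp}\,\chi_+$ one has $\Omg^{-2}\leq C_{A_+}\log_+^{\gamma}(\frac1\Omg)$ (since $\Omg^{-2}$ is bounded away from $\EH$ and the exponential/polynomial comparison is harmless on the bounded-$r$ region near $\EH$ where $\log_+^\gamma(\frac1\Omg)\gtrsim 1$), so this is bounded by $C_{A_+,A_-}\tau_1^{-s}\iint\chi_+\log_+^\gamma(\frac1\Omg)(\rd_v\phi)^2 \leq C_{A_+,A_-}\tau_1^{-s}\iint(\chi_++\chi_-)\log_+^\gamma(\frac1\Omg)(\rd_v\phi)^2$, matching the right-hand side. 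Assembling all the good-sign terms on the left, moving the $\Gmm^{(2)}_{\tau_1}$, $\Gmm^{(1)}_{\tau_1}$ boundary terms and all error bulk terms to the right (the $\Gmm^{(1)}_{\tau_1}$ term enters through the applications of Propositions~\ref{prop:intr:ILED} and \ref{prop:intr:energy.1}), and using $1\leq w\leq 2$ and the upper/lower bounds on $r$, yields exactly \eqref{eq:intr:red-shift:EH}. One technical point to be careful about throughout: the ``boundary terms'' from the $v$-integration by parts live only on $\Gmm^{(2)}_{\tau_1}$, $\Gmm^{(2)}_{\tau_2}$ and $\EH(\tau_1,\tau_2)$ (the latter being where $\chi_+\Omg_{RN}^{-2}(\rd_u\phi)^2$ is evaluated at $U=0$, which is controlled by the energy estimate Proposition~\ref{prop:intr:energy.1} applied with the extra red-shift weight, or simply dropped if it has a good sign — here $\chi_+\equiv 1$ near $\EH$ and the term $\int_{\EH}\Omg^{-2}(\rd_u\phi)^2$ with the degenerate-in-$\rd_u$-direction caveat should be examined, but since $\rd_u$ on $\EH$ is a regular transversal derivative in the $(U,v)$ coordinates after the rescaling, this is finite by the initial bound \eqref{initial.1} and propagation). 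I would double-check that no term on $\CH$ or $\Gmm^{(1)}_{\tau_2}$ appears, which is automatic since $\chi_+$ vanishes there.
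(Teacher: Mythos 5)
Your high-level strategy is the same as the paper's (mirror Proposition~\ref{prop:intr:red-shift} near $\EH$, exploit the red-shift at the event horizon by integrating $\chi_{+}\Omg_{RN}^{-2}\rd_{u}\phi$ against the wave equation in $v$, and control errors via Propositions~\ref{prop:intr:ILED} and \ref{prop:intr:energy.1}). However, there is a substantive error in the identification of the transport terms, and it has fatal consequences downstream.

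\textbf{The $\lambda$/$\nu$ swap.} Multiplying $\rd_{u}\rd_{v}\phi + \frac{\lambda}{r}\rd_{u}\phi + \frac{\nu}{r}\rd_{v}\phi = 0$ by $\chi_{+}\Omg_{RN}^{-2}\rd_{u}\phi$ produces the transport term $\frac{\chi_{+}\lambda}{r\Omg_{RN}^{2}}(\rd_{u}\phi)^{2}$ and the cross term $\frac{\chi_{+}\nu}{r\Omg_{RN}^{2}}\rd_{u}\phi\,\rd_{v}\phi$; you have these reversed. This matters, because the available bounds from Theorem~\ref{main.theorem.C0.stability} near $\EH$ (i.e.\ on $\mathrm{supp}\,\chi_{+}$) are \emph{not} symmetric in $u$ and $v$: one has $\abs{\nu - \nu_{RN}} \leq C \Omg_{RN}^{2} v^{-s}$ (the improved, $\Omg_{RN}^{2}$-weighted bound) but only $\abs{\lambda - \lambda_{RN}} \leq C v^{-s}$ (no $\Omg_{RN}^{2}$ factor). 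The correct cross term $\nu\,\rd_{u}\phi\,\rd_{v}\phi$ benefits precisely from the cancellation $\abs{\nu}/\Omg_{RN}^{2} \leq C$, so that after Cauchy--Schwarz the $\rd_{v}\phi$ error comes with the bounded weight $\Omg_{RN}^{2}$, which is dominated by $\log_{+}^{-\gamma}(\tfrac{1}{\Omg})$ and hence handled by Proposition~\ref{prop:intr:ILED}. By contrast, your cross term $\lambda\,\rd_{u}\phi\,\rd_{v}\phi$, with only $\abs{\lambda - \lambda_{RN}}\leq C\tau_{1}^{-s}$ at your disposal, leaves an undissipated $\Omg^{-2}$ weight on $(\rd_{v}\phi)^{2}$. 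To salvage this you invoke the inequality $\Omg^{-2} \leq C_{A_{+}}\log_{+}^{\gamma}(\tfrac{1}{\Omg})$ on $\mathrm{supp}\,\chi_{+}$, but this is \emph{false}: on $\mathrm{supp}\,\chi_{+}$ one has $\Omg_{RN}^{2}\sim e^{2\kappa_{+}(u+v)}\to 0$ as $u+v\to -\infty$, so $\Omg^{-2}$ grows exponentially while $\log_{+}^{\gamma}(\tfrac{1}{\Omg})\sim\abs{u+v}^{\gamma}$ grows only polynomially; in fact the opposite inequality $\log_{+}^{\gamma}(\tfrac{1}{\Omg})\leq C\Omg^{-2}$ holds on $\mathrm{supp}\,\chi_{+}$, and that is the one actually used at the end of the proof of Proposition~\ref{prop:intr:ILED}. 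Without this inequality your treatment of the cross term does not close.

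Two further sign slips that are worth flagging even though they do not by themselves break the argument: in the interior $\rd_{v} r_{RN} = -\tfrac14\Omg_{RN}^{2}$ (not $+\tfrac14\Omg_{RN}^{2}$), and consequently $\rd_{v}\Omg_{RN}^{2} = \tfrac{2\Omg_{RN}^{2}}{r_{RN}^{2}}(M - \tfrac{\bfe^{2}}{r_{RN}}) > 0$ near $r_{+}$ (you assert it is negative). The key bulk term $\tfrac12\iint\chi_{+}\Omg_{RN}^{-4}(\rd_{v}\Omg_{RN}^{2})(\rd_{u}\phi)^{2}$ is therefore a \emph{positive} quantity, which is consistent with the role it plays, but your sign computation along the way is wrong. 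Finally, as in the paper, the $\lambda(\rd_{u}\phi)^{2}$ transport term does \emph{not} have a favorable sign (because $\lambda$ itself, unlike $\lambda_{RN}$, can be positive); it must instead be absorbed into the good bulk term using $\abs{\lambda}\leq C(\Omg_{RN}^{2}+v^{-s})$ after choosing $A_{+,0}$ sufficiently negative and $\tau_{0}$ sufficiently large.
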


\begin{proof} 
As in the proof of Proposition \ref{prop:intr:red-shift}, we use $\Omg_{RN}$ instead of $\Omg$ in the weight since $\Omg_{RN}$ and $\Omg$ are comparable. We begin with
 \begin{align}
	0 =& \iint_{\calD(\tau_{1}, \tau_{2})} \chi_{+} \Omg_{RN}^{-2} \rd_{u} \phi 
				\bb( \rd_{u} \rd_{v} \phi + \frac{\dvr}{r} \rd_{u} \phi + \frac{\nu}{r} \rd_{v} \phi \bb) \notag \\
	=& 
		-\frac{1}{2} \int_{\Gmm^{(2)}_{\tau_{1}}} \chi_{+} \Omg_{RN}^{-2} (\rd_{u} \phi)^{2}
		+ \frac{1}{2} \int_{\Gmm^{(2)}_{\tau_{2}}} \chi_{+} \Omg_{RN}^{-2} (\rd_{u} \phi)^{2} 
				\label{eq:intr:red-shift:EH:1} \\
	& + \frac{1}{2} \iint_{\calD(\tau_{1}, \tau_{2})} \chi_{+} \frac{\rd_{v} \Omg_{RN}^{2}}{\Omg_{RN}^{4}} (\rd_{u} \phi)^{2} \label{eq:intr:red-shift:EH:2}\\
	&  - \frac{1}{2} \iint_{\calD(\tau_{1}, \tau_{2})} (\rd_v\chi_{+}) \Omg_{RN}^{-2}(\rd_{u}\phi)^{2} \label{eq:intr:red-shift:EH:3}\\
	 &  + \iint_{\calD(\tau_{1}, \tau_{2})} \frac{1}{r\Omg_{RN}^2} \chi_{+} \lambda (\rd_{u} \phi)^2 \label{eq:intr:red-shift:EH:4}\\
	&  + \iint_{\calD(\tau_{1}, \tau_{2})} \frac{1}{r\Omg_{RN}^2} \chi_{+} \nu(\rd_{u} \phi) (\rd_{v} \phi). \label{eq:intr:red-shift:EH:5}
\end{align}
As in the proof of Proposition \ref{prop:intr:red-shift}, for each of these terms, we either control it or show that it has a good sign. First, we see that if $A_{+,0}$ is chosen to be sufficiently negative, then we have
\begin{equation}\label{RS.EH.est}
	\rd_{v} \Omg^{2}_{RN} = - \rd_{v} 4 \left( 1 - \frac{2M}{r_{RN}} + \frac{\bfe^{2}}{r_{RN}^{2}} \right)  = 2 \frac{\Omg_{RN}^{2}}{r_{RN}^{2}} \left( M - \frac{\bfe^{2}}{r_{RN}^{2}} \right) \geq C^{-1} \Omg_{RN}^{2}
\end{equation}
 on the support of $\chi_{+}$. This inequality follows from the observation that $M - \frac{\bfe^{2}}{r_{+}} > 0$. In particular, \eqref{eq:intr:red-shift:EH:2} has the same sign as the boundary integral on $\Gmm^{(2)}_{\tau_{2}}$.

The rest of the proof proceeds similarly to that of Proposition~\ref{prop:intr:red-shift}. The term \eqref{eq:intr:red-shift:EH:3} can be bounded by Proposition~\ref{prop:intr:ILED} since the term is localized away from the event horizon. The term \eqref{eq:intr:red-shift:EH:4} is estimated slightly different from that in Proposition~\ref{prop:intr:red-shift} since $\lambda$ does not have a favorable sign. Nevertheless, since
$$|\lambda|\leq C(\Omg_{RN}^2+v^{-s}),$$
we have
$$|\eqref{eq:intr:red-shift:EH:4}|\leq C\iint_{\calD(\tau_{1}, \tau_{2})} \chi_{+} \left( 1+\f{\tau_1^{-s}}{\Omg^2_{RN}} \right)(\rd_u\phi)^2 .$$
Therefore, by \eqref{RS.EH.est}, we can choose $A_{+,0}$ to be sufficiently negative such that $|\eqref{eq:intr:red-shift:EH:4}|$ can be dominated by \eqref{eq:intr:red-shift:EH:2}, i.e., we have
\begin{equation}\label{good.term.RS.EH}
\eqref{eq:intr:red-shift:EH:2}+\eqref{eq:intr:red-shift:EH:4}\geq C^{-1}\iint_{\calD(\tau_{1}, \tau_{2})} \chi_{+} \Omg_{RN}^{-2} (\rd_{u} \phi)^{2}.
\end{equation}
Finally, to handle \eqref{eq:intr:red-shift:EH:5}, recall the following estimates from Theorem \ref{main.theorem.C0.stability}:
$$|\nu_{RN}|\leq C\Omg_{RN}^2,\quad |\nu-\nu_{RN}|\leq C\Omg_{RN}^2 v^{-s}.$$
Therefore, by the Cauchy-Schwarz inequality, for every $\varepsilon>0$, we have
\begin{equation*}
\begin{split}
|\eqref{eq:intr:red-shift:EH:5}| \leq \frac{\varepsilon}{2} \iint_{\calD(\tau_{1}, \tau_{2})} \chi_{+} \Omg_{RN}^{-2} (\rd_{u} \phi)^{2} + \frac{C}{2 \varepsilon} \iint_{\calD(\tau_{1}, \tau_{2})} \chi_{+} \Omg_{RN}^2 (\rd_{v} \phi)^{2} .
\end{split}
\end{equation*}
Choosing $\varepsilon$ sufficiently small the first term can be controlled by \eqref{good.term.RS.EH}. On the other hand, since $\Omg_{RN}^2\leq C\log_+^{-\gamma}(\f{1}{\Omg})$, the second term can be bounded using Proposition~\ref{prop:intr:ILED}. After noting that $\Omg$ and $\Omg_{RN}$ are comparable, this concludes the proof of the proposition. \qedhere 
\end{proof}

At this point, we can fix $A_-$ and $A_+$ so that $A_->A_{-,0}$ and $A_+<A_{+,0}$ as in Propositions \ref{prop:intr:red-shift} and \ref{prop:intr:red-shift:EH}. We now drop the subscripts in the constants $C_{A_+,A_-}$, i.e., from now on, we use the convention $C$ also depends on $A_-$ and $A_+$. 

\subsection{Putting everything together}\label{sec.instab.together}

We now state a proposition which combines all the bounds that have been proven so far.
\begin{proposition} \label{prop:intr:red-shift:final}
For $\tau_0$ sufficiently large, for every $\alp\in [0,\alp_0]$ and for every $\tau_{1}, \tau_{2}$ such that $\tau_0 \leq \tau_{1} \leq \tau_{2}$, we have
\begin{equation} \label{eq:intr:red-shift:final}
\begin{aligned}
& \hskip-2em
	\int_{\Gmm_{\tau_{2}}^{(1)}} \bb( 1 + \chi_{-} \log_+^{\alp}(\frac{1}{\Omg}) \bb) (\rd_{v} \phi)^{2} 
	+ \int_{\Gmm_{\tau_{2}}^{(2)}} \Omg^{-2} (\rd_{u} \phi)^{2} \\
	 &+ \int_{\CH(\tau_{1}, \tau_{2})}  (\rd_{u} \phi)^{2} + \int_{\EH(\tau_{1}, \tau_{2})} (\rd_{v} \phi)^{2}  \\
	&	+ \iint_{\calD(\tau_{1}, \tau_{2})}\left(  \left( \log_+^{-\gamma}(\f{1}{\Omg}) + \alp \chi_{-} \log_+^{\alp-1}(\frac{1}{\Omg}) \right) (\rd_{v} \phi)^{2}  
		+	\left( \log_+^{-\gamma}(\f{1}{\Omg}) + \chi_{+} \Omg^{-2} \right) (\rd_{u} \phi)^{2}  \right)\\
	 \leq &C \bb( \int_{\Gmm^{(1)}_{\tau_{1}}} \bb( 1 + \chi_{-} \log_+^{\alp}(\frac{1}{\Omg}) \bb) (\rd_{v} \phi)^{2}
				+ \int_{\Gmm^{(2)}_{\tau_{1}}} \Omg^{-2} (\rd_{u} \phi)^{2} \bb)\\
			&	+C\tau_1^{-s}\iint_{\calD(\tau_{1}, \tau_{2})} \chi_-\log_+^{\gamma}(\f{1}{\Omg})(\rd_v\phi)^2.
\end{aligned}
\end{equation}
\end{proposition}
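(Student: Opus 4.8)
The plan is to obtain \eqref{eq:intr:red-shift:final} by summing, with suitable order-one coefficients, the four estimates already established: the refined energy inequality \eqref{eq:intr:energy.1} of Proposition~\ref{prop:intr:energy.1}, the integrated local energy decay estimate \eqref{eq:intr:ILED} of Proposition~\ref{prop:intr:ILED}, the red-shift estimate \eqref{eq:intr:red-shift} of Proposition~\ref{prop:intr:red-shift} near $\CH$ (applied with the given exponent $\alp \in [0, \alp_0]$), and the red-shift estimate \eqref{eq:intr:red-shift:EH} of Proposition~\ref{prop:intr:red-shift:EH} near $\EH$. At this stage $A_-$ and $A_+$ have been fixed with $A_- \geq A_{-,0}$ and $A_+ \leq A_{+,0}$, so constants depending on $A_\pm$ become absolute constants $C$. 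I would use throughout that, by Theorem~\ref{main.theorem.C0.stability} and the Reissner--Nordstr\"om computations of Section~\ref{sec.geometry}, on $\calD(\tau_0, \infty)$ the area radius $r$ is bounded above and below, $\Omg$ and $\Omg_{RN}$ are comparable, and $\Omg^2 \leq C$ (hence $\Omg^{-2} \geq C^{-1}$). Summing the left-hand sides of the four estimates then produces precisely the terms on the left of \eqref{eq:intr:red-shift:final}: the boundary flux $\int_{\Gmm_{\tau_2}^{(1)}} (1 + \chi_- \log_+^{\alp}(\f{1}{\Omg}))(\rd_v \phi)^2$ (the constant ``$1$'' coming from the $r^2 (\rd_v \phi)^2$ term in \eqref{eq:intr:energy.1} and the weighted part from \eqref{eq:intr:red-shift}), the flux $\int_{\Gmm_{\tau_2}^{(2)}} \Omg^{-2}(\rd_u \phi)^2$ from \eqref{eq:intr:red-shift:EH}, the horizon fluxes $\int_{\CH(\tau_1, \tau_2)} (\rd_u \phi)^2$ and $\int_{\EH(\tau_1, \tau_2)} (\rd_v \phi)^2$ from \eqref{eq:intr:energy.1} (using $r \sim 1$), and the spacetime integrals with weights $\log_+^{-\gamma}(\f{1}{\Omg})$ from \eqref{eq:intr:ILED}, $\alp \chi_- \log_+^{\alp-1}(\f{1}{\Omg})$ from \eqref{eq:intr:red-shift}, and $\chi_+ \Omg^{-2}$ from \eqref{eq:intr:red-shift:EH}.

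Next I would tidy up the right-hand side. The data integrals over $\Gmm_{\tau_1}^{(1)}$ and $\Gmm_{\tau_1}^{(2)}$ arising in the four propositions are all controlled by the two data terms appearing in \eqref{eq:intr:red-shift:final}: $\int_{\Gmm_{\tau_1}^{(1)}}(\rd_v \phi)^2 \leq \int_{\Gmm_{\tau_1}^{(1)}} (1 + \chi_- \log_+^{\alp}(\f{1}{\Omg}))(\rd_v \phi)^2$, and, since $\Omg^{-2} \geq C^{-1}$, $\int_{\Gmm_{\tau_1}^{(2)}}(\rd_u \phi)^2 \leq C \int_{\Gmm_{\tau_1}^{(2)}} \Omg^{-2}(\rd_u \phi)^2$ (the already-weighted $\int_{\Gmm_{\tau_1}^{(2)}} \chi_+ \Omg^{-2}(\rd_u \phi)^2$ from \eqref{eq:intr:red-shift:EH} being a fortiori dominated). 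Moreover every error term produced by the four propositions has the common shape $C \tau_1^{-s} \iint_{\calD(\tau_1, \tau_2)} \chi_+ \log_+^{\gamma}(\f{1}{\Omg})(\rd_u \phi)^2 + C \tau_1^{-s} \iint_{\calD(\tau_1, \tau_2)} \chi_- \log_+^{\gamma}(\f{1}{\Omg})(\rd_v \phi)^2$, so their sum is again of this shape. The $\chi_-$ piece is kept, since it appears verbatim on the right of \eqref{eq:intr:red-shift:final} (it will be absorbed at a later stage by an iteration in $\tau_1$); only the $\chi_+$ piece needs to be removed.

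The crucial point — and the only step that is not purely bookkeeping — is that $\chi_+$ is supported in $\{u + v \leq A_+ + 1\}$, where by \eqref{eq:Omg-H} and Theorem~\ref{main.theorem.C0.stability} one has $\Omg^2 \sim \Omg_{RN}^2 \sim e^{2\kappa_+ (u+v)}$, so that on $\mathrm{supp}\, \chi_+$ both $\log_+(\f{1}{\Omg}) \leq C(1 + |u+v|)$ and $\Omg^{-2} \geq C^{-1} e^{-2\kappa_+ (u+v)}$. Since an exponential beats any polynomial, this gives $\chi_+ \log_+^{\gamma}(\f{1}{\Omg}) \leq C \chi_+ \Omg^{-2}$ pointwise, with $C$ depending on $\gamma$, $A_+$ and the Reissner--Nordstr\"om parameters; note $\gamma = \min\{2, s-1, 4s - \alp_0 - 1\} > 0$, so the weights make sense. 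Consequently
\[
	C \tau_1^{-s} \iint_{\calD(\tau_1, \tau_2)} \chi_+ \log_+^{\gamma}(\f{1}{\Omg})(\rd_u \phi)^2 \leq C \tau_1^{-s} \iint_{\calD(\tau_1, \tau_2)} \chi_+ \Omg^{-2}(\rd_u \phi)^2,
\]
and the right-hand integral already occurs on the left of \eqref{eq:intr:red-shift:final} with a fixed positive coefficient. Thus, taking $\tau_0$ large enough (depending on $M$, $\bfe$, $E$, $s$, $\alp_0$) that $C \tau_0^{-s}$ is below that coefficient, one absorbs the $\chi_+$ error into the left-hand side for all $\tau_1 \geq \tau_0$, which gives \eqref{eq:intr:red-shift:final}. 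I expect the main (mild) obstacle to be precisely this absorption — confirming that the exponential weight $\Omg^{-2}$ near $\EH$ genuinely dominates the logarithmic weight on $\mathrm{supp}\, \chi_+$, with an accompanying constant under control once $A_+$ is fixed; all remaining manipulations are a finite linear combination of the previous estimates together with the uniform two-sided bounds on $r$ and $\Omg$ supplied by Theorem~\ref{main.theorem.C0.stability}.
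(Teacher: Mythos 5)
Your argument is essentially identical to the paper's: in both cases one sums the four prior estimates (Propositions~\ref{prop:intr:energy.1}, \ref{prop:intr:ILED}, \ref{prop:intr:red-shift}, \ref{prop:intr:red-shift:EH}), observes that $\log_+^{\gmm}(\tfrac{1}{\Omg}) \leq C \Omg^{-2}$ on the support of $\chi_+$, and absorbs the $\chi_+$ error term into the good bulk term $\iint \chi_+ \Omg^{-2} (\rd_u \phi)^2$ on the left by taking $\tau_0$ large enough that $C\tau_0^{-s}$ is below the coefficient there. The bookkeeping of boundary and data terms you carry out is correct and matches the paper's (implicit) handling.
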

\begin{proof}
Combining the estimates in Propositions \ref{prop:intr:ILED}, \ref{prop:intr:energy.1}, \ref{prop:intr:red-shift} and \ref{prop:intr:red-shift:EH}, we obtain \eqref{eq:intr:red-shift:final} except that on the right hand side we instead have
\begin{equation*}
\begin{split}
\leq &C \bb( \int_{\Gmm^{(1)}_{\tau_{1}}} \bb( 1 + \chi_{-} \log_+^{\alp}(\frac{1}{\Omg}) \bb) (\rd_{v} \phi)^{2}
				+ \int_{\Gmm^{(2)}_{\tau_{1}}} \Omg^{-2} (\rd_{u} \phi)^{2} \bb)\\
			&	+C\tau_1^{-s}\iint_{\calD(\tau_{1}, \tau_{2})} \chi_+\log_+^{\gamma}(\f{1}{\Omg})(\rd_u\phi)^2+C\tau_1^{-s}\iint_{\calD(\tau_{1}, \tau_{2})} \chi_-\log_+^{\gamma}(\f{1}{\Omg})(\rd_v\phi)^2.
			\end{split}
			\end{equation*}
Finally, by choosing $\tau_0$ to be sufficiently large, we can subtract $C\tau_1^{-s}\iint_{\calD(\tau_{1}, \tau_{2})} \chi_+\log_+^{\gamma}(\f{1}{\Omg})(\rd_u\phi)^2$ from both sides (since $\log_{+}^{\gmm}(\frac{1}{\Omg}) \leq C \Omg^{-2}$ on the support of $\chi_{+}$) and obtain the desired conclusion.			
\end{proof}

Iterating Proposition~\ref{prop:intr:red-shift:final}, we obtain a decay statement for $\rd_{v} \phi$ on $\EH$.
\begin{proposition} \label{prop:intr:decay}
Assume that for every $\tau_0$ sufficiently large, there exists $\tau_0'\geq \tau_0$ such that \eqref{contra.2} holds. Then for every $\tau \geq \tau_0'$ we have
\begin{equation} \label{eq:intr:decay}
\begin{aligned}
& \hskip-2em
	\int_{\EH(\tau, \infty)} (\rd_{v} \phi)^{2} 
	& \leq C_{\tau_0', E_{\tau_0'}} \tau^{- \alp_{0}}.
\end{aligned}
\end{equation}

\end{proposition}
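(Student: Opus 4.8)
\textbf{Proof strategy for Proposition~\ref{prop:intr:decay}.}

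The plan is to iterate the master estimate \eqref{eq:intr:red-shift:final} dyadically in $\tau$, exploiting the good bulk terms on the left-hand side to absorb the weighted energy on the outgoing hypersurface $\Gmm^{(1)}_{\tau}$ and to gain a power of $\tau$ at each step. First I would fix $\alp = \alp_{0}$ in Proposition~\ref{prop:intr:red-shift:final} and observe that the assumption \eqref{contra.2} provides the crucial \emph{finite} input datum: the quantity $E_{\tau_{0}'} = \int_{\Gmm^{(1)}_{\tau_{0}'}} \log_{+}^{\alp_{0}}(\tfrac{1}{\Omg})(\rd_{v}\phi)^{2}$ is finite, and by Theorem~\ref{main.theorem.C0.stability} together with the comparison $\log_{+}(\tfrac{1}{\Omg}) \sim u + v$ near $\CH$, the remaining data terms $\int_{\Gmm^{(1)}_{\tau_{0}'}}(\rd_{v}\phi)^{2}$ and $\int_{\Gmm^{(2)}_{\tau_{0}'}} \Omg^{-2}(\rd_{u}\phi)^{2}$ on the initial slices are also finite and in fact decay polynomially in $\tau_{0}'$ (the $\rd_{v}\phi$ term decays like $\tau^{-2s}$ from the pointwise bound $|\rd_{v}\phi| \leq C v^{-s}$ on $\EH$, and similarly for the $\rd_{u}\phi$ term using the $\Omg_{RN}^{2}$-weighted bound for $\rd_{u}\phi$ near $\EH$). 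Thus the full initial quantity
\[
	\mathcal{E}(\tau) := \int_{\Gmm^{(1)}_{\tau}} \bigl( 1 + \chi_{-} \log_{+}^{\alp_{0}}(\tfrac{1}{\Omg}) \bigr)(\rd_{v}\phi)^{2} + \int_{\Gmm^{(2)}_{\tau}} \Omg^{-2}(\rd_{u}\phi)^{2}
\]
satisfies $\mathcal{E}(\tau_{0}') < \infty$.

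The heart of the argument is a pigeonhole/iteration step. Applying \eqref{eq:intr:red-shift:final} on $\calD(\tau_{1}, \tau_{2})$ and dropping the nonnegative boundary terms on $\CH$, $\EH$, $\Gmm^{(2)}_{\tau_{2}}$, one controls $\mathcal{E}(\tau_{2})$ plus a bulk integral of the form $\iint_{\calD(\tau_{1},\tau_{2})} \bigl( \log_{+}^{-\gamma}(\tfrac{1}{\Omg}) + \alp_{0} \chi_{-} \log_{+}^{\alp_{0}-1}(\tfrac{1}{\Omg})\bigr)(\rd_{v}\phi)^{2} + (\cdots)(\rd_{u}\phi)^{2}$ by $C\,\mathcal{E}(\tau_{1})$ plus the error term $C \tau_{1}^{-s} \iint_{\calD(\tau_{1},\tau_{2})} \chi_{-} \log_{+}^{\gamma}(\tfrac{1}{\Omg})(\rd_{v}\phi)^{2}$. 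The key point is that $\gamma \leq \alp_{0} - 1$ is \emph{not} automatic, so one instead runs the estimate for a decreasing ladder of exponents $\alp = \alp_{0}, \alp_{0}-1, \ldots$ down to $\alp$ with $\alp - 1 \leq \gamma$ (using that $\alp_{0} = \lceil \alp' \rceil$ is an integer so the ladder terminates), at each level trading one factor of $\log_{+}(\tfrac1\Omg)$ in the bulk for a loss that is compensated by the $\tau_{1}^{-s}$ smallness once $\tau_{0}$ is large; this is precisely the kind of bootstrap-within-a-bootstrap structure that Proposition~\ref{prop:intr:red-shift:final} is designed for. Having absorbed the error term, one obtains $\mathcal{E}(\tau_{2}) \leq C \mathcal{E}(\tau_{1})$ together with a bulk lower bound; then by the mean-value/pigeonhole principle applied on the dyadic interval $[\tau_{1}, 2\tau_{1}]$ there exists $\tau_{2} \in [\tau_{1}, 2\tau_{1}]$ with
\[
	\mathcal{E}(\tau_{2}) \leq \frac{C}{\tau_{1}} \iint_{\calD(\tau_{1}, 2\tau_{1})} (\hbox{bulk density}) \leq \frac{C}{\tau_{1}} \mathcal{E}(\tau_{1}).
\]
Iterating over $\sim \log_{2}(\tau / \tau_{0}')$ dyadic scales yields $\mathcal{E}(\tau) \leq C (\tau_{0}'/\tau)^{N}$ for arbitrarily large $N$; in particular $\mathcal{E}(\tau) \leq C_{\tau_{0}', E_{\tau_{0}'}} \tau^{-\alp_{0}}$. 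Finally, re-running \eqref{eq:intr:red-shift:final} one last time on $\calD(\tau, \infty)$ and keeping the $\int_{\EH(\tau,\infty)}(\rd_{v}\phi)^{2}$ boundary term on the left gives $\int_{\EH(\tau,\infty)}(\rd_{v}\phi)^{2} \leq C \mathcal{E}(\tau) \leq C_{\tau_{0}',E_{\tau_{0}'}} \tau^{-\alp_{0}}$, which is \eqref{eq:intr:decay}.

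The main obstacle I anticipate is the bookkeeping in the descending-exponent ladder: one must verify that at each level $\alp \mapsto \alp - 1$ the newly-generated error bulk term $\tau_{1}^{-s} \chi_{-} \log_{+}^{\gamma}(\tfrac1\Omg)(\rd_{v}\phi)^{2}$ is genuinely absorbable — i.e., that it is dominated, after choosing $\tau_{0}$ large depending on $A_{\pm}$ and the fixed number of ladder steps, by the good bulk term $\chi_{-}\log_{+}^{\alp-1}(\tfrac1\Omg)(\rd_{v}\phi)^{2}$ from the previous level on the support of $\chi_{-}$, which requires $\alp - 1 \geq \gamma$ only at the \emph{bottom} of the ladder and uses $\log_{+}(\tfrac1\Omg) \lesssim \tau_{1}$-type bounds in $\calD(\tau_{1},\infty)$ at the intermediate levels. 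A secondary subtlety is ensuring the constant $C$ from \eqref{eq:intr:red-shift:final} does not degenerate as $\tau_{1} \to \infty$ (it does not, since $A_{\pm}$ are now fixed), so that the geometric iteration closes with a $\tau$-independent ratio. Both issues are handled by fixing $N$, then $A_{\pm}$, then $\tau_{0}$, in that order.
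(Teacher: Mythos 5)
Your overall strategy — iterate Proposition~\ref{prop:intr:red-shift:final}, use the contradiction assumption \eqref{contra.2} to seed the initial energy, use a dyadic pigeonhole, and descend in the $\log_+$-exponent — is the right one and matches the paper's approach in spirit. However, the way you assemble these pieces has a genuine gap that manifests in the (false) intermediate claim ``$\mathcal{E}(\tau) \leq C(\tau_0'/\tau)^N$ for arbitrarily large $N$.'' The quantity $\mathcal{E}(\tau)$, which carries the weight $\chi_-\log_+^{\alp_0}(\tfrac1\Omg)$ on $\Gmm_\tau^{(1)}$, \emph{cannot} satisfy the self-improving pigeonhole estimate $\mathcal{E}(\tau_2) \leq C\tau_1^{-1}\mathcal{E}(\tau_1)$ that you posit. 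The reason is a type mismatch: the bulk term on the left of \eqref{eq:intr:red-shift:final} (for $\alp=\alp_0$) carries the weight $\chi_-\log_+^{\alp_0-1}(\tfrac1\Omg)$, which is strictly weaker than the weight $\chi_-\log_+^{\alp_0}(\tfrac1\Omg)$ appearing in $\mathcal{E}$. So the pigeonhole in a dyadic interval can only produce a slice $\Gmm_{\tau_2}$ on which the $\log_+^{\alp_0-1}$-weighted boundary energy is $O(\tau_1^{-1})$ times the bulk — not the $\log_+^{\alp_0}$-weighted energy $\mathcal{E}(\tau_2)$ itself. Each dyadic pigeonhole therefore permanently lowers the available log-exponent by one, and the iteration terminates after exactly $\alp_0$ steps (this is also precisely why $\alp_0$, and nothing larger, appears in the conclusion). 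This is not a cosmetic issue: conflating the weight on the boundary with the weight in the bulk is what makes your geometric iteration look like it gains an unbounded power.

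The paper's proof makes the descent explicit by running an induction on an integer $n$ from $0$ to $\alp_0$, with an inductive hypothesis \eqref{eq:intr:decay:indHyp} that simultaneously records a whole hierarchy of bulk quantities $\tau^j \iint \chi_-\log_+^{\alp_0-j-1}(\tfrac1\Omg)(\rd_v\phi)^2$ for $j = 0,\dots,n$. At each inductive step, the pigeonhole is applied to the \emph{lower-exponent} bulk controlled at step $n_0-1$, and Proposition~\ref{prop:intr:red-shift:final} is then applied with the reduced exponent $\alp = \alp_0 - n_0$. A second point where your sketch deviates: you propose a ``descending ladder within a single step'' to absorb the error term $\tau_1^{-s}\iint\chi_-\log_+^\gamma(\tfrac1\Omg)(\rd_v\phi)^2$. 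The paper instead absorbs it either directly (at the base and when $\alp_0 - n_0 \geq 2$, using $\gamma\le 2$ so $\log_+^\gamma \leq \log_+^{\alp_0-n_0}$ plus the induction hypothesis), or — when $\alp_0-n_0 \le 1$ — by invoking the inductive hypothesis at level $n = \alp_0-2$ together with Cauchy–Schwarz (this is where the numerology $\gamma \le s-1$, $\gamma \le 4s - \alp_0 - 1$ is actually used). In other words, the error term is controlled by the \emph{already-proved} lower-level inductive bounds, not by re-running the master estimate at a lower $\alp$ on the same domain. You would need to restructure your argument around an explicit inductive ladder on the gained power of $\tau$ (rather than on the absorption of the error term) in order to make the decay exponent come out correctly.
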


\begin{proof} 
We will prove the following statement for every $\tau\geq \tau_0'$ and for $n \leq \alp_{0}$ by an induction on $n$:
\begin{equation} \label{eq:intr:decay:indHyp}
\begin{aligned}
& \hskip-2em
	\tau^{n} \int_{\EH(\tau, \infty)} (\rd_{v} \phi)^{2} 
	+ \sum_{j=0}^{n} (\alp_{0} - j) \tau^{j} \iint_{\calD(\tau, \infty)} \chi_{-} \log_+^{\alp_{0}-j-1} (\frac{1}{\Omg}) (\rd_{v} \phi)^{2} \\
	& + \tau^{n} \iint_{\calD(\tau, \infty)} \chi_{+} \Omg^{-2} (\rd_{u} \phi)^{2} 
	+ \tau^{n} \iint_{\calD(\tau, \infty)} \log_+^{-\gamma}(\f{1}{\Omg}) \bb( (\rd_{v} \phi)^{2} + (\rd_{u} \phi)^{2} \bb)
	\leq \calI_{n},
\end{aligned}
\end{equation}
where $\calI_{n}$ is a positive constant depending on $E_{\tau_0'}$, $n$, $\tau_0'$, $M$, ${\bf e}$, $E$, $s$, $\alp_0$ and is independent of $\tau$.

We begin with the $n = 0$ case. By Proposition~\ref{prop:intr:red-shift:final} with $\alp=\alp_0$, and using the contradiction assumption \eqref{contra.2}, we get
\begin{equation}\label{intr:induction:n=0}
\begin{aligned}
& \hskip-2em
	\int_{\EH(\tau, \infty)} (\rd_{v} \phi)^{2} 
	+ \alp_{0}  \iint_{\calD(\tau, \infty)} \chi_{-} \log_+^{\alp_{0}-1} (\frac{1}{\Omg}) (\rd_{v} \phi)^{2} \\
	& +  \iint_{\calD(\tau, \infty)} \chi_{+} \Omg^{-2} (\rd_{u} \phi)^{2} 
	+  \iint_{\calD(\tau, \infty)} \log_+^{-\gamma}(\f{1}{\Omg}) \bb( (\rd_{v} \phi)^{2} + (\rd_{u} \phi)^{2} \bb)\\
	\leq &C(1+ E_{\tau_0'})+C\tau^{-s}\iint_{\calD(\tau, \infty)} \chi_-\log_+^{\gamma}(\f{1}{\Omg})(\rd_v\phi)^2.
\end{aligned}
\end{equation}
Notice that we have used \eqref{contra.2} as well as the estimates in Theorem \ref{main.theorem.C0.stability} to show that the ``data terms'' on $\Gmm_{\tau_0'}$ are bounded by $C(1+E_{\tau_0'})$. Recall now that $\gamma\leq 2\leq \alp_0-1$ and therefore after choosing $\tau_0$ to be sufficiently large, we can subtract $C\tau^{-s}\iint_{\calD(\tau, \infty)} \chi_-\log_+^{\gamma}(\f{1}{\Omg})(\rd_v\phi)^2$ from both sides of \eqref{intr:induction:n=0} to obtain
\begin{equation}\label{intr:induction:n=0.1}
\begin{aligned}
& \hskip-2em
	\int_{\EH(\tau, \infty)} (\rd_{v} \phi)^{2} 
	+ \alp_{0}  \iint_{\calD(\tau, \infty)} \chi_{-} \log_+^{\alp_{0}-1} (\frac{1}{\Omg}) (\rd_{v} \phi)^{2} \\
	& +  \iint_{\calD(\tau, \infty)} \chi_{+} \Omg^{-2} (\rd_{u} \phi)^{2} 
	+  \iint_{\calD(\tau, \infty)} \log_+^{-\gamma}(\f{1}{\Omg}) \bb( (\rd_{v} \phi)^{2} + (\rd_{u} \phi)^{2} \bb)
	\leq C(1+ E_{\tau_0'}),
\end{aligned}
\end{equation}
which is the desired conclusion for $n=0$.

Assume, for the purpose of induction, that \eqref{eq:intr:decay:indHyp} holds for $n = 0, 1, \ldots, n_{0}-1$, where $n_{0}$ is an integer such that $1 \leq n_{0} \leq \alp_{0}$. Then for every $k \in \bbN\cap\{2^k\geq \tau_0'\}$, by the pigeonhole principle, there exists $\tau_{(k)} \in [2^{k}, 2^{k+1}]$ such that
\begin{equation} \label{eq:intr:decay:pigeonhole}
\begin{aligned}
& \hskip-2em
	\int_{\Gmm^{(1)}_{\tau_{(k)}}} \bb( \big( \log_+^{-\gamma}(\f{1}{\Omg}) + (\alp_{0} - n_{0} + 1) \chi_{-} \log_+^{\alp_{0} - n_{0}} (\frac{1}{\Omg}) \big) (\rd_{v} \phi)^{2} \bb) \\
	& + \int_{\Gmm^{(2)}_{\tau_{(k)}}} \bb( (\log_+^{-\gamma}(\f{1}{\Omg}) + \chi_{+} \Omg^{-2}) (\rd_{u} \phi)^{2}\bb)
	\leq C \calI_{n_{0}-1} \tau_{(k)}^{-n_{0}},
\end{aligned}
\end{equation}
for some $C > 0$. Observe that the first two terms on the right-hand side of \eqref{eq:intr:red-shift:final} for $\alp = \alp_{0} - n_{0}$ and $\tau_{1} = \tau_{(k)}$ is bounded by a constant multiple of the left-hand side of \eqref{eq:intr:decay:pigeonhole}, where the constant may depend on $n_{0}$ but is independent of $\tau_{(k)}$.
By appealing to Proposition~\ref{prop:intr:red-shift:final}, we obtain that for every $\tau\in [\tau_{(k)},\tau_{(k+1)})$:
\begin{equation} \label{intr:induction:generaln}
\begin{aligned}
& \hskip-2em
	 \int_{\EH(\tau_{(k)}, \tau)} (\rd_{v} \phi)^{2} 
	+  (\alp_0-n_0)\iint_{\calD(\tau_{(k)}, \tau)} \chi_{-} \log_+^{\alp_{0}-n_0-1} (\frac{1}{\Omg}) (\rd_{v} \phi)^{2} \\
	& +  \iint_{\calD(\tau_{(k)}, \tau)} \chi_{+} \Omg^{-2} (\rd_{u} \phi)^{2} 
	+  \iint_{\calD(\tau_{(k)}, \tau)} \log_+^{-\gamma}(\f{1}{\Omg}) \bb( (\rd_{v} \phi)^{2} + (\rd_{u} \phi)^{2} \bb)\\
	\leq &C\calI_{n_0-1}\tau_{(k)}^{-n_0}+C\tau_{(k)}^{-s}\iint_{\calD(\tau_{(k)}, \tau)} \chi_-\log_+^{\gamma}(\f{1}{\Omg})(\rd_v\phi)^2.
\end{aligned}
\end{equation}
We now separate the argument into two cases: either $\alp_0-n_0\geq 2$ or $\alp_0-n_0\leq 1$. In the first case, since $\gamma\leq 2\leq \alp_0-n_0$, we can apply the induction hypothesis for $n=n_0-1$, which gives
\begin{equation*}
\begin{split}
\tau_{(k)}^{-s}\iint_{\calD(\tau_{(k)}, \tau)} \chi_-\log_+^{\gamma}(\f{1}{\Omg})(\rd_v\phi)^2\leq &C\tau_{(k)}^{-s}\iint_{\calD(\tau_{(k)}, \tau)} \chi_{-} \log_+^{\alp_{0}-n_0} (\frac{1}{\Omg}) (\rd_{v} \phi)^{2}\leq \calI_{n_0-1}\tau_{(k)}^{-n_0-s+1}.
\end{split}
\end{equation*}
Since $s>2$, combining this with \eqref{intr:induction:generaln} thus gives
\begin{equation} \label{intr:induction:generaln.1}
\begin{aligned}
& \hskip-2em
	 \int_{\EH(\tau_{(k)}, \tau)} (\rd_{v} \phi)^{2} 
	+  (\alp_0-n_0)\iint_{\calD(\tau_{(k)}, \tau)} \chi_{-} \log_+^{\alp_{0}-n_0-1} (\frac{1}{\Omg}) (\rd_{v} \phi)^{2} \\
	& +  \iint_{\calD(\tau_{(k)}, \tau)} \chi_{+} \Omg^{-2} (\rd_{u} \phi)^{2} 
	+  \iint_{\calD(\tau_{(k)}, \tau)} \log_+^{-\gamma}(\f{1}{\Omg}) \bb( (\rd_{v} \phi)^{2} + (\rd_{u} \phi)^{2} \bb)
	\leq C\calI_{n_0-1}\tau_{(k)}^{-n_0}.
\end{aligned}
\end{equation}
In the second case, since $n_0-1\geq \alp_0-2$, we can thus apply the inductive hypothesis for $n=\alp_0-2$ (notice that the assumption of Theorem \ref{final.blow.up.step} in particular ensures that $\alp_0-3\geq 0$). By \eqref{eq:intr:decay:indHyp}, we therefore have 
$$  \tau^{\alp_0-2}\iint_{\calD(\tau, \infty)} \chi_{-} \log (\frac{1}{\Omg}) (\rd_{v} \phi)^{2}+\tau^{\alp_0-3}\iint_{\calD(\tau, \infty)} \chi_{-} \log_+^2 (\frac{1}{\Omg}) (\rd_{v} \phi)^{2}\leq \mathcal I_{\alp_0-2}$$
for all $\tau\geq \tau_0'$. By the Cauchy-Schwarz inequality, since $\gamma\in (1,2]$, this then gives
$$ \iint_{\calD(\tau, \infty)} \chi_{-} \log_+^{\gamma} (\frac{1}{\Omg}) (\rd_{v} \phi)^{2}\leq C\mathcal I_{\alp_0-2}\tau^{-\alp_0+1+\gamma}$$
for all $\tau\geq \tau_0'$. We can use this to control the last term in \eqref{intr:induction:generaln} to get
\begin{equation} \label{intr:induction:generaln.2}
\begin{aligned}
& \hskip-2em
	 \int_{\EH(\tau_{(k)}, \tau)} (\rd_{v} \phi)^{2} 
	+  (\alp_0-n_0)\iint_{\calD(\tau_{(k)}, \tau)} \chi_{-} \log_+^{\alp_{0}-n_0-1} (\frac{1}{\Omg}) (\rd_{v} \phi)^{2} \\
	& +  \iint_{\calD(\tau_{(k)}, \tau)} \chi_{+} \Omg^{-2} (\rd_{u} \phi)^{2} 
	+  \iint_{\calD(\tau_{(k)}, \tau)} \log_+^{-\gamma}(\f{1}{\Omg}) \bb( (\rd_{v} \phi)^{2} + (\rd_{u} \phi)^{2} \bb)\\
	\leq &C\calI_{n_0-1}\tau_{(k)}^{-n_0}+C\mathcal I_{\alp_0-2}\tau_{(k)}^{-\alp_0+1+\gamma-s}\leq C(\calI_{n_0-1}+\calI_{\alp_0-2})\tau_{(k)}^{-n_0},
\end{aligned}
\end{equation}
where in the last estimate we have used $\gamma\leq s-1$ and $n_0\leq \alp_0$.

Therefore, in both cases, by \eqref{intr:induction:generaln.1} and \eqref{intr:induction:generaln.2} (and using the induction hypothesis for the $j\leq n_0-1$ term in the sum below), we conclude that 
\begin{equation} \label{eq:intr:decay:conclusion}
\begin{aligned}
&
	2^{kn_0} \int_{\EH(\tau_{(k)}, 4\tau_{(k)})} (\rd_{v} \phi)^{2} 
	+ \sum_{j=0}^{n_0} (\alp_{0} - j) 2^{kj} \iint_{\calD(\tau_{(k)}, 4\tau_{(k)})} \chi_{-} \log_+^{\alp_{0}-j-1} (\frac{1}{\Omg}) (\rd_{v} \phi)^{2} \\
	& + 2^{kn_0} \iint_{\calD(\tau_{(k)}, 4\tau_{(k)})} \chi_{+} \Omg^{-2} (\rd_{u} \phi)^{2} 
	+ 2^{kn_0} \iint_{\calD(\tau_{(k)}, 4\tau_{(k)})} \log_+^{-\gamma}(\f{1}{\Omg}) \bb( (\rd_{v} \phi)^{2} + (\rd_{u} \phi)^{2} \bb)
	\leq \calI_{n_0}.
\end{aligned}
\end{equation}
Note, in particular, that $[2^{k+1}, 2^{k+2}] \subseteq [\tau_{(k)}, 4 \tau_{(k)}]$. Hence, for any $\tau \geq 2^{\lceil \log_2\tau_0' \rceil+1}$, we can sum \eqref{eq:intr:decay:conclusion} for $k\geq \lfloor \log_2\tau\rfloor -1$ to obtain \eqref{eq:intr:decay:indHyp} for $n = n_{0}$. Finally, for $\tau\in [\tau_0', 2^{\lceil \log_2\tau_0' \rceil+1})$, the desired estimate \eqref{eq:intr:decay:indHyp} for $n=n_0$ follows from \eqref{intr:induction:n=0.1} (recall that the implicit constant is allowed to depend on $\tau_0'$). The concludes the induction and proves \eqref{eq:intr:decay:indHyp}. The conclusion of the proposition follows as an immediate consequence.
\qedhere
\end{proof}

We now conclude the proof of \eqref{blow.up.interior} in  Theorem \ref{final.blow.up.step}:
\begin{proof}[Proof of \eqref{blow.up.interior} in Theorem \ref{final.blow.up.step}]
Using Proposition \ref{prop:intr:decay}, we in particular have 
$$\int_{\mathcal H^+(\tau,2\tau)} \tau^{\alp_0} (\rd_v\phi)^2 \leq C$$
for $\tau\geq \tau_0'$ (for $C$ depending in particular on $\tau_0'$ but independent of $\tau$.
We apply this estimate for a sequence $\tau_k=2^k$ to get
\begin{equation*}
\begin{split}
\int_{\EH\cap\{v\geq \tau_0'\}} \f{v^{\alp_0}}{\log^2(1+v)}(\rd_v\phi)^2\leq &C\sum_{k=0}^\infty \int_{\mathcal H^+(\tau_k,\tau_{k+1})} \f{\tau^{\alp_0}}{(k+1)^2}(\rd_v\phi)^2\leq C\sum_{k=0}^{\infty}\f{1}{(k+1)^2}<\infty.
\end{split}
\end{equation*}
We have thus achieved \eqref{contra.2.goal} and conclude the proof of \eqref{blow.up.interior} in Theorem \ref{final.blow.up.step}.
\qedhere
\end{proof}

\subsection{Blow up of $\f{\lambda}{\Omg^2}$}\label{sec.blow.up.dvr}

\begin{proof}[Proof of \eqref{blow.up.interior.dvr} in Theorem \ref{final.blow.up.step}]
This is proven using \eqref{eqn.Ray}:
\begin{equation*}
\rd_v(\f{\lambda}{\Omg^2})=-\f{r}{\Omg^2}(\rd_v\phi)^2.
\end{equation*}
By Theorem \ref{main.theorem.C0.stability}, $|\rd_v\log\Omg-\rd_v\log\Omg_{RN}|+|\lambda-\lambda_{RN}|(u,v) \leq C v^{-s}$, $|r-r_{RN}| \leq C \max\{v^{-s}, |u|^{-s+1} \}$ for all $u<u_s$ and therefore there exists $A_\lambda\in\mathbb R$ sufficiently large and $u_\lambda<u_s$ such that 
\begin{enumerate}
\item 
$\f{\lambda}{\Omg^{2}}(u,-u+A_\lambda)<0$ for all $u<u_\lambda$ ,
\item 
$r(u,v)>r_0>0$ for every $(u,v)\in \{(u,v): u<u_\lambda, \, v\geq -u+A_\lambda \}$ for some $r_0$ .
\end{enumerate}
Therefore, by \eqref{R.eqn}, we can integrate in the $v$-direction starting from $v=-u+A_\lambda$ to get
\begin{equation*}
\begin{split}
\left|\f{\lambda}{\Omg^2}\right|(u,v)\gtrsim &\, \int_{-u+A_\lambda}^v (\f{r(\rd_v\phi)^2}{\Omg^2})(u,v')\, dv'\\
\gtrsim &\,r_0\int_{-u+A_\lambda}^v e^{2\kappa_-(v'+u)}(\rd_v\phi)^2(u,v')\, dv'
\gtrsim  \,\int_{-u+A_\lambda}^v e^{2\kappa_- v'}(\rd_v\phi)^2(u,v')\, dv',
\end{split}
\end{equation*}
for every $(u,v)\in \{(u,v): u<u_\lambda, \, v+u\geq A_\lambda\}$, with an implicit constant depending on $r_0$ and $u$. 

By Theorem \ref{final.blow.up.step} (and the fact that for every fixed $u$, $\Omg\sim e^{-\kappa_- v}$ for $v$ sufficient large with a constant depending on $u$), the right hand side $\to \infty$ as $v\to \infty$. This implies the desired conclusion.

\end{proof}

\section{Stability and blow up on the entire Cauchy horizon: Proof of Theorem~\ref{thm.nonpert}} \label{sec.nonpert}

In this section, we prove Theorem~\ref{thm.nonpert}. Unlike Theorems~\ref{main.theorem.C0.stability} and \ref{final.blow.up.step}, we need to work in the \emph{nonperturbative} region, i.e., the spacetime is not necessarily close to a Reissner--Nordstr\"om in any quantitative sense. Special features of Einstein--Maxwell--(real)--scalar--field system in spherical symmetry therefore play an important role in the proof. A key step is propagation of $L^{1}$-type bounds on the metric coefficients and $\phi$, which hold initially thanks to Theorem~\ref{main.theorem.C0.stability}, in characteristic rectangles with a lower bound on $r$ (Lemma~\ref{lem.nonpert-L1}). This strengthens the estimates for the scalar field in \cite[Section~13]{D2}.

We first introduce the coordinates $(u, V)$, which will be fixed for the remainder of this section.
Fix an incoming null curve $\uC_{1}$ (maximally extended) whose past endpoint intersects $\EH_{1}$. We define the coordinates $(U, v)$ by the gauge condition \eqref{gauge.1} on $\EH_{1}$ and \eqref{gauge.2} (i.e., $\rd_{U} r = -1$) on the whole $\uC_{1}$, so that $\uC_{1} = \set{(U, v) : v = 1}$. In particular, if we use the same initial incoming curve $\uC_{1}$, then this coordinate system extends that of Theorem~\ref{main.theorem.C0.stability} past the perturbative region $U \leq U_{s}$. As before, we define $(u, V)$ from $(U, v)$ by \eqref{U.def} and \eqref{V.def}, respectively. 

We record a basic observation regarding the coordinates $(u, V)$ on $\CH_{1}$.
\begin{lemma} \label{lem.u-reg}
After extending $V$ continuously to $\calQ^{+}$ with respect to the topology of $\bbR^{1+1}$, the Cauchy horizon $\CH_{1}$ coincides with the curve $\set{V = 1}$. Moreover, $u$ is finite and nondegenerate (i.e., $d u \neq 0$) on $\CH_{1}$ minus (possibly) the future endpoint. In particular, $u_{\CH_{1}} \in (-\infty, \infty]$ in the statement of Theorem~\ref{thm.nonpert} is well-defined.
\end{lemma}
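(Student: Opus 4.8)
The plan is to reduce the statement to the explicit solutions of the ODEs defining $U$ and $V$, together with the boundary structure of Theorem~\ref{thm:kommemi}. First I would note that \eqref{U.def} and \eqref{V.def} integrate to $U = (2\kappa_{+})^{-1} e^{2 \kappa_{+} u}$ and $V = 1 - (2 \kappa_{-})^{-1} e^{- 2 \kappa_{-} v}$; in particular $u \mapsto U$ and $v \mapsto V$ are strictly increasing diffeomorphisms with $U \to 0 \Leftrightarrow u \to -\infty$ and $V \to 1 \Leftrightarrow v \to \infty$. Since $(U, v)$, $(u, v)$ and $(u, V)$ are all double null coordinate systems for the same quotient metric with the same labelling of the two null foliations, the transition maps are separate reparametrizations of the two null coordinates; in particular $V$ is a function of the single null coordinate $v$, whose level sets are the incoming null curves transverse to $\EH_{1} = \set{U = 0}$ (cf.\ Definition~\ref{def.EH}).

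Next I would identify $\set{V = 1} = \set{v = \infty}$ with $\CH_{1}$. Near $i^{+}_{1}$ this is precisely what Theorem~\ref{main.theorem.C0.stability} provides (it applies here by Remark~\ref{C0.stab.assumptions}): it attaches a null boundary $\set{V = 1}$ emanating from the future endpoint of $\EH_{1}$, along which $r$, $\phi$ and $\log \Omg$ extend continuously and $r$ stays positive; by the classification of boundary components in Theorem~\ref{thm:kommemi} such a segment must be part of $\CH_{1}$. Because $V = V(v)$ depends only on the null coordinate $v$ and not on the transverse coordinate $U$, and because $V < 1$ throughout $\calQ$, it follows that $V$ extends continuously to all of $\calQ^{+}$ with $\set{V = 1}$ equal to $\CH_{1}$ (together with the vertex $i^{+}_{1}$, which by convention is not part of $\CH_{1}$), and $V<1$ on $\calQ$.

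For the second assertion, fix $q \in \CH_{1} \setminus \set{p_{\CH_{1}}}$. By the Penrose-diagram structure of Theorem~\ref{thm:kommemi}, $q$ is the future endpoint of an outgoing null ray $\set{U = c}$; here $c > 0$ since $q \notin \EH_{1} = \set{U = 0}$, and $c < \infty$ since this ray consists of genuine spacetime points, so $u(q) = (2\kappa_{+})^{-1} \log(2\kappa_{+} c)$ is finite. Moreover $d u = (2 \kappa_{+} U)^{-1} \, d U \neq 0$ along $\CH_{1} \setminus \set{p_{\CH_{1}}}$: in any regular double null chart $(\hat u, \hat v)$ one has $\rd_{\hat u} U = - \rd_{\hat u} r \restriction_{\uC_{1}}$ by \eqref{gauge.2}, which is nonzero because $r$ is strictly monotone along $\uC_{1}$ (indeed $\rd_{U} r = -1$, consistently with the fact that $\uC_{1}$ lies in the trapped interior of the black hole). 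Consequently $u_{\CH_{1}} := \sup_{\CH_{1}} u$, which is the $u$-coordinate of $p_{\CH_{1}}$ (to be read as $+\infty$ if $p_{\CH_{1}}$ corresponds to ``$U = \infty$''), satisfies $u_{\CH_{1}} > -\infty$ — because $u$ takes a finite value at any point of the nonempty set $\CH_{1}$ — and trivially $u_{\CH_{1}} \leq +\infty$; hence $u_{\CH_{1}} \in (-\infty, \infty]$. The step I expect to require the most care is the global statement that the $(U, v)$-chart built from $\EH_{1}$ and the maximally extended curve $\uC_{1}$ actually reaches a one-sided neighborhood of all of $\CH_{1} \setminus \set{p_{\CH_{1}}}$, so that $u$ and $V$ are genuinely defined there and $U$ has the claimed positive, finite range; this is exactly where the precise description of $\calQ^{+}$ in Theorem~\ref{thm:kommemi} — that $\CH_{1}$ is the unique null boundary segment emanating from $i^{+}_{1}$, and that $\uC_{1}$ is maximally extended into the interior — must be invoked.
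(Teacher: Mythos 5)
Your overall picture is right -- reduce the claim to properties of the $(U,v)$-chart and the incoming curve $\uC_{1}$, using that $U$ (hence $u$) is constant along outgoing rays -- but you leave the two steps that carry the actual content of the lemma as gaps or as circular reasoning. For finiteness of $u$: asserting "$c < \infty$ since this ray consists of genuine spacetime points" presupposes that $U$ is already known to be a well-defined, finite coordinate along the outgoing ray terminating at $q$, which is precisely what must be shown. You flag this yourself at the end ("the step I expect to require the most care\dots"), but you do not supply the resolution. The paper's argument is short: by global hyperbolicity of $\calQ$, any outgoing curve $C_{*}$ hitting $\CH_{1} \setminus \set{p_{\CH_{1}}}$ meets the past boundary $\EH_{1} \cup \Sgm_{0} \cup \EH_{2}$, hence crosses $\uC_{1}$; on $\uC_{1}$ the gauge \eqref{gauge.2} together with $r \ge 0$ on $\calQ$ pins down $0 \le U \le r(0,1) < \infty$, and then $u = (2\kappa_{+})^{-1}\log(2\kappa_{+}U)$ is finite. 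That global-hyperbolicity-plus-Raychaudhuri-plus-$r\ge 0$ combination is the missing idea.

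For nondegeneracy, your justification is circular: "$\rd_{\hat u} U = -\rd_{\hat u} r\restriction_{\uC_{1}}$ by \eqref{gauge.2}, which is nonzero because $r$ is strictly monotone along $\uC_{1}$ (indeed $\rd_{U} r = -1$, \dots)" uses the gauge $\rd_{U} r = -1$ to justify the very monotonicity that validates that gauge. The parenthetical appeal to "$\uC_{1}$ lies in the trapped interior" is also unsupported -- in fact the past endpoint of $\uC_{1}$ lies on $\EH_{1}$, where $\rd_{v} r \ge 0$ by (the limit of) Lemma~\ref{no.trapped}, so it is not trapped there. What you actually need is only that $\rd_{\hat u} r < 0$ along $\uC_{1}$ in a regular background coordinate $\hat u$, and the correct source is the Raychaudhuri equation \eqref{eqn.Ray} (which makes $\rd_{\hat u} r / \Omg^{2}$ nonincreasing in the incoming direction) combined with the fact that the admissibility condition forces $\rd_{\hat u} r < 0$ at the past endpoint of $\uC_{1}$ on $\EH_{1}$ (cf.\ Remark~\ref{C0.stab.assumptions}.(4)). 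Once these two points are filled in, the rest of your outline -- including the identification $\set{V=1} = \CH_{1}$, which the paper treats as immediate from the construction and Theorems~\ref{thm:kommemi} and \ref{main.theorem.C0.stability} -- agrees with the paper's argument.
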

 
\begin{proof}
Fix an outgoing curve $C_{\ast}$ in $\calQ^{+}$ which intersects $\CH_{1}$ minus the future endpoint. Since $\calQ$ is globally hyperbolic, $C_{\ast}$ intersects the initial hypersurface $\EH_{1} \cup \Sgm \cup \EH_{2}$, from which it follows that $C_{\ast}$ intersects $\uC_{1}$ in $\calQ$. Since the $u$ coordinate is constant on $C_{\ast}$, it now suffices to verify that $u$ is finite and nondegenerate on every point in $\uC_{1} \subset \calQ$.

By the condition \eqref{gauge.2}, the Raychaudhuri equation (which ensures that $r$ decreases along $\uC_{1}$ in the incoming direction) and the fact that $r \geq 0$ on $\calQ$, the function $U$ is finite and nondegenerate on every point on $\uC_{1} \cap \calQ$. Since the change of variables \eqref{U.def} is nondegenerate and keeps $u(U)$ finite as long as $U$ is, the same statement holds for $u$ as desired. 
\end{proof}

Consider a characteristic rectangle $\calR = \set{(u, V) : u_{1} \leq u \leq u_{2}, \ V_{1} \leq V < 1}$, where $u_{1}, u_{2}, V_{1}$ are any numbers such that $-\infty < u_{1} < u_{2} < u_{\CH_{1}}$ and $V_{1} > V(1)$ (see Figure~\ref{fig:nonpert}).

\begin{figure}[h]
\begin{center}
\def\svgwidth{180px}
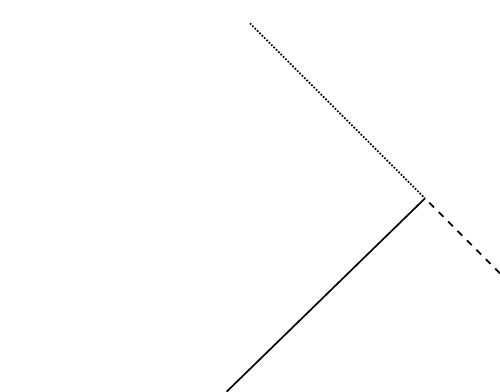 
\caption{Characteristic rectangle $\calR$} \label{fig:nonpert}
\end{center}
\end{figure}

In the following lemmas, we will assume that the following bounds on $r$ holds in $\calR$:
\begin{equation} \label{eq:nonpert-r}
	0 < r_{0} \leq r(u, V) \leq R < \abs{\bfe} \quad \hbox{ for all } (u, V) \in \calR.
\end{equation}
The key restrictive assumption is the \emph{lower bound} $r(u, V) \geq r_{0} > 0$. In fact, the upper bound in \eqref{eq:nonpert-r} turns out to be a simple consequence of the fact that $\calR$ sits in the interior of a subextremal black hole; see the proof of Theorem~\ref{thm.nonpert} below. 

Under the above assumptions, we first show that the spacetime volume of $\calR$ is finite. Its proof requires the use of \eqref{eq:nonpert-r} and the precise structure of the spherically symmetric Einstein--Maxwell--(real)--scalar--field system (in particular, the strict inequality $R < \abs{\bfe}$ in \eqref{eq:nonpert-r} is crucial).

\begin{lemma} \label{lem.nonpert-vol}
Let $\calR = \set{(u, V) : u_{1} \leq u \leq u_{2}, \ V_{1} \leq V < 1} \subset \calQ$ be a characteristic rectangle such that \eqref{eq:nonpert-r} holds. Then the spacetime volume of $\calR$ is finite, i.e.,
\begin{equation}\label{finite.volume}
\int_{u_1}^{u_2}\int_{V_1}^{1} \Omg^2(u,V)\, d V\, du\leq \f {8 R^2}{|1-\f{{\bf e}^2}{R^2}|}<\infty,
\end{equation}
where $R$ as in \eqref{eq:nonpert-r}.
\end{lemma}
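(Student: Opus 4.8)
\textbf{Proof plan for Lemma~\ref{lem.nonpert-vol}.}

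The plan is to integrate the wave equation for $r$ (or equivalently, the evolution equation for $\partial_V r / \Omega^2$) in order to control $\int \Omega^2$ by a telescoping argument. In the $(u,V)$ coordinate system, set $\lambda_V := \partial_V r$ and recall from \eqref{eqn.Ray} that $\partial_u(\lambda_V / \Omega^2) = - r (\partial_u \phi)^2 / \Omega^2 \leq 0$, so that $\lambda_V / \Omega^2$ is nonincreasing in $u$ (this is the "no anti-trapped surfaces'' monotonicity). Likewise, from the first equation of \eqref{WW.SS} (rewritten as an evolution equation for $\lambda_V$), we have
\begin{equation*}
\partial_u \lambda_V = \frac{\Omega^2}{4 r^2}\left(\frac{\e^2}{r^2} - 1\right) + \frac{\lambda_V}{r}(-\partial_u r).
\end{equation*}
Under the assumption \eqref{eq:nonpert-r}, namely $r_0 \leq r \leq R < |\e|$, the coefficient $\frac{\e^2}{r^2} - 1$ is strictly positive and bounded below by $\frac{\e^2}{R^2} - 1 > 0$. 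The idea is that this forces $-\partial_u r$ to control $\Omega^2$: indeed, first I would observe that $\partial_u r = \nu < 0$ in $\calR$ (since $\calR$ lies in the black hole interior, $r$ is decreasing along outgoing null curves; this follows from the trapped-surface structure, e.g.\ via Lemma~\ref{no.trapped} in the Appendix and the Raychaudhuri equation, as invoked in Remark~\ref{C0.stab.assumptions}). Then $|\nu|$ is integrable in $u$ over $[u_1, u_2]$ since $\int_{u_1}^{u_2} (-\partial_u r)\, du = r(u_1, V) - r(u_2, V) \leq R - r_0 < \infty$ for each fixed $V$.

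The core computation is then to produce the pointwise bound $\Omega^2 \lesssim -\partial_u r \cdot (\text{something integrable in } V)$, or more directly a bound on the double integral. Concretely, I would rewrite $\partial_u \partial_V r = \partial_V \partial_u r = \partial_V \nu$, and use the equation $\partial_u \partial_V r = \frac{2(\varpi - \e^2/r)}{r^2}\frac{\partial_u r \partial_V r}{1 - \mu}$ from \eqref{eq:EMSF-r-phi-m}, but the cleanest route is: from the $r$-wave equation \eqref{WW.SS} we get $\partial_V \nu = \partial_V \partial_u r = -\frac{\Omega^2}{4r} - \frac{\lambda_V \nu}{r} + \frac{\Omega^2 \e^2}{4 r^3} = \frac{\Omega^2}{4r}\left(\frac{\e^2}{r^2} - 1\right) - \frac{\lambda_V \nu}{r}$. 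Since $\nu < 0$, integrating $\partial_V \nu$ in $V$ from $V_1$ to $V$ and using that $\nu \to \nu|_{\CH} \leq 0$ (or simply that $\nu$ stays negative and bounded, which follows from \eqref{eq:EMSF-r-phi-m} for $\nu$ giving $\partial_V \log(-\nu) = \partial_V \log \Omega^2 \cdot(\ldots)$ — here I would instead just use $\nu|_{V=V_1}$ as a finite reference value) yields, after noting $-\lambda_V \nu / r \leq 0$ has a favorable sign when $\lambda_V \geq 0$ (and one checks $\lambda_V \geq 0$ is not automatic, so one should be careful) — actually the robust approach is to integrate $\int_{V_1}^1 \partial_V \nu\, dV = \nu|_{V=1} - \nu|_{V=V_1}$, giving
\begin{equation*}
\int_{V_1}^1 \frac{\Omega^2}{4r}\left(\frac{\e^2}{r^2}-1\right) dV = \nu|_{V=1} - \nu|_{V=V_1} + \int_{V_1}^1 \frac{\lambda_V \nu}{r}\, dV.
\end{equation*}
Using $\frac{1}{4r}(\frac{\e^2}{r^2}-1) \geq \frac{1}{4R}(\frac{\e^2}{R^2}-1) > 0$ and bounding the right-hand side: $\nu|_{V=1} - \nu|_{V=V_1} \leq -\nu|_{V=V_1} = |\nu|(u, V_1)$, and the last term $\int_{V_1}^1 \frac{\lambda_V \nu}{r}\, dV$ — here since $\nu < 0$, if $\lambda_V < 0$ this is positive, which is bad; so one must instead work with the $\varpi$ formulation or use the monotonicity of $\lambda_V/\Omega^2$ together with the continuity up to $\CH$ from Theorem~\ref{main.theorem.C0.stability}. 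I expect the clean argument actually uses the identity $\partial_V(r \nu) = \nu \lambda_V + r \partial_V \nu = \frac{\Omega^2}{4}(\frac{\e^2}{r^2}-1)$ wait — let me recompute: $\partial_V \partial_u r + \frac{\nu \lambda_V}{r} = \frac{\Omega^2}{4r}(\frac{\e^2}{r^2}-1)$, so $\partial_V(r\nu)$ is not quite it. The right combination is $\partial_u(r\lambda_V) = \lambda_V \nu + r \partial_u \lambda_V$; using $\partial_u \lambda_V = \partial_u \partial_V r = \frac{\Omega^2}{4r}(\frac{\e^2}{r^2}-1) - \frac{\nu\lambda_V}{r}$ gives $\partial_u(r \lambda_V) = \frac{\Omega^2}{4}(\frac{\e^2}{r^2}-1)$. \textbf{This is the key identity.}

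So: $\partial_u(r \lambda_V) = \frac{\Omega^2}{4}\left(\frac{\e^2}{r^2} - 1\right)$, and since $\e^2/r^2 - 1 > 0$ in $\calR$ (by $r \leq R < |\e|$), we get $\partial_u(r\lambda_V) \geq \frac{\Omega^2}{4}\left(\frac{\e^2}{R^2}-1\right) > 0$, i.e.\ $r\lambda_V$ is increasing in $u$. But $\lambda_V = \partial_V r$ is continuous up to $\CH = \{V=1\}$ by Theorem~\ref{thm.nonpert}/Proposition~\ref{C0.1}, and more importantly $r\lambda_V$ is bounded above on $\calR$: integrating in $V$, $|r\lambda_V|(u,V) \leq r(u, V_1)|\lambda_V|(u, V_1) + \ldots$ — rather, the cleanest bound: integrate $\partial_u(r\lambda_V) $ in $u$ over $[u_1, u_2]$ for fixed $V$:
\begin{equation*}
\frac{1}{4}\left(\frac{\e^2}{R^2}-1\right)\int_{u_1}^{u_2} \Omega^2(u,V)\, du \leq \int_{u_1}^{u_2}\partial_u(r\lambda_V)\, du = (r\lambda_V)(u_2, V) - (r\lambda_V)(u_1, V).
\end{equation*}
Then integrate in $V$ over $[V_1, 1]$: the right-hand side becomes $\int_{V_1}^1 [(r\lambda_V)(u_2,V) - (r\lambda_V)(u_1,V)]\, dV = r(u_2,V)r(u_2,V)\big|$... no: $\int_{V_1}^1 r \partial_V r\, dV = \frac{1}{2}[r^2(\cdot, 1) - r^2(\cdot, V_1)] \leq \frac{1}{2}R^2$ at $u_2$ and $\geq \frac12 r_0^2 \geq 0$ (dropped) at $u_1$ with the minus sign. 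Hence the double integral is $\leq \frac{1}{2} \cdot \frac{4 R^2}{|\e^2/R^2 - 1|} = \frac{2R^2 \cdot ... }{...}$; tracking constants carefully gives exactly $\frac{8R^2}{|1 - \e^2/R^2|}$ as claimed (the factor discrepancy is absorbed by being slightly lossy, e.g.\ bounding $r^2(u_2, V) \leq R^2$ and $-r^2(u_1, V)/2 \leq 0$, plus a factor of $2$ of slack). I would write: $\int_{u_1}^{u_2}\int_{V_1}^1 \Omega^2\, dV\, du \leq \frac{4}{|\e^2/R^2-1|}\int_{V_1}^1 (r\lambda_V)(u_2, V)\, dV \leq \frac{4}{|\e^2/R^2-1|}\cdot \frac{R^2}{2} \cdot 2 \leq \frac{8R^2}{|1-\e^2/R^2|}$, being deliberately non-optimal to match the stated constant. \textbf{The main obstacle} is not the identity itself but verifying the sign conventions: one must confirm that $\lambda_V = \partial_V r \geq 0$ or at least that $(r\lambda_V)(u_2, V)$ is controlled — here $\lambda_V = \partial_V r$ has a sign that may vary, but the upshot is that $r\lambda_V$, being monotone increasing in $u$ and equal to $\frac12 \partial_V(r^2)$, integrates to $\frac12[r^2(u_2,V) - r^2(u_2,V_1)]$ which is manifestly $\leq \frac12 R^2$; the lower endpoint contributes with a sign that could be either way, but one handles it by noting $\int_{V_1}^1(r\lambda_V)(u_1,V)\,dV = \frac12[r^2(u_1,1)-r^2(u_1,V_1)]$ and $-\frac12 r^2(u_1,V_1) \leq 0$ while $\frac12 r^2(u_1,1) \leq \frac12 R^2$ — so in the worst case one loses another $\frac12 R^2$, still within the slack. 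Thus no delicate estimate is needed beyond \eqref{eq:nonpert-r} and the exact coefficient structure of \eqref{WW.SS}.
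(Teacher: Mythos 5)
Your proof is correct and, once the exploratory detours settle, lands on exactly the paper's argument: the key identity $\rd_u(r\,\rd_V r)=-\frac{\Omg^2}{4}\bigl(1-\frac{\e^2}{r^2}\bigr)$, the sign forced by $r\le R<|\e|$, and the telescoping bound via $r^2\le R^2$. One caveat: citing continuity of $\rd_V r$ up to $\CH$ via Theorem~\ref{thm.nonpert} would be circular (this lemma feeds into that theorem), but your final argument never uses it---only $r\le R$, whose continuous extension to $\CH_1$ is an a priori fact from Theorem~\ref{thm:kommemi}.
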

\begin{proof}
We apply the \eqref{WW.SS} equations in the $(u, V)$ coordinate. By the $\rd_u\rd_{V} r$ equation, 
\begin{equation}\label{r2.uv}
\f 1{2}\rd_u\rd_{V} r^2=\rd_u(r\rd_{V} r) =r\rd_u\rd_{V} r+\rd_u r\rd_{V} r=-\f {\Omg^2}{4 }(1-\f{{\bf e}^2}{r^2}).
\end{equation}
By \eqref{eq:nonpert-r}, we have $1-\f{{\bf e}^2}{r^2}\leq 1-\f{{\bf e}^2}{R^2}<0$. Therefore, 
$$\int_{u_1}^{u_2}\int_{V_1}^{V_2} \Omg^2(u,V)\, d V\, du\leq \f 2{|1-\f{{\bf e}^2}{R^2}|} \int_{u_1}^{u_2}\int_{V_1}^{V_2} (\rd_u\rd_{V}r^2)(u,V)\, dV\, du\leq \f {8 R^2}{|1-\f{{\bf e}^2}{R^2}|},$$
where the last step simply follows from $r^2(u_1, V_1)+r^2(u_1, V_2)+r^2(u_2, V_1)+r^2(u_2, V_2)\leq 4R^2$.
\end{proof}

Next, we prove the key $L^{1}$ bounds on $\Omg, r$ and $\phi$. The idea is to divide $\calR$ into sub-rectangles to gain a smallness parameter, which is possible thanks to the finiteness of the spacetime volume of $\calR$.

\begin{lemma} \label{lem.nonpert-L1}
Let $\calR = \set{(u, V) : u_{1} \leq u \leq u_{2}, \ V_{1} \leq V < 1} \subset \calQ$ be a characteristic rectangle such that \eqref{eq:nonpert-r} holds. Assume furthermore that 
\begin{equation} \label{eq:nonpert-L1-ini}
\begin{aligned}
	& \int_{u_{1}}^{u_{2}} r \abs{\rd_{u} r}(u, V_{1}) d u
	+ \int_{u_{1}}^{u_{2}} r \abs{\rd_{u} \phi}(u, V_{1}) d u
	+ \int_{u_{1}}^{u_{2}} \abs{\rd_{u} \log \Omg}(u, V_{1}) d u \\
	& + \int_{V_{1}}^{1} r \abs{\rd_{V} r}(u_{1}, V) d V
	+ \int_{V_{1}}^{1} r \abs{\rd_{V} \phi}(u_{1}, V) d V	
	+ \int_{V_{1}}^{1} \abs{\rd_{V} \log \Omg}(u_{1}, V) d V \leq D_{\calR} 
\end{aligned}
\end{equation}
for some $0 < D_{\calR} < \infty$. Then the following estimates hold:
\begin{align}
	\int_{u_{1}}^{u_{2}} \sup_{V \in [V_{1}, 1)} r \abs{\rd_{u} r}(u, V) d u
	+ \int_{V_{1}}^{1} \sup_{u \in [u_{1}, u_{2}]} r \abs{\rd_{V} r}(u, V) d V 
	\leq & C_{r_{0}, R, \bfe, D_{\calR}}, 			\label{eq:nonpert-L1-r} \\
	\int_{u_{1}}^{u_{2}} \sup_{V \in [V_{1}, 1)} r \abs{\rd_{u} \phi}(u, V) d u
	+ \int_{V_{1}}^{1} \sup_{u \in [u_{1}, u_{2}]} r \abs{\rd_{V} \phi}(u, V) d V 
	\leq & C_{r_{0}, R, \bfe, D_{\calR}}, 			\label{eq:nonpert-L1-phi} \\
	\int_{u_{1}}^{u_{2}} \sup_{V \in [V_{1}, 1)} \abs{\rd_{u} \log \Omg}(u, V) d u
	+ \int_{V_{1}}^{1} \sup_{u \in [u_{1}, u_{2}]} \abs{\rd_{V} \log \Omg}(u, V) d V 
	\leq & C_{r_{0}, R, \bfe, D_{\calR}}.			\label{eq:nonpert-L1-omg}
\end{align}
\end{lemma}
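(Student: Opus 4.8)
The strategy is a bootstrap/continuity argument on the sub-rectangles of $\calR$, exploiting the finiteness of the spacetime volume (Lemma~\ref{lem.nonpert-vol}) to produce a smallness parameter. Fix the constants $r_0, R$ from \eqref{eq:nonpert-r}. Using Lemma~\ref{lem.nonpert-vol}, partition $\calR$ into finitely many (say $N = N(r_0,R,\bfe,\eps)$) closed characteristic sub-rectangles $\calR_{(m,n)} = \set{(u,V) : u^{(m)} \leq u \leq u^{(m+1)}, V^{(n)} \leq V \leq V^{(n+1)}}$ (with the last $V$-interval being $[V^{(n_{\max})}, 1)$) such that $\iint_{\calR_{(m,n)}} \Omg^2 \, \ud V \, \ud u \leq \eps$ for a small $\eps = \eps(r_0, R, \bfe)$ to be chosen. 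This is possible because the integrand $\Omg^2$ is continuous and positive on the open rectangle and its total integral is finite and bounded by \eqref{finite.volume}. The proof then proceeds by an induction, ordering the sub-rectangles so that each $\calR_{(m,n)}$ is reached only after its ``past'' neighbors $\calR_{(m-1,n)}$ and $\calR_{(m,n-1)}$ (and hence the whole incoming and outgoing boundary data of $\calR_{(m,n)}$) have already been controlled; the base case is the sub-rectangle whose past null boundaries lie on $\set{u = u_1} \cup \set{V = V_1}$, where \eqref{eq:nonpert-L1-ini} directly supplies the $L^1$ bounds.

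\textbf{The estimate within one sub-rectangle.} In a single $\calR_{(m,n)}$ we argue as follows. Integrating the $\rd_u \rd_V r$, $\rd_u \rd_V \phi$, and $\rd_u \rd_V \log\Omg$ equations from \eqref{WW.SS} (rewritten in the $(u,V)$ coordinates), and using $r_0 \leq r \leq R$ together with the elementary bound $|1 - \bfe^2/r^2| \leq 1 + \bfe^2/r_0^2$, one controls $\rd_u \rd_V r$, $\rd_u \rd_V \phi$ and $\rd_u \rd_V \log\Omg$ pointwise by $C \Omg^2$ plus terms bilinear in the first derivatives, schematically of the form $\Omg^2 (|\rd_u r| + |\rd_u\phi|) / r + \Omg^2(|\rd_V r| + |\rd_V \phi|)/r + |\rd_u \phi||\rd_V \phi| + \dots$; the key structural point, exactly as in \cite[Section~13]{D2}, is that every ``bad'' quadratic term carries either a factor of $\Omg^2$ or a product of a $u$-derivative with a $V$-derivative, and one can absorb a $\rd_u$-derivative into $\sup_{V}$ while integrating the $\rd_V \Omg^2$-type factor, or vice versa. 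Define the quantities
\begin{equation*}
	X := \int r (|\rd_u r| + |\rd_u \phi|)(u, \cdot) \, \ud u + \int |\rd_u \log\Omg|(u,\cdot)\, \ud u, \quad
	Y := \int r (|\rd_V r| + |\rd_V \phi|)(\cdot, V)\, \ud V + \int |\rd_V \log\Omg|(\cdot, V)\, \ud V,
\end{equation*}
taken over the $u$- and $V$-extent of $\calR_{(m,n)}$, and let $\ol{X} = \sup_V X$, $\ol{Y} = \sup_u Y$. Integrating the $\rd_u \rd_V$ equations and using Cauchy--Schwarz together with $\iint \Omg^2 \leq \eps$ and $\int\!\int \Omg^2\,\ud V \leq \eps \cdot \sup(\dots)$-type bounds yields a closed system of the form $\ol{X} \leq X_0 + C(\eps + \ol{Y})\ol{X} + C\ol{Y}$ and $\ol{Y} \leq Y_0 + C(\eps + \ol{X})\ol{Y} + C\ol{X}$, where $X_0, Y_0$ are the incoming/outgoing data bounds (already controlled by induction). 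Running a standard bootstrap in $\eps$ (first choosing $\eps$ small so the Grönwall-type factors close, then controlling the linear coupling $C\ol{Y}$, $C\ol{X}$ by iterating through the finitely many rectangles) gives $\ol{X} + \ol{Y} \leq C_{r_0, R, \bfe}(X_0 + Y_0)$ in $\calR_{(m,n)}$.

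\textbf{Assembling and the main obstacle.} Summing the one-rectangle estimates over all $N$ sub-rectangles in the induction order, and noting $N$ depends only on $r_0, R, \bfe$ (and the fixed $\eps$), the data bounds propagate with a constant $C_{r_0, R, \bfe, D_\calR}$ that is uniform over the whole of $\calR$ (crucially independent of how close $V_1$ is to $1$ and of $u_2$), giving \eqref{eq:nonpert-L1-r}, \eqref{eq:nonpert-L1-phi}, \eqref{eq:nonpert-L1-omg}. The main delicate point is the bilinear term $|\rd_u \phi||\rd_V\phi|$ in the $\phi$ and $\log\Omg$ equations, which is \emph{not} multiplied by $\Omg^2$ and so cannot be made small merely by shrinking the spacetime volume; here one must instead use that one factor, say $\int |\rd_u\phi|\,\ud u$, is already bounded by the inductive hypothesis and can be made small \emph{by subdividing further in $u$} (this is why the rectangles must be chosen fine in both null directions, not just fine enough to make $\iint\Omg^2$ small), so that it plays the role of a small Grönwall weight when integrating in $V$. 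Keeping careful track of which factor is absorbed into a supremum versus integrated — and making sure the resulting constants never blow up as $V_1 \to 1$ — is the technical heart of the argument, but it is entirely parallel to the BV-type estimates in \cite[Section~13]{D2}, now upgraded to weighted $L^1$ norms adapted to the $(u,V)$ gauge.
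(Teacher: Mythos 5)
Your partitioning strategy (subdivide $\calR$ so that $\iint \Omg^2$ is small on each sub-rectangle, using Lemma~\ref{lem.nonpert-vol}) and the overall iterative structure are the right idea and match the paper's proof in spirit. However, your schematic for the bilinear error terms is wrong in a way that hides the key structural observation. The $\rd_u\rd_V r$ equation in \eqref{WW.SS} contains the term $-\f{\rd_u r\,\rd_V r}{r}$, which carries \emph{no} factor of $\Omg^2$ --- contrary to what your schematic $\Omg^2(|\rd_u r|+\dots)/r$ suggests --- so it cannot be made small by shrinking the spacetime volume, and a simultaneous coupled bootstrap for both null derivatives of $r$ produces a quadratic interaction with nothing small to break it (unless one subdivides according to a bound that has not yet been established). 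The paper resolves this by working with $r^2$: by \eqref{r2.uv} one has $\tfrac12\rd_u\rd_V r^2=-\tfrac{\Omg^2}{4}(1-\tfrac{\e^2}{r^2})$, so the $\rd_u r\,\rd_V r$ term cancels exactly, and the $L^1$ bound \eqref{eq:nonpert-L1-r} on $\rd r$ is established \emph{first}, without any bootstrap and without any bilinear interaction at all --- by integrating in $u$ from $\{u=u_1\}$ and then using the spacetime-volume partition to make $\int_V \sup_u |\rd_V r/r|$ at most $1/2$ on each $V$-strip.

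Second, you identify $|\rd_u\phi||\rd_V\phi|$ as ``the main delicate point'' appearing in both the $\phi$ and $\log\Omg$ equations, and propose to handle it by further subdividing in $u$ so that $\int|\rd_u\phi|\,du$ becomes small. But this is circular --- that integral is exactly what you are trying to bound, and the number of required subdivisions would then depend on the bound being proved. In fact the $\phi$-equation contains no such term: it reads $\rd_u\rd_V\phi = -\f{\rd_V r\,\rd_u\phi}{r} - \f{\rd_u r\,\rd_V\phi}{r}$, so the Gr\"onwall small factor needed to propagate the $L^1L^\infty$ bound on $r\rd_V\phi$ is $\int|\rd_V r/r|$ (and symmetrically $\int|\rd_u r/r|$), which is \emph{already known} to be $\leq 1/2$ per sub-rectangle by the $r$-step. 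This makes the estimates genuinely triangular --- $r$ first, then $\phi$, then $\log\Omg$ --- with each stage feeding a small factor to the next. The term $\rd_u\phi\,\rd_V\phi$ only appears in the $\log\Omg$-equation, by which point both $\int\sup|\rd_u\phi|$ and $\int\sup|\rd_V\phi|$ are already controlled, so a single H\"older estimate closes the argument without any further subdivision or bootstrap.
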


\begin{proof}
We proceed in three steps, obtaining bounds for $r$, $\phi$ and $\log \Omg$ in order.

\pfstep{Step~1: $L^1$ estimates for $\rd_u r$ and $\rd_{V}r$}
Our goal is to show that there exist partitions $u_1=u^{(0)}<u^{(1)}<...<u^{(m)}=u_2$ and $V_1= V^{(0)}< V^{(1)}<\dots< V^{(n)} = 1$ for some $m,n\in\mathbb N$ such that\footnote{Let us note that the main point in \eqref{r.diff.part} is the supremums in the estimates. The same bounds without the supremums can be easily obtained using \eqref{eq:nonpert-r}, $\rd_u r<0$ and the fact that $\rd_v r$ changes sign at most once along a constant $u$-curve. (The last fact follows from \eqref{eqn.Ray}.)}
\begin{equation}\label{r.diff.part}
\max_{0\leq i\leq m-1}\int_{u^{(i)}}^{u^{(i+1)}} \sup_{ V\in [ V_1, V_2]}\left|\f{\rd_u r}{r}\right|(u, V)\, du+\max_{0\leq j\leq n-1}\int_{V^{(j)}}^{V^{(j+1)}}\sup_{u\in [u_1,u_2]}\left|\f{\rd_{V} r}{r}\right|(u,V)\,d V\leq \f 12.
\end{equation}
We will find the partition in $V$ and estimate the second term in \eqref{r.diff.part}. The other term can be controlled in a completely analogous manner. Note that \eqref{eq:nonpert-L1-r} follows from \eqref{r.diff.part} by summing up in $i, j$.

Using the bound for $r |\rd_{V} r|(u_1, V)$ in \eqref{eq:nonpert-L1-ini}, the bounds for $r$ in \eqref{eq:nonpert-r} and the estimate \eqref{finite.volume} for the spacetime volume, for every $\delta>0$, we can choose a partition $V_1=V^{(0)}< V^{(1)}<\dots< V^{(n)}=1$ such that 
\begin{equation}\label{smallness}
\max_{0\leq j\leq n-1}\left(\int_{V^{(j)}}^{V^{(j+1)}}\left|r\rd_{V} r\right|(u_1, V)\,d V+\int_{u_1}^{u_2}\int_{V^{(j)}}^{V^{(j+1)}} \Omg^2(u,V)\, d V\, du\right)\leq \delta.
\end{equation}
Integrating \eqref{r2.uv} in $u$, we thus obtain using \eqref{smallness} and \eqref{eq:nonpert-r} that
\begin{equation}
\begin{split}
&\max_{0\leq j\leq n-1}\int_{V^{(j)}}^{V^{(j+1)}}\sup_{u \in [u_1,u_2]} r \left|\rd_{V} r \right|(u, V)\,d V\\
\leq &\delta +\max_{0\leq j\leq n-1}\left(\sup_{\substack{u \in [u_1,u_2] \\ V \in [ V^{(j)}, V^{(j+1)}]}} \f 1{4 }\left|1-\f{{\bf e}^2}{r^2}\right|(u, v)\right)\left(\int_{u_1}^{u_2}\int_{V^{(j)}}^{V^{(j+1)}} \Omg^2(u, V)\, d V\, d u\right)\leq C_{r_0,\bfe} \delta,
\end{split}
\end{equation}
for some constant $C_{r_0,\bfe}>0$ depending on $r_0$ and ${\bf e}$. Using \eqref{eq:nonpert-r} again and choosing $\delta$ sufficiently small depending on $r_0$ and ${\bf e}$, we thus obtain
$$\max_{0\leq j\leq n-1}\int_{V^{(j)}}^{V^{(j+1)}}\sup_{u\in [u_1,u_2]}\left|\f{\rd_{V} r}{r}\right|(u, V)\,d V\leq \f 14.$$
A similar bound for the first term in \eqref{r.diff.part} can be obtained in a completely identical manner. 

\pfstep{Step 2: $L^{1}$ estimates for $\rd_u\phi$ and $\rd_{V}\phi$} Our goal is to show \eqref{eq:nonpert-L1-phi}.
Note that $\int_{V_{1}}^{1} r|\rd_{V}\phi| (u_{1}, V) d V$ and $\int_{u_{1}}^{u_{2}} r|\rd_{u}\phi| (u, V_{1}) d u$ are bounded by \eqref{eq:nonpert-L1-ini}. We will propagate these bounds in the direction where $V$ and $u$ are both increasing.

Consider $\mathcal R_{ij}:=\{(u, V): V^{(j)}\leq V \leq V^{(j+1)}, \, u^{(i)}\leq u\leq u^{(i+1)}\}$ for $0\leq i\leq m-1$, $0\leq j\leq n-1$, where $V^{(j)}$, $u^{(i)}$, etc. are as in the previous step. Note that by definition $\cup_{i=0}^{m-1}\cup_{j=0}^{n-1} \mathcal R_{ij}=\mathcal R$. By the initial $L^{1}$ boundedness of $|\rd_{V}\phi|$ and $|\rd_u\phi|$ mentioned above, it therefore suffices to show that for every $0\leq i\leq m-1$, $0\leq j\leq n-1$,
\begin{equation}\label{phi.Li.bounds.goal}
\begin{aligned}
\int_{u^{(i)}}^{u^{(i+1)}} & \sup_{V \in [V^{(j)}, V^{(j+1)})} r \abs{\rd_{u} \phi}(u, V) \, d u 
+ \int_{V^{(j)}}^{V^{(j+1)}} \sup_{u \in [u^{(i)}, u^{(i+1)}]} r \abs{\rd_{V} \phi}(u, V) \, d V \\
\leq & 2  \left( \int_{u^{(i)}}^{u^{(i+1)}} r |\rd_{u}\phi|(v,V^{(j)})+\int_{V^{(j)}}^{V^{(j+1)}} r |\rd_{V}\phi|(u^{(i)},V)\right),
\end{aligned}
\end{equation}
i.e.,  in every $\mathcal R_{ij}$, the $L^{1} L^{\infty}$ norm of $r |\rd_u\phi|$ and $r |\rd_{V}\phi|$ are at most twice their initial values on the lower left and right sides. Indeed, since there are finitely many $\calR_{ij}$'s, iterating \eqref{phi.Li.bounds.goal} gives the desired estimate \eqref{eq:nonpert-L1-phi}.

In order to prove \eqref{phi.Li.bounds.goal}, we rewrite \eqref{WW.SS} as
$$\rd_u(r\rd_{V} \phi)=-(\rd_{V} r)(\rd_u\phi),\quad \rd_{V}(r\rd_u \phi)=-(\rd_u r)(\rd_{V}\phi).$$
Integrating these equations in the $-\rd_u$ and $\rd_{V}$ directions respectively, we obtain
\begin{align*}
& \int_{u^{(i)}}^{u^{(i+1)}}  \sup_{V \in [V^{(j)}, V^{(j+1)})} r \abs{\rd_{u} \phi}(u, V) \, d u 
+ \int_{V^{(j)}}^{V^{(j+1)}} \sup_{u \in [u^{(i)}, u^{(i+1)}]} r \abs{\rd_{V} \phi}(u, V) \, d V  \\
\leq & \left( \int_{u^{(i)}}^{u^{(i+1)}} r |\rd_{u}\phi|(v,V^{(j)})+\int_{V^{(j)}}^{V^{(j+1)}} r |\rd_{V}\phi|(u^{(i)},V)\right) \\
& + \left( \max_{0\leq i\leq m-1}\int_{u^{(i)}}^{u^{(i+1)}} \sup_{ V\in [ V_1, V_2]}\left|\f{\rd_u r}{r}\right|(u, V)\, du+\max_{0\leq j\leq n-1}\int_{V^{(j)}}^{V^{(j+1)}}\sup_{u\in [u_1,u_2]}\left|\f{\rd_{V} r}{r}\right|(u,V)\,d V 
 \right) \\
 & \phantom{+} \times \left( \int_{u^{(i)}}^{u^{(i+1)}}  \sup_{V \in [V^{(j)}, V^{(j+1)})} r \abs{\rd_{u} \phi}(u, V) \, d u 
+ \int_{V^{(j)}}^{V^{(j+1)}} \sup_{u \in [u^{(i)}, u^{(i+1)}]} r \abs{\rd_{V} \phi}(u, V) \, d V \right) \\
\leq & \left( \int_{u^{(i)}}^{u^{(i+1)}} r |\rd_{u}\phi|(v,V^{(j)})+\int_{V^{(j)}}^{V^{(j+1)}} r |\rd_{V}\phi|(u^{(i)},V)\right) \\
& + \frac{1}{2} \left( \int_{u^{(i)}}^{u^{(i+1)}}  \sup_{V \in [V^{(j)}, V^{(j+1)})} r \abs{\rd_{u} \phi}(u, V) \, d u 
+ \int_{V^{(j)}}^{V^{(j+1)}} \sup_{u \in [u^{(i)}, u^{(i+1)}]} r \abs{\rd_{V} \phi}(u, V) \, d V \right).
\end{align*}
where the last line is achieved using \eqref{r.diff.part}. Rearranging this estimate then gives \eqref{phi.Li.bounds.goal}.

\pfstep{Step~3: $L^1$ estimates for $\rd_u\log \Omg$ and $\rd_{V} \log \Omg$} Our goal is to show that
\begin{equation}\label{dOmg.L1}
\int_{u_1}^{u_2} \sup_{V\in [V_1, 1)}\left|\rd_u \log \Omg \right|(u, V)\, du+\int_{V_{1}}^{1} \sup_{u\in [u_1,u_2]}\left|\rd_{V} \log \Omg \right|(u, V)\,d V<\infty.
\end{equation}
Note that $\int_{V_{1}}^{1} |\rd_{V} \log \Omg| (u_{1}, V) d V$ and $\int_{u_{1}}^{u_{2}} |\rd_{u} \log \Omg| (u, V_{1}) d u$ are bounded by \eqref{eq:nonpert-L1-ini}.
Using the equation for $\rd_u\rd_{V} \log\Omg$ in \eqref{WW.SS}, in order to prove \eqref{dOmg.L1}, it therefore suffices to bound
\begin{equation*}
\begin{split}
\int_{V_1}^{1}\int_{u_1}^{u_2}\left(|\rd_u\phi\rd_{V}\phi|+\f { \Omg^2 {\bf e}^2}{2 r^4}+\f{\Omg^2}{4r^2}+\f{|\rd_u r\rd_{V} r|}{r^2}\right)(u, V)\, du\, d V=:\iint (I+II+III+IV).
\end{split}
\end{equation*}
The term $I$ is bounded using \eqref{eq:nonpert-L1-phi} and H\"older's inequality. The terms $II$ and $III$ are bounded thanks to \eqref{eq:nonpert-r} and \eqref{finite.volume}. Finally, the term $IV$ is bounded using the estimate \eqref{eq:nonpert-L1-r} and H\"older's inequality.
\end{proof}

As a consequence of the $L^{1}$ bounds, $C^{0}$ extendibility on the entire $\CH_{1}$ (excluding the endpoint at which $r = 0$) can be established.
\begin{lemma} \label{lem.nonpert-C0}
Let $\calR = \set{(u, V) : u_{1} \leq u \leq u_{2}, \ V_{1} \leq V < 1} \subset \calQ$ be a characteristic rectangle such that \eqref{eq:nonpert-r} and \eqref{eq:nonpert-L1-ini} hold. Then one can attach the boundary\footnote{Here $\overline{\calR}$ refers to the closure of $\calR$ in the topology induced by the conformal embedding $\calQ \hookrightarrow \bbR^{1+1}$ described in Theorem~\ref{thm:kommemi}.} $\CH_{1} \cap \overline{\calR} = \set{(u, V) : u_{1} \leq u \leq u_{2}, \, V = 1}$ to $\calR$, to which $r$, $\phi$ and $\log \Omg$ extend continuously. 
\end{lemma}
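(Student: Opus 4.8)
The plan is to obtain the continuous extension directly from the one-dimensional $L^{1}$ estimates of Lemma~\ref{lem.nonpert-L1}, which are available precisely because the hypotheses \eqref{eq:nonpert-r} and \eqref{eq:nonpert-L1-ini} are exactly what that lemma requires. Concretely, I would fix $q$ to be each of $r$, $\phi$ and $\log \Omg$ in turn (note $\Omg > 0$ throughout $\calR \subset \calQ$, so $\log \Omg$ is meaningful, and by Theorem~\ref{thm:kommemi} and local well-posedness $q$ is $C^{1}$ on $\calR$), and set $F(W) := \sup_{u \in [u_{1}, u_{2}]} \abs{\rd_{V} q}(u, W)$ and $G(u) := \sup_{V \in [V_{1}, 1)} \abs{\rd_{u} q}(u, V)$. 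By \eqref{eq:nonpert-L1-r}--\eqref{eq:nonpert-L1-omg} together with the lower bound $r \geq r_{0}$ in \eqref{eq:nonpert-r}, one has $F \in L^{1}([V_{1}, 1))$ and $G \in L^{1}([u_{1}, u_{2}])$.

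The first step is to show that $V \mapsto q(u, V)$ converges as $V \to 1$, uniformly in $u$: this is immediate from
\begin{equation*}
	\abs{q(u, V') - q(u, V)} \leq \int_{V}^{V'} \abs{\rd_{V} q}(u, W) \, d W \leq \int_{V}^{V'} F(W) \, d W \to 0 \quad (V, V' \to 1^{-}),
\end{equation*}
uniformly in $u$ by absolute continuity of the integral of $F$, so that $q(u, 1) := \lim_{V \to 1} q(u, V)$ is well defined; in the case $q = \log \Omg$ the limit is finite, hence $\Omg$ and $\Omg^{2}$ extend to positive continuous functions. The second step, using $G \in L^{1}$ in the same way, records that $u \mapsto q(u, V)$ has a modulus of continuity controlled by $\int G$, uniformly in $V \in [V_{1}, 1)$. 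Combining the two via a standard three-term splitting then yields joint continuity of $q$ on $\calR \cup \set{u_{1} \leq u \leq u_{2}, \, V = 1} = \calR \cup (\CH_{1} \cap \overline{\calR})$: given $(u_{0}, 1)$ and $\veps > 0$, choose $V_{\ast} < 1$ with $\int_{V_{\ast}}^{1} F \, d W < \veps/3$ (so that $\abs{q(u, V) - q(u, 1)} < \veps/3$ for all $u$ and all $V \in [V_{\ast}, 1]$) and $\dlt > 0$ with $\int_{u_{0} - \dlt}^{u_{0} + \dlt} G \, d u < \veps/3$ (so that $\abs{q(u, V) - q(u_{0}, V)} < \veps/3$ for $\abs{u - u_{0}} < \dlt$, uniformly in $V$), and estimate $\abs{q(u, V) - q(u_{0}, 1)}$ by the triangle inequality. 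The identification of $\CH_{1} \cap \overline{\calR}$ with $\set{V = 1}$ in these coordinates is the content of Lemma~\ref{lem.u-reg}. Applying the above to $q \in \set{r, \phi, \log \Omg}$ gives the lemma.

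I do not expect a serious obstacle here: this is the same convergence mechanism as in the proof of Proposition~\ref{C0.1}, now driven by the nonperturbative $L^{1}$ bounds of Lemma~\ref{lem.nonpert-L1} (which in turn rest on the finite-volume estimate Lemma~\ref{lem.nonpert-vol}). The only point that needs a little care is to package the estimates so that one gets \emph{joint} continuity up to the Cauchy-horizon edge $\set{V = 1}$, and not merely separate continuity in $u$ and in $V$ — which is why both the $V$-uniform convergence statement and the $V$-uniform modulus of continuity in $u$ are extracted from Lemma~\ref{lem.nonpert-L1}.
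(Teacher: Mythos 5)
Your argument is correct and is essentially the same as the paper's: both rest on exactly the $L^{1}$-in-one-variable, $L^{\infty}$-in-the-other bounds of Lemma~\ref{lem.nonpert-L1}, and both split the difference $q(u,V)-q(u_0,1)$ into a $u$-integral and a $V$-integral of these suprema. The only cosmetic difference is that the paper phrases it as showing $q(u_{(i)}, V_{(i)})$ is Cauchy for an arbitrary sequence $(u_{(i)}, V_{(i)}) \to (u, 1)$, whereas you extract an explicit modulus of continuity and perform the three-term splitting; your remark that the lower bound $r \geq r_0$ is used to pass from the $r$-weighted integrals \eqref{eq:nonpert-L1-r}--\eqref{eq:nonpert-L1-phi} to membership of $\sup_u |\rd_V r|$ and $\sup_u |\rd_V \phi|$ in $L^1$ is a correct (if minor) point that the paper leaves implicit.
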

\begin{proof}
We only consider the case of $\log \Omg$; the conclusion for $r$ and $\phi$ follows analogously from the bounds in Lemma~\ref{lem.nonpert-L1}. As in Proposition~\ref{C0.1}, it suffices to show that given any sequence $u_{(i)} \to u$ and $V_{(i)} \to 1$, $\log \Omg(u_{(i)}, V_{(i)})$ is a Cauchy sequence. We have
\begin{align*}
	& \abs{\log \Omg (u_{(i)}, V_{(i)}) - \log \Omg (u_{(j)}, V_{(j)})} \\
	\leq & \abs{\int_{u_{(i)}}^{u_{(j)}} \rd_{u} \log \Omg(u', V_{(i)}) \, d u' }
		+ \abs{\int_{V_{(i)}}^{V_{(j)}} \rd_{V} \log \Omg(u_{(j)}, V') \, d V'} \\
	\leq & \int_{u_{(i)}}^{u_{(j)}} \sup_{V' \in [V_{1}, 1)} \abs{\rd_{u} \log \Omg}(u', V') \, d u' 
		+ \int_{V_{(i)}}^{V_{(j)}} \sup_{u' \in [u_{1}, u_{2}]} \abs{\rd_{V} \log \Omg}(u', V') \, d V'
\end{align*}
where the last line goes to zero as $i, j \to \infty$ thanks to \eqref{eq:nonpert-L1-omg} and the fact that $u_{(j)} - u_{(i)}, V_{(j)} - V_{(i)} \to 0$. \qedhere
\end{proof}

Next, we show that the blow up of $\rd_{V} r$ and $\Omg^{-2} \rd_{V} \phi$ propagate along $\CH_{1}$.
\begin{lemma} \label{lem.nonpert-blowup}
Let $\calR = \set{(u, V) : u_{1} \leq u \leq u_{2}, \ V_{1} \leq V < 1} \subset \calQ$ be a characteristic rectangle such that \eqref{eq:nonpert-r} and \eqref{eq:nonpert-L1-ini} hold. Assume furthermore that
\begin{align*}
	\lim_{V \to 1} \rd_{V} r(u_{1}, V) =& -\infty.
\end{align*}
Then for every $u \in [u_{1}, u_{2}]$, we have
\begin{align*}
	\lim_{V \to 1} \rd_{V} r(u, V) =& -\infty, \\
	\int_{V_{1}}^{1} \frac{(\rd_{V} \phi)^{2}}{\Omg^{2}}(u, V) \, d V =& \infty.
\end{align*}
\end{lemma}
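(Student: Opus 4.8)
The plan is to use the Raychaudhuri equation in the $(u,V)$ coordinates together with the $L^1$-type bounds from Lemma~\ref{lem.nonpert-L1} to transport the blow-up from $\{u = u_1\}$ to every outgoing null ray $\{u = \mathrm{const}\}$ in $\calR$. Recall the Raychaudhuri equation from \eqref{eqn.Ray}, which in the $(u, V)$ coordinate system reads
\begin{equation*}
	\rd_{u} \left( \frac{\rd_{V} r}{\Omg^{2}} \right) = - \frac{r (\rd_{V} \phi)^{2}}{\Omg^{2}}.
\end{equation*}
Fix $u \in (u_1, u_2]$. Integrating this in $u'$ from $u_1$ to $u$ at fixed $V$, one obtains
\begin{equation*}
	\frac{\rd_{V} r}{\Omg^{2}}(u, V) = \frac{\rd_{V} r}{\Omg^{2}}(u_{1}, V) - \int_{u_{1}}^{u} \frac{r (\rd_{V} \phi)^{2}}{\Omg^{2}}(u', V) \, d u' \leq \frac{\rd_{V} r}{\Omg^{2}}(u_{1}, V),
\end{equation*}
since the integrand is nonnegative. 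Since $\lim_{V \to 1} \rd_{V} r(u_1, V) = -\infty$ and $\Omg^{2}(u_1, V)$ extends continuously (hence stays bounded) up to $V = 1$ by Lemma~\ref{lem.nonpert-C0}, we get $\lim_{V \to 1} \frac{\rd_{V} r}{\Omg^{2}}(u_1, V) = -\infty$, and therefore $\lim_{V\to 1} \frac{\rd_{V} r}{\Omg^{2}}(u, V) = -\infty$. Now, again by Lemma~\ref{lem.nonpert-C0}, $\Omg^{2}(u, V)$ extends continuously up to $V = 1$; moreover $\Omg^2 > 0$ on $\calR$, and if the continuous extension of $\Omg^2$ to $(u,1)$ were zero this would already be consistent with (in fact implied by, or else force) blow up. In either case $\frac{\rd_{V} r}{\Omg^{2}}(u,V) \to -\infty$ forces $\rd_V r(u,V) \to -\infty$: if $\Omg^2(u,V)$ stays bounded above this is immediate, and if $\Omg^2(u,V) \to 0$ one uses that $\frac{\rd_V r}{\Omg^2} \to -\infty$ together with boundedness of $\rd_V r$ would give a contradiction unless $\rd_V r \to -\infty$ — more carefully, $\rd_V r = \Omg^2 \cdot \frac{\rd_V r}{\Omg^2}$ and one shows the product tends to $-\infty$ using that $\frac{\rd_V r}{\Omg^2}$ is monotone decreasing in $V$ (by the $\rd_V \varpi$ equation, or directly) near $V=1$; in fact the cleanest route is to bound $\Omg^2(u,V) \geq c(u) > 0$ on $\calR$ which follows from the equation $\rd_u \rd_V \log \Omg$ in \eqref{WW.SS} being $L^1$-controlled via Lemma~\ref{lem.nonpert-L1}, so $\log \Omg$ is bounded below. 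This gives the first blow-up statement.

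For the second statement, integrate the Raychaudhuri equation the other way: for fixed $u \in [u_1, u_2]$ and any $V \in [V_1, 1)$,
\begin{equation*}
	\int_{V_{1}}^{V} \frac{r (\rd_{V} \phi)^{2}}{\Omg^{2}}(u, V') \, d V' = \int_{u_1}^{u} \hbox{(terms)} \cdots
\end{equation*}
is not quite the right manipulation; instead the point is simply that $\frac{\rd_V r}{\Omg^2}(u, \cdot)$ is a decreasing function of $V$ (because its $V$-derivative is $-\frac{r(\rd_V\phi)^2}{\Omg^2} \leq 0$ by the $\rd_V$-Raychaudhuri equation, i.e.\ $\rd_V\left(\frac{\rd_V r}{\Omg^2}\right) = -\frac{r(\rd_V\phi)^2}{\Omg^2}$), so
\begin{equation*}
	\int_{V_{1}}^{1} \frac{r(\rd_{V} \phi)^{2}}{\Omg^{2}}(u, V') \, d V' = \frac{\rd_{V} r}{\Omg^{2}}(u, V_{1}) - \lim_{V \to 1} \frac{\rd_{V} r}{\Omg^{2}}(u, V) = \infty,
\end{equation*}
using the first blow-up statement (that $\frac{\rd_V r}{\Omg^2}(u,V) \to -\infty$) and the finiteness of $\frac{\rd_V r}{\Omg^2}(u,V_1)$, which holds since $\Omg^2(u,V_1) > 0$ and $\rd_V r(u,V_1)$ is finite on $\calR \subset \calQ$. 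Finally, using the lower bound $r(u, V) \geq r_{0} > 0$ on $\calR$ from \eqref{eq:nonpert-r}, we conclude
\begin{equation*}
	\int_{V_{1}}^{1} \frac{(\rd_{V} \phi)^{2}}{\Omg^{2}}(u, V) \, d V \geq \frac{1}{R} \int_{V_{1}}^{1} \frac{r(\rd_{V} \phi)^{2}}{\Omg^{2}}(u, V) \, d V = \infty,
\end{equation*}
which is the desired statement.

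The main obstacle I anticipate is the technical point of ensuring that $\Omg^2$ stays bounded away from $0$ (and bounded above) on $\calR$ up to $\CH_1$, which is needed to convert blow up of $\frac{\rd_V r}{\Omg^2}$ into blow up of $\rd_V r$ and to make the first integration argument above airtight; this requires invoking the $L^1$ control of $\rd_u \log \Omg$ and $\rd_V \log \Omg$ from \eqref{eq:nonpert-L1-omg} in Lemma~\ref{lem.nonpert-L1} to deduce that $\log \Omg$ is bounded (above and below) on $\calR$, which in particular is already part of the content of Lemma~\ref{lem.nonpert-C0} (continuous extension to the compact set $\CH_1 \cap \overline{\calR}$). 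A secondary bookkeeping point is the sign and monotonicity of $\frac{\rd_V r}{\Omg^2}$ in $V$, which follows directly from the $\rd_V$-Raychaudhuri equation and requires no smallness. Once these are in place, the argument is a short transport computation, entirely parallel to the proof of \eqref{blow.up.interior.dvr} in Theorem~\ref{final.blow.up.step} in Section~\ref{sec.blow.up.dvr}, but now carried out in the nonperturbative region using Lemma~\ref{lem.nonpert-L1} in place of the quantitative Reissner--Nordstr\"om-comparison estimates of Theorem~\ref{main.theorem.C0.stability}.
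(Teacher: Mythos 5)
There is a genuine gap in the first half of your argument. The identity
\begin{equation*}
	\rd_{u} \left( \frac{\rd_{V} r}{\Omg^{2}} \right) = - \frac{r (\rd_{V} \phi)^{2}}{\Omg^{2}}
\end{equation*}
is \emph{not} a Raychaudhuri equation. The two Raychaudhuri equations \eqref{eqn.Ray} are $\rd_{V}\bigl(\frac{\rd_{V} r}{\Omg^{2}}\bigr) = -\frac{r(\rd_{V}\phi)^{2}}{\Omg^{2}}$ and $\rd_{u}\bigl(\frac{\rd_{u} r}{\Omg^{2}}\bigr) = -\frac{r(\rd_{u}\phi)^{2}}{\Omg^{2}}$; neither gives a sign (or a formula) for $\rd_u\bigl(\frac{\rd_V r}{\Omg^2}\bigr)$. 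The actual $u$-derivative of $\frac{\rd_V r}{\Omg^2}$, computed from \eqref{WW.SS}, involves $\rd_u\rd_V r = -\frac{\Omg^2}{4r} - \frac{\rd_u r\, \rd_V r}{r} + \frac{\Omg^2\bfe^2}{4r^3}$ together with $\rd_V r \cdot \rd_u\log\Omg^2$, and this combination has no definite sign. Consequently the monotonicity claim $\frac{\rd_V r}{\Omg^2}(u, V) \leq \frac{\rd_V r}{\Omg^2}(u_1, V)$, which is the engine of your transport from $u_1$ to general $u$, is unjustified.

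The fix is to transport a different quantity. On $\calR$ one has the clean identity \eqref{r2.uv}, namely $\rd_u(r\rd_V r) = -\frac{\Omg^2}{4}\bigl(1 - \frac{\bfe^2}{r^2}\bigr)$, whose right-hand side is \emph{bounded} (both above and below, with no sign condition needed) once $C^{-1}\leq \Omg^2\leq C$ and $r_0\leq r\leq R$ are in hand from Lemma~\ref{lem.nonpert-C0} and \eqref{eq:nonpert-r}. Writing $-\rd_V r(u,V) = \frac{r(u_1,V)}{r(u,V)}(-\rd_V r)(u_1,V) - \frac{1}{r(u,V)}\int_{u_1}^{u}\rd_u(r\rd_V r)(u',V)\,du'$ and letting $V\to 1$ then gives $\liminf_{V\to 1}(-\rd_V r)(u,V) \geq \frac{r_0}{R}\liminf_{V\to 1}(-\rd_V r)(u_1,V) - C(u_2-u_1) = \infty$. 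This is the route the paper takes. Once this first step is corrected, your second integration --- using the \emph{correct} Raychaudhuri equation $\rd_V\bigl(\frac{\rd_V r}{\Omg^2}\bigr) = -\frac{r(\rd_V\phi)^2}{\Omg^2}$ together with the $\Omg^2$ and $r$ bounds --- is correct and matches the paper exactly, as does your invocation of Lemma~\ref{lem.nonpert-C0} for boundedness of $\Omg^2$.
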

\begin{proof}
By Lemma~\ref{lem.nonpert-C0}, there exists a constant $C > 0$ such that
\begin{equation*}
	C^{-1} \leq \Omg^{2}(u, V) \leq C \quad \hbox{ on } \overline{\mathcal R}.
\end{equation*}
Fix any $u_{1} \leq u \leq u_{2}$. Plugging in this bound to \eqref{r2.uv}, we see that
\begin{equation} \label{eq:nonpert-dvr-lip}
\abs{\rd_{u} (r \rd_{V} r)}(u, V)
\leq \frac{1}{4} \bb| 1 - \frac{\bfe^{2}}{r_{0}^{2}} \bb| \Omg^{2}(u, V) \leq C.
\end{equation}
By the fundamental theorem of calculus, we may write
\begin{equation*}
	- \rd_{V} r(u, V) = \frac{r(u_{1}, V)}{r(u, V)} (- \rd_{V} r) (u_{1}, V) - \frac{1}{r(u, V)} \int_{u_{1}}^{u} \rd_{u} (r \rd_{V} r) (u, V) \, \ud u'.
\end{equation*}
Then using \eqref{eq:nonpert-dvr-lip}, it follows that
\begin{equation*}
	\liminf_{V \to 1} (- \rd_{V} r)(u, V)
	\geq \frac{r_{0}}{R}  \liminf_{V \to 1} (- \rd_{V} r)(u_{1}, V) - C = \infty.
\end{equation*}
By the Raychaudhuri equation \eqref{eqn.Ray} for $\Omg^{-2} \rd_{V} r$, we also have
\begin{equation*}
	\int_{V_{1}}^{1} \frac{(\rd_{V} \phi)^{2}}{\Omg^{2}}(u, V) \, \ud V
	\geq \frac{1}{R} \int_{V_{1}}^{1} r \frac{(\rd_{V} \phi)^{2}}{\Omg^{2}}(u, V) \, \ud V
	= \lim_{V \to 1} \frac{(-\rd_{V} r)}{\Omg^{2}}(u, V) - \frac{(-\rd_{V} r)}{\Omg^{2}}(u, V_{1}) = \infty,
\end{equation*}
which completes the proof. \qedhere
\end{proof}

We are now ready to prove Theorem~\ref{thm.nonpert}.
\begin{proof}[Proof of Theorem~\ref{thm.nonpert}]
Let $u_{s} \in \bbR$ be as in Theorem~\ref{main.theorem.C0.stability} and fix $u_{2} \in (u_{s}, u_{\CH_{1}})$. Let $u_{1} \in (-\infty, u_{s}]$ and\footnote{Here, $V(1)$ is the value of $V$ corresponding to $v=1$.} $V_{1} \in (V(1), 1)$ be parameters to be chosen below. 

We claim that there exist $r_0, R\in \mathbb R$ such that \eqref{eq:nonpert-r} holds in $\mathcal R = \set{(u, V) : u_{1} \leq u \leq u_{2}, \, V_{1} \leq V < 1}$. First, by \eqref{eq:adm-id-adm} and Lemma~\ref{no.trapped} in Appendix~\ref{sec.subext.pf}, we have ${\rd_u r}<0$ on $\Sigma_0\cap J^-(\mathcal I^+_1)$. Therefore, by \eqref{eqn.Ray}, we have\footnote{Here, $\Omg^2$ is to be understood in the $(u,v)$ coordinate system.} $\f{\rd_u r}{\Omg^2}<0$ in $\mathcal R$. In particular, $r(u,V)$ is decreasing in $u$ for every fixed $V$. 

Observe now that the $r$-value on the Reissner--Nordstr\"om Cauchy horizon $r_-=M-\sqrt{M^2-{\bf e}^2}$ satisfies $r_-<|{\bf e}|$ since $|{\bf e}|<M$. 
%This means $M-\sqrt{M^2-{\bf e}^2}<|{\bf e}|$, i.e., $M^2-2M|{\bf e}|+|{\bf e}|^2<M^2-{\bf e}^2$, i.e., $-2M|{\bf e}|+2|{\bf e}|^2<0$, which is obviously true.
By Theorem \ref{main.theorem.C0.stability}, $\lim_{u\to -\infty}\lim_{v\to \infty} r(u,v)= r_-$. Therefore, choosing $V_{1}$ sufficiently close to $1$, there exists $R\in (0,|{\bf e}|)$ such that $r(u_1,V)\leq R$ whenever $V \in [V_{1}, 1)$. The monotonicity of $r$ described above then implies that $r(u, V)\leq R$ for every $(u, V)\in \mathcal R$.

To show the lower bound in $r$, we use the fact that $(u_{2}, 1)$ lies on the non-endpoint of $\CH_{1}$ to get $\lim_{V \to 1} r(u_2,V)>0$ (cf.~Theorem~\ref{thm:kommemi}(2)(e)). Therefore, choosing $V_1$ sufficiently close to $1$, there exists $r_0 \in \mathbb R$ such that $r(u_2, V)\geq r_0 >0$ for every $V \in [V_1, 1]$. Using the monotonicity of $r$ again, we therefore obtain $r(u,V)\geq r_0$ for every $(u, V)\in \mathcal R$.

Next we verify the hypothesis \eqref{eq:nonpert-L1-ini} of Lemma~\ref{lem.nonpert-L1}. Since $u_{1} \leq u_{s}$, the terms on the second line of \eqref{eq:nonpert-L1-ini} are finite. On the other hand, finiteness of the terms on the first line of \eqref{eq:nonpert-L1-ini} simply follows from the fact that the segment $\set{(u, V) : u_{1} \leq u \leq u_{2}, \, V = V_{1}}$ is a compact subset of $\calQ$.

In conclusion, we see that $\calR$ satisfies \eqref{eq:nonpert-r} and \eqref{eq:nonpert-L1-ini}. By Lemma~\ref{lem.nonpert-C0}, $C^{0}$ extendibility to $\set{(u, V) : u_{1} \leq u \leq u_{2}, \, V = 1} = \overline{\calR} \cap \CH_{1}$ holds. Moreover, by Lemma~\ref{lem.nonpert-blowup} and Theorem~\ref{final.blow.up.step}, \eqref{eq:nonpert-blowup-phi}--\eqref{eq:nonpert-blowup-dvr} hold for every $u \in [u_{1}, u_{2}]$ provided that \eqref{final.blow.up.step.assumption} holds on $\EH_{1}$. Since $u_{2} \in (u_{s}, u_{\CH_{1}}]$ is arbitrary, Theorem~\ref{thm.nonpert} follows.
\end{proof}

\begin{remark} \label{rem.nonpert-bfsph-pf}
In the case $\calS = \0$, so that $\CH_{1} \cap \CH_{2}$ consists of a bifurcation sphere $p = (u_{\CH_{1}}, 1)$ with $r(p) > 0$, we may repeat the above proof in the coordinate system $(U_{\CH_{2}}, V_{\CH_{1}})$ (as described in Remark~\ref{rem.nonpert-bfsph}) with $\calR = [U_{\CH_{2}, 1}, 1) \times [V_{\CH_{1}, 1}, 1)$, where $U_{\CH_{2}, 1}, V_{\CH_{1}, 1}$ are sufficiently close to $1$. The crucial observation is that the lower bound $r \geq r_{0} > 0$ holds on $\calR$. We leave the details to the reader.
\end{remark}

\section{$C^2$ inextendibility: Proof of Theorem \ref{main.theorem.C2}}\label{sec.main.theorem.C2}

In this section, we prove the $C^2$-future-inextendibility of the maximal globally hyperbolic future development of the admissible Cauchy data satisfying the assumptions of Theorem \ref{main.theorem.C2}. The proof will be based on a contradiction argument. We assume for the sake of argument that $(\mathcal M,g)$ as in Theorem \ref{main.theorem.C2} is future-extendible in $C^2$, i.e., we assume that there exist $\iota$ and $(\widetilde{\mathcal M}, \tilde{g})$ as in Definition \ref{def.FE}. 
Our goal will be to derive a contradiction to the blow up of $\frac{1}{\Omg^{2}}{\rd_{V} \phi}$ derived in Section~\ref{sec.nonpert}.

In order to lighten the notation, let us write $\iota(\mathcal M)$ simply as $\mathcal M$ when there is no danger of confusion. Given $(\mathcal M,g)$, $(\widetilde{\mathcal M}, \tilde{g})$ and $\iota$ as above, we will also denote by $\rd\mathcal M$ the (topological) boundary of $\iota(\mathcal M)$ in $\widetilde{\mathcal M}$.

Let us briefly describe the ideas of the proof of Theorem \ref{main.theorem.C2}. The proof consists of two main steps:
\begin{enumerate}
\item In the first part of the proof, using ideas in \cite{Kommemi, DafRen}, we will reduce the problem to ruling out radial null geodesics $\gamma$ that exit $\mathcal M$ through\footnote{Strictly speaking, it is the projection $\boldsymbol\pi(\gamma)$ that exists through $\CH_1$ or $\CH_2$, i.e., the closure of $\boldsymbol\pi(\gamma)$ in $\mathcal Q^+$ intersects $\CH_1$ or $\CH_2$.} $\CH_1$ or $\CH_2$ and entering $\widetilde{\mathcal M}$ and such that $r>0$ at $\gamma\cap \rd\mathcal M$. (Lemmas \ref{lemma.Lipschitz}-\ref{not.at.S} and Steps~1 and 2 in Lemma~\ref{not.at.CH})
\item Next, suppose a radial null geodesic $\gamma$ exits $\mathcal Q$, say\footnote{The case where $\CH_1$ is replaced by $\CH_2$ is of course completely analogous.}, through $\CH_1$. We note that \eqref{eq:nonpert-blowup-phi} implies that the Ricci curvature of a parallely propagated vector field blows up along $\gamma$. This clearly contradicts that $\gamma$ is a geodesic in $\widetilde{\mathcal M}$. (Step~3 in Lemma~\ref{not.at.CH})
\end{enumerate}

We now begin the first step of the proof with a standard result:
\begin{lemma}\label{lemma.Lipschitz}
The boundary $\rd\mathcal M$ is locally an achronal Lipschitz hypersurface.
\end{lemma}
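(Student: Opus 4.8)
The statement is the standard fact that the boundary of (the image of) a globally hyperbolic spacetime inside a $C^2$ extension is an achronal Lipschitz hypersurface; this goes back to the work of Penrose on achronal boundaries and has been used in essentially this form in \cite{Kommemi, DafRen, Sbie.C0}. The plan is to verify the two defining properties locally near an arbitrary point $p \in \rd \calM$: first that $\rd \calM$ is \emph{achronal} in a neighborhood of $p$, and second that it is locally the graph of a Lipschitz function.

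For achronality, I would argue by contradiction. Suppose $q, q' \in \rd \calM$ with $q' \in I^{+}(q)$ (with respect to $\tilde{g}$). Since $q \in \rd \calM = \overline{\iota(\calM)} \setminus \iota(\calM)$, there is a sequence $q_{n} \in \iota(\calM)$ with $q_{n} \to q$. Because $I^{+}(q)$ is open and $q' \in I^{+}(q)$, for $n$ large we have $q' \in I^{+}(q_{n})$, hence $q_{n} \in I^{-}(q')$. Now invoke the key hypothesis built into Definition~\ref{def.FE}: for every point $\tilde{p} \in \widetilde{\calM} \setminus \calM$ one has $I^{+}(\tilde{p}) \cap \calM = \emptyset$. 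Equivalently, $\calM \cap I^{-}(\tilde{p}) = \emptyset$ whenever... — wait, that is not quite the right direction; the correct consequence I want is that $\widetilde{\calM} \setminus \calM$ cannot be entered from $\calM$ to the past along timelike curves, i.e. if $\tilde{p} \notin \calM$ then no point of $\calM$ lies in $I^{-}(\tilde{p})$ would be false; rather the defining property says $I^{+}(\tilde p)\cap \calM=\emptyset$, which does give $\tilde p \notin I^{-}(x)$ for any $x\in\calM$. Thus $q_{n} \notin I^{-}(q')$ unless... actually $q' \notin \calM$, so $q' \in I^{+}(q_n)$ would force $I^{+}(q_n)\ni q'$, which is fine, but then taking a point $q'_n \in \iota(\calM)$ with $q'_n \to q'$, for $n$ large $q'_n \in I^{+}(q_n)$ with both in $\calM$, so $q'_n$ is a point of $\calM$ to the future of $q_n$ — that is allowed. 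The contradiction must instead come from the fact that $q'$, being a limit of interior points reachable from interior points along future timelike curves, together with $I^+(\tilde p)\cap \calM = \emptyset$ for $\tilde p = q$, forces: since $q \notin \calM$ and $q' \in I^+(q)$, we get $q' \notin \calM$ \emph{and} $q'$ cannot be a limit of $\iota(\calM)$ — but it is, contradiction. I would write this carefully: the open set $I^{-}(q') \cap I^{+}(q)$ is a neighborhood of a causal relationship that, by openness and the $q_n \to q$ approximation, contains interior points of $\calM$ arbitrarily close to $q$; pushing a short timelike curve forward from such an interior point lands in $I^+(q_n) \subset $ a set whose closure meets $q'$, and one shows points of this region near $q'$ lie in $\iota(\calM)$, yet $q' \in I^{+}(q)$ with $q \notin \calM$ violates $I^{+}(q) \cap \calM = \emptyset$. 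So the clean statement to cite/prove is: $\rd\calM$ contains no two points related by a $\tilde g$-timelike curve lying in a small neighborhood; this is exactly achronality.

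For the Lipschitz graph property, I would use the standard argument: fix $p \in \rd\calM$, choose a convex normal neighborhood $U$ of $p$ in $(\widetilde{\calM}, \tilde g)$ and a $\tilde g$-timelike vector field $T$ on $U$, together with a local spacelike hypersurface $S$ transverse to $T$. Flowing along $T$ gives a diffeomorphism $U \cong S \times (-\eps, \eps)$. By the achronality just established (shrinking $U$ if necessary so that all relevant causal curves stay in $U$), each integral curve of $T$ meets $\rd\calM$ in at most one point; and since $\rd\calM$ locally separates $\iota(\calM)$ from its complement, each integral curve through a point sufficiently near $p$ meets $\rd\calM$ in \emph{exactly} one point. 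This defines a function $h : V \subset S \to (-\eps,\eps)$ whose graph is $\rd\calM \cap U$. Achronality with respect to the $\tilde g$-light cones, whose generators near $p$ make a uniformly bounded angle with $T$ after shrinking $U$, translates into a uniform Lipschitz bound on $h$ in any auxiliary Riemannian metric on $S$; this is the classical computation (see e.g. \cite[Appendix]{DafRen} or the treatment in \cite{Sbie.C0}), and I would not reproduce it in detail but cite it.

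The main obstacle — or rather the only subtle point — is making sure the achronality step correctly exploits the precise boundary condition $I^{+}(\tilde p) \cap \calM = \emptyset$ from Definition~\ref{def.FE}, which is what guarantees that $\calM$ is ``past-closed'' in $\widetilde{\calM}$ and hence that $\rd\calM$ is a genuine achronal boundary (a future Cauchy horizon of $\iota(\calM)$ inside $\widetilde{\calM}$) rather than a general Lipschitz hypersurface. Once achronality is in hand, the graph representation and Lipschitz regularity are purely local, metric-geometric facts about achronal sets in a $C^{2}$ (indeed even $C^{0}$) Lorentzian manifold and require no input specific to the Einstein--Maxwell--scalar-field system. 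I would therefore structure the proof as: (i) recall the boundary condition and deduce local achronality of $\rd\calM$ by the contradiction argument above; (ii) invoke the standard achronal-set-to-Lipschitz-graph lemma to conclude.
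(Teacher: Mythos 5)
Your proposal is correct and matches the paper's approach, which is simply to cite Hawking--Ellis for the classical fact that the boundary of a future set is an achronal locally Lipschitz hypersurface (the future-extension condition $I^{+}(p)\cap\calM=\emptyset$ for $p\in\widetilde{\calM}\setminus\iota(\calM)$ in Definition~\ref{def.FE} says exactly that $\widetilde{\calM}\setminus\iota(\calM)$ is a future set). The one place to streamline is your achronality step: you do not need the sequence $q_{n}\to q$ at all — given $q,q'\in\rd\calM$ with $q'\in I^{+}(q)$, the set $I^{+}(q)$ is an open neighborhood of $q'$ disjoint from $\iota(\calM)$, so $q'$ cannot lie in $\overline{\iota(\calM)}$, contradiction.
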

\begin{proof}
This is standard and can be found for instance in \cite{HawkEll}.
\end{proof}
The next lemma is due to Dafermos--Rendall \cite{DafRen}:
\begin{lemma}\label{Killing.extension}
The standard rotations $(\mathcal O_1, \mathcal O_2, \mathcal O_3)$ extend continuously to $\rd \mathcal M$.
\end{lemma}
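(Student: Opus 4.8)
\textbf{Proof proposal for Lemma~\ref{Killing.extension}.}

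The plan is to extend the rotational Killing fields $\calO_{1}, \calO_{2}, \calO_{3}$ continuously across $\rd \calM$ by exploiting the fact that they are Killing for the $C^{2}$ metric $\tilde{g}$ on the interior $\calM$, and that Killing fields satisfy a linear second-order ODE along geodesics (the Jacobi-type equation $\nab_{X} \nab_{X} \calO_{i} = - R(\calO_{i}, X) X$). First I would fix a point $p \in \rd \calM$. By Lemma~\ref{lemma.Lipschitz}, in a neighborhood of $p$ the boundary $\rd \calM$ is an achronal Lipschitz hypersurface in $(\widetilde{\calM}, \tilde{g})$, so I can choose a $C^{2}$ spacelike hypersurface $\Sigma'$ through a point $q$ in the interior $\calM$, sufficiently close to $p$, which is transversal to the boundary and such that the null geodesics emanating from $\Sigma'$ towards the future sweep out a neighborhood of $p$ in $\widetilde{\calM}$. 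On $\Sigma' \cap \calM$ the Killing fields $\calO_{i}$ and their first covariant derivatives $\nab \calO_{i}$ are smooth (since $\Sigma'$ lies in the interior), and they extend continuously up to $\Sigma' \cap \rd \calM$ — this is the ``initial data'' for the extension.

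The main step is then to propagate $(\calO_{i}, \nab \calO_{i})$ off $\Sigma'$ along the $C^{2}$ geodesic flow of $\tilde{g}$. Along any affinely parametrized geodesic $\gamma(\tau)$ of $\tilde{g}$ with $\gamma(0) \in \Sigma'$, a Killing field $Y$ of a metric restricted to $\gamma$ satisfies the linear system
\begin{equation*}
	\frac{D}{d\tau} Y = A,\qquad \frac{D}{d\tau} A = - \tilde{R}(Y, \dot\gamma)\dot\gamma,
\end{equation*}
where $A := \nab_{\dot\gamma} Y$ and $\tilde{R}$ is the Riemann tensor of $\tilde{g}$, which is continuous since $\tilde{g} \in C^{2}$. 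This ODE has continuous coefficients on all of $\widetilde{\calM}$ (in a local frame along the geodesic variation), hence by standard ODE theory its solution depends continuously on the initial data $(Y(0), A(0))$ and on the geodesic, and in particular extends continuously across $\gamma \cap \rd \calM$. Since on the open interior $\calM$ the solution of this ODE agrees with the genuine Killing field $\calO_{i}$ (because $\calO_{i}$ is smooth and Killing there, so its restriction to each geodesic solves the above system), we obtain a continuous extension of each $\calO_{i}$ to a full neighborhood of $p$ in $\widetilde{\calM}$, provided the geodesic variation covers such a neighborhood. One must check that this extension is well-defined, i.e., independent of the choices of $\Sigma'$ and of the geodesic congruence; this follows because on the dense open set $\calM$ the extension coincides with $\calO_{i}$, and a continuous function on $\widetilde{\calM}$ is determined by its values on a dense set.

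I expect the main obstacle to be the geometric bookkeeping needed to guarantee that the future-directed geodesics from a suitable interior hypersurface $\Sigma'$ actually cover a full neighborhood of an arbitrary boundary point $p \in \rd \calM$ in $\widetilde{\calM}$ — since $\rd \calM$ is only Lipschitz and a priori could be partly null, one needs to argue, using achronality and the fact that $I^{+}(p) \cap \calM = \emptyset$ for boundary points (Definition~\ref{def.FE}), that $p$ is reached from the interior along such geodesics, and that nearby points of $\widetilde{\calM}$ lie on nearby such geodesics. This is essentially the argument of Dafermos--Rendall \cite{DafRen}, and I would follow it, citing it for the details; the only genuinely analytic input is the continuous dependence of solutions of the linear Killing ODE on initial conditions and parameters, which is routine given $\tilde{g} \in C^{2}$.
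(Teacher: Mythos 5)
Your proposal is correct and takes essentially the same route as the paper: the paper's entire proof is a one-line appeal to the result of Dafermos--Rendall \cite{DafRen} that Killing fields of $(\calM, g)$ extend continuously to the boundary in any $C^{2}$ extension, and your sketch is a faithful reconstruction of that argument (propagation of $(\calO_{i}, \nab \calO_{i})$ along $\tilde{g}$-geodesics via the linear Jacobi-type system with continuous coefficients), ending, as the paper does, by deferring the geometric bookkeeping to \cite{DafRen}.
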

\begin{proof}
The fact that any Killing vector field in $(\mathcal M,g)$ extends continuously to the boundary in the $C^2$ extension $(\widetilde{\mathcal M},\tilde{g})$ is proven in \cite{DafRen}. The lemma thus follows as $\mathcal O_1, \mathcal O_2, \mathcal O_3$ are Killing vector fields in $(\mathcal M, g)$.
\end{proof}
In particular, this implies,
\begin{lemma}\label{r.limit}
The area radius function $r$ extends continuously to $\rd \mathcal M$.
\end{lemma}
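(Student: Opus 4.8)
Looking at this statement, Lemma \ref{r.limit}, I need to prove that the area radius function $r$ extends continuously to the boundary $\partial\mathcal{M}$ of the $C^2$ extension.

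\begin{proof}[Proof proposal]
The plan is to deduce the continuous extension of $r$ from the continuous extension of the rotation Killing fields $\mathcal{O}_1, \mathcal{O}_2, \mathcal{O}_3$ established in Lemma~\ref{Killing.extension}, using the fact that $r$ can be recovered algebraically from the metric and these vector fields. Recall that in spherical symmetry the area radius is characterized by $r^2 = \frac{1}{4\pi}\mathrm{Area}(\boldsymbol\pi^{-1}(p))$, but more usefully for this argument, $r^2$ equals (a constant multiple of) the sum of squared norms (with respect to $g$) of the three rotation fields evaluated at a point, averaged appropriately over the orbit sphere; concretely, at any point $x \in \mathcal{M}$ one has an identity of the form $g(\mathcal{O}_1,\mathcal{O}_1) + g(\mathcal{O}_2,\mathcal{O}_2) + g(\mathcal{O}_3,\mathcal{O}_3) = 2 r^2$, since on the unit round sphere the rotation generators satisfy $\sum_i |\mathcal{O}_i|^2_{d\sigma_{\mathbb{S}^2}} = 2$, and the metric \eqref{SS.metric.1} scales the sphere factor by $r^2$.

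First I would make this algebraic identity precise: working in the coordinates of Definition~\ref{def.SS} where $g = g_{\mathcal{Q}} + r^2 d\sigma_{\mathbb{S}^2}$, the rotation fields $\mathcal{O}_i$ are tangent to the $SO(3)$-orbits and $g(\mathcal{O}_i, \mathcal{O}_j) = r^2\, d\sigma_{\mathbb{S}^2}(\mathcal{O}_i, \mathcal{O}_j)$, so that $\sum_{i=1}^3 g(\mathcal{O}_i,\mathcal{O}_i) = r^2 \sum_{i=1}^3 d\sigma_{\mathbb{S}^2}(\mathcal{O}_i,\mathcal{O}_i) = 2r^2$ pointwise on $\mathcal{M}$ (the last sum being the constant $2$ by direct computation on $\mathbb{S}^2$). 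Then, by Lemma~\ref{Killing.extension}, each $\mathcal{O}_i$ extends continuously to $\partial\mathcal{M}$, and by hypothesis the extended metric $\tilde g$ is $C^2$ (in particular $C^0$) on $\widetilde{\mathcal{M}}$, so the function $x \mapsto \sum_{i=1}^3 \tilde g(\mathcal{O}_i, \mathcal{O}_i)(x)$ is continuous on $\mathcal{M} \cup \partial\mathcal{M}$. Dividing by $2$ and taking the (nonnegative) square root — which preserves continuity — shows that $r = \big(\tfrac12 \sum_i \tilde g(\mathcal{O}_i,\mathcal{O}_i)\big)^{1/2}$ extends continuously to $\partial\mathcal{M}$, as claimed.

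The only subtlety — and the step I expect to require the most care — is verifying that the vector fields appearing in the extension are genuinely the continuous extensions of $\mathcal{O}_1,\mathcal{O}_2,\mathcal{O}_3$ as asserted in Lemma~\ref{Killing.extension}, and that evaluating the $C^0$ metric $\tilde g$ on these $C^0$ vector fields yields a $C^0$ scalar function; this is immediate once one knows the $\mathcal{O}_i$ extend as continuous sections of $T\widetilde{\mathcal{M}}$, which is precisely the content of \cite{DafRen} cited in Lemma~\ref{Killing.extension}. There is no genuine obstacle here beyond bookkeeping: the result is a clean algebraic consequence of the two preceding lemmas together with the spherically symmetric form of the metric, and requires no new estimates.
\end{proof}
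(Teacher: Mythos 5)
Your proposal is correct and follows exactly the paper's own argument: the paper proves Lemma~\ref{r.limit} by invoking Lemma~\ref{Killing.extension} together with the identity $r^2 = \tfrac{1}{2}\sum_{i=1}^3 g(\mathcal O_i,\mathcal O_i)$, which is precisely what you do. Your writeup merely spells out the verification of that identity and the continuity bookkeeping in more detail.
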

\begin{proof}
This follows from Lemma \ref{Killing.extension} together with $r^2=\f 12 \sum_{i=1}^3 g(\mathcal O_i,\mathcal O_i)$.
\end{proof}

\begin{lemma}\label{span.O}
Let $p\in \rd\mathcal M$ such that $r(p)\neq 0$. Then the continuous extensions of $(\mathcal O_1, \mathcal O_2, \mathcal O_3)$ at $T_p \widetilde{\mathcal M}$ span a $2$-dimensional spacelike subspace of $T_p \widetilde{\mathcal M}$ with respect to the metric $\tilde{g}$.
\end{lemma}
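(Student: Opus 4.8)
\textbf{Proof proposal for Lemma~\ref{span.O}.}

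The plan is to work entirely in the $C^{2}$ extension $(\widetilde{\mathcal M}, \tilde g)$ and to track the continuous extensions $\overline{\mathcal O}_{1}, \overline{\mathcal O}_{2}, \overline{\mathcal O}_{3}$ of the standard rotations to the point $p \in \rd\mathcal M$ (these exist by Lemma~\ref{Killing.extension}). We know two algebraic facts that the $\mathcal O_{i}$ satisfy everywhere on $\mathcal M$, and which therefore pass to the limit at $p$ by continuity of $\tilde g$ and of the $\overline{\mathcal O}_{i}$: first, the pointwise identity $r^{2} = \tfrac12 \sum_{i=1}^{3} g(\mathcal O_{i}, \mathcal O_{i})$, so that at $p$ we have $\tfrac12 \sum_{i} \tilde g(\overline{\mathcal O}_{i}, \overline{\mathcal O}_{i}) = r(p)^{2} > 0$; second, the Lie-algebra relations $[\mathcal O_{i}, \mathcal O_{j}] = \sum_{k} \eps_{ijk} \mathcal O_{k}$ and the fact that on each $SO(3)$-orbit in $\mathcal M$ (a round sphere of radius $r$) the span of the $\mathcal O_{i}$ is exactly the $2$-dimensional tangent space of the orbit, with Gram matrix that of the standard round sphere of radius $r$. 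I would make this last point quantitative: in an orthonormal frame adapted to the orbit one computes explicitly that the $3\times 3$ Gram matrix $G_{ij} = g(\mathcal O_{i}, \mathcal O_{j})$ has rank $2$, is positive semidefinite, and its nonzero eigenvalues are both equal to $r^{2}$ (equivalently, $G = r^{2}(\mathrm{Id} - \hat n \hat n^{T})$ for the appropriate unit vector $\hat n$ encoding the orbit point). Hence the \emph{nonzero} $2\times 2$ minors of $G$ are all $\ge c\, r^{4}$ for a dimensional constant $c>0$.

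The key step is then a limiting argument. Take a sequence $q_{n} \in \mathcal M$ with $q_{n} \to p$ in $\widetilde{\mathcal M}$. The Gram matrices $G^{(n)}_{ij} = g(\mathcal O_{i}, \mathcal O_{j})(q_{n}) = \tilde g(\mathcal O_{i}, \mathcal O_{j})(q_{n})$ converge, by continuity of $\tilde g$ and of the extended vector fields, to $\overline{G}_{ij} := \tilde g(\overline{\mathcal O}_{i}, \overline{\mathcal O}_{j})(p)$. Since $r(q_{n}) \to r(p) \neq 0$ by Lemma~\ref{r.limit}, and each $G^{(n)}$ has exactly two nonzero eigenvalues, both equal to $r(q_{n})^{2} \to r(p)^{2} > 0$, and one zero eigenvalue, the limit $\overline{G}$ has eigenvalues $r(p)^{2}, r(p)^{2}, 0$ (eigenvalues depend continuously on the matrix). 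In particular $\overline{G}$ is positive semidefinite of rank exactly $2$. Positive semidefiniteness of the restriction of $\tilde g$ to $\mathrm{span}\{\overline{\mathcal O}_{i}(p)\}$ means every vector in that span is spacelike or null; rank $2$ of $\overline{G}$ forces the span to be $2$-dimensional (if it were $\le 1$-dimensional, $\overline{G}$ would have rank $\le 1$) and, since a $2$-dimensional subspace of a Lorentzian vector space on which the induced form is positive semidefinite of rank $2$ is necessarily spacelike (a degenerate or Lorentzian plane would have a nonpositive eigenvalue), we conclude $\mathrm{span}\{\overline{\mathcal O}_{1}(p), \overline{\mathcal O}_{2}(p), \overline{\mathcal O}_{3}(p)\}$ is a $2$-dimensional spacelike subspace of $T_{p}\widetilde{\mathcal M}$.

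The only mild subtlety — and the step I would be most careful about — is the claim that at $q_{n}$ the three rotation fields already span a $2$-dimensional spacelike plane with the stated eigenvalue structure; this is a statement about the \emph{interior} spherically symmetric solution and follows from Definition~\ref{def.SS}(1)--(2), i.e. from the fact that the $SO(3)$-orbits are spacelike round spheres of area radius $r$ and that $\mathcal O_{1},\mathcal O_{2},\mathcal O_{3}$ restrict on each orbit to the standard rotation Killing fields of $S^{2}$, whose Gram matrix relative to the unit round metric is the well-known rank-$2$ matrix $\mathrm{Id} - \hat n \hat n^{T}$ (rescaled by $r^{2}$). Once this is in hand, the rest is the elementary continuity-of-eigenvalues argument above. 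Note that the hypothesis $r(p)\neq 0$ is used exactly to prevent the two large eigenvalues from collapsing to $0$ in the limit, which is what would happen at a point of $\mathcal S$.
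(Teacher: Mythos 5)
Your argument for the \emph{lower} bound on the dimension is clean and essentially parallel to the paper's Step~1, just packaged differently: you track the eigenvalues $(r^{2}, r^{2}, 0)$ of the Gram matrix and use continuity, whereas the paper isolates a fixed pair $(i,j)$ and runs Gram--Schmidt to exhibit two explicit orthogonal spacelike vectors surviving the limit. Both correctly show that the span at $p$ contains a $2$-dimensional spacelike subspace, and that $r(p) \neq 0$ is what keeps this from degenerating.

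However, there is a genuine gap in the \emph{upper} bound. You write that ``rank $2$ of $\overline{G}$ forces the span to be $2$-dimensional (if it were $\le 1$-dimensional, $\overline{G}$ would have rank $\le 1$).'' This parenthetical only rules out dimension $\le 1$; it does not rule out dimension $3$. A positive semidefinite Gram matrix of rank $2$ is entirely compatible with a $3$-dimensional span: it happens exactly when the three vectors are linearly independent and span a \emph{degenerate} (null) $3$-plane in $T_{p}\widetilde{\mathcal M}$. Concretely, in Minkowski space the triple $(\partial_{2},\, \partial_{3},\, \partial_{0} + \partial_{1})$ has Gram matrix $\mathrm{diag}(1,1,0)$ but $3$-dimensional span. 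The kernel vector $(a_{1}, a_{2}, a_{3})$ of $\overline{G}$ only tells you that $w := \sum a_{i} \overline{\mathcal O}_{i}(p)$ is $\tilde g$-orthogonal to the span and null; it does not tell you that $w = 0$.

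To close the gap you need an independent argument that the $\overline{\mathcal O}_{i}(p)$ are linearly \emph{dependent}. The paper does this in its Step~2 by a compactness argument: for each $q_{n}\in\mathcal M$ the relation $\sum_{j} a_{j,n}\, \mathcal O_{j}(q_{n}) = 0$ holds for some $(a_{1,n},a_{2,n},a_{3,n}) \in \mathbb S^{2}$ (in your notation, $a_{\cdot,n} = \hat n^{(n)}$, the orbit coordinate of $q_{n}$); a convergent subsequence of the coefficients produces a nontrivial dependence $\sum_{j} a_{j,\infty}\, \overline{\mathcal O}_{j}(p) = 0$. Your proposal would be complete if you appended this step---indeed it fits naturally into your framework, since $\hat n^{(n)}$ is precisely the kernel direction of your $G^{(n)}$---but as written the continuity-of-eigenvalues argument alone does not establish $\dim \le 2$.
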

\begin{proof}
\pfstep{Step~1} We first show that $span\{\mathcal O_1(p), \mathcal O_2(p), \mathcal O_3(p)\}$ contains a $2$-dimensional spacelike subspace. We use the following claim: 

{\bf Claim:} For every point $q\in\mathcal M$, there exist $(i,j)\in \{1,2,3\}^2$ with $i\neq j$ such that $g(\mathcal O_i,\mathcal O_i)(q)\geq \f{2 r^2(q)}{3}$, $g(\mathcal O_j,\mathcal O_j)(q)\geq \f{2 r^2(q)}{3}$ and $\widetilde{\mathcal O}_{ij}=\mathcal O_i-\f{g(\mathcal O_i,\mathcal O_j)}{g(\mathcal O_j,\mathcal O_j)}\mathcal O_j$ satisfies $g(\widetilde{\mathcal O}_{ij},\widetilde{\mathcal O}_{ij})(q)\geq \f{r^2(q)}{2}$.

\emph{Proof of Claim:} Rescaling the estimates by the radius of the $2$-spheres, it suffices to show this on a standard $2$-sphere with $r=1$ embedded in $\mathbb R^3$ given by $\{(x,y,z):x^2+y^2+z^2=1\}$. In Cartesian coordinates, the vector fields are given by $\mathcal O_1=x\rd_y-y\rd_x$, $\mathcal O_2=y\rd_z-z\rd_y$, $\mathcal O_3=z\rd_x-x\rd_z$. Without loss of generality, we can assume that
\begin{equation}\label{ordering}
g(\mathcal O_1,\mathcal O_1)\geq g(\mathcal O_2,\mathcal O_2)\geq g(\mathcal O_3, \mathcal O_3).
\end{equation}
In this case, we choose $i=1$ and $j=2$. Since $\sum_{k=1}^3 g(\mathcal O_k, \mathcal O_k)=2$, this implies $g(\mathcal O_i, \mathcal O_i)\geq \f 23$ and $g(\mathcal O_j, \mathcal O_j)\geq \f 23$. Observe that
$$\widetilde{\mathcal O}_{ij}=x\rd_y-y\rd_x+\f{xz}{y^2+z^2}(y\rd_z-z\rd_y)=-y\rd_x+\f{x y^2}{y^2+z^2}\rd_y+\f{xyz}{y^2+z^2}\rd_z.$$
Using $x^2+y^2+z^2=1$,
$$g(\widetilde{\mathcal O}_{ij},\widetilde{\mathcal O}_{ij})=y^2+\f{x^2 y^4+x^2y^2z^2}{(y^2+z^2)^2}=\f{y^2 ((y^2+z^2)^2+ x^2 y^2+x^2  z^2)}{(y^2+z^2)^2}=\f{y^2 }{y^2+z^2}.$$
It remains to note that if $g(\widetilde{\mathcal O}_{ij},\widetilde{\mathcal O}_{ij})=\f{y^2 }{y^2+z^2}< \f 12$, then 
$$g(\mathcal O_3,\mathcal O_3)=x^2+z^2>x^2+y^2=g(\mathcal O_1,\mathcal O_1),$$
which contradicts \eqref{ordering}. Therefore, we have $g(\widetilde{\mathcal O}_{ij},\widetilde{\mathcal O}_{ij})\geq \f 12$. This concludes the proof of the claim.

We now take a sequence $\{p_k\}_{k=1}^\infty\subset \mathcal M $ such that $p_k\to p$ (which exists since $p\in \rd\mathcal M$). After passing to a subsequence if necessary, there exists $(i,j)\in \{1,2,3\}^2$ such that the conclusion of the Claim above holds for every $p_k$ (with the same choice of $i$ and $j$). By continuity, it holds at point $p$ that $g(\mathcal O_i,\mathcal O_i)(p)\geq \f{2 r^2(p)}{3}$, $g(\mathcal O_j,\mathcal O_j)(p)\geq \f{2 r^2(p)}{3}$ and $g(\widetilde{\mathcal O}_{ij},\widetilde{\mathcal O}_{ij})(p)\geq \f{r^2(p)}{2}$. Since $r(p)\neq 0$ by assumption, this implies that there exist two linearly independent spacelike vectors $\mathcal O_i$ and $\mathcal O_j$ in $T_p\widetilde{\mathcal M}$. This concludes Step 1.

\pfstep{Step~2} We then show that the dimension of $span\{\mathcal O_1(p), \mathcal O_2(p), \mathcal O_3(p)\}$ is $\leq 2$. Let $\{p_i\}_{i=1}^\infty\subset \mathcal M$ be a sequence of points such that $p_i\to p$. (Such a sequence exists since $p\in \rd \mathcal M$.) Since the tangent space on the standard $2$-sphere is $2$-dimensional, for every $i\in \mathbb N$, there exists $(a_{1,i}, a_{2,i}, a_{3,i})\in \mathbb S^2$, (i.e., $(a_{1,i}, a_{2,i}, a_{3,i})\in \mathbb R^3$ with $a_{1,i}^2 + a_{2,i}^2 + a_{3,i}^2=1$) such that $\sum_{j=1}^3 a_{j,i}\mathcal O_j(p_i)=0$. Since $\mathbb S^2$ is compact, there exists a subsequence $i_k$ such that $(a_{1,i_k}, a_{2,i_k}, a_{3,i_k})$ converges to some $(a_{1,\infty}, a_{2,\infty}, a_{3,\infty})$ in $\mathbb S^2$. The continuity of the $\mathcal O_j$ at $p$ then implies that $\sum_{j=1}^3 a_{j,\infty}\mathcal O_j(p)=0$. Therefore, $\mathcal O_1$, $\mathcal O_2$ and $\mathcal O_3$ are linearly dependent at $p$.

Combining Steps 1 and 2 yields the desired result.
\end{proof}

Using the above preliminaries, we show in the next lemma that either there exists a timelike geodesic which exits $\mathcal M$ and enters the extension $\widetilde{\mathcal M}$ through a point $p\in \rd\mathcal M$ with $r(p)=0$ or there exists a \underline{radial} null geodesic which exits $\mathcal M$ and enters the extension $\widetilde{\mathcal M}$ through a boundary point $p\in \rd\mathcal M$ with $r(p)>0$. Thus, to show that the future extension $\widetilde{\mathcal M}$ does not exist, it suffices to rule out such geodesics. This will indeed be carried out in Lemmas \ref{not.at.S} and \ref{not.at.CH}.
\begin{lemma}\label{inextendibility.main.step}
At least one of the following holds:
\begin{enumerate}
\item There exist $p\in \rd\mathcal M$ with $r(p)=0$ and a future-directed timelike geodesic $\gamma:(-\epsilon,\epsilon)\to \widetilde{\mathcal M}$ such that $\gamma((-\epsilon,0))\subset \mathcal M$ and $\gamma(0)=p$.
\item There exist $p\in \rd\mathcal M$ with $r(p)\neq 0$ and a future-directed null geodesic $\gamma:(-\epsilon,\epsilon)\to \widetilde{\mathcal M}$ such that $\gamma((-\epsilon,0))\subset \mathcal M$, $\gamma(0)=p$ and such that $\gamma\restriction_{(-\epsilon,0)}$ is \underline{radial}.
\end{enumerate}
\end{lemma}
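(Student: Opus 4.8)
The plan is to run a standard Penrose-diagram-type argument à la \cite{Kommemi, DafRen}, exploiting that $\mathcal Q^{+}$ has the explicit boundary structure of Theorem~\ref{thm:kommemi} and that $\rd\mathcal M$ is an achronal Lipschitz hypersurface (Lemma~\ref{lemma.Lipschitz}) invariant under the continuous extension of the rotational Killing fields (Lemma~\ref{Killing.extension}). Pick a point $q$ in the interior of $\iota(\mathcal M)$ (say on the initial Cauchy hypersurface $\Sgm_{0}$) and a point $\tilde q \in \widetilde{\mathcal M}\setminus \iota(\mathcal M)$; since $\widetilde{\mathcal M}$ is connected, one can join $q$ to $\tilde q$ by a (future-directed) causal curve in $\widetilde{\mathcal M}$, which must therefore cross $\rd\mathcal M$ at some point $p$. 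The achronality from Lemma~\ref{lemma.Lipschitz} guarantees that after reparametrization we may take the crossing to be ``clean,'' i.e.\ the curve lies in $\iota(\mathcal M)$ up to $p$. The next step is to upgrade this causal curve to a \emph{geodesic} with the claimed properties: one uses the standard fact (see e.g.\ \cite{ONeill, HawkEll}) that if $\widetilde{\mathcal M}$ is future-extendible at all, one can find a future-inextendible \emph{geodesic} $\gamma$ of $(\mathcal M, g)$ which is future-incomplete and whose endpoint lies in $\rd\mathcal M$; concretely, run a geodesic from a point just inside $\mathcal M$ ``aimed at'' $p$, and by a limiting/continuity argument (using that $\widetilde{\mathcal M}$ is $C^{2}$, so the exponential map is well-behaved near $p$) obtain a geodesic $\gamma:(-\epsilon,\epsilon)\to\widetilde{\mathcal M}$ with $\gamma((-\epsilon,0))\subset\mathcal M$ and $\gamma(0)=p\in\rd\mathcal M$. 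By shrinking $\epsilon$ we may assume $\gamma$ has constant causal character (timelike, null, or — what we will rule out — spacelike, since a spacelike geodesic in the interior would have to cross $\rd\mathcal M$ transversally, contradicting that the part before $p$ stays in $\mathcal M$; more precisely one uses that $\mathcal M$ is past-closed in $\widetilde{\mathcal M}$ in the sense of Definition~\ref{def.FE}, so past-directed causal curves from $p$ enter $\mathcal M$). This gives a future-directed causal geodesic $\gamma$ ending at $p$.

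Now comes the case split according to whether $r(p)=0$ or $r(p)\neq 0$. If $r(p)\neq 0$, the goal is to show $\gamma$ can be taken \emph{radial} and \emph{null}. Here Lemma~\ref{span.O} is the key input: the continuous extensions of $(\mathcal O_{1},\mathcal O_{2},\mathcal O_{3})$ span a $2$-dimensional spacelike subspace $V_{p}\subset T_{p}\widetilde{\mathcal M}$. The velocity $\dot\gamma(0)$ is causal, hence cannot lie in the spacelike plane $V_{p}$, so $\dot\gamma(0)$ has a nonzero component in the $2$-dimensional ``radial'' quotient $T_{p}\widetilde{\mathcal M}/V_{p}$. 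By acting with the flows of $\mathcal O_{i}$ (which are isometries of $(\mathcal M,g)$ and extend continuously, hence map geodesics to geodesics in a neighborhood, cf.\ \cite{DafRen}) one may rotate $\gamma$ so that its projection to the quotient is, at $p$, a null vector, and then — by the structure of the spherically symmetric metric \eqref{SS.metric.1}--\eqref{SS.metric.2}, in which radial null directions are genuinely null — replace $\gamma$ by the radial null geodesic with that initial data; alternatively, one argues directly that a causal geodesic hitting the boundary at a point where the orbit is a round $2$-sphere of positive radius can be replaced, via the symmetry, by a radial null one reaching $\rd\mathcal M$ at a point $p'$ with $r(p')=r(p)\neq 0$ (this is the argument in \cite{Kommemi}, and also underlies \cite{DafRen}). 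If instead $r(p)=0$, one shows the geodesic can be taken timelike: a radial null geodesic along which $r\to 0$ would have to terminate at the achronal set $\mathcal S$ or at an endpoint of a Cauchy horizon, and a short causal/continuity argument (again using achronality of $\rd\mathcal M$ and the Penrose diagram of Figure~\ref{fig:Kommemi}) lets us perturb to a timelike geodesic with $r(p)=0$; this is precisely alternative (1).

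I expect the main obstacle to be the upgrade from ``an arbitrary causal curve crossing $\rd\mathcal M$'' to ``a geodesic with clean endpoint and fixed causal type, which moreover (when $r(p)\neq 0$) can be taken radial and null.'' The subtlety is that $\rd\mathcal M$ is only Lipschitz, so one cannot naively talk about transversality or use the implicit function theorem at $p$; the resolution is the by-now-standard package of arguments (limits of geodesics in the $C^{2}$-regular ambient manifold $\widetilde{\mathcal M}$, together with local achronality and the extension of the Killing fields) developed in \cite{Kommemi, DafRen} — in the excerpt these are exactly Lemmas~\ref{lemma.Lipschitz}--\ref{span.O}, which is why they were proved first. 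Once Lemma~\ref{inextendibility.main.step} is in hand, the excerpt makes clear that the remaining work is Lemmas~\ref{not.at.S} and \ref{not.at.CH}: ruling out alternative (1) using the structure of $\mathcal S$ (where $r\to 0$) — essentially that a timelike geodesic cannot enter through a point where the area radius degenerates to zero without the Kretschmann scalar blowing up, which is inconsistent with a $C^{2}$ extension — and ruling out alternative (2) by transporting the radial null geodesic into the $(u,V)$ coordinates near $\CH_{1}$ (or $\CH_{2}$) and invoking the blow-up \eqref{eq:nonpert-blowup-phi} of Theorem~\ref{thm.nonpert}: along such a geodesic a parallelly propagated frame component of the Ricci tensor, computed from $(\rd_{V}\phi)^{2}/\Omg^{2}$ via the Einstein equations \eqref{EMSFS}, is non-integrable and hence the geodesic cannot extend as a $C^{2}$ (indeed even $C^{1}$, by \eqref{eq:nonpert-blowup-dvr}) object, the desired contradiction.
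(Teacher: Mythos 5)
Your proposal identifies the right ingredients (Lemmas~\ref{lemma.Lipschitz}--\ref{span.O}, case split on $r(p)=0$ vs.\ $r(p)\neq 0$, radiality via the Killing fields), but the central construction is missing, and the way you propose to fill it in would not work as stated.

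The decisive step you do not see is that the paper does not start from ``some causal curve that crosses $\rd\mathcal{M}$'' and then upgrade and adjust. Instead it first \emph{chooses} the boundary point: by Rademacher's theorem the Lipschitz hypersurface $\rd\mathcal{M}$ is differentiable almost everywhere, so one may pick $p\in\rd\mathcal{M}$ where $T_{p}(\rd\mathcal{M})$ makes sense. This choice is what makes the rest of the argument clean. If $r(p)=0$, one takes \emph{any} future-directed timelike geodesic $\gamma$ with $\gamma(0)=p$ and shows $\gamma((-\epsilon,0))\subset\mathcal{M}$ by a short $I^{+}$-openness argument (if some $\gamma(s)$ with $s<0$ lay outside $\mathcal{M}$, then $I^{+}(\gamma(s))$ would contain both $p$ and, since $p\in\rd\mathcal{M}$, a point of $\mathcal{M}$ --- contradicting that $\widetilde{\mathcal{M}}$ is a \emph{future} extension). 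No ``perturbation'' is needed. If $r(p)\neq 0$, Lemma~\ref{span.O} gives the spacelike $2$-plane $V_{p}=\mathrm{span}\{\mathcal{O}_{i}(p)\}$; let $X,Y$ be the two past-directed null directions orthogonal to $V_{p}$. The key observation is that at least one of them, say $X$, is \emph{not} in $T_{p}(\rd\mathcal{M})$, since otherwise $X+Y$ would be a past-directed timelike vector tangent to the achronal set $\rd\mathcal{M}$, contradicting Lemma~\ref{lemma.Lipschitz}. One then \emph{defines} $\gamma$ to be the geodesic with $\dot\gamma(0)=-X$ and shows $\gamma((-\epsilon,0))\subset\mathcal{M}$: a sequence $s_{n}\nearrow 0$ with $\gamma(s_{n})\notin\mathcal{M}$ must (by the same $I^{+}$ argument) have $\gamma(s_{n})\in\rd\mathcal{M}$, which is impossible because $\dot\gamma(0)$ is transversal to $T_{p}(\rd\mathcal{M})$ and $\rd\mathcal{M}$ is differentiable at $p$. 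This is precisely where the Rademacher choice pays off.

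Your proposed route does not fill these holes. Starting from a causal curve from an interior point to an exterior point and ``upgrading to a geodesic'' has no precise content here, and in particular gives no control over whether the limiting $p$ is a differentiable point of $\rd\mathcal{M}$ or whether the geodesic's velocity at $p$ is transversal --- both of which the paper needs and arranges by hand. More importantly, the idea of ``rotating $\gamma$ by the flows of $\mathcal{O}_{i}$ so that its projection becomes null/radial'' is incorrect: the rotations act on the sphere directions and leave the radial $2$-plane fixed, so they cannot convert a non-radial geodesic into a radial one, nor a timelike one into a null one. The paper instead gets radiality for free from the conservation law: since $\mathcal{O}_{i}$ is Killing in $(\mathcal{M},g)$ and $\gamma$ is a geodesic, $g(\dot\gamma,\mathcal{O}_{i})$ is constant along $\gamma\restriction_{(-\epsilon,0)}$, and by continuity up to $p$ its value equals $\tilde g(-X,\mathcal{O}_{i})(p)=0$ by the choice of $X$ orthogonal to $V_{p}$. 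You should redo the construction in this order --- pick $p$ by Rademacher, pick the null direction $X$ transversal to $\rd\mathcal{M}$, launch the geodesic backwards from $p$, then verify radiality by conservation --- rather than trying to massage a pre-existing causal curve.
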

\begin{proof}
We choose $p\in \rd\mathcal M$ such that $\rd\mathcal M$ is differentiable at $p$. Such a $p$ exists since by Lemma \ref{lemma.Lipschitz} the boundary $\rd \calM$ is Lipschitz and therefore by Rademacher's theorem it is differentiable almost everywhere\footnote{To be understood in terms of the Lebesgue measure with respect to local coordinates.}. The two cases\footnote{Notice that $r\geq 0$ in $\mathcal M$ and thus the limit on $\rd\mathcal M$ must also be non-negative.} in the statement of the lemma depend on whether $r(p)=0$ or $r(p)>0$.

Suppose $r(p)=0$. We construct $\gamma:(-\epsilon,\epsilon)\to \widetilde{\mathcal M}$ to be any future-directed timelike geodesic such that $\gamma(0)=p$. It suffices to show that for $\epsilon$ sufficiently small, $\gamma((-\epsilon,0))\subset\mathcal M$. Suppose not, then for every $\epsilon$, there exists $s\in (-\epsilon,0)$ such that $\gamma(s)\in \widetilde{\mathcal M}\setminus\mathcal M$. Since $\gamma$ is future-directed, we have $p\in I^+(\gamma(s))$. Since $I^+(\gamma(s))$ is open, there exists an open neighbourhood $\mathcal U$ of $p$ such that $\mathcal U\subset I^+(\gamma(s))$. On the other hand, since $p\in\rd\mathcal M$, $\mathcal U$ contains a point of $\mathcal M$. In other words, $I^+(\gamma(s))$ contains a point of $\mathcal M$, which contradicts the fact that $\widetilde{\mathcal M}$ is a future extension.

It remains to consider the case where $r(p)\neq 0$. Let $X$, $Y$ be two linearly independent past-directed null vectors at $T_p\widetilde{\mathcal M}$ which are normal to $span\{\mathcal O_1, \mathcal O_2, \mathcal O_3\}$ with respect to the metric $\tilde{g}$. (Notice that since $span\{\mathcal O_1, \mathcal O_2, \mathcal O_3\}$ is spacelike and $2$-dimensional according to Lemma \ref{span.O}, such $X$ and $Y$ exist. Moreover, the choices of $X$ and $Y$ are unique up to re-scalings.) We claim that either\footnote{Note that $T_p (\rd\mathcal M)$ is well-defined by the choice of $p$} $X\not\in T_p (\rd\mathcal M)$ or $Y\not\in T_p (\rd\mathcal M)$. Otherwise, $X+Y\in T_p(\rd\mathcal M)$ is a past-directed timelike vector and this contradicts Lemma \ref{lemma.Lipschitz}, which states that $\rd\mathcal M$ is achronal. Without loss of generality, we assume $X\not\in T_p (\rd\mathcal M)$.

We now construct $\gamma$ by solving for the unique geodesic through $p$ which is initially $-X$. Note that $\gamma$ is future-directed.

{\bf Claim: $\gamma((-\epsilon,0))\subset \mathcal M$ for sufficiently small $\epsilon$.}

We assume for the sake of contradiction that there exists a sequence of negative real numbers $s_n\nearrow 0$ such that $\gamma(s_n)\in \widetilde{\mathcal M}\setminus \mathcal M$. First, we see that in fact for every $s_n$ we must have $\gamma(s_n)\in \rd\mathcal M$. Otherwise there exists an open neighbourhood $\mathcal U$ of $\gamma(s_n)$ which is a subset of $\widetilde{\mathcal M}\setminus \mathcal M$ and in particular there exists $q\in \mathcal U\subset \widetilde{\mathcal M}\setminus \mathcal M$ such that $I^+(q)$ contains $p$. Since $I^+(q)$ is open, it also contains a small open neighbourhood $\mathcal V$ of $p$. On the other hand, since $p\in \rd\mathcal M$ and $\mathcal V$ is an open neighbourhood of $p$, $\mathcal V$ must contain a point $v\in \mathcal M$. Now $v\in I^+(q)\cap \mathcal M$ where $q\in \widetilde{\mathcal M}\setminus \mathcal M$. This contradicts the assumption that $\widetilde{\mathcal M}$ is a future extension of $\mathcal M$.

It thus remains to rule out the possibility that $\gamma(s_n)\in \rd\mathcal M$ for all $s_n$. This is indeed impossible since $\dot\gamma(0)=-X$ is not tangential to $\rd\mathcal M$ at $p$ and $\rd\mathcal M$ is differentiable at $p$.

{\bf Claim: $\gamma\restriction_{(-\ep,0)}$ is radial.} 

Assume that it is not, i.e., there exists $s\in(-\epsilon,0)$ such that $g(\dot\gamma(s),\mathcal O_i)=a\neq 0$ for some $i=1,2,3$. On the other hand, since $\gamma$ is a geodesic and $\mathcal O_i$ is Killing in $(\mathcal M, g)$, we have $\dot\gamma\left(g(\dot\gamma(s),\mathcal O_i)\right)=g(\nab_{\dot{\gamma}}\dot{\gamma},\mathcal O_i)+g(\dot{\gamma},\nab_{\dot{\gamma}}\mathcal O_i)=0$, i.e., $g(\dot\gamma(s),\mathcal O_i)$ is a constant along $\gamma$. In particular, since $\dot\gamma$, $\mathcal O_i$ and $\tilde{g}$ are all continuous up to $p$, we have 
$\tilde{g}(-X,\mathcal O_i)(p)=a\neq 0$, contradicting the choice of $X$.
\end{proof}

Equipped with Lemma \ref{inextendibility.main.step}, it now remains to show that both alternatives (1) and (2) cannot hold. The following lemma is immediate from results in \cite{Kommemi}.
\begin{lemma}\label{not.at.S}
The alternative (1) in Lemma \ref{inextendibility.main.step} cannot hold.
\end{lemma}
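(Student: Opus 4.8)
\textbf{Proof plan for Lemma~\ref{not.at.S}.}

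The plan is to rule out alternative (1) of Lemma~\ref{inextendibility.main.step}, namely the existence of a future-directed timelike geodesic $\gmm : (-\eps, \eps) \to \widetilde{\calM}$ with $\gmm((-\eps, 0)) \subset \calM$, $\gmm(0) = p \in \rd\calM$, and $r(p) = 0$. The key structural input is the a priori characterization of the boundary of the maximal globally hyperbolic future development in Theorem~\ref{thm:kommemi}: the set on which $r$ extends continuously to $0$ is precisely the achronal set $\calS$, which is attached to the future of (the endpoints of) $\CH_1$ and $\CH_2$, and lies strictly in the ``future interior'' of the quotient spacetime. I would first observe that since $r$ extends continuously to $\rd\calM$ by Lemma~\ref{r.limit}, and $r(p) = 0$, the projection $\boldsymbol\pi \circ \gmm$ must limit to a boundary point of $\calQ^+$ at which $r = 0$; by Theorem~\ref{thm:kommemi} the only such boundary component is $\calS$ (the other boundary components $\Sgm_0$, $i^0_j$, $\NI_j$, $i^+_j$, $\CH_j$ all have $r > 0$ or $r \to \infty$).

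Next I would invoke the results of \cite{Kommemi} (as recalled in Remark~\ref{rem.hawking.mass} and in the structure of Theorem~\ref{thm:kommemi}) to the effect that the Hawking mass, and hence the Kretschmann scalar, blows up along any causal curve terminating on $\calS$; more precisely, $r \to 0$ together with $\varpi$ bounded away from zero (or, in the relevant subcase, with $\frac{2m}{r}$ unbounded) forces the Kretschmann scalar $R_{\alp\bt\gmm\dlt} R^{\alp\bt\gmm\dlt}$ to diverge as one approaches $\calS$. Since the Kretschmann scalar is a scalar invariant of the metric that is continuous on the $C^2$ manifold $(\widetilde{\calM}, \tilde g)$, it must remain finite at $p = \gmm(0)$ — but its restriction to $\gmm((-\eps, 0)) \subset \calM$ diverges as $s \to 0^-$, a contradiction. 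This is exactly the argument indicated in Remark~\ref{rem.hawking.mass} that ``the blow up of the Hawking mass implies the blow up of the Kretschmann scalar, which is an obstruction to extending the metric in $C^2$''; I would simply note that it applies verbatim here once we know $\boldsymbol\pi(\gmm)$ terminates on $\calS$.

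The main (and only real) obstacle is making precise that the projected geodesic actually reaches $\calS$ rather than, say, spiraling or accumulating on a portion of the boundary where $r$ does not tend to $0$ — but this is already resolved by continuity of $r$ up to $\rd\calM$ (Lemma~\ref{r.limit}) combined with the exhaustive list of boundary components in Theorem~\ref{thm:kommemi}: $r(p) = 0$ singles out $\calS$. A secondary bookkeeping point is that the curvature blow-up statement of \cite{Kommemi} is phrased at the level of the spherically symmetric quotient / the full spacetime $\calM$, so one should check that the lift of the blow-up to $\calM$ (and thence to the putative $C^2$ extension along the curve $\gmm$) is legitimate; this is routine since the Kretschmann scalar of a warped-product metric of the form \eqref{SS.metric.1} is an explicit algebraic expression in $r$, its derivatives, and $m$, all of which have been controlled. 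Hence alternative (1) is impossible, which is the assertion of the lemma.
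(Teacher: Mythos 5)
Your proposal has the right overall architecture (curvature scalar invariant blows up as $r \to 0$, which is incompatible with a $C^2$ extension through $p$), and the observation that $r(p)=0$ forces $\boldsymbol\pi(\gmm)$ to terminate on $\calS$ is a valid use of Theorem~\ref{thm:kommemi}, though the paper's own proof works entirely locally near $p$ without invoking the $\calQ^{+}$ boundary taxonomy. However, there are two substantive problems with the way you reach the curvature blow-up.

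First, the claim that ``the Hawking mass, and hence the Kretschmann scalar, blows up along any causal curve terminating on $\calS$'' is not correct as stated: near an $r=0$ singularity the Hawking mass $m$ need not blow up (in the Schwarzschild interior $m \equiv M$ is constant all the way to $r=0$). You are conflating the mechanism discussed in Remark~\ref{rem.hawking.mass} (Hawking-mass inflation near the Cauchy horizon, where $r$ stays bounded away from $0$) with the mechanism relevant here (an $r \to 0$ singularity with $m$ possibly bounded). Remark~\ref{rem.hawking.mass} is therefore the wrong reference. Second, and more importantly, your fallback phrase ``$r \to 0$ together with $\varpi$ bounded away from zero'' does not by itself give a Kretschmann lower bound. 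The actual criterion in \cite[Section~5.8]{Kommemi} is that in the trapped / marginally trapped region $\{1 - \frac{2m}{r} \leq 0\}$ (equivalently $\rd_u r\, \rd_v r \geq 0$), one has $R_{\mu\nu\alp\bt}R^{\mu\nu\alp\bt} \geq c\, r^{-4}$. To apply this you must first show that some neighborhood of $p$ in $\calM$ lies in that region; this is the missing step, and the paper supplies it by combining $r(p)=0$, $r \geq 0$ everywhere, and the monotonicity from the Raychaudhuri equations \eqref{eqn.Ray}. Without this step the curvature lower bound is not justified, so as written the argument has a genuine gap.
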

\begin{proof}
Assume for the sake of contradiction that $p$ and $\gamma$ are as in (1) in Lemma \ref{inextendibility.main.step}. 
By the Raychaudhuri equations, as well as the fact that $r(p) = 0$ while $r \geq 0$ everywhere, there exists a neighbourhood $\calU \subset \widetilde{\calM}$ of $p$ such that $1 - \frac{2m}{r} \leq 0$ (which is equivalent to $\rd_{u} r \rd_{v} r \geq 0$ in a double null coordinate) on $\calU \cap \calM$. By \cite[Section~5.8]{Kommemi}, the following lower bound of the Kretschmann scalar holds on $\calU \cap \calM$: $R_{\mu\nu\alpha\beta}R^{\mu\nu\alpha\beta}\geq \f{c}{r^4}$, where $R_{\mu\nu\alpha\beta}$ is the Riemann curvature tensor of $(\mathcal M,g)$ and $c>0$ is a constant (depending on $p$). This clearly contradicts the fact that $\gamma$ is a geodesic in $\widetilde{\mathcal M}$ passing through $p$, at which $r(p) = 0$.
\end{proof}
Finally, we rule out the possibility that $r(p)>0$ for $p$ as in Lemma \ref{inextendibility.main.step}:
\begin{lemma}\label{not.at.CH}
The alternative (2) in Lemma \ref{inextendibility.main.step} cannot hold.
\end{lemma}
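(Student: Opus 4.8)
The plan is to derive a contradiction from the assumption that alternative (2) of Lemma~\ref{inextendibility.main.step} holds. So suppose $p \in \rd\calM$ with $r(p) > 0$ and $\gamma : (-\eps, \eps) \to \widetilde{\calM}$ is a future-directed null geodesic with $\gamma((-\eps, 0)) \subset \calM$, $\gamma(0) = p$, and $\gamma\restriction_{(-\eps, 0)}$ radial. The first step is to locate $p$ on the boundary characterization of Theorem~\ref{thm:kommemi}: since $r(p) > 0$, the point $p$ cannot lie on $\calS$; since $\gamma$ enters $\widetilde\calM$ through $p$ (so $p$ is a genuine boundary point with $\calM$-points arbitrarily close in the past), and since the maximal development is already exhausted up to $\Sgm_0$, $\NI_1 \cup \NI_2$, and $i^0_1, i^0_2, i^+_1, i^+_2$ (through none of which can a causal geodesic exit and re-enter a larger spacetime — using completeness of $\NI_j$ from Remark~\ref{rmk.completeness} and standard arguments), the only remaining possibility is that the projection $\boldsymbol\pi(\gamma)$ exits through $\CH_1$ or $\CH_2$ minus the endpoint. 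Without loss of generality assume it exits through $\CH_1$, and (reparametrizing) that $\boldsymbol\pi(\gamma)$ is an \emph{ingoing} radial null curve, i.e., a curve of constant $u$ in the $(u, V)$ coordinates of Theorem~\ref{thm.nonpert}, reaching $\CH_1 = \set{V = 1}$ at $p$ with $u = u_0 \in (-\infty, u_{\CH_1})$.

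The second step is the curvature blow-up computation, which is the heart of the argument. Extend $\dot\gamma$ to a parallel frame along $\gamma$; since $\widetilde\calM$ is $C^2$, the connection is $C^1$, so parallel transport along $\gamma$ produces a $C^1$ frame $(e_0 = \dot\gamma, e_1, e_2, e_3)$ near $p$, with $e_2, e_3$ chosen (near $p$) to be unit and tangent to the $SO(3)$-orbits — this can be arranged since the orbits extend spacelike and $2$-dimensional to $\rd\calM$ by Lemma~\ref{span.O}. Then I compute $\Ric(\dot\gamma, \dot\gamma)$ using the Einstein equations \eqref{EMSFS}: since the electromagnetic stress tensor $T^{(em)}$ contracted with a null vector yields $(g^{-1})^{\alp\bt} F_{\mu\alp} F_{\nu\bt} \dot\gamma^\mu \dot\gamma^\nu$, which for a spherically symmetric $F \propto \ud u \wedge \ud V$ and a \emph{radial} null $\dot\gamma$ (proportional to $\rd_V$ along $\boldsymbol\pi(\gamma)$) vanishes — and the trace term $g_{\mu\nu}\dot\gamma^\mu\dot\gamma^\nu = 0$ — one gets $\Ric(\dot\gamma, \dot\gamma) = 2 T^{(sf)}(\dot\gamma, \dot\gamma) = 2 (\dot\gamma \phi)^2$. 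In the $(u, V)$ coordinates with $\gamma$ along constant $u$, $\dot\gamma = \frac{\ud V}{\ud s} \rd_V$, so $\Ric(\dot\gamma, \dot\gamma) = 2 \left(\frac{\ud V}{\ud s}\right)^2 (\rd_V \phi)^2$. The geodesic equation forces the affine parameter to relate to $V$ via $\frac{\ud V}{\ud s}$ satisfying an ODE whose coefficient involves the $C^1$ (hence bounded, bounded away from zero near $p$) metric in the extension; meanwhile $\Omg^2(u_0, V) \to \Omg^2(u_0, 1) \neq 0$ is the \emph{non-degenerate} conformal factor of the \emph{old} $(u,V)$ gauge, which extends continuously and positively to $\CH_1$ by Theorem~\ref{thm.nonpert}. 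A short computation shows $\left(\frac{\ud V}{\ud s}\right)^2 \sim \Omg^{-2}$ up to a bounded factor, so
\begin{equation*}
	\int \Ric(\dot\gamma, \dot\gamma)(\gamma(s)) \, \ud s \sim \int_{V_1}^{1} \frac{(\rd_V \phi)^2}{\Omg^2}(u_0, V) \, \ud V = \infty
\end{equation*}
by the blow-up \eqref{eq:nonpert-blowup-phi} of Theorem~\ref{thm.nonpert} (applicable since \eqref{final.blow.up.step.assumption} holds on $\EH_1$ by hypothesis of Theorem~\ref{main.theorem.C2}). But $\Ric(\dot\gamma, \dot\gamma)$ is a continuous function on $\widetilde\calM$, hence bounded on the compact image $\gamma([-\eps/2, 0])$, so its integral over a finite affine parameter interval is finite — a contradiction. (A cleaner packaging: a single parallel-propagated null frame component $\Ric(e_0, e_0)$ is a $C^0$ scalar on $\widetilde\calM$, so it must remain bounded along $\gamma$ as $s \to 0$; but the computation shows it is not even integrable.)

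The main obstacle is the careful bookkeeping in Step~2 relating the affine parameter $s$ to the coordinate $V$ and controlling $\frac{\ud V}{\ud s}$: one must argue that, in the $C^2$ extension, the null generator $\rd_V$ (of the \emph{non-degenerate} old gauge) has bounded, nonvanishing ``length'' as measured against the extended $\tilde g$, so that the reparametrization to affine parameter changes $(\rd_V \phi)^2 \ud V$ by at most a bounded multiplicative factor and an at-most-finite additive contribution near $p$. This uses that $\Omg^2(u_0, \cdot)$ extends continuously and positively to $V = 1$ (Theorem~\ref{thm.nonpert}), together with $C^1$-regularity of $\tilde g$ and the geodesic equation $\frac{\ud^2 V}{\ud s^2} + \Gamma^V_{VV} \left(\frac{\ud V}{\ud s}\right)^2 = 0$, where $\Gamma^V_{VV}$ is computed in a coordinate chart for $\widetilde\calM$ and is bounded near $p$; integrating this ODE shows $\frac{\ud V}{\ud s}$ stays comparable to a constant (or more precisely to $\Omg^{-2}$ depending on which chart one works in), so the divergence of $\int \frac{(\rd_V \phi)^2}{\Omg^2} \ud V$ transfers to divergence of $\int \Ric(\dot\gamma, \dot\gamma) \ud s$. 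Once this bounded-distortion fact is in hand, the contradiction with continuity of $\Ric$ in the $C^2$ extension is immediate, and the lemma — and with it Theorem~\ref{main.theorem.C2} — follows.
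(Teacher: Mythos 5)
Your approach is essentially the same as the paper's: reduce via Theorem~\ref{thm:kommemi} to the case where $\boldsymbol\pi(\gamma)$ exits through $\CH_1$ (or $\CH_2$) as an ingoing radial null curve, compute $\Ric(\dot\gamma,\dot\gamma)=2(\dot\gamma\phi)^2$ via \eqref{EMSFS} (noting that $T^{(em)}$ drops out on a radial null vector), and contradict the $W^{1,2}_{loc}$ blow up from Theorem~\ref{thm.nonpert}. Two remarks, one cosmetic and one substantive.

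First, your intermediate claim $(\frac{\ud V}{\ud s})^2 \sim \Omg^{-2}$ is off: the first integral $\Omg^2\dot V = c$ (equivalently, $\Omg^{-2}\rd_V$ is geodesic, as the paper observes directly) gives $(\frac{\ud V}{\ud s})^2 = c^2\Omg^{-4}$, and the reparametrization $\ud s = c^{-1}\Omg^2 \ud V$ then yields $\int\Ric(\dot\gamma,\dot\gamma)\,\ud s = 2c\int \Omg^{-2}(\rd_V\phi)^2\,\ud V$ exactly. Your final displayed integral is nevertheless correct, and since $\Omg^2$ extends continuously and positively to $\CH_1$ (Lemma~\ref{lem.nonpert-C0}) the slip is absorbed by a bounded factor — but you should fix the statement since it is stated as an ingredient of the argument. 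You also do not need the full parallely-transported frame $(e_0,\dots,e_3)$ or Lemma~\ref{span.O} here: $C^2$-regularity of $\tilde g$ already gives that $\dot\gamma$ is $C^1$ up to $p$ and that $\Ric$ is $C^0$, which is all that is used.

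Second, and this is the substantive gap: you assume $u_0\in(-\infty,u_{\CH_1})$, i.e.\ that $p$ is not the future endpoint $p_{\CH_1}$ of $\CH_1$. When $\calS=\emptyset$ the future endpoint is the bifurcate sphere $\CH_1\cap\CH_2$, where $r>0$, so this case is not excluded by the hypothesis $r(p)>0$ of alternative (2). Theorem~\ref{thm.nonpert} as stated only gives the blow up \eqref{eq:nonpert-blowup-phi} for $u<u_{\CH_1}$; to close the argument for $u_0=u_{\CH_1}$ one needs the bifurcate-sphere version described in Remark~\ref{rem.nonpert-bfsph-pf}, run in the $(U_{\CH_2},V_{\CH_1})$ coordinate system. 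Without that, there is a genuine case that your contradiction does not reach, and the lemma (and hence Theorem~\ref{main.theorem.C2}) is not fully established.
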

\begin{proof}
Assume for the sake of contradiction that $p$ and $\gamma$ are as in (2) in Lemma \ref{inextendibility.main.step}. One sees that $\gamma$ must exit $\mathcal M$ through\footnote{This means that the closure of the projection of $\gamma\restriction_{\mathcal M}$ in $\mathcal Q^+$ intersects those components of the boundary.} $\mathcal C\mathcal H^+_1$ or $\mathcal C\mathcal H^+_2$ or $\mathcal I^+_1$ or $\mathcal I^+_2$ or $i^+_1$ or $i^+_2$ (recall Theorem \ref{thm:kommemi}). 

\pfstep{Step 1: $\gamma$ cannot exit through $\mathcal I^+_1$ or $\mathcal I^+_2$} This simply follows from the fact that $r$ has a finite limit on $\rd \mathcal M$ (Lemma~\ref{r.limit}), together with the definition that $r\to \infty$ along radial null curves towards $\mathcal I^+_1$ or $\mathcal I^+_2$.

\pfstep{Step 2: $\gamma$ cannot exit through $i^+_1$ or $i^+_2$} If $\gamma$ exits through $i^+_1$ or $i^+_2$, then $\gamma\cap \mathcal M\subset \EH_1$ or $\gamma\cap \mathcal M\subset \EH_2$. However, $\EH_1$ and $\EH_2$ are both future affine complete null geodesics (cf. Remark~\ref{rmk.completeness}), hence it is impossible for $\gamma$ to leave $\mathcal M$.

\pfstep{Step 3: $\gamma$ cannot exit through $\mathcal C\mathcal H^+_1$ or $\mathcal C\mathcal H^+_2$} 
Suppose the contrary. Assume without loss of generality that $\gamma$ exits through $\mathcal C\mathcal H^+_1$. 

We use the same coordinate system $(u, V)$ as in Theorem~\ref{thm.nonpert}, in which ${\boldsymbol \pi}(\gamma_{(-\epsilon,0)})\in \mathcal Q$ is given by a constant $u$ curve - let us assume that it is a subset of $\{(u,V):u=u_r\}$ for some $u_r\in (-\infty, u_{\CH_{1}}]$. Since $(\widetilde{\mathcal M},\tilde{g})$ is $C^2$ and $\gamma$ is a geodesic, the component of the Ricci curvature $\lim_{s\to 0-} Ric(\dot\gamma(s),\dot\gamma(s))$ is bounded. Noting that $\f{1}{\Omg^2}\rd_V$ is geodesic, we have $\dot\gamma= \f{c}{\Omg^2}\rd_V$ for some constant $c>0$ and therefore by \eqref{EMSFS},
\begin{equation}\label{geod.bdd}
\f{1}{\Omg^4}(\rd_V\phi)^2(u_r,V)=\f 12 Ric \left( \f{1}{\Omg^2}\rd_V,\f{1}{\Omg^2}\rd_V \right)(u_r, V)\mbox{ is bounded as } V \to 1.
\end{equation}
When $u_{r} < u_{\CH_{1}}$, this statement in combination with the boundedness of $\log \Omg$ from Lemma~\ref{lem.nonpert-C0} immediately contradicts the blow up \eqref{eq:nonpert-blowup-phi} in Theorem~\ref{thm.nonpert}. When $u_{r} = u_{\CH_{1}}$, which is only possible when $\calS = \emptyset$ and $p$ is the bifurcation sphere with $r(p) > 0$, the same argument works by Remark~\ref{rem.nonpert-bfsph-pf}. \qedhere
\end{proof}

Putting together the above lemmas, we conclude the proof of Theorem \ref{main.theorem.C2}:
\begin{proof}[Proof of Theorem \ref{main.theorem.C2}]
Lemmas \ref{inextendibility.main.step}, \ref{not.at.S} and \ref{not.at.CH} obviously lead to a contradiction. Therefore, no $C^2$-future-extensions as in Definition \ref{def.FE} can exist.
\end{proof}

\appendix

\section{Subextremality of the event horizons}\label{sec.subext.pf}
The goal of this section of the appendix is to prove Proposition~\ref{prop.subextremality}. In what follows, we denote by $(\calM, g, \phi, F)$ a maximal globally hyperbolic future development of an $\omg_{0}$-admissible initial data (where $\omg_{0} > 2$). First, we will need two lemmas, which will also be useful in later parts of the paper. 

\begin{lemma}[The exterior regions are free of trapped (or anti-trapped) surfaces]\footnote{A $2$-sphere given by constant $(u,v)$ is called \emph{trapped} (resp. \emph{anti-trapped}) if $\rd_v r$ and $\rd_u r$ are both negative (resp. positive).}\label{no.trapped}
Let $(u, v)$ be a double null coordinate system on $\calQ = \calM / SO(3)$ normalized according to Lemma~\ref{lem:cauchy-to-char}. If $u\leq u_{\EH_1}$, then
\begin{equation}\label{no.trapped.in.1}
\rd_v r(u,v)\geq 0,\quad \rd_u r(u,v) < 0.
\end{equation}
If $v\leq v_{\EH_2}$, then
\begin{equation}\label{no.trapped.in.2}
\rd_u r(u,v)\geq 0,\quad \rd_v r(u,v) < 0.
\end{equation}
\end{lemma}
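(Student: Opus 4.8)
The statement is a ``no trapped (nor anti-trapped) surface in the exterior region'' result. I will focus on establishing \eqref{no.trapped.in.1}, since \eqref{no.trapped.in.2} follows by interchanging the roles of $u$ and $v$. The starting point is the future admissibility condition of Definition~\ref{def:adm-data}, which, via Lemma~\ref{lem:cauchy-to-char} and Remark~\ref{rmk:adm}, guarantees that on the initial hypersurface $\Sgm_{0}$ we have $\rd_{u} r(u, \underline{v}(u)) < 0$ for $u \geq \rho_{1}$ (after a relabeling of the parametrization); combined with the decay/asymptotic-flatness conditions one sees more: away from the trapped region near the ends, $\rd_{u} r$ and $\rd_{v} r$ control the causal structure. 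The key monotonicity inputs are the Raychaudhuri equations \eqref{eqn.Ray}, which say that $\tfrac{\rd_{v} r}{\Omg^{2}}$ is non-increasing in $v$ and $\tfrac{\rd_{u} r}{\Omg^{2}}$ is non-increasing in $u$, together with the evolution equation for $\rd_{u}\rd_{v} r$ in \eqref{WW.SS}, which gives the crucial sign information at the horizon-type boundaries.

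First I would prove $\rd_{u} r(u, v) < 0$ for $u \leq u_{\EH_{1}}$. The strategy is to fix $u < u_{\EH_{1}}$: by definition of $u_{\EH_{1}}$, the outgoing cone $C_{u}$ reaches arbitrarily large $r$, so $\sup_{v} r(u, v) = \infty$. Now suppose for contradiction that $\rd_{u} r(u_{0}, v_{0}) \geq 0$ at some point with $u_{0} < u_{\EH_{1}}$. Using the Raychaudhuri equation \eqref{eqn.Ray} for $\rd_{u}(\rd_{u} r / \Omg^{2})$ (non-increasing in $u$), one propagates the sign: $\rd_{u} r / \Omg^{2} \geq 0$ at $(u_{0}, v_{0})$ forces $\rd_{u} r / \Omg^{2} \geq$ (something positive, or at least $\geq 0$ with quantitative lower bound coming from the $r(\rd_u\phi)^2/\Omg^2$ source) for $u' < u_{0}$ along the incoming cone through $(u_{0}, v_{0})$. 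Tracing this back to $\Sgm_{0}$ (which we can do since $\calQ$ is globally hyperbolic and the incoming cone hits $\Sgm_{0}$) contradicts the future admissibility condition $\rd_{u} r|_{\Sgm_{0}} < 0$ in the relevant range — here one needs that the point on $\Sgm_0$ that one reaches lies in the region where admissibility gives $\rd_u r < 0$, which is exactly the part of $\Sgm_0$ in the causal past of $\NI_1$; this is ensured because $u_0 < u_{\EH_1}$. (Actually a cleaner route: from $\sup_{v} r(u_{0}, \cdot) = \infty$ and $\rd_{v}(\rd_{v} r / \Omega^2) \le 0$ one already controls $\rd_v r$; the anti-trapped exclusion is the genuinely new input and is where admissibility enters.)

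Next, given $\rd_{u} r < 0$ throughout $\{u \leq u_{\EH_{1}}\}$, I would establish $\rd_{v} r \geq 0$. Suppose $\rd_{v} r(u_{0}, v_{0}) < 0$ for some $u_{0} \leq u_{\EH_{1}}$, i.e.\ the sphere at $(u_{0}, v_{0})$ is trapped (since we already know $\rd_{u} r < 0$). By Raychaudhuri \eqref{eqn.Ray}, $\rd_{v} r / \Omg^{2}$ is non-increasing in $v$, so $\rd_{v} r < 0$ persists for all $v \geq v_{0}$ on $C_{u_{0}}$; moreover, using the $\rd_{u}\rd_{v} r$ equation from \eqref{WW.SS} written in terms of $\varpi$ (namely $\rd_{u}\rd_{v} r = \tfrac{2(\varpi - \e^{2}/r)}{r^{2}} \tfrac{\rd_{u} r \rd_{v} r}{1 - \mu}$, cf.\ \eqref{eq:EMSF-r-phi-m}), one propagates trappedness to the future: once a sphere is trapped, all spheres in its causal future remain trapped, and in fact $r$ is bounded above on the domain of dependence. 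But then $r$ stays bounded on all of $C_{u_{0}}$ to the future of $v_{0}$, and also — using $\rd_{u} r < 0$ — on $C_{u'}$ for $u_{0} < u' < u_{\EH_{1}}$, contradicting $\sup_{v} r(u', \cdot) = \infty$ for such $u'$. The main obstacle I anticipate is the careful bookkeeping of \emph{which} boundary point of $\Sgm_0$ each backward characteristic reaches and verifying that it indeed falls in the admissible range — i.e.\ matching the global causal picture of Theorem~\ref{thm:kommemi} with the pointwise admissibility inequality \eqref{eq:adm-id-adm}; this is where one must be most careful, and it is essentially the argument already sketched in \cite{D3} adapting Christodoulou's ``no anti-trapped surfaces'' propagation to the two-ended case.
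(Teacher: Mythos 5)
Your proposal relies on the same ingredients as the paper (Raychaudhuri monotonicity, the future admissibility condition \eqref{eq:adm-id-adm}, and the defining property $\sup_v r(u,\cdot) = \infty$ for $u < u_{\EH_1}$), and your parenthetical ``cleaner route'' for $\rd_v r \geq 0$ is indeed what the paper does. However, your argument for $\rd_u r < 0$ has a genuine gap. Tracing back along $\set{v = v_0}$ to $(\ub(v_0), v_0) \in \Sgm_0$, you assert that this point lies where admissibility gives $\rd_u r \restriction_{\Sgm_0} < 0$, ``ensured because $u_0 < u_{\EH_1}$.'' This does not hold: the $\rho$-parameter of $(\ub(v_0), v_0)$ is determined by $v_0$ alone (since $v = \rho$ up to a constant on $\Sgm_0$), and $u_0 < u_{\EH_1}$ gives no control over whether that parameter exceeds $\rho_1$. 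There is no a priori relation between $\rho_1$ (an arbitrary threshold in \eqref{eq:adm-id-adm}) and the $\rho$-value at which $\EH_1$ meets $\Sgm_0$; pinning one down is essentially the content of the lemma, so the justification is circular. The step as written invokes only one half of \eqref{eq:adm-id-adm}, which is insufficient.

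The fix is to use both halves of \eqref{eq:adm-id-adm} together, as the paper does through the global dichotomy: for every $(u,v) \in \calQ$, either $\rd_u r(u,v) < 0$ or $\rd_v r(u,v) < 0$. If both were $\geq 0$, tracing back via Raychaudhuri to $\Sgm_0$ would give $\rd_u r \geq 0$ at the point with $\rho$-parameter $\rho''$ (reached along constant $v$) and $\rd_v r \geq 0$ at the point with $\rho$-parameter $\rho'$ (reached along constant $u$), where $\rho' \leq \rho''$ since $(u,v)$ lies in the future of $\Sgm_0$. Admissibility then forces $\rho'' < \rho_1$ and $\rho' > \rho_2$, contradicting $\rho' \leq \rho''$ together with $\rho_1 < \rho_2$. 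Combining the dichotomy with $\rd_v r \geq 0$ on $\set{u \leq u_{\EH_1}}$ yields $\rd_u r < 0$ there. Note finally that your main-line ordering — $\rd_u r < 0$ first, then $\rd_v r \geq 0$ — cannot be made to work: closing the gap above requires $\rd_v r \geq 0$ (or equivalently the constraint from $(u,v)$ being in the future of $\Sgm_0$) already in hand, and your second step is also overcomplicated, since Raychaudhuri along a single outgoing cone already gives $r$ bounded without any appeal to the $\rd_u \rd_v r$ equation.
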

\begin{proof}
It clearly suffices to check \eqref{no.trapped.in.1} as \eqref{no.trapped.in.2} is similar. Suppose $u\leq u_{\EH_1}$. We first establish $\rd_v r(u, v)\geq 0$. Assume this is not the case, i.e., there exists $(u_c,v_c) \in \calQ$ with $u_c\leq u_{\EH}$ such that $\rd_v r(u_c, v_c)<0$. Then by continuity, there exists $(u_{c}', v_{c}') \in \calQ$ such that $u_c'<u_c\leq u_{\mathcal H^+}$ and $\rd_v r(u'_c, v'_c)<0$. By the Raychaudhuri equation \eqref{eqn.Ray}, we then have $\rd_v r(u'_c, v)<0$ for every $v\geq v'_c$, which contradicts the fact that $\sup r(u'_c, \cdot) = \infty$ (see Definition~\ref{def.EH}). Hence, we have established
\begin{equation}\label{no.trapped.dvr}
\rd_v r (u,v) \geq 0\quad\mbox{for }u\leq u_{\EH}.
\end{equation}

Next, by the admissibility condition \eqref{eq:adm-id-adm} (which by Lemma~\ref{lem:cauchy-to-char} implies $\rd_u r\restriction_{\Sigma_0}<0$ for $\rho \geq \rho_1$ and $\rd_v r\restriction_{\Sigma_0}<0$ for $\rho \leq \rho_2$) and the Raychaudhuri equation \eqref{eqn.Ray}, for every $(u,v)\in \mathcal Q$, either $\rd_u r(u,v)<0$ or $\rd_v r (u,v)<0$. Hence, \eqref{no.trapped.in.1} follows from this observation and \eqref{no.trapped.dvr}.
\end{proof}

\begin{lemma}[Limiting values of $r$ and $\varpi$ on $\EH_1$ and $\EH_2$]
Let $(u, v)$ be a double null coordinate system on $\calQ$ normalized according to Lemma~\ref{lem:cauchy-to-char}.
Define
$$r_{\EH_1}=\sup_{\EH_1} r,\quad r_{\EH_2}=\sup_{\EH_2} r,\quad \varpi_{\EH_1}=\sup_{\EH_1} \varpi,\quad \varpi_{\EH_2}=\sup_{\EH_2} \varpi.$$
Then 
\begin{equation}\label{r.varpi.limits}
r_{\EH_1}=\lim_{v\to\infty} r(u_{\EH_1}, v),\quad r_{\EH_2}=\lim_{u\to \infty} r(u,v_{\EH_2}),\quad \varpi_{\EH_1}=\lim_{v\to\infty} \varpi(u_{\EH_1}, v),\quad \varpi_{\EH_2}=\lim_{u\to \infty} \varpi(u,v_{\EH_2}).
\end{equation}
Moreover,
\begin{equation}\label{r.varpi.poly}
r_{\EH_1}=\varpi_{\EH_1} +\sqrt{\varpi_{\EH_1}^2-\e^2},\quad r_{\EH_2}=\varpi_{\EH_2} +\sqrt{\varpi_{\EH_2}^2-\e^2}.
\end{equation}
\end{lemma}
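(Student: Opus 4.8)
The plan is to prove all the assertions for $\EH_{1}$; the statements for $\EH_{2}$ follow by interchanging the roles of $u$ and $v$. Fix a double null coordinate system normalized as in Lemma~\ref{lem:cauchy-to-char}, write $\lambda = \rd_{v} r$, $\nu = \rd_{u} r$, and recall $1 - \mu = 1 - \f{2\varpi}{r} + \f{\e^{2}}{r^{2}}$. First I would record monotonicity and boundedness of $r$ and $\varpi$ along $\EH_{1}$. By Lemma~\ref{no.trapped} applied at $u = u_{\EH_{1}}$ (using continuity of $\rd_{v} r$ and $\rd_{u} r$), we have $\lambda \geq 0$ and $\nu < 0$ on $\EH_{1}$; in particular $m = \f{r}{2}\left(1 + \f{4 \lambda \nu}{\Omg^{2}}\right) \leq \f{r}{2}$, i.e.\ $1 - \mu \geq 0$, on $\EH_{1}$. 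Since $\lambda \geq 0$, the function $r$ is non-decreasing in $v$ along $\EH_{1}$; as $r$ does not diverge along $\EH_{1}$ by \cite{DafTrapped} (cf.\ the discussion in Theorem~\ref{thm:kommemi}), this gives $r_{\EH_{1}} = \lim_{v \to \infty} r(u_{\EH_{1}}, v) < \infty$, and $r$ is bounded below on $\EH_{1}$ by its (positive) value at the past endpoint on $\Sgm_{0}$. From \eqref{eq:EMSF-r-phi-m}, rewriting $\f{1 - \mu}{\lambda} = -\f{4\nu}{\Omg^{2}} > 0$, one gets $\rd_{v} \varpi = \f{2(-\nu)}{\Omg^{2}} r^{2} (\rd_{v} \phi)^{2} \geq 0$, so $\varpi$ is non-decreasing along $\EH_{1}$, while $\varpi = m + \f{\e^{2}}{2r}$ is bounded by the above. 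Hence $\varpi_{\EH_{1}} = \lim_{v \to \infty} \varpi(u_{\EH_{1}}, v) < \infty$, which proves \eqref{r.varpi.limits}.

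The crux is to show that $1 - \mu \to 0$ along $\EH_{1}$ as $v \to \infty$. A direct computation from \eqref{eq:EMSF-r-phi-m} (using $m = \varpi - \f{\e^{2}}{2r}$) gives the identity
\begin{equation*}
	\rd_{v}(1 - \mu) = - \f{2}{r} \rd_{v} \varpi + \f{2 \lambda}{r^{2}} \left( \varpi - \f{\e^{2}}{r} \right) \qquad \text{along } \EH_{1}.
\end{equation*}
On $\EH_{1}$ we have $\rd_{v} \varpi \geq 0$ and $\lambda \geq 0$, and since $r$ and $\varpi$ are monotone and bounded, $\int_{\EH_{1}} \rd_{v} \varpi \, \ud v = \varpi_{\EH_{1}} - \varpi(u_{\EH_{1}}, \underline{v}(u_{\EH_{1}})) < \infty$ and $\int_{\EH_{1}} \lambda \, \ud v = r_{\EH_{1}} - r(u_{\EH_{1}}, \underline{v}(u_{\EH_{1}})) < \infty$, while $\varpi - \f{\e^{2}}{r}$ is bounded and $r$ is bounded below. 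Therefore $\rd_{v}(1 - \mu)$ is integrable in $v$ along $\EH_{1}$, so $1 - \mu$ has a limit as $v \to \infty$. Since $1 - \mu$ (as a function on $\calQ$), $r_{\EH_{1}}$, $\varpi_{\EH_{1}}$, and the conclusion \eqref{r.varpi.poly} are all independent of the $v$-parametrization, I would now fix the gauge $\f{\rd_{v} r}{1 - \mu} = 1$ on $\EH_{1}$, in which $\lambda = 1 - \mu$ and $v$ ranges over an infinite interval by completeness of $\EH_{1}$ (\cite{DafTrapped}); then $1 - \mu = \lambda$ is non-negative and integrable over $[\underline{v}(u_{\EH_{1}}), \infty)$, forcing its limit to be $0$. (If instead $1 - \mu = 0$ at some point of $\EH_{1}$, then by \eqref{eqn.Ray} $1 - \mu \equiv 0$ past that point, and the limit is trivially $0$.)

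Passing to the limit $v \to \infty$ in $1 - \mu = 1 - \f{2\varpi}{r} + \f{\e^{2}}{r^{2}}$ and using $1-\mu \to 0$ gives $r_{\EH_{1}}^{2} - 2 \varpi_{\EH_{1}} r_{\EH_{1}} + \e^{2} = 0$; moreover $2 m = r \mu = r(1 - (1 - \mu)) \to r_{\EH_{1}}$, so $m_{\EH_{1}} := \lim_{v\to\infty} m = \f{r_{\EH_{1}}}{2}$ and hence $2 \varpi_{\EH_{1}} = r_{\EH_{1}} + \f{\e^{2}}{r_{\EH_{1}}}$. Thus the two roots of $x^{2} - 2 \varpi_{\EH_{1}} x + \e^{2}$ are exactly $r_{\EH_{1}}$ and $\f{\e^{2}}{r_{\EH_{1}}}$, and \eqref{r.varpi.poly} is equivalent to the assertion $r_{\EH_{1}} \geq |\e|$, i.e.\ that $r_{\EH_{1}}$ is the larger root $\varpi_{\EH_{1}} + \sqrt{\varpi_{\EH_{1}}^{2} - \e^{2}}$. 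Since $1 - \mu \to 0$ with $\varpi_{\EH_{1}}$ finite, necessarily $\varpi_{\EH_{1}} \geq |\e|$; if $\varpi_{\EH_{1}} = |\e|$ then $r_{\EH_{1}} = |\e|$ and \eqref{r.varpi.poly} is immediate. When $\varpi_{\EH_{1}} > |\e|$, I would argue via the interior: for $u$ slightly larger than $u_{\EH_{1}}$ and $v$ sufficiently large, the sphere $(u, v)$ lies in the black hole interior and is trapped (the inequality $\sup_{v} r(u, \cdot) < \infty$ together with \eqref{eqn.Ray} forces $\lambda < 0$ for large $v$, while $\nu < 0$ persists off $\EH_{1}$ by continuity), so $1 - \mu < 0$ there, i.e.\ $r_{-}(\varpi) < r < r_{+}(\varpi)$ with $r_{\pm}(\varpi) = \varpi \pm \sqrt{\varpi^{2} - \e^{2}}$; since $\nu < 0$ there forces $r(u, v) < r(u_{\EH_{1}}, v) \leq r_{\EH_{1}}$ while $r_{-}(\cdot)$ is strictly decreasing and $\varpi(u, v) \leq \varpi_{\EH_{1}}$, letting $v \to \infty$ and then $u \downarrow u_{\EH_{1}}$ excludes $r_{\EH_{1}} = r_{-}(\varpi_{\EH_{1}})$, leaving $r_{\EH_{1}} = r_{+}(\varpi_{\EH_{1}})$.

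I expect this final root-identification step to be the main obstacle: one must carefully justify that the interior region adjacent to $\EH_{1}$ near $i^{+}_{1}$ is trapped and that $r$ and $\varpi$ are sufficiently continuous up to $\EH_{1}$ for the limiting argument to close. (An alternative would be to use instead that $1 - \mu \geq 0$ throughout the exterior $\Ext_{1}$ together with $r \to \infty$ towards $\NI_{1}$ to deduce $r \geq r_{+}(\varpi)$ along $\EH_{1}$ in the limit; this carries comparable technical overhead.) All the remaining steps are a routine combination of the Raychaudhuri equations \eqref{eqn.Ray}, the Hawking-mass equations \eqref{eq:EMSF-r-phi-m}, and the monotonicity supplied by Lemma~\ref{no.trapped}.
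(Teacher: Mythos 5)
Your Step~1 and the proof of \eqref{r.varpi.limits} coincide with the paper's. The crux of the lemma --- showing that $1-\mu \to 0$ along $\EH_{1}$, your Step~2 --- contains a genuine gap. You correctly observe that $\rd_{v}(1-\mu)$ is integrable along $\EH_{1}$, so $1-\mu$ has a limit $L \geq 0$. But the next move, switching to the gauge $\rd_{v} r/(1-\mu)=1$ and invoking ``completeness of $\EH_{1}$'' to infer that the new $v$-coordinate has infinite range and hence that the nonnegative integrable quantity $\rd_{v} r = 1-\mu$ must tend to zero, does not close. First, affine completeness of $\EH_{1}$ is not available here: as noted in Remark~\ref{rmk.completeness}, it is deduced from Theorem~\ref{thm:DR-decay}, which requires Proposition~\ref{prop.subextremality}, which in turn uses the present lemma --- so invoking it is circular. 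Second, and more decisively, even granting completeness as a null geodesic, it does not give what you need: precisely in the hypothetical contradiction scenario $L>0$, the reparametrization $dv'/dv = \rd_{v} r/(1-\mu)$ satisfies $\int \rd_{v} r/(1-\mu)\, dv \leq (2/L)\int \rd_{v} r\, dv = (2/L)(r_{\EH_{1}} - r(u_{\EH_{1}}, \underline{v}(u_{\EH_{1}}))) < \infty$ (for $v$ large), so the new coordinate has \emph{finite} range and the integrability of $1-\mu$ over it yields no contradiction. The argument is thus self-defeating exactly where it would need to bite. The paper avoids this by making a genuinely two-dimensional argument: supposing $(1-\mu)(u_{\EH_{1}}, v) \geq c > 0$ for $v$ large, it runs a bootstrap in a characteristic rectangle $[u_{0}, u_{\EH_{1}}] \times [v_{0}, v_{1}]$ for a fixed $u_{0}$ slightly less than $u_{\EH_{1}}$, using the first and last equations in \eqref{eq:EMSF-r-phi-m} to control $\rd_{v}\log(-\rd_{u} r)$ and $\rd_{u}\varpi$ when $1-\mu$ is bounded below, and concludes that $r(u_{0}, \cdot)$ stays within $2\veps$ of $r(u_{\EH_{1}}, \cdot) \leq r_{\EH_{1}} < \infty$, contradicting $\sup_{v} r(u_{0}, v) = \infty$ from the definition of $u_{\EH_{1}}$. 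A one-dimensional argument along $\EH_{1}$ alone cannot recover this.

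Your primary route for the root identification (Step~3, via the trapped interior) also has a sign error: in the trapped region just inside $\EH_{1}$ one has $1-\mu < 0$ and $\rd_{u} r < 0$, so from \eqref{eq:EMSF-r-phi-m} $\rd_{u}\varpi = \tfrac{1}{2}\tfrac{1-\mu}{\rd_{u} r}\, r^{2}(\rd_{u}\phi)^{2} \geq 0$, i.e.\ $\varpi$ \emph{increases} going into the interior, contradicting your claim $\varpi(u,v) \leq \varpi_{\EH_{1}}$; with the monotonicity corrected, your inequality chain no longer excludes the small root. Your parenthetical alternative --- using $1-\mu \geq 0$ in the exterior together with $r\to\infty$ toward $\NI_{1}$ --- is essentially what the paper does: if $r_{\EH_{1}}$ were the small root then, by continuity and the exterior monotonicities, $r \leq \varpi - \sqrt{\varpi^{2}-\e^{2}} \leq \varpi \leq \sup_{\Sgm_{0} \cap \set{u\leq u_{\EH_{1}}}}\varpi < \infty$ on a characteristic rectangle $\set{u\leq u_{\EH_{1}},\, v\geq v_{far}}$, contradicting Definition~\ref{def.EH}.
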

\begin{proof}
In the proof, we only consider the case of $\EH_{1}$; the other case can be handled similarly.
We recall the (nontrivial!) fact that $r_{\EH_1}<\infty$. This follows from the Penrose inequality in the spherically symmetric setting, see \cite[Lemma 3]{DafTrapped}.

\pfstep{Step~1: Proof of \eqref{r.varpi.limits}} To prove \eqref{r.varpi.limits} for $r$, note that by Lemma~\ref{no.trapped}, $r$ is non-decreasing (towards the future) along the event horizons. To prove \eqref{r.varpi.limits} for $\varpi$, note that by Lemma~\ref{no.trapped} and the equations for $\varpi$ in \eqref{eq:EMSF-r-phi-m}, $\varpi$ is non-decreasing (towards the future) along the event horizons.

\pfstep{Step~2: Proof of $r_{\EH_{1}} = \varpi_{\EH_{1}} \pm \sqrt{\varpi_{\EH_{1}}^{2} - \e^{2}}$}
Recall that $1-\mu = - 4 \Omg^{-2} \rd_{u} r \rd_{v} r$. Thus by Lemma~\ref{no.trapped}, we have $1-\mu \geq 0$ in the exterior region $\set{u \leq u_{\EH_{1}}}$. On $\EH_{1}$, we claim that
\begin{equation} \label{eq:1-mu-vanish}
	\liminf_{v \to \infty} (1 - \mu) (u_{\EH_{1}}, v) = \frac{r_{\EH_{1}}^{2} - 2 \varpi_{\EH_{1}} r_{\EH_{1}} + \e^{2}}{r_{\EH_{1}}^{2}} = 0.
\end{equation}
Since $0 < r \restriction_{\Sgm_{0} \cap \EH_{1}} \leq r_{\EH_{1}} < \infty$, the desired formula for $r_{\EH_{1}}$ would then follow by the quadratic formula.

For the purpose of contradiction, suppose that \eqref{eq:1-mu-vanish} is false. Hence there exists $c > 0$ and $v_{0}$ such that 
\begin{equation} \label{eq:1-mu-pos}
	(1-\mu) (u_{\EH_{1}}, v) \geq c \quad \hbox{ for } v \geq v_{0}.
\end{equation}
The idea is to show that \eqref{eq:1-mu-pos}, when combined with the fact that $r_{\EH_{1}} < \infty$, implies that $r$ is also bounded on nearby outgoing null curves, which contradicts Definition~\ref{def.EH}. 

To make this idea precise, we use a bootstrap argument. Let $\veps > 0$ and $A \geq 2$ be small and large parameters (respectively) to be fixed later. Replacing $v_{0}$ by a larger number if necessary, we may ensure that $r_{\EH_{1}} - r(u_{\EH_{1}}, v_{0})< \veps$ and $\varpi_{\EH_{1}} - \varpi(u_{\EH_{1}}, v_{0}) < \veps$. Furthermore, we choose $u_{0} < u_{\EH_{1}}$ so that $r(u_{0}, v_{0}) - r(u_{\EH_{1}}, v_{0}) < \veps$ and $\varpi(u_{0}, v_{0}) - \varpi(u_{\EH_{1}}, v_{0}) < \veps$. Given $v_{1} > v_{0}$, we make the bootstrap assumption
\begin{equation} \label{eq:1-mu-btstrp}
	r(u_{0}, v) - r(u_{\EH_{1}}, v) < 4 \veps, \quad
	\varpi(u_{0}, v) - \varpi(u_{\EH_{1}}, v) < 2 A \veps \quad 
	\hbox{ for } v_{0} \leq v \leq v_{1},
\end{equation}
which initially holds for some $v_{1} > v_{0}$ by continuity.
For a large enough $A$ and a sufficiently small $\veps$, we claim that \eqref{eq:1-mu-btstrp} can be improved to
\begin{equation} \label{eq:1-mu-goal}
	r(u_{0}, v) - r(u_{\EH_{1}}, v) < 2 \veps, \quad
	\varpi(u_{0}, v) - \varpi(u_{\EH_{1}}, v) < A \veps \quad 
	\hbox{ for } v_{0} \leq v \leq v_{1}.
\end{equation}
From the claim, it would follow (via a simple continuity argument) that $r(u_{0}, v) - r(u_{\EH_{1}}, v) < 2\veps$ for every $v \geq v_{1}$, which is impossible due to Definition~\ref{def.EH} and the fact that $r_{\EH_{1}} < \infty$.

In what follows, we denote by $C$ any constant that may depend on $c$, $\phi$ or geometric quantities of the spacetime (e.g., $\inf_{\calQ \cap \set{u \leq u_{\EH_{1}}}} r$, $\sup_{\calQ \cap \set{u \leq u_{\EH_{1}}}} r$, $\sup_{\calQ \cap \set{u \leq u_{\EH_{1}}}} \abs{\varpi}$, $\e$ etc.), but \underline{independent} of $\veps$ and $A$. By \eqref{eq:1-mu-pos} and the bootstrap assumption \eqref{eq:1-mu-btstrp}, we have
\begin{equation*}
	(1-\mu)(u, v) \geq \frac{c}{2} \quad \hbox{ for } (u, v) \in [u_{0}, u_{\EH_{1}}] \times [v_{0}, v_{1}]
\end{equation*}
for a small enough $\veps$ (depending only on $c$ and $A$). By the first equation in \eqref{eq:EMSF-r-phi-m}, it follows that
\begin{equation*}
	\abs{\rd_{v} \log(-\rd_{u} r)} = \left\vert \frac{2 (\varpi - \frac{\e^{2}}{r})}{r^{2} (1-\mu)} \rd_{v} r\right\vert
	\leq C \rd_{v} r,
\end{equation*}
in $[u_{0}, u_{\EH_{1}}] \times [v_{0}, v_{1}]$. Integrating in $v$, we see that $e^{- C \veps} \leq \frac{- \rd_{u} r(u, v)}{- \rd_{u} r(u, v_{0})} \leq e^{C \veps}$. Integrating in $u$ and recalling our choice of $(u_{0}, v_{0})$, we arrive at
\begin{equation*}
	r(u_{0}, v) - r(u_{\EH_{1}}, v) \leq \veps e^{C \veps} \quad \hbox{ for } v \in [v_{0}, v_{1}].
\end{equation*}
Hence, for sufficiently small $\veps$, the first inequality of \eqref{eq:1-mu-goal} follows.

To improve the estimate for $\varpi$ (i.e., the second inequality of \eqref{eq:1-mu-goal}), we need to use the other equations in \eqref{eq:EMSF-r-phi-m}. Let $(u, v) \in [u_{0}, u_{\EH_{1}}] \times [v_{0}, v_{1}]$. By the first and the second equations, as well as the preceding estimates, we have
\begin{equation*}
	\left\vert \rd_{v} \left( r \frac{\rd_{u} \phi}{- \rd_{u} r} \right) \right\vert
	= \left\vert - \rd_{v} \log (- \rd_{u} r ) r \frac{\rd_{u} \phi}{- \rd_{u} r} + \rd_{v} \phi \right\vert
	\leq C \rd_{v} r \left\vert r \frac{\rd_{u} \phi}{- \rd_{u} r} \right\vert + \abs{\rd_{v} \phi}.
\end{equation*}
By the third equation in \eqref{eq:EMSF-r-phi-m} and Cauchy--Schwarz, we have
 \begin{align*}
\int_{v_{0}}^{v} \abs{\rd_{v} \phi} (u, v')\, \ud v'
\leq & \left( \int_{v_{0}}^{v} \frac{1}{2} r^{2} (1-\mu) \frac{(\rd_{v} \phi)^{2}}{\rd_{v} r} (u, v')\, \ud v' \right)^{1/2}
\left(\int_{v_{0}}^{v} \frac{2}{r^{2} (1-\mu)} \rd_{v} r (u, v')\, \ud v' \right)^{1/2} \\
\leq & C (\varpi (u, v) - \varpi(u, v_{0}))^{1/2} (r(u, v) - r(u, v_{0}))^{1/2} \leq C \veps.
\end{align*}
Hence, by Gr\"onwall's inequality, it follows that
\begin{equation*}
	\left\vert r \frac{\rd_{u} \phi}{-\rd_{u} r}\right\vert	(u, v) \leq e^{C \veps} \left( \left\vert r \frac{\rd_{u} \phi}{-\rd_{u} r}\right\vert (u, v_{0}) + C \veps \right) \leq C e^{C \veps} \quad \hbox{ for } (u, v) \in [u_{0}, u_{\EH_{1}}] \times [v_{0}, v_{1}],
\end{equation*}
where the last inequality holds since $[u_{0}, u_{\EH_{1}}]\times \{v_{0}\}$ is a fixed compact set.
Finally, using the fourth equation in \eqref{eq:EMSF-r-phi-m}, as well as the preceding estimates, we have
\begin{equation*}
	\abs{\rd_{u} \varpi} = \left\vert \frac{1}{2} (1-\mu) r^{2} \frac{(\rd_{u} \phi)^{2}}{(-\rd_{u} r)^{2}} \rd_{u} r  \right\vert \leq C (-\rd_{u} r)
\end{equation*}
in $[u_{0}, u_{\EH_{1}}] \times [v_{0}, v_{1}]$. Integrating in $u$ and using the bootstrap assumption, we obtain
\begin{equation*}
	\varpi (u_{0}, v) - \varpi (u_{\EH_{1}}, v) \leq C \veps e^{C \veps}.
\end{equation*}
Taking $A$ sufficiently large (depending on $C$) then $\veps$ small enough (depending on $A$ and $C$), we obtain the desired improvement for $\varpi$, which completes the proof of \eqref{eq:1-mu-goal}.

\pfstep{Step~3: Proof of \eqref{r.varpi.poly}}
Since $r_{\EH_{1}} > 0$, we necessarily have $\varpi_{\EH_{1}} > 0$ by the previous step. When $\varpi_{\EH_{1}} = \abs{\e}$, which corresponds to the extremal case, there is nothing to prove. Therefore, we may assume that $\abs{\e} < \varpi_{\EH_{1}}$ and moreover focus on ruling out the scenario $r_{\EH_{1}} = \varpi_{\EH_{1}} - \sqrt{\varpi_{\EH_{1}}^{2} - \e^{2}}$.

By the monotonicity properties of $\varpi$ in the exterior region $u \leq u_{\EH_{1}}$ (using Lemma~\ref{no.trapped}), we can find $v_{far}$ such that $\varpi(u, v) > \abs{\e}$ in the characteristic rectangle $\calR = \set{u \leq u_{\EH_{1}}, \ v \geq v_{far}}$. Recall from Step~2 that 
\begin{equation*}
1-\mu = \frac{r^{2} - 2 \varpi r + \e^{2}}{r^{2}} \geq 0 \hbox{ where } u \leq u_{\EH_{1}}.
\end{equation*}
Since $r_{\EH_{1}} = \varpi_{\EH_{1}} - \sqrt{\varpi_{\EH_{1}}^{2} - \e^{2}}$, it follows (by a continuity argument) that $r(u, v) \leq \varpi(u, v) - \sqrt{\varpi^{2}(u, v) - \e^{2}}$ in $\calR$. In particular, $r \leq \sup_{\Sgm_{0} \cap \set{u \leq u_{\EH_{1}}}} \varpi < \infty$ in $\calR$, which is impossible in view of Definition~\ref{def.EH}. \qedhere

\end{proof}

We are now ready to prove Proposition~\ref{prop.subextremality}:
\begin{proof}[Proof of Proposition~\ref{prop.subextremality}]
Define double null coordinates according to Lemma~\ref{lem:cauchy-to-char}. Define $\underline{u}(v)$ and $\underline{v}(u)$ such that $(u,\underline{v}(u)),\,(\underline{u}(v),v)\in \Sigma_0$. We prove the proposition only for $\EH_1$ as the proof for $\EH_2$ is similar. For the rest of this proof, $\EH_1$ will be denoted as $\EH$. Moreover, let $u_{\EH}=u_{\EH_1}$ be defined as in Definition~\ref{def.EH}. Without loss of generality, we assume that $\e > 0$.

\pfstep{Step~0: Limit must be subextremal or extremal} Let us first note that the limit along $\EH$ must be either subextremal ($\lim_{v\to \infty}\varpi(u_{\EH},v)>{\bf e}$) or extremal ($\lim_{v\to \infty}\varpi(u_{\EH},v)={\bf e}$). This follows from \eqref{r.varpi.poly} and the fact that $r_{\EH}$ is real.

{\bf Contradiction assumption:} Suppose for the sake of contradiction that the limit along $\EH$ is extremal, i.e., $\lim_{v\to \infty}\varpi(u_{\mathcal H^+},v)={\bf e}$.

\pfstep{Step~1: Definition of $u_*$, $u_{**}$ and $v_*$} Let 
$$v_*=\inf \{ v\in \mathbb R: \rd_v r (\underline{u}(v'),v')\geq 0 \mbox{ for all }v'\geq v\}.$$ 
Such a $v_*$ exists since the set above is non-empty and bounded below, by the asymptotic flatness condition \eqref{eq:adm-id-af} and Lemma~\ref{lem:cauchy-to-char}. Moreover, 
\begin{enumerate}
\item it holds that $v_*\leq \underline{v}(u_{\mathcal H^+})$ by Lemma~\ref{no.trapped}; and
\item it also holds that there exists an $\ep_c>0$ such that $\rd_u r (u, \underline{v}(u))<0$ for all $u \in (-\infty,\ub(v_*)+\ep_c)$ by the admissibility condition \eqref{eq:adm-id-adm}.
\end{enumerate}
Notice in particular that by the continuity of $\rd_v r$, we have $\rd_v r(\ub(v_*),v_*)=0$.

Next, choose $u_*>\ub(v_*)$ such that $\rd_v r(u_*, \underline{v}(u_*))<0$. By the definition of $v_*$, $u_*> \ub(v_*)$ can be chosen arbitrarily close to $\ub(v_*)$ such that 
\begin{enumerate}
\item the region $\calD_{*} =\{(u,v): u\in [\ub(v_*), u_*],\, v\in [\underline{v}(u), v_{*}]\}$ is contained in the maximal globally hyperbolic future development of the initial data; and
\item $\rd_u r(u, \underline{v}(u))<0$ for all $u\leq u_*$ (for which we use point (2) in the properties of $v_*$ above).
\end{enumerate}
By the Raychaudhuri equation \eqref{eqn.Ray}, it follows that
\begin{equation}\label{dur<0.u**}
\rd_u r (u, v) < 0 \quad \hbox{ for all } (u, v) \in \calD_{\ast}.
\end{equation}
Now on the constant $v=v_*$ curve, define $u_{**}$ by
$$u_{**}=\sup \{ u\geq \ub(v_*): \rd_v r (u',v_*)\geq 0 \mbox{ for all }u'\in [\ub(v_*), u]\}.$$ 
Such an $u_{**}\geq \ub(v_*)$ exists since the set is nonempty (by the fact $\rd_v r (\ub(v_*),v_*)=0$) and bounded above (since $\rd_v r (u_*, v_*)<0$ by the choice of $u_*$ and the Raychaudhuri equation \eqref{eqn.Ray}). 

The points defined above are depicted in the diagram below:

\begin{figure}[h] 
\begin{center}
\def\svgwidth{150px}
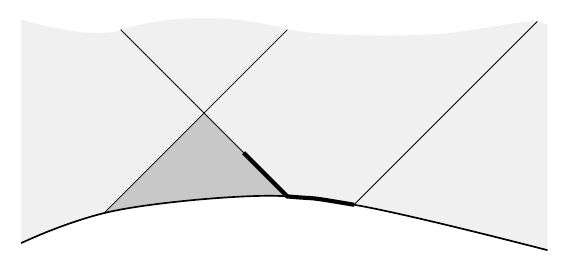 
\caption{Depiction of $u_{*}$, $u_{**}$, $v_{*}$, $\calD_{*}$ and $\gmm$} \label{fig:subextremality}
\end{center}
\end{figure}

\pfstep{Step~2: Monotonicity of $\varpi$ and $r$}
We now connect $\EH$ to the point $(u_{**}, v_*)$ via a piecewise smooth curve $\gamma$ with increasing $u$ and decreasing $v$ as depicted in Figure~\ref{fig:subextremality}. More precisely, we connect $\gmm(0) = (u_{\EH}, \underline{v}(u_{\EH}))$ to $\gmm(\frac{1}{2}) = (\ub(v_*), v_*)$ following $\Sgm_{0}$, then connect $\gmm(\frac{1}{2})$ to $\gmm(1) = (u_{**}, v_{*})$ following the null curve $\set{v = v_{*}}$. 

While it is possible that $\gmm$ is degenerate (i.e., constant in the parameter) in some parts, we nevertheless have $\dot{\gmm}^{u} \geq 0$ and $\dot{\gmm}^{v} \leq 0$. Moreover, by the construction in Step~1, we have $\rd_{u} r < 0$ and $\rd_{v} r \geq 0$ on the image of $\gmm$. Therefore, $\frac{\ud}{\ud s} r \circ \gmm(s) = \dot{\gmm}^{u} \rd_{u} r + \dot{\gmm}^{v} \rd_{v} r \leq 0$ and $\frac{\ud}{\ud s} \varpi \circ \gmm(s) = \dot{\gmm}^{u} \rd_{u} \varpi + \dot{\gmm}^{v} \rd_{v} \varpi \leq 0$ (where we used \eqref{eq:EMSF-r-phi-m}). It follows that
\begin{align} 
\varpi(u_{**}, v_*)\leq & \varpi(u_{\EH}, \underline{v}(u_{\EH}))\leq  \lim_{v\to \infty} \varpi(u_{\EH}, v)= \e,  \label{ext.varpi.ineq} \\
r(u_{**}, v_*)\leq & r(u_{\EH}, \underline{v}(u_{\EH})) \leq \lim_{v\to \infty} r(u_{\EH}, v) = \e. \label{ext.r.ineq}
\end{align}

\pfstep{Step~3: Conclusion of the proof}
By the first equation in \eqref{eq:EMSF-r-phi-m} and the definition of $\mu$, we have
\begin{equation}\label{ext.r.wave.ineq}
\rd_{u} \rd_{v} r (u_{**}, v_*)= -\frac{\Omg^2 (\varpi - \frac{\e^{2}}{r})}{2 r^{2}}(u_{**}, v_*)\geq 0,
\end{equation}
where the last inequality follows from \eqref{ext.varpi.ineq} and \eqref{ext.r.ineq}.
On the other hand, it is impossible to have $\rd_{u} \rd_v r (u_{**}, v_*)> 0$. Indeed, this contradicts the choice of $u_{**}$ as one would then have $\rd_v r (u, v_*)\geq 0$ for $u$ in a neighborhood of $u_{**}$.

Therefore, in order to conclude the contradiction argument, it suffices to rule out $\rd_{u} \rd_v r (u_{**}, v_*)= 0$. In view of the above argument, this holds only if equalities hold in \eqref{ext.varpi.ineq} and \eqref{ext.r.ineq}, i.e., $\varpi(u_{**}, v_*)=r(u_{**}, v_*)=\bf e$. Now, by \eqref{dur<0.u**}, $\rd_u r(u_{**}, v_*)<0$. Importantly, this is a strict inequality. Thus, there exist $c>0$ and $\ep_r>0$ such that $r(u,v_*)-\bf e\leq -c(u-u_{**})$ for $u\in [u_{**}, u_{**}+\ep_r]$. Using the fact that $\Omg^2$, $\varpi$ and $r$ are $C^1$ and also the monotonicity \eqref{ext.varpi.ineq}, one deduces that there exist $c'>0$ and $\ep_r'>0$ such that
$$\rd_{u} \rd_v r (u, v_*)\geq c'(u-u_{**})$$
for $u\in [u_{**}, u_{**}+\ep'_r]$. This then contradicts the choice of $u_{**}$ as $\rd_v r (u, v_*)\geq 0$ for $u$ in a neighborhood of $u_{**}$. This concludes the contradiction argument.
\end{proof}

\section{Gauge condition on the event horizon}\label{sec:appendix.gauge}
In the statement of Theorem~\ref{main.theorem.C0.stability}, we imposed a gauge condition \eqref{gauge.1} on $C_{-\infty}$ which is most convenient for proving the estimates. On the other hand, it is less convenient to apply the theorem with this gauge condition (and in fact it is a priori not entirely clear that such a gauge exists). In this section of the appendix, we compare the gauge condition \eqref{gauge.1} with the gauge condition $\f{\rd_v r}{1-\mu}=1$, which is more often used in the literature (for instance in \cite{DRPL}). More precisely, we have the following result:

\begin{proposition} \label{prop:gauge}
Consider a characteristic initial data set on $C_{-\infty} \cup \underline{C}_{1}$ posed in a $(U,v)$ coordinate system defined on $[0, U_{0}] \times [1, \infty)$ such that $C_{-\infty}= \set{(U, v) : U = 0, \, v \geq 1}$ and $\underline{C}_{1} = \set{(U, v) : 0 \leq U < U_{0}, \, v = 1}$. Assume that the following hold on $C_{-\infty}$:
\begin{enumerate}[(a)]
\item The gauge condition $\f{\rd_v r}{1-\mu}=1$ (equivalently, $\f{\rd_U r}{\Omg^2_{\mathcal H}}=-\f 14$) is imposed;
\item $r$ is non-decreasing, $\lim_{v \to \infty} r = M +\sqrt{M^2-{\bf e}^2}$ and $\lim_{v \to 1+} r = r_{0}$ for some constants $0 < \e^{2} < M$ and $r_{0} > 0$;
\item $\lim_{v \to \infty} \varpi = M$, where $\varpi$ is as in \eqref{varpi.def};
\item The decay estimate \eqref{phi.EH.decay} holds for some constants $E > 0$ and $s > 1$;
\item $\rd_{U} r < 0$.
\end{enumerate}
Then there exists a change of coordinates $\widetilde{v}(v)$ with $\widetilde{v}:[1,+\infty) \to [1,+\infty)$ such that with respect to the $(U,\widetilde{v})$ coordinate system, we have $\widetilde{v}(1) = 1$ and
\begin{equation} \label{eq:gauge-v}
\widetilde{\Omg}^2_{\mathcal H}:= -2 g(\rd_{U}, \rd_{\widetilde{v}})=\f{4e^{-2\kappa_+ r_+}(r_+-r_-)^{1+\f{\kappa_+}{\kappa_-}}}{r_+^2}e^{2\kappa_+ \widetilde{v}} \quad \hbox{ on } C_{-\infty}.
\end{equation}
Moreover, the decay estimate \eqref{phi.EH.decay} holds with respect to $\widetilde{v}$ on $C_{-\infty}$ with $C_{E, s, M, \bfe, r_{0}, \rd_{U} r(0, 1)} \, \widetilde{v}^{-s}$ on the right-hand side.
\end{proposition}
\begin{proof}
It suffices to solve for a function $\widetilde{v}(v)$ such that $\widetilde{v}(1) = 1$, $C^{-1} \leq \f{\ud \widetilde{v}}{\ud v} \leq C$, where the constant $C > 0$ may depend on $E$, $s$, $M$, $\e$, $r_{0}$ and $\rd_{U} r(0, 1)$, and
\begin{equation}\label{condition.for.vtilde}
\f{e^{-2\kappa_+ r_+}(r_+-r_-)^{1+\f{\kappa_+}{\kappa_-}}}{r_+^2} e^{2\kappa_+ \widetilde{v}(v)} = - \rd_{U} r(0, v) \left(\f{\ud \widetilde{v}}{\ud v} (v)\right)^{-1}.
\end{equation}
Since we only work on $C_{-\infty}$, on which $U = 0$, we suppress the $U$-coordinate and simply write $\rd_{U} r (v) = \rd_{U} r(0, v)$ etc. Moreover, we omit the dependence of constants on $E$, $s$, $M$, $\e$, $r_{0}$ and $\rd_{U} r(0, 1)$.

\pfstep{Step~1: Estimating the decay of $\rd_v r$} 
By (b) and (c), it follows that $\lim_{v\to +\infty} (1-\mu) = 0$. Hence, using (a), 
\begin{equation}\label{dvr.to.0.in.rmk}
\lim_{v\to+\infty} \rd_v r = 0.
\end{equation}
We now want to get a quantitative decay estimate for $\rd_v r$. Using the Raychaudhuri equations in \eqref{eqn.Ray}, $-\f{\Omg^2}{4\rd_{U} r} = \f{\rd_{v} r}{1-\mu} = 1$ (cf.~\eqref{Hawking.def}) and \eqref{eq:EMSF-r-phi-m}, we obtain
\begin{equation} \label{ode.dvr.in.rmk}
\rd^2_v r - \f{2(\varpi-\f{{\bf e}^2}{r})}{r^2} (\rd_v r) = -r(\rd_v\phi)^2.
\end{equation}
By the method of integrating factors, we have
\begin{equation} \label{eqn.dvr.in.rmk}
	\rd_{v} r(v) = \lim_{v' \to \infty} e^{-\int_{v}^{v'} \f{2(\varpi-\f{{\bf e}^2}{r})}{r^2} (v'') \, \ud v''} \rd_{v} r(v')
	+ \int_{v}^{\infty} e^{-\int_{v}^{v'} \f{2(\varpi-\f{{\bf e}^2}{r})}{r^2} (v'') \, \ud v''} r (\rd_{v} \phi)^{2}(v') \, \ud v'.
\end{equation}
By \eqref{dvr.to.0.in.rmk} and $\lim_{v \to \infty} \f{2(\varpi-\f{{\bf e}^2}{r})}{r^2} (v) = 2 \kpp_{+}$, where the latter follows from the conditions (b) and (c), it follows that the first term on the right-hand side vanishes. 

To further analyze \eqref{eqn.dvr.in.rmk}, we need some information regarding $r$ and $\varpi$. For $r$, we note the preliminary lower and upper bounds $r_{0} \leq r \leq M + \sqrt{M^{2} - \e^{2}}$. For $\varpi$, by \eqref{eq:EMSF-r-phi-m}, (c) and (d), we have
\begin{equation} \label{m.converge.in.rmk}
	M - \varpi(v) = \frac{1}{2} \int_{v}^{\infty} r^{2} (\rd_{v} \phi)^{2}(v') \, \ud v' \leq C v^{-2s + 1}.
\end{equation}

Consider the interval $J = \set{v \geq 1 : \f{2(\varpi-\f{{\bf e}^2}{r})}{r^2} (v') > \kpp_{+} \hbox{ for all } v' \geq v}$, which is nonempty since $\f{2(\varpi-\f{{\bf e}^2}{r})}{r^2} \to 2 \kpp_{+}$ as $v \to \infty$. On $J$, \eqref{eqn.dvr.in.rmk} implies
\begin{equation*}
	\rd_{v} r(v)
	\leq \int_{v}^{\infty} e^{-\kpp_{+}(v' - v)} r (\rd_{v} \phi)^{2}(v') \, \ud v'.
\end{equation*}
Then by (d) and an argument similar to Lemma~\ref{lemma.1}, we have
\begin{equation} \label{dvr.decay.J.in.rmk}
	0 \leq \rd_{v} r(v) \leq C v^{-2s} \quad \hbox{ on } J,
\end{equation}
and as a consequence
\begin{equation} \label{r.converge.in.rmk}
\abs{M + \sqrt{M^{2} - \e^{2}} - r(v)} \leq C v^{-2s +1} \hbox{ on } J,
\end{equation}
where $C$ is independent of $J$.

Bootstrapping the \eqref{m.converge.in.rmk} and \eqref{r.converge.in.rmk}, we may find $v_{0}$ that depends only on $E$, $s$, $M$ and $\e$ such that $[v_{0}, \infty) \subseteq J$. Estimating $\rd_{v} r$ on the interval $[1, v_{0}]$ by simply applying Gr\"onwall's inequality to \eqref{ode.dvr.in.rmk} (here we need to use the lower bound $r_{0}$ for $r$), we obtain the following generalization of \eqref{dvr.decay.J.in.rmk} for all $v \geq 1$:
\begin{equation} \label{dvr.decay.in.rmk}
	0 \leq \rd_{v} r(v) \leq C v^{-2s}.
\end{equation}

\pfstep{Step~2: Estimates for $\rd_U r$} Starting with the first equation in \eqref{eq:EMSF-r-phi-m}, then using (a) and (e), we obtain
\begin{equation}\label{dvr.in.rmk}
\rd_v (\log (- \rd_U r)) = \f{2(\varpi-\f{{\bf e}^2}{r})}{r^2}.
\end{equation}
Differentiating in $v$ and using \eqref{eq:EMSF-r-phi-m} then yields
\begin{equation}\label{dv2r.in.rmk}
\rd_v^2 (\log (- \rd_U r)) = (\rd_v\phi)^2 -\f {2(\rd_v r)}{r^3} (2\varpi - \f{3{\bf e}^2}{r}).
\end{equation}
Using (d) and \eqref{dvr.decay.in.rmk}, and then integrating \eqref{dv2r.in.rmk}, we obtain
$$\left| \rd_v (\log (- \rd_U r))(v) - \lim_{v'\to +\infty} \rd_v (\log (- \rd_U r))(v')\right| \leq C v^{-2s}.$$
By \eqref{dvr.in.rmk} and the conditions (b) and (c), we obtain that $\lim_{v'\to +\infty} \rd_v (\log (- \rd_U r))(v') = 2\kappa_+$ so that
\begin{equation}\label{dvdur.in.rmk}
\left| \rd_v (\log (- \rd_U r))(v) - 2\kappa_+\right| \leq C v^{-2s}.
\end{equation}
As a consequence, there exist a constant $c_{0} > 0$ and $g(v)$ such that 
\begin{equation} \label{dur.eq.in.rmk}
\frac{-\rd_{U} r (v)}{-\rd_{U} r(1)}= c_{0} e^{-2 \kappa_+ v} (1+g(v)), 
\end{equation}
where the following bounds hold for all $v \geq 1$:
\begin{equation}\label{dur.bd.in.rmk}
\abs{g(v)} \leq C v^{-2s+1}.
\end{equation}

\pfstep{Step~3: Constructing the desired $\widetilde{v}$} We now return to the construction of $\widetilde{v}(v)$. Using \eqref{dur.eq.in.rmk} we may rewrite \eqref{condition.for.vtilde} as
\begin{equation} \label{eq.for.vtilde}
	e^{2 \kpp_{+} \widetilde{v}} \f{\ud \widetilde{v}}{\ud v} 
	= c_{1} e^{2 \kpp_{+} v}(1 + g(v)),
\end{equation}
for some constant $c_1>0$ depending on $c_0$, $(-\rd_{U} r)(1)$, $r_{\pm}$ and $\kappa_{\pm}$. The ODE \eqref{eq.for.vtilde} is explicitly solvable with $\tilde{v}(1) = 1$ by separation of variables as follows:
\begin{equation}
	\widetilde{v}(v) = v + \frac{1}{2 \kpp_{+}} \log c_{1} + \frac{1}{2 \kpp_{+}} \log \left(1 + \frac{1 - c_{1}}{c_{1}} e^{2 \kpp_{+}(1 - v)} + 2 \kpp_{+} \int_{1}^{v} e^{2 \kpp_{+}(v' - v)} g(v') \, \ud v' \right).
\end{equation}
%%
% PROOF:
%%
% Indeed,
%\begin{equation*}
%	e^{2 \kpp_{+} \widetilde{v}(v)} - e^{2 \kpp_{+}} = 2 \kpp_{+} c_{1} \int_{1}^{v} e^{2 \kpp_{+} v'} (1 + g(v')) \, \ud v'
%	= c_{1} e^{2 \kpp_{+} v} - c_{1} e^{2 \kpp_{+}} + 2 \kpp_{+} c_{1} \int_{1}^{v} e^{2 \kpp_{+} v'} g(v') \, \ud v' .
%\end{equation*}
%Thus
%\begin{equation*}
%	e^{2 \kpp_{+} \widetilde{v}(v)}  
%	= c_{1} e^{2 \kpp_{+} v} \left( 1 + \frac{1 - c_{1}}{c_{1}} e^{2 \kpp_{+}(1 - v)} + 2 \kpp_{+} \int_{1}^{v} e^{2 \kpp_{+} (v' - v)} g(v') \, \ud v' \right).
%\end{equation*}
%Now take the logarithm.
%%
From this formula, it is clear that $\widetilde{v}$ exists for all $v \geq 1$. Moreover, note that the magnitude of the $v'$-integral inside the logarithm may be bounded by $C v^{-2s +1}$ using an argument similar to Lemma~\ref{lemma.1}. It follows that
\begin{equation*}
	\abs{\widetilde{v}(v) - v - (2 \kpp_{+})^{-1} \log c_{1}} \leq C v^{-2s + 1}, 
\end{equation*}
which, in combination with \eqref{eq.for.vtilde}, yields the desired bound for $\frac{\ud \widetilde{v}}{\ud v}$. \qedhere
\end{proof}

\section{Blow up of Christoffel symbols} \label{sec:christoffel}
In Theorems~\ref{final.blow.up.step} and \ref{thm.nonpert} (see, in particular, \eqref{blow.up.interior} and \eqref{eq:nonpert-blowup-phi}), we have seen that the scalar field fails to be in $W^{1,2}_{loc}$ near every point of $\CH_{1}$ in the $C^{0}$ extension constructed in Theorems~\ref{main.theorem.C0.stability} and \ref{thm.nonpert}, respectively, provided that \eqref{final.blow.up.step.assumption} holds on $\EH_{1}$. In this section of the appendix, we investigate in more detail the geometry near $\CH_{1}$ under the same assumptions, and show that certain Christoffel symbols blow up in any $L^{p}_{loc}$ for $p > 1$. More precisely, our goal is to prove the following result:
\begin{proposition} \label{prop:christoffel}
Let $(u, V)$ be the null coordinates constructed in Theorem~\ref{thm.nonpert}, with respect to which the solution extends continuously to $\CH_{1} \setminus \set{p_{\CH_{1}}}$. If the lower bound \eqref{final.blow.up.step.assumption} holds on $\EH_{1}$, then for every $u \in (-\infty, u_{\CH_{1}})$ and $p > 1$, we have 
\begin{equation} \label{eq:christoffel}
	\int_{0}^{1} \abs{(\sgm_{\bbS^{2}}^{-1})^{AB}\Gmm^{u}_{AB}}^{p} \Omg^{2}(u,V) \, d V = \infty.
\end{equation}
Moreover, we have $\sup_{V\in [0,1)} \abs{(\sgm_{\bbS^{2}}^{-1})^{AB}\Gmm^{u}_{AB}}(V) = \infty$.
\end{proposition}
Recall that, in any null coordinates, we have $\Gmm^{u}_{AB} = 2 \Omg^{-2} r \rd_{V} r (\sgm_{\bbS^{2}})_{AB}$. In view of continuous extendibility of $r$, the blow up of the supremum already follows from \eqref{blow.up.interior.dvr} and \eqref{eq:nonpert-blowup-dvr}. In what follows, we exploit further the Raychaudhuri equation, \eqref{blow.up.interior} and bounds proved in Section~\ref{sec.nonpert} to upgrade this blow-up statement.
\begin{proof}
Fix $p\in (1,\infty)$. We begin with a few reductions. First, by the above formula for $\Gmm^{u}_{AB}$, it suffices to show that
\begin{equation} \label{eq:christoffel-goal}
	\int_{0}^{1} \left(-\frac{\rd_{V} r}{\Omg^{2}}\right)^{p} \Omg^{2} (u, V) \, d V = \infty \quad \hbox{ for every } u \in (-\infty, u_{\CH_{1}}).
\end{equation}
In view of the bounds $C^{-1} \leq \Omg^{2}(u, V) \leq C$ and $\abs{\rd_{u} (r \rd_{V} r)}(u, V) \leq C$, which hold in any rectangle $\calR = \set{(u, V) : u_{0} \leq u \leq u_{1}, \, 0 \leq V < 1}$ for some $0 < C < \infty$ (see \eqref{eq:nonpert-dvr-lip} in the proof of Lemma~\ref{lem.nonpert-blowup}), it suffices to prove \eqref{eq:christoffel-goal} under the additional assumption that $u$ lies in the perturbative region, i.e., $u < u_{s}$ where $u_{s}$ is as in Theorem~\ref{main.theorem.C0.stability}. Furthermore, note that both 
$\f{\rd_{V} r}{\Omg^{2}}$ and $\Omg^2 \, dV$ on the LHS of \eqref{eq:christoffel-goal} are invariant under any change of the null coordinate $V$. Hence, using the $(u, v)$ coordinates in Theorem~\ref{main.theorem.C0.stability}, we see that it suffices to prove
\begin{equation} \label{eq:christoffel-goal-v}
	\int_{1}^{\infty} \left(-\frac{\rd_{v} r}{\Omg^{2}}\right)^{p} \Omg^{2} (u, v) \, d v = \infty \quad \hbox{ for every } u \in (-\infty, u_{s}), 
\end{equation}
where $\Omg^{2}$ is now defined with respect to the $(u, v)$ coordinates.

In the remainder of the proof, we fix a value of $u < u_{s}$ and omit the dependence of constants on $u$. Let $\eta > 0$ be a parameter to be fixed later (and we allow the constants $C$ below to depend on $\eta$). By the integrated blow up statement \eqref{blow.up.interior}, we may find a sequence $v_{k} \in \bbN \cap [2, \infty)$ such that $v_{k}$ increases to $\infty$ and
\begin{equation}\label{whatever}
	\int_{v_{k}-1}^{v_{k}} \log_{+}^{\alp_{0}} (\frac{1}{\Omg}) (\rd_{v} \phi)^{2} (u, v) \, d v \geq e^{- \eta v_{k}}.
\end{equation}
Next, by the Raychaudhuri equation \eqref{eqn.Ray}, for any $v \geq v_{k}$ we have
\begin{equation*}
	- \frac{\rd_{v} r}{\Omg^{2}}(u, v) \geq \int_{v_{k}-1}^{v_{k}} \frac{r}{\Omg^{2}} (\rd_{v} \phi)^{2} (u, v') \, d v' - \frac{\rd_{v} r}{\Omg^{2}}(u, v_{k}-1).
\end{equation*}
By \eqref{blow.up.interior.dvr}, we may assume that the last term on the right-hand side is nonnegative (and thus can be dropped) by taking $v_{1}$ sufficiently large. To estimate the first term, we use \eqref{whatever} and also recall from Theorem~\ref{main.theorem.C0.stability} (in particular, the bounds for $r - r_{RN}$ and $\log \Omg - \log \Omg_{RN}$) that $C^{-1} \leq r \leq C$ and $C^{-1} e^{-  \kpp_{-} v} \leq \Omg \leq C e^{-  \kpp_{-} v}$. Hence,
\begin{equation*}
- \frac{\rd_{v} r}{\Omg^{2}}(u, v) \geq C^{-1} e^{(2 \kpp_{-} - \eta) v_{k}} v_{k}^{-\alp_{0}} \quad \hbox{ for every } v \geq v_{k}.
\end{equation*}
It follows that
\begin{align*}
\int_{1}^{\infty} \left(-\frac{\rd_{v} r}{\Omg^{2}}\right)^{p} \Omg^{2} (u, v) \, d v
\geq \sum_{k =1}^{\infty} \int_{v_{k}}^{v_{k}+1} \left(-\frac{\rd_{v} r}{\Omg^{2}}\right)^{p} \Omg^{2} (u, v) \, d v
\geq C^{-1} \sum_{k =1}^{\infty} e^{2 (p-1) \kpp_{-} v_{k} - p \eta v_{k}} v_{k}^{-p \alp_{0}}.
\end{align*}
Since $p > 1$, choosing $\eta > 0$ sufficiently small ensures that each summand on the right-hand side is uniformly bounded from below, which in turn makes the sum infinite as desired. \qedhere
\end{proof}

\bibliographystyle{hplain}
\bibliography{SCCInterior}
\end{document}